\newcommand{\eq}[1]{\begin{align}#1\end{align}}
\newcommand{\dt}{\delta}
\newcommand{\ap}{\alpha}
\newcommand{\bt}{\beta}
\newcommand{\gm}{\gamma}
\newcommand{\cov}{\mathrm{cov}}
\newcommand{\diag}{\mathrm{diag}}
\newcommand{\beq}{\begin{eqnarray*}}
\newcommand{\eeq}{\end{eqnarray*}}
\newtheorem{thm}{Theorem}[section]
\newtheorem{lem}{Lemma}[section]
\newtheorem{assum}{Assumption}
\numberwithin{equation}{section}
\theoremstyle{definition}
\newtheorem{exm}{Example}[section]
\newtheorem{remark}{Remark}[section]
\def\@biblabel#1{\hspace*{-\labelsep}}
\DeclareMathOperator*{\argmin}{argmin}
\providecommand{\BOXEDSPECIAL}[4]{\hbox to #2{\raise #3\hbox to #2{\null #1\hfil}}}
\chardef\@x10\chardef\@xv60
\def\tcitime{
\def\@time{%
  \@minute\time\@hour\@minute\divide\@hour\@xv
  \ifnum\@hour<\@x 0\fi\the\@hour:%
  \multiply\@hour\@xv\advance\@minute-\@hour
  \ifnum\@minute<\@x 0\fi\the\@minute
  }}%
\def\QCTOpt[#1]#2{%
  \def\QCTOptB{#1}
  \def\QCTOptA{#2}
}
\def\QCTNOpt#1{%
  \def\QCTOptA{#1}
  \let\QCTOptB\empty
}
\def\Qct{%
  \@ifnextchar[{%
    \QCTOpt}{\QCTNOpt}
}
\def\QCBOpt[#1]#2{%
  \def\QCBOptB{#1}
  \def\QCBOptA{#2}
}
\def\QCBNOpt#1{%
  \def\QCBOptA{#1}
  \let\QCBOptB\empty
}
\def\Qcb{%
  \@ifnextchar[{%
    \QCBOpt}{\QCBNOpt}
}
\def\PrepCapArgs{%
  \ifx\QCBOptA\empty
    \ifx\QCTOptA\empty
      {}%
    \else
      \ifx\QCTOptB\empty
        {\QCTOptA}%
      \else
        [\QCTOptB]{\QCTOptA}%
      \fi
    \fi
  \else
    \ifx\QCBOptA\empty
      {}%
    \else
      \ifx\QCBOptB\empty
        {\QCBOptA}%
      \else
        [\QCBOptB]{\QCBOptA}%
      \fi
    \fi
  \fi
}
\def\GRAPHICSPS#1{%
 \ifcase\GRAPHICSTYPE
   \special{ps: #1}%
 \or
   \special{language "PS", include "#1"}%
 \fi
}%
\def\graffile#1#2#3#4#5{%
    \bgroup
    \leavevmode
    \@ifundefined{bbl@deactivate}{\def~{\string~}}{\activesoff}
    \raise -#4 \BOXTHEFRAME{%
       \BOXEDSPECIAL{#1}{#2}{#3}{#5}}%
    \egroup
}%
\def\draftbox#1#2#3#4{%
 \leavevmode\raise -#4 \hbox{%
  \frame{\rlap{\protect\tiny #1}\hbox to #2%
   {\vrule height#3 width\z@ depth\z@\hfil}%
  }%
 }%
}%
\newif\ifwasdraft
\def\GRAPHIC#1#2#3#4#5{%
 \ifnum\draft=\@ne\draftbox{#2}{#3}{#4}{#5}%
  \else\graffile{#1}{#3}{#4}{#5}{#2}%
  \fi
 }%
\def\addtoLaTeXparams#1{%
    \edef\LaTeXparams{\LaTeXparams #1}}%
\newif\ifBoxFrame \BoxFramefalse
\newif\ifOverFrame \OverFramefalse
\newif\ifUnderFrame \UnderFramefalse
\def\BOXTHEFRAME#1{%
   \hbox{%
      \ifBoxFrame
         \frame{#1}%
      \else
         {#1}%
      \fi
   }%
}
\def\doFRAMEparams#1{\BoxFramefalse\OverFramefalse\UnderFramefalse\readFRAMEparams#1\end}%
\def\readFRAMEparams#1{%
 \ifx#1\end%
  \let\next=\relax
  \else
  \ifx#1i\dispkind=\z@\fi
  \ifx#1d\dispkind=\@ne\fi
  \ifx#1f\dispkind=\tw@\fi
  \ifx#1t\addtoLaTeXparams{t}\fi
  \ifx#1b\addtoLaTeXparams{b}\fi
  \ifx#1p\addtoLaTeXparams{p}\fi
  \ifx#1h\addtoLaTeXparams{h}\fi
  \ifx#1X\BoxFrametrue\fi
  \ifx#1O\OverFrametrue\fi
  \ifx#1U\UnderFrametrue\fi
  \ifx#1w
    \ifnum\draft=1\wasdrafttrue\else\wasdraftfalse\fi
    \draft=\@ne
  \fi
  \let\next=\readFRAMEparams
  \fi
 \next
 }%
\def\IFRAME#1#2#3#4#5#6{%
      \bgroup
      \let\QCTOptA\empty
      \let\QCTOptB\empty
      \let\QCBOptA\empty
      \let\QCBOptB\empty
      #6%
      \parindent=0pt%
      \leftskip=0pt
      \rightskip=0pt
      \setbox0 = \hbox{\QCBOptA}%
      \@tempdima = #1\relax
      \ifOverFrame
          \typeout{This is not implemented yet}%
          \show\HELP
      \else
         \ifdim\wd0>\@tempdima
            \advance\@tempdima by \@tempdima
            \ifdim\wd0 >\@tempdima
               \textwidth=\@tempdima
               \setbox1 =\vbox{%
                  \noindent\hbox to \@tempdima{\hfill\GRAPHIC{#5}{#4}{#1}{#2}{#3}\hfill}\\%
                  \noindent\hbox to \@tempdima{\parbox[b]{\@tempdima}{\QCBOptA}}%
               }%
               \wd1=\@tempdima
            \else
               \textwidth=\wd0
               \setbox1 =\vbox{%
                 \noindent\hbox to \wd0{\hfill\GRAPHIC{#5}{#4}{#1}{#2}{#3}\hfill}\\%
                 \noindent\hbox{\QCBOptA}%
               }%
               \wd1=\wd0
            \fi
         \else
            \ifdim\wd0>0pt
              \hsize=\@tempdima
              \setbox1 =\vbox{%
                \unskip\GRAPHIC{#5}{#4}{#1}{#2}{0pt}%
                \break
                \unskip\hbox to \@tempdima{\hfill \QCBOptA\hfill}%
              }%
              \wd1=\@tempdima
           \else
              \hsize=\@tempdima
              \setbox1 =\vbox{%
                \unskip\GRAPHIC{#5}{#4}{#1}{#2}{0pt}%
              }%
              \wd1=\@tempdima
           \fi
         \fi
         \@tempdimb=\ht1
         \advance\@tempdimb by \dp1
         \advance\@tempdimb by -#2%
         \advance\@tempdimb by #3%
         \leavevmode
         \raise -\@tempdimb \hbox{\box1}%
      \fi
      \egroup%
}%
\def\DFRAME#1#2#3#4#5{%
 \begin{center}
     \let\QCTOptA\empty
     \let\QCTOptB\empty
     \let\QCBOptA\empty
     \let\QCBOptB\empty
     \ifOverFrame 
        #5\QCTOptA\par
     \fi
     \GRAPHIC{#4}{#3}{#1}{#2}{\z@}
     \ifUnderFrame 
        \nobreak\par\nobreak#5\QCBOptA
     \fi
 \end{center}%
 }%
\def\FFRAME#1#2#3#4#5#6#7{%
 \begin{figure}[#1]%
  \let\QCTOptA\empty
  \let\QCTOptB\empty
  \let\QCBOptA\empty
  \let\QCBOptB\empty
  \ifOverFrame
    #4
    \ifx\QCTOptA\empty
    \else
      \ifx\QCTOptB\empty
        \caption{\QCTOptA}%
      \else
        \caption[\QCTOptB]{\QCTOptA}%
      \fi
    \fi
    \ifUnderFrame\else
      \label{#5}%
    \fi
  \else
    \UnderFrametrue%
  \fi
  \begin{center}\GRAPHIC{#7}{#6}{#2}{#3}{\z@}\end{center}%
  \ifUnderFrame
    #4
    \ifx\QCBOptA\empty
      \caption{}%
    \else
      \ifx\QCBOptB\empty
        \caption{\QCBOptA}%
      \else
        \caption[\QCBOptB]{\QCBOptA}%
      \fi
    \fi
    \label{#5}%
  \fi
  \end{figure}%
 }%
\def\makeactives{
  \catcode`\"=\active
  \catcode`\;=\active
  \catcode`\:=\active
  \catcode`\'=\active
  \catcode`\~=\active
}
   \gdef\activesoff{%
      \def"{\string"}
      \def;{\string;}
      \def:{\string:}
      \def'{\string'}
    }
\def\FRAME#1#2#3#4#5#6#7#8{%
 \bgroup
 \ifnum\draft=\@ne
   \wasdrafttrue
 \else
   \wasdraftfalse%
 \fi
 \def\LaTeXparams{}%
 \dispkind=\z@
 \def\LaTeXparams{}%
 \doFRAMEparams{#1}%
 \ifnum\dispkind=\z@\IFRAME{#2}{#3}{#4}{#7}{#8}{#5}\else
  \ifnum\dispkind=\@ne\DFRAME{#2}{#3}{#7}{#8}{#5}\else
   \ifnum\dispkind=\tw@
    \edef\@tempa{\noexpand\FFRAME{\LaTeXparams}}%
    \@tempa{#2}{#3}{#5}{#6}{#7}{#8}%
    \fi
   \fi
  \fi
  \ifwasdraft\draft=1\else\draft=0\fi{}%
  \egroup
 }%
\def\TEXUX#1{"texux"}
\def\limfunc#1{\mathop{\rm #1}}%
\long\def\QQQ#1#2{%
     \long\expandafter\def\csname#1\endcsname{#2}}%
\long\def\QQA#1#2{}%
\newcommand{\QTR}[2]{\csname text#1\endcsname{#2}}
\def\EXPAND#1[#2]#3{}%
\def\NOEXPAND#1[#2]#3{}%
\def\LaTeXparent#1{}%
\def\ChildStyles#1{}%
\def\ChildDefaults#1{}%
\def\QTagDef#1#2#3{}%
  \providecommand{\UNICODE}[2][]{}
\def\QQfnmark#1{\footnotemark}
 \def\abstract{%
  \if@twocolumn
   \section*{Abstract (Not appropriate in this style!)}%
   \else \small 
   \begin{center}{\bf Abstract\vspace{-.5em}\vspace{\z@}}\end{center}%
   \quotation 
   \fi
  }%
   \def\registered{\relax\ifmmode{}\r@gistered
                    \else$\m@th\r@gistered$\fi}%
 \def\r@gistered{^{\ooalign
  {\hfil\raise.07ex\hbox{$\scriptstyle\rm\text{R}$}\hfil\crcr
  \mathhexbox20D}}}}{}%
\newdimen\theight
\def\Column{%
 \vadjust{\setbox\z@=\hbox{\scriptsize\quad\quad tcol}%
  \theight=\ht\z@\advance\theight by \dp\z@\advance\theight by \lineskip
  \kern -\theight \vbox to \theight{%
   \rightline{\rlap{\box\z@}}%
   \vss
   }%
  }%
 }%
\def\qed{%
 \ifhmode\unskip\nobreak\fi\ifmmode\ifinner\else\hskip5\p@\fi\fi
 \hbox{\hskip5\p@\vrule width4\p@ height6\p@ depth1.5\p@\hskip\p@}%
 }%
\def\miss{\hbox{\vrule height2\p@ width 2\p@ depth\z@}}%
\def\tcol#1{{\baselineskip=6\p@ \vcenter{#1}} \Column}  %
\def\newfmtname{LaTeX2e}
  \DeclareOldFontCommand{\rm}{\normalfont\rmfamily}{\mathrm}
  \DeclareOldFontCommand{\sf}{\normalfont\sffamily}{\mathsf}
  \DeclareOldFontCommand{\tt}{\normalfont\ttfamily}{\mathtt}
  \DeclareOldFontCommand{\bf}{\normalfont\bfseries}{\mathbf}
  \DeclareOldFontCommand{\it}{\normalfont\itshape}{\mathit}
  \DeclareOldFontCommand{\sl}{\normalfont\slshape}{\@nomath\sl}
  \DeclareOldFontCommand{\sc}{\normalfont\scshape}{\@nomath\sc}
  \newcounter{equationnumber}  
  \def\mathletters{%
     \addtocounter{equation}{1}
     \edef\@currentlabel{\theequation}%
     \setcounter{equationnumber}{\c@equation}
     \setcounter{equation}{0}%
     \edef\theequation{\@currentlabel\noexpand\alph{equation}}%
  }
    \def\BibTeX{{\rm B\kern-.05em{\sc i\kern-.025em b}\kern-.08em
                 T\kern-.1667em\lower.7ex\hbox{E}\kern-.125emX}}}{}%
\def\AmS{{\protect\usefont{OMS}{cmsy}{m}{n}%
                A\kern-.1667em\lower.5ex\hbox{M}\kern-.125emS}}}{}%
\def\@@eqncr{\let\@tempa\relax
    \ifcase\@eqcnt \def\@tempa{& & &}\or \def\@tempa{& &}%
      \else \def\@tempa{&}\fi
     \@tempa
     \if@eqnsw
        \iftag@
           \@taggnum
        \else
           \@eqnnum\stepcounter{equation}%
        \fi
     \fi
     \global\tag@false
     \global\@eqnswtrue
     \global\@eqcnt\z@\cr}
\def\TCItag{\@ifnextchar*{\@TCItagstar}{\@TCItag}}
\def\@TCItag#1{%
    \global\tag@true
    \global\def\@taggnum{(#1)}}
\def\@TCItagstar*#1{%
    \global\tag@true
    \global\def\@taggnum{#1}}
\def\binom#1#2{{#1 \choose #2}}%
\begin{document}

\title{Oracle Estimation of a Change Point in High Dimensional Quantile Regression\thanks{We would like to thank Bernd Fitzenberger, an editor, an associate editor, and three anonymous referees for helpful comments. This work was supported in part
by Promising-Pioneering Researcher Program through Seoul National University,  
by the European Research Council (ERC-2014-CoG-646917-ROMIA),  and by the Research and Scholarship Award grant of University of Maryland. }}

\author{Sokbae Lee\thanks{Department of Economics, Columbia University, 1022 International Affairs Building
420 West 118th Street, New York, NY 10027, USA; Institute for Fiscal Studies, 7 Ridgmount Street, London, WC1E 7AE, UK. Email: \texttt{sl3841@columbia.edu}.}, Yuan Liao\thanks{Department of Economics, Rutgers University, 75 Hamilton Street, New Brunswick, NJ 08901, USA. Email: \texttt{yuan.liao@rutgers.edu}.}, 
Myung Hwan Seo\thanks{Department of Economics, Seoul National University, 1 Gwanak-ro,
Gwanak-gu, Seoul, 151-742, Republic of Korea. Email: \texttt{myunghseo@snu.ac.kr}.}, and Youngki Shin\thanks{Economics Discipline Group, University of Technology Sydney, PO Box 123, Broadway NSW 2007, Australia. Email: \texttt{yshin12@gmail.com}} }

\date{15 November 2016}

\maketitle

\begin{abstract}
In this paper, we consider a high-dimensional quantile regression model where the sparsity structure may differ  between two sub-populations. We develop $\ell_1$-penalized estimators of both regression coefficients 
and the threshold parameter.   Our penalized estimators not only select covariates but also discriminate between a model with homogeneous sparsity and a model with a change point.
As a result, it is not necessary to know or pretest whether the change point is present, or where it occurs.  
Our estimator of the change point achieves an oracle property in the sense 
that  its asymptotic distribution is the same as if the unknown active sets 
of regression coefficients  were known. 
Importantly, we establish this oracle property without  a perfect covariate selection, thereby avoiding the need for  the minimum level condition on the signals of active covariates. 
Dealing with high-dimensional quantile regression with an unknown change point calls for a  new proof    technique since
 the quantile loss function is non-smooth and furthermore the corresponding objective function is non-convex with respect to the change point.
The technique developed in this paper is applicable to a general M-estimation framework with a change point,  which may be of independent interest.
  The proposed methods are then  illustrated via Monte Carlo experiments and an application to tipping  in the dynamics of racial segregation. 
\medskip \\
\noindent
\emph{Keywords}:  Variable selection, high-dimensional M-estimation, sparsity, LASSO, SCAD


\end{abstract}


\section{Introduction}

\doublespacing

In this paper, 
we consider a high-dimensional quantile regression model where the sparsity structure (e.g., identities and effects of contributing regressors) may differ  between two sub-populations, thereby allowing for a possible change point in the model.   Let $Y \in \mathbb{R}$ be a response variable,
$Q \in \mathbb{R}$ be a scalar random variable that determines a possible 
change point,  and
$X \in \mathbb{R}^{p}$ be a $p$-dimensional vector of covariates. Here, $Q$ can 
be a component of $X$, and $p$ is potentially much larger than the sample size $n$.   
Specifically,  high-dimensional quantile regression with a change point is modelled as follows:
\begin{align}\label{qr-model}
Y &= X^T\beta_0+X^T\delta_0 1\{Q>\tau_0\}+U,
\end{align}
where $(\beta_0^T, \delta_0^T, \tau_0)$ is a vector of unknown parameters and the regression error $U$ satisfies 
$\mathbb{P}(U\leq 0|X,Q)=\gamma$  for some known $\gamma\in(0,1)$.
 Unlike mean regression, quantile regression analyzes the effects of active regressors on different parts of the conditional distribution of a response variable. 
Therefore, it  allows the sparsity patterns to differ at different quantiles and also handles heterogeneity due to either heteroskedastic variance
or other forms of non-location-scale covariate effects.  By taking into account a possible change point  in the model,  we provide a more realistic picture of the sparsity patterns. 
For instance,  when analyzing high-dimensional gene expression data,    the identities of contributing genes may depend on the environmental or demographical variables (e.g., exposed temperature, age or weights).

Our paper is closely related to the  literature on models with unknown change points (e.g., \cite{tong1990non}, \cite{chan1993consistency}, \cite{Hansen:1996, hansen2000sample}, \cite{Pons2003}, \cite{kosorok2007}, \cite{Seijo:Sen:11a,Seijo:Sen:11b} and \cite{Li:Ling:12} among many others).   
Recent papers on change points under high-dimensional setups include  \cite{enikeeva2013high, chan2013group}, \cite{Frick-et-al:14}, \cite{cho2012multiple},
\cite{Chan:et-al:2016},
\cite{Callot:et-al:2016},
 and \cite{lee2012lasso} among others; however, none of these papers  consider a change point in high-dimensional quantile regression.  
The literature on high-dimensional quantile regression  includes \cite{BC11}, \cite{Bradic2011}, \cite{Wang:Wu:Li:2012},
\cite{Wang:2013},  and \cite{FFB} among others. All the aforementioned papers on quantile regression are under the homogeneous sparsity framework (equivalently, assuming that $\delta_0=0$ in \eqref{qr-model}).  \cite{Ciuperca:13} considers penalized estimation of a quantile regression model with breaks, but  the corresponding analysis is restricted to the case when $p$ is small.


 In this paper, we consider estimating regression coefficients 
$\alpha_0 \equiv (\beta_0^T,\delta_0^T)^T$ as well as  the threshold parameter $\tau_0$ and selecting the contributing regressors based on $\ell_1$-penalized estimators.   One of the strengths of our proposed procedure  is that it  does not require to know or pretest whether $\delta_0=0$ or not, that is, whether the population's sparsity structure and covariate effects are invariant or not. In other words, we  do not need to know whether the  threshold $\tau_0$  is present in the model.

   For a sparse vector $v\in\mathbb{R}^p$, we denote the active set of $v$ as $J(v) \equiv \{j: v_j\neq0\}$.
One of the main contributions of this paper is that our proposed estimator of $\tau_0$ achieves an \emph{oracle property} in the sense 
that its asymptotic distribution 
is the same as if the unknown active sets $J(\beta_{0})$ and $J(\delta_0)$ were known. 
Importantly, we establish this oracle property without assuming  a perfect covariate selection, thereby avoiding the need for  the minimum level condition on the signals of active covariates.

The proposed estimation method in this paper consists of three main steps: in the first step, we obtain the initial estimators of $\alpha_0$
and $\tau_0$, whose rates of convergence may be suboptimal; in the second step, we re-estimate $\tau_0$ to obtain 
an improved estimator of $\tau_0$ that converges at the rate of $O_P(n^{-1})$ and  achieves the oracle property mentioned above;
in the third step, using the second step estimator of $\tau_0$, we update the estimator of $\alpha_0$. In particular, we propose two alternative estimators of $\alpha_0$, depending on the purpose of  estimation (prediction vs. variable selection). 

The most closely related work is \cite{lee2012lasso}. However, there are several important differences:
first, \cite{lee2012lasso} consider a high-dimensional mean regression model with a homoskedastic normal error and with deterministic  covariates; second, their method consists of one-step least squares estimation with an $ \ell _1 $ penalty; third, they derive  
non-asymptotic oracle inequalities similar to those in \cite{Bickeletal} but do not provide any distributional result on the estimator of the change point.
Compared to \cite{lee2012lasso}, dealing with high-dimensional quantile regression with an unknown change point calls for a  new  proof   technique since
 the quantile loss function is different from the least squares objective function and is non-smooth. 
In addition, we allow for heteroskesdastic and non-normal regression errors and  stochastic  covariates. These changes coupled with the fact that   
the quantile regression objective function is non-convex with respect to the threshold parameter $\tau_0$
raise new challenges. It requires careful derivation and multiple estimation steps to establish the oracle property for the estimator of $\tau_0$ and also to obtain desirable properties of the estimator of $\alpha_0$. 
The technique developed in this paper is applicable to a general M-estimation framework with a change point, 
which may be of independent interest.

One particular application of \eqref{qr-model} comes from tipping  in the racial segregation in social sciences \citep[see, e.g.][]{card2008tipping}. The empirical question addressed in \citet{card2008tipping} is whether and the extent to which 
the neighborhood's white population decreases substantially when the minority share
in the area exceeds a tipping point (or change point).
In Section  \ref{sec:data-example}, we use the US Census tract  dataset constructed by  \citet{card2008tipping} and confirm that the tipping exists in the neighborhoods of Chicago.

The remainder of the paper is organized as follows. 
  Section \ref{sec:model-estimator}  provides an informal description of  our  estimation methodology. 
  In Section \ref{sec:Lasso-theory-consistency}, we derive the consistency of the estimators in terms of the excess risk.  Further asymptotic properties of the proposed estimators are given in Sections 
 \ref{sec:Lasso-theory-case1} and \ref{sec:Lasso-theory-case2}.
In Section \ref{sec:MC}, we   present the results of extensive Monte Carlo experiments.
Section \ref{sec:data-example} illustrates the usefulness of our method by applying it to tipping  in the racial segregation.
Section \ref{sec:conclusion} concludes and Appendix \ref{sect:tau:ci-algorithm} describes in detail regarding how to 
construct the confidence interval for $\tau_0$.
In Appendix \ref{sec:Lasso-theory-assump}, we provide a set of regularity assumptions  to derive asymptotic properties of the proposed estimators in 
Sections   \ref{sec:Lasso-theory-case1} and \ref{sec:Lasso-theory-case2}.
Online supplements are comprised of 6 appendices for all the proofs as well as additional theoretical and numerical results that are left out for the brevity of the paper.

\textbf{Notation.}
Throughout the paper, we use $|v|_q$ for the $\ell_q$ norm for a vector $v$ with $q=0,1,2$.  We use $|v|_\infty$ to
denote the sup norm. For two sequences of positive real numbers $a_n$ and  $b_n$, we write $a_n\ll b_n$ and equivalently $b_n\gg a_n$ if $a_n=o(b_n)$.
If there exists a positive finite constant $c$ such that $a_n = c \cdot b_n$, then we write $a_n \propto b_n$.
Let $\lambda_{\min}(A)$ denote the minimum eigenvalue of a matrix $A$.
We use w.p.a.1 to mean ``with probability approaching one.''
We write $\theta_0 \equiv \beta_0+\delta_0$.
For a $2p$ dimensional vector $\alpha$, let $\alpha_J$ and $\alpha_{J^c}$ denote its subvectors formed by indices in $J(\alpha_0)$ and $\{1,...,2p\}\setminus J(\alpha_0)$, respectively. 
Likewise, let $X_{J}(\tau)$ denote the subvector of $X(\tau)\equiv (X^{T},X^{T}1\{Q>\tau \})^{T}$ whose indices are in $J(\alpha_0)$. 
The true parameter vectors $\beta_0$, $\delta_0$ and $\theta_0$ (except $\tau_0$) are
implicitly indexed by the sample size $n$, and we allow that the dimensions of 
$J(\beta_0)$,  $J(\delta_0)$ and $J(\theta_0)$
 can go to infinity as $n \rightarrow \infty$.
For simplicity, we omit their dependence on $n$ in our notation. 
We also use the terms  `change point' and `threshold' interchangeably throughout the paper.

\section{Estimators}\label{sec:model-estimator}

\subsection{Definitions}

In this section, we describe our estimation method.
We take the check function approach  of \cite{Koenker:Bassett:1978}.
Let  $\rho(t_1,t_2) \equiv (t_1-t_2)(\gamma-1\{t_1-t_2\leq 0\})$ denote the loss function for quantile regression.
Let $\mathcal{A}$ and $\mathcal{T}$ denote the parameter spaces for $\alpha_0 \equiv (\beta_0^T, \delta_0^T)^T$ and $\tau_0$, respectively. 
For each $\alpha \equiv (\beta,\delta)\in\mathcal{A}$ and $\tau\in\mathcal{T}$,  
we write  $X^T\beta+X^T%
\delta1\{Q>\tau\}=X(\tau)^T\alpha$ with the shorthand notation that 
$X(\tau) \equiv (X^{T},X^{T}1\{Q>\tau \})^{T}$.
We suppose that 
 the vector of true parameters is defined as the   minimizer of the expected loss:
 \begin{equation}\label{eq2.3add-QR}
(\alpha_0,\tau_0)=\argmin_{\alpha \in \mathcal{A},\tau \in \mathcal{T}} 
\mathbb{E} \left[ \rho(Y,X(\tau)^T\alpha) \right].
\end{equation} 
By construction, $\tau_0$ is not unique when $\delta_0=0$. However, if $\delta_0 = 0$, then the model reduces to the linear quantile regression model in which $\beta_0$ is identifiable under the standard assumptions.
In Appendix  \ref{app:iden}, we provide sufficient conditions under which $\alpha_0$ and $\tau_0$ are identified when $\delta_0 \neq 0$.

 Suppose we observe independent and identically distributed 
 samples $\{Y_i, X_i, Q_i\}_{i\leq n}$. Let $X_{i}(\tau )$ and $X_{ij}\left( \tau
\right) $ denote the $i$-th realization of $X(\tau )$ and $j$-th element of $%
X_{i}\left( \tau \right) ,$ respectively, $i=1,\ldots,n$ and $j=1,\ldots,2p$, so that 
$X_{ij}(\tau) \equiv X_{ij}$ if $j\leq p$ and $X_{ij}(\tau) \equiv X_{i,j-p}1\{Q_i>\tau\}$ otherwise.
 Define
$$
R_n(\alpha,\tau)\equiv
\frac{1}{n}%
\sum_{i=1}^{n}\rho (Y_{i},X_{i}(\tau )^{T}\alpha )= \frac{1}{n}\sum_{i=1}^{n}\rho (Y_{i},X_{i}^T\beta+ X_i^T\delta  1\{Q_i>\tau\} ).
$$
In addition, let $D_{j}(\tau ) \equiv \{ n^{-1} \sum_{i=1}^{n}X_{ij}(\tau )^{2} \}^{1/2}$, $j=1,\ldots,2p$. 

We describe the main steps of our $\ell_1$-penalized estimation method.  
For some tuning parameter $\kappa_{n}$, define:
\begin{eqnarray}\label{eq2.2add}
\textbf{Step 1: } (\breve{\alpha},\breve{\tau}) 
 &=& \text{argmin}_{\alpha \in \mathcal{A},\tau \in \mathcal{T}} R_n(\alpha,\tau)+\kappa_{n} \sum_{j=1}^{2p} D_{j}(\tau )| \alpha _{j}|.
\end{eqnarray}%
This step produces an initial estimator $(\breve\alpha,\breve\tau)$. 
The tuning parameter $\kappa_n$ is required to satisfy
\begin{equation}\label{kappa_n_rate}
\kappa_n \propto (\log p)(\log n)\sqrt{\frac{\log p}{n}}.
\end{equation}
Note that we take $\kappa_n$ that converges to zero at a rate slower than the standard 
$(\log p/n)^{1/2} $ rate in the literature. This modified rate of $\kappa_n$ is useful in our context
to deal with an unknown $\tau_0$. 
A data-dependent method of choosing $\kappa _{n}$ is discussed in 
Section \ref{tune-para}.

\begin{remark}
Define $d_j\equiv( \frac{1}{n%
}\sum_{i=1}^{n}X_{ij}^{2})^{1/2}$ and $d_j(\tau)\equiv( \frac{1}{n%
}\sum_{i=1}^{n}X_{ij}^{2}1\{Q_i>\tau\})^{1/2}$.
Note that 
$\sum_{j=1}^{2p} D_{j}(\tau )| \alpha _{j}| = \sum_{j=1}^pd_j|\beta_j|+\sum_{j=1}^pd_j(\tau)|\delta_j|$, so that 
the weight $D_j(\tau)$ adequately balances the regressors;  the weight $d_j$ regarding $|\beta_j|$ does not depend on $\tau$, while the weight $d_j(\tau)$ with respect to  $|\delta_j|$ does, which takes into account the effect of the threshold $\tau$ on the parameter change $\delta$. 
\end{remark}

\begin{remark}
The computational cost in Step 1 is the multiple of grid points to the computational time of estimating the  linear quantile model with an $\ell_1$ penalty, which is  solvable in polynomial time (see e.g. \cite{BC11} and \cite{Koenker:Mizera:14} among others). 
\end{remark}

The main purpose of the first step is to obtain an initial estimator of $\alpha_0$. 
The achieved convergence rates  of this step might be suboptimal due to the uniform control of the score functions over the space 
$\mathcal{T}$ of the unknown $\tau_0$.

In the second step, 
we introduce our improved estimator of the change point $\tau_0$. It does not use a penalty term, while using the first step estimator of $\alpha_0$. Define:
\begin{align}
\textbf{Step 2: } 
\widehat{\tau}&=
\operatorname*{argmin}_{\tau \in \mathcal{T}}R_n(\breve\alpha,\tau), \label{step2-tau-est}
 \end{align}%
where $\breve{\alpha}$ is the first step estimator of $\alpha_0$ in \eqref{eq2.2add}.
 In Section \ref{sec:Lasso-theory-case1}, 
we show that when $\tau_0$ is identifiable, $\widehat{\tau}$ is consistent for $\tau_0$ at a rate of $n^{-1}$. Furthermore, we
obtain the limiting distribution of $n(\widehat{\tau} - \tau_0)$, and establish conditions under which
its  asymptotic distribution is the same  as if the true $\alpha_0$ were known, without 
a perfect model selection on $\alpha_0$, nor assuming the minimum signal condition on the nonzero elements of $\alpha_0$.

In the third step, we update the Lasso estimator of $\alpha_0$ using a different value of the penalization tuning parameter and the second step estimator of $\tau_0$. 
In particular, we recommend two different estimators of $\alpha_0$\,: one for the prediction   and the other for the variable selection,
serving for different purposes of practitioners.
For two different tuning parameters $\omega_n$ and $\mu_n$ whose rates will be specified later by 
\eqref{omega-rate} and \eqref{e3.2},
define:
\begin{align}
\textbf{Step 3a} & \textbf{ (for prediction): } \notag \\ 
  \widehat{\alpha}
 &= \text{argmin}_{\alpha \in \mathcal{A}} R_n(\alpha,\widehat\tau) + \omega_{n} \sum_{j=1}^{2p} D_j(\widehat\tau)|\alpha_j|,  \label{step3-alpha-est}\\
\textbf{Step 3b}  & \textbf{ (for variable selection): }  \notag \\
\widetilde{\alpha}
&= \text{argmin}_{\alpha \in \mathcal{A}} R_n(\alpha,\widehat\tau) + \mu_n\sum_{j=1}^{2p}w_jD_j(\widehat\tau)|\alpha_j|, \label{step3-alpha-var-sel}
 \end{align}%
where $\widehat{\tau}$ is the second step estimator of $\tau_0$ in \eqref{step2-tau-est},
and  the ``signal-adaptive" weight $w_j$ in \eqref{step3-alpha-var-sel}, motivated by the local linear approximation  of the SCAD penalties \citep{Fan01, zouli}, is calculated based on the Step 3a estimator $\widehat\alpha$  from (\ref{step3-alpha-est}):
\begin{equation*}
w_j\equiv%
\begin{cases}
1, & |{\widehat{\alpha}}_j|<\mu_n \cr 
0, & |{\widehat{\alpha}}_j|>a\mu_n\cr 
\frac{a\mu_n-|{\widehat{\alpha}}_j|}{\mu_n(a-1)} & \mu_n\leq|{\widehat{\alpha}}_j|\leq a\mu_n.
\end{cases}%
\end{equation*}
Here $a>1$ is some prescribed constant, and $a=3.7$ is often used in the literature.
We take this as our choice of $a$.

\begin{remark}
For $\widehat{\alpha}$ in \eqref{step3-alpha-est}, we set $\omega_n$ to converge to zero at a rate of $(\log (p \vee n)/n)^{1/2}$:
\begin{align}\label{omega-rate}
\omega_{n} \propto  \sqrt{\frac{\log (p \vee n)}{n}} \;,
\end{align}
which is a more standard rate compared to $\kappa_n$ in  \eqref{kappa_n_rate}). 
Therefore, the  estimator $\widehat{\alpha}$ 
converges in probability to $\alpha_0$ faster than $\breve{\alpha}$.  
In addition, $\mu_n$ in \eqref{step3-alpha-var-sel} is chosen to be slightly larger than $\omega_n$
for the purpose of the variable selection. 
A data-dependent method of choosing $\omega _{n}$ as well as $\mu_n$ is  discussed in 
Section \ref{tune-para}.
In Sections \ref{sec:Lasso-theory-case1} and \ref{sec:Lasso-theory-case2}, we establish conditions under which $\widehat{\alpha}$  achieves the (minimax) optimal rate of convergence in probability for $\alpha_0$ regardless of the identifiability of $\tau_0$. 
\end{remark}

\begin{remark}
It is well known in linear models without the presence of an unknown $\tau_0$  (see, e.g. \cite{bulmann}) that the Lasso estimator may not perform well for the purpose of the variable selection. The estimator $\widetilde\alpha$ defined in Step 3b uses an entry-adaptive weight $w_j$ that corrects the shrinkage bias, and possesses similar merits of the asymptotic unbiasedness of the SCAD penalty. 
Therefore, we recommend $\widehat{\alpha}$ for the prediction; while suggesting $\widetilde\alpha$ for the variable selection.
\end{remark}

\begin{remark}
Note that the objective function is non-convex with respect to $\tau$ in the first 
and second steps. 
However,  the proposed estimators can be calculated efficiently using existing algorithms, and we 
describe the computation algorithms in Section \ref{tune-para}. 
\end{remark}

\begin{remark}
 Step 2  can be repeated using the updated estimator of $\alpha_0$ in Step 3. Analogously, Step 3 can be iterated after that. 
This would give asymptotically equivalent estimators but might improve the finite-sample performance especially when $p$ is very large. 
Repeating Step 2 might be useful especially when $\breve{\delta} = 0$ in the first step. In this case, there is no unique $\widehat{\tau}$ in Step 2. So, we skip  the second step 
by setting $\widehat{\tau} = \breve{\tau}$ and  move to the third step directly.
If a preferred estimator  of $\delta_0$ in the third step  (either $\widehat{\delta}$ or $\widetilde{\delta}$), depending on the estimation purpose, is different from zero,
we could go back to Step 2 and re-estimate $\tau_0$.   
If the third step estimator of $\delta_0$ is also zero, then we conclude that there is no change point and disregard the first-step estimator $\breve{\tau}$ since $\tau_0$ is not identifiable in this case. 
\end{remark}

\subsection{Comparison of Estimators  in Step 3 }\label{sec:comparison-step3}

Step 3 defines two estimators for $\alpha_0$. In this subsection we briefly explain their major differences and purposes.    Step 3b  is particularly useful when the variable selection consistency is the main objective, yet it often requires the minimum signal condition ($\min_{\alpha_{0j}\neq0}|\alpha_{0j}|$ is well separated from zero). In contrast,  Step 3a does not require the minimum signal condition,  and is recommended for prediction purposes.  More specifically:

\begin{enumerate}
\item If the minimum signal condition (\ref{e6.1})  indeed holds,   a perfect variable selection (variable selection consistency) is possible. Indeed, thanks to the signal-adaptive weights, the estimator of  Step 3b introduces little shrinkage biases. As a result, we  show in Theorem \ref{th3.1}  that under very mild conditions, this estimator 
 achieves  the variable selection consistency. In contrast,  Step 3a does not use  signal-adaptive weights. In order to achieve the variable selection consistency, it has to rely on much stronger conditions on the design matrix (i.e., the \textit{irrepresentable condition} of \cite{zhao2006model}) so as  to  ``balance out" the effects of shrinkage biases, and is less adaptive to correlated designs.

\item In the presence of  the minimum signal condition,   not only does Step 3b achieve the variable selection consistency,   it also has  a better rate of convergence than Step 3a (Theorem \ref{th3.1}). The faster rate of convergence is built on the variable selection consistency, and is still a consequence of  the signal-adaptive weights. Intuitively,  nonzero elements of $\alpha_0$ are easier to identify and estimate when the signal is strong. Such a phenomenon has been observed in the literature;    see, e.g., \cite{FL11} and many papers on variable selections using ``folded-concave" penalizations.

\item  In the absence of the minimum signal condition, neither method can achieve variable selection consistency. However, it  is not a requirement for the prediction purpose. In this case, we recommend the estimator of Step 3a, because   it achieves a fast (minimax) rate of convergence (Theorem \ref{th2.3-improved}), which is useful for predictions.

\item Finally, we show in Theorem \ref{th3.1-without} that  without  the minimum signal condition,  Step 3b, with the signal-adaptive weights,  does not perform badly, in the sense that it still results in    estimation and prediction consistency. However, the rate of convergence is slower than  that   of Step 3a.
\end{enumerate}

\subsection{Tuning parameter selection}\label{tune-para}

In this subsection, we provide details on how to choose tuning parameters in applications.
Recall that our procedure  involves three tuning parameters in the penalization: (1) $\kappa_n$   in Step 1 ought  to  dominate the score function uniformly over the range of $\tau$, and hence should be slightly larger than the others; (2)  $\omega_n$ is used in Step 3a for the prediction, and (3) $\mu_n$  in  Step 3b  for the variable selection  should be larger than $\omega_n$. Note that the tuning parameters in both Steps 3a and 3b are similar to those of the existing literature since the change point $\widehat\tau$ has been  estimated. 

We build on the data-dependent selection method in \cite{BC11}. Define
\begin{align}
\Lambda (\tau) := \max_{1\le j \le 2p} \left\vert \frac{1}{n}\sum_{i=1}^n \frac{X_{ij}(\tau) \left(\gamma - 1\{U_i \le \gamma \}\right)}{D_j(\tau)}\right\vert,
\end{align}
where $U_i$ is simulated from the \emph{i.i.d.\ }uniform distribution on the interval $[0,1]$; $\gamma$ is the quantile of interest (e.g. $\gamma = 0.5$ for median regression).  Note that $\Lambda(\tau)$ is a stochastic process indexed by $\tau$. Let $\overline{\Lambda}_{1-\epsilon^\ast}$ be the $(1-\epsilon^\ast)$-quantile of $\sup_{\tau \in \mathcal{T}} \Lambda(\tau)$, where 
$\epsilon^\ast$ is a small positive constant that will be selected by a user.
 Then, we select the tuning parameter in Step 1 by
$
\kappa_n = c_1 \cdot \overline{\Lambda}_{1-\epsilon^\ast}.
$
Similarly, let $\Lambda_{1-\epsilon^\ast}(\widehat{\tau})$ be the $(1-\epsilon^\ast)$-quantile of $\Lambda(\widehat{\tau})$, where $\widehat{\tau}$ is chosen in Step 2. We select $\omega_n$ and $\mu_n$ in Step 3 by
$
\omega_n =  c_1 \cdot \Lambda_{1-\epsilon^\ast}(\widehat{\tau})
$
and
$
\mu_n =      c_2 \cdot \omega_n.
$
It is also necessary to choose $\mathcal{T}$ in applications. In our Monte Carlo experiments in Section 
\ref{sec:MC}, 
we take $\mathcal{T}$ to be  the interval from the 15th percentile to the 85th percentile of the empirical distribution of the threshold variable $Q_i$. 
For example, \cite{Hansen:1996} employed the same range   in his application to U.S. GNP dynamics.

Based on the suggestions of \cite{BC11}
and 
some preliminary simulations, we choose to set  $c_1=1.1$, $c_2=\log\log n$, and $\epsilon^\ast=0.1$.
In addition, recall that we  set $a=3.7$ when calculating  the SCAD weights $w_j$ in Step 3b following the convention in the literature (e.g.\ \citet{Fan01} and \citet{loh2013regularized}).   In Step 1, we first solve the lasso problem for $\alpha$ given each grid point of $\tau\in \mathcal{T}$. Then, we choose $\breve{\tau}$ and the corresponding $\breve{\alpha}(\breve{\tau})$ that minimize the objective function. Step 2 can be solved simply by the grid search.  Step 3 is  a  standard lasso quantile regression estimation  given $\widehat{\tau}$,  whose numerical implementation is well established.
We use the \texttt{rq()} function of the R `quantreg' package with the \texttt{method = "lasso"} in each implementation of the standard lasso quantile regression estimation \citep{quantreg}.

\section{Risk Consistency}\label{sec:Lasso-theory-consistency}

Given the  loss function $\rho(t_1,t_2) \equiv (t_1-t_2)(\gamma-1\{t_1-t_2\leq 0\})$  for the quantile regression model,  define the \textit{excess risk} to be 
\begin{align}\label{def:excess-risk}
R(\alpha,\tau)\equiv\mathbb{E} \rho(Y, X(\tau)^T\alpha)-
\mathbb{E} \rho(Y,X(\tau_0)^T\alpha_0).
\end{align}
By the definition of $(\alpha_0,\tau_0)$ in \eqref{eq2.3add-QR}, we have that
$R(\alpha,\tau) \geq 0$ for any $\alpha \in \mathcal{A}$ and $\tau \in \mathcal{T}$.
What we mean by the ``risk consistency'' here is that the excess risk converges in probability to zero for the proposed estimators. The other asymptotic properties of the proposed estimators will be presented in Sections  \ref{sec:Lasso-theory-case1} and \ref{sec:Lasso-theory-case2}.

In this section, we begin by stating regularity conditions that are needed to develop our first theoretical result.
Recall that $X_{ij}$ denotes the $j$th element of $X_i$.



\begin{assum}[Setting]
\label{a:setting}
\begin{enumerate}[label=(\roman*)]
\item\label{a:setting:itm1}
 The data $\{(Y_{i},X_{i},Q_{i})\}_{i=1}^{n}$ are independent
and identically distributed. Furthermore,   for all $j$ and every integer $m\geq 1$, there is a constant  $K_{1}<\infty $ such that  $\mathbb{E}\left\vert X_{ij}\right\vert
^{m}\leq \frac{m!}{2}K_{1}^{m-2}$.
\item\label{a:dist-Q:itm1} 
$\mathbb{P}(\tau_1 < Q \leq \tau_2) \leq K_2 (\tau_2 - \tau_1)$
for any $\tau_1 < \tau_2$ and some constant $K_2 < \infty$.
\item\label{a:setting:itm2}  
$\alpha_0 \in \mathcal{A} \equiv \left\{ \alpha :\left\vert \alpha \right\vert
_{\infty }\leq M_1 \right\} $ for some constant $M_1 <\infty $, and $\tau_0 \in \mathcal{T} \equiv
\left[ \underline{\tau },\overline{\tau }\right] $. Furthermore,  the probability of 
$\left\{ Q < \underline{\tau }\right\} $ and that of $\left\{ Q > \overline{\tau 
}\right\} $ are strictly positive,  and 
$$
\sup_{j \leq p}\sup_{\tau \in \mathcal{T} } \mathbb{E}[ X_{ij}^2 | Q = \tau ] < \infty.
$$ 
\item\label{a:setting:itm3} 
 There exist universal constants $\underline{D}>0$ and $\overline{D}>0$ such
that w.p.a.1,
\begin{equation*}
0 < \underline{D}\leq \min_{j\leq 2p}\inf_{\tau \in \mathcal{T}}D_{j}(\tau )\leq
\max_{j\leq 2p}\sup_{\tau \in \mathcal{T}}D_{j}(\tau )\leq \overline{D} < \infty.
\end{equation*}%
\item\label{a:threshold:itm1}
$\mathbb{E}\left[ \left( X^{T}\delta _{0}\right) ^{2}|Q=\tau \right]
\leq  M_2 |\delta_0|_2^2$ for all $\tau \in \mathcal{T}$ and for some constant $M_2$ satisfying $0 < M_2 <\infty $.
\end{enumerate}
\end{assum}

In addition to the random sampling assumption, condition \ref{a:setting:itm1}    imposes mild moment restrictions on $X$.
Condition \ref{a:dist-Q:itm1}   imposes a weak restriction that the probability that $Q \in (\tau_1, \tau_2]$ is bounded by a constant times
$(\tau_2 - \tau_1)$. 
Condition \ref{a:setting:itm2} assumes that the parameter space is compact  and that the support of $Q$ is strictly larger than $\mathcal{T}$. These conditions are  standard   in the literature on change-point and threshold models (e.g., \cite{Seijo:Sen:11a,Seijo:Sen:11b}). 
 Condition \ref{a:setting:itm2} also assumes that the conditional expectation of $\mathbb{E}[ X_{ij}^2 | Q = \cdot ]$ is bounded on $\mathcal{T}$ uniformly in $j$.
Condition \ref{a:setting:itm3} requires that each regressor be of the same magnitude uniformly
over the threshold $\tau$. As the data-dependent weights $D_j(\tau)$ are the
sample second moments of the regressors, it is not stringent to assume them to
 be bounded away from both zero and infinity. 
Condition \ref{a:threshold:itm1} puts some weak upper bound on $\mathbb{E}[ \left( X^{T}\delta _{0}\right) ^{2}|Q=\tau ]$ for all $\tau \in \mathcal{T}$ when $\delta_0 \neq 0$. 
A simple sufficient condition for condition \ref{a:threshold:itm1} is that  the eigenvalues of $\mathbb{E}[ X_{J(\delta_0)}X_{J(\delta_0)}^T|Q=\tau ]$ are bounded uniformly in $\tau$, where $X_{J(\delta_0)}$ denotes the subvector of $X$ corresponding to the nonzero components of $\delta_0$.


Throughout the paper, we let $s \equiv |J(\alpha_0)|_0$, namely the cardinality of $J(\alpha_0)$. We allow that $s \rightarrow \infty$  as $n \rightarrow \infty$ and will give precise regularity conditions regarding its growth rates. 
The  following theorem is concerned about the convergence of $%
R(\breve{\alpha},\breve{\tau})$ with the first step estimator.

\begin{thm}[Risk Consistency]
\label{l2.1}
Let Assumption \ref{a:setting} hold. 
Suppose that 
the tuning parameter $\kappa_n$ satisfies \eqref{kappa_n_rate}. 
 Then, 
$
R(\breve{\alpha },\breve{\tau})=O_P\left( \kappa_{n}s\right) .
$
\end{thm}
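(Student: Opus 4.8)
The plan is to run the standard penalized-$M$-estimation argument — an optimality (``basic'') inequality combined with Knight's identity to linearize the non-smooth check loss — but with the empirical score process controlled \emph{uniformly} over the unknown threshold. Write $P(\alpha,\tau)\equiv\sum_{j=1}^{2p}D_j(\tau)|\alpha_j|$ for the penalty, $v_i(\alpha,\tau)\equiv X_i(\tau)^T\alpha-X_i(\tau_0)^T\alpha_0$ for the linear perturbation, and $\psi_i\equiv\gamma-1\{U_i\le 0\}$ for the quantile score, which satisfies $\mathbb{E}[\psi_i\mid X_i,Q_i]=0$ by the conditional-quantile restriction on $U$. By optimality of $(\breve{\alpha},\breve{\tau})$ in \eqref{eq2.2add},
\begin{equation*}
R_n(\breve{\alpha},\breve{\tau})-R_n(\alpha_0,\tau_0)\le\kappa_n\big[P(\alpha_0,\tau_0)-P(\breve{\alpha},\breve{\tau})\big].
\end{equation*}

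Knight's identity rewrites the loss difference exactly. With the nonnegative ``curvature'' term $C_n(\alpha,\tau)\equiv\frac1n\sum_{i=1}^n\int_0^{v_i}\big(1\{U_i\le s\}-1\{U_i\le 0\}\big)\,ds\ge0$,
\begin{equation*}
R_n(\alpha,\tau)-R_n(\alpha_0,\tau_0)=-\frac1n\sum_{i=1}^n v_i\psi_i+C_n(\alpha,\tau),
\end{equation*}
and $\mathbb{E}[C_n(\alpha,\tau)]=R(\alpha,\tau)$ since the linear term is conditionally mean zero. Combined with the basic inequality this yields the working bound
\begin{equation*}
C_n(\breve{\alpha},\breve{\tau})\le\frac1n\sum_{i=1}^n v_i\psi_i+\kappa_n\big[P(\alpha_0,\tau_0)-P(\breve{\alpha},\breve{\tau})\big].
\end{equation*}

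The crux — and what I expect to be the main obstacle — is the maximal inequality for the self-normalized score $S_{nj}(\tau)\equiv n^{-1}\sum_i X_{ij}(\tau)\psi_i$, namely
\begin{equation*}
\sup_{\tau\in\mathcal{T}}\max_{j\le 2p}\frac{|S_{nj}(\tau)|}{D_j(\tau)}=O_P(\kappa_n).
\end{equation*}
Here I would use that $1\{Q_i>\tau\}$ is piecewise constant in $\tau$, jumping only at $\tau\in\{Q_i\}_{i\le n}$, so the supremum over $\mathcal{T}$ reduces to a maximum over at most $n+1$ configurations; a union bound over these and over $j\le 2p$, together with Bernstein's inequality (licensed by the sub-exponential moment bound in Assumption~\ref{a:setting}\ref{a:setting:itm1} and the self-normalization by $D_j(\tau)$), then delivers the rate. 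This is exactly where the enlarged factor $(\log p)(\log n)$ in \eqref{kappa_n_rate} is consumed: controlling the heavy-tailed score \emph{uniformly} over the unknown change point, rather than at a single $\tau$ as in the homogeneous-sparsity case of \cite{BC11}, is what makes the present setting harder and drives the choice of $\kappa_n$.

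Granting the score bound with constant $\tfrac12$, at $(\breve{\alpha},\breve{\tau})$ the decomposition $n^{-1}\sum_i v_i\psi_i=\sum_j\breve{\alpha}_j S_{nj}(\breve{\tau})-\sum_j\alpha_{0j}S_{nj}(\tau_0)$ gives $\big|n^{-1}\sum_i v_i\psi_i\big|\le\sum_j|\breve{\alpha}_j||S_{nj}(\breve{\tau})|+\sum_j|\alpha_{0j}||S_{nj}(\tau_0)|\le\tfrac{\kappa_n}{2}[P(\breve{\alpha},\breve{\tau})+P(\alpha_0,\tau_0)]$. Substituting into the working bound, the $P(\breve{\alpha},\breve{\tau})$ terms cancel favorably and one obtains $C_n(\breve{\alpha},\breve{\tau})\le\tfrac32\kappa_n P(\alpha_0,\tau_0)$ w.p.a.1. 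Passing from the empirical curvature $C_n$ to the population excess risk $R(\breve{\alpha},\breve{\tau})=\mathbb{E}[C_n(\breve{\alpha},\breve{\tau})]$ needs only a uniform bound on the centered process $C_n-\mathbb{E}[C_n]$ over the $\ell_1$-ball pinned down by the penalty; this follows from an analogous maximal inequality, the integrand being bounded by one, and does not worsen the order. Finally, Assumption~\ref{a:setting}\ref{a:setting:itm2}--\ref{a:setting:itm3} bound $P(\alpha_0,\tau_0)=\sum_{j\in J(\alpha_0)}D_j(\tau_0)|\alpha_{0j}|\le\overline{D}\,M_1\,s$, so that $R(\breve{\alpha},\breve{\tau})=O_P(\kappa_n s)$, as claimed.
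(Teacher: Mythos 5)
Your argument is correct, but it reaches the conclusion by a genuinely different route from the paper's. The paper never invokes Knight's identity: it works with the centered loss process $\nu_n(\alpha,\tau)$ directly, splits $\nu_n(\alpha_0,\tau_0)-\nu_n(\breve{\alpha},\breve{\tau})$ into a variation in $\alpha$ at fixed $\tau$ and a variation in $\tau$ at $\alpha_0$, controls the first by symmetrization plus the contraction theorem applied to the Lipschitz check loss (Lemma \ref{lem-emp}, bounds \eqref{emp1}--\eqref{emp2}) and the second by \eqref{emp3}, and closes with a two-case analysis according to whether $\left\vert \breve{\alpha}-\alpha_0\right\vert_1$ exceeds $\left\vert \alpha_0\right\vert_1$; the large-deviation case is handled by peeling over dyadic $\ell_1$-shells up to radius $2M_1p$, which is exactly what produces the $\log_2(p/s)$ factor that the remark after the theorem identifies as the reason for the enlarged rate \eqref{kappa_n_rate}. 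You instead isolate, via Knight's identity, a score term that is \emph{linear} in $\alpha$, so a single self-normalized maximal inequality $\sup_{\tau}\max_{j}|S_{nj}(\tau)|/D_j(\tau)=o_P(\kappa_n)$ — provable by the same reduction of $\sup_\tau$ to a maximum over $\{Q_i\}$ and the Bernstein bound underlying \eqref{eqa.5}, with the bounded mean-zero multiplier $\psi_i$ in place of the Rademacher variables — dominates the score by $\tfrac{\kappa_n}{2}P(\alpha,\tau)$ for every $\alpha$ simultaneously, with no peeling. The price is the un-centering step from $C_n$ to $\mathbb{E}[C_n]=R$, which requires first localizing via $P(\breve{\alpha},\breve{\tau})\le 3P(\alpha_0,\tau_0)$ (available because $C_n\ge 0$) and then one uniform bound over the resulting $O(s)$-ball; contraction with the $1$-Lipschitz integrand plus an \eqref{emp3}-type bound for the $\tau$-variation of $v_i$ shows this costs only $O_P(\kappa_n s)$, as you assert. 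The two arguments do comparable total work, but yours would deliver this particular theorem under the weaker requirement $\kappa_n\gg\sqrt{\log(np)/n}$, whereas the paper's peeling step is what consumes the extra logarithmic factors in \eqref{kappa_n_rate}.
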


Note that Theorem \ref{l2.1} holds regardless of the identifiability of  $\tau_0$ (that is, whether $\delta_0 = 0$ or not). In addition, the rate $O_P(\kappa_ns)$ is achieved regardless of whether $\kappa_ns$ converges, and we have  the risk consistency  if $\kappa_n s \rightarrow 0$ as $n \rightarrow \infty$. 
The restriction on $s$ is slightly stronger than that of the standard result 
$s = o ( \sqrt{n/\log p} )$
in the literature
for the M-estimation (see, e.g.  \cite{geer} and Chapter 6.6 of \cite{bulmann})
since the objective function $\rho(Y, X(\tau)^T\alpha)$ is non-convex in $\tau$,  due to the unknown change-point.

\begin{remark}
The extra logarithmic factor $(\log p)(\log n)$ in the definition of $\kappa_n$ (see \eqref{kappa_n_rate})
 is due to the existence of the unknown and possibly non-identifiable threshold parameter $\tau_0$. 
In fact, an inspection of the proof of 
Theorem \ref{l2.1} reveals that it suffices to assume that  $\kappa_n$ satisfies 
$\kappa_n \gg \log_2 (p/s) [\log (np)/n ]^{1/2}$. 
The term $\log_2 (p/s)$ and the additional $(\log n)^{1/2}$ term inside the brackets are needed to  establish the stochastic equicontinuity
of the empirical process
\begin{equation*}
\nu _{n}\left( \alpha ,\tau \right) \equiv\frac{1}{n}\sum_{i=1}^{n}\left[ \rho
\left( Y_{i},X_{i}\left( \tau \right) ^{T}\alpha \right) -
\mathbb{E} \rho \left(Y,X\left( \tau \right) ^{T}\alpha \right) \right]
\end{equation*}
uniformly over $(\alpha,\tau) \in \mathcal{A} \times \mathcal{T}$. 
\end{remark}

In Appendix \ref{app:add:theory2}, we show  that an improved rate of convergence, $O_P\left( \omega_{n}s\right)$, is possible for the excess risk by taking the second and third steps of estimation. 

\section{Asymptotic Properties: Case I. $\delta_0\neq0$}\label{sec:Lasso-theory-case1}

Sections   \ref{sec:Lasso-theory-case1} and \ref{sec:Lasso-theory-case2} provide asymptotic properties of the proposed estimators.  
In Appendix \ref{sec:Lasso-theory-assump}, we list a set of assumptions that are needed to derive these properties, in addition to Assumption \ref{a:setting}. 
We first establish the consistency of $\breve{\protect\tau}$ for $\tau_0$.

\begin{thm}[Consistency of $\breve{\protect\tau}$]
\label{th2.2} 
Let
Assumptions \ref{a:setting}, \ref{ass3.4-a}, \ref{a:dist-Q-add}, and \ref{a:moment} hold. Furthermore, assume that $\kappa_n s = o(1)$.
Then, 
$\breve{\tau}\overset{P}{\longrightarrow}\tau_{0}$.
\end{thm}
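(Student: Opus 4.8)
The plan is to deduce consistency of $\breve\tau$ from the risk consistency of Theorem \ref{l2.1} via a population identification (well-separation) argument, the standard route for argmin estimators that conveniently sidesteps the non-convexity in $\tau$. First I would invoke Theorem \ref{l2.1}: since $\kappa_n s = o(1)$ by hypothesis, it gives $R(\breve\alpha,\breve\tau) = O_P(\kappa_n s) = o_P(1)$. Because $\breve\alpha \in \mathcal{A}$ and $\breve\tau \in \mathcal{T}$ by construction in Step 1, it then suffices to establish a well-separation property of the population excess risk: for every $\epsilon > 0$,
$$
\eta(\epsilon) \;:=\; \inf_{\alpha \in \mathcal{A}}\ \inf_{\tau \in \mathcal{T}:\, |\tau - \tau_0| \geq \epsilon} R(\alpha,\tau) \;>\; 0 .
$$
Granting this, on the event $\{|\breve\tau - \tau_0| \geq \epsilon\}$ we would have $R(\breve\alpha,\breve\tau) \geq \inf_{\alpha\in\mathcal{A}} R(\alpha,\breve\tau) \geq \eta(\epsilon)$, so that $\mathbb{P}(|\breve\tau - \tau_0| \geq \epsilon) \leq \mathbb{P}(R(\breve\alpha,\breve\tau) \geq \eta(\epsilon)) \to 0$, which is precisely $\breve\tau \overset{P}{\longrightarrow} \tau_0$.

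The core of the proof is thus the lower bound $\eta(\epsilon) > 0$, and this is where $\delta_0 \neq 0$ and the conditions of Assumptions \ref{ass3.4-a}, \ref{a:dist-Q-add}, and \ref{a:moment} enter. I would proceed in two stages. In the first stage I reduce the quantile excess risk to an $L_2$ prediction mismatch. Writing $\Delta(\alpha,\tau) \equiv X(\tau)^T\alpha - X(\tau_0)^T\alpha_0$ and using $\mathbb{P}(U \leq 0 \mid X,Q) = \gamma$ together with a lower bound on the conditional density of $U$ near zero (supplied by the smoothness/moment conditions), a Knight-type decomposition of $\rho$ yields a quadratic-linear minorant $R(\alpha,\tau) \geq c\,\mathbb{E}\big[\min\{\Delta(\alpha,\tau)^2,\,|\Delta(\alpha,\tau)|\}\big]$ for a universal constant $c>0$.

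In the second stage I bound this mismatch from below uniformly in $\alpha$ for $\tau$ bounded away from $\tau_0$. Taking $\tau > \tau_0$ without loss of generality and restricting the expectation to the ``left'' region $\{Q \leq \tau_0\}$ and the ``gap'' region $\{\tau_0 < Q \leq \tau\}$, the mismatch equals $X^T(\beta-\beta_0)$ on the former and $X^T(\beta-\beta_0-\delta_0)$ on the latter, and crucially $\delta$ does not appear in either. Minimizing the resulting positive semidefinite quadratic form in $b \equiv \beta-\beta_0$ over all $b$ (a lower bound for the constrained infimum over $\mathcal{A}$) produces a Schur-complement expression that is strictly positive precisely because the two regions pull $b$ in incompatible directions whenever $\delta_0 \neq 0$; quantitatively it is bounded below by a product of a $\lambda_{\min}$-type design constant, $|\delta_0|_2^2$, and $\mathbb{P}(\tau_0 < Q \leq \tau)$, the last factor being bounded away from zero for $|\tau-\tau_0| \geq \epsilon$ by the positive-density condition on $Q$. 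This delivers $\eta(\epsilon) > 0$ and closes the argument.

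The main obstacle is making this second-stage lower bound genuinely uniform over the high-dimensional, only sup-norm-bounded set $\mathcal{A}$, while reconciling it with the truncation in the quadratic-linear minorant: the clean Schur-complement computation controls $\mathbb{E}\Delta^2$, but one must rule out the mismatch being pushed entirely into the linear regime where the minorant is merely $|\Delta|$, and must invoke the eigenvalue/identification conditions so that $\delta_0 \neq 0$ truly forces a nonvanishing $L_2$ gap on $\{\tau_0 < Q \leq \tau\}$ rather than being absorbed by the free coordinates of $\beta$ and $\delta$. I expect this design-level, uniform identification step — not the reduction from risk consistency — to be the technical heart of the proof.
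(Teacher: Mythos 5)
Your reduction of the problem to risk consistency plus a population well-separation property is the right skeleton, and it matches the paper's overall strategy. But the well-separation statement you pose, $\inf_{\alpha\in\mathcal{A}}\inf_{|\tau-\tau_0|\geq\epsilon}R(\alpha,\tau)>0$ uniformly over the \emph{entire} sup-norm ball $\mathcal{A}$, is stronger than what the paper proves and is not available under the assumptions of Theorem \ref{th2.2}. Your proposed route to it --- minimize the quadratic form in $b=\beta-\beta_0$ over the left and gap regions and extract a Schur-complement lower bound of order $\lambda_{\min}\cdot|\delta_0|_2^2\cdot\mathbb{P}(\tau_0<Q\leq\tau)$ --- needs a uniform lower bound on the minimum eigenvalue of the $p$-dimensional conditional design matrices $\mathbb{E}[XX^T1\{Q\leq\tau_0\}]$ and $\mathbb{E}[XX^T1\{\tau_0<Q\leq\tau\}]$. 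No such condition is assumed: Theorem \ref{th2.2} does not even invoke the compatibility condition (Assumption \ref{ass2.7}), and the only design lower bound available is in the single direction $\delta_0$, namely $\mathbb{E}[(X^T\delta_0)^2|Q=\tau]\geq M_3^{-1}$ from Assumption \ref{a:dist-Q-add}\ref{a:threshold}. You flag exactly this uniformity issue as ``the main obstacle'' but leave it unresolved, and as sketched the step would fail.

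The paper closes this gap by \emph{not} proving uniform separation over $\mathcal{A}$. Instead it uses the risk consistency a second time: via the restricted nonlinearity condition, $R(\breve\alpha,\breve\tau)=o_P(1)$ forces the prediction distances $\mathbb{E}[(X^T(\breve\beta-\beta_0))^21\{Q\leq\tau_0\}]$ and $\mathbb{E}[(X^T(\breve\theta-\theta_0))^21\{Q>\tau_0\}]$ to be small, so $\breve\beta\in\mathcal{B}(\beta_0,r)$ and $\breve\theta\in\mathcal{G}(\theta_0,r)$ w.p.a.1 (Step 3 of the paper's proof). Separation then only needs to hold on these prediction-norm neighborhoods (Step 4), where it is obtained by a triangle inequality: the kink term $\mathbb{E}[(\rho(Y,X^T\beta_0)-\rho(Y,X^T\theta_0))1\{\tau_0<Q\leq\tau\}]\geq c\,\mathbb{P}(\tau_0<Q\leq\tau)$ is compared with the third term of the risk decomposition (nonnegative and bounded by $R$, which requires the conditional-minimization property, Assumption \ref{a:obj-ftn}\ref{a:obj-ftn:itm2}, a step you use implicitly when you drop the $\{Q>\tau\}$ region) plus a perturbation $L\,\mathbb{E}[|X^T(\beta-\beta_0)|1\{\tau_0<Q\leq\tau\}]$, and it is the moment-transfer conditions of Assumption \ref{a:moment}\ref{a:moment:itm1} --- valid only on the prediction neighborhoods --- that let this perturbation be controlled by $R(\alpha,\tau)$ itself. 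Replacing your global Schur-complement argument with this localize-then-separate scheme is what makes the proof go through under the stated assumptions. (A smaller point: with $f_{Y|X,Q}$ bounded below only \emph{at} $X(\tau_0)^T\alpha_0$ and $|\tilde f|\leq C_1$, the pointwise Knight bound gives a minorant of the form $c\min\{\Delta^2,1\}$ rather than $c\min\{\Delta^2,|\Delta|\}$; this does not affect positivity but your stated minorant does not follow from Assumption \ref{ass3.4-a}.)
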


The following theorem presents the rates of convergence for the first step estimators of $\alpha_0$ and $\tau_0$. 
Recall that $\kappa_n$  is the first-step penalization  tuning parameter that satisfies \eqref{kappa_n_rate}.

\begin{thm} [Rates of Convergence When $\delta_0 \neq 0$]
\label{th2.3} Suppose that $\kappa_n s^2\log p=o(1)$.
 Then 
under Assumptions  \ref{a:setting}-\ref{a:moment},  
 we have:
\begin{equation*}
|\breve{\alpha}-\alpha _{0}|_{1}=O_P(\kappa _{n}s),
\;
R(\breve{\alpha},\breve{\tau})=O_P(\kappa _{n}^{2}s),
\ \ \text{ and } \ \ 
|\breve{\tau}-\tau _{0}|=O_P (\kappa _{n}^{2}s).
\end{equation*}
\end{thm}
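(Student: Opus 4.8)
The plan is to upgrade the slow rate $R(\breve\alpha,\breve\tau)=O_P(\kappa_n s)$ from Theorem~\ref{l2.1} and the consistency $\breve\tau\overset{P}{\longrightarrow}\tau_0$ from Theorem~\ref{th2.2} into the stated fast rates, following the two-stage strategy familiar from penalized $M$-estimation but adapted to the non-smooth, non-convex dependence on $\tau$. I would start from the optimality of $(\breve\alpha,\breve\tau)$ in \eqref{eq2.2add}, write $R_n=R+\nu_n+R_n(\alpha_0,\tau_0)$, and obtain the basic inequality
\[
R(\breve\alpha,\breve\tau)\le\big[\nu_n(\alpha_0,\tau_0)-\nu_n(\breve\alpha,\breve\tau)\big]+\kappa_n\Big(\sum_{j=1}^{2p}D_j(\tau_0)|\alpha_{0j}|-\sum_{j=1}^{2p}D_j(\breve\tau)|\breve\alpha_j|\Big).
\]
Because Theorems~\ref{l2.1} and~\ref{th2.2} already localize $(\breve\alpha,\breve\tau)$ near $(\alpha_0,\tau_0)$, I would control the empirical-process increment by a maximal/peeling argument, bounding $|\nu_n(\breve\alpha,\breve\tau)-\nu_n(\alpha_0,\tau_0)|$ by $\tfrac{\kappa_n}{2}|\breve\alpha-\alpha_0|_1$ plus a term generated by the $\tau$-variation of the indicator-indexed class; the inflated logarithmic factors in $\kappa_n$ (see~\eqref{kappa_n_rate}) are exactly what make $\kappa_n$ dominate this increment uniformly over $j\le 2p$ and $\tau\in\mathcal{T}$. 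Splitting the penalty over $J\equiv J(\alpha_0)$ and $J^c$, using $\alpha_{0,J^c}=0$ together with the two-sided bounds on $D_j(\tau)$ in Assumption~\ref{a:setting}\ref{a:setting:itm3}, the $J^c$ terms enter with a favorable sign and yield the cone condition $|\breve\alpha_{J^c}|_1\le C\,|\breve\alpha_J-\alpha_{0,J}|_1$ w.p.a.1.

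For the lower bound I would invoke Knight's identity to express, with $v\equiv X(\tau)^T\alpha-X(\tau_0)^T\alpha_0$,
\[
R(\alpha,\tau)=\mathbb{E}\!\int_0^{v}\big(F_{U\mid X,Q}(s)-\gamma\big)\,ds,
\]
which, combined with the moment and conditional-density conditions in Assumptions~\ref{ass3.4-a}--\ref{a:moment}, yields a curvature bound of the form $R(\alpha,\tau)\gtrsim\mathbb{E}\big[v^2\wedge|v|\big]$. The crucial step is the decomposition $v=\underbrace{X^T(\beta-\beta_0)+X^T(\delta-\delta_0)1\{Q>\tau\}}_{\text{smooth part}}+\underbrace{X^T\delta_0\big(1\{Q>\tau\}-1\{Q>\tau_0\}\big)}_{\text{threshold wedge}}$. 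On the wedge region $\{Q\text{ between }\tau\text{ and }\tau_0\}$, whose probability is $\asymp|\tau-\tau_0|$ by Assumption~\ref{a:setting}\ref{a:dist-Q:itm1}, the wedge term dominates and, since $\delta_0\ne0$, contributes at order $|\tau-\tau_0|$ (linearly, because $X^T\delta_0$ is not small), while the smooth part is controlled from below by a restricted-eigenvalue constant $\phi$ times $|\breve\alpha_J-\alpha_{0,J}|_2^2$. Hence $R(\breve\alpha,\breve\tau)\gtrsim\phi\,|\breve\alpha_J-\alpha_{0,J}|_2^2+c\,|\breve\tau-\tau_0|$.

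Chaining the two bounds completes the argument. The basic inequality gives $R(\breve\alpha,\breve\tau)\lesssim\kappa_n|\breve\alpha_J-\alpha_{0,J}|_1\lesssim\kappa_n\sqrt{s}\,|\breve\alpha_J-\alpha_{0,J}|_2$, and comparing with the quadratic part of the curvature bound forces $|\breve\alpha_J-\alpha_{0,J}|_2\lesssim\kappa_n\sqrt{s}$; the cone condition then upgrades this to $|\breve\alpha-\alpha_0|_1=O_P(\kappa_n s)$ and $R(\breve\alpha,\breve\tau)=O_P(\kappa_n^2 s)$, while the linear part of the curvature bound delivers $|\breve\tau-\tau_0|\lesssim R(\breve\alpha,\breve\tau)=O_P(\kappa_n^2 s)$. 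The hard part, and the reason for the strengthened hypothesis $\kappa_n s^2\log p=o(1)$, is the $\tau$-direction: one must simultaneously (i) control the empirical process over the class $\{1\{Q>\tau\}:\tau\in\mathcal{T}\}$, where the non-smoothness forces the logarithmic inflation of $\kappa_n$ and the cross and higher-order remainder terms must be shown negligible, and (ii) verify that the wedge contribution to the excess risk is genuinely linear in $|\tau-\tau_0|$ and cannot be cancelled by re-optimizing the coefficient part $\alpha$ — which is precisely where the jump $\delta_0\ne0$ is indispensable.
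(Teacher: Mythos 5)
Your skeleton---basic inequality, a kink/curvature lower bound on the excess risk that is linear in $|\tau-\tau_0|$ and quadratic in the coefficient direction, then compatibility to close the loop---is the same architecture the paper uses. But there is a genuine gap at the point where you assert the cone condition ``w.p.a.1.'' Because the penalty weights $D_j(\tau)$ for the $\delta$-block depend on $\tau$, the penalty difference in your basic inequality produces, beyond the usual $J$/$J^c$ split, the extra term $\kappa_n\bigl(|D(\tau_0)\alpha_0|_1-|D(\breve\tau)\alpha_0|_1\bigr)$, and the empirical-process increment produces the extra term $\nu_n(\alpha_0,\breve\tau)-\nu_n(\alpha_0,\tau_0)$, which by the paper's Lemma on $\nu_n$ is only $O_P\bigl(\sqrt{|\breve\tau-\tau_0|/n}\,\bigr)$. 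Neither term is dominated by $\kappa_n|(\breve\alpha-\alpha_0)_J|_1$ in general, so the cone condition can fail. The paper splits into two cases according to whether these nuisance terms dominate; in the bad case the compatibility condition is unusable and the rate is instead extracted by an iterative bootstrapping argument: the crude rate $|\breve\tau-\tau_0|=O_P(\kappa_n s)$ is fed back into the bounds on the two nuisance terms, yielding an improved rate for $\breve\tau$, and the iteration is repeated (infinitely often, with a geometric exponent) until it stabilizes at $O_P(\kappa_n^2 s)$. Without this device your chain of inequalities only delivers the theorem on the event that the cone condition holds.

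A second, smaller gap is that you flag but do not resolve the point you yourself identify as the crux: that the wedge contribution cannot be cancelled by re-optimizing $\alpha$. The paper's mechanism is concrete: for $\tau>\tau_0$ it sandwiches the cross term $\mathbb{E}\bigl[(\rho(Y,X^T\beta)-\rho(Y,X^T\beta_0))1\{\tau_0<Q\le\tau\}\bigr]$ from above by $C(\tau-\tau_0)^2+R(\alpha,\tau)$ (Young's inequality plus the restricted nonlinearity and the conditional moment bounds of Assumption \ref{a:moment}) and from below by $\bar c_0(\delta_0)(\tau-\tau_0)-R(\alpha,\tau)$ (Assumption \ref{a:obj-ftn}\ref{a:obj-ftn:itm4} via Lemma \ref{l4.1}), giving $\bar c_0(\delta_0)(\tau-\tau_0)\le \tilde C(\tau-\tau_0)^2+2R(\alpha,\tau)$; the quadratic term is then absorbed using the consistency of $\breve\tau$ from Theorem \ref{th2.2}, which yields the linear inequality $|\breve\tau-\tau_0|\le 4R(\breve\alpha,\breve\tau)/\bar c_0(\delta_0)$. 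You would need to supply this step (or an equivalent) rather than leave it as something ``to verify,'' since it is exactly what converts risk consistency into super-consistency of $\breve\tau$ and feeds the iteration above.
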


In Theorem \ref{l2.1}, we have that $
R(\breve{\alpha },\breve{\tau})=O_P\left( \kappa_{n}s\right).
$
The improved rate of convergence for $
R(\breve{\alpha },\breve{\tau})$ in Theorem \ref{th2.3} is due to additional assumptions (in particular, compatibility conditions in Assumption \ref{ass2.7} among others). 
  It is worth noting that  $\breve\tau$ converges to $\tau_0$  faster than the standard parametric rate of $n^{-1/2}$,  as long as  $s^2(\log p)^6(\log n)^4  =o(n)$. 
The main reason  for  such  \textit{super-consistency} is that the objective function behaves locally linearly around $\tau_0$ with a kink at $\tau_0$, unlike in the regular estimation problem where the objective function behaves  locally quadratically around the true parameter value. 
Moreover,   the achieved convergence rate for $\breve\alpha$ is nearly minimax optimal, with an additional factor $(\log p)(\log n)$ compared to the rate of regular Lasso estimation (e.g., \cite{Bickeletal, Rasku09}). This factor arises due to the unknown change-point $\tau_0.$  
We will improve the rates of convergence for both $\tau_0$ and $\alpha_0$ further by taking the second and third steps of estimation. 
 
 
Recall that the second-step estimator of $\tau_0$ is defined as 
\begin{align*}
\widehat{\tau}&=
\operatorname*{argmin}_{\tau \in \mathcal{T}}R_n(\breve\alpha,\tau), 
 \end{align*}%
where $\breve{\alpha}$ is the first step estimator of $\alpha_0$ in \eqref{eq2.2add}.
Consider an oracle case for which $\alpha$ in $R_n(\alpha,\tau)$ is fixed at $\alpha_{0}$. Let
$R_{n}^{\ast}\left(  \tau\right)  =R_{n} \left(  \alpha_{0},\tau\right)
$ and
\[
\widetilde{\tau}=\operatorname*{argmin}_{\tau \in \mathcal{T}} R_{n}^{\ast}\left(  \tau\right)  .
\]


We now give one of the main results of this paper.

\begin{thm}[Oracle Estimation of $\tau_0$]\label{thm:tau-2nd}
Let Assumptions  \ref{a:setting}-\ref{a:moment} hold. Furthermore, 
suppose that $\kappa_n s^2\log p=o(1)$. Then,  we have that 
\[
\widehat{\tau} - \widetilde{\tau} = o_{P}\left(  n^{-1}\right).
\]
Furthermore, $n\left( \widehat{\tau}-\tau _{0}\right) $ converges in distribution to 
the smallest
minimizer of a compound Poisson process, which is given by%
\begin{equation*}
M\left( h\right) \equiv \sum_{i=1}^{N_{1}\left( -h\right) }\rho _{1i}1\left\{
h<0\right\} + \sum_{i=1}^{N_{2}\left( h\right) }\rho _{2i}1\left\{ h\geq
0\right\} ,
\end{equation*}%
where $N_{1}$ and $N_{2}$ are Poisson processes with the
same jump rate $f_{Q}\left( \tau _{0}\right) $, and $\left\{ \rho
_{1i}\right\} \ $and $\left\{ \rho _{2i}\right\} $ are two sequences of independent and identically distributed 
random variables. The   distributions of ${\rho}_{1i}$ and ${\rho}_{2i}$, respectively,  are identical to the
conditional distributions of $ \dot{\rho}\left( U _{i}-X_{i}^{T}\delta _{0}\right) 
 -\dot{\rho}\left( U _{i}\right) $ 
and $\dot{\rho}\left(U _{i}+X_{i}^{T}\delta _{0}\right) -\dot{\rho}\left( U_{i}\right)$ given $Q_{i}=\tau _{0}$, where $\dot{\rho}\left( t\right)  \equiv t\left( \gamma -1\left\{ t\leq 0\right\} \right) $ and 
$U_i \equiv Y_i - X_{i}^{T}\beta _{0}-X_{i}^{T}\delta _{0}1\left\{ Q_{i}>\tau
_{0}\right\}$ for each $i = 1,\dots,n$. Here, 
 $N_{1}$,  $N_{2}$,  $\left\{ \rho_{1i}\right\} \ $ and $\left\{ \rho _{2i}\right\} $ are mutually independent.
\end{thm}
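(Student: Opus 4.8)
The plan is to localize the criterion around $\tau_0$ at the $n^{-1}$ scale and to compare the feasible objective to its oracle counterpart. Writing $\tau = \tau_0 + h/n$, I would introduce the localized processes
\[
\mathbb{M}_n(h) \equiv n\big[R_n(\breve\alpha, \tau_0 + h/n) - R_n(\breve\alpha, \tau_0)\big], \qquad \mathbb{M}_n^\ast(h) \equiv n\big[R_n^\ast(\tau_0 + h/n) - R_n^\ast(\tau_0)\big],
\]
so that $n(\widehat\tau - \tau_0) = \argmin_h \mathbb{M}_n(h)$ and $n(\widetilde\tau - \tau_0) = \argmin_h \mathbb{M}_n^\ast(h)$. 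The key structural observation is that the indicator $1\{Q_i > \tau\}$ flips only for observations with $Q_i$ strictly between $\tau_0$ and $\tau_0 + h/n$, so each localized process reduces to a sum over the $O_P(1)$ observations in that shrinking window. Evaluated at $\alpha_0$, the summands are exactly $\dot\rho(U_i + X_i^T\delta_0) - \dot\rho(U_i)$ for $h>0$ and $\dot\rho(U_i - X_i^T\delta_0) - \dot\rho(U_i)$ for $h<0$, which are precisely the marks $\rho_{2i}$ and $\rho_{1i}$ in the statement.

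The argument then proceeds in four steps. First, I would establish the $n^{-1}$ rate: using the consistency of $\breve\tau$ and the $\ell_1$ rate $|\breve\alpha - \alpha_0|_1 = O_P(\kappa_n s)$ from Theorem \ref{th2.3}, I would show that the population drift $R(\alpha_0, \tau) - R(\alpha_0, \tau_0)$ is bounded below by $c|\tau - \tau_0|$ near $\tau_0$ (the linear-in-$|\tau - \tau_0|$ behavior responsible for the super-consistency), and combine this with fluctuation control to confine both argmins to a compact set, i.e. $\widehat\tau - \tau_0 = O_P(n^{-1})$ and likewise for $\widetilde\tau$. Second, and this is the crux, I would show $\sup_{|h| \le C} |\mathbb{M}_n(h) - \mathbb{M}_n^\ast(h)| = o_P(1)$ for every fixed $C$: the difference is a sum over the $O_P(1)$ window observations of $f_i(\breve\alpha) - f_i(\alpha_0)$, and since $\dot\rho$ is $1$-Lipschitz each summand is bounded by $2|X_i^T(\breve\beta - \beta_0)| + |X_i^T(\breve\delta - \delta_0)| \le 2|X_i|_\infty |\breve\alpha - \alpha_0|_1$. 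The sub-exponential moment bound in Assumption \ref{a:setting}\ref{a:setting:itm1} controls $|X_i|_\infty$ up to polylogarithmic factors on the window, so the total perturbation is of order $\mathrm{polylog}(n)\cdot\kappa_n s = o_P(1)$ under $\kappa_n s^2\log p = o(1)$.

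Third, I would prove the weak convergence $\mathbb{M}_n^\ast \Rightarrow M$. After rescaling, the locations $\{n(Q_i - \tau_0)\}$ of observations near $\tau_0$ converge to a homogeneous Poisson process of rate $f_Q(\tau_0)$ on each side of the origin, and by a marked-point-process (thinning) argument the associated marks are asymptotically i.i.d. draws from the conditional laws of $\rho_{1i}$ and $\rho_{2i}$ given $Q_i = \tau_0$; the independence of the two sides, and of locations from marks, yields the compound Poisson limit $M(h)$. Fourth, I would invoke an argmax continuous-mapping theorem: because $M$ has positive expected marks on each side it drifts to $+\infty$ and has a unique smallest minimizer almost surely, so the argmin functional is continuous at $M$. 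Combining Steps~1--3 then gives $n(\widehat\tau - \tau_0) \Rightarrow \argmin M$, and since the feasible and oracle localized processes differ by $o_P(1)$ uniformly on compacts while both converge to the same $M$, the two argmins are asymptotically equal, i.e. $n(\widehat\tau - \widetilde\tau) = o_P(1)$, which is the first assertion.

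The step I expect to be the main obstacle is the second one, namely controlling the plug-in error $\breve\alpha - \alpha_0$ at the $n^{-1}$ resolution. The delicate points are that $\dot\rho$ is only Lipschitz, with no second-order smoothness to exploit; that $\breve\alpha$ is a high-dimensional Lasso estimate correlated with the very observations populating the local window; and, most importantly, that the entire argument must go through using only the $\ell_1$ rate of $\breve\alpha$, never a perfect-selection or minimum-signal assumption. Making the $o_P(1)$ bound uniform in $h$ requires a stochastic-equicontinuity argument for the empirical process indexed jointly by $\tau$ and by $\alpha$ ranging over a shrinking $\ell_1$-ball, which is exactly where the extra $(\log p)(\log n)$ factor built into $\kappa_n$ and the compatibility-type conditions among Assumptions \ref{a:setting}--\ref{a:moment} are consumed.
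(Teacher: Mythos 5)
Your overall architecture --- localize at scale $n^{-1}$, show the feasible and oracle localized criteria differ by $o_P(1)$ on compacts, derive the compound Poisson limit for the oracle process, and finish with the argmax continuous mapping theorem of Seijo and Sen --- is the same as the paper's, and your Steps 2--4 correspond in substance to the paper's Lemmas \ref{lem:ortho} and \ref{lem:tau-dist-oracle}. The gap is in Step 1: the claim that $\widehat{\tau}-\tau_0=O_P(n^{-1})$ follows from ``linear population drift plus fluctuation control.'' For the oracle $\widetilde{\tau}$ this is the standard argument, but $\widehat{\tau}$ minimizes the contaminated criterion $R_n(\breve{\alpha},\cdot)$ over all of $\mathcal{T}$, and at the global scale the plug-in perturbation is of order $\kappa_n s\cdot\mathrm{polylog}(np)\gg n^{-1}$. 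Your bound ``$O_P(1)$ window observations times $|X_i|_\infty|\breve{\alpha}-\alpha_0|_1$'' is only available after you already know $\tau$ lives in an $O(n^{-1})$ window, which is precisely what is to be proved.

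What makes the argument close is the product structure of the perturbation: the paper's Lemma \ref{lem:ortho} shows
\[
\sup_{|\alpha-\alpha_0|_1\le Ka_n,\ |\tau-\tau_0|\le Ks_n}\Bigl|\bigl\{R_n(\alpha,\tau)-R_n(\alpha,\tau_0)\bigr\}-\bigl\{R_n(\alpha_0,\tau)-R_n(\alpha_0,\tau_0)\bigr\}\Bigr|=O_P\bigl(a_n s_n\log(np)\bigr),
\]
so the perturbation is proportional to the width $s_n$ of the $\tau$-window and is therefore dominated by the linear drift $c\,s_n$ at every scale, since $a_n\log(np)=o(1)$. The paper then applies Theorem 3.4.1 of van der Vaart and Wellner iteratively --- a bootstrap on the rate, starting from $s_n=1$ and $a_n=\kappa_n s\log n$ (available from Theorem \ref{th2.3}), re-localizing in successive passes until $r_n=\sqrt{n}$ is admissible --- which yields $\widehat{\tau}-\tau_0=O_P(n^{-1})$. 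You correctly flag the joint-in-$(\alpha,\tau)$ equicontinuity over a shrinking $\ell_1$-ball as the main obstacle, but you deploy it only at the endpoint scale $|h|\le C$ in Step 2, where it is essentially trivial; the real work is using the product-form bound at every intermediate scale to descend from the crude initial localization to $n^{-1}$. Once that is supplied, the remainder of your proposal goes through as written.
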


The first conclusion of Theorem \ref{thm:tau-2nd} establishes that the second step estimator of $\tau_0$ 
is an oracle estimator in the sense that it is asymptotically equivalent to the infeasible, oracle estimator $\widetilde{\tau}$. 
As emphasized in the introduction, 
the oracle property is obtained  without relying on the perfect model selection in the first step
nor on the existence of the minimum signal condition on active covariates.
The second conclusion of Theorem \ref{thm:tau-2nd} follows from combining 
 well-known weak convergence results in  the literature (see e.g. \cite{Pons2003, kosorok2007, Lee:Seo:08})
with the argmax continuous mapping theorem by \cite{Seijo:Sen:11b}.

\begin{remark}\label{simulation:tauhat}
\cite{Li:Ling:12} propose  a numerical approach for constructing a confidence interval by simulating a compound Poisson process
in the context of least squares estimation. We adopt their approach to simulate the compound Poisson process for quantile regression.
See Appendix \ref{sect:tau:ci-algorithm}
 for a detailed description of how to construct a confidence interval for $\tau_0$. 
\end{remark} 

We now consider the Step 3a estimator of $\alpha_0$ defined in \eqref{step3-alpha-est}. 
Recall that $\omega_n$ is the Step 3a penalization tuning parameter that satisfies \eqref{omega-rate}.

\begin{thm} [Improved Rates of Convergence When $\delta_0 \neq 0$]
\label{th2.3-improved} 
Suppose that $\kappa_n s^2\log p=o(1)$. 
Then 
under Assumptions  \ref{a:setting}-\ref{a:moment},  
\begin{equation*}
|\widehat{\alpha}-\alpha _{0}|_{1}=O_P(\omega _{n}s)
\ \ \text{ and } \ \ 
R(\widehat{\alpha},\widehat{\tau})=O_P(\omega _{n}^{2}s).
\end{equation*}
\end{thm}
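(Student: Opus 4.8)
The plan is to exploit the super-consistency of $\widehat\tau$ established in Theorem \ref{thm:tau-2nd} in order to reduce the Step~3a program \eqref{step3-alpha-est} to an essentially standard $\ell_1$-penalized quantile regression with a \emph{fixed} threshold that sits within $O_P(n^{-1})$ of $\tau_0$. Once the threshold is (effectively) pinned down, there is no need to control the score uniformly over $\tau\in\mathcal{T}$, and the smaller, more standard tuning rate $\omega_n$ of \eqref{omega-rate} replaces the inflated $\kappa_n$. Since $\kappa_n s^2\log p=o(1)$, Theorems \ref{th2.3} and \ref{thm:tau-2nd} apply, giving $|\breve\alpha-\alpha_0|_1=O_P(\kappa_n s)$ and $\widehat\tau-\tau_0=O_P(n^{-1})$. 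I would work on the event $\mathcal{E}_n=\{|\widehat\tau-\tau_0|\le C/n\}$, which has probability approaching one for $C$ large. On $\mathcal{E}_n$ the number of indices $i$ with $Q_i$ strictly between $\widehat\tau$ and $\tau_0$ is $O_P(1)$ by the bound $\mathbb{P}(\tau_1<Q\le\tau_2)\le K_2(\tau_2-\tau_1)$ of Assumption \ref{a:setting}, so $X_i(\widehat\tau)$ and $X_i(\tau_0)$ disagree for only $O_P(1)$ observations.

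First I would write the basic inequality from the optimality of $\widehat\alpha$:
\begin{equation*}
R_n(\widehat\alpha,\widehat\tau)-R_n(\alpha_0,\widehat\tau)\le \omega_n\sum_{j=1}^{2p}D_j(\widehat\tau)\bigl(|\alpha_{0j}|-|\widehat\alpha_j|\bigr),
\end{equation*}
and decompose the left side as $\bigl[\mathbb{E}\rho(Y,X(\widehat\tau)^T\widehat\alpha)-\mathbb{E}\rho(Y,X(\widehat\tau)^T\alpha_0)\bigr]$ plus the centered empirical-process increment $\nu_n(\widehat\alpha,\widehat\tau)-\nu_n(\alpha_0,\widehat\tau)$. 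The first bracket equals $R(\widehat\alpha,\widehat\tau)-R(\alpha_0,\widehat\tau)$ with $R(\alpha_0,\widehat\tau)=o_P(\omega_n^2 s)$ negligible thanks to the $n^{-1}$ rate of $\widehat\tau$. The key step is to bound the increment: replacing $X(\widehat\tau)$ by $X(\tau_0)$ costs only a remainder governed by the $O_P(1)$ discordant observations above, which is of smaller order than $\omega_n|\widehat\alpha-\alpha_0|_1$, while the pointwise term at $\tau_0$ is controlled by a subgradient bound on the quantile scores $n^{-1}\sum_i X_{ij}(\tau_0)(\gamma-1\{U_i\le 0\})$. Using the sub-exponential moment bound in Assumption \ref{a:setting} together with a union bound over the $2p$ coordinates, this is of order $\omega_n$. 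Crucially the control is \emph{pointwise} at $\tau_0$ rather than uniform over $\mathcal{T}$, which is exactly why the extra $(\log p)(\log n)$ factor carried by $\kappa_n$ can be dropped.

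Next I would establish a localized quadratic minorant for the population excess risk. Via Knight's identity, the conditional-density condition in Assumption \ref{a:moment}, and the already-available $\ell_1$-consistency (which localizes the argument on the relevant cone), one obtains $R(\widehat\alpha,\widehat\tau)\gtrsim \mathbb{E}\bigl[(X(\tau_0)^T(\widehat\alpha-\alpha_0))^2\bigr]$ up to a negligible threshold-mismatch term. Once the score is dominated by $\omega_n$, the basic inequality forces $\widehat\alpha-\alpha_0$ into the usual cone in which the $\ell_1$ mass off $J(\alpha_0)$ is controlled by that on $J(\alpha_0)$; the compatibility condition in Assumption \ref{ass2.7} then converts the prediction seminorm into the $\ell_1$ norm, yielding $R(\widehat\alpha,\widehat\tau)\gtrsim |\widehat\alpha-\alpha_0|_1^2/s$.

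Combining the two sides---$R(\widehat\alpha,\widehat\tau)\lesssim \omega_n|\widehat\alpha-\alpha_0|_1$ from the empirical-process bound and $R(\widehat\alpha,\widehat\tau)\gtrsim |\widehat\alpha-\alpha_0|_1^2/s$ from the curvature---gives $|\widehat\alpha-\alpha_0|_1=O_P(\omega_n s)$, and reinserting this into either bound delivers $R(\widehat\alpha,\widehat\tau)=O_P(\omega_n^2 s)$. I expect the main obstacle to be the threshold-mismatch: rigorously showing that the discrepancy between the design at the data-dependent $\widehat\tau$ and at $\tau_0$ is negligible in \emph{both} the empirical score increment and the curvature of the population risk. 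This is precisely where the $n^{-1}$ super-consistency of Theorem \ref{thm:tau-2nd}---far sharper than the scale $\omega_n\asymp\sqrt{\log(p\vee n)/n}$ at which $\widehat\alpha$ is measured---is indispensable, since it caps the number of discordant observations at $O_P(1)$ via Assumption \ref{a:setting} and thereby renders the mismatch of strictly smaller order.
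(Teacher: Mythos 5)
Your proposal is correct and follows essentially the same route as the paper: the basic inequality from the optimality of $\widehat\alpha$, domination of the centered empirical-process increment at the level $\omega_n$, the cone condition plus the compatibility condition of Assumption \ref{ass2.7}, the quadratic minorant $\left\Vert X(\widehat\tau)^T(\widehat\alpha-\alpha_0)\right\Vert_2^2\leq C_0R(\widehat\alpha,\widehat\tau)+C_0\bar{c}_0(\delta_0)|\widehat\tau-\tau_0|$, and absorption of all threshold-mismatch terms using $|\widehat\tau-\tau_0|=O_P(n^{-1})$ from Theorem \ref{thm:tau-2nd}. The paper additionally splits into the cases $\omega_n|\widehat{D}(\widehat\alpha-\alpha_0)_J|_1\leq|R(\alpha_0,\widehat\tau)|$ and its complement so that compatibility is invoked only when the cone condition is actually available, but your observation that $R(\alpha_0,\widehat\tau)=O_P(sn^{-1}\log n)=O_P(\omega_n^2s)$ disposes of the first case.

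Two points in your sketch are off the mark, though neither breaks the argument. First, the reason the $(\log p)(\log n)$ inflation in $\kappa_n$ can be dropped is not that the score is controlled pointwise at $\tau_0$ rather than uniformly over $\mathcal{T}$: the maximal inequality \eqref{emp1} of Lemma \ref{lem-emp} is already uniform in $\tau\in\mathcal{T}$ at the scale $c_{np}\asymp\sqrt{\log(np)/n}\asymp\omega_n$. The inflation in Step 1 comes from the peeling bound \eqref{emp2}, whose $\lceil\log_2(m_{2n}/m_{1n})\rceil$ factor is needed because $|\breve\alpha-\alpha_0|_1$ could a priori be of order $p$; Step 3a escapes it because the estimator can be localized (via the convexity device in the proof of Theorem \ref{l2.1-improved}), not because $\tau$ is effectively fixed. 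Second, a subgradient bound on $n^{-1}\sum_iX_{ij}(\tau_0)(\gamma-1\{U_i\le0\})$ controls only the linear part of $\nu_n(\widehat\alpha,\cdot)-\nu_n(\alpha_0,\cdot)$; since the check function is not differentiable, you still need the symmetrization-and-contraction argument of Lemma \ref{lem-emp} (or separate control of the degenerate remainder) to bound the full increment by a multiple of $\omega_n|\widehat\alpha-\alpha_0|_1$. Relatedly, your device of counting the $O_P(1)$ observations with $Q_i$ between $\widehat\tau$ and $\tau_0$ is workable but unnecessary: because the lemma is uniform in $\tau$, the paper confines the threshold mismatch entirely to the population quantities $R(\alpha_0,\widehat\tau)$ and the $|\widehat\tau-\tau_0|$ term in \eqref{dev-eq3}.
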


Theorem \ref{th2.3-improved}  shows that  the  estimator $\widehat\alpha$ defined in Step 3a achieves the optimal rate of convergence in terms of prediction and estimation. In other words, when $\omega_{n}$ is proportional 
to $\{ \log (p \vee n)/n \}^{1/2}$ in equation \eqref{omega-rate}  and $p$ is larger than $n$,  it obtains the  minimax rates as in  e.g., \cite{Rasku09}.

As we mentioned in Section \ref{sec:model-estimator}, the Step 3b estimator of $\alpha_0$ has the purpose of the variable selection. 
The nonzero components of  $\widetilde\alpha$
are expected to identify contributing regressors. Partition $\widetilde{\alpha }=(\widetilde{\alpha }_{J},\widetilde{%
\alpha }_{J^{c}})$ such that $\widetilde{\alpha }_{J}=(\widetilde{\alpha }%
_{j}:j\in J(\alpha_0))$ and $\widetilde{\alpha }_{J^{c}}= ( \widetilde{\alpha }%
_{j}:j\notin J(\alpha_0) )$. Note that $\widetilde\alpha_J$ consists of the estimators of $\beta_{0J}$ and $\delta_{0J}$, whereas  $\widetilde\alpha_{J^c}$ consists of the estimators of all the zero components of $\beta_0$ and $\delta_0.$
Let $\alpha_{0J}^{(j)}$ denote the $j$-th element of $\alpha_{0J}$.

We now establish conditions under which the estimator $\widetilde\alpha$   defined in Step 3b has the \textit{change-point-oracle properties}, meaning that it achieves the variable selection consistency and has the limiting distributions as though the identities of the important regressors and the location of the change point were known.   


\begin{thm}[Variable Selection When $\delta_0 \neq 0$]
\label{th3.1} 
Suppose that $\kappa_n s^2\log p=o(1)$,  $s^4\log s=o(n)$, 
 and 
 \begin{equation}\label{e3.2}
\omega_n+s\sqrt{\frac{\log s}{n}} \ll \mu _{n} \ll \min_{j\in J(\alpha_0)}|\alpha _{0J}^{(j)}|.
\end{equation}
Then 
under Assumptions  \ref{a:setting}-\ref{a:moment},  
 we have:
 (i)
\begin{equation*}
\left\vert \widetilde{\alpha }_{J}-\alpha _{0J}\right\vert _{2}=O_P \left( \sqrt{%
\frac{s\log s}{n}} \; \right),\quad \left\vert \widetilde{\alpha }_{J}-\alpha _{0J}\right\vert _{1}=O_P \left( s\sqrt{%
\frac{\log s}{n}} \; \right),
\end{equation*}%
(ii)
\begin{equation*}
P(\widetilde{\alpha }_{J^{c}}=0)\rightarrow 1,
\end{equation*}
and (iii)
$$
R(\widetilde\alpha,\widehat\tau) =O_P  \left( \; \mu_ns\sqrt{\frac{\log s}{n}} \; \right).
$$
\end{thm}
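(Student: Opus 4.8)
The plan is to exploit the fact that, with the change point held fixed at the second-step estimator $\widehat\tau$, the Step 3b criterion in \eqref{step3-alpha-var-sel} is \emph{convex} in $\alpha$, so the problem reduces to a weighted penalized quantile regression to which a primal--dual (oracle) argument applies. Two inputs from the earlier analysis do the heavy lifting. First, Theorem \ref{thm:tau-2nd} gives the super-consistency $\widehat\tau - \tau_0 = O_P(n^{-1})$; combined with Assumption \ref{a:setting}\ref{a:dist-Q:itm1}, the indicators $1\{Q_i>\widehat\tau\}$ and $1\{Q_i>\tau_0\}$ disagree on only $O_P(1)$ sample points, so in every score and empirical-process term below $X_i(\widehat\tau)$ may be replaced by $X_i(\tau_0)$ at a cost negligible relative to the target rates; I would make this substitution rigorous through a maximal inequality bounding the contribution of the mismatched indices. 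Second, Theorem \ref{th2.3-improved} supplies the rate of the Step 3a pilot $\widehat\alpha$, which drives the behaviour of the data-dependent weights $w_j$.

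The first genuine step is to show that the weights separate cleanly. Under the minimum signal condition in \eqref{e3.2}, for $j\in J(\alpha_0)$ we have $|\alpha_{0j}|\gg\mu_n$, while the coordinatewise error $|\widehat\alpha-\alpha_0|_\infty = o_P(\mu_n)$ (here one first upgrades the $\ell_1$ bound of Theorem \ref{th2.3-improved} to a sup-norm bound and uses \eqref{e3.2}); hence $|\widehat\alpha_j|>a\mu_n$ and $w_j=0$ w.p.a.1. Symmetrically, for $j\in J^c$ we have $\alpha_{0j}=0$, so $|\widehat\alpha_j|<\mu_n$ and $w_j=1$ w.p.a.1. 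Thus on an event of probability tending to one the penalty contributes nothing on the active block and acts with full strength $\mu_n D_j(\widehat\tau)$ on the inactive block. This is precisely the mechanism, noted in Section \ref{sec:comparison-step3}, by which the signal-adaptive weights remove the coupling that would otherwise force the irrepresentable condition of \cite{zhao2006model}.

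On that event I would run a primal--dual witness construction. Define the oracle estimator $\widetilde\alpha^{\mathrm{or}}$ by setting $\widetilde\alpha^{\mathrm{or}}_{J^c}=0$ and minimizing $R_n(\cdot,\widehat\tau)$ over the $s$-dimensional active block, where the effective penalty vanishes. Analyzing this block is a diverging-dimension quantile regression: using the convexity of the check loss together with a Knight-type decomposition to bypass non-differentiability, the compatibility conditions of Assumption \ref{ass2.7}, and the sub-exponential moment bound of Assumption \ref{a:setting}\ref{a:setting:itm1}, I would obtain a Bahadur representation and hence $|\widetilde\alpha_J-\alpha_{0J}|_2=O_P(\sqrt{s\log s/n})$ and $|\widetilde\alpha_J-\alpha_{0J}|_1=O_P(s\sqrt{\log s/n})$, which is part~(i); note that $\log s$ rather than $\log p$ appears because estimation is effectively confined to the $s$ active coordinates. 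To promote $\widetilde\alpha^{\mathrm{or}}$ to the global solution and obtain part~(ii), I would verify dual feasibility on $J^c$, i.e. that w.p.a.1
\[
\max_{j\in J^c}\Big| \tfrac{1}{n}\sum_{i=1}^n X_{ij}(\widehat\tau)\big(\gamma-1\{Y_i\le X_i(\widehat\tau)^T\widetilde\alpha^{\mathrm{or}}\}\big)\Big| < \mu_n D_j(\widehat\tau);
\]
the left side is $O_P(\sqrt{\log p/n})=O_P(\omega_n)$ by a Bernstein-type maximal inequality over $j\le 2p$, and $\omega_n\ll\mu_n$ by \eqref{e3.2}, so the penalty strictly dominates. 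Convexity then forces $\widetilde\alpha=\widetilde\alpha^{\mathrm{or}}$ w.p.a.1, giving $P(\widetilde\alpha_{J^c}=0)\to1$. Part~(iii) follows from the basic inequality for the Step 3b criterion: the excess risk is bounded by the penalty differential plus empirical-process fluctuations, which after inserting the $\ell_1$ rate of part~(i) yields $R(\widetilde\alpha,\widehat\tau)=O_P(\mu_n s\sqrt{\log s/n})$.

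The main obstacle is the oracle-block analysis. Because the quantile check function is non-smooth, the $\sqrt{s\log s/n}$ rate for the $s$-dimensional restricted regression cannot be read off a Taylor expansion; it must be extracted from convexity and a Knight-type expansion while uniformly controlling the Bahadur remainder over a growing $s$-dimensional ball, and it is precisely this uniform remainder control that forces the dimension restriction $s^4\log s=o(n)$. Layered on top is the need to carry the $\widehat\tau$-for-$\tau_0$ substitution through both the oracle block and the dual-feasibility maximal inequality over all $2p$ inactive coordinates; keeping the indicator-mismatch error below the $\log s$-scale rates, rather than the coarser $\log p$-scale bounds available in Steps 1--2, is the delicate point.
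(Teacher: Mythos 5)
Your proposal is correct and its skeleton coincides with the paper's: weight separation driven by the Step 3a pilot and the minimum-signal condition, a $\sqrt{s\log s/n}$ rate for the restricted $s$-dimensional problem, a convexity argument promoting the zero-padded restricted solution to the global minimizer because $\mu_n$ dominates the score on $J^c$, and the basic inequality for part (iii). Two execution choices differ. Where you run a primal--dual witness with an explicit dual-feasibility check on $J^c$, the paper's Lemma \ref{la.2} instead shows directly that zeroing the inactive block strictly decreases the penalized objective on a neighborhood of the restricted minimizer, splitting the loss difference into a population score term $\sum_{j\notin J}|m_j(\tau,h)||\alpha_j|=o_P(\mu_n)\sum_{j\notin J}|\alpha_j|$ (controlled via Assumption \ref{ass3.2} and $|\widehat\tau-\tau_0|=O_P(n^{-1})$) and an empirical fluctuation bounded by $2\omega_n\sum_{j\notin J}|\alpha_j|$; these are exactly the two quantities your KKT check must control, so the arguments are interchangeable. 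For part (i) you propose a Bahadur representation for the diverging-dimension quantile regression on the active block; the paper's Lemma \ref{la.1} avoids this and proves only the rate, via a sphere argument combining a third-order Taylor expansion of the smooth \emph{population} risk (using the minimum-eigenvalue condition in Assumption \ref{ass3.2}\ref{ass3.2:itm3} rather than the compatibility condition you cite, which governs the full $2p$-dimensional problem) with symmetrization and contraction for the empirical process. This is also where $s^4\log s=o(n)$ actually enters: it makes the cubic population remainder $s^{3/2}k_n^3$ negligible against the quadratic term $k_n^2$ with $k_n=\sqrt{s\log s/n}$, rather than arising from a Bahadur remainder; the Bahadur-type expansion is deferred to Theorem \ref{thm-AI} under additional conditions, so your route proves more than part (i) requires.
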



We see that  (\ref{e3.2})  provides a condition on the strength of the signal via $\min_{j\in J(\alpha_0)}|\alpha _{0J}^{(j)}|$, and the tuning parameter in Step 3b should satisfy $\omega_n \ll \mu_n$
and $s^2 \log s/n \ll \mu_n^2$.  
Hence the variable selection consistency   demands a larger tuning parameter than  in Step 3a. 


 To conduct statistical inference, we now discuss the asymptotic distribution of $\widetilde\alpha_J$. 
 Define $\widehat{\alpha}^{\ast}_J\equiv\operatorname*{argmin}_{\alpha_J
}R_{n}^{\ast}\left(  \alpha_J,\tau_{0}\right)$.  
Note that the asymptotic distribution for $\widehat{\alpha}^{\ast}_J$ corresponds to an oracle case that
we know $\tau _{0}$  as well as  the true active set $J(\alpha_0)$ \textit{a priori}. 
The limiting distribution of $\widetilde\alpha_J$ is  the same as that of 
$\widehat{\alpha}^{\ast}_J$. 
Hence,  we call this result the \emph{change-point-oracle property} of the Step 3b estimator and the following theorem establishes this property. 

\begin{thm}[Change-Point-Oracle Properties]\label{thm-AI}
Suppose that all the conditions imposed in Theorem \ref{th3.1} are satisfied. Furthermore, 
assume that $\frac{\partial}{\partial\alpha}E\left[
\rho\left(  Y,X^{T}\alpha\right)  |Q=t\right]  $ exists for all $t$ in a
neighborhood of $\tau_{0}$ and all its elements are continuous and bounded, and that 
$s^{3} (\log s) (\log n) =o\left(  n \right)$. 
Then, we have that  
$
\widetilde{\alpha}_J = \widehat{\alpha}^{\ast}_J + o_P( n^{-1/2}).
$
\end{thm}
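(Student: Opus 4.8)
The plan is to reduce the statement to a comparison of two \emph{low-dimensional} quantile regression estimators, and then to show that replacing the estimated threshold $\widehat{\tau}$ by the true $\tau_{0}$ is asymptotically negligible at the $n^{-1/2}$ scale. First I would argue that, with probability approaching one, $\widetilde{\alpha}_{J}$ is exactly the \emph{unpenalized}, $s$-dimensional quantile regression estimator evaluated at $\widehat{\tau}$, i.e. $\widetilde{\alpha}_{J}=\argmin_{\alpha_{J}}\frac{1}{n}\sum_{i=1}^{n}\rho(Y_{i},X_{iJ}(\widehat{\tau})^{T}\alpha_{J})$. Two facts give this. By Theorem \ref{th3.1}(ii), $\mathbb{P}(\widetilde{\alpha}_{J^{c}}=0)\to 1$, so the inactive coordinates drop out. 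On the active set, the signal-adaptive weights vanish, $w_{j}=0$ for all $j\in J(\alpha_{0})$ w.p.a.1: the Step 3a estimator satisfies $|\widehat{\alpha}-\alpha_{0}|_{1}=O_{P}(\omega_{n}s)$ (Theorem \ref{th2.3-improved}), so, once the minimum-signal condition \eqref{e3.2} is invoked, each $|\widehat{\alpha}_{j}|$ with $j\in J(\alpha_{0})$ exceeds $a\mu_{n}$, which forces $w_{j}=0$ and removes the penalty from the active coordinates. Hence the Step 3b objective, restricted to this good event, is the plain quantile loss on $X_{J}(\widehat{\tau})$.

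With this reduction, the target becomes $\widetilde{\alpha}_{J}-\widehat{\alpha}^{\ast}_{J}=o_{P}(n^{-1/2})$, where both objects are ordinary (growing-dimension) quantile regression estimators differing only through the threshold: $\widetilde{\alpha}_{J}$ uses $\widehat{\tau}$ and $\widehat{\alpha}^{\ast}_{J}$ uses $\tau_{0}$. I would establish a Bahadur representation for each. Under the smoothness and boundedness of $\frac{\partial}{\partial\alpha}E[\rho(Y,X^{T}\alpha)|Q=t]$ assumed in the theorem and the growth restriction $s^{3}(\log s)(\log n)=o(n)$, the standard growing-$s$ expansion gives $\sqrt{n}(\widehat{\alpha}^{\ast}_{J}-\alpha_{0J})=V^{-1}S(\tau_{0})+o_{P}(1)$, with $S(\tau)\equiv\frac{1}{\sqrt{n}}\sum_{i}X_{iJ}(\tau)(\gamma-1\{Y_{i}\le X_{iJ}(\tau)^{T}\alpha_{0J}(\tau)\})$, $V$ the corresponding Hessian, and $\alpha_{0J}(\tau)$ the population minimizer given $\tau$; the analogous expansion holds for $\widetilde{\alpha}_{J}$ with $\widehat{\tau}$, $V(\widehat{\tau})$, $S(\widehat{\tau})$, and $\alpha_{0J}(\widehat{\tau})$ in place of $\tau_{0}$, $V$, $S(\tau_{0})$, $\alpha_{0J}$.

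The crux---and the step I expect to be the main obstacle---is to show that switching $\widehat{\tau}$ for $\tau_{0}$ perturbs the score and Hessian only by $o_{P}(1)$. The increments $X_{iJ}(\widehat{\tau})-X_{iJ}(\tau_{0})$ live on the $\delta$-coordinates and are nonzero only for observations with $Q_{i}$ strictly between $\tau_{0}$ and $\widehat{\tau}$; the same indices are the only ones whose residual changes sign across the two thresholds. By the super-consistency $n(\widehat{\tau}-\tau_{0})=O_{P}(1)$ from Theorem \ref{thm:tau-2nd}, together with the bounded-intensity condition $\mathbb{P}(\tau_{1}<Q\le\tau_{2})\le K_{2}(\tau_{2}-\tau_{1})$ (Assumption \ref{a:setting}\ref{a:dist-Q:itm1}), the number of such observations is $O_{P}(1)$, and each contributes an $O_{P}(1)$ term under the moment conditions on $X$. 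Consequently $S(\widehat{\tau})-S(\tau_{0})=O_{P}(n^{-1/2})=o_{P}(1)$ and likewise $V(\widehat{\tau})-V(\tau_{0})=o_{P}(1)$, while the shift $\alpha_{0J}(\widehat{\tau})-\alpha_{0J}=O_{P}(|\widehat{\tau}-\tau_{0}|)=O_{P}(n^{-1})$ is $o_{P}(n^{-1/2})$. The delicate point is that $\widehat{\tau}$ is random and enters through the non-smooth indicator $1\{Q_{i}>\tau\}$, so these comparisons must be made uniformly over $\tau$ in a shrinking $O_{P}(n^{-1})$-neighborhood of $\tau_{0}$; I would handle this with a stochastic-equicontinuity argument for the quantile score process indexed by $\tau$, precisely the kind of uniform control already developed for the earlier theorems. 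Combining the two expansions then yields $\sqrt{n}(\widetilde{\alpha}_{J}-\widehat{\alpha}^{\ast}_{J})=o_{P}(1)$, which is the claim.
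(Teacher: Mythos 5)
Your first reduction coincides with the paper's: Theorem \ref{th3.1}(ii) kills the inactive coordinates, and the fact that $w_j=0$ for all $j\in J(\alpha_0)$ w.p.a.1 (which the paper also uses, inside the proof of Lemma \ref{la.1}) removes the penalty on the active ones, so $\widetilde{\alpha}_J$ is w.p.a.1 the unpenalized $s$-dimensional estimator $\argmin_{\alpha_J}R_n^{\ast}(\alpha_J,\widehat{\tau})$. From there, however, you and the paper diverge. You propose to derive a Bahadur representation for each of the two low-dimensional estimators and show that the score, the Hessian, and the pseudo-true parameter move by $o_P(1)$, $o_P(1)$, and $o_P(n^{-1/2})$ respectively when $\tau_0$ is replaced by $\widehat{\tau}$. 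The paper instead never expands the estimators at all: its Lemma \ref{lem-AI} shows that the \emph{criterion increment} $\{R_n^{\ast}(\alpha_J,\tau)-R_n^{\ast}(\alpha_J,\tau_0)\}-\{R_n^{\ast}(\alpha_{0J},\tau)-R_n^{\ast}(\alpha_{0J},\tau_0)\}$ is $o_P(n^{-1})$ uniformly over $|\alpha_J-\alpha_{0J}|_2\leq K\sqrt{n^{-1}s\log s}$ and $|\tau-\tau_0|\leq Ks_n$, i.e.\ the two objective functions differ by an $\alpha_J$-free shift up to $o_P(n^{-1})$, which together with the local quadratic curvature forces the two minimizers to agree up to $o_P(n^{-1/2})$. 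That lemma is proved by a bracketing-entropy maximal inequality (Theorem 2.14.2 of van der Vaart and Wellner) exploiting that the envelope of the increment class has $L_2$-norm of order $\sqrt{s}\,r_n\sqrt{s_n}$; this is exactly where the condition $s^3(\log s)(\log n)=o(n)$ is consumed. What the paper's route buys is that it needs no Bahadur representation for quantile regression with diverging dimension and no uniform-in-$\tau$ control of the score process itself; your route would have to import (or re-prove) a He--Shao-type expansion valid uniformly over a random $O_P(n^{-1})$-neighborhood of $\tau_0$, which is precisely the step you flag as the "main obstacle" but do not carry out.

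Two places in your sketch need more care if you pursue it. First, "each contributes an $O_P(1)$ term" is dimension-blind: each of the $O_P(1)$ straddling observations contributes a vector of $\ell_2$-norm of order $\sqrt{s}$ to the unnormalized score, so the bound is $O_P(\sqrt{s}/\sqrt{n})$, which is still harmless under $s^4\log s=o(n)$ but should be tracked. Second, your claim that only the straddling indices have residuals that change sign is not quite right once you recentre at $\alpha_{0J}(\widehat{\tau})\neq\alpha_{0J}$: the pseudo-true parameter shift of order $O(s/n)$ perturbs \emph{every} residual, and the resulting indicator flips must be shown to contribute $o_P(1)$ to the score (a standard but nontrivial second-order computation). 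Neither issue is fatal, but both are exactly the kind of bookkeeping the paper's criterion-level argument is designed to avoid.
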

 
Since the sparsity index ($s$) grows at a rate slower than  the sample size ($n$),  it is straightforward to establish  the  asymptotic normality of a linear transformation of
$\widetilde\alpha_J,$ i.e., $\mathbf{L} \widetilde\alpha_J,$ where $\mathbf{L}:\mathbb{R}^{s}\rightarrow\mathbb{R}$ with $|\mathbf{L}|_2=1$, by combing 
 the existing results on quantile regression with parameters of increasing dimension (see, e.g. \cite{He:Shao:00})
 with Theorem \ref{thm-AI}.
 
\begin{remark}
Without   the condition on the strength of minimal signals, it may not be possible to achieve the variable selection consistency or establish change-point-oracle properties. However,  the following theorem shows that  the SCAD-weighted  penalized estimation  can still achieve a satisfactory  rate of convergence in estimation of $\alpha_0$ without the condition that 
$\mu _{n} \ll \min_{j\in J(\alpha_0)}|\alpha _{0J}^{(j)}|$.
Yet, the rates of convergence are slower than those of Theorem \ref{th3.1}. 
 \end{remark}

\begin{thm}[Satisfactory Rates Without Minimum Signal Condition]\label{th3.1-without}
Assume that Assumptions  \ref{a:setting}-\ref{a:moment} hold.
Suppose that $\kappa_n s^2\log p=o(1)$ and $\omega_n \ll \mu _{n}$.
Then,  without the lower bound requirement on $\min_{j\in J(\alpha_0)}|\alpha _{0J}^{(j)}|$, we have 
that 
$
|\widetilde\alpha-\alpha_0|_1 =O_P \left( \mu_n s \right).
$ In addition,
$R(\widetilde\alpha,\widehat\tau)=O_P(\mu_n^2s).$
\end{thm}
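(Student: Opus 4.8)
The plan is to reproduce the fast-rate Lasso argument that underlies Theorem \ref{th2.3-improved}, but with the effective penalty level $\mu_n$ in place of $\omega_n$ and with one extra step that tames the data-dependent SCAD weights $w_j$ without ever invoking a minimum signal condition. Throughout I treat $\widehat\tau$ as fixed at its realized value. Since Theorem \ref{thm:tau-2nd} gives $\widehat\tau-\tau_0=O_P(n^{-1})$, only $O_P(1)$ observations have $Q_i$ strictly between $\widehat\tau$ and $\tau_0$, so by Assumption \ref{a:setting}\ref{a:dist-Q:itm1} every population quantity evaluated at $\widehat\tau$ differs from the one at $\tau_0$ by $O_P(n^{-1})$; in particular $R(\alpha_0,\widehat\tau)=O_P(n^{-1})$, which is negligible relative to the target rate $\mu_n^2 s$.

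First I would write the basic inequality arising from the optimality of $\widetilde\alpha$ in \eqref{step3-alpha-var-sel},
\[
R_n(\widetilde\alpha,\widehat\tau)-R_n(\alpha_0,\widehat\tau)\le \mu_n\sum_{j=1}^{2p} w_j D_j(\widehat\tau)\bigl(|\alpha_{0j}|-|\widetilde\alpha_j|\bigr),
\]
and decompose the left-hand side as $R(\widetilde\alpha,\widehat\tau)-R(\alpha_0,\widehat\tau)+\bigl(\nu_n(\widetilde\alpha,\widehat\tau)-\nu_n(\alpha_0,\widehat\tau)\bigr)$, where $\nu_n$ is the centred empirical process defined after Theorem \ref{l2.1}. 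Using the stochastic equicontinuity bound already established for $\nu_n$ (the same device used in the proof of Theorem \ref{th2.3-improved}), the fluctuation term is controlled by $C\omega_n|\widetilde\alpha-\alpha_0|_1$ up to a remainder absorbable into a small fraction of $R(\widetilde\alpha,\widehat\tau)$. Because $\omega_n\ll\mu_n$ this noise contribution is dominated by the penalty, which is precisely the role of the hypothesis $\omega_n\ll\mu_n$.

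The key new step handles the weights. Setting $\widehat S\equiv\{j:|\widehat\alpha_j|>\mu_n\}$, the definition of $w_j$ forces $w_j=1$ for every $j\notin\widehat S$, so inactive coordinates outside $\widehat S$ receive the full penalty. I would then bound the effective sparsity: for $j\in J(\alpha_0)^c$ one has $|\widehat\alpha_j|=|\widehat\alpha_j-\alpha_{0j}|$, whence $|\widehat S\cap J(\alpha_0)^c|\le |\widehat\alpha-\alpha_0|_2^2/\mu_n^2=O_P(\omega_n^2 s/\mu_n^2)=o_P(s)$ by the $\ell_2$ rate implied by Theorem \ref{th2.3-improved} together with $\omega_n\ll\mu_n$. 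Since $|\widehat S\cap J(\alpha_0)|\le s$, the enlarged set $\widetilde J\equiv J(\alpha_0)\cup\widehat S$ has cardinality $O_P(s)$, with $w_j=1$ for all $j\notin\widetilde J$ and $0\le w_j\le 1$ on $\widetilde J$. Crucially, no lower bound on $\min_{j\in J(\alpha_0)}|\alpha_{0J}^{(j)}|$ enters here: a weakly identified active signal simply lands inside $\widetilde J$, where the weight is merely bounded by one.

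With $\widetilde J$ playing the role of the true support, the remainder is the standard fast-rate Lasso argument. Splitting the penalty across $\widetilde J$ and $\widetilde J^c$, using $w_j=1$ on $\widetilde J^c$, $w_jD_j(\widehat\tau)\le\overline D$ on $\widetilde J$ by Assumption \ref{a:setting}\ref{a:setting:itm3}, and absorbing the $\omega_n|\widetilde\alpha-\alpha_0|_1$ fluctuation (possible since $\omega_n\ll\mu_n$), I would derive the cone inequality $|\Delta_{\widetilde J^c}|_1\le C|\Delta_{\widetilde J}|_1$ w.p.a.1, where $\Delta\equiv\widetilde\alpha-\alpha_0$. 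Applying the compatibility condition (Assumption \ref{ass2.7}) over index sets of size $O_P(s)$, together with the local quadratic lower bound on the excess risk coming from the conditional density of $U$ as in Theorem \ref{th2.3-improved}, yields $R(\widetilde\alpha,\widehat\tau)\ge c\,|\Delta_{\widetilde J}|_1^2/|\widetilde J|$; combined with the upper bound $R(\widetilde\alpha,\widehat\tau)\lesssim\mu_n|\Delta_{\widetilde J}|_1$ from the basic inequality this gives $|\Delta_{\widetilde J}|_1=O_P(\mu_n s)$, hence $|\widetilde\alpha-\alpha_0|_1=O_P(\mu_n s)$ and $R(\widetilde\alpha,\widehat\tau)=O_P(\mu_n^2 s)$. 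The main obstacle is exactly the control of the SCAD weights absent a minimum signal condition, i.e.\ showing that the effective support $\widetilde J$ stays of order $s$ and that $w_j=1$ off it; once this effective-sparsity bound is in hand, the compatibility and empirical-process machinery of Theorem \ref{th2.3-improved} transfers verbatim with $\mu_n$ replacing $\omega_n$.
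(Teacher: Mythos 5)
Your overall architecture --- the basic inequality from the optimality of $\widetilde\alpha$, domination of the empirical-process fluctuation by the penalty via $\omega_n\ll\mu_n$, and a prediction-norm/compatibility step to close the loop --- matches the paper's, and treating $R(\alpha_0,\widehat\tau)=O_P(n^{-1})$ as negligible against $\mu_n^2 s$ is acceptable (the paper handles it through an explicit two-case split, but the conclusion is the same). However, the step you yourself flag as the key one --- taming the SCAD weights via the enlarged support $\widetilde J=J(\alpha_0)\cup\widehat S$ --- contains two genuine gaps. First, you invoke ``the $\ell_2$ rate implied by Theorem \ref{th2.3-improved}'' to bound $|\widehat S\cap J(\alpha_0)^c|\le|\widehat\alpha-\alpha_0|_2^2/\mu_n^2$, but Theorem \ref{th2.3-improved} only delivers an $\ell_1$ rate and an excess-risk rate; Assumption \ref{ass2.7} controls $|\alpha_J|_1^2$ by the prediction norm and yields no $\ell_2$ bound on the coefficient error. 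Second, and more seriously, you propose to ``apply the compatibility condition (Assumption \ref{ass2.7}) over index sets of size $O_P(s)$.'' That assumption is stated only for the fixed set $J(\alpha_0)$; it does not cover the random enlarged set $\widetilde J$, and a cone inequality relative to $\widetilde J$ cannot be reduced to one relative to $J(\alpha_0)$, because $|\Delta_{J^c}|_1$ contains the piece $|\Delta_{\widehat S\setminus J}|_1$ which sits on the favorable side of your cone inequality and is not controlled by $|\Delta_J|_1$. Making your route rigorous would require a compatibility/restricted-eigenvalue condition uniform over all supersets of $J(\alpha_0)$ of cardinality $O(s)$, a strictly stronger hypothesis than the paper imposes.

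The paper's own proof sidesteps the enlarged-support construction entirely: it observes that $\max_{j\notin J(\alpha_0)}|\widehat\alpha_j|\le|\widehat\alpha-\alpha_0|_1=O_P(\omega_n s)<\mu_n$ w.p.a.1, so $w_j=1$ for \emph{every} $j\notin J(\alpha_0)$ and the Step 3b penalty on $J(\alpha_0)^c$ is the full unweighted Lasso penalty; the argument then runs with the original set $J(\alpha_0)$, a case split according to whether $\tfrac32\mu_n|\widehat D\Delta_J|_1$ exceeds $R(\alpha_0,\widehat\tau)$, and the bound $\|X(\tau_0)^T\Delta\|_2^2\le C\,R(\widetilde\alpha,\widehat\tau)+C|\widehat\tau-\tau_0|$, which yields $R(\widetilde\alpha,\widehat\tau)\le C(\mu_n^2 s+|\widehat\tau-\tau_0|)$ and $|\Delta|_1^2\le Cs\,\|X(\tau_0)^T\Delta\|_2^2$. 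If you replace your $\widehat S$ construction with this sup-norm observation, the remainder of your outline essentially coincides with the paper's proof.
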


\section{Asymptotic Properties: Case II. $\delta_0 = 0$}\label{sec:Lasso-theory-case2}
 
 In this section, we show that our estimators have desirable results even if there is no change point in the true model. 
The case of $\delta_0=0$ corresponds to the high-dimensional linear quantile regression  model.  
 Since   
$
 X^T\beta_0+X^T\delta_01\{Q>\tau_0\}=X^T\beta_0,
$
  $\tau_0$ is non-identifiable, and  there is no structural change on the coefficient.  But a new   analysis  different  from that of the  standard  high-dimensional model is still required  because in practice we do not know whether $\delta_0=0$ or not.
  Thus,  the proposed estimation method still estimates $\tau_0$ to account for possible structural changes.  
The following results show that in this case, the first step  estimator of $\alpha_0$ will asymptotically behave as if $\delta_0=0$  were \textit{a priori} known.

 \begin{thm}[Rates of Convergence When $\delta_0 = 0$]\label{th2.3-delta0}
Suppose that $\kappa_n s=o(1)$. 
Then 
under Assumptions  \ref{a:setting}-\ref{ass-rn},  
we have that 
\begin{equation*} 
|\breve{\alpha}-\alpha _{0}|_{1}=O_P(\kappa _{n}s)
\ \ \text{ and } \ \
R(\breve{\alpha},\breve{\tau})=O_P(\kappa _{n}^{2}s).
\end{equation*}%
\end{thm}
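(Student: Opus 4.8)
The plan is to run the standard penalized $M$-estimation (Lasso) argument, exploiting the fact that when $\delta_0=0$ the regression function at the truth is free of $\tau$: since $\delta_0=0$ we have $X(\tau)^T\alpha_0=X^T\beta_0=X(\tau_0)^T\alpha_0$ for every $\tau$, so both $R_n(\alpha_0,\tau)$ and $\mathbb{E}\rho(Y,X(\tau)^T\alpha_0)$ are constant in $\tau$, and in particular $R(\alpha_0,\tau)=0$ for all $\tau\in\mathcal{T}$. First I would write down the basic inequality: since $(\breve\alpha,\breve\tau)$ minimizes the penalized criterion, comparing it against the feasible point $(\alpha_0,\breve\tau)$ (same $\tau$) gives
\begin{equation*}
R_n(\breve\alpha,\breve\tau)-R_n(\alpha_0,\breve\tau)+\kappa_n\sum_{j=1}^{2p}D_j(\breve\tau)|\breve\alpha_j|\le \kappa_n\sum_{j=1}^{2p}D_j(\breve\tau)|\alpha_{0j}|.
\end{equation*}
Writing $\Delta\equiv\breve\alpha-\alpha_0$ and $J\equiv J(\alpha_0)$, splitting the penalty over $J$ and $J^c$ (using $\alpha_{0,J^c}=0$ and the triangle inequality) reduces this to the familiar form in which the empirical excess risk plus $\kappa_n\sum_{j\notin J}D_j(\breve\tau)|\Delta_j|$ is bounded by a stochastic noise term plus $\kappa_n\sum_{j\in J}D_j(\breve\tau)|\Delta_j|$.

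Next I would control the noise by decomposing the empirical excess risk with Knight's identity. Since $Y_i-X_i(\tau)^T\alpha_0=U_i$ for every $\tau$, we have $R_n(\breve\alpha,\breve\tau)-R_n(\alpha_0,\breve\tau)=n^{-1}\sum_i[\dot\rho(U_i-X_i(\breve\tau)^T\Delta)-\dot\rho(U_i)]$, which Knight's identity splits into a linear score term $-n^{-1}\sum_i(X_i(\breve\tau)^T\Delta)(\gamma-1\{U_i\le 0\})$ plus a nonnegative remainder. The conditional moment restriction $\mathbb{P}(U\le 0\mid X,Q)=\gamma$ makes the score mean-zero, and the role of the tuning rate \eqref{kappa_n_rate} is precisely to guarantee, uniformly over $\tau\in\mathcal{T}$, that the maximal normalized score $\max_j|n^{-1}\sum_i X_{ij}(\tau)(\gamma-1\{U_i\le 0\})/D_j(\tau)|$ is at most $\kappa_n/2$ with probability approaching one; the extra $(\log p)(\log n)$ factor buys this uniform control over the non-smooth, non-convex dependence on $\tau$ through the indicators $1\{Q_i>\tau\}$. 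This converts the basic inequality into the cone condition $\sum_{j\notin J}D_j(\breve\tau)|\Delta_j|\le 3\sum_{j\in J}D_j(\breve\tau)|\Delta_j|$, placing $\Delta$ in the usual restricted cone.

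The third ingredient is a quadratic minorant for the population excess risk combined with a compatibility condition. Taking conditional expectations in the Knight expansion and using a lower bound on the conditional density of $U$ near zero yields $R(\breve\alpha,\breve\tau)\gtrsim\mathbb{E}[(X(\breve\tau)^T\Delta)^2]$ on the relevant bounded range; the crude bound $R(\breve\alpha,\breve\tau)=O_P(\kappa_n s)=o_P(1)$ from Theorem \ref{l2.1} localizes $\Delta$ so that this minorant applies. Invoking the compatibility condition (Assumption \ref{ass2.7}) uniformly over $\tau$, which is legitimate because $\mathbb{P}(Q>\overline{\tau})>0$ and $\mathbb{P}(Q<\underline{\tau})>0$ keep both blocks of $X(\tau)$ non-degenerate, gives $\big(\sum_{j\in J}D_j(\breve\tau)|\Delta_j|\big)^2\lesssim s\,\mathbb{E}[(X(\breve\tau)^T\Delta)^2]\lesssim s\,R(\breve\alpha,\breve\tau)$. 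Feeding this back into the cone-restricted basic inequality and solving the resulting quadratic inequality in the standard way delivers $R(\breve\alpha,\breve\tau)=O_P(\kappa_n^2 s)$ and $|\breve\alpha-\alpha_0|_1=O_P(\kappa_n s)$, the two claimed rates. The weaker requirement $\kappa_n s=o(1)$ (rather than $\kappa_n s^2\log p=o(1)$ as in Theorem \ref{th2.3}) suffices here precisely because we never need to identify or bound $\tau$.

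The main obstacle is the uniform-in-$\tau$ empirical-process control underlying both the $\kappa_n/2$ score bound and the concentration of the Knight remainder around its quadratic mean. Unlike the plain linear-quantile case, the summands depend on $\tau$ through $1\{Q_i>\tau\}$, so the relevant function class is indexed jointly by $(\alpha,\tau)$ and the criterion is non-convex in $\tau$; establishing these two facts requires a chaining/peeling argument over this class (the source of the additional logarithmic factors in $\kappa_n$) rather than a fixed-design concentration bound. I would borrow the stochastic-equicontinuity estimates already developed for Theorems \ref{l2.1} and \ref{th2.3}, so that this theorem follows by specializing that machinery to the degenerate configuration $\delta_0=0$, where the vanishing of $R(\alpha_0,\tau)$ removes exactly the $\tau$-bias terms that complicate Case I.
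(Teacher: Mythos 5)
Your proposal is correct and follows essentially the same route as the paper's proof: the basic inequality against $(\alpha_0,\breve\tau)$ exploiting $R(\alpha_0,\tau)=0$ and $R_n(\alpha_0,\breve\tau)=R_n(\alpha_0,\tau_0)$ when $\delta_0=0$, the uniform-in-$\tau$ empirical-process bound (the paper's Lemma \ref{lem-emp}) yielding the cone condition, localization via the risk consistency of Theorem \ref{l2.1} so that the restricted-nonlinearity minorant $\left\Vert X(\breve\tau)^T(\breve\alpha-\alpha_0)\right\Vert_2^2\leq(\eta^\ast)^{-1}R(\breve\alpha,\breve\tau)$ applies, and then the uniform compatibility condition of Assumption \ref{ass2.7}(ii) to close the quadratic inequality. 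The only cosmetic difference is that you phrase the noise control through Knight's identity and a mean-zero score, whereas the paper bounds the increment $\nu_n(\alpha,\tau)-\nu_n(\alpha_0,\tau)$ directly by symmetrization and contraction using the Lipschitz property of $\rho$; the resulting $\tfrac{1}{2}\kappa_n|\breve D(\breve\alpha-\alpha_0)|_1$ bound and all subsequent steps are the same.
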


The results obtained in Theorem \ref{th2.3-delta0} combined with those obtained in Theorem \ref{th2.3} imply that 
the  first step  estimatior performs equally well in terms of rates of convergence for both the $\ell_1$ loss for $\breve\alpha$ and the excess risk regardless of the existence of the threshold effect. 
 It is straightforward to obtain an improved rate result for the Step 3a estimator, equivalent to 
 Theorem \ref{th2.3-improved} under  Assumptions  \ref{a:setting}-\ref{ass-rn}. We omit the details for brevity.

We now give a result that is  similar  to Theorem \ref{th3.1} and Theorem \ref{th3.1-without}.

\begin{thm}[Variable Selection When $\delta_0 = 0$]
\label{th4.2}  
Suppose that $\kappa_n s=o(1)$, $s^{4} \log s =o(n)$,    $\omega_n+s\sqrt{\frac{\log s}{n}} 
\ll \mu _{n}$, and Assumptions  \ref{a:setting}-\ref{ass-rn} hold. We have: \\
(i) If  the minimum signal condition holds: 
\begin{equation}\label{e6.1}
\mu _{n}=o\left( \min_{j\in J(\alpha_0)}|\alpha _{0J}^{(j)}|\right),
\end{equation}%
then 
\begin{equation*}
\left\vert \widetilde{\beta }_{J}-\beta _{0J}\right\vert _{2}=O_P \left( \sqrt{%
\frac{s\log s}{n}} \; \right),\quad \left\vert \widetilde{\beta }_{J}-\beta _{0J}\right\vert _{1}=O_P \left( s\sqrt{%
\frac{\log s}{n}} \; \right),
\end{equation*}%
\begin{equation*}
P(\widetilde{\beta }_{J^{c}}=0)\rightarrow 1,\quad P(\widetilde \delta=0)\rightarrow1,
\ \ \text{ and } \ \  R (\widetilde\alpha,\widehat\tau)=O_P \left( \; \mu_ns\sqrt{\frac{\log s}{n}} \; \right).
\end{equation*}
(ii) Without the minimum signal condition (\ref{e6.1}), we have:
  $$
 R(\widetilde\alpha,\widehat\tau)=O_P( \mu_n^2s),\quad\left|\widetilde\alpha-\alpha_0\right|_1=O_P(s\mu_n).
 $$  
\end{thm}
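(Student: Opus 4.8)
The plan is to obtain Theorem \ref{th4.2} as the $\delta_0 = 0$ counterpart of Theorems \ref{th3.1} and \ref{th3.1-without}, so most of the machinery can be reused and the only genuinely new feature to handle is that $\tau_0$ is unidentified and $\widehat\tau$ may settle at an arbitrary point of $\mathcal{T}$. The organizing observation is that when $\delta_0 = 0$ the prediction $X(\tau)^T\alpha_0 = X^T\beta_0$ does not depend on $\tau$, so $R(\alpha_0,\tau)=0$ for every $\tau$ and $\alpha_0 = (\beta_0^T,0^T)^T$ remains the population minimizer; consequently $J(\alpha_0)=J(\beta_0)$ and the \emph{entire} $\delta$-block belongs to the inactive set. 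Because $\widehat\tau$ is data-dependent, I would run every concentration and empirical-process bound uniformly over $\tau \in \mathcal{T}$ and then specialize to the realized $\widehat\tau$.

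First I would record the Step-3a rate $|\widehat\alpha - \alpha_0|_1 = O_P(\omega_n s)$, the $\delta_0=0$ analog of Theorem \ref{th2.3-improved} noted right after Theorem \ref{th2.3-delta0}, valid under Assumptions \ref{a:setting}-\ref{ass-rn}. From this I would analyze the signal-adaptive weights exactly as in the weight step behind Theorem \ref{th3.1}: using the coordinate-wise concentration of $\widehat\alpha$ together with the separation $\omega_n \ll \mu_n \ll \min_{j\in J(\alpha_0)}|\alpha_{0J}^{(j)}|$ furnished by the minimum signal condition \eqref{e6.1}, one gets w.p.a.1 that $w_j = 0$ for $j\in J(\alpha_0)$ and $w_j = 1$ for $j\notin J(\alpha_0)$, the latter covering the whole $\delta$-block. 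With the weights pinned down, the Step-3b program acts as an oracle problem: unpenalized on $J(\alpha_0)$ and penalized at level $\mu_n D_j(\widehat\tau)$ off it.

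Part (i) then follows from a KKT/subgradient analysis of the Step-3b objective evaluated at $\widehat\tau$. To get $P(\widetilde\beta_{J^c}=0)\to1$ and $P(\widetilde\delta=0)\to1$ I would verify that the oracle-restricted candidate, which sets all inactive coordinates to zero, satisfies the subgradient conditions strictly: each inactive score $n^{-1}\sum_i X_{ij}(\widehat\tau)(\gamma-1\{U_i\le 0\})/D_j(\widehat\tau)$ is, uniformly over $\tau\in\mathcal{T}$, of order $\sqrt{\log(p\vee n)/n} \ll \mu_n$, so the penalty dominates and zero is optimal for every inactive coordinate, including all of $\delta$. Once the $\delta$-block is forced to zero, the design collapses to $X_{J(\beta_0)}$, whose restricted eigenvalue is bounded below by the compatibility conditions in Assumptions \ref{a:setting}-\ref{ass-rn}, and a standard increasing-dimension quantile-regression expansion (as in \cite{He:Shao:00}) on this $s$-dimensional unpenalized subproblem yields $|\widetilde\beta_J-\beta_{0J}|_2 = O_P(\sqrt{s\log s/n})$ and $|\widetilde\beta_J-\beta_{0J}|_1 = O_P(s\sqrt{\log s/n})$, with the excess-risk bound $R(\widetilde\alpha,\widehat\tau)=O_P(\mu_n s\sqrt{\log s/n})$ obtained by combining the active-set rate with the penalty magnitude, mirroring Theorem \ref{th3.1}(iii). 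Part (ii) is lighter: without \eqref{e6.1} I would only use $0\le w_j\le 1$, so $\mu_n\sum_j w_j D_j(\widehat\tau)|\alpha_j| \le \mu_n\sum_j D_j(\widehat\tau)|\alpha_j|$ is dominated by a plain weighted $\ell_1$ penalty at level $\mu_n$, and the compatibility/basic-inequality argument of Theorem \ref{th3.1-without} gives $|\widetilde\alpha-\alpha_0|_1 = O_P(s\mu_n)$ and $R(\widetilde\alpha,\widehat\tau)=O_P(\mu_n^2 s)$.

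The hard part is the uniform-in-$\tau$ control, which is forced on us precisely because $\tau_0$ is not identified: the score process $\tau\mapsto n^{-1}\sum_i X_{ij}(\tau)(\gamma-1\{U_i\le 0\})$ and the empirical Gram behavior of $X(\tau)$ must be bounded simultaneously over all $\tau\in\mathcal{T}$ so that both the subgradient verification and the restricted-eigenvalue bound survive at the random $\widehat\tau$. I expect this to need the same maximal-inequality and stochastic-equicontinuity tools used to control $\nu_n(\alpha,\tau)$ in Theorem \ref{l2.1} (the origin of the extra $(\log p)(\log n)$ factors), here specialized to the $\delta$-block; the growth restriction $s^4\log s = o(n)$ is what keeps these uniform remainders negligible against $\mu_n$ and thereby secures the \emph{no-change-point} conclusion $P(\widetilde\delta=0)\to1$ despite the unidentified threshold.
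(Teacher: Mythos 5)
Your proposal is correct and follows essentially the same route as the paper: the paper also reduces Step 3b to the oracle subproblem on $J(\alpha_0)=J(\beta_0)$ via the signal-adaptive weights (its Lemmas on $\bar{\alpha}_J$ establish the $\sqrt{s\log s/n}$ active-set rate and then show $(\bar{\alpha}_J,0)$ is a local, hence by convexity global, minimizer — an objective-comparison argument equivalent to your KKT verification), notes that the entire $\delta$-block lies in $J^c$ so $\widetilde\delta=0$ w.p.a.1, gets the risk bound from the basic inequality with $R(\alpha_0,\widehat\tau)=0$, and handles part (ii) exactly as in Theorem \ref{th3.1-without}. The only simplification you miss is that when $\delta_0=0$ the active coordinates involve no indicator $1\{Q>\tau\}$, so the active-set empirical process is free of $\tau$ and the uniform-in-$\tau$ control you anticipate is only needed for the inactive-coordinate (score-domination) step, where the paper supplies it by the same symmetrization/contraction bounds used for Theorem \ref{l2.1}.
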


  Theorem \ref{th4.2}  demonstrates that when there is in fact no change point,  our estimator for $\delta_0$ is  exactly zero   with  a high probability. 
Therefore,  the estimator  can also be used as a diagnostic tool to check  whether there exists any change point.
Results similar to Theorems  \ref{thm-AI}  can be established straightforwardly as well; however, their
details are omitted for brevity.

\section{Monte Carlo Experiments}\label{sec:MC}

 In this section we provide the results of Monte Carlo experiments. The baseline model is based on the following data generating process: for $i=1,\ldots,n$,
\begin{align}
Y_i&=X_i'(\beta_0+\xi_{10} U_i) + 1\{Q_i>\tau_0\}X_i'(\delta_0 +\xi_{20}U_i) \label{eq:sim.qr},
\end{align} 
where $U_{i}$ follows $N(0,0.5^2)$, and $Q_i$ follows the uniform distribution on the interval $[0,1]$. The $p$-dimensional covariate $X_i$ is composed of a constant and $Z_i$, i.e.\ $X:=(1,Z_i^{T})^{T}$, where $Z_i$ follows the multivariate normal distribution $N(0,\Sigma)$  with a covariance matrix $\Sigma_{ij} = ( 1 / 2 )^{\vert i-j\vert}$.  Here, the variables $U_{i}, Q_i$ and $Z_i$ are  independent of each other.
Note that the conditional $\gamma$-quantile of $Y_i$ given $(X_i,Q_i)$ has the form:
\eq{
Quant_{\gamma}(Y_i|X_i,Q_i)=X_i'\bt_{\gamma} + 1\{Q_i<\tau_0\}X_i'\delta_{\gamma},
}
where $\bt_{\gamma}=\bt_0+\xi_{10}\cdot Quant_{\gamma}(U)$ and $\dt_{\gamma}=\dt_0+\xi_{20} \cdot Quant_{\gamma}(U)$. 

We consider three quantile regression models with $\gamma=0.25, 0.5$, and $0.75$. The $p$-dimensional parameters $\beta_0$, $\dt_0$, $\xi_{10}$, and $\xi_{20}$ are set to $\beta_0=(0,Quant_{0.75}(U)\approx0.34,0,\ldots,0)$, $\delta_0=(0,1,0,\ldots,0)$, $\xi_{10}=(0,1,0,\ldots,0)$, and $\xi_{20}=(0,0,0,\ldots,0)$, respectively. Because of the heteroskedasticity, the true parameter value $\bt_{\gamma}$ at each quantile is $\bt_{0.25}=(0,\ldots,0)$, $\bt_{0.5}=(0,0.34, \ldots,0)$, and $\bt_{0.75}=(0,0.68, \ldots,0)$. Note that nonzero coefficients are different between when $\gamma=0.25$ and when $\gamma=0.5$ or $\gamma=0.75$.

We set the change point parameter $\tau_0=0.5$ unless it is specified differently. The sample sizes are set to $n=200$ and $400$. The dimension of $X_i$ is set to $p=250$. Note that we have $500$ regressors in total. The change point $\tau$ is estimated over grid points of the sample observations $\{Q_i\}$, where the range is limited to those between the $0.15$-quantile and the $0.85$-quantile.  
We conduct $1{,}000$ replications of each design.

\begin{table}[htbp]
\centering
\caption{Baseline Model: $\gamma=0.25$}\label{tb:base-0.25}
\footnotesize
\begin{tabular}{lcclcccc}
  \hline
 & Excess Risk & E[$J(\widehat{\ap})$] & MSE of $\widehat{\ap} ~(\widehat{\ap}_{J_0} / \widehat{\ap}_{J_0^c})$ & Pred. Er. & RMSE of $\widehat{\tau}$ & C. Prob. of $\widehat{\tau}$ & Oracle Prop. \\ 
  \hline
  \underline{$n=200$}\\
  Oracle 1 & 0.004 & NA  & 0.005 ( NA  / NA )  & 0.203 & NA  & NA  &   NA\\ 
  Oracle 2 & 0.013 & NA  & 0.005 ( NA  / NA )  & 0.434 & 0.012 & 0.925 &  NA \\ 
  Step 1 & 0.032 & 4.266 & 0.433 (  0.389 / 0.044) & 0.727 & 0.013 & 0.943 & 0.032 \\ 
  Step 2 & 0.032 & NA  & NA  ( NA  / NA )  & 0.705 & 0.012 & 0.952 &  NA \\ 
  Step 3a & 0.031 & 4.249 & 0.413 (  0.370 / 0.043) & 0.691 & 0.012 & 0.951 & 0.032 \\ 
  Step 3b & 0.022 & 1.173 & 0.281 (  0.221 / 0.060) & 0.556 & 0.012 & 0.928 & 0.686 \\ 
  \underline{$n=400$}\\
  Oracle 1 & 0.002 & NA  & 0.003 ( NA  / NA )  & 0.145 & NA  & NA  &  NA \\ 
  Oracle 2 & 0.006 & NA  & 0.003 ( NA  / NA )  & 0.313 & 0.005 & 0.958 &   NA\\ 
  Step 1 & 0.017 & 4.352 & 0.214 (  0.193 / 0.021) & 0.502 & 0.006 & 0.959 & 0.035 \\ 
  Step 2 & 0.018 & NA  & NA  ( NA  / NA )  & 0.495 & 0.006 & 0.969 &  NA \\ 
  Step 3a & 0.015 & 4.361 & 0.207 ( 0.186 / 0.021) & 0.486 & 0.006 & 0.961 & 0.031 \\ 
  Step 3b & 0.009 & 1.176 & 0.062 (  0.048 / 0.014) & 0.315 & 0.005 & 0.955 & 0.816 \\       \hline
      \multicolumn{8}{p{\textwidth}}{\footnotesize \emph{Note: }Oracle 1 knows both $J(\alpha_{\gamma})$ and $\tau_0$ and Oracle 2 knows only $J(\alpha_{\gamma})$. Expectation (E) is calculated by the average of 1,000 iterations in each design. Note that $J(\alpha_{\gamma})=1$. `NA' denotes `Not Available' as the parameter is not estimated in the step. The estimation results for $\tau$ at the rows of Step 3a and Step 3b are based on the re-estimation of $\tau$ given estimates from Step 3a ($\widehat{\alpha}$) and Step 3b ($\widetilde{\alpha}$). 
}

\end{tabular}
\end{table}

\begin{table}[htbp]
\centering
\caption{Baseline Model: $\gamma=0.5$}\label{tb:base-0.5}
\footnotesize
\begin{tabular}{lcclcccc}
  \hline
 & Excess Risk & E[$J(\widehat{\ap})$] & MSE of $\widehat{\ap} ~(\widehat{\ap}_{J_0} / \widehat{\ap}_{J_0^c})$ & Pred. Er. & RMSE of $\widehat{\tau}$ & C. Prob. of $\widehat{\tau}$ & Oracle Prop. \\ 
  \hline
  \underline{$n=200$}\\
  Oracle 1 & 0.008 & NA  & 0.012 ( NA  / NA )  & 0.288 & NA  & NA  &  NA \\ 
  Oracle 2 & 0.018 & NA  & 0.012 ( NA  / NA )  & 0.465 & 0.011 & 0.948 &  NA \\ 
  Step 1 & 0.040 & 5.731 & 0.279 ( 0.245 / 0.034) & 0.723 & 0.011 & 0.950 & 0.015 \\ 
  Step 2 & 0.036 & NA    & NA  ( NA  / NA )  & 0.729 & 0.011 & 0.946 &   NA\\ 
  Step 3a & 0.039 & 5.776 & 0.272 ( 0.239 / 0.033) & 0.717 & 0.011 & 0.947 & 0.017 \\ 
  Step 3b & 0.040 & 2.364 & 0.182 ( 0.155 / 0.027) & 0.702 & 0.011 & 0.929 & 0.428 \\
  \underline{$n=400$}\\
  Oracle 1 & 0.004 & NA  & 0.006 ( NA  / NA ) & 0.201 & NA  & NA  &   NA\\ 
  Oracle 2 & 0.008 & NA  & 0.006 ( NA  / NA )& 0.337 & 0.005 & 0.956 &   NA\\ 
  Step 1 & 0.022 & 6.055 & 0.144 ( 0.128 / 0.017) & 0.512 & 0.005 & 0.953 & 0.020 \\ 
  Step 2 & 0.020 & NA    & NA  ( NA  / NA )  & 0.509 & 0.005 & 0.950 & NA  \\ 
  Step 3a & 0.019 & 6.056 & 0.142 ( 0.126 / 0.017) & 0.517 & 0.005 & 0.947 & 0.020 \\ 
  Step 3b & 0.018 & 2.250 & 0.061 ( 0.054 / 0.007) & 0.460 & 0.005 & 0.949 & 0.649 \\ 
   \hline
   \multicolumn{8}{p{\textwidth}}{\footnotesize \emph{Note: } $J(\alpha_{\gamma})=2$. See the note below Table \ref{tb:base-0.25} for other notation. 
}
\end{tabular}
\end{table}

\begin{table}[htbp]
\centering
\caption{Baseline Model: $\gamma=0.75$}\label{tb:base-0.75}
\footnotesize
\begin{tabular}{lcclcccc}
  \hline
 & Excess Risk & E[$J(\widehat{\ap})$] & MSE of $\widehat{\ap} ~(\widehat{\ap}_{J_0} / \widehat{\ap}_{J_0^c})$ & Pred. Er. & RMSE of $\widehat{\tau}$ & C. Prob. of $\widehat{\tau}$ & Oracle Prop. \\ 
  \hline
  \underline{$n=200$}\\
  Oracle 1 & 0.008 & NA  & 0.015 ( NA  / NA ) & 0.324 & NA  & NA  &  NA \\ 
  Oracle 2 & 0.016 & NA  & 0.015 ( NA  / NA ) & 0.508 & 0.011 & 0.941 &  NA \\ 
  Step 1 & 0.043 & 6.056 & 0.352 ( 0.310 / 0.042 )& 0.769 & 0.012 & 0.930 & 0.024 \\ 
  Step 2 & 0.036 & NA    & NA    ( NA  / NA ) & 0.787 & 0.013 & 0.911 &  NA \\ 
  Step 3a & 0.036 & 6.045 & 0.349 ( 0.308 / 0.042 )& 0.782 & 0.013 & 0.911 & 0.024 \\ 
  Step 3b & 0.029 & 2.160 & 0.232 ( 0.188 / 0.044 )& 0.629 & 0.012 & 0.925 & 0.688 \\ 

  \underline{$n=400$}\\
  Oracle 1 & 0.004 & NA  & 0.007 ( NA  / NA ) & 0.218 & NA  & NA  &  NA \\ 
  Oracle 2 & 0.008 & NA  & 0.007 ( NA  / NA ) & 0.354 & 0.005 & 0.952 &   NA\\ 
  Step 1 & 0.018 & 6.007 & 0.169 ( 0.150 / 0.020)  & 0.538 & 0.005 & 0.962 & 0.013 \\ 
  Step 2 & 0.019 & NA    & NA  ( NA  / NA )  & 0.571 & 0.005 & 0.944 &  NA \\ 
  Step 3a & 0.019 & 6.032 & 0.169 ( 0.149 / 0.020)  & 0.548 & 0.005 & 0.942 & 0.016 \\ 
  Step 3b & 0.010 & 2.128 & 0.052 (0.041 / 0.011)  & 0.367 & 0.005 & 0.953 & 0.860 \\ 
   \hline
   \multicolumn{8}{p{\textwidth}}{\footnotesize \emph{Note: } $J(\alpha_{\gamma})=2$. See the note below Table \ref{tb:base-0.25} for other notation. 
}
\end{tabular}
\end{table}

We compare estimation results of each step. To assess the performance of our estimators, we also compare the results with two ``oracle estimators". Specifically,  Oracle 1 knows the true active set $J(\alpha_{\gamma})$ and the change point parameter $\tau_0$, and Oracle 2 knows   only  $J(\alpha_{\gamma})$.  
The threshold parameter $\tau_0$ is re-estimated in 
Steps 3a and 3b using updated estimates of $\alpha_{\gamma}$. 

Tables \ref{tb:base-0.25}--\ref{tb:base-0.75} summarize the simulation results. 
We abuse notation slightly and denote all estimators by $(\widehat{\alpha},\widehat{\tau})$. They would be understood as $(\breve{\alpha},\breve{\tau})$ in Step 1, $\widehat{\tau}$ in Step 2, and so on.  
We report the excess risk, the average number of parameters selected, $\mathbb{E}[J(\widehat{\alpha})]$, and the sum of the mean squared error of $\widehat{\ap}$ ($\widehat{\ap}_{J_0}$ / $\widehat{\ap}_{J^c_0}$). For each sample, the excess risk is calculated by the simulation, $S^{-1}\sum_{s=1}^S \left[\rho(Y_s,X_s^T(\widehat{\tau})\widehat{\alpha})-\rho(Y_s,X_s^T({\tau}_0){\alpha}_0)\right]$, where $S=10{,}000$ is the number of simulations; then we report the average value of 1,000 replications. Similarly,  we also calculate prediction errors by the simulation, $\left(S^{-1}\sum_{s=1}^S \left(X^T_s(\widehat{\tau})\widehat{\alpha} - X^T_s(\tau_0)\ap_{\gamma}\right)^2\right)^{1/2}$, and report the average value.

We also report the root-mean-squared error (RMSE) and the coverage probability of the 95\% confidence interval of $\widehat{\tau}$ (C.\ Prob.\ of $\widehat{\tau}$). 
The confidence intervals for $\tau_0$ are calculated by simulating the two-sided compound Poisson process in Theorem \ref{thm:tau-2nd}  by adopting the approach proposed by \citet{Li:Ling:12}.
The details are provided in Section \ref{sect:tau:ci-algorithm}. 
\citet{Li:Ling:12} showed that it is valid to simulate the compound poisson process by simulating the poisson process and the  compounding factors from empirical distributions separately in the context of least squares estimation. We build on their suggestion and modify their procedure to quantile regression. 
We did not prove a formal justification for our procedure in this paper; however, it seems working well in simulations. It is an interesting topic for future research.

Note that the root-mean-squared error of $\widehat{\tau}$ and the coverage probability of the confidence interval at the rows of Step 3a and Step 3b in the tables are estimation results of updated $\widehat{\tau}$: we re-estimate $\tau$ as in Step 2 using $(\widehat{U}_i,\widehat{\alpha})$ and $(\widetilde{U}_i,\widetilde{\alpha})$ from Step 3a and Step 3b instead of $(\breve{U}_i,\breve{\alpha})$. Finally, we also report the oracle proportion (Oracle Prop.), namely the ratio of the correct model selection out of 1,000 replications.

Overall, the simulation results confirm the asymptotic theory developed in the previous sections. First, these results show the advantage of quantile regression models over the existing mean regression models with a change point, e.g.\ \citet{lee2012lasso}. The proposed estimator (Step 3b) selects different nonzero coefficients at different quantile levels. The estimator in \citet{lee2012lasso} cannot detect these heterogeneous models. In general the proposed estimators show better performance for heteroskedastic designs and for the fat-tail error distributions as will be discussed in detail  below. 
Second, when we look at the finite sample performance  of the proposed estimators in Step 3, their prediction errors  are within a reasonable bound from those of Oracles 1 and 2. Recall that we estimate models with 250 times or 500 times more regressors in each design.  
Third, the root-mean-squared error of $\widehat{\tau}$ decreases quickly and confirms the super-consistency result of $\widehat{\tau}$. 
Fourth, the coverage probabilities of the confidence interval are close to 95\%, especially when $n=400$. Thus,  we recommend practitioners to use $\widehat{\tau}$ in Step 2 or the re-estimated version of it based on the estimates from Step 3a or Step 3b. 
Finally, the oracle proportion of Step 3b is quite satisfactory and confirms our results in model selection consistency. 

\subsection{Comparison with Mean Regression with a Change Point}

Table \ref{tb:mean-all} compares the performance of the proposed estimator with that of the mean regression method in \citet{lee2012lasso}. 
For the purpose of direct comparison between mean and median regression models, the tuning parameter $\lambda$ is fixed to be the same as that in Step 1 from median regression. 
 We consider three different simulation designs at $\gamma=0.5$ with $n=200$. The first model is a homoskedastic model by setting $\xi_{01}=(1,0,\ldots,0)$ in the baseline design. The second model is the same as the heteroskedastic median regression in Table \ref{tb:base-0.5}. The third model is a fat-tail model, where $U_i$ follows a Cauchy distribution with a scale parameter 0.25 while keeping the heteroskedastic design as the second model. The mean regression method shows slight over-selection but its performance looks reasonable in the homoskedastic model. However, the method in  \citet{lee2012lasso} is not robust to the heteroskedastic errors, which we can observe in Panel B of Table  \ref{tb:mean-all}. Furthermore, it cannot detect different nonzero coefficients at different quantile levels while the quantile method shows such a result in Table \ref{tb:base-0.25}. Finally, the quantile method works well when the error distribution follows a Cauchy distribution in Panel C of Table  \ref{tb:mean-all}. However, the mean regression method performs poorly with a Cauchy error distribution as the conditional mean function is not well-defined in this case.

\begin{table}[htbp]
\centering
\caption{Comparison between mean and median regression models with a change point}\label{tb:mean-all}

\footnotesize

\bigskip

\medskip

\begin{tabular}{lclccc}
\multicolumn{6}{l}{\emph{Panel A---Homoskedastic Model: $\gamma=0.5$ and $n=200$}} \\ 
  \hline
  &  E[$J(\widehat{\ap})$] & MSE of $\widehat{\ap} ~(\widehat{\ap}_{J_0} / \widehat{\ap}_{J_0^c})$ & Pred. Er. & RMSE of $\widehat{\tau}$ &   Oracle Prop. \\ 
  \hline
  Oracle 1  & NA  & 0.000 (NA / NA) & 0.056 & NA  &    NA \\ 
  Oracle 2 & NA  & 0.000 (NA / NA)& 0.199 & 0.003 &   NA \\ 
  Step 1 & 5.919 & 0.011 ( 0.010 / 0.001) & 0.259 & 0.003 &   0.026 \\ 
  Step 2 & NA  & NA (NA / NA) & 0.248 & 0.003 &    NA \\ 
  Step 3a  & 5.900 & 0.011 ( 0.010 / 0.001) & 0.257 & 0.003 &   0.024 \\ 
  Step 3b & 2.001 & 0.001 ( 0.001 / 0.000) & 0.213 & 0.003 &   0.999 \\
  Mean Reg & 8.162 & 0.010 (0.008 / 0.001) & 0.256 & 0.003 & 0.000 \\ 
\hline
\\
\multicolumn{6}{l}{\emph{Panel B---Heteroskedastic Model: $\gamma=0.5$ and $n=200$}} \\
  \hline
 &  E[$J(\widehat{\ap})$] & MSE of $\widehat{\ap} ~(\widehat{\ap}_{J_0} / \widehat{\ap}_{J_0^c})$ & Pred. Er. & RMSE of $\widehat{\tau}$ &   Oracle Prop. \\ 
  \hline
  Oracle 1 & NA  & 0.012 ( NA  / NA )  & 0.288 & NA          &  NA \\ 
  Oracle 2 & NA  & 0.012 ( NA  / NA )  & 0.465 & 0.011       &  NA \\ 
  Step 1   & 5.731 & 0.279 ( 0.245 / 0.034) & 0.723 & 0.011  & 0.015 \\ 
  Step 2   & NA    & NA  ( NA  / NA )  & 0.729 & 0.011       &   NA\\ 
  Step 3a  & 5.776 & 0.272 ( 0.239 / 0.033) & 0.717 & 0.011  & 0.017 \\ 
  Step 3b  & 2.364 & 0.182 ( 0.155 / 0.027) & 0.702 & 0.011  & 0.428 \\
Mean Reg & 93.550 & 2.537 ( 0.326 / 2.211 ) & 1.572 & 0.011 & 0.000 \\ 
   \hline
\\
\multicolumn{6}{l}{\emph{Panel C---Fat-tail Model: $\gamma=0.5$, $U_i \sim Cauchy(0.25)$  and $n=200$}} \\
  \hline
 &  E[$J(\widehat{\ap})$] & MSE of $\widehat{\ap} ~(\widehat{\ap}_{J_0} / \widehat{\ap}_{J_0^c})$ & Pred. Er. & RMSE of $\widehat{\tau}$ &   Oracle Prop. \\ 
  \hline
  Oracle 1 &  NA & 0.005( NA  / NA ) & 0.185 &  NA          &  NA \\ 
  Oracle 2 &  NA & 0.005( NA  / NA ) & 0.392 & 0.011        &  NA \\ 
  Step 1   & 5.843 & 0.148 ( 0.131 / 0.017) & 0.566 & 0.011 & 0.022 \\ 
  Step 2   &  NA &  NA ( NA  / NA ) & 0.576 & 0.011         &  NA \\ 
  Step 3a  & 5.806 & 0.143 ( 0.126 / 0.017) & 0.575 & 0.011 & 0.019 \\ 
  Step 3b  & 2.582 & 0.074 ( 0.066 / 0.008) & 0.575 & 0.011 & 0.483 \\ 
  Mean Reg & 218.991 & $5.55\times 10^6$ ( $5.45\times 10^3$ / $5.50\times 10^6$ ) & 137.985 & 0.221 & 0.000 \\ 

 \hline

\end{tabular}
\end{table}

\subsection{When There Is No Change Point}

Table \ref{tb:no-change} shows the performance of the estimator when there does not exist any change point. We use the baseline design with $\gamma=0.75$ and set $\dt=(0,\ldots,0)$. As we are interested in the performance of $\widehat{\dt}$, we report the average number of parameters selected in $\widehat{\dt}$, the MSE of $\widehat{\dt}$, and the proportion of detecting no-change point (No-change Prop.). As predicted by the theory, all measures on $\widehat{\dt}$ indicate that the estimator (Step 3b) detects no-change point models quite well. Both $\mathbb{E}[J(\widehat{\dt})]$ and MSE of $\widehat{\dt}$ are quite low and no-change proportion is high. We can also observe much improvement in these measure when the sample size increases from $n=200$ to $n=400$. 

\begin{table}[htbp]
\centering
\caption{No Change Point: $\gamma=0.75$, $\dt_{\gamma}=0$}\label{tb:no-change}
\footnotesize
\begin{tabular}{lcccccccc}
  \hline
 & Excess Risk & E[$J(\widehat{\ap})$] &E[$J(\widehat{\dt})$]  & MSE of $\widehat{\ap}$  & MSE of $\widehat{\dt}$   & Pred. Er. & No-change Prop. & Oracle Prop. \\  
  \hline
  \underline{$n=200$}\\
  Oracle 1 & 0.004 &NA  &NA  & 0.002 &NA  & 0.196 &NA  & NA  \\
  Oracle 2 & 0.004 &NA  &NA  & 0.002 &NA  & 0.196 &NA  & NA  \\
  Step 1 & 0.030 & 4.796 & 1.149 & 0.228 & 0.006 & 0.618 & 0.221 & 0.008 \\ 
  Step 2 & 0.024 &NA  &NA  &NA  &NA  & 0.617 &NA  & NA  \\
  Step 3a & 0.026 & 4.915 & 1.309 & 0.226 & 0.008 & 0.602 & 0.142 & 0.008 \\ 
  Step 3b & 0.017 & 1.520 & 0.334 & 0.178 & 0.007 & 0.436 & 0.722 & 0.541 \\ 

  \underline{$n=400$}\\
  Oracle 1 & 0.002 &NA  &NA  & 0.001 &NA  & 0.143 &NA  & NA  \\
  Oracle 2 & 0.002 &NA  &NA  & 0.001 &NA  & 0.143 &NA  & NA  \\
  Step 1 & 0.015 & 4.933 & 1.137 & 0.126 & 0.003 & 0.451 & 0.223 & 0.013 \\ 
  Step 2 & 0.014 &NA  &NA  &NA  &NA  & 0.449 &NA  & NA  \\
  Step 3a & 0.015 & 5.042 & 1.301 & 0.124 & 0.004 & 0.440 & 0.123 & 0.010 \\ 
  Step 3b & 0.005 & 1.208 & 0.141 & 0.037 & 0.002 & 0.197 & 0.867 & 0.805 \\ 
   \hline
   \multicolumn{9}{p{\textwidth}}{\footnotesize \emph{Note: } $J(\alpha_{\gamma})=1$ and $J(\dt_{\gamma})=0$.
}
\end{tabular}
\end{table}

\subsection{When the Minimal Signal in $\dt$ is Low}

In this subsection, we consider the case when the model  contains low \emph{minimal} signals in $\delta$.  Specifically, we consider the median regression model and set  $\bt_{0.5} = (0,0.34,0,\ldots,0)$	and	$\dt_{0.5} = (0,1,1/2,1/4,1/8,1/16,0\ldots,0)$. Table \ref{table-diminishing} reports  simulation results
in this design. 
Note that the simulation design in Table \ref{table-diminishing} is the same as that reported in Table \ref{tb:base-0.5}
except that $\dt_{0.5} = (0,1,0\ldots,0)$ in Table \ref{tb:base-0.5}.
Therefore, we may view that  the simulation design in Table \ref{tb:base-0.5} satisfies 
the minimum signal condition, whereas that of this subsection does not. The simulation results in Table \ref{table-diminishing} are consistent with asymptotic theory in Section \ref{sec:Lasso-theory-case1} and remarks in Section \ref{sec:comparison-step3} comparing estimators in step 3.
The step 3b estimator performs better than the step 3a estimator in Table \ref{tb:base-0.5}, but it performs worse in Table \ref{table-diminishing}. Also note that the oracle proportion is zero for the step 3b estimator, which is expected given low signals in coefficients. Finally, it is important to note that the performance of the estimators of $\tau_0$ is good in terms of the MSE in the presence of low signals in $\delta$. 
The coverage probability of the confidence interval is much higher than the nominal level, which was not observed in previous simulations. Since the MSE and coverage probability between the infeasible oracle 2 estimator and other estimators are
very similar, we interpret that the over-coverage result  is not driven by high-dimensionality of regressors and variable selection. Perhaps this is due to a larger number of coefficients to estimate for the oracle 2 estimator, compared to Table \ref{tb:base-0.5}.

\begin{table}[htbp]
\centering
\caption{When the minimal signal in $\delta$ is low}\label{table-diminishing}
\footnotesize
\begin{tabular}{lccccccc}
  \hline
 & Excess Risk & $\mathbb{E}[J_0(\widehat{\ap})$] & MSE of $\widehat{\ap}$ ($\widehat{\ap}_{J_0}$ / $\widehat{\ap}_{J_0^c}$) & Pred. Er. & MSE of $\widehat{\tau}$ & C. Prob. of $\widehat{\tau}$ & Oracle Prop \\ 
  \hline
\underline{$n=200$}\\
 Oracle 1 & 0.024 &   NA        & 0.729 (NA / NA )         & 0.522 &   NA       &   NA        &  NA \\ 
 Oracle 2 & 0.037 &   NA        & 0.729 (NA / NA )         & 0.816 & 0.004 & 0.995 &  NA \\ 
    Step 1 & 0.054 & 9.949 & 0.517 (0.414 / 0.104)  & 0.946 & 0.004 & 0.995 & 0.000 \\ 
    Step 2 & 0.054 &   NA       &   NA (NA / NA )             & 0.949 & 0.004 & 0.992 &  NA \\ 
  Step 3a & 0.056 & 9.923 & 0.517 ( 0.414 / 0.104) & 0.903 & 0.004 & 0.991 & 0.000 \\ 
  Step 3b & 0.062 & 3.327 & 1.293 ( 1.174 / 0.119) & 1.002 & 0.004 & 0.990 & 0.000 \\ 
\underline{$n=400$}\\
Oracle 1 & 0.012 & NA & 0.339 (NA / NA )   & 0.365 & NA  & NA &  NA \\ 
  Oracle 2 & 0.017 & NA & 0.339 (NA / NA )   & 0.522 & 0.002 & 0.999 &  NA \\ 
  Step 1 & 0.029 & 11.058 & 0.333 (0.275 / 0.058) & 0.647 & 0.002 & 1.000 & 0.000 \\ 
  Step 2 & 0.029 & NA &  NA (NA / NA )     & 0.694 & 0.002 & 0.997 &  NA \\ 
  Step 3a & 0.027 & 11.067 & 0.332 (0.274 / 0.058) & 0.678 & 0.002 & 0.998 & 0.000 \\ 
  Step 3b & 0.032 & 3.574 & 0.648 ( 0.585 / 0.063) & 0.691 & 0.002 & 0.999 & 0.000 \\ 
   \hline
\end{tabular}
\end{table}

\subsection{Additional Simulation Results}

We have carried out additional Monte Carlo experiments. For the sake of brevity, we only report main findings here and show full results in the appendices.
In Appendix \ref{sec:add-sim-results-1}, we report  simulation results when the change point $\tau_0$ and the distribution of $Q_i$ vary. In particular,  we consider three different distributions of $Q_i$: Uniform$[0,1]$, $N(0,1)$, and $\chi^2(1)$. The change point parameter $\tau_0$ varies over $0.3,0.4,\ldots,0.7$ quantiles of each $Q_i$ distribution. We find that the performance of $\widehat{\tau}$ measured by the root-mean-squared error depends on the density of $Q_i$ distribution, as is expected from asymptotic theory. For instance, it is quite uniform over different $\tau_0$ when $Q_i$ follows Uniform$[0,1]$. However, when $Q_i$ follows $N(0,1)$ or $\chi^2(1)$, it performs better when $\tau_0$ is located at a point with a higher density of $Q_i$ distribution. 
Sensitivity analyses provided 
in Appendix \ref{sec:add-sim-results}  show that the main simulation results are robust when we 
make changes over the range between $-15\%$ and $+15\%$ of the suggested tuning parameter values.

In summary, the proposed estimation procedure works well in finite samples and confirms the theoretical results developed earlier.  
The simulation studies show some advantages of the proposed estimator over the existing mean regression method, e.g.\ \citet{lee2012lasso}. It also detects no-change-point models well without any pre-test. The main qualitative results are not sensitive to different simulation designs on $\tau_0$ and $Q_i$ as well as to some variation on tuning parameter values.

\section{Estimating a Change Point in Racial Segregation}\label{sec:data-example}

As an empirical illustration, we investigate the existence of tipping  in the dynamics of racial segregation 
using the dataset constructed by  \citet{card2008tipping}. They show that 
the neighborhood's white population decreases substantially when the minority share
in the area exceeds a tipping point (or threshold point),  using U.S. Census tract-level data.
\citet{lee2012testing}  develop a test for the existence of threshold effects and apply their test to this dataset.
Different from these existing studies, we consider a high-dimensional setup by allowing both possibly highly nonlinear effects of  the main covariate (minority share in the neighborhood) and possibly higher-order interactions between additional covariates.

We build on the specifications used in \citet{card2008tipping} and \citet{lee2012testing} to choose 
the following median regression with a constant shift due to the tipping effect:
\begin{align}
Y_i = g_0(Q_i) + \delta_0 1\{Q_i>\tau_0\} +   X_i'\beta_0 + U_i,
\end{align}
where for census tract $i$, the dependent variable $Y_i$  is the ten-year change in the neighborhood's white population, $Q_i$ is the base-year minority share in the neighborhood, and $X_i$ is a vector of six tract-level control variables and their various interactions depending on the model specification. Both $Y_i$ and $Q_i$ are in percentage terms. The basic six variables in $X_i$ include the unemployment rate, the log of mean family income, the fractions of single-unit, vacant, and renter-occupied housing units, and the fraction of workers who use public transport to travel to work. The function $g(\cdot)$ is approximated by the cubic b-splines with 15 knots over equi-quantile locations, so the degrees of freedom are $19$ including an intercept term. 
In our empirical illustration, we use the census-tract-level sample of Chicago whose base year is 1980.

In the first set of models, we consider possible interactions among the six tract-level control variables up to six-way interactions. Specifically, 
the vector $X$ in the six-way interactions will be composed of the following 63 regressors, 
$$
\{X^{(1)}, \ldots,X^{(6)}, X^{(1)}X^{(2)},\ldots, X^{(5)}X^{(6)},\ldots, X^{(1)} X^{(2)}X^{(3)}X^{(4)}X^{(5)}X^{(6)}\},
$$
 where $X^{(j)}$ is the $j$-th element among those tract-level control variables. Note that the lower order interaction vector (e.g.\ two-way or three-way) is nested by the higher order interaction vector (e.g.\ three-way or four-way).
 The total number of regressors varies from 26 (19 from b-splines, 6 from $X_i$ and $1\{Q_i>\tau\}$) when there is no interaction to 83 when there are full six-way interactions. In the next set of models, we add the square of each tract-level control variable and generate similar interactions up to six.
 In this case the total number of regressors varies from 32 to 2,529. For example, the number of regressors in the largest model consists of $\#(\text{b-spline basis})+\#(\text{indicator function}) + \#(\text{interactions up to six-way out of 12})=19 + 1 + \sum_{k=1}^6 \binom {12}k = 2,529$. This number is much larger than the sample size ($n=1,813$).

\begin{table}[htbp]
\footnotesize
\centering
\caption{Estimation Results from Quantile Regression}\label{tb:all-quantiles}
\begin{tabular}{llccccc}
  \\
\hline
 & & \multirow{2}{*}{No.\  of Reg.\ } & No.\  of Selected    & \multirow{2}{*}{$\widehat{\tau}$} &  \multirow{2}{*}{CI  for $\tau_0$} &\multirow{2}{*}{$\widehat{\delta}$} \\ 
 & & & Reg. in Step 3b &    &  \\ 
  \hline
$\gamma = 0.25$ \\
  &  \underline{6 control variables}\\

&  No Interaction & 26 & 17 & 5.65 & [4.75, 6.17] & -4.07 \\ 
&  Two-way Interaction & 41 & 20 & 2.35 & [1.00, 4.44] & -1.82 \\ 
&  Three-way Interaction & 61 & 24 & 2.35 & [1.00, 4.15] & -2.19 \\ 
&  Four-way Interaction & 76 & 21 & 5.65 & [4.69, 6.08] & -5.50 \\ 
&  Five-way Interaction & 82 & 22 & 2.45 & [1.00, 4.93] & -1.55 \\ 
&  Six-way Interaction & 83 & 22 & 2.45 & [1.00, 4.75] & -1.55 \\ 
 &   \underline{12 control variables}\\

&  No Interaction & 32 & 17 & 5.65 & [4.75, 6.17] & -4.07 \\ 
&  Two-way Interaction & 98 & 18 & 5.25 & [3.55, 6.09] & -3.40 \\ 
&  Three-way Interaction & 318 & 22 & 5.25 & [3.63, 5.94] & -3.61 \\ 
&  Four-way Interaction & 813 & 26 & 5.25 & [3.79, 5.97] & -3.53 \\ 
&  Five-way Interaction & 1605 & 27 & 5.25 & [4.57, 5.65] & -5.37 \\ 
&  Six-way Interaction & 2529 & 28 & 5.65 & [4.96, 6.06] & -5.50 \\ 
     \hline
$\gamma = 0.50$ \\
&  \underline{6 control variables}\\
&  No Interaction & 26 & 15 & 5.65 & [1.67, 11.85] & -2.24 \\ 
&  Two-way Interaction & 41 & 17 & 5.05 & [2.25, 7.46] & -2.63 \\ 
&  Three-way Interaction & 61 & 20 & 5.25 & [4.22, 6.38] & -4.15 \\ 
&  Four-way Interaction & 76 & 19 & 5.05 & [3.60, 7.00] & -3.14 \\ 
&  Five-way Interaction & 82 & 20 & 5.05 & [1.23, 9.16] & -1.90 \\ 
&  Six-way Interaction & 83 & 20 & 5.05 & [1.33, 9.39] & -1.90 \\ 
&   \underline{12 control variables}\\
&  No Interaction & 32 & 16 & 1.95 & [0.77, 4.61] & -3.69 \\ 
&  Two-way Interaction & 98 & 21 & 6.75 & [1.00, 45.57] & 0.48 \\ 
&  Three-way Interaction & 318 & 25 & 4.05 & [1.00, 13.15] & -0.97 \\ 
&  Four-way Interaction & 813 & 27 & 3.65 & [1.00, 15.91] & -0.56 \\ 
&  Five-way Interaction & 1605 & 29 & 3.25 & [1.00, 13.16] & -0.68 \\ 
&  Six-way Interaction & 2529 & 28 & 3.25 & [1.00, 11.67] & -0.74 \\ 
\hline
$\gamma = 0.75$ \\
&    \underline{6 control variables}\\
& No Interaction & 26 & 15 & 10.05 & [9.37, 11.29] & -10.62 \\ 
&  Two-way Interaction & 41 & 14 & NA & NA & 0.00 \\ 
&  Three-way Interaction & 61 & 21 & NA & NA & 0.00 \\ 
&  Four-way Interaction & 76 & 18 & NA & NA & 0.00 \\ 
&  Five-way Interaction & 82 & 18 & NA & NA & 0.00 \\ 
&  Six-way Interaction & 83 & 18 & NA & NA & 0.00 \\

&     \underline{12 control variables}\\
& No Interaction & 32 & 14 & 10.05 & [8.44, 11.94] & -7.14 \\ 
&  Two-way Interaction & 98 & 20 & NA & NA & 0.00 \\ 
&  Three-way Interaction & 318 & 21 & NA & NA & 0.00 \\ 
&  Four-way Interaction & 813 & 25 & NA & NA & 0.00 \\ 
&  Five-way Interaction & 1605 & 28 & NA & NA & 0.00 \\  
&   Six-way Interaction & 2529 & 24 & NA & NA & 0.00 \\ 

     \hline
     \multicolumn{7}{p{\textwidth}}{\footnotesize{\emph{Note}: The sample size is $n=1,813$.  The parameter ${\tau}_0$ is estimated by the grid search on the 591 equi-spaced points over $[1,60]$. Both $\widehat{\tau}$ and the 95\% confidence interval are based on re-estimation after Step 3b: that is, $\tau$ is estimated again using $(\widetilde{U}_i, \widetilde{\alpha})$ from Step 3b.
  }}
\end{tabular}
\end{table}

Table \ref{tb:all-quantiles} summarizes the estimation results at the 0.25, 0.5, and 0.75 quantiles, respectively. We report the total number of regressors in each model and the number of selected regressors in Step 3b. The change point $\tau$ is estimated by the grid search over 591 equi-spaced points in $[1,60]$. The lower bound value $1 \%$ corresponds to the 1.6 sample percentile of $Q_i$ and the upper bound value $60 \%$, which is about the upper sample quartile of $Q_i$, is the same as one used in  \citet{card2008tipping}. In this empirical example, we report the estimates of $\tau_0$ and the confidence intervals updated after Step 3b (that is, $\tau$ is re-estimated using the estimates of $\alpha_0$ in Step 3b). If this estimate is different from the previous one in Step 2, then we repeat Step 3b and Step 2 until it converges.

The estimation results suggest several interesting points. First, at each quantile, the proposed method
selects sparse representations in all model specifications even when the number of regressors is relatively large. Furthermore, the number of selected regressors does not grow rapidly when we increase the number of possible covariates. It seems that the set of selected covariates overlaps across different dictionaries at each quantile. See Appendix \ref{app:emp-tables-figures} for details on selected regressors.   Second, the estimation results are different across different quantiles, indicating that there may exist heterogeneity in this application. The confidence intervals for $
\tau_0$ at the 0.25 quantile are quite tight in all cases and they provide convincing evidence of the tipping effect. If we look at the case of six-way interactions with 12 control variables, the estimated tipping point is 5.65\% and the estimated jump size is $-5.50\%$. However, this strong tipping effect becomes weaker at the 0.50 and 0.75 quantiles as shown either by wider confidence intervals or by the zero jump size, i.e.\ $\widehat{\delta}=0$.

\begin{table}[htbp]
\footnotesize
\centering
\caption{Estimation Results from Mean Regression}\label{tb:mean}
\begin{tabular}{lcccc}
  \\
  \hline
 & \multirow{2}{*}{No.\  of Reg.\ } & No.\  of    & \multirow{2}{*}{$\widehat{\tau}$} &  \multirow{2}{*}{$\widehat{\delta}$} \\ 
  & & Selected Reg.     &  \\ 
  \hline
{\underline{6 Control Variables, Six-way Interaction}}\\

Untrimmed     &    83        &        50  &  3.25  &     -16.14 \\
Trimmed       &    83        &        41  &  3.25 &        -6.53 \\
\\
{\underline{12 Control Variables, Six-way Interaction}}\\
Untrimmed      &  2529       &        142  &  3.25  &     -15.55 \\
Trimmed      &   2529    &            107  &   3.25  &     -5.19 \\

     \hline
     \multicolumn{5}{p{\textwidth}}{\footnotesize{\emph{Note}: The sample size of the  untrimmed original data is $n=1,813$. The trimmed data drop top and bottom $5\%$ observations based on $\{Y_i\}$ and the sample sizes decreases to $n=1,626$.  The parameter ${\tau}_0$ is estimated by the grid search on the 591 equi-spaced points over $[1,60]$. As in the simulation studies, the tuning parameters are set from Step 1 in median regression.
  }}
\end{tabular}
\end{table}

We now compare the estimation results from quantile regression with those from mean regression, which are reported in  Table \ref{tb:mean} (full estimation results are in Appendix \ref{app:emp-tables-figures}). We show two kinds of mean regression estimates: one with the   untrimmed original data and the other with the trimmed data for which we  drop top and bottom $5\%$ observations based on $\{Y_i\}$. The estimated tipping points are the same between the two datasets but the estimated jump size is much larger with the original data. Figure \ref{fg:15knots} shows the fitted values over $Q_i$ at the sample mean of the six basic covariates. They are from the model of six-way interactions with 12 control variables and the vertical line indicates the location of a tipping point.  The left panel of Figure \ref{fg:15knots} compares the results between the mean and median regression results (without trimming the data) and the right panel shows the the interquartile range of the conditional distribution of $Y_i$ as a function of $Q_i$ given other regressors.  It can be seen that the mean regression estimates are much more volatile around the tipping point than the median regression estimates, although the estimated tipping point is the same. In Figure \ref{fg:trim}, we compare the mean regression estimates with and without trimming. Removing observations with top and bottom 5\% $Y_i$'s stablize the estimates, thus demonstrating that the median regression estimates have the built-in feature that they are more stable with outliers of $Y_i$ than the mean estimates. Finally, looking at the right panel of Figure \ref{fg:15knots}, we can see that the 25 percentile of the conditional distribution drops at the tipping point of 5.65\% but no such change at the 75\% quantile. This shows that the quantile regression estimates can provide insights into \emph{distributional} threshold effects in racial segregation.

\begin{figure}[htbp]
\centering
\caption{Estimation Results: 12 Control Variables and Six-way Interaction}\label{fg:15knots}
\centering
\includegraphics[scale=0.5]{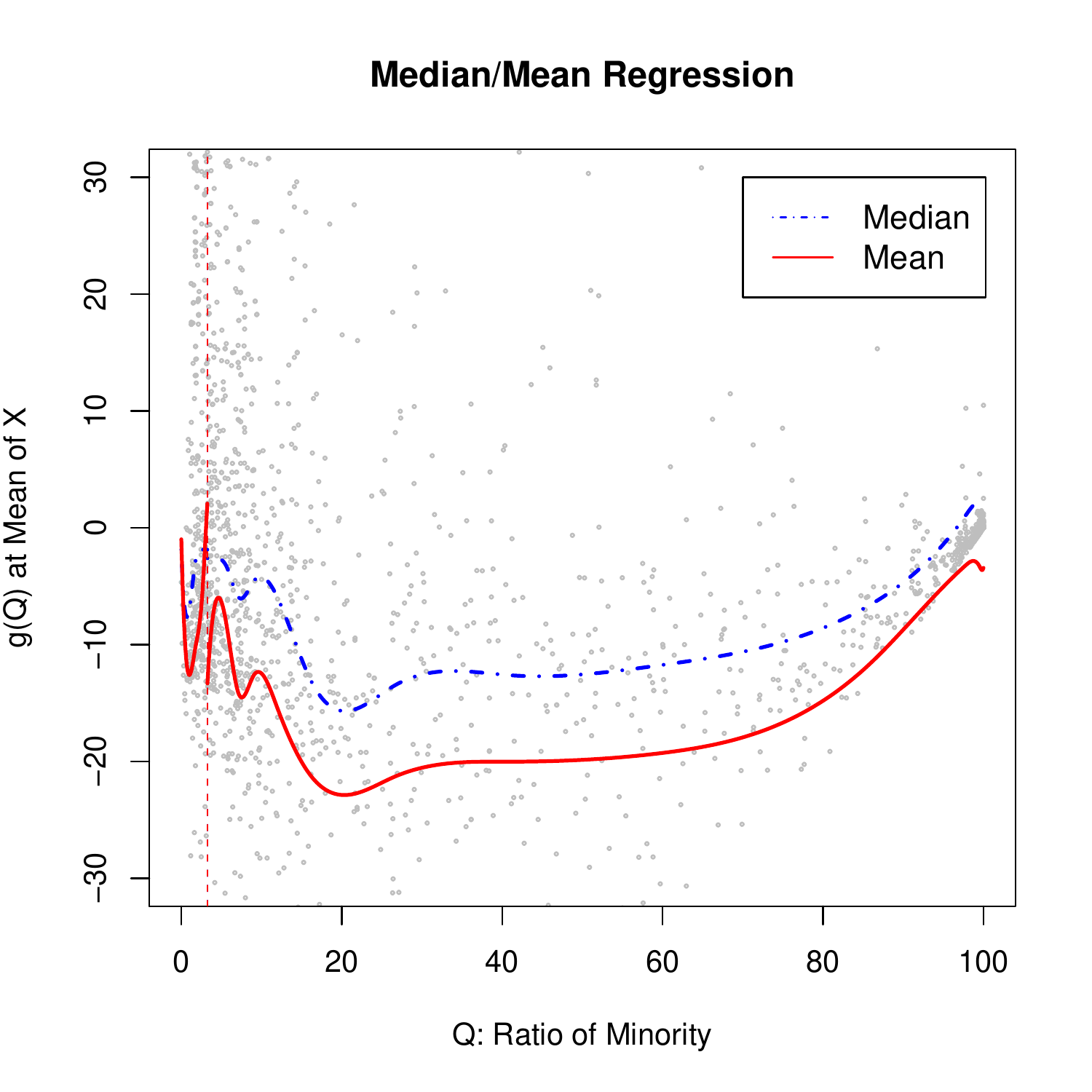}
\includegraphics[scale=0.5]{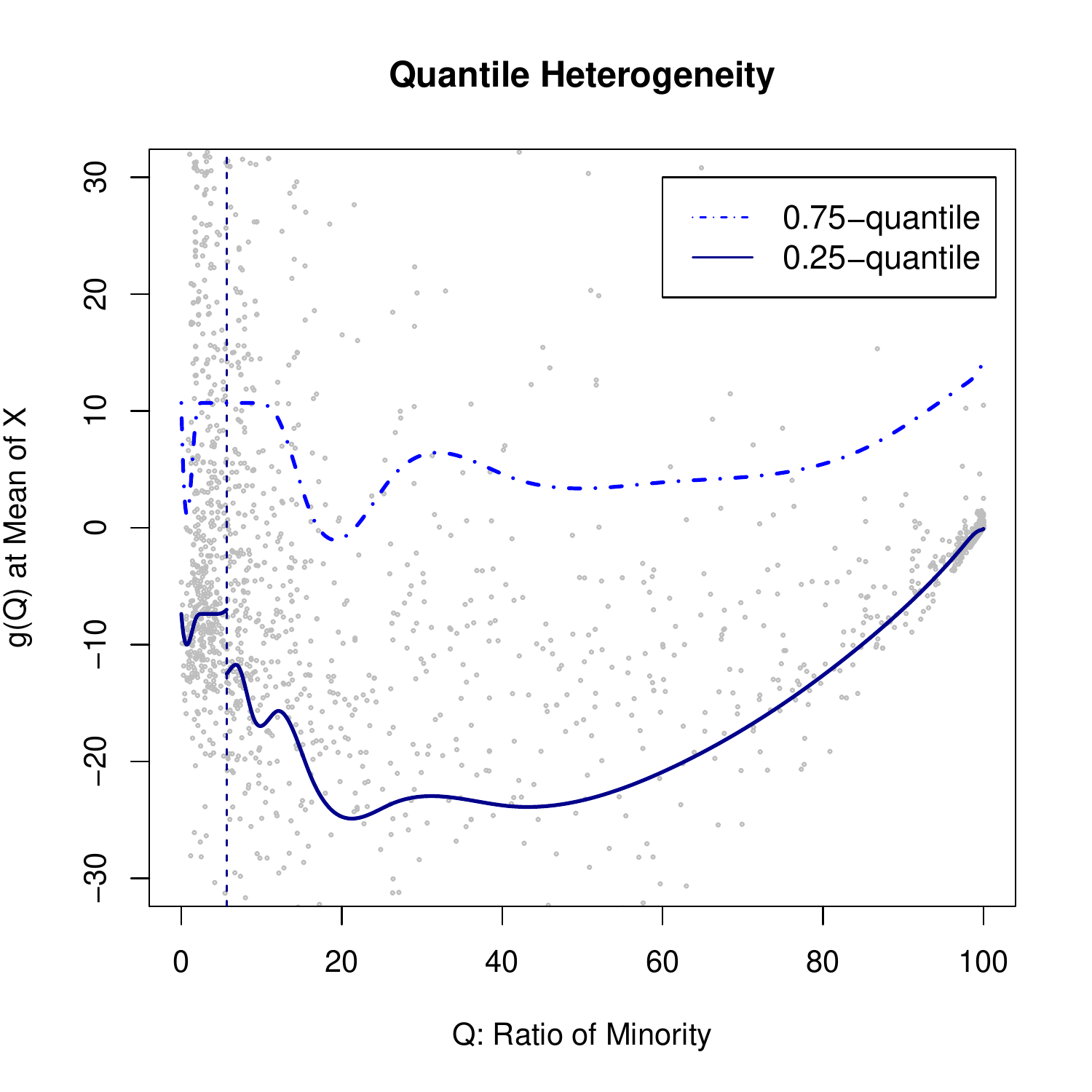}
\end{figure}

\begin{figure}[htbp]
\centering
\caption{Estimation Results: Mean Regression with Untrimmed/Trimmed Data}\label{fg:trim}
\centering
\includegraphics[scale=0.50]{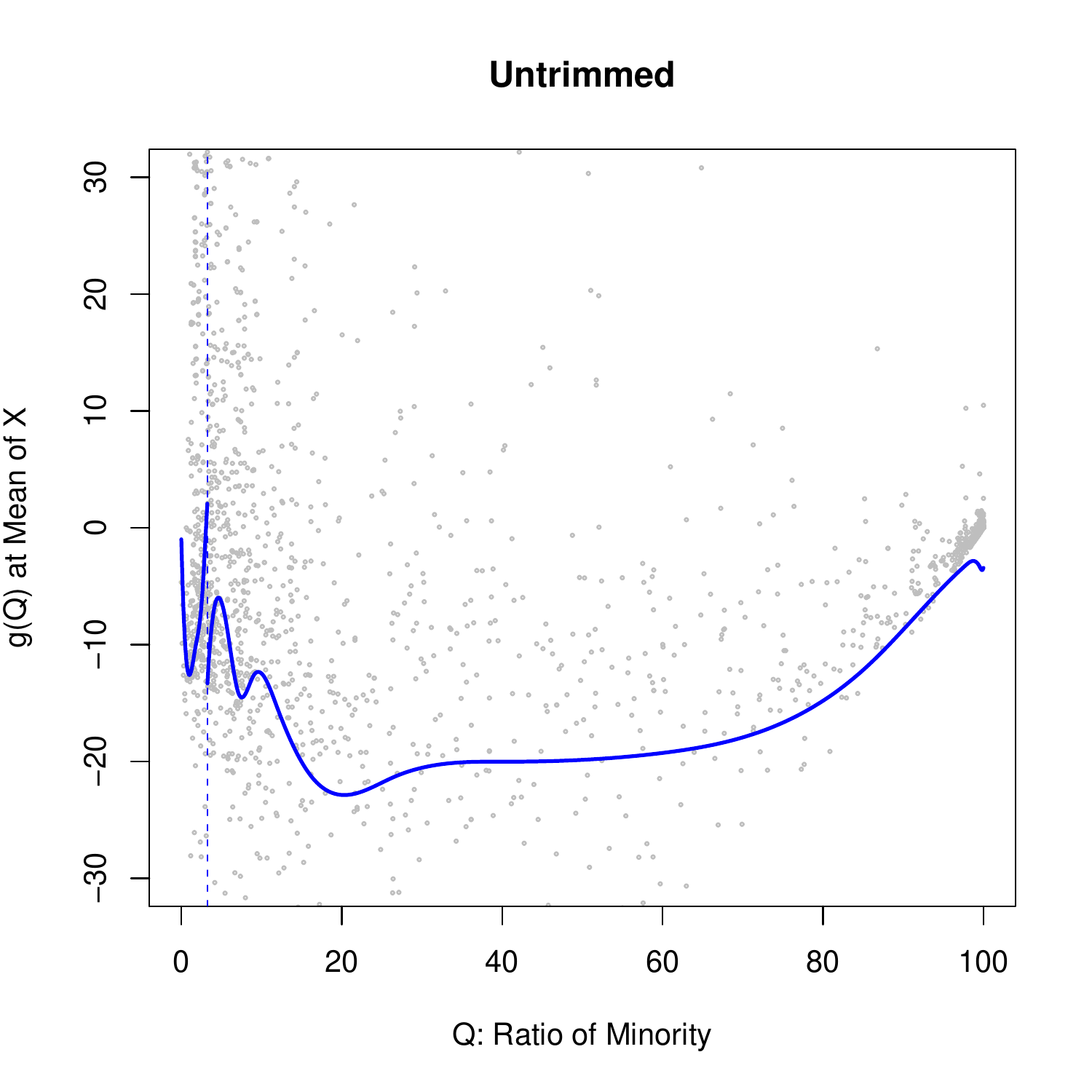}
\includegraphics[scale=0.50]{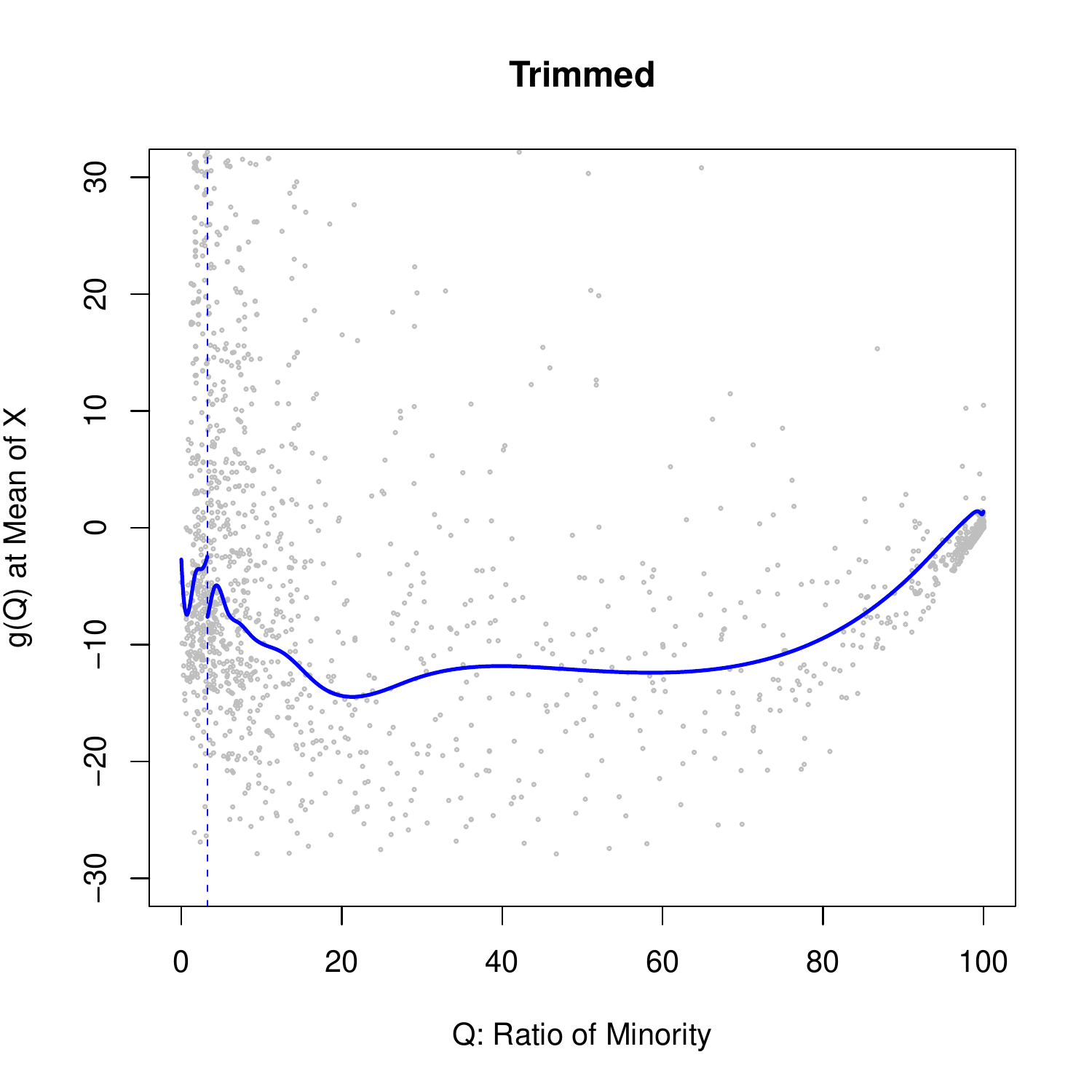}
\end{figure}

In summary, this empirical example shows that the proposed method works well in the
real empirical setup and is robust to outliers compared to the mean regression approach. The estimation results also confirm that there exists a tipping point in the racial segregation at the 0.25 quantile and that the tipping effect is heterogeneous over different quantiles.

\section{Conclusions}\label{sec:conclusion}

In this paper, we have developed $\ell_1$-penalized estimators of a high-dimensional quantile regression model 
with an unknown change point due to a covariate threshold. We have shown among other things that our estimator of the change point achieves an oracle property  without relying on a perfect covariate selection, thereby avoiding the need for  the minimum level condition on the signals of active covariates. 
We have illustrated the usefulness of our estimation methods via  Monte Carlo experiments and an application to tipping  in the racial segregation.

 In a recent working paper,  \cite{leonardi2016} consider a high-dimensional \emph{mean} regression model with 
 multiple change points whose number may grow as  the sample size increases.
  They have proposed a binary search algorithm to choose the number of change points. It is an important future research topic to develop a computationally efficient algorithm to detect multiple changes for high-dimensional quantile regression models.

\newpage

\appendix
\section*{Appendices}

In Appendix \ref{sect:tau:ci-algorithm}, we  provide the algorithm of constructing the confidence interval for $\tau_0$. 
In Appendix \ref{sec:Lasso-theory-assump}, we provide a set of regularity assumptions  to derive asymptotic properties of the proposed estimators in 
Sections   \ref{sec:Lasso-theory-case1} and \ref{sec:Lasso-theory-case2}.
In Appendix \ref{app:add-theory}, we provide sufficient conditions for 
the identification of $(\alpha_0, \tau_0)$ in \eqref{eq2.3add-QR} and show that an improved rate of convergence is possible for the excess risk by taking the second and third steps of estimation. 
To prove the theoretical results in the main text, we consider a general M-estimation framework that includes quantile regression as a special case.
We provide high-level regularity conditions on the loss function in Appendix \ref{high-level-appendix}. Under these conditions, we derive asymptotic properties
and then we verify all the high level assumptions for the quantile regression model in Appendix \ref{sec:proof-lasso}. 
Hence, our general results are of independent interest and can be applicable to other models, for example logistic regression models.
Appendices \ref{sec:add-sim-results-1} and \ref{sec:add-sim-results} provide additional simulation results, and Appendix \ref{app:emp-tables-figures}
gives additional results for the empirical example. 

\section{The Algorithm of Constructing the Confidence Interval for $\tau_0$}\label{sect:tau:ci-algorithm}

The detailed algorithm for constructing the confidence interval based on the Step 2 estimator is as follows:
\begin{enumerate}
\item Simulate two independent Poisson processes $N_1(-h)$ for $h<0$ and $N_2(h)$ for $f>0$ with the same jump rate $\widehat{f}_Q(\widehat{\tau})$ over $h \in [-\overline{H}n,\overline{H}n]$, where $f_Q(\cdot)$ is the pdf of $Q$, $n$ is the sample size, and $\overline{H}>0$ is a large constant. For estimating ${f}_Q(\cdot)$, we use the kernel density estimator with a normal density kernel and the rule-of-thumb bandwidth, $1.06\cdot \min\{s, (Q_{0.75}-Q_{0.25})/1.34 \}\cdot n^{-1/5}$, where $s$ is the standard deviation of $Q$ and $Q_{0.75}-Q_{0.25}$ is the interquartile range of $Q$. A Poisson process $N(h)$ is generated by the following algorithm:
	\begin{enumerate}
{\singlespacing
	\item Set $h=0$ and $k=0$.
	\item Generate $\epsilon$ from the uniform distribution on $[0,1]$.
	\item $h=h+[-(1/\widehat{f}_Q(\widehat{\tau}))\log(\epsilon)]$.
	\item If $h>n\overline{H}$, then stop and goto Step (f). Otherwise, set $k=k+1$ and $h_k=h$.
	\item Repeat Steps (b)--(d).
	\item The algorithm generates $\{h_k\}$ for $k=1,\ldots,\overline{K}$. Transform it into the Poisson process $N(h)\equiv \sum_{k=1}^{\overline{K}} 1\{h_k \le h\}$ for $h\in [0,n\overline{H}]$.
}
	\end{enumerate}
\item  Using the residuals $\{\breve{U}_i\}$ and the estimate $\breve{\delta}$ from Step 1, simulate ${\rho}_{1j}$ for $j=1,\ldots ,N_1(-h)$  
from the empirical distribution  of $ \{\dot{\rho}(\breve{U}_i-X_i^T\breve{\delta})-\dot{\rho}(\breve{U}_i)\}_{i\leq n}$; 
 simulate  ${\rho}_{2j}$ for $j=1,\ldots , N_2(h)$
from the empirical distribution  of    $  \{\dot{\rho}(\breve{U}_i+ X_i^T\breve{\delta})-\dot{\rho}(\breve{U}_i)\}_{i\leq n}$. Here  $\dot{\rho}\left( t\right)  \equiv t\left( \gamma -1\left\{ t\leq 0\right\} \right) $ is the check function as defined in Section \ref{sec:Lasso-theory-case1}.

\item Recall that 
\begin{equation*}
M\left( h\right) \equiv \sum_{i=1}^{N_{1}\left( -h\right) }\rho _{1i}1\left\{
h<0\right\} + \sum_{i=1}^{N_{2}\left( h\right) }\rho _{2i}1\left\{ h\geq
0\right\}
\end{equation*}from Section \ref{sec:Lasso-theory-case1}.
Construct the function $M(\cdot)$ for $h \in [-\overline{H}n,\overline{H}n]$ using values from Steps 1--3 above.  Find the smallest minimizer $h$ of $M(\cdot)$.
\item  Repeat Steps 1--4 above and generate $\{h_1,\ldots,h_B\}$.

\item Construct the $95\%$ confidence interval  of $\widehat{\tau}$ from the empirical distribution of $\{h_b\}$ by $[\widehat{\tau}+h_{0.025}/n,\widehat{\tau}+h_{0.975}/n]$, 
where $h_{0.025}$ and $h_{0.975}$ are $2.5$ and $97.5$ percentiles of $\{h_b\}$, respectively.
\end{enumerate}

It is straightforward to modify the algorithm above for the confidence intervals with Step 3a and Step 3b estimators. 
We set $\overline{H}=0.5$, and $B=1,000$ in this simulation studies.

\section{Assumptions for Oracle Properties}\label{sec:Lasso-theory-assump}

In this section, we list a set of assumptions that will be useful to derive asymptotic properties of the proposed estimators in 
Sections   \ref{sec:Lasso-theory-case1} and \ref{sec:Lasso-theory-case2}.
In the following, we divide our discussions into two important cases: (i) $\delta_0\neq0$ and $\tau_0$ is identified, and (ii) $\delta_0=0$ and thus $\tau_0$ is not identified. The asymptotic properties are derived under both cases.  Note that such a distinction is only needed for presenting  our theoretical results. In practice, we  do not need to know whether $\delta_0=0$  or not.


\begin{assum}[Underlying Distribution]
\label{ass3.4-a}
\begin{enumerate}[label=(\roman*)]
\item\label{ass3.4-a:itm1} 
The conditional distribution $Y|X,Q$ has a continuously differentiable density
function $f_{Y|X,Q}(y|x,q)$  with respect to $y$, whose derivative is denoted
by $\tilde{f}_{Y|X,Q}(y|x,q)$. 
\item\label{ass3.4-a:itm2}  
There are constants $C_1, C_2 >0$ such that for all $(y,x,q)$ in the support of $(Y,X,Q)$, 
\begin{align*}
|\tilde{f}_{Y|X,Q}(y|x,q)| \leq C_1, \quad  f_{Y|X,Q}(x(\tau_0)^T\alpha_0|x,q) \geq C_2.
\end{align*}
\item\label{ass3.4-a:itm5}  
When $\delta_0 \neq 0$, 
$\Gamma(\tau,\alpha_0)$ is positive definite uniformly in a neighborhood of $\tau_0$, where
\begin{equation*}
\Gamma(\tau,\alpha_0)\equiv
\frac{\partial^2 \mathbb{E}[\rho(Y, X_J(\tau)^T\alpha_{0J})]}{\partial\alpha_J\partial%
\alpha_J^T}= \mathbb{E}[X_J(\tau)X_J(\tau)^Tf_{Y|X,Q}(X(\tau)^T\alpha_0|X, Q)].
\end{equation*}
When $\delta_0 = 0$, the matrix
$
\mathbb{E}[X_{J(\beta_0)}X_{J(\beta_0)}^Tf_{Y|X,Q}(X_{J(\beta_{0})}^{T}\beta _{0J(\beta_{0})}|X, Q)]
$
is positive definite.
\end{enumerate}
\end{assum}

Conditions \ref{ass3.4-a:itm1}  and \ref{ass3.4-a:itm2} are standard assumptions for quantile regression models. 
To follow the notation in condition \ref{ass3.4-a:itm5}, recall that $\alpha_J$ denotes the subvector of $\alpha$ whose indices are in $J(\alpha_0)$. Expressions $X_J(\tau)$, $X_{J(\beta_{0})}$, $\alpha_{0J}$ and $\beta _{0J(\beta_{0})}$ can be understood similarly. 
Condition \ref{ass3.4-a:itm5} is a weak condition that imposes non-singularity of the Hessian matrix of the population objective function uniformly in a neighborhood of $\tau_0$ in case of $\delta_0 \neq 0$. This condition reduces to the usual 
non-singularity condition when $\delta_0 = 0$.

\subsection{Compatibility   Conditions}

We now make  an assumption that is an extension of  the well-known \textit{compatibility condition} (see \cite{bulmann}, Chapter 6).  
In particular, the following condition is  a uniform-in-$\tau$ version of the compatibility condition. 
Recall that for a $2p$ dimensional vector $\alpha$, we use $\alpha_J$ and $\alpha_{J^c}$ to denote its subvectors formed by indices in $J(\alpha_0)$ and $\{1,...,2p\}\setminus J(\alpha_0)$, respectively.

\begin{assum} [Compatibility Condition]
\label{ass2.7} 
\begin{enumerate}[label=(\roman*)]
\item\label{ass2.7:itm1}  
When $\delta_0 \neq 0$, there is a neighborhood $\mathcal{T}_0\subset\mathcal{T}$ of $\tau_0$, and  a constant $\phi>0 $ such that for all $\tau\in \mathcal{T}_0$ and all $\alpha  \in\mathbb{R}^{2p}$ satisfying 
$|\alpha_{J^c}|_1\leq 5|\alpha_J|_1$, 
\begin{equation}  \label{eq2.4}
\phi |\alpha_J|_1^2\leq s\alpha^T \mathbb{E}[X(\tau)X(\tau)^T] \alpha.
\end{equation}
\item\label{ass2.7:itm2}  
When $\delta_0 = 0$, there is   a constant $\phi>0 $ such that for all $\tau\in \mathcal{T}$ and all $\alpha  \in\mathbb{R}^{2p}$ satisfying 
$|\alpha_{J^c}|_1\leq 4|\alpha_J|_1$, 
\begin{equation}  \label{eq2.4-delta0}
\phi |\alpha_J|_1^2\leq s\alpha^T \mathbb{E}[X(\tau)X(\tau)^T] \alpha.
\end{equation}
\end{enumerate}
\end{assum}
 
Assumption \ref{ass2.7} requires that the compatibility condition hold uniformly in $\tau$ over a neighbourhood of $\tau_0$ when $\delta_0 \neq 0$ and
 over the entire parameter space $\mathcal{T}$ when $\delta_0 = 0$. 
 Note that this assumption  is
imposed on the population covariance matrix $\mathbb{E}[X(\tau)X(\tau)^T]$; thus, a simple
sufficient condition of Assumption \ref{ass2.7} is that the   smallest
eigenvalue of $\mathbb{E}[X(\tau)X(\tau)^T]$ is bounded away from zero uniformly in $%
\tau$. Even if $p>n$, the population covariance can still be
strictly positive definite while the sample covariance  is not.

\subsection{Restricted Nonlinearity Conditions}

In this subsection, we make an assumption called a \emph{restricted nonlinear condition} to deal with the quantile loss function. 
We extend condition D.4 in \cite{BC11}  to  accommodate  the possible existence of the unknown threshold in our model (specifically, a uniform-in-$\tau$ version of the restricted nonlinear condition as in the compatibility condition). 

Note that when $Q\leq \tau_0$, $X(\tau_0)^T\alpha_0=X^T\beta_0$, while
when $Q>\tau_0$, $X(\tau_0)^T\alpha_0=X^T\theta_0$,
where $\theta_0 \equiv \beta_0+\delta_0$.
 Hence we define the
``prediction balls" with radius $r$ and corresponding centers as follows: 
\begin{align}  \label{eq2.3}
\begin{split}
\mathcal{B}(\beta_0, r) &=\{\beta\in \mathcal{B} \subset \mathbb{R}^p:
\mathbb{E}[(X^T(\beta-\beta_0))^2 1\{Q\leq\tau_0\}]\leq r^2\}, \\
\mathcal{G}(\theta_0, r)&=\{\theta\in  \mathcal{G} \subset \mathbb{R}^p:
\mathbb{E}[(X^T(\theta-\theta_0))^2 1\{Q>\tau_0\}]\leq r^2\},
\end{split}
\end{align}
where $\mathcal{B}$ and $\mathcal{G}$ are parameter spaces for $\beta_0$
and $\theta_0$, respectively. 
To deal with the case that $\delta_0 = 0$, we also
define 
\begin{align}  \label{eq2.3-delta0}
\begin{split}
\mathcal{\tilde{B}}(\beta_0, r, \tau) 
&=\{\beta\in \mathcal{B} \subset \mathbb{R}^p: \mathbb{E}[(X^T(\beta-\beta_0))^2 1\left\{ Q\leq \tau \right\}]\leq r^2\}, \\
\mathcal{\tilde{G}}(\beta_0, r, \tau) 
&=\{\theta \in \mathcal{G} \subset \mathbb{R}^p: \mathbb{E}[(X^T(\theta-\beta_0))^2 1\left\{ Q > \tau \right\}]\leq r^2\}.
\end{split}
\end{align}

\begin{assum}[Restricted Nonlinearity]
\label{ass-rn}
The following holds for the constants $C_1$ and $C_2$ defined in Assumption \ref{ass3.4-a} \ref{ass3.4-a:itm2}.
\begin{enumerate}[label=(\roman*)]
\item\label{ass3.4-a:itm3}  
When $\delta_0 \neq 0$, there exists a constant $r^\ast_{QR} > 0$ such that
\begin{align}\label{nl-condition-qr}
  \inf_{\beta \in \mathcal{B}(\beta_0,r^\ast_{QR}), \beta  \neq \beta_0} \frac{\mathbb{E}[| X^{T}(\beta-\beta_0)|^{2} 1\{Q \leq \tau_0 \}]^{3/2}}{\mathbb{E}[| X^{T}(\beta-\beta_0)|^{3} 1\{Q \leq \tau_0 \}]} \geq r^\ast_{QR} \frac{2C_1}{3C_2} > 0
\end{align}
and that 
\begin{align}\label{nl-condition-qr-2}
  \inf_{\theta \in \mathcal{G}(\theta_0,r^\ast_{QR}), \theta  \neq \theta_0} \frac{\mathbb{E}[| X^{T}(\theta-\theta_0)|^{2} 1\{Q > \tau_0 \}]^{3/2}}{\mathbb{E}[| X^{T}(\theta-\theta_0)|^{3} 1\{Q > \tau_0 \}]} \geq r^\ast_{QR} \frac{2C_1}{3C_2} > 0.
\end{align}
\item\label{ass3.4-a:itm3-delta0}  
When $\delta_0 = 0$, there exists a constant $r^\ast_{QR} > 0$ such that
\begin{align}\label{nl-condition-qr-delta0}
\inf_{\tau \in \mathcal{T}}  \inf_{\beta \in \mathcal{\tilde{B}}(\beta_0, r^\ast_{QR}, \tau), \beta  \neq \beta_0} \frac{\mathbb{E}[| X^{T}(\beta-\beta_0)|^{2} 1\{Q \leq \tau \}]^{3/2}}{\mathbb{E}[| X^{T}(\beta-\beta_0)|^{3} 1\{Q \leq \tau \}]} \geq r^\ast_{QR} \frac{2C_1}{3C_2} > 0
\end{align}
and that 
\begin{align}\label{nl-condition-qr-2-delta0}
\inf_{\tau \in \mathcal{T}}  \inf_{\theta \in \mathcal{\tilde{G}}(\beta_0, r^\ast_{QR}, \tau), \beta  \neq \beta_0} \frac{\mathbb{E}[| X^{T}(\theta-\theta_0)|^{2} 1\{Q > \tau \}]^{3/2}}{\mathbb{E}[| X^{T}(\theta-\theta_0)|^{3} 1\{Q > \tau \}]} \geq r^\ast_{QR} \frac{2C_1}{3C_2} > 0.
\end{align}
\end{enumerate}
\end{assum}


\begin{remark}
As pointed out by \cite{BC11}, 
if $X^{T}c$ follows  a logconcave distribution conditional on $Q$
for any nonzero $c$ (e.g. if the distribution of $X$ is
multivariate normal), then  Theorem 5.22 of \cite{logcave:07} and the H\"{o}lder inequality 
imply that for all $\alpha \in \mathcal{A}$, 
\begin{align*}
\mathbb{E}[|X(\tau_0)^{T}(\alpha-\alpha_0)|^{3}|Q] \leq 
6 \left\{ \mathbb{E}[ \{X(\tau_0)^{T}(\alpha-\alpha_0)\}^{2}|Q] \right\}^{3/2},
\end{align*} 
which provides a sufficient condition for Assumption \ref{ass-rn}.  On the other hand, this assumption can hold more generally since 
equations \eqref{nl-condition-qr}-\eqref{nl-condition-qr-2-delta0} in Assumption \ref{ass-rn} need to hold only locally around true parameters $\alpha_0$.
\end{remark}

\subsection{Additional Assumptions When $\delta_0\neq0$}

We first describe the additional conditions on the distribution of $(X, Q)$. 

\begin{assum}[Additional Conditions on the Distribution of $(X, Q)$]
\label{a:dist-Q-add}
Assume $\delta_0 \neq 0$. In addition, there exists a neighborhood $\mathcal{T}_0 \subset \mathcal{T}$ of $\tau_0$ that satisfies the following.
\begin{enumerate}[label=(\roman*)]
\item\label{a:dist-Q:itm2} 
$Q$ has a density function $f_Q(\cdot)$ that  is continuous and bounded away from
zero on $\mathcal{T}_0$.
\item\label{a:dist-Q:itm3}  
Let 
$\tilde{X}$ denote all the components of $X$ excluding $Q$ in case that $Q$ is an element of $X$. 
The conditional distribution of $Q$ given $\tilde{X}$ has a density function $f_{Q|\tilde{X}}(q|\tilde{x})$ that is   bounded uniformly in both $q\in\mathcal{T}_0$ and $\tilde{x}$.
\item\label{a:threshold}
There exists $M_3 > 0$ such that $M_3^{-1} \leq \mathbb{E}[(X^T\delta_0)^2|Q=\tau] \leq    M_3$ for all $\tau \in \mathcal{T}_{0}$. 
\end{enumerate}
\end{assum}

Condition   \ref{a:dist-Q:itm2} implies that $\mathbb{P} \left\{ \left\vert Q-\tau
_{0}\right\vert <\varepsilon \right\} >0$ for any $\varepsilon >0$,
and condition  \ref{a:dist-Q:itm3} requires that the conditional density of $Q$ given $\tilde{X}$ be uniformly bounded. 
When $\tau_0$ is identified, we require $\delta_0$ to be considerably different from zero. This requirement is 
given in condition  \ref{a:threshold}. Note that this condition is concerned with $\mathbb{E}[ \left( X^{T}\delta _{0}\right) ^{2}|Q=\tau ]$, which is an important quantity to develop asymptotic results
when $\delta_0 \neq 0$.
Note that condition \ref{a:threshold}  is a local condition with respect to $\tau$ in the sense that it has to hold only locally in a neighborhood of $\tau_0$. 

The following additional moment conditions are useful to derive our theoretical results.

\begin{assum}[Moment Bounds]
\label{a:moment}
\begin{enumerate}[label=(\roman*)]
\item\label{a:moment:itm1}
There exist finite positive constants $\widetilde{C}$ and $r$ such that for
all $ \beta \in \mathcal{B}(\beta_0, r)$
and for any $ \theta \in \mathcal{G}(\theta_0, r) $, 
\begin{align*}
 \mathbb{E} [|X^T (\beta -\beta_0 )| 1\{Q > \tau_0 \} ] 	
  &\leq \widetilde{C} \; \mathbb{E}[|X^T (\beta -\beta_0 )| 1\{Q \leq \tau_0 \}], \\ 
 \mathbb{E}[|X^T (\theta -\theta_0 )| 1\{Q \leq \tau_0 \}] 	
  &\leq \widetilde{C}  \; \mathbb{E}[|X^T (\theta -\theta_0 )| 1\{Q  > \tau_0 \}].	
\end{align*}
\item\label{a:moment:itm2}
There exist finite positive constants $M, r$ and the neighborhood $\mathcal{T}_0$ of $\tau_0$  such that 
\begin{align*}  
\mathbb{E} \left[ (X^T[(\theta-\beta)-(\theta_0-\beta_0)])^2 \big|Q  = \tau \right] &\leq M, \\
\mathbb{E}[|X^T(\beta-\beta_0)|\big{|}Q=\tau] &\leq M,   \\
\mathbb{E}[|X^T(\theta-\theta_0)|\big{|}Q=\tau] &\leq M, \\
\sup_{\tau\in\mathcal{T}_0: \tau > \tau_0} \mathbb{E} \left[ |X^T(\beta-\beta_0)| \frac{ 1\{\tau _0 < Q \leq \tau \}}{ (\tau-\tau_0) } \right]
& \leq M  \mathbb{E}[|X^T(\beta-\beta_0)|1\{Q \leq \tau_0\}], \\ 
\sup_{\tau\in\mathcal{T}_0: \tau < \tau_0}  \mathbb{E}\left[ |X^T(\theta-\theta_0)| \frac{1\{\tau < Q \leq \tau_0 \} }{ (\tau_0-\tau) } \right]
& \leq M \mathbb{E}[|X^T(\theta-\theta_0)|1\{Q>\tau_0\}],
\end{align*}
uniformly in  $\beta\in\mathcal{B}(\beta_0, r)$, $\theta\in\mathcal{G}(\theta_0, r)$ and $\tau\in\mathcal{T}_0$.
\end{enumerate}
\end{assum}

\begin{remark}
Condition \ref{a:moment:itm1}
requires that $Q$ have non-negligible support on both sides of $%
\tau_0$. This condition can be viewed as a rank condition for identification of $\alpha_0$. 
In the standard threshold model with a fixed dimension, our condition is trivially satisfied by the rank condition such that 
both $ \mathbb{E} [XX^T 1\{Q \leq \tau_0 \}]$ and $\mathbb{E} [XX^T 1\{Q> \tau_0 \}]$ are positive definite
(see e.g. \cite{chan1993consistency} or \cite{hansen2000sample}). 
	If the rank condition fails, the regression coefficient may not be identified and thus affecting the identification of the change point. In the high-dimensional setup, it is undesirable to impose the same rank condition due to the high-dimensionality. Instead, we replace it with  condition \ref{a:moment:itm1}. 
Condition \ref{a:moment:itm2}  requires the boundedness and certain smoothness
of the conditional expectation functions $\mathbb{E} [ (X^T[(\theta-\beta)-(\theta_0-\beta_0)])^2 \big|Q  = \tau ]$,
$\mathbb{E}[|X^T(\beta-\beta_0)|\big{|}Q=\tau]$,
and $\mathbb{E}[|X^T(\theta-\theta_0)|\big{|}Q=\tau]$, 
 and  prohibits degeneracy in one regime.
The last two inequalities in condition \ref{a:moment:itm2} are satisfied if 
\[
\frac{\mathbb{E}\left[\left|X^{T}\beta\right||Q=\tau\right]}{\mathbb{E}\left[\left|X^{T}\beta\right|\right]}\leq M
\]
 for all $\tau\in\mathcal{T}_{0}$ and for all $\beta$ satisfying  $0<\mathbb{E}\left|X^{T}\beta\right|\leq c$
for some small $c>0$. 
\end{remark}
 
\section{Additional Theoretical Results}\label{app:add-theory} 

In this part of the appendix, we consider the identification of $(\alpha_0, \tau_0)$ in \eqref{eq2.3add-QR} and show that an improved rate of convergence is possible for the excess risk by taking the second and third steps of estimation.

\subsection{Identification}\label{app:iden} 
 
The following theorem establishes the identification of $(\alpha_0, \tau_0)$ in \eqref{eq2.3add-QR}.

\begin{thm}[Identification]\label{iden-thm}
\begin{enumerate}[label=(\roman*)]
\item
Assume that $\delta_0 \neq 0$ and that the $\gamma$-th conditional quantile of $Y$ given $X$ and $Q$ is uniquely given as 
\begin{align}\label{conditional-qr-form}
\text{\emph{Quantile}}_{Y|X,Q}(\tau|X=x,Q=q) = x^T\beta_0+x^T\delta_0 1\{q>\tau_0\}.
\end{align}
\item The distribution of $Q$ is absolutely continuous with respect to Lebesgue measure. 
\item $\tau_0 \in \mathcal{T} \equiv \left[ \underline{\tau },\overline{\tau }\right] $, which is contained in 
a strict interior of the support of $Q$. 
\item For any $\tau_1 \in \mathcal{T}$ satisfying $\tau_1 <\tau_0$, we have that $P(\tau_1<Q\leq\tau_0)>0$; for any $\tau_2 \in \mathcal{T}$ satisfying $\tau_2  >\tau_0$, $P(\tau_0 < Q \leq \tau_2)>0$.
\item For every $\tau  \in \mathcal{T}$, we have that
$\inf_{q\in[\underline{\tau},\bar{\tau}]}\lambda_{\min} \{\mathbb{E}(X(\tau)X(\tau)^T|Q=q)\}>0$,
where $X(\tau) \equiv (X^{T},X^{T}1\{Q>\tau \})^{T}$.
\item $\inf_{q\in[\underline{\tau},\bar{\tau}]}\mathbb{E}((X^T\delta_0)^2|Q=q)>0$.
\end{enumerate}
Then $(\alpha_0, \tau_0)$ is identified. 
\end{thm}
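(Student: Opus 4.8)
The plan is to show that the excess risk $R(\alpha,\tau)=\mathbb{E}\rho(Y,X(\tau)^T\alpha)-\mathbb{E}\rho(Y,X(\tau_0)^T\alpha_0)$, which is always nonnegative and is minimized (at value $0$) by $(\alpha_0,\tau_0)$, vanishes \emph{only} at $(\alpha_0,\tau_0)$; this is exactly identification. First I would condition on $(X,Q)$ and use the defining property of the check loss: under condition (i), $m(X,Q):=X(\tau_0)^T\alpha_0$ is the \emph{unique} minimizer of $c\mapsto\mathbb{E}[\rho(Y,c)\,|\,X,Q]$. Hence the conditional excess loss $\mathbb{E}[\rho(Y,X(\tau)^T\alpha)-\rho(Y,m(X,Q))\,|\,X,Q]$ is nonnegative and, by uniqueness, strictly positive on $\{X(\tau)^T\alpha\neq m(X,Q)\}$. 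Taking expectations reduces the whole problem to the implication
\[
X(\tau)^T\alpha=X(\tau_0)^T\alpha_0\ \text{a.s.}\ \implies\ (\alpha,\tau)=(\alpha_0,\tau_0).
\]
The nonnegativity and strictness can be made concrete through Knight's identity, which writes the conditional excess loss as $\int_0^{X(\tau)^T\alpha-m}\bigl[F_{Y|X,Q}(m+t\,|\,X,Q)-\gamma\bigr]\,dt$; this is nonnegative because $F_{Y|X,Q}(m\,|\,X,Q)=\gamma$ and $F_{Y|X,Q}$ is nondecreasing, and it is positive off the set where the linear predictors agree.

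Next I would establish the displayed implication by partitioning the support of $Q$. Writing $X(\tau)^T\alpha=X^T\beta+X^T\delta\,1\{Q>\tau\}$ and assuming without loss of generality $\tau\le\tau_0$, the almost-sure identity splits into three regimes: on $\{Q\le\tau\}$ it reads $X^T(\beta-\beta_0)=0$; on $\{\tau<Q\le\tau_0\}$ it reads $X^T(\beta+\delta-\beta_0)=0$; and on $\{Q>\tau_0\}$ it reads $X^T(\beta+\delta-\beta_0-\delta_0)=0$. Condition (iii) (strict interior of the support) guarantees that $\{Q\le\tau\}$ and $\{Q>\tau_0\}$ have positive probability, and condition (ii) makes the boundary sets negligible. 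Since the top-left $p\times p$ block of $\mathbb{E}[X(\tau)X(\tau)^T\,|\,Q=q]$ is $\mathbb{E}[XX^T\,|\,Q=q]$, condition (v) renders $\mathbb{E}[XX^T\,|\,Q=q]$ positive definite for every $q$. Taking conditional second moments on $\{Q\le\tau\}$ then forces $\beta=\beta_0$, and the same computation on $\{Q>\tau_0\}$ forces $\beta+\delta=\beta_0+\delta_0$, hence $\delta=\delta_0$.

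Finally I would rule out $\tau\neq\tau_0$. If $\tau<\tau_0$, condition (iv) ensures $\{\tau<Q\le\tau_0\}$ has positive probability; on this set the middle identity, after substituting $\beta=\beta_0$ and $\delta=\delta_0$, collapses to $X^T\delta_0=0$ a.s., so $\mathbb{E}[(X^T\delta_0)^2\,|\,Q=q]=0$ on a positive-measure set of $q$, contradicting condition (vi). The symmetric argument, using the regime $\{\tau_0<Q\le\tau\}$ together with the second half of condition (iv), excludes $\tau>\tau_0$, leaving $\tau=\tau_0$. I expect the main obstacle to be the first step: carefully justifying, from the uniqueness in condition (i) and the Knight-identity representation, that the conditional excess loss is \emph{strictly} positive off the agreement set, so that $R(\alpha,\tau)=0$ genuinely forces almost-sure equality of predictors rather than mere equality of an integrated quantity. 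The regime decomposition and the eigenvalue arguments in the remaining steps are then routine.
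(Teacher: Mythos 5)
Your proposal is correct and follows essentially the same route as the paper's proof: reduce identification to the implication $X(\tau)^T\alpha = X(\tau_0)^T\alpha_0$ a.s. $\Rightarrow (\alpha,\tau)=(\alpha_0,\tau_0)$ (the paper asserts this reduction, which you justify via Knight's identity exactly as the paper does elsewhere when verifying its excess-risk conditions), then use the regime decomposition of the predictor difference together with condition (v) to force $\alpha=\alpha_0$ and condition (vi) with (iv) to force $\tau=\tau_0$. The only cosmetic difference is that you treat the events $\{Q\le\tau\}$ and $\{Q>\tau_0\}$ separately using the $p\times p$ block of the conditional second-moment matrix, whereas the paper handles them jointly on the single event $\{Q\notin(\tau,\tau_0]\}$ with the full $2p\times 2p$ matrix; both versions share the same (minor, implicit) reliance on the positive-definiteness region $[\underline{\tau},\bar{\tau}]$ overlapping the relevant events.
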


Theorem  \ref{iden-thm} establishes sufficient  conditions under which  $\alpha_{0}$ and $\tau_{0}$ are identified. Conditions (i)-(v) in Theorem  \ref{iden-thm} are standard.  The non-singularity condition (v)  is uniform in $\tau \in \mathcal{T}$  and can be viewed as a natural extension of the usual rank condition in the linear model. Condition (vi) is a condition that imposes that the model is well separated from the case that there is no change point in the model.

\begin{proof}[Proof of Theorem \ref{iden-thm}]
Since the conditional quantile function is uniquely  given as \eqref{conditional-qr-form}, it suffices to show that
\[
X\left(  \tau\right)  ^T\alpha=X\left(  \tau_{0}\right)  ^{\prime
}\alpha_{0}\ \text{a.s.\ } \ \  \Longleftrightarrow \ \ \alpha=\alpha
_{0}\ \text{and }\tau=\tau_{0}.
\]
To begin with, write, assuming $\tau \leq \tau_{0}$, 
\begin{align}\label{iden:eq:1}
\begin{split}
\mathcal{D}\left(  \alpha,\tau\right)  &\equiv X\left(  \tau\right)  ^T%
\alpha-X\left(  \tau_{0}\right)  ^T\alpha_{0} \\
&=X^T\left(
\beta-\beta_{0}\right)  +X^T\left(  \delta-\delta_{0}\right)  1\left\{
Q>\tau\right\}  +X^T\delta_{0}1\left\{  \tau<Q\leq\tau_{0}\right\}.
\end{split}
\end{align}
Now suppose $\mathcal{D}\left(  \alpha,\tau\right)  $ in 
(\ref{iden:eq:1}) is zero a.s. Then, it is also zero on the following event $E$:
\begin{align}\label{event-E-def}
E \equiv \{1\{\tau<Q\leq\tau_0\}=0\}=\{Q\notin(\tau,\tau_0]\}.
\end{align}
on the other hand, $P(E)>0$ because $P(E)=P(Q\notin(\tau,\tau_0])\geq P(Q>\tau_0)>0.$
However, on event $E$,
\[
\mathcal{D}\left(  \alpha,\tau\right)  =X^T\left(  \beta-\beta
_{0}\right)  +X^T\left(  \delta-\delta_{0}\right)  1\left\{
Q>\tau\right\}  = 0 \ \text{a.s.\ } \ \
\]
Thus, we have that 
  $$
  X(\tau)^T(\alpha-\alpha_0)1_{E}=0 \ \text{a.s.\ } \ \
  $$
 This is equivalent to 
 $$
 \mathbb{E}\{[X(\tau)^T(\alpha-\alpha_0)]^21_E\}=  \mathbb{E}\{ \mathbb{E}([X(\tau)^T(\alpha-\alpha_0)]^2|Q)1_E\}=0.
 $$
However, we have that
\begin{align*}
0 &=    \mathbb{E}\{ \mathbb{E}([X(\tau)^T(\alpha-\alpha_0)]^2|Q)1_E\} \\
&\geq \inf_{q\in[\underline{\tau},\bar{\tau}]} \mathbb{E}\{[X(\tau)^T(\alpha-\alpha_0)]^2|Q=q\}P(E) \\
&   \geq \inf_{q\in[\underline{\tau},\bar{\tau}]}\lambda_{\min} \{\mathbb{E}(X(\tau)X(\tau)^T|Q=q)\}P(E)  \|\alpha-\alpha_0\|^2_2.
\end{align*}
This result combined with \eqref{iden:eq:1} implies that
\[
X^T\delta_{0}1\left\{  \tau<Q\leq\tau_{0}\right\}  =0 \ \text{a.s.\ } \ \
\]
This also implies that
\begin{align*}
0 & =\mathbb{E}[ (X^T\delta_0)^21\{\tau<Q\leq\tau_0\}] \\
&=\mathbb{E}\{1\{\tau<Q\leq\tau_0\}\mathbb{E}((X^T\delta_0)^2|Q)\} \\
& \geq\inf_{q\in[\underline{\tau},\bar{\tau}]}\mathbb{E}((X^T\delta_0)^2|Q=q)P(\tau<Q\leq\tau_0).
\end{align*}
Since it is  assumed that $\inf_{q\in[\underline{\tau},\bar{\tau}]}\mathbb{E}((X^T\delta_0)^2|Q=q)>0$, thus $P(\tau<Q\leq\tau_0)=0$. However, we also assume that $P(\tau<Q\leq\tau_0)>0$ if $\tau<\tau_0$. Hence we must have $\tau=\tau_0$.

Now consider the other case, that is $\tau < \tau_0$. In this case, we have that 
\begin{align}\label{iden:eq:2}
\mathcal{D}\left(  \alpha,\tau\right)  
&=X^T\left(
\beta-\beta_{0}\right)  +X^T\left(  \delta-\delta_{0}\right)  1\left\{
Q>\tau\right\}  +X^T\delta_{0}1\left\{  \tau_0 <Q\leq\tau \right\}.
\end{align}
Hence, in this case, modifying the definition of $E$ in \eqref{event-E-def} to be 
\begin{align*}
E \equiv \left\{  1\left\{  \tau_0 <Q\leq\tau
\right\}  =0\right\} =\{Q\notin(\tau_0,\tau]\}.
\end{align*}
and proceeding the arguments identical to those above gives the desired result.
\end{proof} 

\subsection{Improved Risk Consistency}\label{app:add:theory2}
  
The following theorem shows that an improved rate of convergence is possible for the excess risk by taking the second and third steps of estimation. 
Recall that  
\begin{align*}
\omega_{n} \propto  \sqrt{\frac{\log (p \vee n)}{n}} \;.
\end{align*}

\begin{thm}[Improved Risk Consistency]
\label{l2.1-improved}
Let Assumption \ref{a:setting} hold. 
In addition, assume that $|\widehat{\tau}-\tau _{0}|=O_P (n^{-1})$  when $\delta_0 \neq 0$.
 Then, whether $\delta_0 = 0$ or not, 
\[
R\left(\widehat{\alpha},\widehat{\tau}\right)=O_{P}\left(\omega_{n}s\right).
\]
\end{thm}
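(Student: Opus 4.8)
The plan is to reproduce the slow-rate (prediction-consistency) argument behind Theorem \ref{l2.1}, but to replace the uniform-in-$\tau$ control of the empirical process---which is precisely what forces the inflated tuning level $\kappa_n$ there---by a control that only needs the \emph{realized} threshold $\widehat{\tau}$, thereby recovering the standard level $\omega_n\propto\sqrt{\log(p\vee n)/n}$. First I would split the excess risk as
\[
R(\widehat{\alpha},\widehat{\tau})=\big[R(\widehat{\alpha},\widehat{\tau})-R(\alpha_0,\widehat{\tau})\big]+R(\alpha_0,\widehat{\tau}),
\]
where the last term measures only the effect of mislocating the threshold while holding the coefficient fixed at $\alpha_0$.

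For the second term, when $\delta_0=0$ one has $X(\tau)^T\alpha_0=X^T\beta_0$ for every $\tau$, so $R(\alpha_0,\widehat{\tau})=0$ identically. When $\delta_0\neq0$, the difference $\rho(Y,X(\widehat{\tau})^T\alpha_0)-\rho(Y,X(\tau_0)^T\alpha_0)$ is supported on the event that $Q$ lies between $\widehat{\tau}$ and $\tau_0$; using the Lipschitz property of the check function together with condition \ref{a:dist-Q:itm1} (which bounds $\mathbb{P}(\tau_1<Q\le\tau_2)$ by $K_2(\tau_2-\tau_1)$) and the conditional second-moment bound in condition \ref{a:threshold:itm1}, I would conclude $R(\alpha_0,\widehat{\tau})=O_P(|\widehat{\tau}-\tau_0|)=O_P(n^{-1})$, which is $o_P(\omega_n s)$ since $\omega_n s\gg n^{-1}$. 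Thus it remains to bound the first bracket.

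For the first bracket I would invoke the optimality of $\widehat{\alpha}$ at $\tau=\widehat{\tau}$ to write the basic inequality
\[
R_n(\widehat{\alpha},\widehat{\tau})-R_n(\alpha_0,\widehat{\tau})\le \omega_n\sum_{j=1}^{2p}D_j(\widehat{\tau})\big(|\alpha_{0j}|-|\widehat{\alpha}_j|\big),
\]
and then substitute $R_n(\widehat{\alpha},\widehat{\tau})-R_n(\alpha_0,\widehat{\tau})=[R(\widehat{\alpha},\widehat{\tau})-R(\alpha_0,\widehat{\tau})]+[\nu_n(\widehat{\alpha},\widehat{\tau})-\nu_n(\alpha_0,\widehat{\tau})]$, with $\nu_n$ the empirical process defined in the Remark following Theorem \ref{l2.1}. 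The decisive step is to show that, on an event of probability approaching one,
\[
\big|\nu_n(\widehat{\alpha},\widehat{\tau})-\nu_n(\alpha_0,\widehat{\tau})\big|\le \tfrac12\,\omega_n\sum_{j=1}^{2p}D_j(\widehat{\tau})\,|\widehat{\alpha}_j-\alpha_{0j}|,
\]
so that the increment is absorbed by the penalty. I would obtain this by symmetrization followed by the Ledoux--Talagrand contraction principle (each $t\mapsto\rho(Y_i,t)$ is Lipschitz with constant $\max(\gamma,1-\gamma)\le1$), reducing matters to the linear multiplier process $\max_{1\le j\le 2p}|n^{-1}\sum_i\eps_i X_{ij}(\widehat{\tau})/D_j(\widehat{\tau})|$, which must be shown to be $O_P(\omega_n)$. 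Granting this, the usual split of $\sum_j D_j(\widehat{\tau})(|\alpha_{0j}|-|\widehat{\alpha}_j|)$ across $J(\alpha_0)$ and its complement yields $R(\widehat{\alpha},\widehat{\tau})-R(\alpha_0,\widehat{\tau})\le \tfrac32\omega_n\sum_{j\in J(\alpha_0)}D_j(\widehat{\tau})|\widehat{\alpha}_j-\alpha_{0j}|$; since $\mathcal{A}$ is compact with $|\alpha|_\infty\le M_1$ (condition \ref{a:setting:itm2}), $D_j(\widehat{\tau})\le\overline{D}$ (condition \ref{a:setting:itm3}) and $|J(\alpha_0)|=s$, the right-hand side is at most $3M_1\overline{D}\,\omega_n s=O_P(\omega_n s)$.

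The main obstacle is exactly the claim that the multiplier process is $O_P(\omega_n)$ rather than $O_P(\kappa_n)$ at the \emph{random} threshold $\widehat{\tau}$. When $\delta_0\neq0$ this is where the hypothesis $|\widehat{\tau}-\tau_0|=O_P(n^{-1})$ enters: I would decompose $X_{ij}(\tau)/D_j(\tau)$ about $\tau_0$, control its value at $\tau_0$ by a Bernstein/maximal inequality over the $2p$ coordinates (using the moment bound in condition \ref{a:setting:itm1}) to get $O_P(\sqrt{\log(p\vee n)/n})=O_P(\omega_n)$, and show that the increment over $|\tau-\tau_0|\le C/n$ involves only the indicators $1\{Q_i>\tau\}-1\{Q_i>\tau_0\}$, which are nonzero for at most $O_P(1)$ observations by condition \ref{a:dist-Q:itm1}; this increment is therefore $o_P(\omega_n)$ and crucially does \emph{not} carry the extra $(\log p)(\log n)$ factor that the uniform-in-$\tau$ equicontinuity in Theorem \ref{l2.1} incurs. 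When $\delta_0=0$ the reference predictor $X(\tau)^T\alpha_0=X^T\beta_0$ is threshold-free, so $R(\alpha_0,\widehat{\tau})=0$ and the increment need only be controlled for the penalized fit in the augmented design at the given $\widehat{\tau}$; keeping the rate at $\omega_n$ there (rather than falling back to the uniform level) is the delicate point, and I expect it to be the genuinely hard part of the argument. In both cases the remaining manipulations are the routine cone and compactness steps already used for Theorem \ref{l2.1}.
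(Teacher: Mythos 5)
Your overall architecture (split off $R(\alpha_0,\widehat{\tau})$, bound it by $O_P(|\widehat{\tau}-\tau_0|)$ or zero, then run a basic-inequality argument for the remainder) matches the paper's, and your treatment of $R(\alpha_0,\widehat{\tau})$ is essentially the one used there. The genuine gap is in your ``decisive step.'' You want the pointwise bound
$|\nu_n(\widehat{\alpha},\widehat{\tau})-\nu_n(\alpha_0,\widehat{\tau})|\le \tfrac12\,\omega_n\sum_j D_j(\widehat{\tau})|\widehat{\alpha}_j-\alpha_{0j}|$
at the \emph{realized} $\widehat{\alpha}$, and you propose to get it from symmetrization plus contraction. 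But symmetrization controls an expected supremum over a fixed class, not the process at a data-dependent point; to make the bound scale with $|\widehat{\alpha}-\alpha_0|_1$ you must either (a) bound the ratio $\sup_{\alpha,\tau}|\nu_n(\alpha,\tau)-\nu_n(\alpha_0,\tau)|/|\alpha-\alpha_0|_1$, which is exactly \eqref{emp2} of Lemma \ref{lem-emp} and carries the peeling factor $\lceil\log_2(m_{2n}/m_{1n})\rceil\asymp\log_2(p/s)$ --- the very factor that inflates $\kappa_n$ above $\omega_n$ in Theorem \ref{l2.1} --- or (b) first establish an a priori localization $|\widehat{\alpha}-\alpha_0|_1=O_P(s)$ so that the ball-form bound \eqref{emp1} with radius $O(s)$ applies at rate $c_{np}\cdot O(s)=O(\omega_n s)$. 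You supply neither. Moreover, your proposed repair --- decomposing the multiplier process about $\tau_0$ and noting that only $O_P(1)$ observations have $Q_i$ between $\widehat{\tau}$ and $\tau_0$ --- attacks the wrong direction: \eqref{emp1} already controls the increment \emph{uniformly over all of} $\mathcal{T}$ at rate $c_{np}\propto\sqrt{\log(np)/n}=O(\omega_n)$ per unit of $\ell_1$-radius, with no extra logarithms. The inflation in Step 1 comes from the dyadic peeling in the $\alpha$-direction, not from the supremum over $\tau$.

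The paper closes exactly this hole with a convex-interpolation localization device. For each $\tau$ it sets $t(\tau)=M^*/(M^*+|\widehat{\alpha}(\tau)-\alpha_0|_1)$ and $\bar{\alpha}(\tau)=t(\tau)\widehat{\alpha}(\tau)+(1-t(\tau))\alpha_0$, which lies in the $\ell_1$-ball of radius $M^*$ \emph{by construction}; convexity of the penalized objective in $\alpha$ transfers the basic inequality from $\widehat{\alpha}(\tau)$ to $\bar{\alpha}(\tau)$, the ball-form bound \eqref{emp1} then gives $Z_{M^*}=O_P(\omega_n M^*)$ with no peeling, and the choice of $M^*$ forces $|\bar{\alpha}(\tau)-\alpha_0|_1\le M^*/2$, hence $|\widehat{\alpha}(\tau)-\alpha_0|_1\le M^*$, after which the argument is rerun for $\widehat{\alpha}(\tau)$ itself. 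The restriction of $\tau$ to a shrinking neighborhood $T_n$ of $\tau_0$ (using $|\widehat{\tau}-\tau_0|=O_P(n^{-1})$) enters only to guarantee $\max_{\tau\in T_n}R(\alpha_0,\tau)=o(\omega_n^2 s)$ and hence $M^*=O(s)$; when $\delta_0=0$ one simply takes $T_n=\mathcal{T}$ since $R(\alpha_0,\tau)\equiv 0$. Without some version of this localization step your argument cannot reach the rate $\omega_n s$, so as written the proof does not go through.
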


The proof of this theorem is given in Appendix \ref{proof-theorem-c-2}.
For the sake of not introducing additional assumptions in this section, we have  assumed in Theorem \ref{l2.1-improved} that $|\widehat{\tau}-\tau _{0}|=O_P (n^{-1})$ when $\tau_0$ is identifiable.
Its formal statement  is given by  Theorem \ref{thm:tau-2nd} in Section \ref{sec:Lasso-theory-case1}.

\begin{remark}
As in  Theorem \ref{l2.1}, 
the risk consistency part of Theorem \ref{l2.1-improved} holds whether or not $\delta_0 = 0$.
We obtain the improved rate of convergence in probability for the excess risk by combining the fact that
 our objective function is convex with respect to $\alpha$ given each $\tau$ with  the second-step estimation results that 
(i)  if $\delta \neq 0$, then $\widehat{\tau}$ is within a shrinking local neighborhood of $\tau_0$, and
(ii) when $\delta_{0}=0$, $\widehat{\tau}$ does not affect the excess risk in the sense that $R\left(\alpha_{0},\tau\right) = 0$ 
 for all $\tau \in \mathcal{T}$ .  
\end{remark}

\section{Regularity conditions on the general loss function}\label{high-level-appendix}

Let $Y$ be a scalar variable of outcome and   $X$ be a  vector of $p$-dimensional observed characteristics. Suppose there is an observable scalar variable $Q$ such that  the conditional distribution of $Y$  or some feature of that (given $X$) depends on:
$$
X^T\beta_01\{Q\leq \tau_0\}+ X^T\theta_01\{Q>\tau_0\}=X^T\beta_0+X^T\delta_01\{Q>\tau_0\},
$$
where $\delta_0=\theta_0-\beta_0$.  Let   $\rho:  \mathbb{R}\times\mathbb{R}\rightarrow\mathbb{R}^+$ be a  loss function under consideration, whose analytical form is clear in  specific models.  Suppose  
 the true parameters are defined as the   minimizer of the expected loss:
\begin{equation}\label{eq2.1add-appendix}
(\beta_0,\delta_0,\tau_0)\equiv\argmin_{(\beta,\delta) \in \mathcal{A},\tau \in \mathcal{T}} 
\mathbb{E} \left[ \rho(Y,X^T\beta+X^T%
\delta1\{Q>\tau\}) \right],
\end{equation}
where $\mathcal{A}$ and $\mathcal{T}$ denote the parameter spaces for $(\beta_0, \delta_0)$ and $\tau_0$.   Here $\beta$ represents the components of ``baseline parameters", while $\delta$ represents the structural changes; $\tau$ is the change point value where the structural changes occur, if any.  By construction, $\tau_0$ is not unique when $\delta_0=0$. For each $(\beta,\delta)\in\mathcal{A}$ and $\tau\in\mathcal{T}$,  define $2p\times 1$ vectors:
$$
\alpha \equiv (\beta ^{T},\delta ^{T})^{T}, \quad X(\tau
) \equiv (X^{T},X^{T}1\{Q>\tau \})^{T}.
$$
Then  $X^T\beta+X^T%
\delta1\{Q>\tau\}=X(\tau)^T\alpha$, and by letting 
$\alpha_0 \equiv (\beta_0^T,\delta_0^T)^T$, 
we can write (\ref{eq2.1add-appendix}) more compactly as:
 \begin{equation}\label{eq2.3add}
(\alpha_0,\tau_0)=\argmin_{\alpha \in \mathcal{A},\tau \in \mathcal{T}} 
\mathbb{E} \left[ \rho(Y,X(\tau)^T\alpha) \right].
\end{equation} 
In quantile regression models, for a given quantile $\gamma\in (0,1)$, recall that 
$$\rho(t_1,t_2) = (t_1-t_2)(\gamma-1\{t_1-t_2\leq 0\}).$$

\subsection{When $\delta_0\neq0$ and $\tau_0$ is identified}

For a constant $\eta > 0$, define
\begin{align*}
r_1(\eta) \equiv &\sup_r \Big\{ r: \mathbb{E} \left( \left[  \rho \left( Y,X^{T}\beta \right) -\rho
\left( Y,X^{T}\beta_0 \right) \right] 1\left\{ Q\leq \tau
_{0}\right\} \right) \\
& \;\;\;\;\;\;\;\;\;\;\;
\geq \eta \mathbb{E}[(X^T(\beta-\beta_0))^2 1\{Q\leq\tau_0\}]  
\textrm{ for all $\beta \in \mathcal{B}(\beta_0, r)$}
\Big \}
\end{align*}
and
\begin{align*}
r_2(\eta) \equiv &\sup_r \Big\{ r: \mathbb{E} \left( \left[  \rho \left( Y,X^{T}\theta \right) -\rho
\left( Y,X^{T}\theta_0 \right) \right] 1\left\{ Q >  \tau
_{0}\right\} \right) \\
& \;\;\;\;\;\;\;\;\;\;\;
\geq \eta \mathbb{E}[(X^T(\theta-\theta_0))^2 1\{Q > \tau_0\}]  
\textrm{ for all $\theta \in \mathcal{G}(\theta_0, r)$}
\Big \},
\end{align*}
where $\mathcal{B}(\beta_0, r)$ and $\mathcal{G}(\theta_0, r)$ are defined in  \eqref{eq2.3}.
Note that
$r_1(\eta)$ and $r_2(\eta)$ 
are the maximal radii  over which  the excess risk 
can be bounded below by the quadratic loss on $\{Q \leq \tau_0\}$ and  
$\{Q > \tau_0\}$, respectively.

\begin{assum}
\label{a:obj-ftn}
\begin{enumerate}[label=(\roman*)]
\item\label{a:setting:L-cont}  
Let $\mathcal{Y}$ denote the support of $Y$.  There is a Liptschitz constant $L>0$ such that for all $y \in \mathcal{Y}$, $\rho (y,\cdot)$ is convex,  and  \begin{equation*}
|\rho (y,t_{1})-\rho (y,t_{2})|\leq L|t_{1}-t_{2}|, \forall t_1, t_2\in\mathbb{R}.
\end{equation*}
 \item\label{a:obj-ftn:itm2} 
For all $\alpha \in \mathcal{A}$, almost surely,
\begin{equation*}
\mathbb{E} \left[ \rho (Y,X(\tau _{0})^{T}\alpha )-\rho (Y,X(\tau _{0})^{T}\alpha
_{0})|Q \right]\geq 0.
\end{equation*}%
 
  \item\label{a:obj-ftn:itm3} 
 There exist constants $\eta^\ast > 0$ and $r^\ast > 0$ such that 
$r_1(\eta^\ast) \geq r^\ast$ and
$r_2(\eta^\ast) \geq r^\ast$.
 \item\label{a:obj-ftn:itm4} 
There is a constant $c_0 > 0$ such that for all $\tau \in \mathcal{T}_{0}$,
\begin{align*}
\mathbb{E}\left[ \left( \rho \left( Y,X^{T}\theta _{0}\right) -\rho \left( Y,X^{T}\beta
_{0}\right) \right) 1\left\{ \tau <Q\leq \tau _{0}\right\} \right] & \geq
c_0 \mathbb{E}\left[  (X^{T}\left( \beta _{0}-\theta _{0})\right) ^{2}1\left\{ \tau <Q\leq \tau
_{0}\right\} \right], \\
\mathbb{E}\left[ \left( \rho \left( Y,X^{T}\beta _{0}\right) -\rho \left( Y,X^{T}\theta
_{0}\right) \right) 1\left\{ \tau _{0}<Q\leq \tau \right\} \right] & \geq
c_0 \mathbb{E}\left[ (X^{T}\left( \beta _{0}-\theta _{0})\right) ^{2}1\left\{ \tau
_{0}<Q\leq \tau \right\} \right].
\end{align*}%

\end{enumerate}
\end{assum}

We  focus on a convex Lipchitz  loss function, which is assumed in condition \ref{a:setting:L-cont}.   It might be possible to 
weaken the convexity to a ``restricted strong convexity  condition" as in \cite{loh2013regularized}. For simplicity, we focus on the case of a convex loss, which is satisfied for
quantile regression. However,  unlike the framework of M-estimation in \cite{negahban2012} and \cite{loh2013regularized}, we do allow $\rho(t_1, t_2)$ to be non-differentiable, which admits the quantile regression model as a special case.

Condition \ref{a:obj-ftn:itm3} requires that the excess risk can be bounded below by 
a quadratic function locally when $\tau$ is fixed at $\tau_0$, while 
condition \ref{a:obj-ftn:itm4} is an analogous condition when $\alpha$ is fixed at $\alpha_0$.
conditions \ref{a:obj-ftn:itm3} and \ref{a:obj-ftn:itm4}, combined with the convexity of $\rho(Y,\cdot)$,
helps us derive the rates  of convergence (in the $\ell_1$ norm) of the Lasso estimators of $(\alpha_0,\tau_0)$.
Furthermore, these two conditions separate the conditions for $\alpha $ and $\tau $, making them easier to interpret and verify.

\begin{remark}
Condition \ref{a:obj-ftn:itm3}  of Assumption \ref{a:obj-ftn} is  similar to  \textit{the restricted nonlinear impact (RNI)} condition of  \cite{BC11}.
One may consider an alternative formulation as in  \cite{geer} and \cite{bulmann} (Chapter 6), which is known as the \textit{margin condition}. But the margin condition needs to be adjusted to account for structural changes as in condition \ref{a:obj-ftn:itm4}. It would be an interesting future
research topic to develop a general theory of high-dimensional M-estimation with an unknown sparsity-structural-change with  general margin conditions.
\end{remark}

\begin{remark}\label{remark-tau-whole-set}
Assumptions \ref{a:obj-ftn} \ref{a:obj-ftn:itm4} and \ref{a:dist-Q-add} \ref{a:threshold}  together  imply that 
for all $\tau \in \mathcal{T}_{0}$, there exists a constant $c_0 > 0$ such that 
\begin{align}\label{iden-tau0}
\begin{split}
\Delta_1(\tau) \equiv \mathbb{E}\left[ \left( \rho \left( Y,X^{T}\theta _{0}\right) -\rho \left( Y,X^{T}\beta
_{0}\right) \right) 1\left\{ \tau <Q\leq \tau _{0}\right\} \right] & \geq
c_0^2 \mathbb{P}\left[ \tau <Q\leq \tau
_{0} \right], \\
\Delta_2(\tau) \equiv \mathbb{E}\left[ \left( \rho \left( Y,X^{T}\beta _{0}\right) -\rho \left( Y,X^{T}\theta
_{0}\right) \right) 1\left\{ \tau _{0}<Q\leq \tau \right\} \right] & \geq
c_0^2 \mathbb{P}\left[  \tau
_{0}<Q\leq \tau  \right].
\end{split}
\end{align}%
Note that Assumption \ref{a:obj-ftn} \ref{a:obj-ftn:itm2} implies that
$\Delta_1(\tau)$
is monotonely non-increasing when $\tau<\tau_{0}$, and 
$\Delta_2(\tau)$ is monotonely non-decreasing
when $\tau>\tau_{0}$, respectively. Therefore, Assumptions \ref{a:obj-ftn} \ref{a:obj-ftn:itm2}, \ref{a:obj-ftn} \ref{a:obj-ftn:itm4} and \ref{a:dist-Q-add} \ref{a:threshold}   all together imply that \eqref{iden-tau0} holds  for all $\tau$ in the $\mathcal{T}$, not just
in the $\mathcal{T}_{0}$ since $\mathcal{T}$ is compact.  Equation \eqref{iden-tau0} 
plays an important role in  achieving a super-efficient convergence rate for $\tau_0$, 
since it states the presence of a kink in the expected loss and that of a jump in the loss function  at $\tau_0$.
\end{remark}

We now move to the set of assumptions that are useful to deal with the Step 3b estimator. 
Define
\begin{equation*}
m_j(\tau,\alpha) \equiv \frac{\partial \mathbb{E}[  \rho(Y, X(\tau)^T\alpha)]}{\partial\alpha_j}%
,\quad m(\tau,\alpha) \equiv (m_1(\tau,\alpha),...,m_{2p}(\tau,\alpha))^T.
\end{equation*}
Also, let $m_J(\tau,\alpha) \equiv (m_j(\tau,\alpha): j\in J(\alpha_0))$.

 \begin{assum}
\label{ass3.2} $\mathbb{E}[ \rho(Y, X(\tau)^T\alpha)]$ is three times continuously differentiable with respect
to $\alpha$, and there are constants $c_1, c_2, L>0$  and a neighborhood $\mathcal{T}_0$ of $\tau_0$ such that the following conditions hold:
  for all large $n$ and all $%
\tau\in\mathcal{T}_0$,
\begin{enumerate}[label=(\roman*)]
\item\label{ass3.2:itm1}
 there is $M_n>0$, which  may depend on the sample size $n$, such that 
\begin{equation*}
\max_{j\leq 2p}\left|m_j(\tau,\alpha_0)-m_j(\tau_0,\alpha_0)\right|<M_n{%
|\tau-\tau_0|};
\end{equation*}
\item\label{ass3.2:itm2}
there is $r>0$ such that for all   $\beta\in\mathcal{B}(\beta_0, r)$, $\theta\in\mathcal{G}(\theta_0, r)$,   $\alpha=(\beta^T,\theta^T-\beta^T)^T$ satisfies:
\begin{equation*}
\max_{j\leq 2p}
\sup_{\tau\in\mathcal{T}_0}\left|m_j(\tau,\alpha)-m_j(\tau,\alpha_0)\right|<L
\left|\alpha-\alpha_0\right|_1;
\end{equation*}
\item\label{ass3.2:itm3}   
$\alpha_0$ is in the interior of the parameter space $\mathcal{A}$, and 
 \begin{equation*}
\inf_{\tau\in\mathcal{T}_0} \lambda_{\min} \left( \frac{\partial^2 \mathbb{E} [\rho(Y,
X_J(\tau)^T\alpha_{0J})]}{\partial\alpha_J\partial\alpha_J^T} \right) >c_1,
\end{equation*}
\begin{equation*}
\sup_{\left|\alpha_J-\alpha_{0J}\right|_1<c_2,}\sup_{\tau\in\mathcal{T}_0}%
\max_{i,j,k\in J} \left| \frac{\partial^3\mathbb{E} [\rho(Y, X_J(\tau)^T\alpha_J)]}{%
\partial\alpha_i\partial\alpha_j\partial\alpha_k} \right| < L.
\end{equation*}
\end{enumerate} 
\end{assum}

The score-condition in the population level is expressed by $m(\tau_0,\alpha_0)=0$ 
since $\alpha_0$ is in the interior of  $\mathcal{A}$ by condition \ref{ass3.2:itm3}. 
Conditions \ref{ass3.2:itm1} and \ref{ass3.2:itm2} regulate the continuity of the score $m(\tau,\alpha)$,
and 
condition \ref{ass3.2:itm3} assumes the higher-order differentiability  of 
the expectation of the loss function.
Condition \ref{ass3.2:itm1} requires the Lipschitz continuity of the score function with
respect to the threshold. The Lipschitz constant may grow with $n$, since
it is assumed uniformly over $j\leq 2p$. In many  examples, $M_n$ in fact grows  slowly; as a result, it does not affect the
asymptotic behavior of $\widetilde\alpha$. For quantile regression models, we will show that $M_n=C s^{1/2}$ for
some constant $C>0.$
 Condition \ref{ass3.2:itm2} requires the local equicontinuity at $\alpha_0$ in the 
$\ell_1$ norm of the class 
\begin{equation*}
\{m_j(\tau,\alpha): \tau\in\mathcal{T}_0, j\leq 2p\}.
\end{equation*}

We now establish that Assumptions \ref{a:obj-ftn} and  \ref{ass3.2}  are satisfied
for quantile regression models. 

\begin{lem} \label{l4.1}
Suppose that  Assumptions  \ref{a:setting} and  \ref{ass3.4-a} hold. Then  Assumptions \ref{a:obj-ftn} and  \ref{ass3.2}  are satisfied
by the loss function for the quantile regression model, 
with $M_n=C s^{1/2}$ for some constant $C>0$.
\end{lem}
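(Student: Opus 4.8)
The plan is to verify each clause of Assumptions \ref{a:obj-ftn} and \ref{ass3.2} for the check loss $\rho(t_1,t_2)=(t_1-t_2)(\gamma-1\{t_1-t_2\le 0\})$ separately, using as the single workhorse the exact second-order (Knight) expansion of the conditional expected loss. Writing $t_0\equiv X(\tau_0)^T\alpha_0$ for the conditional $\gamma$-quantile of $Y$ given $(X,Q)$ (which is exactly the model restriction $\mathbb{P}(U\le 0\mid X,Q)=\gamma$) and $\Delta$ for a generic prediction shift, the identity
\[
\rho(Y,t_0+\Delta)-\rho(Y,t_0)=-\Delta\bigl(\gamma-1\{Y\le t_0\}\bigr)+\int_0^{\Delta}\bigl(1\{Y\le t_0+v\}-1\{Y\le t_0\}\bigr)\,dv
\]
yields, after taking $\mathbb{E}[\,\cdot\mid X,Q]$ and noting the first term vanishes at the conditional quantile,
\[
\mathbb{E}\bigl[\rho(Y,t_0+\Delta)-\rho(Y,t_0)\mid X,Q\bigr]=\int_0^{\Delta}\bigl(F_{Y|X,Q}(t_0+v)-F_{Y|X,Q}(t_0)\bigr)\,dv .
\]
The integrand has the sign of $v$, so this is nonnegative, and a one-term Taylor expansion with $f_{Y|X,Q}(t_0)\ge C_2$ and $|\tilde f_{Y|X,Q}|\le C_1$ from Assumption \ref{ass3.4-a}\ref{ass3.4-a:itm2} bounds it below by $\tfrac{C_2}{2}\Delta^2-\tfrac{C_1}{6}|\Delta|^3$. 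This representation drives \ref{a:obj-ftn:itm2}, \ref{a:obj-ftn:itm3} and \ref{a:obj-ftn:itm4}.

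The structural clauses are then immediate. For \ref{a:obj-ftn}\ref{a:setting:L-cont}, $t\mapsto\rho(Y,t)$ is the piecewise-linear convex check function with slopes in $\{-\gamma,1-\gamma\}$, hence convex and globally Lipschitz with $L=\max(\gamma,1-\gamma)\le 1$; clause \ref{a:obj-ftn:itm2} follows from the sign of the displayed integral and the tower property in $Q$. For the smoothness of Assumption \ref{ass3.2} I differentiate under the integral sign (justified by Assumption \ref{ass3.4-a}\ref{ass3.4-a:itm1} and the sub-exponential bound \ref{a:setting:itm1}): the score is $m(\tau,\alpha)=\mathbb{E}[X(\tau)(F_{Y|X,Q}(X(\tau)^T\alpha)-\gamma)]$, its $\alpha$-derivative is $\mathbb{E}[X(\tau)X(\tau)^Tf_{Y|X,Q}(X(\tau)^T\alpha)]$, and the third derivative carries $\tilde f_{Y|X,Q}$. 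Thus \ref{ass3.2:itm3} holds directly: the Hessian at $\alpha_0$ is $\Gamma(\tau,\alpha_0)$, positive definite uniformly near $\tau_0$ by \ref{ass3.4-a:itm5}, while the third derivative is bounded by $C_1\max_{i,j,k}\mathbb{E}|X_iX_jX_k|<\infty$ via Hölder and \ref{a:setting:itm1}. (Note $f_{Y|X,Q}$ is bounded since a Lipschitz integrable density is bounded, giving a finite $\|f\|_\infty$ used below.)

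For the quadratic lower bounds I take expectations of the Knight estimate. Condition \ref{a:obj-ftn:itm3} needs the cubic remainder dominated by the quadratic term uniformly over the prediction ball; this is exactly what the restricted-nonlinearity Assumption \ref{ass-rn} supplies, giving $\mathbb{E}[|\Delta|^3 1\{Q\le\tau_0\}]\le\tfrac{3C_2}{2C_1}\mathbb{E}[\Delta^2 1\{Q\le\tau_0\}]$ on $\mathcal{B}(\beta_0,r^\ast_{QR})$, whence $r_1(C_2/4)\ge r^\ast_{QR}$ and symmetrically $r_2(C_2/4)\ge r^\ast_{QR}$. Condition \ref{a:obj-ftn:itm4} is the same computation in the \emph{fixed} direction $\delta_0=\theta_0-\beta_0$ on the strip $\{\tau<Q\le\tau_0\}$, where the cubic term is absorbed directly using the two-sided conditional bound $M_3^{-1}\le\mathbb{E}[(X^T\delta_0)^2\mid Q=\tau]\le M_3$ from Assumption \ref{a:dist-Q-add}\ref{a:threshold}. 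For the score-continuity clauses: in \ref{ass3.2:itm2} the mean value theorem with $|f_{Y|X,Q}|\le\|f\|_\infty$ gives $|m_j(\tau,\alpha)-m_j(\tau,\alpha_0)|\le\|f\|_\infty\sum_k|\alpha_k-\alpha_{0k}|\,\mathbb{E}[|X_j(\tau)X_k(\tau)|]\le L|\alpha-\alpha_0|_1$, with $L$ constant since $\max_{j,k}\mathbb{E}[|X_j(\tau)X_k(\tau)|]$ is bounded by \ref{a:setting:itm1}; for \ref{ass3.2:itm1}, since $m_j(\tau_0,\alpha_0)=0$ the difference $m_j(\tau,\alpha_0)$ is supported on $\{Q\in(\tau\wedge\tau_0,\tau\vee\tau_0]\}$, so bounding it by $\|f\|_\infty\,\mathbb{E}[|X_j|\,|X^T\delta_0|\,1\{Q\in(\ldots)\}]$ and applying conditional Cauchy–Schwarz with $\sup_q\mathbb{E}[X_j^2\mid Q=q]<\infty$ (Assumption \ref{a:setting:itm2}), $\mathbb{E}[|X^T\delta_0|\mid Q]\le (M_2|\delta_0|_2^2)^{1/2}\lesssim\sqrt{s}$ (Assumption \ref{a:threshold:itm1}, since $\delta_0$ has at most $s$ nonzero entries and $|\alpha_0|_\infty\le M_1$), and $\mathbb{P}(Q\in(\ldots))\le K_2|\tau-\tau_0|$ (Assumption \ref{a:dist-Q:itm1}) yields exactly $M_n=Cs^{1/2}$.

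The hard part will be the two quadratic lower bounds \ref{a:obj-ftn:itm3}–\ref{a:obj-ftn:itm4} and, in particular, making the cubic-remainder control uniform: for \ref{a:obj-ftn:itm3} over a high-dimensional prediction ball—precisely why the primitive Assumption \ref{ass-rn} is invoked rather than a naive moment bound—and for \ref{a:obj-ftn:itm4} uniformly over the shrinking strips $\{\tau<Q\le\tau_0\}$ as $\tau\uparrow\tau_0$, where near-degeneracy of the conditional design forces reliance on the two-sided bound on $\mathbb{E}[(X^T\delta_0)^2\mid Q=\tau]$. The $s^{1/2}$ bookkeeping in \ref{ass3.2:itm1} is the other delicate point, since it is this factor, propagated through the later proofs, that produces the extra $(\log p)(\log n)$ inflation of the tuning parameter $\kappa_n$.
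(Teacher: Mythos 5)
Your verification of Assumptions \ref{a:obj-ftn}\ref{a:setting:L-cont}, \ref{a:obj-ftn}\ref{a:obj-ftn:itm2}, \ref{a:obj-ftn}\ref{a:obj-ftn:itm3} and of Assumption \ref{ass3.2} follows the paper's proof essentially step for step: the Knight identity, the vanishing of the linear term at the conditional quantile, the Taylor expansion with the $C_1,C_2$ density bounds, the absorption of the cubic remainder via Assumption \ref{ass-rn} with $\eta^\ast=C_2/4$, and the conditional Cauchy--Schwarz bookkeeping that produces $M_n=C|\delta_0|_2=Cs^{1/2}$ in \ref{ass3.2:itm1}. Those parts are fine.

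The genuine gap is in your verification of Assumption \ref{a:obj-ftn}\ref{a:obj-ftn:itm4}. You claim the cubic remainder of the Taylor expansion in the direction $\delta_0$ is ``absorbed directly using the two-sided conditional bound $M_3^{-1}\leq\mathbb{E}[(X^T\delta_0)^2\mid Q=\tau]\leq M_3$.'' That absorption requires
\[
\mathbb{E}\bigl[|X^T\delta_0|^3\mid Q=\tau\bigr]\;\leq\;c\,\mathbb{E}\bigl[(X^T\delta_0)^2\mid Q=\tau\bigr]
\]
for a sufficiently small $c$, and no assumption in the paper delivers control of this conditional \emph{third} moment: a two-sided bound on the conditional second moment is compatible with an arbitrarily large conditional third moment, and Assumption \ref{ass-rn} is stated only for prediction balls around $\beta_0$ and $\theta_0$ on the events $\{Q\leq\tau_0\}$ and $\{Q>\tau_0\}$, not for the fixed direction $\delta_0$ under the conditional law given $Q=\tau$ on the strip. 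So the step as written would fail. The paper avoids the cubic term altogether: starting from the exact representation $\mathbb{E}[\rho(Y,X^T\theta_0)-\rho(Y,X^T\beta_0)\mid Q=\tau]=\mathbb{E}\{\int_0^{X^T\delta_0}zf_{Y|X,Q}(X^T\beta_0+t)\,dz\mid Q=\tau\}$, it discards the outer part of the (nonnegative) integral, keeping only $\int_0^{\widetilde\varepsilon(X^T\delta_0)}$, on which the conditional density is bounded below (by continuity of $f$ and Assumption \ref{ass3.4-a}\ref{ass3.4-a:itm2}), yielding directly $\geq\tfrac{\widetilde\varepsilon^2C_3}{2}\mathbb{E}[(X^T\delta_0)^2\mid Q=\tau]$ with no third-moment input. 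You need this truncation device (or an equivalent local lower bound on the integrand in place of a global second-order Taylor expansion) to close \ref{a:obj-ftn:itm4}; the role of Assumption \ref{a:dist-Q-add}\ref{a:threshold} is then only to convert the resulting bound into the form required by \eqref{iden-tau0}, not to tame the remainder.
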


\subsubsection{Proof of Lemma \protect\ref{l4.1}}

\begin{proof}[Verification of Assumption \ref{a:obj-ftn} \ref{a:setting:L-cont}]
It is straightforward to show that the loss function for quantile regression is convex and satisfies the Liptschitz condition. \end{proof}

\begin{proof}[Verification of  Assumption \ref{a:obj-ftn} \ref{a:obj-ftn:itm2}]
Note that $\rho (Y,t)=h_{\gamma
}(Y-t)$, where $h_{\gamma }(t)=t(\gamma -1\{t\leq 0\})$. By (B.3) of \cite{BC11}, 
\begin{align}\label{Knight-ineq}
h_{\gamma }(w-v)-h_{\gamma }(w)=-v(\gamma-1\{w\leq 0\})+\int_{0}^{v}(1\{w\leq
z\}-1\{w\leq 0\})dz
\end{align}
where $w=Y-X(\tau_0)^{T}\alpha_0$ and $v=X(\tau_0)^{T}(\alpha-\alpha_0)$. Note that 
\begin{equation*}
\mathbb{E}[ v(\gamma -1\{w\leq 0\})|Q]  =-\mathbb{E}[X(\tau_0)^{T}(\alpha-\alpha_0)(\gamma
-1\{U\leq 0\})|Q]=0,
\end{equation*}%
since $\mathbb{P}(U\leq 0|X,Q)=\gamma$. 
Let $F_{Y|X,Q}$ denote the CDF of the conditional distribution $Y|X,Q$.
Then
\begin{eqnarray*}
&&\mathbb{E}\left[ \rho (Y,X(\tau _{0})^{T}\alpha )-\rho (Y,X(\tau _{0})^{T}\alpha_{0})|Q \right] %
\\
&&=\mathbb{E}\left[ \int_{0}^{X(\tau_0)^{T}(\alpha-\alpha_0)}(1\{U \leq z)-1\{U \leq
0 \})dz \Big| Q  \right] \\
&&= \mathbb{E}\left[ \int_{0}^{X(\tau_0)^{T}(\alpha-\alpha_0)}[F_{Y|X,Q}(X(\tau_0)^{T}\alpha_{0}+z|X,Q)-F_{Y|X,Q}(X(\tau_0)^{T}\alpha_{0}|X,Q)]dz\bigg{|}Q\right] \\
&&\geq 0,
\end{eqnarray*}%
where the last inequality follows immediately from the fact that $F_{Y|X,Q}(\cdot|X,Q)$ is the CDF. 
Hence, we have verified Assumption \ref{a:obj-ftn} \ref{a:obj-ftn:itm2}.
\end{proof}

\begin{proof}[Verification of  Assumption \ref{a:obj-ftn} \ref{a:obj-ftn:itm3}]
Following the arguments analogous those used in (B.4) of \cite{BC11},
the mean value expansion implies: 
\begin{eqnarray*}
&&\mathbb{E}\left[ \rho (Y,X(\tau _{0})^{T}\alpha )-\rho (Y,X(\tau _{0})^{T}\alpha_{0})|Q \right] %
\\
&&=
\mathbb{E}\left\{ \int_{0}^{X(\tau_0)^{T}(\alpha-\alpha_0) } \left[z f_{Y|X,Q}(X(\tau _{0})^{T}\alpha_{0}|X,Q)+%
\frac{z^{2}}{2}\tilde{f}_{Y|X,Q}(X(\tau _{0})^{T}\alpha_{0}+t|X,Q) \right]dz\bigg{|}Q\right\} %
\\
&&=\frac{1}{2}(\alpha-\alpha_0) ^{T}\mathbb{E} \left[ X(\tau_0)X(\tau_0)^{T}f_{Y|X,Q}(X(\tau _{0})^{T}\alpha_{0}|X,Q)|Q \right](\alpha-\alpha_0) \\
&&+\mathbb{E}\left\{ \int_{0}^{X(\tau_0)^{T}(\alpha-\alpha_0)}\frac{z^{2}}{2}%
\tilde{f}_{Y|X,Q}(X(\tau _{0})^{T}\alpha_{0}+t|X,Q)dz\bigg{|}Q\right\} 
\end{eqnarray*}%
for some intermediate value $t$ between $0$ and $z$.
By condition \ref{ass3.4-a:itm2} of Assumption \ref{ass3.4-a}, 
\begin{align*}
|\tilde{f}_{Y|X,Q}(X(\tau _{0})^{T}\alpha_{0}+t|X,Q)| \leq C_{1} 
\ \ \text{ and } \ \ f_{Y|X,Q}(X(\tau _{0})^{T}\alpha_{0}|X,Q) \geq C_2.
\end{align*} 
Hence,  taking the expectation on $\{ Q \leq \tau_0 \}$ gives
\begin{align*}
&\mathbb{E}\left[ \rho (Y,X^{T}\beta )-\rho (Y,X^{T}\beta_{0}) 1\{Q \leq \tau_0 \} \right] %
\\
&\geq \frac{C_2}{2} \mathbb{E}[( X^{T}(\beta-\beta_0))^{2} 1\{Q \leq \tau_0 \}]- \frac{C_{1}}{6}\mathbb{E}[|X^{T}(\beta-\beta_0)|^{3}1\{Q \leq \tau_0 \} ] 
\\
&\geq \frac{C_2}{4} \mathbb{E}[| X^{T}(\beta-\beta_0)|^{2} 1\{Q \leq \tau_0 \}],
\end{align*}%
where the last inequality follows from 
\begin{align}\label{nl-term}
\frac{C_2}{4} \mathbb{E}[| X^{T}(\beta-\beta_0)|^{2} 1\{Q \leq \tau_0 \}] \geq \frac{C_{1}}{6}\mathbb{E}[|X^{T}(\beta-\beta_0)|^{3}1\{Q \leq \tau_0 \} ].
\end{align}
To see why \eqref{nl-term} holds,
note that
by \eqref{nl-condition-qr}, for any nonzero $\beta \in \mathcal{B}(\beta_0,r^\ast_{QR})$, 
\begin{align*}
  \frac{\mathbb{E}[| X^{T}(\beta-\beta_0)|^{2} 1\{Q \leq \tau_0 \}]^{3/2}}{\mathbb{E}[| X^{T}(\beta-\beta_0)|^{3} 1\{Q \leq \tau_0 \}]} \geq  r^\ast_{QR} \frac{2C_1}{3C_2} \geq  \frac{2C_1}{3C_2}\mathbb{E}[| X^{T}(\beta-\beta_0)|^{2} 1\{Q \leq \tau_0 \}]^{1/2},
\end{align*}
which proves \eqref{nl-term} immediately. Thus, we have shown that 
Assumption \ref{a:obj-ftn} \ref{a:obj-ftn:itm3} holds for $r_1(\eta)$ with $\eta^\ast = C_2/4$ and $r^\ast = r^\ast_{QR}$ defined 
in \eqref{nl-condition-qr} in Assumption \ref{ass3.4-a}.
The case for $r_2(\eta)$ is similar and hence is omitted. 
\end{proof}

\begin{proof}[Verification of  Assumption \ref{a:obj-ftn} \ref{a:obj-ftn:itm4}]
We again start from \eqref{Knight-ineq} but  with different choices of $(w,v)$ such that  $w=Y-X(\tau_0)^{T}\alpha_0$ and $v = X^{T}\delta
_{0}[1\{Q \leq \tau _{0}\} - 1\{Q>\tau _{0}\}]$.
Then arguments similar to those used in verifying Assumptions \ref{a:obj-ftn} \ref{a:obj-ftn:itm2}-\ref{a:obj-ftn:itm3} yield that 
for $\tau < \tau_0$,
\begin{align}\label{qr-important-lower-bound}
& \mathbb{E}\left[  \rho \left( Y,X^{T}\theta _{0}\right) -\rho \left( Y,X^{T}\beta
_{0}\right) | Q= \tau  \right] \\
&= \mathbb{E}\left\{ \int_{0}^{X^{T}\delta_{0}} z f_{Y|X,Q}(X^{T}\beta
_{0}+t|X,Q) dz\bigg{|}Q = \tau \right\} \\
&\geq \mathbb{E}\left\{ \int_{0}^{\widetilde{\varepsilon}  (X^{T}\delta_{0})} z f_{Y|X,Q}(X^{T}\beta
_{0}+t|X,Q) dz\bigg{|}Q = \tau \right\} \\
&\geq 
\frac{\widetilde{\varepsilon}^2  C_3}{2} \mathbb{E}\left[ (X^{T}\delta_{0})^2 |Q = \tau \right],
\end{align}%
where $t$ is an intermediate value $t$ between $0$ and $z$.
 Thus, we have that 
\begin{align*}
\mathbb{E}\left[ \left( \rho \left( Y,X^{T}\theta _{0}\right) -\rho \left( Y,X^{T}\beta
_{0}\right) \right) 1\left\{ \tau <Q\leq \tau _{0}\right\} \right] & \geq
 \frac{\widetilde{\varepsilon}^2  C_3}{2} \mathbb{E}\left[  (X^{T}\left( \beta _{0}-\theta _{0})\right) ^{2}1\left\{ \tau <Q\leq \tau
_{0}\right\} \right].
\end{align*}
The case that $\tau > \tau_0$ is similar.  
\end{proof}

\begin{proof}[Verification of  Assumption \ref{ass3.2}]
Note that  $$m_j(\tau,\alpha)=\mathbb{E} [X_j(%
\tau)(1\{Y-X(\tau)^T\alpha\leq0\}-\gamma)] .$$
Hence,  $m_j(\tau_0,%
\alpha_0)=0,$ for all $j\leq 2p$.
For condition \ref{ass3.2:itm1} of
Assumption \ref{ass3.2}, for all $j\leq 2p$, 
\begin{align*}
&| m_j(\tau,\alpha_0)- m_j(\tau_0,\alpha_0)| \\
&= |\mathbb{E} X_j(\tau)[1\{Y\leq
X(\tau)^T\alpha_0\}-1\{Y\leq X(\tau_0)^T\alpha_0\}]|\cr 
&=|\mathbb{E} X_j(\tau)[\mathbb{P} (Y%
\leq X(\tau)^T\alpha_0|X,Q)-\mathbb{P} (Y\leq X(\tau_0)^T\alpha_0|X,Q)]|\cr 
&\leq C
\mathbb{E} |X_j(\tau)||(X(\tau)-X(\tau_0))^T\alpha_0| \\
&=C
\mathbb{E} |X_j(\tau)||X^T\delta_0(1\{Q>\tau\}-1\{Q>\tau_0\})|\cr 
&\leq
C\mathbb{E} |X_j(\tau)||X^T\delta_0| (1\{\tau<Q<\tau_0\}+1\{\tau_0<Q<\tau\} )\cr %
&\leq C
(\mathbb{P} (\tau_0<Q<\tau)+\mathbb{P} (\tau<Q<\tau_0))\sup_{\tau, \tau' \in \mathcal{T}_0}\mathbb{E} (|X_j(\tau) X^T\delta_0||Q=\tau') \\
&\leq C
(\mathbb{P} (\tau_0<Q<\tau)+\mathbb{P} (\tau<Q<\tau_0))
\sup_{\tau, \tau' \in \mathcal{T}_0} [\mathbb{E} (|X_j(\tau)|^2||Q=\tau')]^{1/2}  [\mathbb{E} (|X^T\delta_0|^2|Q=\tau') ]^{1/2} \\
&\leq C M_2 K_2 |\delta_0|_2 |\tau_0-\tau| 
\end{align*}
for some constant $C$, where the last inequality follows from conditions 
 \ref{a:dist-Q:itm1},
 \ref{a:setting:itm2} and
 \ref{a:threshold:itm1} of Assumption  \ref{a:setting}. Therefore, we have verified  condition \ref{ass3.2:itm1} of Assumption \ref{ass3.2}
with $M_n = C M_2 K_2 |\delta_0|_2$.

We now verify condition \ref{ass3.2:itm2} of Assumption \ref{ass3.2}. For all $j$ and $%
\tau$ in a neighborhood of $\tau_0$, 
\begin{eqnarray*}
&&| m_j(\tau,\alpha)- m_j(\tau,\alpha_0)|=|\mathbb{E} X_j(\tau)(1\{Y\leq
X(\tau)^T\alpha\}-1\{Y\leq X(\tau)^T\alpha_0\})|\cr &&=|\mathbb{E} X_j(\tau)(\mathbb{P} (Y\leq
X(\tau)^T\alpha|X,Q)-\mathbb{P} (Y\leq X(\tau)^T\alpha_0|X,Q))|\cr &&\leq
C\mathbb{E} |X_j(\tau)||X(\tau)^T(\alpha-\alpha_0)|\leq
C|\alpha-\alpha_0|_1\max_{j\leq 2p, i\leq 2p}\mathbb{E} |X_j(\tau)X_i(\tau)|,
\end{eqnarray*}
which implies the result immediately in view of Assumption \ref{a:setting}. Finally, it is straightforward to verify  condition \ref{ass3.2:itm3} using Assumption \ref{ass3.4-a} \ref{ass3.4-a:itm5}.
\end{proof}

\subsection{When $\delta_0=0$}

We now consider the case when  $\delta_0=0$. In this case, $\tau_0$ is not identifiable, and there is actually  no structural change in the sparsity.  
If $\alpha_0$ is in the interior of $\mathcal{A}$, then
 $m(\tau,\alpha _{0})=0$ for all $\tau\in\mathcal{T}$.

For a constant $\eta > 0$, define
\begin{align*}
\tilde{r}_1(\eta) \equiv &\sup_r \Big\{ r: \mathbb{E} \left( \left[  \rho \left( Y,X^{T}\beta \right) -\rho
\left( Y,X^{T}\beta_0 \right) \right] 1\left\{ Q\leq \tau \right\} \right) \\
& \;\;\;\;\;\;\;\;\;\;\;
\geq \eta \mathbb{E}[(X^T(\beta-\beta_0))^2 1\{Q\leq\tau \}]  
\textrm{ for all $\beta \in \mathcal{\tilde{B}}(\beta_0, r,\tau)$ and for all  $\tau \in \mathcal{T}$}
\Big \}
\end{align*}
and
\begin{align*}
\tilde{r}_2(\eta) \equiv &\sup_r \Big\{ r: \mathbb{E} \left( \left[  \rho \left( Y,X^{T}\theta \right) -\rho
\left( Y,X^{T}\beta_0 \right) \right] 1\left\{ Q >  \tau \right\} \right) \\
& \;\;\;\;\;\;\;\;\;\;\;
\geq \eta \mathbb{E}[(X^T(\theta-\beta_0))^2 1\{Q > \tau \}]  
\textrm{ for all $\theta \in \mathcal{\tilde{G}}(\beta_0, r, \tau)$ and for all  $\tau \in \mathcal{T}$}
\Big \},
\end{align*}
where $\mathcal{\tilde{B}}(\beta_0, r,\tau)$ and $\mathcal{\tilde{G}}(\beta_0, r, \tau)$ are defined in
\eqref{eq2.3-delta0}.

\begin{assum}
\label{ass2.2-delta0} 
\begin{enumerate}[label=(\roman*)]
\item\label{ass2.2-delta0:L-cont}  
Let $\mathcal{Y}$ denote the support of $Y$.  There is a Liptschitz constant $L>0$ such that for all $y \in \mathcal{Y}$, $\rho (y,\cdot)$ is convex,  and  \begin{equation*}
|\rho (y,t_{1})-\rho (y,t_{2})|\leq L|t_{1}-t_{2}|, \forall t_1, t_2\in\mathbb{R}.
\end{equation*}
\item\label{ass2.2-delta0:itm1}
For all $\alpha \in \mathcal{A}$ and for all $\tau \in \mathcal{T}$, almost surely, 
$$
\mathbb{E}[\rho(Y, X(\tau)^T\alpha)-\rho(Y,X^T\beta_0)|Q] \geq 0,  
$$
\item\label{ass2.2-delta0:itm2} 
There exist constants $\eta^\ast > 0$ and $r^\ast > 0$ such that 
$\tilde{r}_1(\eta^\ast) \geq r^\ast$ and
$\tilde{r}_2(\eta^\ast) \geq r^\ast$.

\item\label{ass2.2-delta0:itm2:more} $\mathbb{E}[ \rho(Y, X(\tau)^T\alpha)]$ is three times differentiable with respect
to $\alpha$, and there are universal constants $r>0$ and $L>0$    such that 
 for all   $\beta\in\mathcal{\tilde{B}}(\beta_0, r, \tau)$, $\theta\in \mathcal{\tilde{G}}(\beta_0, r, \tau)$,   $\alpha=(\beta^T,\theta^T-\beta^T)^T$ satisfies:
\begin{equation*}
\max_{j\leq 2p}
 \left|m_j(\tau,\alpha)-m_j(\tau,\alpha_0)\right|<L
\left|\alpha-\alpha_0\right|_1.
\end{equation*}
 for all large $n$ and for all $\tau \in \mathcal{T}$.
\item\label{ass2.2-delta0:itm3}
 $\alpha_0$ is in the interior of the parameter space $\mathcal{A}$, and there are 
constants $c_1$ an $c_2>0$ such that 
 \begin{equation*}
 \lambda_{\min} \left( \frac{\partial^2 \mathbb{E} [\rho(Y,
X_{J(\beta_0)}^T\beta_{0J})]}{\partial\beta_J\partial\beta_J^T} \right) >c_1,
\end{equation*}
\begin{equation*}
\sup_{\left|\alpha_J-\alpha_{0J}\right|_1<c_2,} 
\max_{i,j,k\in J(\beta_0)} \left| \frac{\partial^3\mathbb{E} [\rho(Y, X_{J(\beta_0)}^T\beta_J)]}{%
\partial\beta_i\partial\beta_j\partial\beta_k} \right| < L.
\end{equation*}
\end{enumerate}
\end{assum}

 As in Lemma \ref{l4.1}, we now establish that Assumption \ref{ass2.2-delta0} is satisfied
for quantile regression models when $\delta_0 = 0$. 

\begin{lem} \label{l4.2}
Suppose that  Assumptions  \ref{a:setting} and  \ref{ass3.4-a} hold.  Then   Assumption \ref{ass2.2-delta0} is satisfied.
\end{lem}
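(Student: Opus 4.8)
The plan is to verify each part of Assumption \ref{ass2.2-delta0} by mirroring the corresponding steps in the proof of Lemma \ref{l4.1}, while exploiting the simplification available when $\delta_0 = 0$: since $X(\tau)^T\alpha_0 = X^T\beta_0 + X^T\delta_0 1\{Q > \tau\} = X^T\beta_0$ for every $\tau \in \mathcal{T}$, the reference point no longer depends on $\tau$. Consequently the residual $w \equiv Y - X^T\beta_0 = U$ satisfies $\mathbb{P}(U \leq 0 | X, Q) = \gamma$ uniformly, and $m_j(\tau, \alpha_0) = \mathbb{E}[X_j(\tau)(\mathbb{P}(U \leq 0|X,Q) - \gamma)] = 0$ for all $\tau$ and all $j \leq 2p$. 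This last fact is why Assumption \ref{ass2.2-delta0} needs no analogue of the Lipschitz-in-$\tau$ condition \ref{ass3.2:itm1}: the population score at the truth vanishes identically in $\tau$.

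First, condition \ref{ass2.2-delta0:L-cont} is immediate, since convexity and the Lipschitz property of the quantile check loss are the same model facts established in the verification of Assumption \ref{a:obj-ftn} \ref{a:setting:L-cont}. For condition \ref{ass2.2-delta0:itm1}, I would reproduce the argument used for Assumption \ref{a:obj-ftn} \ref{a:obj-ftn:itm2}: apply the identity \eqref{Knight-ineq} with $w = U$ and $v = X(\tau)^T\alpha - X^T\beta_0$; the linear term has vanishing conditional mean because $\mathbb{P}(U \leq 0|X,Q) = \gamma$, and the remaining integral is the conditional expectation of $F_{Y|X,Q}(X^T\beta_0 + z|X,Q) - F_{Y|X,Q}(X^T\beta_0|X,Q)$, which is nonnegative by monotonicity of the conditional CDF. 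Because the center $X^T\beta_0$ is free of $\tau$, this nonnegativity holds for every $\tau \in \mathcal{T}$, not merely near a fixed change point.

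For condition \ref{ass2.2-delta0:itm2}, I would carry out the second-order expansion exactly as in the verification of Assumption \ref{a:obj-ftn} \ref{a:obj-ftn:itm3}, now integrating over $\{Q \leq \tau\}$ and $\{Q > \tau\}$ for an arbitrary $\tau$. Using the density lower bound $f_{Y|X,Q} \geq C_2$ from Assumption \ref{ass3.4-a} \ref{ass3.4-a:itm2} for the quadratic term and the derivative bound $|\tilde f_{Y|X,Q}| \leq C_1$ for the cubic remainder, the excess risk on $\{Q\leq\tau\}$ is bounded below by $\tfrac{C_2}{2}\mathbb{E}[(X^T(\beta - \beta_0))^2 1\{Q \leq \tau\}]$ minus $\tfrac{C_1}{6}\mathbb{E}[|X^T(\beta - \beta_0)|^3 1\{Q \leq \tau\}]$. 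The $\delta_0 = 0$ form of the restricted nonlinearity condition, inequalities \eqref{nl-condition-qr-delta0} and \eqref{nl-condition-qr-2-delta0}, is stated with an infimum over $\tau \in \mathcal{T}$, so it absorbs the cubic term into the quadratic one uniformly in $\tau$, yielding $\tilde r_1(\eta^\ast) \geq r^\ast$ and $\tilde r_2(\eta^\ast) \geq r^\ast$ with $\eta^\ast = C_2/4$ and $r^\ast = r^\ast_{QR}$. The uniform-in-$\tau$ infima in \eqref{nl-condition-qr-delta0}--\eqref{nl-condition-qr-2-delta0} are precisely what delivers the global uniformity demanded here.

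Conditions \ref{ass2.2-delta0:itm2:more} and \ref{ass2.2-delta0:itm3} I would dispatch as in the verification of Assumption \ref{ass3.2} \ref{ass3.2:itm2} and \ref{ass3.2:itm3}. Writing $m_j(\tau,\alpha) - m_j(\tau,\alpha_0) = \mathbb{E}[X_j(\tau)(\mathbb{P}(Y \leq X(\tau)^T\alpha|X,Q) - \mathbb{P}(Y \leq X(\tau)^T\alpha_0|X,Q))]$ and applying the mean value theorem with $|\tilde f_{Y|X,Q}| \leq C_1$ gives, via Cauchy--Schwarz and the moment bounds of Assumption \ref{a:setting}, the bound $L|\alpha - \alpha_0|_1 \max_{i,j \leq 2p}\mathbb{E}|X_i(\tau)X_j(\tau)|$, which is at most $L'|\alpha - \alpha_0|_1$ uniformly over $\tau \in \mathcal{T}$; this verifies \ref{ass2.2-delta0:itm2:more}. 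The eigenvalue lower bound and the bounded third derivatives in \ref{ass2.2-delta0:itm3} follow directly from the $\delta_0 = 0$ part of Assumption \ref{ass3.4-a} \ref{ass3.4-a:itm5}, evaluated at $X_{J(\beta_0)}$, with no $\tau$-dependence. The main point to be careful about throughout is that every bound must hold with constants chosen uniformly over the entire $\mathcal{T}$ rather than over a shrinking neighborhood of a change point; this is exactly where the identity $X(\tau)^T\alpha_0 = X^T\beta_0$ and the vanishing of $m_j(\tau,\alpha_0)$ do the essential work, and it is the only substantive place the $\delta_0=0$ argument departs from that of Lemma \ref{l4.1}.
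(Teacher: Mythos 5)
Your proposal is correct and follows essentially the same route as the paper's proof: each part of Assumption \ref{ass2.2-delta0} is verified by transplanting the corresponding argument from Lemma \ref{l4.1}, with the $\delta_0=0$ identity $X(\tau)^T\alpha_0=X^T\beta_0$ (hence $m_j(\tau,\alpha_0)=0$ for all $\tau$) supplying the uniformity in $\tau\in\mathcal{T}$, and the uniform-in-$\tau$ restricted nonlinearity conditions \eqref{nl-condition-qr-delta0}--\eqref{nl-condition-qr-2-delta0} handling the cubic remainder. No gaps.
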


\subsubsection{Proof of Lemma \protect\ref{l4.2}}

\begin{proof}[Verification of Assumption \ref{ass2.2-delta0} \ref{ass2.2-delta0:L-cont}]
This is the same as the verification of 
Assumption \ref{a:obj-ftn} \ref{a:setting:L-cont}. 
\end{proof}

\begin{proof}[Verification of Assumption \ref{ass2.2-delta0} \ref{ass2.2-delta0:itm1}]
This can be verified exactly as in verification of 
 Assumption \ref{a:obj-ftn} \ref{a:obj-ftn:itm2} with $\alpha_0 = \beta_0$ now. 
\end{proof}

\begin{proof}[Verification of  Assumption \ref{ass2.2-delta0} \ref{ass2.2-delta0:itm2}]
By the arguments identical to those used to verify Assumption \ref{a:obj-ftn} \ref{a:obj-ftn:itm3}, we have that 
\begin{align*}
&\mathbb{E}\left[ \rho (Y,X^{T}\beta )-\rho (Y,X^{T}\beta_{0}) 1\{Q \leq \tau \} \right] %
\\
&\geq \frac{C_2}{2} \mathbb{E}[( X^{T}(\beta-\beta_0))^{2} 1\{Q \leq \tau \}]- \frac{C_{1}}{6}\mathbb{E}[|X^{T}(\beta-\beta_0)|^{3}1\{Q \leq \tau \} ] 
\\
&\geq \frac{C_2}{4} \mathbb{E}[| X^{T}(\beta-\beta_0)|^{2} 1\{Q \leq \tau \}],
\end{align*}%
where the last inequality follows from \eqref{nl-condition-qr-delta0}. This proves the case for $\tilde{r}_1(\eta)$. 
The case for $\tilde{r}_2(\eta)$ is similar and hence is omitted. 
\end{proof}

\begin{proof}[Verification of Assumptions \ref{ass2.2-delta0} \ref{ass2.2-delta0:itm2:more}
and \ref{ass2.2-delta0:itm3}]
They can be verified similarly as in verification of 
Assumption \ref{ass3.2} in the proof of Lemma Lemma \ref{l4.1}. 
For all $j$ and $\tau \in \mathcal{T}$, 
\begin{eqnarray*}
&&| m_j(\tau,\alpha)- m_j(\tau,\alpha_0)|=|\mathbb{E} X_j(\tau)(1\{Y\leq
X(\tau)^T\alpha\}-1\{Y\leq X(\tau)^T\alpha_0\})|\cr &&=|\mathbb{E} X_j(\tau)(\mathbb{P} (Y\leq
X(\tau)^T\alpha|X,Q)-\mathbb{P} (Y\leq X(\tau)^T\alpha_0|X,Q))|\cr &&\leq
C\mathbb{E} |X_j(\tau)||X(\tau)^T(\alpha-\alpha_0)|\leq
C|\alpha-\alpha_0|_1\max_{j\leq 2p, i\leq 2p}\mathbb{E} |X_j(\tau)X_i(\tau)|,
\end{eqnarray*}
which implies condition \ref{ass2.2-delta0} \ref{ass2.2-delta0:itm2:more} in view of Assumption \ref{a:setting}. It is also straightforward to verify  condition \ref{ass2.2-delta0} \ref{ass2.2-delta0:itm3} using Assumption \ref{ass3.4-a} \ref{ass3.4-a:itm5}.
\end{proof}

\section{Proofs of Theorems}\label{sec:proof-lasso}

Throughout the proofs, we define 
\begin{equation*}
\nu _{n}\left( \alpha ,\tau \right) \equiv\frac{1}{n}\sum_{i=1}^{n}\left[ \rho
\left( Y_{i},X_{i}\left( \tau \right) ^{T}\alpha \right) -
\mathbb{E} \rho \left(Y,X\left( \tau \right) ^{T}\alpha \right) \right].
\end{equation*}%
Without loss of generality let $\nu_{n}\left( \alpha _{J},\tau \right)=n^{-1}\sum_{i=1}^{n}\left[ \rho \left( Y_{i},X_{iJ}\left( \tau
\right) ^{T}\alpha _{J}\right) - \mathbb{E} \rho \left( Y,X_{J}\left( \tau \right)
^{T}\alpha _{J}\right) \right]$. 

In this section, we suppose that Assumptions \ref{a:obj-ftn} and  \ref{ass3.2} 
hold when $\delta_0 \neq 0$ and 
that Assumption \ref{ass2.2-delta0} holds when $\delta_0 = 0$, respectively.

\subsection{Useful Lemmas}

For the positive constant $K_1$ in Assumption  \ref{a:setting} \ref{a:setting:itm1},
define $$c_{np}\equiv\sqrt{\frac{2\log
\left( 4np\right) }{n}}+\frac{K_{1}\log \left( 4np\right) }{n}.$$
Let $\lceil x \rceil$ denote  the smallest integer greater than or equal to
a real number $x$.
The following lemma bounds 
$\nu _{n}\left( \alpha ,\tau \right)$. 


\begin{lem}
\label{lem-emp}  For any positive  sequences $ m_{1n}$ and $ m_{2n}$, and any $\widetilde{\delta}\in(0,1)$, there are constants $L_1, L_2$ and $L_3>0$ such that for   $a_{n}=L_1c_{np}\widetilde{\delta} ^{-1}$, $b_{n}=L_2c_{np} \lceil \log
_{2}\left( m_{2n}/m_{1n}\right) \rceil \widetilde{\delta} ^{-1},$ and $%
c_{n}=L_3n^{-1/2}\widetilde{\delta} ^{-1},$  
\begin{equation}
\mathbb{P} \left\{ \sup_{\tau \in \mathcal{T}}\sup_{\left\vert \alpha -\alpha
_{0}\right\vert _{1}\leq m_{1n}}\left\vert \nu _{n}\left( \alpha ,\tau
\right) -\nu _{n}\left( \alpha _{0},\tau \right) \right\vert \geq
a_{n}m_{1n}\right\} \leq \widetilde{\delta} ,  \label{emp1}
\end{equation}%
\begin{equation}
\mathbb{P} \left\{ \sup_{\tau \in \mathcal{T}}\sup_{m_{1n}\leq \left\vert \alpha
-\alpha _{0}\right\vert _{1}\leq m_{2n}}\frac{\left\vert \nu _{n}\left(
\alpha ,\tau \right) -\nu _{n}\left( \alpha _{0},\tau \right) \right\vert }{%
\left\vert \alpha -\alpha _{0}\right\vert _{1}}\geq b_{n}\right\} \leq
\widetilde{\delta} ,  \label{emp2}
\end{equation}%
and for any $\eta > 0$ and $\mathcal{T}_{\eta }=\left\{ \tau \in \mathcal{T}%
:\left\vert \tau -\tau _{0}\right\vert \leq \eta \right\} $,
\begin{equation}
\mathbb{P} \left\{ \sup_{\tau \in \mathcal{T}_{\eta }}\left\vert \nu _{n}\left(
\alpha _{0},\tau \right) -\nu _{n}\left( \alpha _{0},\tau _{0}\right)
\right\vert \geq c_{n} |\delta_0|_2 \sqrt{\eta} \right\} \leq \widetilde{\delta} .  \label{emp3}
\end{equation}
\end{lem}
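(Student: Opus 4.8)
The plan is to derive all three inequalities from bounds on the \emph{expected} supremum of the relevant centered empirical process, and then convert each into the stated tail bound by Markov's inequality; this conversion is precisely what produces the factors $\widetilde\delta^{-1}$ in $a_n$, $b_n$, $c_n$. For \eqref{emp1} it suffices to establish, for a universal constant $C$,
\[
\mathbb{E}\sup_{\tau\in\mathcal{T}}\sup_{|\alpha-\alpha_0|_1\leq m_{1n}}\bigl|\nu_n(\alpha,\tau)-\nu_n(\alpha_0,\tau)\bigr|\leq C\,c_{np}\,m_{1n},
\]
since Markov then bounds the probability in \eqref{emp1} by $C\,c_{np}\,m_{1n}/(a_n m_{1n})=(C/L_1)\widetilde\delta$, and we take $L_1=C$. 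The analogous targets for \eqref{emp2} and \eqref{emp3} are $C\,c_{np}\lceil\log_2(m_{2n}/m_{1n})\rceil$ for the normalized increment and $C\,n^{-1/2}|\delta_0|_2\sqrt\eta$, respectively.

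To bound the expected supremum in \eqref{emp1} I would first symmetrize, introducing Rademacher signs $\varepsilon_1,\dots,\varepsilon_n$, so that up to a factor $2$ the quantity is controlled by $\mathbb{E}\sup_{\tau,\alpha}|n^{-1}\sum_i\varepsilon_i g_i(\alpha,\tau)|$ with $g_i(\alpha,\tau)\equiv\rho(Y_i,X_i(\tau)^T\alpha)-\rho(Y_i,X_i(\tau)^T\alpha_0)$. The key structural fact is that $g_i(\alpha,\tau)$ is \emph{piecewise constant} in $\tau$, changing only as $\tau$ crosses an observed $Q_i$; hence $\sup_{\tau\in\mathcal{T}}$ collapses to a maximum over the $N+1\le n+1$ intervals cut out by the order statistics of $\{Q_i\}$, which is the only source of a $\log n$ factor. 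For each fixed grid value $\tau_k$ I would apply the Ledoux--Talagrand contraction inequality (the losses $\rho(Y_i,\cdot)$ are convex and $L$-Lipschitz by Assumption \ref{a:obj-ftn}\ref{a:setting:L-cont}) to pass to the linear Rademacher process, and then use $\ell_1$--$\ell_\infty$ duality to bound it by $m_{1n}\max_{j\le 2p}|n^{-1}\sum_i\varepsilon_i X_{ij}(\tau_k)|$. Finally, $\mathbb{E}\max_k\max_{j\le 2p}|n^{-1}\sum_i\varepsilon_i X_{ij}(\tau_k)|$ is handled by Bernstein's inequality, using the sub-exponential moment bound in Assumption \ref{a:setting}\ref{a:setting:itm1} and the bounded weights of Assumption \ref{a:setting}\ref{a:setting:itm3}, together with a union bound over the $O(np)$ pairs $(k,j)$; the two terms of the Bernstein bound are exactly the two terms defining $c_{np}$.

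Inequality \eqref{emp2} then follows from \eqref{emp1} by the standard peeling (slicing) device. I would partition the annulus $m_{1n}\le|\alpha-\alpha_0|_1\le m_{2n}$ into the $L\equiv\lceil\log_2(m_{2n}/m_{1n})\rceil$ dyadic shells $\{2^{\ell-1}m_{1n}\le|\alpha-\alpha_0|_1\le 2^{\ell}m_{1n}\}$, apply the expectation bound of the previous step on each shell at radius $2^{\ell}m_{1n}$, and observe that on that shell the normalized increment is at most $C\,c_{np}\,2^{\ell}m_{1n}/(2^{\ell-1}m_{1n})=2C\,c_{np}$. Summing the $L$ shell-probabilities after Markov and setting $b_n=L_2\,c_{np}\,L\,\widetilde\delta^{-1}$ with $L_2=2C$ gives the total bound $\widetilde\delta$; the factor $\lceil\log_2(m_{2n}/m_{1n})\rceil$ is precisely the number of shells.

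For \eqref{emp3} the coefficient is frozen at $\alpha_0$ and only $\tau\in\mathcal{T}_\eta$ varies. Here the increment $\rho(Y_i,X_i(\tau)^T\alpha_0)-\rho(Y_i,X_i(\tau_0)^T\alpha_0)$ vanishes unless $Q_i$ lies between $\tau$ and $\tau_0$, in which case it is at most $L|X_i^T\delta_0|$, so the whole process is supported on the rare event $\{Q\in(\tau_0-\eta,\tau_0+\eta]\}$. By Assumption \ref{a:setting}\ref{a:dist-Q:itm1} and \ref{a:threshold:itm1}, the envelope satisfies $\mathbb{E}[(X^T\delta_0)^2 1\{Q\in(\tau_0-\eta,\tau_0+\eta]\}]\le C|\delta_0|_2^2\eta$. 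Since $\tau\mapsto 1\{Q_i>\tau\}$ is monotone, the indexing class is a fixed VC class of index one, so a chaining/maximal inequality for an empirical process with square-integrable envelope yields expected supremum $\le C\,n^{-1/2}|\delta_0|_2\sqrt\eta$ with \emph{no} extra logarithmic factor (the entropy integral of a bounded-index VC class converges), and Markov delivers \eqref{emp3}. \textbf{The main obstacle} I anticipate is the interplay between the maximum over the $\tau$-grid and the contraction step in \eqref{emp1}: the Ledoux--Talagrand inequality is an in-expectation, per-$\tau$ statement, so combining it with the maximum over the $O(n)$ grid points requires either a concentration bound for each per-$\tau$ symmetrized supremum followed by a union bound, or a genuinely uniform-in-$\tau$ maximal inequality; securing this uniformity while preserving the rate $c_{np}$ (without picking up additional entropy factors) is the delicate point, and the logarithm-free uniform control needed in \eqref{emp3} is a milder instance of the same difficulty.
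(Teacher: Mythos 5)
Your architecture matches the paper's proof almost step for step: symmetrization, Lipschitz contraction, $\ell_1$--$\ell_\infty$ duality to reduce to $\max_{j\le 2p}\sup_{\tau}|n^{-1}\sum_i\epsilon_iX_{ij}(\tau)|$, the observation that $\tau\mapsto X_{ij}(\tau)$ is a step function with jumps at the $Q_i$'s, a Bernstein-type bound with a union over $O(np)$ terms for \eqref{emp1}, dyadic peeling with $\lceil\log_2(m_{2n}/m_{1n})\rceil$ shells for \eqref{emp2} (the paper sums the shell expectations before a single application of Markov, you apply Markov per shell with budget $\widetilde\delta/L$ --- immaterial), and a VC/envelope maximal inequality (Theorem 2.14.1 of van der Vaart and Wellner) with envelope $L_2$-norm of order $|\delta_0|_2\sqrt{\eta}$ for \eqref{emp3}. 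The one point you leave open --- the interaction between the $\tau$-grid maximum and the contraction step --- is real for the ordering you propose (grid reduction first, then per-$\tau$ contraction, then a union bound over $O(n)$ in-expectation statements, which does not go through), but it has a clean resolution that the paper uses: both the symmetrization and contraction theorems hold for the supremum over an \emph{arbitrary} index set, so one applies them once to the class indexed jointly by $(\alpha,\tau)$, arriving at $4L\,\mathbb{E}\sup_{\tau\in\mathcal{T}}\sup_{|\alpha-\alpha_0|_1\le m_{1n}}|n^{-1}\sum_i\epsilon_iX_i(\tau)^T(\alpha-\alpha_0)|$ before any discretization. Only after this linearization does the $\tau$-uniformity enter, and then only through $\mathbb{E}\sup_\tau\max_j|n^{-1}\sum_i\epsilon_iX_{ij}(\tau)|$, which is a finite maximum over at most $2p\times n$ sub-exponential averages; the paper bounds it by $c_{np}$ via Jensen's inequality applied to $\tilde L\log\mathbb{E}\exp(\tilde L^{-1}\max(\cdot))$, Bernstein's inequality for exponential moments under Assumption \ref{a:setting}\ref{a:setting:itm1}, and the choice $\tilde L=K_1+\sqrt{n/(2\log(4np))}$, which is exactly where the two terms of $c_{np}$ come from. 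With that reordering your sketch closes; everything else you wrote is the paper's argument.
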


\noindent \textbf{Proof of \eqref{emp1}}: 
Let $\epsilon _{1},...,\epsilon _{n}$ denote a Rademacher sequence,
independent of $\{Y_{i},X_{i},Q_{i}\}_{i\leq n}$.
By the symmetrization theorem (see, for example, 
Theorem 14.3 of \cite{bulmann}) and then by the contraction theorem (see, for example, Theorem 14.4 of \cite{bulmann}),
\begin{align*}
& \mathbb{E} \left( \sup_{\tau \in \mathcal{T} }\sup_{\left\vert \alpha -\alpha _{0}\right\vert
_{1}\leq m_{1n}}\left\vert \nu _{n}\left( \alpha ,\tau \right) -\nu
_{n}\left( \alpha _{0},\tau \right) \right\vert \right) \\
& \leq 2\mathbb{E} \left( \sup_{\tau \in \mathcal{T} }\sup_{\left\vert \alpha -\alpha
_{0}\right\vert _{1}\leq m_{1n}}\left\vert \frac{1}{n}\sum_{i=1}^{n}\epsilon
_{i} \left[ \rho \left( Y_{i},X_{i}\left( \tau \right) ^{T}\alpha \right) -\rho
\left( Y_{i},X_{i}\left( \tau \right) ^{T}\alpha _{0}\right) \right] \right\vert
\right) \\
& \leq 4L\mathbb{E}\left( \sup_{\tau \in \mathcal{T} }\sup_{\left\vert \alpha -\alpha
_{0}\right\vert _{1}\leq m_{1n}}\left\vert \frac{1}{n}\sum_{i=1}^{n}\epsilon
_{i}X_{i}\left( \tau \right) ^{T}\left( \alpha -\alpha _{0}\right)
\right\vert \right).
\end{align*}%
Note that 
\begin{align}\label{holder-ineq}
\begin{split}
\lefteqn{
\sup_{\tau \in \mathcal{T} } \sup_{\left\vert \alpha -\alpha _{0}\right\vert _{1}  \leq m_{1n} } \left\vert 
\frac{1}{n}\sum_{i=1}^{n}\epsilon _{i}X_{i}\left( \tau \right) ^{T}\left( \alpha
-\alpha _{0}\right) \right\vert } \\
&= \sup_{\tau \in \mathcal{T} } \sup_{\left\vert \alpha -\alpha _{0}\right\vert _{1}  \leq m_{1n} } \left\vert 
\sum_{j=1}^{2p} \left( \alpha_j
-\alpha _{0j}\right) \frac{1}{n}\sum_{i=1}^{n} \epsilon _{i}X_{ij}\left( \tau \right)  \right\vert \\
&\leq  \sup_{\left\vert \alpha -\alpha _{0}\right\vert _{1}  \leq m_{1n} } 
 \sum_{j=1}^{2p} \left| \alpha_j
-\alpha _{0j}\right| 
\sup_{\tau \in \mathcal{T} }  \max_{j\leq 2p}\left\vert \frac{1}{
n}\sum_{i=1}^{n}\epsilon _{i}X_{ij}\left( \tau \right) \right\vert  
\\
&\leq m_{1n} \sup_{\tau \in \mathcal{T} } 
\max_{j\leq 2p}\left\vert \frac{1}{%
n}\sum_{i=1}^{n}\epsilon _{i}X_{ij}\left( \tau \right) \right\vert .
\end{split}
\end{align}%
For all $\tilde{L} > K_1$, 
\begin{align*}
\mathbb{E}\left( \sup_{\tau \in \mathcal{T} } \max_{j\leq 2p}\left\vert  \sum_{i=1}^{n}\epsilon _{i}X_{ij}\left( \tau \right) \right\vert \right) 
&\leq_{(1)}
\tilde{L} \log \mathbb{E} \left[ \exp \left( \tilde{L}^{-1} \sup_{\tau \in \mathcal{T} } \max_{j\leq 2p}\left\vert  \sum_{i=1}^{n}\epsilon _{i}X_{ij}\left( \tau \right) \right\vert \right) \right] \\
&\leq_{(2)}
\tilde{L} \log \mathbb{E} \left[ \exp \left( \tilde{L}^{-1} \max_{\tau \in 
\{Q_1,\ldots, Q_n \} } \max_{j\leq 2p}\left\vert  \sum_{i=1}^{n}\epsilon _{i}X_{ij}\left( \tau \right) \right\vert \right) \right] \\
&\leq_{(3)} \tilde{L} \log  \left[  4np 
\exp \left( \frac{n}{2(\tilde{L}^2 - \tilde{L}K_1)} \right) \right], 
\end{align*}
where inequality $(1)$ follows from Jensen's inequality,
inequality $(2)$ comes from the fact that $X_{ij}\left( \tau
\right)$ is a step function with jump points on $\mathcal{T} \cap
\{Q_1,\ldots, Q_n \}$, and
inequality $(3)$ is by Bernstein's inequality for the exponential moment of an average (see, for example, Lemma 14.8 of \cite{bulmann}),
combined with the simple inequalities that 
$\exp(|x|) \leq \exp(x) + \exp(-x)$
and  that $\exp(\max_{1 \leq j \leq J} x_j) \leq \sum_{j=1}^J \exp( x_j)$.
Then it follows that
\begin{align}\label{eqa.5}
\mathbb{E}\left( \sup_{\tau \in \mathcal{T} } \max_{j\leq 2p}\left\vert  \frac{1}{n}\sum_{i=1}^{n}\epsilon _{i}X_{ij}\left( \tau \right) \right\vert \right) 
&\leq
  \frac{\tilde{L} \log (4np)}{n}  
+  \frac{1}{2(\tilde{L} - K_1)} = c_{np},
\end{align}
where the last equality follows by  taking $\tilde{L} = K_1 + \sqrt{n/[2\log(4np)]}$.
Thus, by Markov's inequality,
\begin{equation*}
\mathbb{P} \left\{ \sup_{\tau \in \mathcal{T} }\sup_{\left\vert \alpha -\alpha _{0}\right\vert
_{1}\leq m_{1n}}\left\vert \nu _{n}\left( \alpha ,\tau \right) -\nu
_{n}\left( \alpha _{0},\tau \right) \right\vert >a_{n}m_{1n}\right\} \leq
\left( a_{n}m_{1n}\right) ^{-1}4Lm_{1n}c_{np}=\widetilde{\delta},
\end{equation*}
where the last equality follows by setting $L_1 = 4L$.

\noindent \textbf{Proof of \eqref{emp2}}: 
Recall that $\epsilon _{1},...,\epsilon _{n}$ is a Rademacher sequence,
independent of $\{Y_{i},X_{i},Q_{i}\}_{i\leq n}$.
Note that 
\begin{align*}
& \mathbb{E}
\left( \sup_{\tau \in \mathcal{T} }\sup_{m_{1n}\leq \left\vert \alpha -\alpha
_{0}\right\vert _{1}\leq m_{2n}}\frac{\left\vert \nu _{n}\left( \alpha ,\tau
\right) -\nu _{n}\left( \alpha _{0},\tau \right) \right\vert }{\left\vert
\alpha -\alpha _{0}\right\vert _{1}} \right) \\
& \leq_{(1)} 2\mathbb{E}\left( \sup_{\tau \in \mathcal{T} }\sup_{m_{1n}\leq \left\vert \alpha
-\alpha _{0}\right\vert _{1}\leq m_{2n}}\left\vert \frac{1}{n}%
\sum_{i=1}^{n}\epsilon _{i}\frac{\rho \left( Y_{i},X_{i}\left( \tau \right)
^{T}\alpha \right) -\rho \left( Y_{i},X_{i}\left( \tau \right) ^{T}\alpha
_{0}\right) }{\left\vert \alpha -\alpha _{0}\right\vert _{1}}\right\vert
\right) \\
& \leq_{(2)} 2\sum_{j=1}^{k}\mathbb{E} \left( \sup_{\tau \in \mathcal{T} }\sup_{2^{j-1}m_{1n}\leq
\left\vert \alpha -\alpha _{0}\right\vert _{1}\leq 2^{j}m_{1n}}\left\vert 
\frac{1}{n}\sum_{i=1}^{n}\epsilon _{i}\frac{\rho \left( Y_{i},X_{i}\left( \tau
\right) ^{T}\alpha \right) -\rho \left( Y_{i},X_{i}\left( \tau \right)
^{T}\alpha _{0}\right) }{2^{j-1}m_{1n}}\right\vert \right) \\
& \leq_{(3)} 4L\sum_{j=1}^{k}\mathbb{E}\left( \sup_{\tau \in \mathcal{T} }\sup_{2^{j-1}m_{1n}\leq
\left\vert \alpha -\alpha _{0}\right\vert _{1}\leq 2^{j}m_{1n}}\left\vert 
\frac{1}{n}\sum_{i=1}^{n}\epsilon _{i}\frac{X_{i}\left( \tau \right) ^{T}\left(
\alpha -\alpha _{0}\right) }{2^{j-1}m_{1n}}\right\vert \right),
\end{align*}
where inequality (1) is by  the symmetrization theorem, 
inequality (2) holds for  some $k \equiv \lceil \log_{2}\left( m_{2n}/m_{1n}\right) \rceil$, and inequality (3) follows from  the contraction theorem.

Next, the identical arguments showing \eqref{holder-ineq} yield
\begin{align*}
\sup_{2^{j-1}m_{1n}\leq \left\vert \alpha -\alpha _{0}\right\vert _{1}\leq
2^{j}m_{1n}}\left\vert \frac{1}{n}\sum_{i=1}^{n}\epsilon _{i}\frac{X_{i}\left(
\tau \right) ^{T}\left( \alpha -\alpha _{0}\right) }{2^{j-1}m_{1n}}%
\right\vert \leq 2\max_{j\leq 2p}\left\vert \frac{1}{n}\sum_{i=1}^{n}\epsilon
_{i}X_{ij}\left( \tau \right) \right\vert
\end{align*}
uniformly in $\tau \in \mathcal{T}$.
Then, as in the proof of \eqref{emp1},  
Bernstein's and Markov's inequalities imply that 
\begin{equation*}
\mathbb{P} \left\{ \sup_{\tau \in \mathcal{T}}\sup_{m_{1n}\leq \left\vert \alpha -\alpha
_{0}\right\vert _{1}\leq m_{2n}}\frac{\left\vert \nu _{n}\left( \alpha ,\tau
\right) -\nu _{n}\left( \alpha _{0},\tau \right) \right\vert }{\left\vert
\alpha -\alpha _{0}\right\vert _{1}}>b_{n}\right\} \leq
b_{n}^{-1}8Lkc_{np}=\widetilde{\delta},
\end{equation*}
where the last equality follows by setting $L_2 = 8L$.

\noindent \textbf{Proof of \eqref{emp3}}: As above, by the
symmetrization and contraction theorems, we have that 
\begin{eqnarray*}
&&\mathbb{E} \left( \sup_{\tau \in \mathcal{T}_{\eta }}\left\vert \nu _{n}\left(
\alpha _{0},\tau \right) -\nu _{n}\left( \alpha _{0},\tau _{0}\right)
\right\vert \right) \\
&\leq &2\mathbb{E} \left( \sup_{\tau \in \mathcal{T}_{\eta }}\left\vert \frac{1}{n}%
\sum_{i=1}^{n}\epsilon _{i} \left[ \rho \left( Y_{i},X_{i}\left( \tau \right) ^{T}\alpha
_{0}\right) -\rho \left( Y_{i},X_{i}\left( \tau _{0}\right) ^{T}\alpha
_{0}\right) \right\vert \right] \right) \\
&\leq &4L\mathbb{E} \left( \sup_{\tau \in \mathcal{T}_{\eta }}\left\vert \frac{1}{n}%
\sum_{i=1}^{n}\epsilon _{i}X_{i}^{T}\delta _{0}\left( 1\left\{ Q_{i}>\tau
\right\} -1\left\{ Q_{i}>\tau _{0}\right\} \right) \right\vert \right) \\
&\leq &\frac{4 L C_{1} ( M_2 |\delta_0|_2^2 K_2 \eta)^{1/2} }{\sqrt{n}}
\end{eqnarray*}%
for some constant $C_1 < \infty$, where the last inequality is due to Theorem 2.14.1 of 
\cite{VW} with $M_2$  in Assumption \ref{a:setting} \ref{a:threshold:itm1}
and $K_2$ in Assumption \ref{a:setting} \ref{a:dist-Q:itm1}. Specifically, we apply the second inequality of this theorem to the class $\mathcal{F}=\{f(\epsilon,  X, Q, \tau)=\epsilon X^T\delta_0(1\{Q>\tau\}-1\{Q>\tau_0\}), \tau\in\mathcal{T}_{\eta}\}$. Note that   $\mathcal{F}$ is a Vapnik-Cervonenkis class, which has a uniformly bounded entropy integral and thus $ J ( 1, \mathcal{F} ) $ in their theorem is bounded, and that the $ L_2 $ norm of the envelope $ |\epsilon_i X^T_i \delta_0|1\{|Q_i - \tau_0 |< \eta \} $ is proportional to the square root of the length of $ \mathcal{T}_{\eta} $: $$
(E|\epsilon_i X^T_i \delta_0|^21\{|Q_i - \tau_0 |< \eta \})^{1/2}\leq (2M_2 |\delta_0|_2^2 K_2\eta)^{1/2}.
$$ This implies the last inequality with $C_1$ being $\sqrt{2}$ times  the entropy integral of the class $\mathcal{F}$. 
Then, by Markov's inequality, we obtain \eqref{emp3} with $L_3 = 4 L C_{1} ( M_2 K_2)^{1/2}$.

\subsection{Proof of Theorem \protect\ref{l2.1}}

Define $D(\tau )=\mathrm{diag}%
(D_{j}(\tau ):j\leq 2p)$; and also let $D_{0}=D\left( \tau _{0}\right) $   and
$\breve{D}=D\left( \breve{\tau}\right)$.  
It follows from the definition of $(\breve{\alpha},\breve{\tau})$ in \eqref{eq2.2add}
that 
\begin{align}\label{basic-ineq-use}
\frac{1}{n}\sum_{i=1}^{n}\rho (Y_{i},X_{i}(\breve{\tau})^{T}\breve{%
	\alpha })+\kappa _{n}|\breve{D}\breve{\alpha }|_{1}\leq \frac{1}{n}%
\sum_{i=1}^{n}\rho (Y_{i},X_{i}(\tau _{0})^{T}\alpha _{0})+\kappa
_{n}|D_{0}\alpha _{0}|_{1}.
\end{align}
From \eqref{basic-ineq-use}  we obtain the following inequality 
\begin{eqnarray}
R(\breve{\alpha },\breve{\tau}) &\leq &
\left[ \nu _{n}(\alpha_{0},\tau_{0})-\nu _{n}(\breve{\alpha},\breve{\tau })\right] +\kappa
_{n}|D_{0}\alpha _{0}|_{1}-\kappa _{n}|\breve{D}\breve{\alpha }|_{1} 
\notag \\
&=&\left[ \nu _{n} (\alpha _{0},\breve{\tau})-\nu _{n}(\breve{\alpha },\breve{\tau})%
\right] +\left[ \nu _{n} (\alpha _{0},\tau _{0})-\nu _{n}(\alpha _{0},\breve{\tau})%
\right]  \label{basic ineq} \\
&&+\kappa _{n}\left( |D_{0}\alpha _{0}|_{1}-|\breve{D}\breve{\alpha }%
|_{1}\right) .  \notag
\end{eqnarray}%
Note that the second component $\left[ \nu _{n}(\alpha _{0},\tau _{0})-\nu _{n}(%
\alpha _{0},\breve{\tau})\right] =o_P\left[ (s/n)^{1/2}\log n\right] $ due to
\eqref{emp3} of Lemma \ref{lem-emp}
with taking $\mathcal{T}_{\eta } = \mathcal{T}$
by choosing some sufficiently large $\eta > 0$. Thus, we focus on the other two terms in the following
discussion. We consider two cases respectively:  $\left\vert \breve{\alpha}-\alpha _{0}\right\vert _{1}\leq
\left\vert \alpha _{0}\right\vert _{1}$ and $\left\vert \breve{\alpha}-\alpha _{0}\right\vert _{1}>
\left\vert \alpha _{0}\right\vert _{1}$.  

Suppose that $\left\vert \breve{\alpha}-\alpha _{0}\right\vert _{1}\leq
\left\vert \alpha _{0}\right\vert _{1}.$ Then, $\left\vert \breve{D}\breve{%
	\alpha}\right\vert _{1}\leq \left\vert \breve{D}\left( \breve{\alpha}-\alpha
_{0}\right) \right\vert _{1}+\left\vert \breve{D}\alpha _{0}\right\vert
_{1}\leq 2\bar D \left\vert \alpha _{0}\right\vert _{1},$ and%
\begin{equation*}
\left\vert \kappa _{n}\left( |D_{0}\alpha _{0}|_{1}-|\breve{D}\breve{%
	\alpha }|_{1}\right) \right\vert \leq 3\kappa _{n}\bar D \left\vert
\alpha _{0}\right\vert _{1}.
\end{equation*}%
Applying 
\eqref{emp1} in Lemma \ref{lem-emp} with $m_{1n}=\left\vert \alpha _{0}\right\vert _{1}$, we  obtain
\begin{equation*}
\left\vert \nu_n (\alpha _{0},\breve{\tau})-\nu _{n}(\breve{\alpha }, \breve{\tau})\right\vert \leq a_{n}\left\vert \alpha _{0}\right\vert _{1}\leq \kappa
_{n}\left\vert \alpha _{0}\right\vert _{1}  \; \text{ w.p.a.1},
\end{equation*}%
where the last inequality
follows from the fact that 
$a_n \ll \kappa_n$  with $\kappa_n$ satisfying \eqref{kappa_n_rate}.
Thus, the theorem follows in this case.

Now assume that $\left\vert \breve{\alpha}-\alpha _{0}\right\vert
_{1}>\left\vert \alpha _{0}\right\vert _{1}$. 
In this case, apply \eqref{emp2} of Lemma \ref{lem-emp} with $m_{1n}=\left\vert \alpha _{0}\right\vert _{1}$
and $m_{2n}=2M_1 p$, where $M_1$ is defined in Assumption \ref{a:setting}\ref{a:setting:itm2},  to obtain
\begin{align*}
\frac{ \left\vert \nu_n (\alpha _{0},\breve{\tau})-\nu _{n}(\breve{\alpha },\breve{\tau})\right\vert}{\left\vert \breve{\alpha}-\alpha _{0}\right\vert
	_{1}}  \leq b_{n}  
\end{align*}%
with probability arbitrarily close to one for small enough $ \widetilde{\delta} $. Since $b_n \ll \underline{D} \kappa_n$, we have
\begin{align*}
\left\vert \nu_n (\alpha _{0},\breve{\tau})-\nu _{n}(\breve{\alpha },\breve{\tau})\right\vert
\leq \kappa_{n} \underline{D} \left\vert \breve{\alpha}-\alpha _{0}\right\vert
_{1} 
\leq \kappa_{n} \left\vert 
\breve{D}\left( \breve{\alpha}-\alpha _{0}\right) \right\vert _{1}  \; \text{ w.p.a.1}. 
\end{align*}%
Therefore, 
\begin{align*}
R(\breve{\alpha },\breve{\tau}) +o_P\left( n^{-1/2}\log n\right) 
&\leq
\kappa _{n}\left( |D_{0}\alpha _{0}|_{1}-|\breve{D}\breve{\alpha }%
|_{1}\right) +\kappa _{n}\left\vert \breve{D}\left( \breve{\alpha}-\alpha
_{0}\right) \right\vert _{1} \\
&\leq
\kappa _{n}\left( |D_{0}\alpha _{0}|_{1}-|\breve{D}\breve{\alpha}_{J}%
|_{1}\right) +\kappa _{n}\left\vert \breve{D}\left( \breve{\alpha}-\alpha
_{0}\right)_{J} \right\vert _{1},
\end{align*}%
where the last inequality follows from the fact that $\breve{\alpha}-\alpha _{0}=\breve{\alpha}_{J^{C}}+\left( \breve{\alpha}%
-\alpha _{0}\right) _{J}.$ 
Thus, the theorem follows in this case as well.

\subsection{Proof of Theorem \ref{l2.1-improved}}\label{proof-theorem-c-2} 

Define
\begin{align}\label{def-M-star}
M^{*} \equiv 4 \max_{\tau\in T_{n}}\left(R\left(\alpha_{0},\tau\right)+2\omega_{n}\bar{D}\left|\alpha_{0}\right|_{1}\right)/(\omega_{n}\underline{D}),
\end{align}
where $T_{n} \subset \mathcal{T}$ will be specified below.
For each $\tau$, define 
\begin{align}\label{alpha-hat-tau}
 \widehat{\alpha}(\tau)
 &= \text{argmin}_{\alpha \in \mathcal{A}} R_n(\alpha,\tau) + \omega_{n} \sum_{j=1}^{2p} D_j(\tau)|\alpha_j|.
\end{align}
It follows from the definition of $\widehat{\alpha}(\tau)$ in \eqref{alpha-hat-tau} that
\begin{align}\label{basic-ineq-convex-case}
\frac{1}{n}\sum_{i=1}^{n}\rho (Y_{i},X_{i}(\tau)^{T}\widehat{\alpha}(\tau))
+\omega _{n}| D(\tau) \widehat{\alpha}(\tau)|_{1}\leq 
\frac{1}{n}\sum_{i=1}^{n}\rho (Y_{i},X_{i}(\tau)^{T}\alpha _{0})+\omega
_{n}|D(\tau)\alpha _{0}|_{1}.
\end{align}
Next, let 
\[
t\left(\tau\right)=\frac{M^{*}}{M^{*}+\left|\widehat{\alpha}\left(\tau\right)-\alpha_{0}\right|_{1}}
\]
and $\bar{\alpha}\left(\tau\right)=t\left(\tau\right)\widehat{\alpha}\left(\tau\right)+\left(1-t\left(\tau\right)\right)\alpha_{0}$.
By construction, it follows that $\left|\bar{\alpha}\left(\tau\right)-\alpha_{0}\right|_{1}\leq M^{*}$.
And also note that
\begin{equation}
 \left|\bar{\alpha}\left(\tau\right)-\alpha_{0}\right|_{1}\leq M^{*}/2
 \text{ implies } \left|\widehat{\alpha}\left(\tau\right)-\alpha_{0}\right|_{1}\leq M^{*} \label{apbar-aphat}
\end{equation}
since $\bar{\alpha}\left(\tau\right)-\alpha_{0}=t\left(\tau\right)\left(\widehat{\alpha}\left(\tau\right)-\alpha_{0}\right)$. 

For each $\tau$, \eqref{basic-ineq-convex-case} and the convexity of the following map
\[
\alpha \mapsto 
\frac{1}{n}\sum_{i=1}^{n}\rho (Y_{i},X_{i}(\tau)^{T}\alpha)
+\omega _{n}| D(\tau) \alpha|_{1}
\]
implies that
\begin{align*}
&\frac{1}{n}\sum_{i=1}^{n}\rho (Y_{i},X_{i}(\tau)^{T}\bar{\alpha}\left(\tau\right))
+\omega _{n}| D(\tau) \bar{\alpha}\left(\tau\right)|_{1} \\
&\leq
t\left(\tau\right) \left[ \frac{1}{n}\sum_{i=1}^{n}\rho (Y_{i},X_{i}(\tau)^{T}\widehat{\alpha}(\tau))
+\omega _{n}| D(\tau) \widehat{\alpha}(\tau)|_{1} \right] \\
&+ [1-t\left(\tau\right)] \left[ \frac{1}{n}\sum_{i=1}^{n}\rho (Y_{i},X_{i}(\tau)^{T}\alpha _{0})+\omega
_{n}|D(\tau)\alpha _{0}|_{1} \right] \\
&\leq
 \left[ \frac{1}{n}\sum_{i=1}^{n}\rho (Y_{i},X_{i}(\tau)^{T}\alpha _{0})+\omega
_{n}|D(\tau)\alpha _{0}|_{1} \right],
\end{align*}
which in turn yields  the following inequality 
\begin{align} \label{basic ineq new1}
R(\bar{\alpha}(\tau),\tau) + \omega_{n}|D(\tau)   \bar{\alpha}(\tau)|_{1} 
&\leq 
\left[ \nu _{n}(\alpha_{0},\tau)-\nu _{n}(\bar{\alpha}(\tau),\tau )\right] +
R(\alpha_{0},\tau) 
+\omega_{n}  |D(\tau)\alpha _{0}|_{1} .
\end{align}%
Furthermore, by the triangle inequality, \eqref{basic ineq new1} can be written as 
\begin{align}\label{basic ineq new2before}
R(\bar{\alpha}(\tau),\tau) +\omega_{n}\underline{D}\left|\bar{\alpha}(\tau)-\alpha_{0}\right|_{1}
&\leq 
\left[ \nu _{n}(\alpha_{0},\tau)-\nu _{n}(\bar{\alpha}(\tau),\tau )\right] +
R(\alpha_{0},\tau) +2\omega_{n} \overline{D} |\alpha _{0}|_{1}. 
\end{align}
Now let $Z_{M}=\sup_{\tau\in T_{n}}\sup_{\left|\alpha-\alpha_{0}\right|\leq M}\left|\nu_{n}\left(\alpha,\tau\right)-\nu_{n}\left(\alpha_{0},\tau\right)\right|$ for each $M > 0$.
Then, by Lemma \ref{lem-emp}, $Z_{M^\ast}=o_{P}\left(\omega_{n}M^\ast\right)$ by the simple fact that
$\log(np) \leq 2\log(n \vee p)$.
Thus, in view of the definition of $M^\ast$ in \eqref{def-M-star},
the following inequality holds w.p.a.1,
\begin{align}\label{basic ineq new2}
R(\bar{\alpha}(\tau),\tau) +\omega_{n}\underline{D}\left|\bar{\alpha}(\tau)-\alpha_{0}\right|_{1}
&\leq 
 \omega_n \underline{D} M^\ast /2
\end{align}
uniformly in $\tau \in T_n$. 

We can repeat the same arguments for $\widehat{\alpha}(\tau)$ instead of $\bar{\alpha}(\tau)$ due to (\ref{apbar-aphat}) and (\ref{basic ineq new2}),  
to obtain
\begin{align}\label{basic ineq new3}
R(\widehat{\alpha}(\tau),\tau) +\omega_{n}\underline{D}\left|\widehat{\alpha}(\tau)-\alpha_{0}\right|_{1}
&\leq 
\omega_n \underline{D} M^\ast = O (\omega_n s), \text{  w.p.a.1},
\end{align}
uniformly in $\tau \in T_n$.
It remains to show that there exists a set $ T_n $
such that $ \widehat{\tau} \in T_n $ w.p.a.1 and the corresponding $ M^\ast = O(s) $. We split the remaining part of the proof into two cases: $ \delta_0 \neq 0 $ and $ \delta_0 = 0 $.
\medskip

\noindent\textbf{(Case 1: $\delta_{0}\neq0$)}

Let 
\[
T_{n}=\left\{ \tau:\left|\tau-\tau_{0}\right|\leq C n^{-1} \log \log n \right\} 
\]
for some constant $C > 0$.
Note that we assume that if $\delta_0 \neq 0$, then 
\begin{equation*}
|\widehat{\tau}-\tau _{0}|=O_P (n^{-1}),
\end{equation*}
which implies that $ \widehat{\tau} \in T_n $ w.p.a.1.
Furthermore, note that 
\begin{align}  \label{R-alpha0-tau}
\begin{split}
R\left( \alpha_0, \tau\right) & =\mathbb{E} \left(   \left[ \rho\left( Y,X^T\theta_{0}\right) -\rho\left( Y,X^T\beta_{0}\right)
\right] 1\left\{ \tau <Q\leq\tau_{0}\right\} \right) \\
&\;\;\;+\mathbb{E} \left(   \left[ \rho\left( Y,X^T\beta_{0}\right) -\rho\left( Y,X^T\theta_{0}\right)
\right] 1\left\{ \tau_0<Q\leq\tau\right\} \right).
\end{split} 
\end{align}
Combining the fact that the objective function is Liptschitz continuous by Assumptions \ref{a:obj-ftn}  \ref{a:setting:L-cont}
with Assumption \ref{a:setting}, we have that 
\begin{align*}
\sup_{\tau \in T_n} |R\left( \alpha_0, \tau\right)|
&\leq L \sup_{\tau \in T_n} \Big[  \mathbb{E} \left(   |X^T \delta_0| |  1\left\{ \tau <Q\leq\tau_{0}\right\} \right)
+ \mathbb{E} \left(   |X^T \delta_0| |  1\left\{ \tau_0<Q\leq\tau\right\} \right) \Big] \\
&= O\left( \left|\delta_{0}\right|_{1} n^{-1} \log \log n \right) \\
&= o \left(\left|\delta_{0}\right|_{1} \omega_n^2 \right).
\end{align*}
Thus, $M^{*}=O\left(|\alpha_0|_1\right)=O\left(s\right)$.

\medskip

\noindent \textbf{(Case 2: $\delta_{0}=0$)} Redefine $M^{*}$ with $T_n = \mathcal{T}$ as the maximum over the
whole parameter space for $\tau$. Note that when $\delta_{0}=0$, we have that $R\left(\alpha_{0},\tau\right) = 0$  and $M^{*}=O\left(|\alpha_0|_1\right)=O\left(s\right)$.
Therefore, the desired result follows immediately.

\subsection{Proof of Theorem \protect\ref{th2.2}}

\begin{remark}
We first briefly provide  the logic behind the proof of Theorem \ref{th2.2} here.
Note that for all $
\alpha \equiv (\beta^T, \delta^T)^T\in\mathbb{R}^{2p}$ and $\theta \equiv \beta+\delta$,
the excess risk has the following decomposition: when $\tau_1<\tau_0$,
\begin{align}  \label{eq2.1}
\begin{split}
R\left( \alpha, \tau_1\right) & =\mathbb{E} \left( 
\left[\rho\left( Y,X^T\beta\right)
-\rho\left( Y,X^T\beta _{0}\right) \right] 1\left\{ Q\leq\tau_1\right\} \right) \\
&\;\;\;+\mathbb{E} \left(  \left[ \rho\left( Y,X^T\theta\right) -\rho\left( Y,X^T\theta_{0}\right)
\right] 1\left\{ Q>\tau_{0}\right\} \right) \\
&\;\;\;+\mathbb{E} \left(   \left[ \rho\left( Y,X^T\theta\right) -\rho\left( Y,X^T\beta_{0}\right)
\right] 1\left\{ \tau_1<Q\leq\tau_{0}\right\} \right),
\end{split} 
\end{align}
and when $\tau_2>\tau_0$, 
\begin{align}  \label{eq2.2}
\begin{split}
R\left( \alpha, \tau_2\right) & =\mathbb{E} \left( 
\left[\rho\left( Y,X^T\beta\right)
-\rho\left( Y,X^T\beta _{0}\right) \right] 1\left\{ Q\leq\tau_0\right\} \right) \\
&\;\;\;+\mathbb{E} \left(  \left[ \rho\left( Y,X^T\theta\right) -\rho\left( Y,X^T\theta_{0}\right)
\right] 1\left\{ Q>\tau_{2}\right\} \right) \\
&\;\;\;+\mathbb{E} \left(   \left[ \rho\left( Y,X^T\beta\right) -\rho\left( Y,X^T\theta_{0}\right)
\right] 1\left\{ \tau_0<Q\leq\tau_{2}\right\} \right).
\end{split} 
\end{align}
The key observations are that  all the six terms
in the above decompositions are non-negative, and are   stochastically negligible when taking $%
\alpha =\breve{\alpha}$, and $\tau _{1}=\breve{\tau}$ if $\breve{\tau}<\tau _{0}$
or $\tau _{2}=\breve{\tau}$ if $\breve{\tau}>\tau _{0}.$ This follows from the
risk consistency of $R(\breve{\alpha},\breve{\tau})$. 
Then, the identification conditions for $\alpha_0$ and $\tau_0$ (Assumptions \ref{a:obj-ftn} \ref{a:obj-ftn:itm2}-\ref{a:obj-ftn:itm4}), along with 
Assumption \ref{a:moment} \ref{a:moment:itm1},
 are useful to show that the risk consistency implies  the consistency of 
$\breve \tau$. 
\end{remark}

\begin{proof}[Proof of Theorem \protect\ref{th2.2}]
Recall from \eqref{eq2.2} that for all $\alpha=(\beta^T, \delta^T)^T\in\mathbb{R}^{2p}$ and $%
\theta=\beta+\delta$, the excess risk has the following decomposition: when $%
\tau>\tau_0$, 

\begin{align}  \label{eqaa2.2}
\begin{split}
R\left( \alpha, \tau\right) & =\mathbb{E} \left( 
\left[\rho\left( Y,X^T\beta\right)
-\rho\left( Y,X^T\beta _{0}\right) \right] 1\left\{ Q\leq\tau_0\right\} \right) \\
&\;\;\;+\mathbb{E} \left(  \left[ \rho\left( Y,X^T\theta\right) -\rho\left( Y,X^T\theta_{0}\right)
\right] 1\left\{ Q>\tau \right\} \right) \\
&\;\;\;+\mathbb{E} \left(   \left[ \rho\left( Y,X^T\beta\right) -\rho\left( Y,X^T\theta_{0}\right)
\right] 1\left\{ \tau_0<Q\leq\tau \right\} \right).
\end{split} 
\end{align}
We split the proof into five steps.

\noindent
\textbf{Step 1}: 
All the three terms on the right
hand side (RHS) of \eqref{eqaa2.2} are nonnegative. As a consequence,  all the three
terms on the RHS of (\ref{eqaa2.2}) are bounded by $R(\alpha,\tau)$.

\begin{proof}[Proof of Step 1]
Step 1 is implied by the  condition that $\mathbb{E}[\rho(Y, X(\tau_0)^T\alpha)-\rho(Y,
X(\tau_0)^T\alpha_0)|Q]\geq0$ a.s. for all $\alpha 
\in \mathcal{A}$.  To see this, the first two terms are
nonnegative by simply multiplying $\mathbb{E}[\rho(Y, X(\tau_0)^T\alpha)-\rho(Y,
X(\tau_0)^T\alpha_0)|Q]\geq0$ with $1\{Q\leq\tau_0\}$ and $1\{Q>\tau\}$
respectively. To show that the third term is nonnegative for all $\beta\in%
\mathbb{R}^p$ and $\tau>\tau_0,$ set $\alpha=(\beta/2,\beta/2)$ in the
inequality $1\{\tau_0<Q\leq\tau\}\mathbb{E}[\rho(Y, X(\tau_0)^T\alpha)-\rho(Y,
X(\tau_0)^T\alpha_0)|Q]\geq0$. Then we have that  
\begin{equation*}
1\{\tau_0<Q\leq\tau\} \mathbb{E}[\rho(Y, X^T(\beta/2+\beta/2))-\rho(Y, X
^T\theta_0)|Q]\geq0,
\end{equation*}
which yields the nonnegativeness of the third term. 
\end{proof}

\noindent
\textbf{Step 2}:  Let $a\vee b=\max(a, b)$ and $a \wedge b=\min(a, b)$. Prove:
\begin{align*}
\mathbb{E} \left[ |X^{T}(\beta -\beta _{0})|1\{Q\leq \tau _{0}\} \right] 
\leq \frac{1}{\eta^\ast r^\ast} R(\alpha ,\tau) \vee
\left[ \frac{1}{\eta^\ast} R(\alpha ,\tau) \right]^{1/2}.
\end{align*}

\begin{proof}[Proof of Step 2]

Recall that
\begin{align*}
r_1(\eta) \equiv &\sup_r \Big\{ r: \mathbb{E} \left( \left[  \rho \left( Y,X^{T}\beta \right) -\rho
\left( Y,X^{T}\beta_0 \right) \right] 1\left\{ Q\leq \tau
_{0}\right\} \right) \\
& \;\;\;\;\;\;\;\;\;\;\;
\geq \eta \mathbb{E}[(X^T(\beta-\beta_0))^2 1\{Q\leq\tau_0\}]  
\textrm{ for all $\beta \in \mathcal{B}(\beta_0, r)$}
\Big \}.
\end{align*}
For notational simplicity, write $$\mathbb{E}[(X^T(\beta-\beta_0))^2 1\{Q\leq\tau_0\}]  \equiv \|\beta-\beta_0\|_{q}^2,$$
and 
$$
F(\delta) \equiv \mathbb{E} \left( \left[  \rho \left( Y,X^{T}(\beta_0+\delta) \right) -\rho
\left( Y,X^{T}\beta_0 \right) \right] 1\left\{ Q\leq \tau
_{0}\right\} \right). 
$$ 
Note that $ F(\beta-\beta_0)=\mathbb{E} \left( \left[  \rho \left( Y,X^{T}\beta \right) -\rho
\left( Y,X^{T}\beta_0 \right) \right] 1\left\{ Q\leq \tau
_{0}\right\} \right)$, and $\beta\in\mathcal{B}(\beta_0,r)$ if and only if $\|\beta-\beta_0\|_q\leq r.$

For any $\beta$, if $\|\beta-\beta_0\|_q\leq r_1(\eta^*)$, then by the definition of $r_1(\eta^*)$, we have:
$$
F(\beta-\beta_0)\geq \eta^*\mathbb{E}[(X^T(\beta-\beta_0))^2 1\{Q\leq\tau_0\}]. 
$$
If $\|\beta-\beta_0\|_q> r_1(\eta^*)$,  let $t=r_1(\eta^*)\|\beta-\beta_0\|_q^{-1}\in(0,1)$. Since $F(\cdot)$ 
is convex, and $F(0)=0$, we have $F(\beta-\beta_0)\geq t^{-1}F(t(\beta-\beta_0))$.  Moreover, 
define 
$$
\check\beta=\beta_0+r_1(\eta^*)\frac{\beta-\beta_0}{\|\beta-\beta_0\|_q},
$$
 then $\|\check\beta-\beta_0\|_q=r_1(\eta^*)$ and $t(\beta-\beta_0)=\check\beta-\beta_0.$ 
  Hence still by the definition of $r_1(\eta^*)$, 
 $$
 F(\beta-\beta_0)\geq \frac{1}{t}F(\check\beta-\beta_0)\geq \frac{\eta^*}{t}\mathbb{E}[(X^T(\check\beta-\beta_0))^2 1\{Q\leq\tau_0\}]=\eta^*r_1(\eta^*)\|\beta-\beta_0\|_q.
 $$
Therefore, by Assumption \ref{a:obj-ftn} \ref{a:obj-ftn:itm3}, and Step 1, 
 \begin{align*}
R(\alpha ,\tau) & \geq \mathbb{E} \left( \left[  \rho \left( Y,X^{T}\beta \right) -\rho
\left( Y,X^{T}\beta_0 \right) \right] 1\left\{ Q\leq \tau
_{0}\right\} \right) \\
 &\geq \eta^\ast \mathbb{E}[(X^T(\beta-\beta_0))^2 1\{Q\leq\tau_0\}]  
\wedge \eta^\ast r^\ast \{\mathbb{E}[(X^T(\beta-\beta_0))^2 1\{Q\leq\tau_0\}]\}^{1/2} \\
&\geq
\eta^\ast \left( \mathbb{E} \left[ |X^{T}(\beta -\beta _{0})|1\{Q\leq \tau _{0}\} \right] \right)^{2} 
\wedge \eta^\ast r^\ast  \mathbb{E} \left[ |X^{T}(\beta -\beta _{0})|1\{Q\leq \tau _{0}\} \right],
\end{align*}
where the last inequality follows from 
 Jensen's inequality. 
\end{proof}

\noindent
\textbf{Step 3}: For any $r>0$, w.p.a.1, $\breve\beta\in\mathcal{B}(\beta_0, r)$ and  $\breve\theta\in\mathcal{G}(\theta_0, r)$.

\begin{proof}[Proof of Step 3]
Suppose that $\breve \tau >\tau_0$.
The proof of Step 2
implies that when $\tau > \tau_0$,
\begin{align*}
\mathbb{E} \left[ (X^{T}(\beta -\beta _{0}))^2 1\{Q\leq \tau _{0}\} \right] 
\leq \frac{R(\alpha ,\tau)^2}{(\eta^\ast r^\ast)^2}  \vee
\frac{R(\alpha ,\tau)}{\eta^\ast}.
\end{align*}
For any $r>0$, note that $R(\breve\alpha,\breve\tau)=o_P(1)$ implies that the event $R(\breve\alpha,\breve\tau)<r^2$ holds w.p.a.1. 
Therefore, we have shown that $\breve\beta\in\mathcal{B}(\beta_0,r)$. 

We now show that $\breve\theta\in\mathcal{G}(\theta_0,r)$. 
 When $\tau > \tau_0$, we have that
\begin{align*}
R(\alpha ,\tau) 
& \geq_{(1)} \mathbb{E} 
\left( \left[  \rho \left( Y,X^{T}\theta \right) -\rho \left( Y,X^{T}\theta_0 \right) \right] 1\left\{ Q > \tau \right\} \right) \\
& = \mathbb{E} 
\left( \left[  \rho \left( Y,X^{T}\theta \right) -\rho \left( Y,X^{T}\theta_0 \right) \right] 1\left\{ Q > \tau_0 \right\} \right) \\
&- \mathbb{E} 
\left( \left[  \rho \left( Y,X^{T}\theta \right) -\rho \left( Y,X^{T}\theta_0 \right) \right] 1\left\{ \tau_0 < Q \leq \tau \right\}
 \right)  \\
& 
\geq_{(2)}  \eta^\ast  \mathbb{E} \left[ |X^{T}(\theta -\theta_{0})|^2 1\{Q > \tau_0\} \right]  
\wedge \eta^\ast r^\ast  \left( \mathbb{E} \left[ |X^{T}(\theta -\theta _{0})|^2 1\{Q > \tau_0\} \right] \right)^{1/2} \\
&\;\;\;\;\;\ -  \mathbb{E} 
\left( \left[  \rho \left( Y,X^{T}\theta \right) -\rho \left( Y,X^{T}\theta_0 \right) \right] 1\left\{ \tau_0 < Q \leq \tau \right\} \right), 
\end{align*}
where (1) is from \eqref{eq2.2}  
and (2) can be proved using arguments similar to those used in the proof  of Step 2.
This implies that 
\begin{align*}
\mathbb{E} \left[ (X^{T}(\theta -\theta _{0}))^2 1\{Q >  \tau_0 \} \right] 
\leq \frac{\tilde{R}(\alpha ,\tau)^2}{(\eta^\ast r^\ast)^2}  \vee
\frac{\tilde{R}(\alpha ,\tau)}{\eta^\ast},
\end{align*}
where $\tilde{R}(\alpha ,\tau) \equiv 
R(\alpha ,\tau) +\mathbb{E} 
\left( \left[  \rho \left( Y,X^{T}\theta \right) -\rho \left( Y,X^{T}\theta_0 \right) \right] 1\left\{ \tau_0 < Q \leq \tau \right\} \right)$.
Thus,  it suffices to show that 
$
\tilde{R}(\breve \alpha ,\breve \tau) = o_P(1)
$
in order to establish that $\breve\theta\in\mathcal{G}(\theta_0,r)$. 
Note that
for some constant $C > 0$,
\begin{align*}
& \mathbb{E}\left[ (\rho(Y,X^T\theta)-\rho(Y,X^T\theta_0))1\{\tau_0<Q\leq\tau\} \right] \\
&\leq_{(1)}  L \mathbb{E}\left[ |X^T(\theta-\theta_0)|1\{\tau_0<Q\leq\tau\} \right] \\
& \leq_{(2)}  L|\theta-\theta_0|_1 \mathbb{E} \left[ \max_{j\leq p}|\tilde{X}_j|1\{\tau_0<Q\leq\tau\} \right]
+  L|\theta-\theta_0|_1 \mathbb{E} \left[ |Q|1\{\tau_0<Q\leq\tau\} \right]\\
&\leq_{(3)}   L|\theta-\theta_0|_1 \mathbb{E} \left[ \max_{j\leq p}|\tilde{X}_j| \sup_{\tilde{x}} \mathbb{P}(\tau_0<Q\leq\tau|\tilde{X}=\tilde{x})\right] 
+  L|\theta-\theta_0|_1 \mathbb{E} \left[ |Q|1\{\tau_0<Q\leq\tau\} \right]\\
& \leq_{(4)} C (\tau-\tau_0)|\theta-\theta_0|_1 \mathbb{E} \left\{ \left[ \max_{j\leq p}|\tilde{X}_j| \right] + 1 \right\},
\end{align*}
where (1) is by the Lipschitz continuity of $\rho(Y,\cdot)$,
(2) is from the fact that $|X^T(\theta-\theta_0)| \leq |\theta-\theta_0|_1 (\max_{j\leq p}|\tilde{X}_j| + |Q|)$,
(3) is by taking the conditional probability, 
and
(4)  is from Assumption \ref{a:dist-Q-add} \ref{a:dist-Q:itm3}. 

By the expectation-form of the Bernstein inequality (Lemma 14.12 of \cite{bulmann}), 
$\mathbb{E}[ \max_{j\leq p}|X_j| ] \leq K_1\log (p+1)+\sqrt{2\log (p+1)}$.  By (\ref{aphat}), which will be shown below, $|\breve\theta-\theta_0|_1=O_P(s)$.  Hence by (\ref{eqa.10add}) which will  also be shown below, when $\breve\tau>\tau_0$,  $$|\breve\tau-\tau_0| |\breve\theta-\theta_0|_1 \mathbb{E}[ \max_{j\leq p}|X_j|]=O_P(\kappa_ns^2\log p)=o_P(1).$$ Note that when $\breve\tau>\tau_0$, the proofs of  (\ref{eqa.10add}) and (\ref{aphat}) do not require  $\breve\theta\in\mathcal{G}(\theta_0, r)$,  so there is no problem of applying them here. This implies that
$\tilde R(\breve \alpha ,\breve \tau) = o_P(1)$.

The same argument yields that w.p.a.1, $\breve\theta\in\mathcal{G}(\theta_0,r)$ and $\breve\beta\in\mathcal{B}(\beta_0,r)$ when   $\breve\tau\leq\tau_0$; 
hence it is omitted to avoid repetition.
\end{proof}

\noindent
\textbf{Step 4}: 
For any $\epsilon'>0$ and any $r > 0$, there is an 
$\varepsilon >0$ such that for all $\tau$,  $\beta\in\mathcal{B}(\beta_0, r)$ and  $\theta\in\mathcal{G}(\theta_0, r)$, 
$R(\alpha,\tau)< \varepsilon$ implies $|\tau-\tau_0|<\epsilon'.$

\begin{proof}[Proof of Step 4]
We first prove that, for any $\epsilon'>0$, there is 
$\varepsilon >0$ such that for all $\tau >\tau _{0}$, $\beta\in\mathcal{B}(\beta_0, r)$ and  $\theta\in\mathcal{G}(\theta_0, r)$,  
 $R(\alpha ,\tau )< \varepsilon $ implies that $\tau <\tau _{0}+\epsilon'.$

Suppose that $R(\alpha ,\tau )< \varepsilon$.
Applying the triangle inequality, for all $\beta $ and $\tau >\tau _{0},$ 
\begin{align}\label{ER3}
\begin{split}
\lefteqn{ \mathbb{E} \left[ \left( \rho \left( Y,X^{T}\beta _{0}\right) -\rho \left( Y,X^{T}\theta
_{0}\right) \right) 1\left\{ \tau _{0}<Q\leq \tau \right\} \right] } \\
& \leq \left\vert \mathbb{E} \left[ \left( \rho \left( Y,X^{T}\beta \right) -\rho \left(
Y,X^{T}\theta _{0}\right) \right) 1\left\{ \tau _{0}<Q\leq \tau \right\} \right]
\right\vert \\
&+\left\vert \mathbb{E} \left[ \left( \rho \left( Y,X^{T}\beta \right) -\rho
\left( Y,X^{T}\beta _{0}\right) \right) 1\left\{ \tau _{0}<Q\leq \tau
\right\} \right] \right\vert .  
\end{split}
\end{align}%
First, note that the first term on the RHS of (\ref{ER3}) is the third
term on the RHS of (\ref{eqaa2.2}), hence is bounded by $R(\alpha ,\tau
)<\varepsilon $.

We now consider the second term on the RHS of (\ref{ER3}).
Assumption \ref{a:moment} \ref{a:moment:itm1} implies that for all $\beta\in\mathcal{B}(\beta_0, r)$ and  $\theta\in\mathcal{G}(\theta_0, r)$,
\begin{equation}\label{eqa.10}
C_{2}^{* } \mathbb{E} \left[ |X^{T}\beta |1\left\{ Q>\tau _{0}\right\} \right] \leq \mathbb{E} \left[ |X^{T}\beta
|1\left\{ Q \leq \tau _{0}\right\} \right] \leq C_{1}^{* } \mathbb{E} \left[|X^{T}\beta |1\left\{
Q>\tau _{0}\right\} \right].
\end{equation}%
It follows from the Lipschitz condition,
Step 2, and Assumption \ref{a:moment} \ref{a:moment:itm1} that for all $\beta\in\mathcal{B}(\beta_0, r)$,
\begin{align*}
\left| \mathbb{E} \left[ \left( \rho \left( Y,X^{T}\beta \right) -\rho \left( Y,X^{T}\beta
_{0}\right) \right) 1\left\{ \tau _{0}<Q\leq \tau \right\} \right] \right|
&\leq L \mathbb{E} \left[ \left\vert
X^{T}\left( \beta -\beta _{0}\right) \right\vert 1\left\{ \tau _{0}<Q\leq
\tau \right\} \right] \\
& \leq L \mathbb{E} \left[ \left\vert X^{T}\left( \beta -\beta _{0}\right) \right\vert
1\left\{ \tau _{0}<Q\right\} \right] \\
&\leq L \widetilde{C}
\, \mathbb{E}\left[ \left\vert
X^{T}\left( \beta -\beta _{0}\right) \right\vert 1\left\{ Q\leq \tau
_{0}\right\} \right] 
\\
&\leq L\widetilde{C}
\left\{  \varepsilon/(\eta^\ast r^\ast)  \vee
\sqrt{\varepsilon/\eta^\ast} \, \right\}
 \\
&\equiv  C(\varepsilon).
\end{align*}%
Thus, we have shown that  (\ref{ER3}) is bounded by $C(\varepsilon) +\varepsilon$.

For any $\epsilon'>0$, it   follows from Assumptions \ref{a:obj-ftn} \ref{a:obj-ftn:itm2}, \ref{a:obj-ftn} \ref{a:obj-ftn:itm4} and \ref{a:dist-Q-add} \ref{a:threshold}   (see also Remark \ref{remark-tau-whole-set})
that there is a $c>0$ such that if $\tau >\tau _{0}+\epsilon'$,
\begin{align*}
c \mathbb{P} \left( \tau _{0} <Q\leq \tau _{0}+\epsilon' \right) 
&\leq c \mathbb{P} \left( \tau _{0}<Q\leq
\tau \right) \\
&\leq \mathbb{E} \left[ \left( \rho \left( Y,X^{T}\beta _{0}\right) -\rho \left( Y,X^{T}\theta
_{0}\right) \right) 1\left\{ \tau _{0}<Q\leq \tau \right\} \right] \\
& \leq C(\varepsilon) +\varepsilon.
\end{align*}%
Since $\varepsilon \mapsto C(\varepsilon) +\varepsilon$ converges to zero as 
$\varepsilon$ converges to zero,
for a given $\epsilon' > 0$ choose a sufficient small $\varepsilon > 0$
such that $C(\varepsilon) +\varepsilon <  c \mathbb{P}(\tau _{0}<Q\leq \tau _{0}+\epsilon')$, so that the above inequality cannot hold. Hence we  infer that for this $\varepsilon $%
, when $R(\alpha ,\tau )<\varepsilon$, we must
have $\tau <\tau _{0}+\epsilon'$.

By the same argument, if $\tau<\tau_0,$ then we must have $%
\tau>\tau_0-\epsilon'.$ Hence, $R(\alpha,\tau)<\varepsilon$ implies $|\tau-\tau_0|<\epsilon'.$
\end{proof}

\noindent
\textbf{Step 5}:  $\breve{\tau}\overset{p}{\longrightarrow}\tau_{0}$.

\begin{proof}[Proof of Step 5]
For the $\varepsilon$ chosen in Step 4,  consider the event $\{R(\breve{\alpha },\breve{\tau})<\varepsilon \}$, which occurs w.p.a.1, due to Theorem \ref{l2.1}. On this event, $|\breve{\tau}-\tau
_{0}|<\epsilon'$ by Step 4. Because $\epsilon'$ is taken
arbitrarily, we have proved the consistency of $\breve{\tau}.$
\end{proof}

\end{proof}

\subsection{Proof of Theorem \protect\ref{th2.3}}

The proof consists of multiple  steps. First, 
we obtain an intermediate
convergence rate for $\breve{\tau}$ based on the consistency of the risk and that of $%
\breve{\tau}$.  
Second, we use   the compatibility condition to obtain a tighter bound. 
\medskip

\noindent
\textbf{Step 1}: 
Let $\bar{c}_0(\delta_0) \equiv c_0  \inf_{\tau \in \mathcal{T}_0}\mathbb{E}[(X^T\delta_0)^2|Q=\tau]$,
which is bounded away from zero and bounded above due to Assumption \ref{a:dist-Q-add} \ref{a:threshold}. 
Then
$\bar{c}_0(\delta_0)  \left\vert \breve{\tau}-\tau
_{0}\right\vert \leq  4 R\left( 
\breve{\alpha},\breve{\tau}\right) $  w.p.a.1. As a result, 
$|\breve\tau-\tau_0|=O_P\left[ \kappa _{n}s/\bar{c}_0(\delta_0) \right]$.

\begin{proof}[Proof of Step 1]
For any  $\tau_0<\tau$ and $\tau\in\mathcal{T}_0$,  and any $\beta\in\mathcal{B}(\beta_0,r)$, $\alpha=(\beta,\delta)$ with arbitrary $\delta$,
   for some $L, M>0$
which do not depend on $\beta $ and $\tau ,$ 
\begin{align*}
& \left\vert \mathbb{E}\left( \rho \left( Y,X^{T}\beta \right) -\rho \left(
Y,X^{T}\beta _{0}\right) \right) 1\left\{ \tau _{0}<Q\leq \tau \right\}
\right\vert   \\
& \leq_{(1)} L\mathbb{E} \left[ \left\vert X^{T}\left( \beta -\beta
_{0}\right) \right\vert 1\left\{ \tau _{0}<Q\leq \tau \right\}  \right]
\\
& \leq_{(2)} M L  ( \tau- \tau _{0}) 
\mathbb{E} \left[\left\vert X^{T}\left( \beta -\beta _{0}\right) \right\vert
1\left\{ Q  \leq \tau _{0}\right\}\right]   \\
 & \leq_{(3)} M L  ( \tau- \tau _{0}) 
  \left\{ \mathbb{E} \left[ \left( X^{T}\left( \beta -\beta _{0}\right) \right)^2 1\left\{ Q\leq \tau _{0}\right\}  \right] \right\}^{1/2}
 \\
&\leq_{(4)} \left(M L  ( \tau- \tau _{0})  \right) ^{2}/\left( 4 \eta^\ast \right)
+\eta^\ast \mathbb{E} \left[ \left( X^{T}\left( \beta -\beta _{0}\right) \right)^2 1\left\{ Q\leq \tau _{0}\right\}  \right]  \\
&\leq_{(5)} \left( M L  ( \tau- \tau _{0})  \right) ^{2}/\left( 4\eta^\ast \right)
+\mathbb{E} \left[\left( \rho \left( Y,X^{T}\beta \right) -\rho \left( Y,X^{T}\beta
_{0}\right) \right) 1\left\{ Q\leq \tau _{0}\right\}  \right] \\
&\leq_{(6)} \left( M L  ( \tau- \tau _{0})  \right) ^{2}/\left( 4\eta^\ast \right)
+ R(\alpha,\tau),
\end{align*}%
where  (1) follows from the 
Lipschitz condition on the objective function,
(2) is by Assumption \ref{a:moment} \ref{a:moment:itm2},
(3) is by Jensen's inequality, 
(4) follows from  the fact that $uv\leq v^{2}/\left( 4c\right)+cu^2$ for any $c>0$,
(5) is from Assumption \ref{a:obj-ftn} \ref{a:obj-ftn:itm3},
and (6) is from Step 1 in the proof of Theorem \protect\ref{th2.2}.

In addition, 
\begin{align*}
&\left\vert \mathbb{E} \left[ \left( \rho \left( Y,X^{T}\beta \right) -\rho
\left( Y,X^{T}\beta _{0}\right) \right) 1\left\{ \tau _{0}<Q\leq \tau
\right\} \right] \right\vert \\
&\geq_{(1)} 
 \mathbb{E} \left[ \left( \rho \left( Y,X^{T}\beta _{0}\right) -\rho \left( Y,X^{T}\theta
_{0}\right) \right) 1\left\{ \tau _{0}<Q\leq \tau \right\} \right] \\
& 
- \left\vert \mathbb{E} \left[ \left( \rho \left( Y,X^{T}\beta \right) -\rho \left(
Y,X^{T}\theta _{0}\right) \right) 1\left\{ \tau _{0}<Q\leq \tau \right\} \right]
\right\vert \\
&\geq_{(2)} 
 \mathbb{E} \left[ \left( \rho \left( Y,X^{T}\beta _{0}\right) -\rho \left( Y,X^{T}\theta
_{0}\right) \right) 1\left\{ \tau _{0}<Q\leq \tau \right\} \right]
- R(\alpha,\tau) \\
&\geq_{(3)} c_0 \left\{ \inf_{\tau \in \mathcal{T}_0}\mathbb{E}[(X^T\delta_0)^2|Q=\tau] \right\}(\tau-\tau_0)-R(\alpha,\tau),
\end{align*}%
where (1) is by the triangular inequality, (2) is from \eqref{eq2.2},
and (3) is by Assumption \ref{a:obj-ftn} \ref{a:obj-ftn:itm4}.
Therefore, we have established that there exists a constant $\tilde{C} > 0$, independent of $(\alpha,\tau)$, such that
\begin{align}\label{der-c0}
\bar{c}_0(\delta_0)(\tau-\tau_0) &\leq \tilde{C} (\tau - \tau_0)^2   + 2R(\alpha,\tau).
\end{align} 
Note that when $0 < (\tau-\tau_0) < \bar{c}_0(\delta_0)(2\tilde{C})^{-1}$, \eqref{der-c0} implies that 
\begin{align*}
\bar{c}_0(\delta_0)(\tau-\tau_0)  &\leq     \frac{\bar{c}_0(\delta_0)}{2}(\tau - \tau_0)  +  2R(\alpha,\tau),
\end{align*} 
which in turn implies that  
 $\tau-\tau_0\leq \frac{4}{\bar{c}_0(\delta_0)}R(\alpha,\tau)$.
By the same argument, when 
$- \bar{c}_0(\delta_0)(2\tilde{C})^{-1} <(\tau - \tau_0) \leq 0$,  we have  $\tau_0-\tau\leq \frac{4}{\bar{c}_0(\delta_0)}R(\alpha,\tau)$   for $\alpha=(\beta,\delta)$, with any $\theta\in\mathcal{G}(\theta_0, r)$ and arbitrary $\beta.$

 Hence when $\breve\tau>\tau_0$, on the event $\breve\beta\in\mathcal{B}(\beta_0,  r)$,   and  $\breve\tau-\tau_0< \bar{c}_0(\delta_0)(2\tilde{C})^{-1}$, we have
 \begin{equation}\label{eqa.10add}
 \breve\tau-\tau_0\leq \frac{4}{\bar{c}_0(\delta_0)}R(\breve\alpha,\breve\tau).
 \end{equation}
 When $\breve\tau\leq \tau_0$, on the event $\breve\theta\in\mathcal{G}(\theta_0,  r)$,   and  $\tau_0-\breve\tau< \bar{c}_0(\delta_0)(2\tilde{C})^{-1}$, we have
 $
\tau_0- \breve\tau\leq \frac{4}{\bar{c}_0(\delta_0)}R(\breve\alpha,\breve\tau).
 $ Hence due to Step 3 in the proof of Theorem \ref{th2.2}  and the consistency of $\breve\tau$, we have
 \begin{equation}\label{G1tau}
  \left\vert \breve{\tau}-\tau
_{0}\right\vert \leq   \frac{4}{\bar{c}_0(\delta_0)}R\left( 
\breve{\alpha},\breve{\tau}\right) \; \text{ w.p.a.1}.
\end{equation}%
 This also implies $|\breve\tau-\tau_0|=O_P\left[ \kappa _{n}s/\bar{c}_0(\delta_0) \right]$ 
 in view of the proof of Theorem \ref{l2.1}.
\end{proof} 
 
 \noindent
\textbf{Step 2}: 
Define $\nu _{1n}\left( \tau \right) \equiv \nu _{n}\left( \alpha _{0},\tau \right) -\nu
_{n}\left( \alpha _{0},\tau _{0}\right)$  and $c_{\alpha } \equiv \kappa _{n}\left( \left\vert D_{0}\alpha
_{0}\right\vert _{1}-\left\vert \breve{D}\alpha _{0}\right\vert _{1}\right)
+\left\vert \nu _{1n}\left( \breve{\tau}\right) \right\vert $.
Then, 
\begin{align}\label{BI1}
R\left( \breve{\alpha},\breve{\tau}\right) +\frac{1}{2}\kappa
_{n}\left\vert \breve{D}\left( \breve{\alpha}-\alpha _{0}\right) \right\vert
_{1} \leq  c_{\alpha } + 2\kappa _{n}\left\vert \breve{D}\left( \breve{\alpha}%
-\alpha _{0}\right) _{J}\right\vert _{1}\; \text{ w.p.a.1}.
\end{align}%

\begin{proof}[Proof of Step 2]
 Recall the following basic inequality in \eqref{basic ineq}: 
\begin{align}\label{basic-ineq2}
R(\breve{\alpha },\breve{\tau}) 
&\leq 
\left[ \nu _{n} (\alpha _{0},\breve{\tau})-\nu _{n}(\breve{\alpha },\breve{\tau})%
\right] - \nu _{1n}\left( \breve \tau \right) 
+\kappa _{n}\left( |D_{0}\alpha _{0}|_{1}-|\breve{D}\breve{\alpha }%
|_{1}\right).  
\end{align}
Now applying Lemma \ref{lem-emp} 
to 
$[\nu _{n} (\alpha _{0},\breve{\tau})-\nu _{n}(\breve{\alpha },\breve{\tau})]$
 with $a_{n}\ $and $b_{n}$ replaced by $a_{n}/2\ 
$and $b_{n}/2$,
we can rewrite the basic inequality in \eqref{basic-ineq2} by%
\begin{equation*}
\kappa _{n}\left\vert D_0\alpha _{0}\right\vert _{1}\geq R\left( \breve{\alpha},%
\breve{\tau}\right) +\kappa _{n}\left\vert \breve{D}\breve{\alpha}\right\vert
_{1}-\frac{1}{2}\kappa _{n}\left\vert \breve{D}\left( \breve{\alpha}-\alpha
_{0}\right) \right\vert _{1}-\left\vert \nu _{1n}\left( \breve{\tau}\right)
\right\vert \; \text{ w.p.a.1}.
\end{equation*}
Now adding $\kappa _{n}\left\vert \breve{D}\left( \breve{\alpha}-\alpha
_{0}\right) \right\vert _{1}$ on both sides of the inequality above
and using the fact that $ \left\vert  \alpha_{0j} \right\vert _{1} - \left\vert \breve{\alpha}_{j} \right\vert
_{1} + \left\vert \left( \breve{\alpha}_{j} -\alpha
_{0j}\right) \right\vert _{1} = 0$ for $j \notin J$,
we have that
\begin{align*}
& \kappa _{n}\left( \left\vert D_{0}\alpha _{0}\right\vert _{1}-\left\vert 
\breve{D}\alpha _{0}\right\vert _{1}\right) +\left\vert \nu _{1n}\left( \breve{%
\tau}\right) \right\vert +2\kappa _{n}\left\vert \breve{D}\left( \breve{\alpha}%
-\alpha _{0}\right) _{J}\right\vert _{1}  \\
& 
\geq R\left( \breve{\alpha},\breve{\tau}\right) +\frac{1}{2}\kappa
_{n}\left\vert \breve{D}\left( \breve{\alpha}-\alpha _{0}\right) \right\vert
_{1}\; \text{ w.p.a.1}.
\end{align*}%
Therefore, we have proved Step 2.
\end{proof}

We prove the remaining part of the steps by considering 
 two cases: (i) $\kappa _{n}\left\vert \breve{D}\left( \breve{\alpha}-\alpha
_{0}\right) _{J}\right\vert_1 \leq c_{\alpha };$ (ii) $\kappa _{n}\left\vert 
\breve{D}\left( \breve{\alpha}-\alpha _{0}\right) _{J}\right\vert_1 >c_{\alpha }.$
We first consider  Case (ii). 
\medskip

\noindent
\textbf{Step 3}:   
Suppose that $\kappa _{n}\left\vert \breve{D}\left( \breve{\alpha}-\alpha
_{0}\right) _{J}\right\vert_1 > c_{\alpha }$. 
Then
\begin{align*}
\left\vert \breve{\tau}-\tau _{0}\right\vert &
= O_P\left[\kappa_n^2s/\bar{c}_0(\delta_0)\right] 
\; \text { and }  \;
\left\vert  \breve{\alpha}-\alpha _{0} \right\vert = O_P\left( \kappa _{n}s\right).
\end{align*}

\begin{proof}[Proof of Step 3]
By $\kappa _{n}\left\vert 
\breve{D}\left( \breve{\alpha}-\alpha _{0}\right) _{J}\right\vert_1 >c_{\alpha }$ and the basic inequality (\ref{BI1}) in Step 2, 
\begin{equation}
6\left\vert \breve{D}\left( \breve{\alpha}-\alpha _{0}\right) _{J}\right\vert
_{1}\geq \left\vert \breve{D}\left( \breve{\alpha}-\alpha _{0}\right)
\right\vert _{1}=\left\vert \breve{D}\left( \breve{\alpha}-\alpha _{0}\right)
_{J}\right\vert _{1} +\left\vert \breve{D}\left( \breve{\alpha}-\alpha _{0}\right)
_{J^{c}}\right\vert _{1},  \label{aphat}
\end{equation}%
which enables us to apply the compatibility condition in Assumption \ref%
{ass2.7}.

Recall that $\|Z\|_2=(EZ^2)^{1/2}$ for a random variable $Z.$ 
Note that for $s=|J(\alpha_0)|_0$,
\begin{align}\label{a.11add}
\begin{split}
& R\left( \breve{\alpha},\breve{\tau}\right) +\frac{1}{2}\kappa
_{n}\left\vert \breve{D}\left( \breve{\alpha}-\alpha _{0}\right) \right\vert
_{1} \\
&\leq_{(1)} 3\kappa _{n}\left\vert \breve{D}\left( \breve{\alpha}-\alpha _{0}\right)
_{J}\right\vert _{1} \\
& \leq_{(2)} 3\kappa _{n}\bar D\left\Vert X(\breve{\tau})^T(\breve\alpha-\alpha_0)\right\Vert _{2}\sqrt{s}/\phi  \\
&\leq_{(3)} \frac{9\kappa _{n}^{2}\bar D^{2}s}{2 \tilde{c} \phi ^{2}}+\frac{\tilde{c}}{2}%
\left\Vert X(\breve{\tau})^T(\breve\alpha-\alpha_0)\right\Vert _{2}^{2},
\end{split}
\end{align}
where (1) is from the basic inequality \eqref{BI1} in Step 2,
(2) is by the compatibility condition (Assumption \ref{ass2.7}), and
(3) is from the
inequality  that $uv\leq v^{2}/(2\tilde{c})+\tilde{c} u^2/2$ for any $\tilde{c} >0$.

We will show below in Step 4 that there is a  constant $C_0 > 0$ such that 
\begin{align}\label{step-4b-statement}
\left\Vert X(\breve\tau)^T(\breve\alpha-\alpha_0)\right\Vert _{2}^{2}\leq C_0 R(\breve\alpha,\breve\tau)+C_0\bar{c}_0(\delta_0)| \breve\tau-\tau_0|,
\text{ w.p.a.1}.
\end{align}
Recall that by \eqref{G1tau}, 
$\bar{c}_0(\delta_0)  \left\vert \breve{\tau}-\tau
_{0}\right\vert \leq   4 R\left( 
\breve{\alpha},\breve{\tau}\right)$.
Hence,  (\ref{a.11add})  with $\tilde{c}=(5C_0)^{-1}$
implies that 
\begin{align}\label{oracle-ineq}
 R\left( \breve{\alpha},\breve{\tau}\right) +\kappa
_{n}\left\vert \breve{D}\left( \breve{\alpha}-\alpha _{0}\right) \right\vert
_{1}\leq \frac{9\kappa _{n}^{2}\bar D^{2}s}{\tilde{c} \phi ^{2}}.
\end{align}
By \eqref{oracle-ineq} and \eqref{G1tau}, $\left\vert \breve{\tau}-\tau _{0}\right\vert=O_P\left[\kappa_n^2s/\bar{c}_0(\delta_0)\right]$.
Also, by \eqref{oracle-ineq}, 
$\left\vert  \breve{\alpha}-\alpha _{0} \right\vert = O_P\left( \kappa _{n}s\right)$
since $D(\breve\tau)\geq \underline{D}$ w.p.a.1
by Assumption \ref{a:setting} \ref{a:setting:itm3}.
\end{proof}


\noindent
\textbf{Step 4}: There is a  constant $C_0 > 0$ such that 
$
\left\Vert X(\breve\tau)^T(\breve\alpha-\alpha_0)\right\Vert _{2}^{2}\leq C_0 R(\breve\alpha,\breve\tau)+C_0 \bar{c}_0(\delta_0) |\breve\tau-\tau_0|,
$
w.p.a.1.
   
\begin{proof}[Proof Step 4] 
Note that
\begin{align}\label{decom-step4}
\begin{split}
 \left\Vert X(\tau)^T(\alpha-\alpha_0)\right\Vert _{2}^{2} 
&\leq 2\left\Vert X(\tau)^T\alpha-X(\tau_0)^T\alpha\right\Vert _{2}^{2} \\
&+4\left\Vert X(\tau_0)^T\alpha-X(\tau_0)^T\alpha_0\right\Vert _{2}^{2}
+4\left\Vert X(\tau_0)^T\alpha_0-X(\tau)^T\alpha_0\right\Vert _{2}^{2}.
\end{split}
\end{align}
We bound the three terms on the right hand side of \eqref{decom-step4}. 
When $\tau >\tau _{0}$,  there is a constant $C_1 > 0$ such that 
\begin{align*}
& \left\Vert X(\tau)^T\alpha-X(\tau_0)^T\alpha\right\Vert _{2}^{2} \\
&=\mathbb{E} \left[ (X^T\delta)^2 1\{\tau_0 \leq Q <\tau\} \right] \\
&= \int_{\tau_0}^\tau \mathbb{E} \left[ (X^T\delta)^2 \big|Q  = t \right]  dF_Q(t) \\
&\leq 2\int_{\tau_0}^\tau \mathbb{E} \left[ (X^T\delta_0)^2 \big|Q  = t \right]  dF_Q(t) 
+ 2\int_{\tau_0}^\tau \mathbb{E} \left[ (X^T(\delta-\delta_0))^2 \big|Q  = t \right]  dF_Q(t) \\
&\leq C_1  \bar{c}_0(\delta_0) (\tau-\tau_0),
\end{align*}
where the last inequality is by Assumptions \ref{a:setting},  
\ref{a:dist-Q-add} \ref{a:dist-Q:itm3}, 
\ref{a:dist-Q-add} \ref{a:threshold},
and
\ref{a:moment} \ref{a:moment:itm2}.

Similarly,
$\left\Vert X(\tau_0)^T\alpha_0-X(\tau)^T\alpha_0\right\Vert _{2}^{2}
=\mathbb{E} \left[ (X^T\delta_0)^2 1\{\tau_0\leq Q<\tau\} \right] \leq C_1 \bar{c}_0(\delta_0) (\tau-\tau_0).
$
Hence, the first and third terms of the right hand side of of \eqref{decom-step4}
are bounded by $6 C_1 \bar{c}_0(\delta_0)(\tau-\tau_0)$.

To bound the second term, note that there exists a constant $C_2 > 0$ such that 
\begin{align*}
\lefteqn{\left\Vert X(\tau_0)^T\alpha-X(\tau_0)^T\alpha_0\right\Vert _{2}^{2} } \\
&=_{(1)} \mathbb{E}\left[ (X^T(\theta-\theta_0))^21\{Q > \tau_0\} \right]
+ \mathbb{E}\left[ (X^T(\beta-\beta_0))^21\{Q \leq \tau_0\} \right] \\
&\leq_{(2)} (\eta^\ast)^{-1} \mathbb{E}\left[ \left( \rho \left( Y,X^{T}\theta \right) -\rho \left( Y,X^{T}\theta_{0}\right) \right) 1\left\{ Q>\tau _{0}\right\}  \right] \\
&+  (\eta^\ast)^{-1} \mathbb{E}\left[ \left( \rho \left( Y,X^{T}\beta \right) -\rho \left( Y,X^{T}\beta_{0}\right) \right) 1\left\{ Q\leq \tau _{0}\right\} \right] \\
&\leq_{(3)} (\eta^\ast)^{-1} R(\alpha, \tau) + (\eta^\ast)^{-1} \mathbb{E}\left[ \left( \rho \left( Y,X^{T}\theta \right) -\rho \left( Y,X^{T}\theta
_{0}\right) \right) 1\left\{ \tau_0< Q \leq \tau\right\} \right] \\
&\leq_{(4)}  (\eta^\ast)^{-1}  R(\alpha, \tau) + (\eta^\ast)^{-1} L \mathbb{E}\left[ |X^T(\theta-\theta_0)|1\left\{ \tau_0< Q \leq \tau\right\} \right] \\
&=_{(5)} (\eta^\ast)^{-1}  R(\alpha, \tau) 
+ (\eta^\ast)^{-1} L \int_{\tau_0}^\tau \mathbb{E} \left[ |X^T(\theta-\theta_0)| \big|Q  = t \right]  dF_Q(t) \\
&\leq_{(6)} (\eta^\ast)^{-1}  R(\alpha, \tau) +C_3 (\tau-\tau_0),
\end{align*}
where (1) is simply an identity,  (2) from  Assumption \ref{a:obj-ftn} \ref{a:obj-ftn:itm3}, (3) is  
due to \eqref{eqaa2.2}: namely,
\begin{align*}
\mathbb{E}\left[ \left( \rho \left( Y,X^{T}\theta \right) -\rho \left( Y,X^{T}\theta
_{0}\right) \right) 1\left\{ Q>\tau \right\} \right]
 +\mathbb{E}\left[ \left( \rho \left( Y,X^{T}\beta \right) -\rho \left( Y,X^{T}\beta
_{0}\right) \right) 1\left\{ Q\leq \tau _{0}\right\} \right] \leq R(\alpha,\tau),
\end{align*}
(4) is by  the Lipschitz continuity of $\rho(Y,\cdot)$,
(5) is by rewriting the expectation term, and (6) is by Assumptions \ref{a:setting} \ref{a:dist-Q:itm1} and
\ref{a:moment} \ref{a:moment:itm2}.
Therefore, we have shown that $\left\Vert X(\tau)^T(\alpha-\alpha_0)\right\Vert _{2}^{2}\leq C_0 R(\alpha,\tau)+C_0 \bar{c}_0(\delta_0)(\tau-\tau_0)$ for some constant $C_0 > 0$. 
The case of $\tau\leq\tau_0$ can be proved using the same argument. Hence, setting $\tau=\breve\tau$, and $\alpha=\breve\alpha$, we obtain the desired result. 
\end{proof}

\noindent
\textbf{Step 5}:   
 We now consider Case (i). 
Suppose that $\kappa _{n}\left\vert \breve{D}\left( \breve{\alpha}-\alpha
_{0}\right) _{J}\right\vert _{1} \leq c_{\alpha }$. 
Then
\begin{align*}
\left\vert \breve{\tau}-\tau _{0}\right\vert &
= O_P\left[ \kappa _{n}^{2}s/\bar{c}_0(\delta_0) \right]
\; \text { and }  \;
\left\vert  \breve{\alpha}-\alpha _{0} \right\vert = O_P\left( \kappa _{n}s\right).
\end{align*}

\begin{proof}[Proof of Step 5]
Recall that  $X_{ij}$ is the $j$th element of $X_i$, where $i\leq n, j\leq p.$
By Assumption \ref{a:setting}  and Step 1,
\begin{align*}
\sup_{1 \leq j \leq p} \frac{1%
}{n}\sum_{i=1}^{n}\left\vert X_{ij} \right\vert ^{2}\left\vert
1\left( Q_{i}<\breve \tau \right) -1\left( Q_{i}<\tau_0 \right)
\right\vert = O_P\left[ \kappa _{n} s/\bar{c}_0(\delta_0)\right].
\end{align*}
By the
mean value theorem, 
\begin{align}
&\kappa _{n}\left\vert \left\vert D_{0}\alpha _{0}\right\vert
_{1}-\left\vert \breve{D}\alpha _{0}\right\vert _{1}\right\vert  \notag \\
&\leq \kappa _{n}\sum_{j=1}^{p}\left( \frac{4}{n}\sum_{i=1}^{n}\left\vert
X_{ij}1\left\{ Q_{i}>\overline{\tau }\right\} \right\vert ^{2}\right)
^{-1/2}\left\vert \delta_0 ^{\left( j\right) }\right\vert \frac{1}{n}%
\sum_{i=1}^{n}\left\vert X_{ij}\right\vert ^{2}\left\vert
1\left\{ Q_{i}>\breve{\tau}\right\} -1\left\{ Q_{i}>\tau _{0}\right\}
\right\vert  \notag \\
&=O_P\left[ \kappa _{n}^{2}s |J(\delta_0)|_0/\bar{c}_0(\delta_0)\right].  \label{D-Dhat}
\end{align}
Here, recall that $\overline{\tau }$ is the right-end point of $\mathcal{T}$ and $|J(\delta_0)|_0$ is the dimension of 
nonzero elements of $\delta_0$. 

Due to Step 1  and (\ref{emp3})  in Lemma \ref{lem-emp}, 
\begin{align}\label{v1n-hat-rate}
\left\vert \nu _{1n}\left( \breve{%
\tau}\right) \right\vert 
=O_P \left[ \frac{|\delta_0|_2}{\sqrt{\bar{c}_0(\delta_0)}}  \left( \kappa _{n} s/n \right)^{1/2}\right].
\end{align}
Thus, under Case (i), we have that, by \eqref{G1tau}, \eqref{BI1}, \eqref{D-Dhat},
and \eqref{v1n-hat-rate}, 
\begin{align}\label{tauhat}
\begin{split}
\frac{\bar{c}_0(\delta_0)}{4}\left\vert \breve{\tau}-\tau _{0}\right\vert &\leq \frac{\kappa _{n}}{2}%
\left\vert \breve{D}\left( \breve{\alpha}-\alpha _{0}\right) \right\vert
_{1}+R\left( \breve{\alpha},\breve{\tau}\right)   \\
&\leq 3\kappa _{n}\left( \left\vert D_{0}\alpha _{0}\right\vert
_{1}-\left\vert \breve{D}\alpha _{0}\right\vert _{1}\right) +3\left\vert \nu
_{1n}\left( \breve{\tau}\right) \right\vert   \\
&= O_P\left( \kappa _{n}^{2}s^{2}\right) +O_P \left[ s^{1/2} \left(  \kappa _{n} s/n \right)^{1/2}\right],  
\end{split}
\end{align}
where the last equality uses the fact that $|J(\delta_0)|_0/\bar{c}_0(\delta_0) = O(s)$ 
and ${|\delta_0|_2}/{\sqrt{\bar{c}_0(\delta_0)}} = O(s^{1/2})$ at most (both could be bounded in some cases).

Therefore, we now have an improved rate of convergence in probability for $\breve \tau$
from $r_{n0, \tau} \equiv \kappa _{n} s$ to $r_{n1, \tau} \equiv [\kappa _{n}^{2}s^{2} + s^{1/2} ( \kappa _{n} s/n )^{1/2}]$. 
Repeating the arguments identical to those to prove \eqref{D-Dhat} and \eqref{v1n-hat-rate} yields that
\begin{align*}
\kappa _{n}\left\vert \left\vert D_{0}\alpha _{0}\right\vert
_{1}-\left\vert \breve{D}\alpha _{0}\right\vert _{1}\right\vert  
= O_P \left[ r_{n1, \tau} \kappa_n s  \right]
\ \ \text{and} \ \
\left\vert \nu _{1n}\left( \breve{%
\tau}\right) \right\vert 
= O_P \left[ s^{1/2} \left( r_{n1, \tau} /n \right)^{1/2}\right].
\end{align*}
Plugging these improved rates into \eqref{tauhat} gives 
\begin{align*}
\bar{c}_0(\delta_0) \left\vert \breve{\tau}-\tau _{0}\right\vert 
&
= O_P\left( \kappa _{n}^{3}s^{3}\right) 
+ O_P \left[ s^{1/2} (\kappa _{n} s)^{3/2} / n^{1/2}\right]
+ O_P\left( \kappa _{n}s^{3/2} / n^{1/2} \right)
+ O_P \left[ s^{3/4} (\kappa _{n} s)^{1/4} /n^{3/4} \right] \\
&= O_P\left( \kappa _{n}^{2} s^{3/2} \right) 
+ O_P \left[  s^{3/4} (\kappa _{n} s)^{1/4} /n^{3/4} \right] \\
&\equiv O_P( r_{n2, \tau} ),
\end{align*}
where the second equality comes from the fact that 
the first three terms are $O_P\left( \kappa _{n}^{2}s^{3/2} \right)$ since $\kappa _{n} s^{3/2} =o(1)$, $\kappa_n n/s \rightarrow \infty$, and $\kappa_n \sqrt{n} \rightarrow \infty$
in view of the assumption   that $\kappa_n s^2 \log p=o(1)$.
Repeating the same arguments again with the further improved rate $r_{n2, \tau}$, we have that
\begin{align*}
\left\vert \breve{\tau}-\tau _{0}\right\vert 
&
= O_P\left( \kappa _{n}^{2}s^{5/4} \right) 
+ O_P \left[ s^{7/8} (\kappa _{n} s)^{1/8} /n^{7/8} \right] 
\equiv O_P( r_{n3, \tau} ).
\end{align*}
Thus, repeating the same arguments $k$ times yields 
\begin{align*}
\bar{c}_0(\delta_0) \left\vert \breve{\tau}-\tau _{0}\right\vert 
&
= O_P\left( \kappa _{n}^{2}s^{1 + 2^{-k}} \right) 
+ O_P \left[ s^{(2^k-1)/2^k} (\kappa _{n} s)^{1/2^{k}} /n^{(2^k-1)/2^k} \right] 
\equiv O_P( r_{nk, \tau} ).
\end{align*}
Then letting $k \rightarrow \infty$ gives the desired result that  
$\bar{c}_0(\delta_0) \left\vert \breve{\tau}-\tau _{0}\right\vert = O_P\left( \kappa _{n}^{2}s \right)$.  
Finally, the same iteration based on \eqref{tauhat}  gives 
$\left\vert \breve{D}\left( \breve{\alpha}-\alpha _{0}\right)\right\vert =o_P\left( \kappa _{n}s\right) $,
which proves the desired result 
since $D(\breve\tau)\geq \underline{D}$ w.p.a.1
by Assumption \ref{a:setting} \ref{a:setting:itm3}.
\end{proof}

\subsection{Proof of Theorem \ref{thm:tau-2nd}}

\begin{proof}[Proof of Theorem \ref{thm:tau-2nd}]
The asymptotic property of $\widetilde
{\tau}$ is well-known in the literature (see Lemma \ref{lem:tau-dist-oracle}
below for its asymptotic distribution). Specifically, we can apply
Theorem 3.4.1 of \citet{VW} (by defining the criterion $\mathbb{M}_{n}\left(
\cdot\right)  \equiv R_{n}^{\ast}\left(  \cdot\right)  $, 
$M_{n}\left(\cdot\right)  \equiv \mathbb{E} R_{n}^{\ast}\left(  \cdot\right) = R(\alpha_0,\tau) $,
the distance function
$d\left(  \tau,\tau_{0}\right)  \equiv\left\vert \tau-\tau_{0}\right\vert
^{1/2}$, and $\phi_{n}\left(  \delta\right)  \equiv\delta$) to characterize
the convergence rate of $\widetilde{\tau}$, which  results in the super-consistency
in the sense that $\widetilde{\tau}-\tau_{0}=O_{P}(n^{-1})$. 
See e.g. Section 14.5 of \cite{Kosorok}.

Furthermore, it is worth noting that the same theorem also implies that if
\begin{align}\label{equality3-below}
\left[  R_{n}^{\ast}\left(  \widehat{\tau}\right)  -R_{n}^{\ast}\left(
\tau_{0}\right)  \right]  -\left[  R_{n}\left(  \breve{\alpha},\widehat{\tau
}\right)  -R_{n}\left(  \breve{\alpha},\tau_{0}\right)  \right]  =O_{P}%
(r_{n}^{-2})
\end{align}
$\ $for some sequence $r_{n}$ satisfying $r_{n}^{2}\phi_{n}\left(  r_{n}%
^{-1}\right)  \leq\sqrt{n},$ then
\[
r_{n}d\left(  \widehat{\tau},\tau_{0}\right)  =O_{P}\left(  1\right)  .
\]
This is because
\begin{align*}
R_{n}^{\ast}\left(  \widehat{\tau}\right)   &  =R_{n}^{\ast}\left(
\widehat{\tau}\right)  -\left[  R_{n}\left(  \breve{\alpha},\widehat{\tau
}\right)  -R_{n}\left(  \breve{\alpha},\tau_{0}\right)  +R_{n}^{\ast}\left(
\tau_{0}\right)  \right]  +\left[  R_{n}\left(  \breve{\alpha},\widehat{\tau
}\right)  -R_{n}\left(  \breve{\alpha},\tau_{0}\right)  +R_{n}^{\ast}\left(
\tau_{0}\right)  \right]  \\
&  \leq_{(1)} R_{n}^{\ast}\left(  \widehat{\tau}\right)  -\left[  R_{n}\left(
\breve{\alpha},\widehat{\tau}\right)  -R_{n}\left(  \breve{\alpha},\tau
_{0}\right)  +R_{n}^{\ast}\left(  \tau_{0}\right)  \right]  +\left[
R_{n}\left(  \breve{\alpha},\tau_{0}\right)  -R_{n}\left(  \breve{\alpha}%
,\tau_{0}\right)  +R_{n}^{\ast}\left(  \tau_{0}\right)  \right]  \\
&  =_{(2)}\left\{  \left[  R_{n}^{\ast}\left(  \widehat{\tau}\right)  -R_{n}^{\ast
}\left(  \tau_{0}\right)  \right]  -\left[  R_{n}\left(  \breve{\alpha
},\widehat{\tau}\right)  -R_{n}\left(  \breve{\alpha},\tau_{0}\right)
\right]  \right\}  +R_{n}^{\ast}\left(  \tau_{0}\right)  \\
&  =_{(3)} O_{P}\left(  r_{n}^{-2}\right)  +R_{n}^{\ast}\left(  \tau_{0}\right),
\end{align*}
where  inequality (1) uses the fact that $\widehat{\tau}$ is a minimizer
of $R_{n}\left(  \breve{\alpha},\tau\right)$, equality (2)
follows since $R_{n}\left(  \breve{\alpha},\tau_{0}\right)  -R_{n}\left(
\breve{\alpha},\tau_{0}\right)  +R_{n}^{\ast}\left(  \tau_{0}\right)
=R_{n}^{\ast}\left(  \tau_{0}\right)  $, and equality (3) comes from \eqref{equality3-below}.

Then, note that we can set $r_{n}^{-2}=a_{n}s_{n}\log(np)$ with $s_{n}=1$ and
$a_{n}=\kappa_{n}s\log n$ due to Lemma \ref{lem:ortho} and the rate of
convergence $\breve{\alpha}-\alpha_{0}=O_{P}\left(  \kappa_{n}s\right)
\ $given by Theorem \ref{th2.3}. Next, we will apply a chaining argument to
obtain the convergence rate of $\widehat{\tau}$
by repeatedly verifying the condition $R_n^*(\widehat\tau) \leq R_n^*(\tau_0)+O_P(r_n^{-2})$, 
with an iteratively improved rate $r_n$.
Applying Theorem 3.4.1 of
\citet{VW} with $r_{n}=\left(  a_{n}\log(np)\right)  ^{-1/2},$ we have
\[
\widehat{\tau}-\tau_{0}=O_{P}\left(  a_{n}\log(np)\right)  =O_{P}\left(
\kappa_{n}s\log n\log(np)\right)  .
\]
Next, we reset $s_{n}=\kappa_{n}s\left(  \log n\right)  ^{2}\log(np)$ and
$a_{n}=\kappa_{n}s\log n$ to apply Lemma \ref{lem:ortho} again and then
 Theorem 3.4.1 of \citet{VW} with $r_{n}=\left(  s_{n}a_{n}\log(np)\right)
^{-1/2}$. It follows that 
\[
\widehat{\tau}-\tau_{0}=O_{P}\left(  [\kappa_{n}s]^{2}\left(  \log n\right)
^{3}\left(  \log(np)\right)  ^{2}\right)  .
\]
In the next step, we set $r_{n}=\sqrt{n}$ since it should satisfy the
constraint that $r_{n}^{2}\phi_{n}\left(  r_{n}^{-1}\right)  \leq\sqrt{n}$ as
well. Then, we conclude that $\widehat{\tau}=\tau_{0}+O_{P}\left(
n^{-1}\right)  $. Furthermore, in view of Lemma \ref{lem:ortho},
$\widehat{\tau}=\tau_{0}+O_{P}\left(  n^{-1}\right)  $ implies that the
asymptotic distribution of $n\left(  \widehat{\tau}-\tau_{0}\right)  $ is
identical to $n\left(  \widetilde{\tau}-\tau_{0}\right)  $ since each of them is
characterized by the minimizer of the weak limit of $n\left(  R_{n}\left(
\alpha,\tau_{0}+tn^{-1}\right)  -R_{n}\left(  \alpha,\tau_{0}\right)  \right)
$ with $\alpha=\breve{\alpha}$ and $\alpha=\alpha_{0}$, respectively. That is,
the weak limits of the processes are identical due to Lemma \ref{lem:ortho}.
Therefore, we have proved the first conclusion of the theorem. Lemma
\ref{lem:tau-dist-oracle} establishes the second conclusion.
\end{proof}

\begin{lem}\label{lem:ortho}
Suppose that $\alpha\in \mathcal{A}_{n} \equiv \{
\alpha=\left(  \beta^{T},\delta^{T}\right)  ^{T}:\left\vert \alpha-\alpha
_{0}\right\vert _{1}\leq Ka_{n} \}  $ and $\tau\in\mathcal{T}_{n} \equiv \left\{  \left\vert \tau-\tau_{0}\right\vert \leq Ks_{n}\right\}  $ for
some $K<\infty$ and for some sequences $a_{n}$ and $s_{n}$ as
$n\rightarrow\infty$. Then,
\[
\sup_{\alpha\in\mathcal{A}_{n},\tau\in\mathcal{T}_{n}}\Big\vert 
\left\{ R_{n}\left(  \alpha,\tau\right)  -R_{n}\left(  \alpha,\tau_{0}\right) \right\}
-\left\{ R_{n}\left(  \alpha_{0},\tau\right)  -R_{n}\left(
\alpha_{0},\tau_{0}\right)  \right\}  \Big\vert =O_{P}\left[  a_{n}s_{n}\log
(np) \right].
\]
\end{lem}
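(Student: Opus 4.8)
The structural observation that drives the argument is that the ``difference of differences''
\[
G_{n}(\alpha,\tau)\equiv\{R_{n}(\alpha,\tau)-R_{n}(\alpha,\tau_{0})\}-\{R_{n}(\alpha_{0},\tau)-R_{n}(\alpha_{0},\tau_{0})\}
\]
is supported only on those observations whose threshold variable lies between $\tau$ and $\tau_{0}$. Indeed, for any index $i$ with $Q_{i}\notin(\tau\wedge\tau_{0},\tau\vee\tau_{0}]$ one has $X_{i}(\tau)^{T}\alpha=X_{i}(\tau_{0})^{T}\alpha$ for every $\alpha$, so such $i$ contributes nothing to $G_{n}$. I would treat $\tau>\tau_{0}$, the case $\tau<\tau_{0}$ being symmetric: for $Q_{i}\in(\tau_{0},\tau]$ the summand equals $[\rho(Y_{i},X_{i}^{T}\beta)-\rho(Y_{i},X_{i}^{T}\beta_{0})]-[\rho(Y_{i},X_{i}^{T}\theta)-\rho(Y_{i},X_{i}^{T}\theta_{0})]$, where $\theta=\beta+\delta$.

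\textbf{Step 1 (Lipschitz localisation).} I would apply the Lipschitz property of $\rho(y,\cdot)$ (Assumption \ref{a:obj-ftn} \ref{a:setting:L-cont} when $\delta_{0}\neq0$, or Assumption \ref{ass2.2-delta0} \ref{ass2.2-delta0:L-cont} when $\delta_{0}=0$) to each bracket. Writing $W_{i}\equiv\max_{j\leq p}|X_{ij}|$, this bounds the summand by $L|X_{i}^{T}(\beta-\beta_{0})|+L|X_{i}^{T}(\theta-\theta_{0})|\leq L(|\beta-\beta_{0}|_{1}+|\theta-\theta_{0}|_{1})W_{i}\leq 2L|\alpha-\alpha_{0}|_{1}W_{i}$, using $|\theta-\theta_{0}|_{1}\leq|\alpha-\alpha_{0}|_{1}$. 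Combined with the support restriction, this gives, uniformly in $\alpha\in\mathcal{A}_{n}$,
\[
\sup_{\tau\in\mathcal{T}_{n}}|G_{n}(\alpha,\tau)|\leq 2LKa_{n}\cdot\Psi_{n},\qquad \Psi_{n}\equiv\frac{1}{n}\sum_{i=1}^{n}W_{i}1\{|Q_{i}-\tau_{0}|\leq Ks_{n}\}.
\]
A useful simplification here is that the supremum over $\tau$ is disposed of for free: every $\tau\in\mathcal{T}_{n}$ produces a strip contained in $\{|Q-\tau_{0}|\leq Ks_{n}\}$, and since all summands are nonnegative, no chaining or entropy argument over $\tau$ (unlike in Lemma \ref{lem-emp}) is needed.

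\textbf{Step 2 (bounding $\Psi_{n}$).} I would bound $\Psi_{n}\leq(\max_{i\leq n}W_{i})\cdot n^{-1}\sum_{i=1}^{n}1\{|Q_{i}-\tau_{0}|\leq Ks_{n}\}$ and control the two factors separately. For the maximum, the sub-exponential moment bound of Assumption \ref{a:setting} \ref{a:setting:itm1} yields $\max_{i\leq n}\max_{j\leq p}|X_{ij}|=O_{P}(\log(np))$ by a standard union/Bernstein argument over the $np$ variables. For the count, Assumption \ref{a:setting} \ref{a:dist-Q:itm1} gives $\mathbb{E}[n^{-1}\sum_{i}1\{|Q_{i}-\tau_{0}|\leq Ks_{n}\}]=\mathbb{P}(|Q-\tau_{0}|\leq Ks_{n})\leq 2K_{2}Ks_{n}$, so Markov's inequality applied to this nonnegative average gives $n^{-1}\sum_{i}1\{\cdot\}=O_{P}(s_{n})$ irrespective of how fast $s_{n}\to0$. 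Multiplying the two factors yields $\Psi_{n}=O_{P}(s_{n}\log(np))$, and hence $\sup_{\alpha\in\mathcal{A}_{n},\tau\in\mathcal{T}_{n}}|G_{n}(\alpha,\tau)|=O_{P}(a_{n}s_{n}\log(np))$, as claimed.

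\textbf{Main obstacle.} The only genuinely technical ingredient is the maximal bound $\max_{i\leq n,j\leq p}|X_{ij}|=O_{P}(\log(np))$, which is precisely where the $\log(np)$ factor in the statement originates. It relies crucially on the full sub-exponential moment hypothesis in Assumption \ref{a:setting} \ref{a:setting:itm1}, not merely on the conditional second moments in Assumption \ref{a:setting} \ref{a:setting:itm2}; a conditioning-on-$Q$ argument for $\mathbb{E}[W_{i}\mid Q]$ would incur a $\sqrt{p}$ loss and would not suffice. Everything else—the Lipschitz localisation to the strip and the Markov bound on the strip count—is routine, and the monotonicity observation in Step~1 is what keeps the argument from requiring the symmetrisation machinery of Lemma \ref{lem-emp}.
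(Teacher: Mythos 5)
Your proposal is correct and follows essentially the same route as the paper's proof: localize the difference-of-differences to the strip $\{Q_i \in (\tau_0 \wedge \tau, \tau_0 \vee \tau]\}$, apply the Lipschitz property of $\rho$ together with H\"older to get the factor $L\,|\alpha-\alpha_0|_1 \max_{i,j}|X_{ij}|$, bound $\max_{i,j}|X_{ij}| = O_P(\log(np))$ by Bernstein, and bound the strip count by $O_P(s_n)$ via Markov and the bounded density of $Q$ near $\tau_0$. The only cosmetic difference is that you make the monotone containment of all strips in $\{|Q-\tau_0|\le Ks_n\}$ explicit to dispose of the supremum over $\tau$, which the paper leaves implicit.
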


\begin{proof}[Proof of Lemma \ref{lem:ortho}]
Noting that
\[
\rho\left(  Y_{i},X_{i}^{T}\beta+X_{i}^{T}\delta1\left\{  Q_{i}>\tau\right\}
\right)  =\rho\left(  Y_{i},X_{i}^{T}\beta\right)  1\left\{  Q_{i}\leq
\tau\right\}  +\rho\left(  Y_{i},X_{i}^{T}\beta+X_{i}^{T}\delta\right)
1\left\{  Q_{i}>\tau\right\}  ,
\]
we have, for $\tau>\tau_{0},$%
\begin{align*}
 D_{n}\left(  \alpha,\tau\right)  
&  := \left\{ R_{n}\left(  \alpha,\tau\right)  -R_{n}\left(
\alpha,\tau_{0}\right) \right\}  -\left\{  R_{n}\left(  \alpha_{0},\tau\right)
-R_{n}\left(  \alpha_{0},\tau_{0}\right)  \right\}  \\
&  =\frac{1}{n}\sum_{i=1}^{n}\left[  \rho\left(  Y_{i},X_{i}^{T}\beta\right)
-\rho\left(  Y_{i},X_{i}^{T}\beta_{0}\right)  \right]  1\left\{  \tau
_{0}<Q_{i}\leq\tau\right\}  \\
&  -\frac{1}{n}\sum_{i=1}^{n}\left[  \rho\left(  Y_{i},X_{i}^{T}\theta\right)
-\rho\left(  Y_{i},X_{i}^{T}\theta_{0}\right)  \right]  1\left\{  \tau
_{0}<Q_{i}\leq\tau\right\}  \\
&  =:D_{n1}\left(  \alpha,\tau\right)  -D_{n2}\left(  \alpha,\tau\right)  .
\end{align*}
However, the Lipschitz property of $\rho$ yields that%
\begin{align*}
\left| D_{n1}\left(  \alpha,\tau\right) \right| 
& = \left\vert \frac{1}{n}\sum_{i=1}^{n}\left[  \rho\left(  Y_{i},X_{i}%
^{T}\beta\right)  -\rho\left(  Y_{i},X_{i}^{T}\beta_{0}\right)  \right]
1\left\{  \tau_{0}<Q_{i}\leq\tau\right\}  \right\vert \\
&  \leq L\max_{i,j}\left\vert X_{ij}\right\vert \left\vert \beta-\beta
_{0}\right\vert _{1}\frac{1}{n}\sum_{i=1}^{n}1\left\{  \tau_{0}<Q_{i}\leq
\tau\right\}  \\
&  =O_{P}\left[  \log (np) \cdot a_{n}\cdot s_{n}\right] \;
  \text{ uniformly in $(\alpha, \tau) \in \mathcal{A}_{n} \times \mathcal{T}_{n}$},
\end{align*}
where $\log (np)$ term comes from the Bernstein inequality and the $s_{n}$ term
follows from the fact that $\mathbb{E} \left\vert \frac{1}{n}\sum_{i=1}^{n}1\left\{
\tau_{0}<Q_{i}\leq\tau\right\}  \right\vert =\mathbb{E} 1\left\{  \tau_{0}<Q_{i}\leq
\tau\right\}  \leq C\cdot Ks_{n}$ due to the boundedness of the density of
$Q_{i}$ around $\tau_{0}$.
The other term $D_{n2}\left(  \alpha,\tau\right)$ can be bounded similarly.
The case of  $\tau <\tau_{0}$ can be treated analogously and hence details are omitted.
\end{proof}

\begin{lem}\label{lem:tau-dist-oracle}
We have that 
$n\left( \widetilde{\tau}-\tau _{0}\right) $ converges in distribution to 
the smallest
minimizer of a compound Poisson process  defined in 
Theorem \ref{thm:tau-2nd}.
\end{lem}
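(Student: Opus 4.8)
The plan is to localize the oracle criterion $R_n^*(\tau) = R_n(\alpha_0,\tau)$ at the $n^{-1}$ scale and identify its weak limit directly as the compound Poisson process $M$. Setting $\tau = \tau_0 + h/n$, I would introduce the recentred, rescaled process $\mathbb{M}_n(h) \equiv n[R_n^*(\tau_0 + h/n) - R_n^*(\tau_0)]$. Because I evaluate at the exact $\alpha_0$, no Knight-type expansion is needed: using the identity $X_i(\tau)^T\alpha_0 - X_i(\tau_0)^T\alpha_0 = X_i^T\delta_0(1\{Q_i>\tau\} - 1\{Q_i>\tau_0\})$ together with $U_i = Y_i - X_i^T\beta_0 - X_i^T\delta_0 1\{Q_i>\tau_0\}$ and $\rho(Y_i,t)=\dot\rho(Y_i-t)$, a one-line computation gives, for $h\geq 0$,
\begin{equation*}
\mathbb{M}_n(h) = \sum_{i=1}^n \left[ \dot\rho(U_i + X_i^T\delta_0) - \dot\rho(U_i) \right] 1\{\tau_0 < Q_i \leq \tau_0 + h/n\},
\end{equation*}
and for $h < 0$,
\begin{equation*}
\mathbb{M}_n(h) = \sum_{i=1}^n \left[ \dot\rho(U_i - X_i^T\delta_0) - \dot\rho(U_i) \right] 1\{\tau_0 + h/n < Q_i \leq \tau_0\}.
\end{equation*}
This already exhibits the summand structure of $M$: the bracketed terms carry exactly the conditional laws of $\rho_{2i}$ and $\rho_{1i}$ given $Q_i = \tau_0$, and the indicators restrict attention to the $O_P(1)$ observations whose threshold variable lies in a shrinking window around $\tau_0$.

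The next step is to prove $\mathbb{M}_n \rightsquigarrow M$ in the Skorokhod space $D[-K,K]$ for each fixed $K$. I would view $\mathbb{M}_n$ as generated by the marked point process $\{(n(Q_i-\tau_0),\xi_i)\}_{i\leq n}$, where the mark $\xi_i = \dot\rho(U_i + X_i^T\delta_0)-\dot\rho(U_i)$ (or its $-\delta_0$ analogue on the left) records the jump contributed by observation $i$. Since Assumption \ref{a:dist-Q-add} \ref{a:dist-Q:itm2} makes $f_Q$ continuous and bounded away from zero near $\tau_0$, the rescaled locations $n(Q_i-\tau_0)$ falling in any bounded interval converge, by the classical Poisson limit theorem for rare events ($\mathbb{P}(\tau_0 < Q_i \leq \tau_0 + h/n) = f_Q(\tau_0)h/n + o(n^{-1})$), to a Poisson process of intensity $f_Q(\tau_0)$, while the attached marks, conditioned on $Q_i$ lying in the vanishing window, converge to the conditional distribution of $\xi$ given $Q_i=\tau_0$ by continuity of the conditional law. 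Proving joint convergence of the counts on the two sides of $\tau_0$ and of the marks yields finite-dimensional convergence of $\mathbb{M}_n$ to $M$; the mutual independence of $N_1,N_2,\{\rho_{1i}\},\{\rho_{2i}\}$ in the limit reflects the disjointness of the left and right windows and the asymptotic independence of counts and marks. This is precisely the content of the weak convergence arguments of \cite{Pons2003}, \cite{kosorok2007}, and \cite{Lee:Seo:08}, which I would invoke.

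Finally, I would upgrade to process convergence and pass to the argmin. Tightness in the $J_1$ topology holds because $M$ is a pure-jump process with finitely many jumps per compact interval and $\mathbb{M}_n$ has piecewise-constant paths with $O_P(1)$ active observations on $[-K,K]$, so $D[-K,K]$ is the natural setting. The rate $\widetilde\tau - \tau_0 = O_P(n^{-1})$, already established in the proof of Theorem \ref{thm:tau-2nd}, confines $n(\widetilde\tau-\tau_0) = \argmin_h \mathbb{M}_n(h)$ to a compact set with probability approaching one, so the limit can be read off on $D[-K,K]$ for $K$ large. I would then apply the argmax (here argmin) continuous mapping theorem of \cite{Seijo:Sen:11b}, which is tailored to criterion processes with discontinuous jump limits, to conclude that $n(\widetilde\tau-\tau_0)$ converges in distribution to the smallest minimizer of $M$. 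The main obstacle is this last step: one must verify the hypotheses of the jump-process argmin theorem, chiefly that $M$ almost surely attains a well-defined, unique smallest minimizer, so that the smallest-argmin functional is a.s. continuous at $M$, and that no mass escapes to the boundary. Both reduce to a nondegeneracy argument showing the marks have strictly positive conditional mean given $Q_i=\tau_0$ — which follows from the separation/kink condition \eqref{iden-tau0} — so that $M(h)\to +\infty$ as $|h|\to\infty$ and the minimizer remains finite and a.s. unique.
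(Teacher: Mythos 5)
Your proposal is correct and follows essentially the same route as the paper's proof: the explicit computation of $n[R_n^*(\tau_0+h/n)-R_n^*(\tau_0)]$ yielding the two sums of $\dot\rho(U_i\mp X_i^T\delta_0)-\dot\rho(U_i)$ over the shrinking windows is exactly the paper's display for $M_n(h)$, the weak convergence to the compound Poisson process is likewise deferred to \cite{Pons2003}, \cite{kosorok2007}, and \cite{Lee:Seo:08}, and the conclusion is drawn via the argmax continuous mapping theorem of \cite{Seijo:Sen:11b}. You supply somewhat more detail than the paper on the marked-point-process heuristic and on verifying the hypotheses of the argmin theorem, but the argument is the same.
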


\begin{proof}[Proof of Lemma \ref{lem:tau-dist-oracle}]
The convergence rate of $ \widetilde{\tau} $ is standard as commented in the beginning 
of the proof of Theorem \ref{thm:tau-2nd} and thus details are omitted here. We present the characterization 
of the asymptotic distribution for the given convergence rate $n$. 

Recall that $\rho \left( t,s\right) =\dot{\rho}\left( t-s\right) ,$ where $%
\dot{\rho}\left( t\right) =t\left( \gamma -1\left\{ t\leq 0\right\} \right) $.
Note that 
\begin{eqnarray*}
&&nR_{n}^{\ast }\left( \tau \right)  \\
&=&\sum_{i=1}^{n}\dot{\rho}\left( Y_{i}-X_{i}^{T}\beta _{0}-X_{i}^{T}\delta
_{0}1\left\{ Q_{i}>\tau \right\} \right) -\dot{\rho}\left(
Y_{i}-X_{i}^{T}\beta _{0}-X_{i}^{T}\delta _{0}1\left\{ Q_{i}>\tau
_{0}\right\} \right)  \\
&=&\sum_{i=1}^{n}\left[ \dot{\rho}\left( U _{i}-X_{i}^{T}\delta
_{0}\left( 1\left\{ Q_{i}>\tau \right\} -1\left\{ Q_{i}>\tau _{0}\right\}
\right) \right) -\dot{\rho}\left( U _{i}\right) \right] \left(
1\left\{ \tau <Q_{i}\leq \tau _{0}\right\} +1\left\{ \tau _{0}<Q_{i}\leq
\tau \right\} \right)  \\
&=&\sum_{i=1}^{n}\left[ \dot{\rho}\left( U _{i}-X_{i}^{T}\delta
_{0}\right) -\dot{\rho}\left( U _{i}\right) \right] 1\left\{ \tau
<Q_{i}\leq \tau _{0}\right\}  \\
&&+\sum_{i=1}^{n}\left[ \dot{\rho}\left( U _{i}+X_{i}^{T}\delta
_{0}\right) -\dot{\rho}\left( U _{i}\right) \right] 1\left\{ \tau
_{0}<Q_{i}\leq \tau \right\}.
\end{eqnarray*}%
Thus, the asymptotic distribution of $n\left( \widetilde{\tau}-\tau _{0}\right) $
is characterized by the smallest minimizer of the weak limit of%
\begin{equation*}
M_{n}\left( h\right) =\sum_{i=1}^{n}\dot{\rho}_{1i}1\left\{ \tau _{0}+\frac{h%
}{n}<Q_{i}\leq \tau _{0}\right\} +\sum_{i=1}^{n}\dot{\rho}_{2i}1\left\{ \tau
_{0}<Q_{i}\leq \tau _{0}+\frac{h}{n}\right\} 
\end{equation*}%
for $\left\vert h\right\vert \leq K$ for some large $K,$ where $\dot{\rho}%
_{1i} \equiv \dot{\rho}\left( U _{i}-X_{i}^{T}\delta _{0}\right) -\dot{%
\rho}\left( U _{i}\right) $ and $\dot{\rho}_{2i} \equiv \dot{\rho}\left(
U _{i}+X_{i}^{T}\delta _{0}\right) -\dot{\rho}\left( U
_{i}\right) $. The weak limit of the empirical process $M_{n}\left( \cdot
\right) $ is well developed in the literature, (see e.g. \cite{Pons2003,
kosorok2007, Lee:Seo:08}) and the argmax continuous mapping theorem by \cite%
{Seijo:Sen:11b} yields the asymptotic distribution, namely the smallest
minimizer of a compound Poisson process, which is  defined in 
Theorem \ref{thm:tau-2nd}.
\end{proof}

\subsection{Proof of Theorem \ref{th2.3-improved}}


Let  $\widehat{D} \equiv D\left( \widehat{\tau}\right)$.  
It follows from the definition of $\widehat{\alpha}$ in \eqref{step3-alpha-est}
that 
\begin{align*}
\frac{1}{n}\sum_{i=1}^{n}\rho (Y_{i},X_{i}(\widehat{\tau})^{T}\widehat{\alpha })
+\omega _{n}|\widehat{D}\widehat{\alpha }|_{1}\leq \frac{1}{n}%
\sum_{i=1}^{n}\rho (Y_{i},X_{i}(\widehat{\tau})^{T}\alpha _{0})+\omega
_{n}|\widehat{D}\alpha _{0}|_{1}.
\end{align*}
From this, we obtain the following inequality 
\begin{align}\label{basic_ineq_2nd_step}
R(\widehat{\alpha },\widehat{\tau}) &\leq 
\left[ \nu _{n}(\alpha_{0},\widehat{\tau})-\nu _{n}(\widehat{\alpha},\widehat{\tau })\right] 
+ R(\alpha_0,\widehat{\tau})
+\omega
_{n}|\widehat{D}\alpha _{0}|_{1}-\omega _{n}|\widehat{D}\widehat{\alpha }|_{1}. 
\end{align}
Now applying Lemma \ref{lem-emp} 
to 
$[\nu _{n} (\alpha _{0},\widehat{\tau})-\nu _{n}(\widehat{\alpha },\widehat{\tau})]$, 
we rewrite the basic inequality in \eqref{basic_ineq_2nd_step} by%
\begin{equation*}
\omega _{n}\left\vert \widehat{D} \alpha _{0}\right\vert _{1}\geq R(\widehat{\alpha },\widehat{\tau}) 
 +\omega  _{n}\left\vert \widehat{D}\widehat{\alpha}\right\vert
_{1}-\frac{1}{2}\omega _{n}\left\vert \widehat{D}\left( \widehat{\alpha}-\alpha
_{0}\right) \right\vert _{1}-\left\vert R(\alpha_0,\widehat{\tau})
\right\vert \; \text{ w.p.a.1}.
\end{equation*}
As before, adding $\omega _{n}\left\vert \widehat{D}\left( \widehat{\alpha}-\alpha
_{0}\right) \right\vert _{1}$ on both sides of the inequality above
and using the fact that $ \left\vert  \alpha_{0j} \right\vert _{1} - \left\vert \widehat{\alpha}_{j} \right\vert
_{1} + \left\vert \left( \widehat{\alpha}_{j} -\alpha
_{0j}\right) \right\vert _{1} = 0$ for $j \notin J$,
we have that 
\begin{align}\label{dev-eq1}
R\left( \widehat{\alpha},\widehat{\tau}\right)
 +\frac{1}{2}\omega
_{n}\left\vert \widehat{D}\left( \widehat{\alpha}-\alpha _{0}\right) \right\vert
_{1} & \leq 
\left\vert R(\alpha_0,\widehat{\tau}) \right\vert
+ 2\omega _{n}\left\vert \widehat{D}\left( \widehat{\alpha}-\alpha _{0}\right) _{J}\right\vert _{1}\; \text{ w.p.a.1}.
\end{align}%

As in the proof of Theorem \protect\ref{th2.3}, we   consider 
 two cases: (i) $\omega _{n}\left\vert \widehat{D}\left( \widehat{\alpha}-\alpha _{0}\right) _{J}\right\vert _{1} \leq \left\vert R(\alpha_0,\widehat{\tau}) \right\vert$; (ii) $\omega _{n}\left\vert \widehat{D}\left( \widehat{\alpha}-\alpha _{0}\right) _{J}\right\vert _{1} > \left\vert R(\alpha_0,\widehat{\tau}) \right\vert$.
 We first consider case (ii). 
Recall that $\|Z\|_2=(EZ^2)^{1/2}$ for a random variable $Z.$  
It follows from 
the compatibility condition (Assumption \ref{ass2.7}) and 
the same arguments as in \eqref{a.11add} that 
\begin{align}\label{dev-eq2}
\begin{split}
\omega _{n}\left\vert \widehat{D}\left( \widehat{\alpha}-\alpha _{0}\right) _{J}\right\vert _{1}
& \leq
\omega _{n}\bar D\left\Vert X(\widehat{\tau})^T(\widehat\alpha-\alpha_0)\right\Vert _{2}\sqrt{s}/\phi  \\
&\leq \frac{\omega _{n}^{2}\bar D^{2}s}{2 \tilde{c} \phi ^{2}}+\frac{\tilde{c}}{2}%
\left\Vert X(\widehat{\tau})^T(\widehat\alpha-\alpha_0)\right\Vert _{2}^{2}
\end{split}
\end{align}
 for any $\tilde{c} >0$.
Recall that $\bar{c}_0(\delta_0) \equiv c_0  \inf_{\tau \in \mathcal{T}_0}\mathbb{E}[(X^T\delta_0)^2|Q=\tau]$.
 As in Step 5 of the proof of Theorem  \protect\ref{th2.3}, there is a  constant $C_0 > 0$ such that 
\begin{align}\label{dev-eq3}
\left\Vert X(\widehat\tau)^T(\widehat\alpha-\alpha_0)\right\Vert _{2}^{2}\leq C_0 R(\widehat\alpha,\widehat\tau)+C_0 \bar{c}_0(\delta_0) |\widehat\tau-\tau_0|,
\end{align}
w.p.a.1.
Combining \eqref{dev-eq1}-\eqref{dev-eq3} with a sufficiently small $\tilde{c}$ yields
\begin{align}\label{dev-eq4}
R\left( \widehat{\alpha},\widehat{\tau}\right)
 +\omega
_{n}\left\vert \widehat{D}\left( \widehat{\alpha}-\alpha _{0}\right) \right\vert
_{1} & \leq 
  C \left( \omega _{n}^{2} s +
 |\widehat\tau-\tau_0| \right)
\end{align}
for some finite constant $C > 0$. 
Since $|\widehat{\tau}-\tau _{0}|=O_P (n^{-1})$ by Theorem \ref{thm:tau-2nd},
the desired results follow \eqref{dev-eq4} immediately.

Now we consider case (i). In this case, 
\begin{align}\label{dev-eq11}
R\left( \widehat{\alpha},\widehat{\tau}\right)
 +\frac{1}{2}\omega
_{n}\left\vert \widehat{D}\left( \widehat{\alpha}-\alpha _{0}\right) \right\vert
_{1} & \leq 
3\left\vert R(\alpha_0,\widehat{\tau}) \right\vert.
\end{align}%
As shown in the proof of Theorem \ref{l2.1-improved}, we have that  
\begin{align}\label{dev-eq22}
 |R\left( \alpha_0, \widehat{\tau} \right)|
&= O_P \left( \left|\delta_{0}\right|_{1} n^{-1} \log n \right)
= O_P \left( \omega _{n}^{2} s \right).
\end{align}
Therefore, we obtain the desired results in case (i) as well by combining \eqref{dev-eq22} with \eqref{dev-eq11}.

\subsection{Proof of Theorems \ref{th3.1}}


We write $\alpha_J$ be a subvector of $\alpha$ whose components' indices are in $J(\alpha_0)$. 
Define  $\bar{Q}_{n}(\alpha _{J}) \equiv \widetilde{S%
}_{n}((\alpha _{J},0))$,  so that%
\begin{equation*}
\bar{Q}_{n}(\alpha _{J})=\frac{1}{n}\sum_{i=1}^{n}\rho (Y_{i},X_{iJ}(%
\widehat{\tau})^{T}\alpha _{J})+\mu _{n}\sum_{j\in J(\alpha_0)}w_{j}\widehat{D}_{j}|\alpha
_{j}|.
\end{equation*}%
For notational simplicity, here we write $\widehat D_j\equiv D_j(\widehat\tau)$. When $\tau_0$ is identifiable, our argument is conditional on 
\begin{equation}
\widehat{\tau}\in \mathcal{T}_{n}=\left\{ \left\vert \tau -\tau _{0}\right\vert
\leq n^{-1} \log n\right\} ,  \label{Tn}
\end{equation}%
whose probability goes to $1$ due to Theorem \ref{thm:tau-2nd}.  


We first prove the following two lemmas.
Define \begin{align}\label{bar-alpha-J-def}
\bar{\alpha}_{J}\equiv\limfunc{argmin}_{\alpha _{J}}\bar{Q}_{n}(\alpha _{J}).
\end{align}

\begin{lem}
\label{la.1}    Suppose that
$M_n^2 (\log n)^2/(s \log s) = o(n)$, $s^4\log s=o(n)$,
$s^2 \log n/\log s =o(n)$
 and $\widehat\tau\in\mathcal{T}_n$ if $\delta_0\neq0$; suppose that $s^4\log s=o(n)$ and $\widehat\tau$  is any value in $\mathcal{T}$ if $\delta_0=0$.   Then 
\begin{equation*}
|\bar{\alpha }_{J}-\alpha _{0J}|_{2}=O_P\left( \sqrt{\frac{s\log s}{n}} \; \right).
\end{equation*}
\end{lem}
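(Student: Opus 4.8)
The plan is to leverage the convexity of $\alpha_J\mapsto\bar{Q}_n(\alpha_J)$: the quantile check loss is convex in its second argument and the weighted penalty $\mu_n\sum_{j\in J(\alpha_0)}w_j\widehat{D}_j|\alpha_j|$ is convex, so $\bar{Q}_n$ is convex on $\mathbb{R}^s$. Writing $r_n\equiv\sqrt{s\log s/n}$ and reparametrizing $\alpha_J=\alpha_{0J}+r_n u$, it then suffices to prove that for a sufficiently large constant $C$ one has $\inf_{|u|_2=C}\{\bar{Q}_n(\alpha_{0J}+r_n u)-\bar{Q}_n(\alpha_{0J})\}>0$ with probability approaching one; convexity then forces the minimizer $\bar{\alpha}_J$ into the ball $\{|\alpha_J-\alpha_{0J}|_2\le Cr_n\}$, which is the assertion. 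Throughout I would treat $\widehat{\tau}$ as a fixed point satisfying $|\widehat{\tau}-\tau_0|=O_P(n^{-1}\log n)$, which is legitimate on the event $\widehat{\tau}\in\mathcal{T}_n$ (probability tending to one by Theorem \ref{thm:tau-2nd} when $\delta_0\neq0$, and with $\tau$ not entering the population loss at all when $\delta_0=0$, so that any $\widehat{\tau}\in\mathcal{T}$ is admissible). I would also first dispose of the penalty on the oracle block: as a routine consequence of the minimum-signal condition in force and the rate of Theorem \ref{th2.3-improved}, the Step 3a estimator separates every active coordinate from the threshold $a\mu_n$, so $w_j=0$ for all $j\in J(\alpha_0)$ with probability approaching one. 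Hence $\bar{Q}_n$ reduces on the active block to the unpenalized restricted criterion $L_n(\alpha_J)\equiv n^{-1}\sum_{i}\rho(Y_i,X_{iJ}(\widehat{\tau})^T\alpha_J)$, and $\bar{\alpha}_J$ is just the plug-in oracle quantile-regression estimator of dimension $s$.

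The core step is to expand the loss increment $L_n(\alpha_{0J}+r_n u)-L_n(\alpha_{0J})$ via Knight's identity \eqref{Knight-ineq} applied coordinatewise. Taking conditional expectations and splitting into deterministic and mean-zero parts, I would decompose it into four pieces: (a) a deterministic curvature term $\tfrac12 r_n^2\,u^T\Gamma(\widehat{\tau},\alpha_0)u$, bounded below by $c_1 r_n^2|u|_2^2$ thanks to the uniform positive definiteness of the restricted Hessian in Assumption \ref{ass3.4-a} \ref{ass3.4-a:itm5} (and its $\delta_0=0$ analogue); (b) a threshold-bias term arising from evaluating the score at $\widehat{\tau}$ instead of $\tau_0$, controlled through the Lipschitz-in-$\tau$ bound of Assumption \ref{ass3.2} \ref{ass3.2:itm1} with $M_n=Cs^{1/2}$ together with $|\widehat{\tau}-\tau_0|=O_P(n^{-1}\log n)$, whose aggregate effect along $u$ is $o(r_n^2)$ precisely under the hypothesis $M_n^2(\log n)^2/(s\log s)=o(n)$ (this piece is absent when $\delta_0=0$); (c) a third-order Taylor remainder, bounded via the uniform third-derivative condition of Assumption \ref{ass3.2} \ref{ass3.2:itm3}, of order $r_n^3 s^{1/2}|u|_2^3$ and therefore $o(r_n^2)$ under $s^4\log s=o(n)$; and (d) a centered stochastic fluctuation $\nu_n(\alpha_{0J}+r_n u,\widehat{\tau})-\nu_n(\alpha_{0J},\widehat{\tau})$.

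The main obstacle is controlling the stochastic fluctuation (d) uniformly over the sphere $\{|u|_2=C\}$ in the growing dimension $s$, for the non-smooth quantile loss. Crucially, because we work on the oracle block of size $s$ rather than the full $2p$ coordinates, the relevant concentration is a maximal inequality over $s$ directions (a Bernstein/chaining argument in the spirit of the increasing-dimension quantile analysis of \cite{He:Shao:00}), which yields a $\log s$ factor in place of the $\log p$ factor carried by the unrestricted bound of Lemma \ref{lem-emp}. This is exactly why $r_n$ is taken to include the $\sqrt{\log s}$ factor: the localized fluctuation over the ball of radius $r_n$ is $o(r_n^2)$ once $C$ is large, using the smoothness of the conditional distribution in Assumption \ref{ass3.4-a}. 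Collecting the bounds, the curvature term (a) dominates all of (b), (c), (d) for $C$ large, giving the strict positivity of the increment on the sphere; by convexity this delivers $|\bar{\alpha}_J-\alpha_{0J}|_2=O_P(r_n)$. I expect the bookkeeping that simultaneously reconciles the threshold-bias condition, the cubic-remainder condition $s^4\log s=o(n)$, and the auxiliary $s^2\log n/\log s=o(n)$ condition (needed to absorb the number of distinct jump points of the indicators into the fluctuation bound) to be the most delicate part.
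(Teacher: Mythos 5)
Your proposal is correct and follows essentially the same route as the paper's proof: localization on the sphere $|\alpha_J-\alpha_{0J}|_2=C k_n$ combined with convexity, elimination of the penalty on the active block via the minimum-signal condition (so $w_j=0$ for $j\in J(\alpha_0)$ w.p.a.1), a decomposition of the loss increment into a quadratic curvature term, a score-shift term of order $M_n|\widehat\tau-\tau_0|\cdot k_n$, a cubic remainder of order $s^{3/2}k_n^3$, and a centered fluctuation term, with each rate condition playing exactly the role you assign to it (in particular $s^2\log n/\log s=o(n)$ controls the indicator-difference part of the fluctuation over $\mathcal{T}_n$). The only cosmetic difference is that the paper bounds the fluctuation by symmetrization, the contraction theorem, a VC-class maximal inequality for the jump part, and Bernstein's moment inequality rather than a He--Shao-style chaining argument, but these deliver the same $O(k_n^2)$ bound.
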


\begin{proof}[Proof of Lemma \ref{la.1}.]
Let $k_{n}=\sqrt{\frac{s\log s}{n}}$. We first prove that for any $\epsilon >0$%
, there is $C_{\epsilon}>0$, with probability at least $1-\epsilon$, 
\begin{equation}
\inf_{|\alpha _{J}-\alpha _{0J}|_{2}=C_{\epsilon}k_{n}}\bar{Q}%
_{n}(\alpha _{J})>\bar{Q}_{n}(\alpha _{0J})  \label{1.1}
\end{equation}%
Once this is proved, then by the continuity of $\bar{Q}_{n}$, there is
a local minimizer of $\bar{Q}_{n}(\alpha _{J})$ inside $B(\alpha
_{0J},C_{\epsilon}k_{n}) \equiv \{\alpha _{J}\in \mathbb{R}^{s}:|\alpha _{0J}-\alpha
_{J}|_{2}\leq C_{\epsilon}k_{n}\}$. Due to the convexity of $\bar Q_n$, such a  local minimizer is also global. We now prove (\ref{1.1}). 

Write 
\begin{equation*}
l_{J}(\alpha _{J})=\frac{1}{n}\sum_{i=1}^{n}\rho (Y_{i},X_{iJ}(\widehat{\tau}%
)^{T}\alpha _{J}),\quad 
L_{J}(\alpha _{J}, \tau)= \mathbb{E}[ \rho (Y,X_{J}(\tau)^{T}\alpha _{J})].
\end{equation*}%
Then  for all $|\alpha _{J}-\alpha _{0J}|_{2}=C_{\epsilon}k_{n}$,
\begin{eqnarray*}
&&\bar{Q}_{n}(\alpha _{J})-\bar{Q}_{n}(\alpha _{0J}) \\
&=&l_{J}(\alpha _{J})-l_{J}(\alpha _{0J})+\sum_{j\in J(\alpha_0)}w_{j}\mu _{n}\widehat{D}%
_{j}(|\alpha _{j}|-|\alpha _{0j}|) \\
&\geq &\underbrace{L_{J}(\alpha _{J},\widehat{\tau})-L_{J}(\alpha
_{0J}, \widehat{\tau})}_{(1)}-\underbrace{\sup_{|\alpha _{J}-\alpha _{0J}|_{2}\leq C_{\delta
}k_{n}}|\nu _{n}(\alpha _{J},\widehat{\tau})-\nu _{n}(\alpha _{0J},\widehat{\tau})|}%
_{(2)}+\underbrace{\sum_{j\in J(\alpha_0)}\mu _{n}\widehat{D}_{j}w_{j}(|\alpha
_{j}|-|\alpha _{0j}|)}_{(3)}.
\end{eqnarray*}%

To analyze (1), note that $|\alpha _{J}-\alpha _{0J}|_{2}=C_{\epsilon}k_{n}$
and $m_{J}(\tau _{0},\alpha _{0})=0$ and when $\delta_0=0$, $m_J(\tau,\alpha_{0J})$ is free of $\tau$. Then there is $c_{3}>0$, 
\begin{eqnarray*}
&&L_{J}(\alpha _{J},\widehat{\tau})-L_{J}(\alpha _{0J},\widehat{\tau})\cr &\geq
&m_{J}(\tau _{0},\alpha _{0J})^{T}(\alpha _{J}-\alpha _{0J})+(\alpha
_{J}-\alpha _{0J})^{T}\frac{\partial ^{2} \mathbb{E}[ \rho (Y,X_{J}(\widehat{\tau}%
)^{T}\alpha _{0J})]}{\partial \alpha _{J}\partial \alpha _{J}^{T}}(\alpha
_{J}-\alpha _{0J})\cr 
&&-|m_{J}(\tau _{0},\alpha _{0J})-m_{J}(\widehat{\tau},\alpha _{0J})|_{2}|\alpha
_{J}-\alpha _{0J}|_{2}-c_{3}|\alpha _{0J}-\alpha _{J}|_{1}^{3}\cr &\geq
&\lambda _{\min } \left(\frac{\partial ^{2} \mathbb{E}[ \rho (Y,X_{J}(\widehat{\tau})^{T}\alpha_{0J})]}{\partial \alpha _{J}\partial \alpha _{J}^{T}} \right)|\alpha _{J}-\alpha
_{0J}|_{2}^{2}\cr 
&&-(|m_{J}(\tau _{0},\alpha _{0J})-m_{J}(\widehat{\tau},\alpha _{0J})|_{2})|\alpha
_{J}-\alpha _{0J}|_{2}-c_{3}s^{3/2}|\alpha _{0J}-\alpha _{J}|_{2}^{3}\cr %
&\geq &c_{1}C_{\epsilon}^{2}k_{n}^{2}-(|m_{J}(\tau _{0},\alpha _{0J})-m_{J}(%
\widehat{\tau},\alpha _{0J})|_{2})C_{\epsilon}k_{n}-c_{3}s^{3/2}C_{\delta
}^{3}k_{n}^{3} \\
&\geq &C_{\epsilon}k_{n}(c_{1}C_{\epsilon}k_{n}-M_n n^{-1} 
\log n-c_{3}s^{3/2}C_{\epsilon}^{2}k_{n}^{2})\geq c_{1}C_{\delta
}^{2}k_{n}^{2}/3,
\end{eqnarray*}
where the last inequality follows from $M_n n^{-1}
\log n<1/3c_{1}C_{\epsilon}k_{n}$ and  $c_{3}s^{3/2}C_{\epsilon}^{2}k_{n}^{2}<1/3c_{1}C_{\epsilon}k_{n}$. These follow from the 
  conditions $M_n^2 (\log n)^2/(s \log s) = o(n)$ and $s^4\log s=o(n)$.

To analyze (2), by the symmetrization theorem   and the contraction theorem (see, for example, Theorems 14.3 and 14.4 of \cite{bulmann}), there is a Rademacher sequence $\epsilon _{1},...,\epsilon _{n}$
independent of $\{Y_{i},X_{i},Q_{i}\}_{i\leq n}$ such that (note that when $\delta_0=0$, $\alpha_J=\beta_J$,  \begin{equation*}
\nu _{n}\left( \alpha_J ,\tau \right) \equiv\frac{1}{n}\sum_{i=1}^{n}\left[ \rho
\left( Y_{i},X_{J(\beta_0)i}  ^{T}\beta_J \right) -
\mathbb{E} \rho \left(Y,X_{J(\beta_0)} ^{T}\beta_J \right) \right],
\end{equation*}
which is free of $\tau$)
\begin{eqnarray*}
V_{n} &=&\mathbb{E} \left( \sup_{\tau \in \mathcal{T}_{n}}\sup_{|\alpha _{J}-\alpha
_{0J}|_{2}\leq C_{\epsilon}k_{n}}|\nu _{n}(\alpha _{J},\tau)-\nu _{n}(\alpha _{0J},\tau)| \right) \\
&\leq &2\mathbb{E} \left( \sup_{\tau \in \mathcal{T}_{n}}\sup_{|\alpha _{J}-\alpha
_{0J}|_{2}\leq C_{\epsilon}k_{n}}
\left|\frac{1}{n}\sum_{i=1}^{n}\epsilon _{i}[\rho
(Y_{i},X_{iJ}(\tau )^{T}\alpha _{J})-\rho (Y_{i},X_{iJ}(\tau )^{T}\alpha
_{0J})]\right| \right) \\
&\leq &4L \mathbb{E} \left( \sup_{\tau \in \mathcal{T}_{n}}\sup_{|\alpha _{J}-\alpha
_{0J}|_{2}\leq C_{\epsilon}k_{n}}\left| \frac{1}{n}\sum_{i=1}^{n}\epsilon
_{i}(X_{iJ}(\tau )^{T}\left( \alpha _{J}-\alpha _{0J}\right) )\right| \right),
\end{eqnarray*}%
which is bounded by the sum of the following two terms, $V_{1n}+V_{2n}$,  due to the triangle
inequality and the fact that $|\alpha _{J}-\alpha _{0J}|_{1}\leq |\alpha
_{J}-\alpha _{0J}|_{2}\sqrt{s}$:  first, 
when $\delta_0=0$, $V_{1n}\equiv0$; second, when $\delta_0\neq0$ and $\tau_0$ is identifiable,  we have that
\begin{eqnarray*}
V_{1n} &=&4L\mathbb{E} \left( \sup_{\tau \in \mathcal{T}_{n}}\sup_{|\alpha _{J}-\alpha
_{0J}|_{1}\leq C_{\epsilon}k_{n}\sqrt{s}} \left| \frac{1}{n}\sum_{i=1}^{n}\epsilon
_{i}(X_{iJ}(\tau )-X_{iJ}(\tau _{0}))^{T}(\alpha _{J}-\alpha _{0J}) \right| \right) \\
&\leq &4L\mathbb{E} \left( \sup_{\tau \in \mathcal{T}_{n}}\sup_{|\delta _{J}-\delta
_{0J}|_{1}\leq C_{\epsilon}k_{n}\sqrt{s}} \left|\frac{1}{n}\sum_{i=1}^{n}%
\epsilon _{i}X_{iJ(\delta_0)}^{T}(1\{Q_{i}>\tau \}-1\{Q_{i}>\tau _{0}\})(\delta
_{J}-\delta _{0J}) \right| \right) \\
&\leq &4LC_{\epsilon}k_{n}\sqrt{s} \mathbb{E} \left( \sup_{\tau \in \mathcal{T}_{n}}\max_{j\in
J(\delta_0)}\left\vert \frac{1}{n}\sum_{i=1}^{n}\epsilon _{i}X_{ij}(1\{Q_{i}>\tau
\}-1\{Q_{i}>\tau _{0}\})\right\vert \right) \\
&\leq &4LC_{\epsilon}k_{n}\sqrt{s}C_{1}\left\vert J(\delta_0)\right\vert_0 \sqrt{%
\frac{\log n}{n^2}}
\end{eqnarray*}%
by bounding the maximum over $j$ with  summation and using the maximal inequality in Theorem 2.14.1 in \cite{VW} since the class of transformations $ \epsilon_i X_{ij} (1\{Q_i > \tau \} - 1\{Q_i > \tau_0 \}) $ constitutes a VC class of functions. Here the bound is uniform and determined by the $ L_2 $-norm of the envelope, which is proportional to $$ \sqrt{\mathbb{E} \left( 1\{|Q_i - \tau_0| \leq n^{-1} \log n \}  \right)}.$$
Note that 	
\begin{eqnarray*}
V_{2n} &=&4L \mathbb{E}\left( \sup_{|\alpha _{J}-\alpha _{0J}|_{1}\leq C_{\epsilon}k_{n}\sqrt{s%
}} \left|\frac{1}{n}\sum_{i=1}^{n}\epsilon _{i}X_{iJ}(\tau _{0})^{T}(\alpha
_{J}-\alpha _{0J})\right| \right) \\
&\leq &4LC_{\epsilon}k_{n}\sqrt{s} \mathbb{E} \left( \max_{j\in J(\alpha_0)} \left|\frac{1}{n}\sum_{i=1}^{n}\epsilon _{i}X_{ij}(\tau
_{0}) \right| \right)\leq 4LC_{\epsilon} C_{2}k_{n}^2,
\end{eqnarray*}%
due to the Bernstein's moment inequality (Lemma 14.12 of \cite{bulmann} for some $C_{2}>0.$ Therefore, 
$$
V_n\leq 4LC_{\epsilon}k_{n}\sqrt{s}C_{1}\left\vert J(\delta_0)\right\vert_0 \sqrt{%
\frac{\log n}{n^2}}+4LC_{\epsilon} C_{2}k_{n}^2<5LC_{\epsilon} C_{2}k_{n}^2, 
$$
where the last inequality is due to  
 $s^2 \log n/\log s =o(n)$.
Therefore, conditioning on the event $\widehat{\tau}\in \mathcal{T}_{n}$ when $\delta_0\neq0$, or for $\widehat\tau\in\mathcal{T}$  when $\delta_0=0$, with
probability at least $1-\epsilon$, $(2)\leq \frac{1}{\epsilon }5LC_{2}C_{\epsilon}k_{n}^{2}$.

In addition, note that $P(\max_{j\in J(\alpha_0)}|w_{j}|=0)=1$, so $(3)=0$ with
probability approaching one. Hence 
\begin{equation*}
\inf_{|\alpha _{J}-\alpha _{0J}|_{2}=C_{\epsilon}k_{n}}\bar{Q}%
_{n}(\alpha _{J})-\bar{Q}_{n}(\alpha _{0J})\geq \frac{c_{1}C_{\epsilon}^{2}k_{n}^{2}}{3}-\frac{1}{\epsilon}5LC_{2}C_{\epsilon}k_{n}^{2}>0.
\end{equation*}%
The last inequality holds for $C_{\epsilon}>\frac{15LC_{2}}{c_{1}\epsilon }$.
By the continuity of $\bar{Q}_{n}$, there is a local minimizer of $%
\bar{Q}_{n}(\alpha _{J})$ inside $\{\alpha _{J}\in \mathbb{R}%
^{s}:|\alpha _{0J}-\alpha _{J}|_{2}\leq C_{\epsilon}k_{n}\}$, which is also a global minimizer due to the convexity. $\hfill $
\end{proof}

On $\mathbb{R}^{2p}$, recall that 
\begin{equation*}
R_n(\tau, \alpha)=\frac{1}{n}\sum_{i=1}^n\rho(Y_i, X_i(\tau)^T\alpha).
\end{equation*}

For $\bar\alpha_J=(\bar\beta_{J(\beta_0)},\bar\delta_{J(\delta_0)})\equiv (\bar \beta_J,\bar\delta_J)$ in
the previous lemma, define 
\begin{equation*}
\bar\alpha=(\bar\beta_{J}^T,0^T, \bar\delta_{J}^T,0^T)^T.
\end{equation*}
Without introducing confusions, we also write $\bar\alpha=(\bar%
\alpha_J,0)$ for notational simplicity. This notation indicates that $%
\bar\alpha$ has zero entries on the indices outside the oracle index set $J(\alpha_0)$. We prove the following lemma.

\begin{lem}
\label{la.2} With probability approaching one, there is a random
neighborhood of $\bar\alpha$ in $\mathbb{R}^{2p}$, denoted by $%
\mathcal{H}$, so that $\forall\alpha=(\alpha_J,\alpha_{J^c})\in \mathcal{H}$%
, if $\alpha_{J^c}\neq0$, we have $\widetilde S_n(\alpha_J,0)<\widetilde
Q_n(\alpha).$
\end{lem}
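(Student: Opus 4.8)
The plan is to prove that, on a small neighborhood $\mathcal{H}$ of $\bar{\alpha}$, moving the $J^c$-block away from the origin strictly raises the Step~3b criterion; since that criterion is convex, this forces every local (hence global) minimizer lying in $\mathcal{H}$ to have a vanishing $J^c$-block, which is exactly what is needed to conclude $\widetilde{\alpha}_{J^c}=0$. Recall that $\widetilde{Q}_n$ (equivalently $\widetilde{S}_n$) is the full Step~3b penalized criterion $R_n(\alpha,\widehat{\tau})+\mu_n\sum_{j=1}^{2p}w_j\widehat{D}_j|\alpha_j|$, so that $\widetilde{S}_n(\alpha_J,0)=\bar{Q}_n(\alpha_J)$ with $\widehat{D}_j\equiv D_j(\widehat{\tau})$. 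Fixing $\alpha=(\alpha_J,\alpha_{J^c})\in\mathcal{H}$ with $\alpha_{J^c}\neq 0$ and comparing it against its projection $(\alpha_J,0)$, which keeps the same $\alpha_J$, gives the decomposition
\[
\widetilde{Q}_n(\alpha)-\widetilde{S}_n(\alpha_J,0)=\big[R_n(\alpha,\widehat{\tau})-R_n((\alpha_J,0),\widehat{\tau})\big]+\mu_n\sum_{j\in J^c}w_j\widehat{D}_j|\alpha_j|.
\]
By convexity of the map $u\mapsto R_n((\alpha_J,u),\widehat{\tau})$, the loss difference is bounded below by $g(\alpha_J)^{T}\alpha_{J^c}\ge-\big(\max_{j\in J^c}|g_j(\alpha_J)|\big)|\alpha_{J^c}|_1$, where $g_j(\alpha_J)=n^{-1}\sum_{i=1}^{n}X_{ij}(\widehat{\tau})\big(1\{Y_i\le X_{iJ}(\widehat{\tau})^{T}\alpha_J\}-\gamma\big)$ is the $j$-th coordinate of a subgradient of the quantile loss at $(\alpha_J,0)$.

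Next I would control the two pieces separately, working on the event $\{\widehat{\tau}\in\mathcal{T}_n\}$ of \eqref{Tn}, which has probability tending to one by Theorem~\ref{thm:tau-2nd}. For the penalty, note that $\alpha_{0j}=0$ for $j\in J^c$, so the Step~3a rate $|\widehat{\alpha}-\alpha_0|_1=O_P(\omega_n s)$ of Theorem~\ref{th2.3-improved} together with the separation \eqref{e3.2} gives $\max_{j\in J^c}|\widehat{\alpha}_j|=o_P(\mu_n)$; the SCAD-weight formula then yields $w_j=1$ for every $j\in J^c$ w.p.a.1. Combined with $\widehat{D}_j\ge\underline{D}$ (Assumption~\ref{a:setting}\ref{a:setting:itm3}), the penalty is at least $\mu_n\underline{D}|\alpha_{J^c}|_1$, so it suffices to establish
\[
\max_{j\in J^c}\ \sup_{\alpha_J\in\mathcal{H}_J}|g_j(\alpha_J)|<\mu_n\underline{D}\quad\text{w.p.a.1},
\]
where $\mathcal{H}_J$ is a shrinking ball around $\bar{\alpha}_J$, hence around $\alpha_{0J}$ by Lemma~\ref{la.1}. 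Granting this, the displayed difference is at least $\big(\mu_n\underline{D}-\max_{j\in J^c}\sup_{\alpha_J}|g_j|\big)|\alpha_{J^c}|_1>0$ whenever $\alpha_{J^c}\neq 0$, and taking $\mathcal{H}$ to be the ball on which all bounds hold proves the lemma.

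The main obstacle is the uniform score bound in the last display. I would center $g_j$ at its population version $m_j(\widehat{\tau},(\alpha_J,0))$ and exploit $m_j(\tau_0,\alpha_0)=0$. The deterministic part is handled by the score-continuity conditions of Assumption~\ref{ass3.2}\ref{ass3.2:itm1}--\ref{ass3.2:itm2}: $|m_j(\widehat{\tau},(\alpha_J,0))|\le M_n|\widehat{\tau}-\tau_0|+L|\alpha_J-\alpha_{0J}|_1$, which is $O_P(s^{1/2}n^{-1}+s\sqrt{\log s/n})$ using $M_n=Cs^{1/2}$, $|\widehat{\tau}-\tau_0|=O_P(n^{-1})$, and the $\ell_1$-rate of Lemma~\ref{la.1}. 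The stochastic part $\max_{j\in J^c}\sup_{\alpha_J,\tau}|g_j-m_j|$ is an empirical-process term over the non-smooth quantile score, the ball $\mathcal{H}_J$, and $\tau\in\mathcal{T}_n$; by symmetrization and contraction followed by a Bernstein-type maximal inequality, exactly as in the proof of Lemma~\ref{lem-emp}, it is $O_P(\sqrt{\log(np)/n})$. Both contributions are $o_P(\mu_n)$ under \eqref{e3.2}, since $\mu_n$ dominates $\omega_n\gtrsim\sqrt{\log(p\vee n)/n}$ and $s\sqrt{\log s/n}$. The delicate point, where the machinery of Lemma~\ref{lem-emp} must be re-run, is precisely this simultaneous control over all $J^c$ coordinates of the subgradient of a \emph{non-differentiable} loss while $\widehat{\tau}$ and $\alpha_J$ vary; in particular one must argue that the kink of the check function is negligible (the conditional density of $Y$ is bounded, so ties contribute nothing) and that the class $\{X_{ij}(\tau)\,1\{Y_i\le\cdot\}:\tau\in\mathcal{T}_n,\ j\le 2p\}$ has the Vapnik--Cervonenkis (VC) structure required by the maximal inequality.
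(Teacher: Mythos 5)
Your proposal is correct in its overall architecture and reaches the lemma by the same basic comparison — projecting $\alpha$ onto $(\alpha_J,0)$, showing the induced change in the empirical loss is $o_P(\mu_n)\,|\alpha_{J^c}|_1$, and letting the penalty term $\mu_n\underline{D}\,|\alpha_{J^c}|_1$ (with $w_j=1$ on $J^c$ w.p.a.1) dominate — but it handles the loss difference by a genuinely different route. You lower-bound $R_n(\alpha,\widehat\tau)-R_n((\alpha_J,0),\widehat\tau)$ via convexity and a subgradient of the \emph{empirical} check loss, which forces you to control $\max_{j\in J^c}\sup_{\alpha_J,\tau}|g_j(\alpha_J)|$, i.e.\ an empirical process indexed by the non-smooth score $X_{ij}(\tau)\bigl(1\{Y_i\le X_{iJ}(\tau)^T\alpha_J\}-\gamma\bigr)$. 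The paper instead writes $R_n(\tau,\alpha_J,0)-R_n(\tau,\alpha)\le \bigl[\mathbb{E}R_n(\tau,\alpha_J,0)-\mathbb{E}R_n(\tau,\alpha)\bigr]+|\nu_n(\alpha_J,0,\tau)-\nu_n(\alpha,\tau)|$, applies the mean value theorem to the \emph{population} risk (which Assumption~\ref{ass3.2} makes smooth, with $m_j(\tau_0,\alpha_0)=0$ and the two Lipschitz conditions playing exactly the role you assign them), and bounds the fluctuation term by symmetrization plus the contraction principle applied to the Lipschitz loss \emph{differences}, which collapses everything to the already-controlled linear Rademacher process $\sup_\tau\max_j|n^{-1}\sum_i\epsilon_iX_{ij}(\tau)|$ from the proof of Theorem~\ref{l2.1}. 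The paper's route buys a cheap uniform bound with no new empirical-process work; your route is more self-contained conceptually (no smoothness of $\mathbb{E}\rho$ is invoked for the lower bound itself) but pays for it at the step you flag as delicate. One caution there: your phrase that the stochastic part follows ``exactly as in the proof of Lemma~\ref{lem-emp}'' is not quite right — the contraction principle used in that lemma applies to Lipschitz transforms of the linear index and does \emph{not} apply to the indicator $1\{Y_i\le X_{iJ}(\tau)^T\alpha_J\}$, so you genuinely need the VC/bracketing maximal inequality you mention in your last sentence (in the spirit of the Theorem 2.14.1 arguments the paper uses elsewhere, e.g.\ in Lemma~\ref{lem-AI}), not a rerun of Lemma~\ref{lem-emp}. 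With that step carried out, your argument goes through and yields the same conclusion.
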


\begin{proof}[Proof of Lemma \ref{la.2}]
Define an $l_{2}$-ball, for $r_{n} \equiv \mu _{n}/\log n,$ 
\begin{equation*}
\mathcal{H}=\{\alpha \in \mathbb{R}^{2p}:|\alpha -\bar{\alpha }%
|_{2}<r_{n}/(2p)\}.
\end{equation*}%
Then $\sup_{\alpha \in \mathcal{H}}|\alpha -\bar{\alpha }%
|_{1}=\sup_{\alpha \in \mathcal{H}}\sum_{l\leq 2p}|\alpha _{l}-\bar{%
\alpha }_{l}|<r_{n}.$ Consider any $\tau \in \mathcal{T}_{n}$. For any $%
\alpha =(\alpha _{J},\alpha _{J^{c}})\in \mathcal{H}$, write 
\begin{eqnarray*}
&&R_n(\tau ,\alpha _{J},0)-R_n(\tau ,\alpha ) \\
&&= R_n(\tau ,\alpha
_{J},0)-\mathbb{E}R_n(\tau ,\alpha _{J},0)+\mathbb{E}R_n(\tau ,\alpha _{J},0)-R_n(\tau
,\alpha )+\mathbb{E}R_n(\tau ,\alpha )-\mathbb{E}R_n(\tau ,\alpha )\cr
&&\leq \mathbb{E}R_n(\tau
,\alpha _{J},0)-\mathbb{E}R_n(\tau ,\alpha )+|R_n(\tau ,\alpha
_{J},0)-\mathbb{E}R_n(\tau ,\alpha _{J},0)+\mathbb{E}R_n(\tau ,\alpha )-R_n(\tau ,\alpha
)|\cr
&&\leq \mathbb{E}R_n(\tau ,\alpha _{J},0)-\mathbb{E}R_n(\tau ,\alpha )+|\nu _{n}(\alpha _{J},0,\tau)-\nu _{n}(\alpha,\tau )|.
\end{eqnarray*}

Note that $|(\alpha_J,0)-\bar\alpha|_2^2=|\alpha_J-\bar\alpha_J|_2^2\leq
|\alpha_J-\bar\alpha_J|_2^2+|\alpha_{J^c}-0|_2^2=|\alpha-\bar\alpha|_2^2$. Hence $\alpha\in\mathcal{H}$ implies $(\alpha_J,0)\in\mathcal{H%
}$. In addition, by definition of $\bar\alpha=(\bar\alpha_J,0)$
and $|\bar\alpha_J-\alpha_{0J}|_2=O_P(\sqrt{\frac{s\log s}{n}})$ (Lemma \ref{la.1}), we
have $|\bar\alpha-\alpha_0|_1=O_P(s\sqrt{\frac{\log s}{n}})$, which
also implies 
\begin{equation*}
\sup_{\alpha\in\mathcal{H}}|\alpha-\alpha_0|_1=O_P
\left( s\sqrt{\frac{\log s}{n}} \right)+r_n,
\end{equation*}
where the randomness in $\sup_{\alpha\in\mathcal{H}}|\alpha-\alpha_0|_1$
comes from that of $\mathcal{H}$.

By the mean value theorem, there is $h$ in the segment between $\alpha$ and $%
(\alpha_J,0)$, 
\begin{eqnarray*}
\mathbb{E}R_n(\tau, \alpha_J,0)-\mathbb{E}R_n(\tau,\alpha)&=&\mathbb{E}\rho(Y,
X_J(\tau)^T\alpha_J)-\mathbb{E}\rho(Y,X_J(\tau)^T\alpha_J+X_{J^c}(\tau)^T\alpha_{J^c})%
\cr & =&-\sum_{j\notin J(\alpha_0)}\frac{\partial \mathbb{E}\rho(Y, X(\tau)^Th)}{\partial
\alpha_j}\alpha_j\equiv \sum_{j\notin J(\alpha_0)}m_j(\tau, h)\alpha_j
\end{eqnarray*}
where $m_j(\tau, h)=-\frac{\partial \mathbb{E}\rho(Y, X(\tau)^Th)}{\partial \alpha_j}%
. $ Hence, $\mathbb{E}R_n(\tau, \alpha_J,0)-\mathbb{E}R_n(\tau, \alpha)\leq \sum_{j\notin
J}|m_j(\tau, h)||\alpha_j|. $

Because $h$ is on the segment between $\alpha $ and $(\alpha _{J},0)$, so $%
h\in \mathcal{H}$. So for all $j\notin J(\alpha_0)$, 
\begin{equation*}
|m_{j}(\tau ,h)|\leq \sup_{\alpha \in \mathcal{H}}|m_{j}(\tau ,\alpha )|\leq
\sup_{\alpha \in \mathcal{H}}|m_{j}(\tau ,\alpha )-m_{j}(\tau ,\alpha
_{0})|+|m_{j}(\tau ,\alpha _{0})-m_{j}(\tau _{0},\alpha _{0})|.
\end{equation*}
We now argue that  we can apply Assumption \ref{ass3.2}  \ref{ass3.2:itm2}. Let 
$$c_{n} \equiv s\sqrt{\left( \log s\right) /n}+r_{n}.$$ 
For any $\epsilon >0$, there is $C_{\epsilon}>0$, with probability at last $%
1-\epsilon$,  $\sup_{\alpha \in 
\mathcal{H}}|\alpha -\alpha _{0}|_{1}\leq C_{\epsilon}c_n$. $\forall \alpha\in\mathcal{H}$,  write $\alpha=(\beta, \delta)$ and $\theta=\beta+\delta$. On the event $|\alpha-\alpha_0|_1\leq C_{\epsilon} c_n$,   we have $|\beta-\beta_0|_1\leq C_{\epsilon} c_n$ and $|\theta-\theta_0|_1\leq C_{\epsilon} c_n$. Hence $\mathbb{E}[(X^T(\beta-\beta_0))^2 1\{Q\leq\tau_0\}]\leq |\beta-\beta_0|_1^2\max_{i,j\leq p}E|X_iX_j|<r^2$, yielding $\beta\in\mathcal{B}(\beta_0, r)$. Similarly, $\theta\in\mathcal{G}(\theta_0, r)$. Therefore, by Assumption \ref{ass3.2} \ref{ass3.2:itm2}, with probability at least $1-\epsilon$,  (note that neither $C_{\epsilon}, L$ nor $c_n$ depend on $\alpha$)
\begin{align*}
\max_{j\notin J(\alpha_0)}\sup_{\tau \in \mathcal{T}_{n}}\sup_{\alpha \in \mathcal{H}}|m_{j}(\tau ,\alpha )-m_{j}(\tau ,\alpha _{0})| & \leq L\sup_{\alpha \in 
\mathcal{H}}|\alpha -\alpha _{0}|_{1}\leq L(C_{\epsilon}c_n),\\
\max_{j\leq 2p}\sup_{\tau \in \mathcal{T}_{n}}|m_{j}(\tau
,\alpha _{0})-m_{j}(\tau _{0},\alpha _{0})| &\leq 
M_{n}n^{-1} \log n.
\end{align*} 
In particular, when $\delta_0=0$, $m_j(\tau,\alpha_0)=0$ for all $\tau$.
Therefore, when $\delta_0\neq 0$, $$\sup_{j\notin J(\alpha_0)}\sup_{\tau \in \mathcal{T}_{n}}|m_{j}(\tau
,h)|=O_P(c_n+M_{n}n^{-1} \log
n)=o_P(\mu _{n});$$
when $\delta_0=0$,  $\sup_{j\notin J(\alpha_0)}\sup_{\tau \in \mathcal{T}}|m_{j}(\tau
,h)|=O_P(c_n)=o_P(\mu _{n})$.

Let  $\epsilon_1,...,\epsilon_n$  be  a  Rademacher  sequence  independent of $\{Y_i, X_i, Q_i\}_{i\leq n}$. Then by the  symmetrization and contraction theorems, 
\begin{eqnarray*}
 && \mathbb{E} \left( \sup_{\tau \in \mathcal{T}} |\nu _{n}(\alpha _{J},0, \tau)-\nu _{n}(\alpha,\tau)| \right) \cr
&&\leq 2\mathbb{E} \left( \sup_{\tau \in \mathcal{T}} 
\left|\frac{1}{n}\sum_{i=1}^{n}\epsilon _{i}[\rho
(Y_{i},X_{iJ}(\tau )^{T}\alpha _{J})-\rho (Y_{i},X_{i}(\tau )^{T}\alpha)]\right| \right) \\
&&\leq 4L \mathbb{E} \left( \sup_{\tau \in \mathcal{T}} \left| \frac{1}{n}\sum_{i=1}^{n}\epsilon_{i}[X_{iJ}(\tau )^{T} \alpha _{J}-X_i(\tau)^T\alpha]\right| \right)\cr
&&\leq 4L \mathbb{E} \left( \sup_{\tau \in \mathcal{T}} \left\| \frac{1}{n}\sum_{i=1}^{n}\epsilon_{i}X_{i}(\tau )\right\|_{\max} \right)\sum_{j\notin J(\alpha_0)}|\alpha_j|\leq 2\omega_n\sum_{j\notin J(\alpha_0)}|\alpha_j|,
\end{eqnarray*}%
where the last equality follows from (\ref{eqa.5}).

  Thus uniformly over $\alpha\in\mathcal{H}$, 
  $
R_n(\tau ,\alpha _{J},0)-R_n(\tau ,\alpha )=o_P(\mu _{n})\sum_{j\notin
J(\alpha_0)}|\alpha _{j}|.
$ 
On the other hand, $$\sum_{j\in J(\alpha_0)}w_{j}\mu _{n}\widehat{D}_{j}|\alpha
_{j}|-\sum_{j}w_{j}\mu _{n}\widehat{D}_{j}|\alpha _{j}|=\sum_{j\notin J(\alpha_0)}\mu
_{n}w_{j}\widehat{D}_{j}|\alpha _{j}|.$$
 Also, w.p.a.1, $%
w_{j}=1$ and $\widehat{D}_{j}\geq \overline{D}$ for all $j\notin J(\alpha_0)$. Hence with
probability approaching one, $\widetilde{Q}_{n}(\alpha _{J},0)-\widetilde{Q}%
_{n}(\alpha )$ equals 
\begin{equation*}
R_n(\widehat{\tau},\alpha _{J},0)+\sum_{j\in J(\alpha_0)}\widehat{D}_{j}w_{j}\lambda
_{n}|\alpha _{j}|-R_n(\widehat{\tau},\alpha )-\sum_{j\leq 2p}\widehat{D}%
_{j}w_{j}\omega_n|\alpha _{j}|\leq -\underline{D}\frac{\mu _{n}}{2}%
\sum_{j\notin J(\alpha_0)}|\alpha _{j}|<0.  \text{\qed}
\end{equation*}

\noindent
\textbf{Proof of Theorem \ref{th3.1}.}
Conditions in Lemmas \ref{la.1} and \ref{la.2} are expressed in terms of $M_n$. 
By Lemma \ref{l4.1}, we verify that 
in quantile  regression models, $M_n=Cs^{1/2}$ for some $C>0$. 
Then all the required conditions in Lemmas \ref{la.1} and \ref{la.2}
are satisfied by the conditions imposed in Theorem \ref{th3.1}.

By Lemmas \ref{la.1} and \ref{la.2},  w.p.a.1,  for any $\alpha =(\alpha _{J},\alpha
_{J^{c}})\in \mathcal{H}$, 
\begin{equation*}
\widetilde{S}_{n}(\bar \alpha_J, 0)=\bar Q_n(\bar\alpha_J)\leq \bar Q_n(\alpha_J)=\widetilde S_n(\alpha_J, 0)\leq \widetilde S_n(\alpha).
\end{equation*}%
Hence $(\bar\alpha_J,0)$ is a local minimizer of $\widetilde S_n$, which is also a global minimizer due to the convexity. This implies that w.p.a.1,  $\widetilde\alpha=(\widetilde\alpha_J,\widetilde\alpha_{J^c})$ satisfies: $\widetilde\alpha_{J^c}=0$, and  $\widetilde\alpha_J=\bar\alpha_J$, so 
\begin{equation*}
\left\vert \widetilde{\alpha }_{J}-\alpha _{0J}\right\vert _{2}=O_P \left( \sqrt{%
\frac{s\log s}{n}} \; \right),\quad \left\vert \widetilde{\alpha }_{J}-\alpha _{0J}\right\vert _{1}=O_P \left( s\sqrt{%
\frac{\log s}{n}} \; \right).
\end{equation*}
Finally, by (\ref{ec.47}), and that $R(\alpha_0,\widehat\tau)\leq Cs|\widehat\tau-\tau_0|=O_P(sn^{-1})$,
$$
R(\widetilde\alpha,\widehat\tau) \leq 2R(\alpha_0,\widehat\tau)+ 3\mu_n\bar D|\tilde\alpha-\alpha_0|_1=O_P(sn^{-1}+\mu_ns\sqrt{\frac{\log s}{n}})=O_P(\mu_ns\sqrt{\frac{\log s}{n}}).
$$

\end{proof}

\subsection{Proof of Theorem \ref{thm-AI}}

Recall that by Theorems \ref{thm:tau-2nd} and  \ref{th3.1}, we have 
\begin{equation}\label{local-nbd}
\left\vert \widetilde{\alpha }_{J}-\alpha _{0J}\right\vert _{2}=O_P \left( \sqrt{%
\frac{s\log s}{n}} \; \right) \ \ \text{ and } \ \
|\widehat{\tau}-\tau _{0}|=O_P (n^{-1} ),
\end{equation}
and the set of regressors with nonzero coefficients  is   recovered w.p.a.1. Hence we can restrict ourselves on the oracle space $J(\alpha_0)$. 
In view of \eqref{local-nbd}, 
define
$r_{n}\equiv\sqrt{n^{-1}s\log s}$ and $s_{n}$.
Let%
\[
R_{n}^{\ast}\left(  \alpha_J,\tau\right)  \equiv\frac{1}{n}\sum_{i=1}^{n}\rho\left(
Y_{i},X_{iJ}(\tau)^{T}\alpha _{J}  \right)  ,
\]
where $\alpha_J \in\mathcal{A}_n \equiv \left\{  \alpha_J :\left\vert \alpha_J -\alpha_{0J}\right\vert _{2}\leq
Kr_{n}\right\}  \subset\mathbb{R}^{s}$ and $\tau\in\mathcal{T}_n \equiv\left\{ \tau: 
\left\vert \tau-\tau_{0}\right\vert \leq Ks_{n}\right\}  $ for some $K<\infty$, where  $K$ is a generic finite constant.

The following lemma is useful to establish that  $\alpha_{0}$ can be estimated  as if $\tau_0$ were known.

\begin{lem}[Asymptotic Equivalence]\label{lem-AI}
Assume that $\frac{\partial}{\partial\alpha}E\left[
\rho\left(  Y,X^{T}\alpha\right)  |Q=t\right]  $ exists for all $t$ in a
neighborhood of $\tau_{0}$ and all its elements are continuous and bounded.  
Suppose that  $s^{3} (\log s) (\log n) =o\left(  n \right)$. 
Then 
\[
\sup_{\alpha_J \in \mathcal{A}_n,\tau \in \mathcal{T}_n}\left\vert \left\{ R_{n}^{\ast}\left(  \alpha_J,\tau\right)
-R_{n}^{\ast}\left(  \alpha_J,\tau_{0}\right) \right\} - \left\{  R_{n}^{\ast}\left(
\alpha_{0J},\tau\right)  -R_{n}^{\ast}\left(  \alpha_{0J},\tau_{0}\right)
\right\}  \right\vert =o_{P}\left(  n^{-1} \right).
\]
\end{lem}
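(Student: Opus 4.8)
The plan is to work directly with the \emph{double difference}
\begin{equation*}
D_{n}(\alpha_{J},\tau)\equiv\left\{ R_{n}^{\ast}(\alpha_{J},\tau)-R_{n}^{\ast}(\alpha_{J},\tau_{0})\right\} -\left\{ R_{n}^{\ast}(\alpha_{0J},\tau)-R_{n}^{\ast}(\alpha_{0J},\tau_{0})\right\},
\end{equation*}
and to show $\sup_{\alpha_{J}\in\mathcal{A}_{n},\,\tau\in\mathcal{T}_{n}}|D_{n}(\alpha_{J},\tau)|=o_{P}(n^{-1})$ by exploiting that the increment $R_{n}^{\ast}(\cdot,\tau)-R_{n}^{\ast}(\cdot,\tau_{0})$ is supported only on the shrinking window of observations whose $Q_{i}$ lies between $\tau$ and $\tau_{0}$. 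This is the key step behind Theorem \ref{thm-AI}: it says that, viewed as a function of $\alpha_{J}$, the criterion $R_{n}^{\ast}(\alpha_{J},\widehat{\tau})$ differs from $R_{n}^{\ast}(\alpha_{J},\tau_{0})$ only by an $\alpha_{J}$-free term up to $o_{P}(n^{-1})$, so that the two minimizers over $\mathcal{A}_{n}$ agree to first order. Recall from \eqref{local-nbd} that $|\widehat{\tau}-\tau_{0}|=O_{P}(n^{-1})$, so throughout I take $s_{n}\asymp n^{-1}$.

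First I would treat $\tau>\tau_{0}$ (the case $\tau<\tau_{0}$ is symmetric). For such $\tau$ only observations with $\tau_{0}<Q_{i}\leq\tau$ contribute, since otherwise $X_{iJ}(\tau)^{T}\alpha_{J}=X_{iJ}(\tau_{0})^{T}\alpha_{J}$. Writing $g_{i}(\alpha_{J})\equiv\rho(Y_{i},X_{iJ(\beta_{0})}^{T}\beta_{J})-\rho(Y_{i},X_{iJ(\beta_{0})}^{T}\beta_{J}+X_{iJ(\delta_{0})}^{T}\delta_{J})$, one obtains
\begin{equation*}
D_{n}(\alpha_{J},\tau)=\frac{1}{n}\sum_{i=1}^{n}\left[g_{i}(\alpha_{J})-g_{i}(\alpha_{0J})\right]1\{\tau_{0}<Q_{i}\leq\tau\}.
\end{equation*}
Splitting $g_{i}(\alpha_{J})-g_{i}(\alpha_{0J})$ into a $\beta$-part and a $\theta$-part and applying the Lipschitz property of $\rho(Y_{i},\cdot)$ (Assumption \ref{a:obj-ftn} \ref{a:setting:L-cont}, and its $\delta_{0}=0$ analogue Assumption \ref{ass2.2-delta0} \ref{ass2.2-delta0:L-cont}) followed by the Cauchy--Schwarz inequality gives, uniformly in $\alpha_{J}$ and $\tau$,
\begin{equation*}
|g_{i}(\alpha_{J})-g_{i}(\alpha_{0J})|\leq C\, V_{i}\,|\alpha_{J}-\alpha_{0J}|_{2},\qquad V_{i}\equiv|X_{iJ(\beta_{0})}|_{2}+|X_{iJ(\delta_{0})}|_{2},
\end{equation*}
for an absolute constant $C$; crucially this bound does not depend on $\tau$.

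Next I would dispose of the two suprema. Over $\alpha_{J}\in\mathcal{A}_{n}$ the factor $|\alpha_{J}-\alpha_{0J}|_{2}\leq Kr_{n}$, and over $\tau$ the contributing index set is monotone in $\tau$, so
\begin{equation*}
\sup_{\alpha_{J}\in\mathcal{A}_{n}}\ \sup_{\tau_{0}<\tau\leq\tau_{0}+Ks_{n}}|D_{n}(\alpha_{J},\tau)|\leq\frac{CKr_{n}}{n}\,S_{n},\qquad S_{n}\equiv\sum_{i=1}^{n}V_{i}\,1\{\tau_{0}<Q_{i}\leq\tau_{0}+Ks_{n}\}.
\end{equation*}
Conditioning on $Q_{i}$ and using that $\mathbb{E}[X_{ij}^{2}\mid Q=q]$ is bounded uniformly in $j$ and in $q$ near $\tau_{0}$ (Assumption \ref{a:setting} \ref{a:setting:itm2}) yields $\mathbb{E}[V_{i}\mid Q=q]=O(\sqrt{s})$, while $\mathbb{P}(\tau_{0}<Q_{i}\leq\tau_{0}+Ks_{n})\leq K_{2}Ks_{n}$ (Assumption \ref{a:setting} \ref{a:dist-Q:itm1}). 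With $s_{n}\asymp n^{-1}$ this gives $\mathbb{E}S_{n}=O(n\sqrt{s}\,s_{n})=O(\sqrt{s})$ and hence $S_{n}=O_{P}(\sqrt{s})$ by Markov's inequality. Combining,
\begin{equation*}
\sup_{\mathcal{A}_{n}\times\mathcal{T}_{n}}|D_{n}|=O_{P}\!\left(n^{-1}r_{n}\sqrt{s}\right)=O_{P}\!\left(n^{-3/2}s\sqrt{\log s}\right)=o_{P}(n^{-1}),
\end{equation*}
where the last step uses $s^{2}\log s=o(n)$, which is implied by the assumed $s^{3}(\log s)(\log n)=o(n)$.

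The main obstacle, and the only genuinely probabilistic step, is controlling the random, $\tau$-indexed count $S_{n}$: each of the $O_{P}(1)$ observations falling in the $O(n^{-1})$-window carries a covariate norm $V_{i}$ with conditional mean of order $\sqrt{s}$, so the count must be paired against the $\ell_{2}$-localization radius $r_{n}$ of $\alpha_{J}$; it is exactly the product $r_{n}\sqrt{s}=s\sqrt{(\log s)/n}$ (rather than an $\ell_{1}$-radius) that makes the bound small. I expect the delicate points to be confirming the uniform-in-$q$ boundedness of $\mathbb{E}[X_{ij}^{2}\mid Q=q]$ in a neighborhood of $\tau_{0}$ and the monotonicity reduction of the $\tau$-supremum, both handled above. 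As an alternative, one could center $D_{n}$ and bound the resulting empirical process by symmetrization and a Bernstein/maximal inequality exactly as in Lemma \ref{lem:ortho}; that route works too but introduces superfluous $\log(np)$ factors, which is presumably why the stated moment condition is comfortably stronger than what the direct argument needs.
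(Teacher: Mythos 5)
Your proof is correct, but it takes a genuinely different route from the paper's. The paper centers the double difference and splits it as $D_n = (D_n - \mathbb{E}D_n) + \mathbb{E}D_n$: the fluctuation is controlled by a bracketing-entropy maximal inequality (Theorem 2.14.2 of van der Vaart and Wellner) over the class indexed by $(\beta_J,\tau)$, which costs an extra $\sqrt{s\log n}$ entropy factor, and the mean is controlled separately using the assumed differentiability of $t\mapsto\frac{\partial}{\partial\alpha}\mathbb{E}[\rho(Y,X^{T}\alpha)|Q=t]$; together these yield $o(n^{-1})$ exactly under $s^{3}(\log s)(\log n)=o(n)$. You instead bound $|D_n|$ directly by the triangle inequality, using that the pointwise Lipschitz--Cauchy--Schwarz bound $|g_i(\alpha_J)-g_i(\alpha_{0J})|\le C V_i|\alpha_J-\alpha_{0J}|_2$ is uniform in $\alpha_J$ and that the contributing index set is monotone in $\tau$, so both suprema are absorbed into the radius $r_n$ and the largest window; the bound $\mathbb{E}S_n=O(\sqrt{s})$ then follows from Assumption \ref{a:setting}\ref{a:dist-Q:itm1} and the uniform conditional second-moment bound in Assumption \ref{a:setting}\ref{a:setting:itm2}, both of which hold as you claim. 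Your route is more elementary and in fact sharper: it needs only $s^{2}\log s=o(n)$ and never invokes the differentiability hypothesis, because with only $O_P(1)$ observations in the $O(n^{-1})$-window there is no cancellation for the empirical-process machinery to exploit, so taking absolute values inside the sum loses nothing. What the paper's approach buys is uniformity of method (it mirrors Lemma \ref{lem:ortho}) and applicability to settings where the window is not this small and one must exploit centering; here that generality is paid for with the stronger growth condition. Your proof establishes the lemma as stated.
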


This lemma implies that the asymptotic distribution of $ \operatorname*{argmin}_{\alpha_J}R_{n}^{\ast}\left(  \alpha_J,\widehat{\tau}\right)
$  can be characterized by $\widehat{\alpha}^{\ast}_J\equiv\operatorname*{argmin}_{\alpha_J
}R_{n}^{\ast}\left(  \alpha_J,\tau_{0}\right)$.  It then follows immediately from the variable selection
consistency  that the asymptotic distribution of $\widetilde{\alpha}_J$ is
equivalent to that of $\widehat{\alpha}^{\ast}_J$.
Therefore, we have proved the theorem.

\begin{proof}[Proof of Lemma \ref{lem-AI}]
Noting that
\[
\rho\left(  Y_{i},X_{i}^{T}\beta+X_{i}^{T}\delta1\left\{  Q_{i}>\tau\right\}
\right)  =\rho\left(  Y_{i},X_{i}^{T}\beta\right)  1\left\{  Q_{i}\leq
\tau\right\}  +\rho\left(  Y_{i},X_{i}^{T}\beta+X_{i}^{T}\delta\right)
1\left\{  Q_{i}>\tau\right\}  ,
\]
we have, for $\tau>\tau_{0},$%
\begin{align*}
&  D_{n}\left(  \alpha,\tau\right) \\
&  \equiv \left\{ R_{n}\left(  \alpha,\tau\right)  -R_{n}\left(  \alpha
,\tau_{0}\right) \right\}  -\left\{  R_{n}\left(  \alpha_{0},\tau\right)
-R_{n}\left(  \alpha_{0},\tau_{0}\right)  \right\} \\
&  =\frac{1}{n}\sum_{i=1}^{n}\left[  \rho\left(  Y_{i},X_{i}^{T}\beta\right)
-\rho\left(  Y_{i},X_{i}^{T}\beta_{0}\right)  \right]  1\left\{  \tau
_{0}<Q_{i}\leq\tau\right\} \\
&  -\frac{1}{n}\sum_{i=1}^{n}\left[  \rho\left(  Y_{i},X_{i}^{T}\beta
+X_{i}^{T}\delta\right)  -\rho\left(  Y_{i},X_{i}^{T}\beta_{0}+X_{i}^{T}%
\delta_{0}\right)  \right]  1\left\{  \tau_{0}<Q_{i}\leq\tau\right\} \\
&=: D_{n1}\left(  \alpha,\tau\right) - D_{n2}\left(  \alpha,\tau\right).
\end{align*}
To prove this lemma, we consider empirical processes
\[
\mathbb{G}_{nj}\left(  \alpha_J,\tau\right)  \equiv\sqrt{n}\left(  D_{nj}\left(  \alpha_J
,\tau\right)  - \mathbb{E} D_{nj}\left(  \alpha_J,\tau\right)  \right), \ \ (j = 1,2),
\]
and apply the maximal inequality in Theorem 2.14.2 of \cite{VW}. 

First, for $\mathbb{G}_{n1}\left(  \alpha_J,\tau\right)$, we consider  the following class of functions indexed by $(\beta_J,\tau)$:
\[
\mathcal{F}_n \equiv \{ \left(  \rho\left(
Y_{i},X_{iJ}^{T}\beta_J \right)  -\rho\left(  Y_{i},X_{iJ}^{T}\beta_{0J}\right)
\right)  1\left(  \tau_{0}<Q_{i}\leq\tau\right): 
|\beta_J - \beta_{0J}|_2 \leq K r_n  \text{ and } \left\vert \tau-\tau_{0}\right\vert \leq Ks_{n} \}.
\]
Note that the
Lipschitz property of $\rho$ yields that%
\[
\left\vert \rho\left(  Y_{i},X_{iJ}^{T}\beta_J \right)  -\rho\left(  Y_{i}%
,X_{iJ}^{T}\beta_{0J}\right)  \right\vert 1\left\{  \tau_{0}<Q_{i}\leq
\tau\right\}  \leq \left\vert X_{iJ}^{T}\right\vert _{2} |\beta_J - \beta_{0J}|_2 
1\left\{
\left\vert Q_{i}-\tau_{0}\right\vert \leq Ks_{n}\right\}  .
\]
Thus, we let the envelope function be $F_{n}(X_{iJ},Q_i)\equiv\left\vert X_{iJ}\right\vert _{2}Kr_{n}1\left\{  \left\vert
Q_{i}-\tau_{0}\right\vert \leq Ks_{n}\right\}  \ $ and note that its $L_{2}$
norm is $O\left(  \sqrt{s}r_{n}\sqrt{s_{n}}\right).$ \

To compute the bracketing integral 
$$
J_{[]}\left(1,  \mathcal{F}_{n}, L_2 \right) \equiv \int_0^1 
\sqrt{1 + \log N_{[]} \left( \varepsilon \| F_{n} \|_{L_2}, \mathcal{F}_{n},  L_2 \right)} d\varepsilon,
$$
note that 
its $2\varepsilon$ bracketing
number is bounded by the product of the $\varepsilon$ bracketing numbers of two classes
$\mathcal{F}_{n1} \equiv\left\{  \rho\left(  Y_{i},X_{iJ}^{T}\beta_J \right)  -\rho\left(  Y_{i},X_{iJ}%
^{T}\beta_{0}\right) : |\beta_J - \beta_{0J}|_2 \leq K r_n  \right\}  $ and 
$\mathcal{F}_{n2} \equiv\{ 1\left(  \tau_{0}<Q_{i}\leq
\tau\right) : \left\vert \tau-\tau_{0}\right\vert \leq Ks_{n} \}$
by Lemma 9.25 of \cite{Kosorok} since both classes are bounded w.p.a.1 (note that w.p.a.1,
$\left\vert X_{iJ}\right\vert _{2}Kr_{n} < C < \infty$ for some constant $C$). 
That is, 
\begin{align*}
N_{[]} \left( 2\varepsilon \| F_{n} \|_{L_2}, \mathcal{F}_{n},  L_2 \right)
\leq N_{[]} \left( \varepsilon \| F_{n} \|_{L_2}, \mathcal{F}_{n1},  L_2 \right)
 N_{[]} \left( \varepsilon \| F_{n} \|_{L_2}, \mathcal{F}_{n2},  L_2 \right).
\end{align*}

Let $F_{n1}(X_{iJ})\equiv\left\vert X_{iJ}\right\vert _{2}Kr_{n}$ and  $ l_n(X_{iJ}) \equiv \left\vert X_{iJ}\right\vert _{2}  $. Note that by Theorem 2.7.11 of \cite{VW}, the  Lipschitz property of $\rho$ implies that 
\begin{align*}
N_{[]} \left( 2\varepsilon \| l_n \|_{L_2}, \mathcal{F}_{n1},  L_2 \right)
\leq N( \varepsilon, \{\beta_J :  |\beta_J - \beta_{0J}|_2 \leq K r_n \} , |\cdot|_2),
\end{align*}
which in turn implies that, for some constant $C$,
\begin{align*}
N_{[]} \left( \varepsilon \| F_{n} \|_{L_2}, \mathcal{F}_{n1},  L_2 \right)
&\leq N \left( \frac{\varepsilon \| F_{n} \|_{L_2}}{2\| l_{n} \|_{L_2}} , \{\beta_J :  |\beta_J - \beta_{0J}|_2 \leq K r_n \} , |\cdot|_2 \right) \\
&\leq C \left(  \frac{\sqrt{s}}{\varepsilon \sqrt{s_n}}  \right)^{s} 
= C \left(\frac{\sqrt{n s}}{\varepsilon}\right)^s,
\end{align*}
where the last inequality holds since a $\varepsilon $-ball contains a hypercube with side length $\varepsilon / \sqrt{s}$ in the $s$-dimensional Euclidean space. On the other hand, for the second class of functions $\mathcal{F}_{n2}$
with the envelope function $F_{n2}(Q_i)\equiv1\left\{  \left\vert
Q_{i}-\tau_{0}\right\vert \leq Ks_{n}\right\}$, we have that 
\begin{align*}
N_{[]} \left( \varepsilon \| F_{n} \|_{L_2}, \mathcal{F}_{n2},  L_2 \right) 
&\leq C \frac{\sqrt{s_n}}{\varepsilon \| F_{n} \|_{L_2}}
= \frac{C}{\varepsilon \sqrt{s} r_n }
=  \frac{C \sqrt{n}}{\varepsilon s \sqrt{\log s} },
\end{align*}
for some constant $C$. Combining these results together yields that
\begin{align*}
N_{[]} \left( \varepsilon \| F_{n} \|_{L_2}, \mathcal{F}_{n},  L_2 \right)
\leq
\frac{C^2 \sqrt{n}}{\varepsilon s \sqrt{\log s} } \left(\frac{\sqrt{ns}}{\varepsilon }\right)^s
\leq C^2 \varepsilon^{-s-1} n^{(s+1)/2} 
\end{align*}
for all sufficiently large $n$.
Then we have that 
\begin{align*}
J_{[]}\left(1,  \mathcal{F}_{n}, L_2 \right) \leq C^2 (\sqrt{s \log n} + \sqrt{s})
\end{align*}
for all sufficiently large $n$.
Thus, 
by the maximal inequality in Theorem 2.14.2 of \cite{VW},
\begin{align*}
n^{-1/2} \; \mathbb{E} \sup_{\mathcal{A}_n \times \mathcal{T}_n}\left\vert \mathbb{G}_{n1}\left(
\alpha_J,\tau\right)  \right\vert
&\leq O\left[  n^{-1/2}\sqrt{s}r_{n}\sqrt{s_{n}} (\sqrt{s\log n} + \sqrt{s}) \right]  \\
&= O \left[ \frac{s}{n^{3/2}} \sqrt{\log s}  (\sqrt{s \log n} + \sqrt{s}) \right] 
\\
&=o\left(  n^{-1}\right),
\end{align*}
where the last equality follows from the restriction that 
$s^3 (\log s)  (\log n)  =o\left(  n \right)$.
Identical arguments also apply to $\mathbb{G}_{n2}\left(  \alpha_J,\tau\right)$.

Turning to $\mathbb{E} D_{n}\left(  \alpha,\tau\right)  ,$ note that by the condition that  $\frac{\partial}{\partial\alpha}E\left[
\rho\left(  Y,X^{T}\alpha\right)  |Q=t\right]  $ exists for all $t$ in a
neighborhood of $\tau_{0}$ and all its elements are continuous and bounded, we have that for some mean value $\tilde{\beta}_J$ between $\beta_J$ and $\beta_{0J}$, 
\begin{align*}
&  \left\vert \mathbb{E}\left(  \rho\left(  Y,X_J^{T}\beta_J\right)  -\rho\left(
Y,X_J^{T}\beta_{0J}\right)  \right)  1\left\{  \tau_{0}<Q\leq\tau\right\}
\right\vert \\
&  =\left\vert \mathbb{E}\left[  \frac{\partial}{\partial\beta}\mathbb{E}\left[  \rho\left(
Y,X^{T}\tilde{\beta}_J\right)  |Q\right]  1\left\{  \tau_{0}<Q\leq\tau\right\}
\right]  \left(  \beta-\beta_{0}\right)  \right\vert \\
&  
=O\left(  s r_{n} s_{n}\right)  \\
&=
O \left[ \frac{s^{3/2}}{n^{3/2}} \sqrt{\log s}   \right] \\
&=o\left(  n^{-1}\right),
\end{align*}
where the last equality follows  from the restriction that $s^{3} (\log s)  =o\left(  n \right)$.
Since the same holds for the other term in $\mathbb{E}D_{n},$ $\sup\left\vert
\mathbb{E} D_{n}\left(  \alpha,\tau\right)  \right\vert =o\left(  n^{-1}\right)  $ as desired.
\end{proof}

 
\subsection{Proof of Theorem \ref{th3.1-without}}
  By definition, \begin{align*}
\frac{1}{n}\sum_{i=1}^{n}\rho (Y_{i},X_{i}(\widehat{\tau})^{T}\widetilde{%
\alpha })+\mu _{n}|W\widehat{D}\widetilde{\alpha }|_{1}\leq \frac{1}{n}%
\sum_{i=1}^{n}\rho (Y_{i},X_{i}(\widehat \tau)^{T}\alpha _{0})+\mu
_{n}|W\widehat D\alpha _{0}|_{1}.
\end{align*}
where $W=\diag\{w_1,...,w_{2p}\}$.  From this, we obtain the following inequality 
$$
R(\widetilde\alpha,\widehat\tau) +\mu_n|W\widehat D\widetilde\alpha|_1\leq |\nu_n(\alpha_0, \widehat\tau)-\nu_n(\widetilde\alpha, \widehat\tau)|+R(\alpha_0,\widehat\tau)+\mu_n|W\widehat D\alpha_0|_1.
$$
Now applying Lemma \ref{lem-emp} yields, when $\sqrt{\log(np)/n}=o(\mu_n)$ (which is true under the assumption that $\omega_n \ll \mu_n$), we have that w.p.a.1,  
$
|\nu_n(\alpha_0, \widehat\tau)-\nu_n(\widetilde\alpha, \widehat\tau)|\leq \frac{1}{2}\mu_n|\widehat D(\alpha_0-\widetilde\alpha)|_1.
$
Hence on this event, 
 $$
R(\widetilde\alpha,\widehat\tau) +\mu_n|W\widehat D\widetilde\alpha|_1\leq \frac{1}{2}\mu_n|\widehat D(\alpha_0-\widetilde\alpha)|_1+R(\alpha_0,\widehat\tau)+\mu_n|W\widehat D\alpha_0|_1.
$$
Note that $\max_j w_j\leq 1$, so for $\Delta:=\widetilde\alpha-\alpha_0$,
 $$
R(\widetilde\alpha,\widehat\tau) +\mu_n|(W\widehat D\Delta)_{J^c}|_1\leq \frac{3}{2}\mu_n|\widehat D\Delta_J|_1+\frac{1}{2}\mu_n|\widehat D\Delta_{J^c}|_1+R(\alpha_0,\widehat\tau).
$$
By Theorem \ref{th2.3},   $\max_{j\notin J}|\widehat\alpha_j|=O_P(\omega_ns)$.  Hence for any $\epsilon>0$, there is $C>0$,  $\max_{j\notin J}|\widehat\alpha_j|\leq C\omega_ns<\mu_n$ with probability at least $1-\epsilon$. On the event $\max_{j\notin J}|\widehat\alpha_j|\leq C\omega_ns<\mu_n$,  by definition, $w_j=1$ $\forall j\notin J$. Hence on this event, 
 \begin{equation}\label{ec.47add}
R(\widetilde\alpha,\widehat\tau) +\frac{1}{2}\mu_n|(\widehat D\Delta)_{J^c}|_1\leq \frac{3}{2}\mu_n|\widehat D\Delta_J|_1+R(\alpha_0,\widehat\tau).
\end{equation}
 We now consider two cases: (i) $\frac{3}{2}\mu_n|\widehat D\Delta_J|_1\leq R(\alpha_0,\widehat\tau)$; (ii) $\frac{3}{2}\mu_n|\widehat D\Delta_J|_1>R(\alpha_0,\widehat\tau)$.

\textbf{case 1: $\frac{3}{2}\mu_n|\widehat D\Delta_J|_1\leq R(\alpha_0,\widehat\tau)$}

We have: for $C=14\underline{D}^{-1}/3$,
 $
 \mu_n|\Delta|_1\leq CR(\alpha_0,\widehat\tau).
$
If $\widehat\tau>\tau_0$, for $\tau=\widehat\tau$ in the inequalities below,  
\begin{eqnarray*}
&&R(\alpha_0,\widehat\tau) =\mathbb{E}(\rho(Y,X^T\beta_0)-\rho(X^T\theta_0))1\{\tau_0<Q<\tau\}\leq L\mathbb{E}|X^T\delta_0|1\{\tau_0<Q<\tau\}\cr
&&\leq  L|\delta_0|_1 \max_{j\leq p}E|X_j|1\{\tau_0<Q<\tau\}\leq L|\delta_0|_1 \max_{j\leq p}\sup_qE(|X_j||Q=q)P(\tau_0<Q<\tau)\cr
&&\leq Cs(\tau-\tau_0).
\end{eqnarray*}
The case for $\tau\leq\tau_0$ follows from the same argument. Hence 
$
\mu_n|\Delta|_1\leq C|\widehat\tau-\tau_0|s.
$

\textbf{case 2: $\frac{3}{2}\mu_n|\widehat D\Delta_J|_1> R(\alpha_0,\widehat\tau)$}  

Then by the compatibility property, 
 $$
R(\widetilde\alpha,\widehat\tau) +\frac{1}{2}\mu_n|(\widehat D\Delta)_{J^c}|_1\leq 3\mu_n|\widehat D\Delta_J|_1\leq  3\mu_n\bar D\sqrt{s}\|X(\tau_0)^T\Delta\|_2/\sqrt{\phi}.
$$
The same argument as that of Step 5 in the proof of Theorem \protect\ref{th2.3} yields
 $$
 \|X(\tau_0)^T\Delta\|_2^2\leq CR(\widetilde\alpha,\widehat\tau)+C|\widehat\tau-\tau_0|
 $$
 for some generic constant  $C>0$. This implies, for some generic constant $C>0$, 
 $$
 R(\widetilde\alpha,\widehat\tau)^2\leq \mu_n^2sC( R(\widetilde\alpha,\widehat\tau)+|\widehat\tau-\tau_0|).
 $$
  It follows that 
 $
 R(\widetilde\alpha,\widehat\tau)\leq C(\mu_n^2s +|\widehat\tau-\tau_0|), 
 $ and 
 $
 \|X(\tau_0)\Delta\|_2^2\leq C(\mu_n^2s +|\widehat\tau-\tau_0|).
 $
 Hence
 $$
| \Delta|_1^2\leq Cs\|X(\tau_0)\Delta\|_2^2\leq   C(\mu_n^2s^2 +|\widehat\tau-\tau_0|s).
$$
Combining both cases, we reach:
$$
|\widetilde\alpha-\alpha_0|_1^2\leq C(\mu_n^2s^2 +|\widehat\tau-\tau_0|s+\frac{1}{\mu_n^2}|\widehat\tau-\tau_0|^2s^2),
$$
which gives the desired result since the first term $\mu_n^2s^2$ dominates the other two terms.

\textbf{Rate of convergence for $R(\tilde\alpha,\widehat\tau)$}

In the proofs above, we have in fact shown that 
 \begin{equation}\label{ec.47}
R(\widetilde\alpha,\widehat\tau) \leq 2R(\alpha_0,\widehat\tau)+ 3\mu_n\bar D|\tilde\alpha-\alpha_0|_1,
\end{equation}
and when $\delta_0\neq0$,  $R(\alpha_0,\widehat\tau)\leq Cs|\widehat\tau-\tau_0|$. Note that $\widehat\tau-\tau_0=O_P(n^{-1})$. Hence $R(\tilde\alpha,\widehat\tau)=O_P(sn^{-1}+\mu_n^2s)=O_P(\mu_n^2s).$

\subsection{Proof of Theorem \protect\ref{th2.3-delta0}}

If $\delta_0 = 0$, $\tau_0$ is non-identifiable. In this case, we decompose the excess risk  in the following way:
\begin{align}  \label{eqaa2.2-delta0}
\begin{split}
R\left( \alpha, \tau\right) & =\mathbb{E} \left( 
\left[\rho\left( Y,X^T\beta\right)
-\rho\left( Y,X^T\beta _{0}\right) \right] 1\left\{ Q\leq\tau \right\} \right) \\
&\;\;\;+\mathbb{E} \left(  \left[ \rho\left( Y,X^T\theta\right) -\rho\left( Y,X^T\beta_{0}\right)
\right] 1\left\{ Q>\tau \right\} \right).
\end{split} 
\end{align}
We split the proof into three steps.

\noindent
\textbf{Step 1}: 
For any $r>0$, we have that w.p.a.1, 
$\breve\beta \in \mathcal{\tilde{B}}(\beta_0, r, \breve\tau)$
and  $\breve\theta \in \mathcal{\tilde{G}}(\beta_0, r, \breve\tau)$.

\begin{proof}[Proof of Step 1]
As in the proof of Step 1 in the proof of Theorem \protect\ref{th2.3}, Assumption \ref{ass2.2-delta0} \ref{ass2.2-delta0:itm2}  implies that 
\begin{align*}
\mathbb{E} \left[ (X^{T}(\beta -\beta _{0}))^2 1\{Q\leq \tau \} \right] 
\leq \frac{R(\alpha ,\tau)^2}{(\eta^\ast r^\ast)^2}  \vee
\frac{R(\alpha ,\tau)}{\eta^\ast}.
\end{align*}
For any $r>0$, note that $R(\breve\alpha,\breve\tau)=o_P(1)$ implies that the event $R(\breve\alpha,\breve\tau)<r^2$ holds w.p.a.1. 
Therefore, we have shown that $\breve\beta\in\mathcal{\tilde{B}}(\beta_0, r, \breve\tau)$. The other case can be proved similarly.
\end{proof}

\noindent
\textbf{Step 2 }: Suppose that $\delta_0 = 0$. Then
\begin{align}\label{BI1-delta0}
R\left( \breve{\alpha},\breve{\tau}\right) +\frac{1}{2}\kappa
_{n}\left\vert \breve{D}\left( \breve{\alpha}-\alpha _{0}\right) \right\vert
_{1}  \leq 2\kappa _{n}\left\vert \breve{D}\left( \breve{\alpha}%
-\alpha _{0}\right) _{J}\right\vert _{1}  \; \text{w.p.a.1}.
\end{align}%

\begin{proof}
The proof of this step is similar to that of Step 3 in the proof of Theorem \ref{th2.3}.
Since  $(\breve{\alpha},\breve{\tau})$ minimizes the $\ell_1$-penalized objective function in \eqref{eq2.2add}, 
we have that 
\begin{align}\label{delta0-eq1}
\frac{1}{n}\sum_{i=1}^{n}\rho (Y_{i},X_{i}(\breve{\tau})^{T}\breve{%
\alpha })+\kappa _{n}|\breve{D}\breve{\alpha }|_{1}\leq \frac{1}{n}%
\sum_{i=1}^{n}\rho (Y_{i},X_{i}(\breve{\tau})^{T}\alpha _{0})+\kappa
_{n}|\breve{D} \alpha _{0}|_{1}.
\end{align}
When $\delta_0 = 0$,  
$\rho(Y, X(\breve{\tau})^T\alpha_0) =  \rho(Y,X(\tau_0)^T\alpha_0)$.
Using this fact and \eqref{delta0-eq1}, 
we obtain the following inequality 
\begin{align}\label{basic-ineq-delta0}
R(\breve{\alpha },\breve{\tau}) &\leq 
\left[ \nu _{n}(\alpha_{0},\breve{\tau})-\nu _{n}(\breve{\alpha},\breve{\tau })\right] +\kappa
_{n}|\breve{D}\alpha _{0}|_{1}-\kappa _{n}|\breve{D}\breve{\alpha }|_{1}.
\end{align}%

As in Step 3 in the proof of Theorem \ref{th2.3}, we apply 
 Lemma \ref{lem-emp} to 
$[\nu _{n} (\alpha _{0},\breve{\tau})-\nu _{n}(\breve{\alpha },\breve{\tau})]$
 with $a_{n}\ $and $b_{n}$ replaced by $a_{n}/2\ 
$and $b_{n}/2$. Then 
we can rewrite the basic inequality in \eqref{basic-ineq-delta0} by%
\begin{equation*}
\kappa _{n}\left\vert \breve{D} \alpha _{0}\right\vert _{1}\geq R\left( \breve{\alpha},%
\breve{\tau}\right) +\kappa _{n}\left\vert \breve{D}\breve{\alpha}\right\vert
_{1}-\frac{1}{2}\kappa _{n}\left\vert \breve{D}\left( \breve{\alpha}-\alpha
_{0}\right) \right\vert _{1} \; \text{ w.p.a.1}.
\end{equation*}
Now adding $\kappa _{n}\left\vert \breve{D}\left( \breve{\alpha}-\alpha
_{0}\right) \right\vert _{1}$ on both sides of the inequality above
and using the fact that $ \left\vert  \alpha_{0j} \right\vert _{1} - \left\vert \breve{\alpha}_{j} \right\vert
_{1} + \left\vert \left( \breve{\alpha}_{j} -\alpha
_{0j}\right) \right\vert _{1} = 0$ for $j \notin J$,
we have that  w.p.a.1,
\begin{align*}
2\kappa _{n}\left\vert \breve{D}\left( \breve{\alpha}%
-\alpha _{0}\right) _{J}\right\vert _{1}  
& 
\geq R\left( \breve{\alpha},\breve{\tau}\right) +\frac{1}{2}\kappa
_{n}\left\vert \breve{D}\left( \breve{\alpha}-\alpha _{0}\right) \right\vert
_{1}.  
\end{align*}%
Therefore, we have obtained the desired result.
\end{proof}

\noindent
\textbf{Step 3 }: Suppose that $\delta_0 = 0$. Then   
\begin{align*}
R\left( \breve{\alpha},\breve{\tau}\right) =O_P(\kappa_n^2s) \ \ \text{ and } \ \
\left\vert  \breve{\alpha}-\alpha _{0} \right\vert = O_P\left( \kappa _{n}s\right).
\end{align*}

\begin{proof}
By Step 2, 
\begin{equation}\label{aphat-delta0}
4\left\vert \breve{D}\left( \breve{\alpha}-\alpha _{0}\right) _{J}\right\vert
_{1}\geq \left\vert \breve{D}\left( \breve{\alpha}-\alpha _{0}\right)
\right\vert _{1}=\left\vert \breve{D}\left( \breve{\alpha}-\alpha _{0}\right)
_{J}\right\vert _{1}+\left\vert \breve{D}\left( \breve{\alpha}-\alpha _{0}\right)
_{J^{c}}\right\vert _{1},  
\end{equation}%
which enables us to apply the compatibility condition in Assumption \ref{ass2.7}. 

Recall that $\|Z\|_2=(EZ^2)^{1/2}$ for a random variable $Z.$ 
Note that for $s=|J(\alpha_0)|_0$,
\begin{align}\label{a.11add-delta0}
\begin{split}
& R\left( \breve{\alpha},\breve{\tau}\right) +\frac{1}{2}\kappa
_{n}\left\vert \breve{D}\left( \breve{\alpha}-\alpha _{0}\right) \right\vert
_{1} \\
&\leq_{(1)} 2\kappa _{n}\left\vert \breve{D}\left( \breve{\alpha}-\alpha _{0}\right)
_{J}\right\vert _{1} \\
& \leq_{(2)} 2\kappa _{n}\bar D\left\Vert X(\breve{\tau})^T(\breve\alpha-\alpha_0)\right\Vert _{2}\sqrt{s}/\phi  \\
&\leq_{(3)} \frac{4\kappa _{n}^{2}\bar D^{2}s}{2 \tilde{c} \phi ^{2}}+\frac{\tilde{c}}{2}%
\left\Vert X(\breve{\tau})^T(\breve\alpha-\alpha_0)\right\Vert _{2}^{2},
\end{split}
\end{align}
where (1) is from the basic inequality \eqref{BI1-delta0} in Step 2,
(2) is by the compatibility condition (Assumption \ref{ass2.7}), and
(3) is from the
inequality  that $uv\leq v^{2}/(2\tilde{c})+\tilde{c} u^2/2$ for any $\tilde{c} >0$.

Note that
\begin{align}\label{ec.55}
\lefteqn{\left\Vert X(\tau)^T\alpha-X(\tau)^T\alpha_0\right\Vert _{2}^{2} } \cr
&=_{(1)} \mathbb{E}\left[ (X^T(\theta-\beta_0))^21\{Q > \tau\} \right]
+ \mathbb{E}\left[ (X^T(\beta-\beta_0))^21\{Q \leq \tau\} \right] \cr
&\leq_{(2)} (\eta^\ast)^{-1} \mathbb{E}\left[ \left( \rho \left( Y,X^{T}\theta \right) -\rho \left( Y,X^{T}\beta_{0}\right) \right) 
1\left\{ Q>\tau \right\}  \right] \cr
&+  (\eta^\ast)^{-1} \mathbb{E}\left[ \left( \rho \left( Y,X^{T}\beta \right) -\rho \left( Y,X^{T}\beta_{0}\right) \right) 
1\left\{ Q\leq \tau \right\} \right] \cr
&\leq_{(3)} (\eta^\ast)^{-1} R(\alpha, \tau),
\end{align}
where (1) is simply an identity,  (2) from  Assumption \ref{ass2.2-delta0} \ref{ass2.2-delta0:itm2} , and (3) is  
due to \eqref{eqaa2.2-delta0}.
Hence,  (\ref{a.11add-delta0})  with $\tilde{c}=\eta^\ast$ implies that 
\begin{align}\label{oracle-ineq-delta0}
 R\left( \breve{\alpha},\breve{\tau}\right) +\kappa
_{n}\left\vert \breve{D}\left( \breve{\alpha}-\alpha _{0}\right) \right\vert
_{1}\leq \frac{4\kappa _{n}^{2}\bar D^{2}s}{\eta^\ast \phi ^{2}}.
\end{align}
Therefore,  $R\left( \breve{\alpha},\breve{\tau}\right) =O_P(\kappa_n^2s)$.
Also, $\left\vert  \breve{\alpha}-\alpha _{0} \right\vert = O_P\left( \kappa _{n}s\right)$
since $D(\breve\tau)\geq \underline{D}$ w.p.a.1
by Assumption \ref{a:setting} \ref{a:setting:itm3}.
\end{proof}

\subsection{Proof of Theorem  \ref{th4.2}}

  
We first prove part (i) when the minimum signal condition holds. 

When $\tau_0$ is not identifiable ($\delta_0=0$), $\widehat\tau$ obtained in the second-step estimation can be any value in $\mathcal{T}$.  Note that Lemmas \ref{la.1} and \ref{la.2} are stated and proved for this case as well.  Similar to the proof of Theorem \ref{th3.1},   by Lemma \ref{l4.1},  
in quantile  regression models, $M_n=Cs^{1/2}$ for some $C>0$.  Hence all the required conditions in Lemmas \ref{la.1} and \ref{la.2}
are satisfied by the conditions imposed in Theorem  \ref{th4.2}. Then by Lemmas \ref{la.1} and \ref{la.2},  w.p.a.1,  for any $\alpha =(\alpha _{J},\alpha
_{J^{c}})\in \mathcal{H}$, 
\begin{equation*}
\widetilde{S}_{n}(\bar \alpha_J, 0)=\bar Q_n(\bar\alpha_J)\leq \bar Q_n(\alpha_J)=\widetilde S_n(\alpha_J, 0)\leq \widetilde S_n(\alpha).
\end{equation*}%
Hence $(\bar\alpha_J,0)$ is a local minimizer of $\widetilde S_n$, which is also a global minimizer due to the convexity. This implies that w.p.a.1,  $\widetilde\alpha=(\widetilde\alpha_J,\widetilde\alpha_{J^c})$ satisfies: $\widetilde\alpha_{J^c}=0$, and  $\widetilde\alpha_J=\bar\alpha_J$, so 
\begin{equation*}
\left\vert \widetilde{\alpha }_{J}-\alpha _{0J}\right\vert _{2}=O_P \left( \sqrt{%
\frac{s\log s}{n}} \; \right),\quad \left\vert \widetilde{\alpha }_{J}-\alpha _{0J}\right\vert _{1}=O_P \left( s\sqrt{%
\frac{\log s}{n}} \; \right).
\end{equation*}
Also,  note that $R(\alpha_0,\widehat\tau)=0$ when $\delta_0=0$. Hence by (\ref{ec.47}),   
$$
R(\widetilde\alpha,\widehat\tau) \leq 2R(\alpha_0,\widehat\tau)+ 3\mu_n\bar D|\widetilde\alpha-\alpha_0|_1=O_P(\nu_ns\sqrt{\frac{\log s}{n}}).
$$

We now prove part (ii) without the minimum signal condition.  The proof is very similar to that of Theorem \ref{th3.1-without}. Hence we   provide the proof briefly.  In fact (\ref{ec.47}) still holds by the same argument.
But now $R(\alpha_0,\widehat\tau)=0$.
 Hence for $\Delta=\widetilde\alpha-\alpha_0$, $$
R(\widetilde\alpha,\widehat\tau) +\frac{1}{2}\mu_n|(\widehat D\Delta)_{J^c}|_1\leq \frac{3}{2}\mu_n|\widehat D\Delta_J|_1\leq  2\mu_n\bar D\sqrt{s}\|X(\widehat\tau)^T\Delta\|_2/\sqrt{\phi},
$$
where the last inequality follows from Assumption \ref{ass2.7}.
By (\ref{ec.55}), 
$
 \left\Vert X(\widehat\tau)^T\Delta\right\Vert _{2}^{2} \leq C R(\widetilde\alpha, \widehat\tau),
$
for some $C>0$.
  This implies, for some generic constant $C>0$, 
 $
 R(\widetilde\alpha,\widehat\tau)^2\leq \mu_n^2sC R(\widetilde\alpha,\widehat\tau).
 $
  It follows that 
 $$
 R(\widetilde\alpha,\widehat\tau)\leq  \mu_n^2sC,
 $$ and 
 $$
| \Delta|_1^2\leq Cs\|X(\widehat\tau)\Delta\|_2^2\leq   CsR(\widetilde\alpha,\widehat\tau)\leq Cs^2\mu_n^2.
$$
  
\newpage  

\section{Additional Simulation Results: Different $\tau_0$ and distributions of $Q$}\label{sec:add-sim-results-1}


Tables \ref{tb:diff-U}--\ref{tb:diff-chi2} summarize simulation results when the change point $\tau_0$ and the distribution of $Q_i$ vary. We set $\gamma=0.5$, i.e.\ median regression, and $n=200$ for all designs. We consider three different distributions of $Q_i$: Uniform$[0,1]$, $N(0,1)$, and $\chi^2(1)$. The change point parameter $\tau_0$ varies over $0.3,0.4,\ldots,0.7$ quantiles of each $Q_i$ distribution. We can confirm the following two results from these simulation studies. First, the performance of $\widehat{\tau}$ measured by the root-mean-squared error depends on the density of $Q_i$ distribution. For instance, it is quite uniform over different $\tau_0$ when $Q_i$ follows Uniform$[0,1]$. However, when $Q_i$ follows $N(0,1)$ or $\chi^2(1)$, it performs better when $\tau_0$ is located at a point with higher density of $Q_i$ distribution. Second, the mean squared error of $\widehat{\ap}$ and the oracle proportion get better when $\tau_0$ smaller. It might be caused by the simulation design, $X_i\cdot 1(Q_i>\tau_0)$, as it will generate less zeros when $\tau_0$ is smaller and help increase the signal from $X_i$'s.

\begin{table}[htbp]
\centering
\caption{Different $\tau_0$ and $Q_i$ dist.: $Q_i \sim Unif[0,1]$}\label{tb:diff-U}
\scriptsize
\begin{tabular}{llcclcccc}
  \hline
 & & Excess Risk & E[$J(\widehat{\ap})$] & MSE of $\widehat{\ap} ~(\widehat{\ap}_{J_0} / \widehat{\ap}_{J_0^c})$ & Pred. Er. & RMSE of $\widehat{\tau}$ & C. Prob. of $\widehat{\tau}$ & Oracle Prop. \\ 
  \hline
  \multirow{6}{*}{$\tau_0=0.3$} &Oracle 1 & 0.008 &      NA & 0.016 (NA / NA)& 0.282 & NA  & NA  & NA \\ 
  & Oracle 2 & 0.017 &   NA    & 0.016 (NA / NA)  & 0.451 & 0.010 & 0.951 & NA \\ 
  & Step 1 & 0.039   & 5.557 & 0.206 ( 0.182 / 0.024) & 0.697 & 0.011 & 0.950 & 0.026 \\ 
  &Step 2 & 0.040   &   NA    & NA (NA / NA)  & 0.700 & 0.011 & 0.949 & NA \\ 
  &Step 3a & 0.038  & 5.536 & 0.201 ( 0.177 / 0.024) & 0.687 & 0.011 & 0.947 & 0.026 \\ 
  &Step 3b & 0.041  & 2.042 & 0.145 ( 0.134 / 0.011) & 0.717 & 0.012 & 0.924 & 0.475 \\ 
   \\
  \multirow{6}{*}{$\tau_0=0.4$}& Oracle 1 & 0.008 & NA  & 0.014 (NA / NA)& 0.287 & NA  & NA  & NA \\ 
  & Oracle 2 & 0.017 & NA  & 0.014 (NA / NA)& 0.458 & 0.011 & 0.956 & NA \\ 
  &Step 1 & 0.039 & 5.590 & 0.228 ( 0.201 / 0.027) & 0.706 & 0.011 & 0.955 & 0.019 \\ 
  &Step 2 & 0.037 & NA  & NA (NA / NA)& 0.707 & 0.011 & 0.955 & NA \\ 
  &Step 3a & 0.034 & 5.578 & 0.226 ( 0.199 / 0.027) & 0.695 & 0.011 & 0.949 & 0.018 \\ 
  &Step 3b & 0.040 & 2.203 & 0.147 ( 0.131 / 0.017) & 0.704 & 0.011 & 0.933 & 0.492 \\ 
  \\
  \multirow{6}{*}{$\tau_0=0.5$}&Oracle 1 & 0.008 & NA  & 0.012 (NA / NA)& 0.287 & NA  & NA  & NA \\ 
  & Oracle 2 & 0.018 & NA  & 0.012 (NA / NA)& 0.470 & 0.010 & 0.951 & NA \\ 
  &Step 1 & 0.042 & 5.698 & 0.262 ( 0.230 / 0.032) & 0.706 & 0.011 & 0.944 & 0.020 \\ 
  &Step 2 & 0.042 & NA  & NA (NA / NA)& 0.711 & 0.011 & 0.939 & NA \\ 
  &Step 3a & 0.041 & 5.680 & 0.256 ( 0.224 / 0.032) & 0.696 & 0.011 & 0.941 & 0.020 \\ 
  &Step 3b & 0.041 & 2.343 & 0.167 ( 0.142 / 0.025) & 0.714 & 0.011 & 0.931 & 0.443 \\ 
  \\
  \multirow{6}{*}{$\tau_0=0.6$}&Oracle 1 & 0.008 & NA  & 0.013 (NA / NA)& 0.295 & NA  & NA  & NA \\ 
  & Oracle 2 & 0.017 & NA  & 0.013 (NA / NA) & 0.475 & 0.011 & 0.947 & NA \\ 
  &Step 1 & 0.042 & 5.869 & 0.344 ( 0.303 / 0.041) & 0.731 & 0.013 & 0.937 & 0.012 \\ 
  &Step 2 & 0.042 & NA  & NA (NA / NA)& 0.742 & 0.013 & 0.930 & NA \\ 
  &Step 3a & 0.039 & 5.855 & 0.336 ( 0.296 / 0.040) & 0.730 & 0.013 & 0.928 & 0.012 \\ 
  &Step 3b & 0.041 & 2.467 & 0.249 ( 0.204 / 0.046) & 0.734 & 0.012 & 0.923 & 0.382 \\ 
  \\
  \multirow{6}{*}{$\tau_0=0.7$} &Oracle 1 & 0.007 & NA  & 0.012 (NA / NA) & 0.280 & NA  & NA  & NA \\ 
  & Oracle 2 & 0.018 & NA  & 0.012  (NA / NA)& 0.470 & 0.010 & 0.949 & NA \\ 
  &Step 1 & 0.041 & 5.978 & 0.464 ( 0.407 / 0.057) & 0.729 & 0.012 & 0.954 & 0.016 \\ 
  &Step 2 & 0.042 & NA  &  NA (NA / NA)& 0.737 & 0.012 & 0.951 & NA \\ 
  &Step 3a & 0.041 & 5.981 & 0.456 ( 0.400 / 0.056) & 0.718 & 0.012 & 0.953 & 0.020 \\ 
  &Step 3b & 0.040 & 2.549 & 0.386 ( 0.303 / 0.083) & 0.706 & 0.012 & 0.944 & 0.319 \\ 
  \hline
   \multicolumn{9}{p{\textwidth}}{\footnotesize \emph{Note: } For all designs, $J(\alpha_{\gamma})=2$, $\gamma=0.5$, and $n=200$. See the note below Table \ref{tb:base-0.25} for other notation. 
}
\end{tabular}
\end{table}

\begin{table}[htbp]
\centering
\caption{Different $\tau_0$ and $Q_i$ dist.: $Q_i \sim N(0,1)$}\label{tb:diff-N}
\scriptsize
\begin{tabular}{llcclcccc}
  \hline
 & & Excess Risk & E[$J(\widehat{\ap})$] & MSE of $\widehat{\ap} ~(\widehat{\ap}_{J_0} / \widehat{\ap}_{J_0^c})$ & Pred. Er. & RMSE of $\widehat{\tau}$ & C. Prob. of $\widehat{\tau}$ & Oracle Prop. \\ 
  \hline
  \multirow{6}{*}{$\tau_0=-0.52$} & Oracle 1 & 0.008 & NA & 0.017 (NA / NA) &  0.294 & NA & NA & NA \\ 
  &Oracle 2 & 0.018 & NA & 0.017 (NA / NA) &  0.500 & 0.034 & 0.949 & NA \\ 
  &Step 1 & 0.037 & 5.389 & 0.191 ( 0.168 / 0.023 ) & 0.689 & 0.036 & 0.953 & 0.024 \\ 
  &Step 2 & 0.039 & NA &  NA (NA / NA)& 0.690 & 0.036 & 0.943 & NA \\ 
  &Step 3a & 0.039 & 5.382 & 0.187 ( 0.164 / 0.023) & 0.677 & 0.036 & 0.938 & 0.023 \\ 
  &Step 3b & 0.042 & 2.248 & 0.132 ( 0.125 / 0.008) & 0.695 & 0.042 & 0.918 & 0.523 \\ 
   \\
  \multirow{6}{*}{$\tau_0=-0.25$}& Oracle 1 & 0.008 & NA & 0.014 (NA / NA) &  0.292 & NA & NA & NA \\ 
  &Oracle 2 & 0.018 & NA & 0.014 (NA / NA) &  0.482 & 0.028 & 0.954 & NA \\ 
  &Step 1 & 0.041 & 5.722 & 0.231 ( 0.204 / 0.027) & 0.708 & 0.027 & 0.950 & 0.022 \\ 
  &Step 2 & 0.034 & NA &  NA (NA / NA)& 0.719 & 0.028 & 0.945 & NA \\ 
  &Step 3a & 0.039 & 5.724 & 0.226 ( 0.199 / 0.027) & 0.717 & 0.028 & 0.943 & 0.022 \\ 
  &Step 3b & 0.042 & 2.231 & 0.145 ( 0.129 / 0.016) & 0.702 & 0.029 & 0.938 & 0.474 \\  
  \\
  \multirow{6}{*}{$\tau_0=0$}&Oracle 1 & 0.008 & NA & 0.013(NA / NA) &  0.291 & NA & NA & NA \\ 
  &Oracle 2 & 0.016 & NA & 0.013 (NA / NA) & 0.464 & 0.025 & 0.968 & NA \\ 
  &Step 1 & 0.038 & 5.709 & 0.275 ( 0.242 / 0.033) & 0.709 & 0.028 & 0.953 & 0.024 \\ 
  &Step 2 & 0.040 & NA &  NA (NA / NA)& 0.706 & 0.028 & 0.957 & NA \\ 
  &Step 3a & 0.038 & 5.682 & 0.271 ( 0.238 / 0.033 )& 0.691 & 0.028 & 0.956 & 0.023 \\ 
  &Step 3b & 0.042 & 2.309 & 0.184 ( 0.156 / 0.029) & 0.711 & 0.027 & 0.948 & 0.458 \\  
  \\
  \multirow{6}{*}{$\tau_0=0.25$}&Oracle 1 & 0.008 & NA & 0.012 (NA / NA) &  0.292 & NA & NA & NA \\ 
  &Oracle 2 & 0.017 & NA & 0.012 (NA / NA) &  0.474 & 0.028 & 0.958 & NA \\ 
  &Step 1 & 0.041 & 5.829 & 0.359 ( 0.316 / 0.043) & 0.718 & 0.029 & 0.959 & 0.016 \\ 
  &Step 2 & 0.043 & NA &  NA (NA / NA)& 0.732 & 0.030 & 0.949 & NA \\ 
  &Step 3a & 0.039 & 5.841 & 0.351 ( 0.308 / 0.042) & 0.730 & 0.030 & 0.950 & 0.016 \\ 
  &Step 3b & 0.038 & 2.456 & 0.269 ( 0.219 / 0.050) & 0.711 & 0.030 & 0.941 & 0.378 \\  
  \\
  \multirow{6}{*}{$\tau_0=0.52$} &Oracle 1 & 0.008 & NA & 0.012 (NA / NA) &  0.286 & NA & NA & NA \\ 
  &Oracle 2 & 0.017 & NA & 0.012(NA / NA) &  0.466 & 0.031 & 0.964 & NA \\ 
  &Step 1 & 0.043 & 5.929 & 0.455 ( 0.400 / 0.055) & 0.759 & 0.034 & 0.953 & 0.012 \\ 
  &Step 2 & 0.041 & NA &  NA (NA / NA)& 0.748 & 0.034 & 0.947 & NA \\ 
  &Step 3a & 0.037 & 5.932 & 0.445 ( 0.390 / 0.055) & 0.736 & 0.033 & 0.945 & 0.010 \\ 
  &Step 3b & 0.042 & 2.529 & 0.395 ( 0.310 / 0.084) & 0.750 & 0.033 & 0.940 & 0.300 \\ 
  \hline
   \multicolumn{9}{p{\textwidth}}{\footnotesize \emph{Note: } For all designs, $J(\alpha_{\gamma})=2$, $\gamma=0.5$, and $n=200$. Note that $Quant_{0.3}(Q_i)\approx-0.52$, $Quant_{0.4}(Q_i)\approx -0.25$, $ Quant_{0.5}(Q_i) =0$.   See the note below Table \ref{tb:base-0.25} for other notation. 
}
\end{tabular}
\end{table}

\begin{table}[hbtp]
\centering
\caption{Different $\tau_0$ and $Q_i$ dist.: $Q_i \sim \chi^2(1)$}\label{tb:diff-chi2}
\scriptsize
\begin{tabular}{llcclcccc}
  \hline
 & & Excess Risk & E[$J(\widehat{\ap})$] & MSE of $\widehat{\ap} ~(\widehat{\ap}_{J_0} / \widehat{\ap}_{J_0^c})$ & Pred. Er. & RMSE of $\widehat{\tau}$ & C. Prob. of $\widehat{\tau}$ & Oracle Prop. \\ 
  \hline
  \multirow{6}{*}{$\tau_0=0.15$} & Oracle 1 & 0.008 &  NA & 0.017 (NA / NA) & 0.293 &  NA &  NA & NA \\ 
  &Oracle 2 & 0.017 &  NA & 0.017(NA / NA) & 0.461 & 0.012 & 0.978 & NA \\ 
  &Step 1 & 0.038 & 5.523 & 0.211 ( 0.187 / 0.025) & 0.701 & 0.012 & 0.979 & 0.032 \\ 
  &Step 2 & 0.034 &  NA &  NA (NA / NA)& 0.721 & 0.011 & 0.980 & NA \\ 
  &Step 3a & 0.036 & 5.524 & 0.207 ( 0.182 / 0.025) & 0.697 & 0.011 & 0.980 & 0.029 \\ 
  &Step 3b & 0.037 & 2.023 & 0.137 ( 0.126 / 0.010) & 0.692 & 0.012 & 0.966 & 0.523 \\ 
   \\
  \multirow{6}{*}{$\tau_0=0.27$}& Oracle 1 & 0.008 &  NA & 0.014 (NA / NA) & 0.286 &  NA &  NA & NA \\ 
  &Oracle 2 & 0.017 &  NA & 0.014 (NA / NA) & 0.448 & 0.015 & 0.957 & NA \\ 
  &Step 1 & 0.036 & 5.562 & 0.229 ( 0.202 / 0.027) & 0.720 & 0.016 & 0.951 & 0.026 \\ 
  &Step 2 & 0.036 &  NA &  NA (NA / NA)& 0.712 & 0.016 & 0.950 & NA \\ 
  &Step 3a & 0.038 & 5.558 & 0.225 ( 0.199 / 0.027) & 0.694 & 0.016 & 0.947 & 0.028 \\ 
  &Step 3b & 0.040 & 2.206 & 0.138 ( 0.124 / 0.014) & 0.693 & 0.015 & 0.945 & 0.507 \\   
  \\
  \multirow{6}{*}{$\tau_0=0.45$}&Oracle 1 & 0.008 &  NA & 0.011 (NA / NA) & 0.291 &  NA &  NA & NA \\ 
  &Oracle 2 & 0.016 &  NA & 0.011 (NA / NA) & 0.461 & 0.022 & 0.942 & NA \\ 
  &Step 1 & 0.036 & 5.810 & 0.291 ( 0.256 / 0.035) & 0.718 & 0.022 & 0.934 & 0.017 \\ 
  &Step 2 & 0.038 &  NA &  NA (NA / NA)& 0.722 & 0.021 & 0.930 & NA \\ 
  &Step 3a & 0.041 & 5.834 & 0.288 ( 0.253 / 0.035) & 0.706 & 0.021 & 0.930 & 0.019 \\ 
  &Step 3b & 0.041 & 2.353 & 0.207 ( 0.171 / 0.036) & 0.712 & 0.021 & 0.919 & 0.439 \\ 
  \\
  \multirow{6}{*}{$\tau_0=0.71$}&Oracle 1 & 0.009 &  NA & 0.012 (NA / NA) & 0.288 &  NA &  NA & NA \\ 
  &Oracle 2 & 0.018 &  NA & 0.012 (NA / NA) & 0.485 & 0.030 & 0.933 & NA \\ 
  &Step 1 & 0.035 & 5.883 & 0.348 ( 0.307 / 0.042) & 0.717 & 0.031 & 0.934 & 0.015 \\ 
  &Step 2 & 0.042 &  NA &  NA (NA / NA)& 0.741 & 0.032 & 0.923 & NA \\ 
  &Step 3a & 0.038 & 5.866 & 0.337 ( 0.296 / 0.041) & 0.726 & 0.032 & 0.922 & 0.014 \\ 
  &Step 3b & 0.038 & 2.386 & 0.240 ( 0.197 / 0.044) & 0.724 & 0.032 & 0.909 & 0.397 \\   
  \\
  \multirow{6}{*}{$\tau_0=1.07$} &Oracle 1 & 0.008 &  NA & 0.013(NA / NA) & 0.291 &  NA &  NA & NA \\ 
  &Oracle 2 & 0.017 &  NA & 0.013 (NA / NA) &0.473 & 0.044 & 0.936 & NA \\ 
  &Step 1 & 0.043 & 5.967 & 0.459 ( 0.404 / 0.055) & 0.740 & 0.049 & 0.923 & 0.008 \\ 
  &Step 2 & 0.041 &  NA &  NA (NA / NA)& 0.752 & 0.050 & 0.922 & NA \\ 
  &Step 3a & 0.036 & 5.932 & 0.445 ( 0.390 / 0.054) & 0.738 & 0.050 & 0.920 & 0.010 \\ 
  &Step 3b & 0.044 & 2.486 & 0.381 ( 0.303 / 0.078) & 0.740 & 0.048 & 0.918 & 0.317 \\ 
  \hline
   \multicolumn{9}{p{\textwidth}}{\footnotesize \emph{Note: } For all designs, $J(\alpha_{\gamma})=2$, $\gamma=0.5$, and $n=200$. Note that $\tau_0$ values are $0.3, 0.4, \ldots, 0.7$ quantiles of $\chi^2(1)$.   See the note below Table \ref{tb:base-0.25} for other notation. 
}
\end{tabular}
\end{table}

\newpage

\section{Additional Simulation Results: Sensitivity Analyses}\label{sec:add-sim-results}

Tables \ref{tb:sen-gm-s1}--\ref{tb:sen-a-s3b} summarize the simulation results of sensitivity analysis on tuning parameters. We set $\gamma=0.5$, i.e.\ median regression, and $n=200$ for all designs. We make variation on four constants of tuning parameters:  $\gm^*$ of $\overline{\Lambda}_{1-\gm^*}$, $c_1$ of $\kappa_n$ and $\omega_n$, $c_2$ of $\mu_n$, and $a$ of the signal adaptive weight $w_j$. Recall that they are set to $\gm^*=0.1$, $c_1=1.1$, $c_2=\log\log n$, and $a=3.7$ following the existing literature and some preliminary simulations. We make changes over the range between $-15\%$ and $+15\%$ of the suggested values. Since $\gm^*$ and $c_1$ are relevant for all estimation steps, we report the sensitivity analysis results for all steps: Tables \ref{tb:sen-gm-s1}--\ref{tb:sen-gm-s3b} and Tables \ref{tb:sen-c1-s1}--\ref{tb:sen-c1-s3b}. However, we only report the results of Step 3b for $c_2$ and $a$ as they affect only the last step: Table \ref{tb:sen-c2-s3b} and Table \ref{tb:sen-a-s3b}. These simulation studies confirm that the proposed estimators are robust to some variation in tuning parameters. Both $\gm^*$ and $c_1$ show some tendency that a smaller penalty size (larger $\gm^*$ and smaller $c_1$) improves the prediction error slightly. Table \ref{tb:sen-c2-s3b} shows quite stable oracle proportions unless $c_2$ is too small. The constant $a$ for the signal adaptive weight shows quite uniform performance over different values. 
Figures \ref{fig:sen-gm-s1}--\ref{fig:sen-a-s3b} present graphical representation of the sensitivity analyses
reported in Tables \ref{tb:sen-gm-s1}--\ref{tb:sen-a-s3b}.


\begin{table}[htbp]
\centering
\caption{Sensitivity Analysis of $\gamma^*$: Step 1}\label{tb:sen-gm-s1}
\scriptsize
\begin{tabular}{rccccccccc}
  \hline
 Changes & Excess Risk & E[$J_0(\widehat{\ap})$] & MSE of $\widehat{\ap}$ & MSE of $\widehat{\ap}_{J_0}$ & MSE of $\widehat{\ap}_{J_0^c}$ & Pred. Er. & RMSE of $\widehat{\tau}$ & C. Prob. of $\widehat{\tau}$ & Oracle Prop \\ 
  \hline

-15\% & 0.039 & 5.776 & 0.279 & 0.246 & 0.033 & 0.730 & 0.011 & 0.945 & 0.015 \\ 
  -12\% & 0.039 & 5.717 & 0.280 & 0.247 & 0.033 & 0.727 & 0.011 & 0.937 & 0.019 \\ 
  -9\% & 0.040 & 5.720 & 0.278 & 0.245 & 0.033 & 0.739 & 0.012 & 0.931 & 0.013 \\ 
  -6\% & 0.039 & 5.635 & 0.277 & 0.244 & 0.033 & 0.732 & 0.012 & 0.945 & 0.016 \\ 
  -3\% & 0.040 & 5.767 & 0.282 & 0.248 & 0.034 & 0.733 & 0.012 & 0.940 & 0.017 \\ 
  0\% & 0.040 & 5.790 & 0.279 & 0.245 & 0.034 & 0.723 & 0.011 & 0.950 & 0.015 \\ 
  +3\% & 0.043 & 5.782 & 0.275 & 0.242 & 0.033 & 0.741 & 0.011 & 0.944 & 0.017 \\ 
  +6\% & 0.038 & 5.711 & 0.272 & 0.239 & 0.033 & 0.715 & 0.011 & 0.944 & 0.018 \\ 
  +9\% & 0.041 & 5.745 & 0.278 & 0.245 & 0.034 & 0.697 & 0.010 & 0.961 & 0.018 \\ 
  +12\% & 0.040 & 5.730 & 0.272 & 0.240 & 0.032 & 0.735 & 0.010 & 0.957 & 0.012 \\ 
  +15\% & 0.042 & 5.809 & 0.271 & 0.240 & 0.032 & 0.713 & 0.011 & 0.949 & 0.010 \\  
   \hline
\end{tabular}
\end{table}

\newpage

\begin{table}[htbp]
\centering
\caption{Sensitivity Analysis of $\gamma^*$: Step 2}\label{tb:sen-gm-s2}
\scriptsize
\begin{tabular}{rcccc}
  \hline
 Changes & Excess Risk & Pred. Er. & RMSE of $\widehat{\tau}$ & C. Prob. of $\widehat{\tau}$ \\ 
  \hline

  -15\% & 0.035  & 0.726 & 0.011 & 0.937  \\ 
  -12\% & 0.035  & 0.721 & 0.011 & 0.937  \\ 
  -9\% & 0.036  & 0.731 & 0.012 & 0.933  \\ 
  -6\% & 0.036  & 0.727 & 0.012 & 0.934  \\ 
  -3\% & 0.042  & 0.719 & 0.012 & 0.931  \\ 
  0\% & 0.036  & 0.729 & 0.011 & 0.946 \\ 
  +3\% & 0.044  & 0.728 & 0.012 & 0.936  \\ 
  +6\% & 0.038  & 0.719 & 0.011 & 0.936  \\ 
  +9\% & 0.040  & 0.703 & 0.010 & 0.956  \\ 
  +12\% & 0.039  & 0.701 & 0.010 & 0.948  \\ 
  +15\% & 0.040  & 0.720 & 0.011 & 0.946  \\ 
    \hline
\end{tabular}
\end{table}


\begin{table}[htbp]
\centering
\caption{Sensitivity Analysis of $\gamma^*$: Step 3a}\label{tb:sen-gm-s3a}
\scriptsize
\begin{tabular}{rccccccccc}
  \hline
 Changes & Excess Risk & E[$J_0(\widehat{\ap})$] & MSE of $\widehat{\ap}$ & MSE of $\widehat{\ap}_{J_0}$ & MSE of $\widehat{\ap}_{J_0^c}$ & Pred. Er. & RMSE of $\widehat{\tau}$ & C. Prob. of $\widehat{\tau}$ & Oracle Prop \\ 
  \hline

  -15\% & 0.040 & 5.793 & 0.277 & 0.244 & 0.033 & 0.722 & 0.011 & 0.941 & 0.014 \\ 
  -12\% & 0.040 & 5.750 & 0.275 & 0.243 & 0.032 & 0.716 & 0.011 & 0.938 & 0.022 \\ 
  -9\% & 0.041 & 5.732 & 0.273 & 0.240 & 0.033 & 0.726 & 0.012 & 0.934 & 0.015 \\ 
  -6\% & 0.040 & 5.699 & 0.272 & 0.239 & 0.033 & 0.724 & 0.012 & 0.935 & 0.017 \\ 
  -3\% & 0.040 & 5.795 & 0.278 & 0.245 & 0.034 & 0.729 & 0.012 & 0.933 & 0.018 \\ 
  0\% & 0.039 & 5.910 & 0.272 & 0.239 & 0.033 & 0.717 & 0.011 & 0.947 & 0.017 \\ 
  +3\% & 0.036 & 5.782 & 0.269 & 0.237 & 0.033 & 0.714 & 0.012 & 0.938 & 0.018 \\ 
  +6\% & 0.039 & 5.726 & 0.267 & 0.235 & 0.033 & 0.699 & 0.011 & 0.938 & 0.018 \\ 
  +9\% & 0.040 & 5.747 & 0.272 & 0.239 & 0.033 & 0.688 & 0.010 & 0.959 & 0.017 \\ 
  +12\% & 0.038 & 5.740 & 0.267 & 0.236 & 0.032 & 0.707 & 0.010 & 0.949 & 0.012 \\ 
  +15\% & 0.034 & 5.836 & 0.267 & 0.235 & 0.032 & 0.703 & 0.011 & 0.944 & 0.009 \\    \hline
\end{tabular}
\end{table}


\begin{table}[htbp]
\centering
\caption{Sensitivity Analysis of $\gamma^*$: Step 3b}\label{tb:sen-gm-s3b}
\scriptsize
\begin{tabular}{rccccccccc}
  \hline
 Changes & Excess Risk & E[$J_0(\widehat{\ap})$] & MSE of $\widehat{\ap}$ & MSE of $\widehat{\ap}_{J_0}$ & MSE of $\widehat{\ap}_{J_0^c}$ & Pred. Er. & RMSE of $\widehat{\tau}$ & C. Prob. of $\widehat{\tau}$ & Oracle Prop \\ 
  \hline

-15\% & 0.041 & 2.331 & 0.185 & 0.157 & 0.028 & 0.726 & 0.011 & 0.932 & 0.429 \\ 
  -12\% & 0.040 & 2.309 & 0.179 & 0.153 & 0.025 & 0.720 & 0.011 & 0.928 & 0.447 \\ 
  -9\% & 0.042 & 2.338 & 0.195 & 0.165 & 0.030 & 0.734 & 0.012 & 0.922 & 0.417 \\ 
  -6\% & 0.042 & 2.296 & 0.189 & 0.161 & 0.028 & 0.732 & 0.012 & 0.924 & 0.449 \\ 
  -3\% & 0.043 & 2.311 & 0.186 & 0.158 & 0.028 & 0.721 & 0.012 & 0.924 & 0.435 \\ 
  0\% & 0.040 & 2.330 & 0.182 & 0.155 & 0.027 & 0.702 & 0.011 & 0.929 & 0.428 \\ 
  +3\% & 0.041 & 2.333 & 0.176 & 0.150 & 0.027 & 0.725 & 0.012 & 0.922 & 0.434 \\ 
  +6\% & 0.037 & 2.299 & 0.173 & 0.146 & 0.026 & 0.689 & 0.011 & 0.932 & 0.467 \\ 
  +9\% & 0.040 & 2.325 & 0.177 & 0.150 & 0.027 & 0.696 & 0.010 & 0.944 & 0.450 \\ 
  +12\% & 0.036 & 2.300 & 0.170 & 0.144 & 0.025 & 0.682 & 0.011 & 0.931 & 0.465 \\ 
  +15\% & 0.038 & 2.326 & 0.180 & 0.150 & 0.029 & 0.686 & 0.011 & 0.931 & 0.455 \\ 
  \hline
\end{tabular}
\end{table}


\newpage

\begin{table}[htbp]
\centering
\caption{Sensitivity Analysis of $c_1$: Step 1}\label{tb:sen-c1-s1}
\scriptsize
\begin{tabular}{rccccccccc}
  \hline
 Changes & Excess Risk & E[$J_0(\widehat{\ap})$] & MSE of $\widehat{\ap}$ & MSE of $\widehat{\ap}_{J_0}$ & MSE of $\widehat{\ap}_{J_0^c}$ & Pred. Er. & RMSE of $\widehat{\tau}$ & C. Prob. of $\widehat{\tau}$ & Oracle Prop \\ 
  \hline
-15\% & 0.039 & 5.811 & 0.266 & 0.233 & 0.033 & 0.695 & 0.011 & 0.948 & 0.027 \\ 
  -12\% & 0.034 & 5.656 & 0.273 & 0.239 & 0.034 & 0.707 & 0.012 & 0.946 & 0.010 \\ 
  -9\% & 0.039 & 5.870 & 0.273 & 0.241 & 0.032 & 0.699 & 0.010 & 0.964 & 0.014 \\ 
  -6\% & 0.036 & 5.787 & 0.274 & 0.241 & 0.033 & 0.707 & 0.009 & 0.965 & 0.016 \\ 
  -3\% & 0.043 & 5.807 & 0.274 & 0.242 & 0.032 & 0.716 & 0.011 & 0.944 & 0.008 \\ 
  0\% & 0.040 & 5.790 & 0.279 & 0.245 & 0.034 & 0.723 & 0.011 & 0.950 & 0.015 \\ 
  +3\% & 0.039 & 5.736 & 0.281 & 0.248 & 0.033 & 0.730 & 0.011 & 0.951 & 0.016 \\ 
  +6\% & 0.040 & 5.727 & 0.287 & 0.252 & 0.035 & 0.734 & 0.011 & 0.945 & 0.011 \\ 
  +9\% & 0.042 & 5.846 & 0.284 & 0.251 & 0.033 & 0.745 & 0.011 & 0.939 & 0.015 \\ 
  +12\% & 0.047 & 5.952 & 0.309 & 0.274 & 0.035 & 0.753 & 0.012 & 0.947 & 0.012 \\ 
  +15\% & 0.041 & 5.828 & 0.291 & 0.257 & 0.033 & 0.734 & 0.010 & 0.961 & 0.013 \\ 
   \hline
\end{tabular}
\end{table}


\begin{table}[htbp]
\centering
\caption{Sensitivity Analysis of $c_1$: Step 2}\label{tb:sen-c1-s2}
\scriptsize
\begin{tabular}{rcccc}
  \hline
 Changes & Excess Risk & Pred. Er. & RMSE of $\widehat{\tau}$ & C. Prob. of $\widehat{\tau}$  \\ 
  \hline
  -15\% & 0.038 &  0.697 & 0.011 & 0.943 \\ 
  -12\% & 0.037 & 0.713 & 0.012 & 0.943 \\ 
  -9\% & 0.036 &  0.698 & 0.010 & 0.953 \\ 
  -6\% & 0.034 &  0.711 & 0.009 & 0.954 \\ 
  -3\% & 0.040 &  0.723 & 0.011 & 0.941 \\ 
   0\% & 0.036 &  0.729 & 0.011 & 0.946 \\ 
  +3\% & 0.035 &  0.724 & 0.011 & 0.943 \\ 
  +6\% & 0.040 &  0.742 & 0.011 & 0.944 \\ 
  +9\% & 0.038 &  0.753 & 0.011 & 0.931 \\ 
  +12\% & 0.044 & 0.753 & 0.011 & 0.940 \\ 
  +15\% & 0.046 & 0.745 & 0.009 & 0.960 \\ 
   \hline
\end{tabular}
\end{table}


\begin{table}[htbp]
\centering
\caption{Sensitivity Analysis of $c_1$: Step 3a}\label{tb:sen-c1-s3a}
\scriptsize
\begin{tabular}{rccccccccc}
  \hline
 Changes & Excess Risk & E[$J_0(\widehat{\ap})$] & MSE of $\widehat{\ap}$ & MSE of $\widehat{\ap}_{J_0}$ & MSE of $\widehat{\ap}_{J_0^c}$ & Pred. Er. & RMSE of $\widehat{\tau}$ & C. Prob. of $\widehat{\tau}$ & Oracle Prop \\ 
  \hline
-15\% & 0.035 & 5.783 & 0.262 & 0.230 & 0.033 & 0.676 & 0.011 & 0.947 & 0.026 \\ 
  -12\% & 0.037 & 5.691 & 0.270 & 0.237 & 0.034 & 0.696 & 0.012 & 0.946 & 0.011 \\ 
  -9\% & 0.040 & 5.854 & 0.270 & 0.238 & 0.032 & 0.689 & 0.010 & 0.956 & 0.017 \\ 
  -6\% & 0.039 & 5.810 & 0.271 & 0.238 & 0.033 & 0.705 & 0.009 & 0.960 & 0.015 \\ 
  -3\% & 0.034 & 5.832 & 0.268 & 0.237 & 0.031 & 0.706 & 0.011 & 0.943 & 0.010 \\ 
  0\% & 0.039 & 5.910 & 0.272 & 0.239 & 0.033 & 0.717 & 0.011 & 0.947 & 0.017 \\ 
  +3\% & 0.040 & 5.747 & 0.278 & 0.244 & 0.033 & 0.720 & 0.011 & 0.944 & 0.016 \\ 
  +6\% & 0.041 & 5.728 & 0.281 & 0.246 & 0.035 & 0.728 & 0.011 & 0.945 & 0.014 \\ 
  +9\% & 0.042 & 5.789 & 0.280 & 0.247 & 0.033 & 0.715 & 0.011 & 0.935 & 0.015 \\ 
  +12\% & 0.038 & 5.952 & 0.304 & 0.269 & 0.035 & 0.729 & 0.011 & 0.945 & 0.012 \\ 
  +15\% & 0.038 & 5.818 & 0.285 & 0.252 & 0.033 & 0.735 & 0.010 & 0.962 & 0.016 \\ 
   \hline
\end{tabular}
\end{table}


\begin{table}[htbp]
\centering
\caption{Sensitivity Analysis of $c_1$: Step 3b}\label{tb:sen-c1-s3b}
\scriptsize
\begin{tabular}{rccccccccc}
  \hline
 Changes & Excess Risk & E[$J_0(\widehat{\ap})$] & MSE of $\widehat{\ap}$ & MSE of $\widehat{\ap}_{J_0}$ & MSE of $\widehat{\ap}_{J_0^c}$ & Pred. Er. & RMSE of $\widehat{\tau}$ & C. Prob. of $\widehat{\tau}$ & Oracle Prop \\ 
  \hline
-15\% & 0.035 & 2.341 & 0.158 & 0.129 & 0.028 & 0.649 & 0.011 & 0.937 & 0.468 \\ 
  -12\% & 0.038 & 2.328 & 0.190 & 0.156 & 0.034 & 0.678 & 0.012 & 0.936 & 0.415 \\ 
  -9\% & 0.038 & 2.346 & 0.185 & 0.156 & 0.029 & 0.676 & 0.010 & 0.946 & 0.442 \\ 
  -6\% & 0.037 & 2.332 & 0.164 & 0.138 & 0.026 & 0.674 & 0.010 & 0.943 & 0.453 \\ 
  -3\% & 0.039 & 2.343 & 0.187 & 0.157 & 0.030 & 0.694 & 0.011 & 0.931 & 0.447 \\ 
  0\% & 0.040 & 2.330 & 0.182 & 0.155 & 0.027 & 0.702 & 0.011 & 0.929 & 0.428 \\ 
  +3\% & 0.041 & 2.331 & 0.184 & 0.157 & 0.027 & 0.723 & 0.011 & 0.935 & 0.436 \\ 
  +6\% & 0.047 & 2.309 & 0.193 & 0.164 & 0.028 & 0.745 & 0.012 & 0.932 & 0.436 \\ 
  +9\% & 0.045 & 2.341 & 0.170 & 0.148 & 0.022 & 0.732 & 0.011 & 0.931 & 0.434 \\ 
  +12\% & 0.049 & 2.385 & 0.225 & 0.189 & 0.035 & 0.757 & 0.012 & 0.925 & 0.427 \\ 
  +15\% & 0.045 & 2.354 & 0.204 & 0.175 & 0.029 & 0.755 & 0.010 & 0.944 & 0.424 \\ 
   \hline
\end{tabular}
\end{table}



\begin{table}[htbp]
\centering
\caption{Sensitivity Analysis of $c_2$: Step 3b}\label{tb:sen-c2-s3b}
\scriptsize
\begin{tabular}{rccccccccc}
  \hline
 Changes & Excess Risk & E[$J_0(\widehat{\ap})$] & MSE of $\widehat{\ap}$ & MSE of $\widehat{\ap}_{J_0}$ & MSE of $\widehat{\ap}_{J_0^c}$ & Pred. Er. & RMSE of $\widehat{\tau}$ & C. Prob. of $\widehat{\tau}$ & Oracle Prop \\ 
  \hline
  -15\% & 0.040 & 2.455 & 0.173 & 0.146 & 0.027 & 0.699 & 0.012 & 0.928 & 0.409 \\ 
  -12\% & 0.040 & 2.416 & 0.175 & 0.148 & 0.028 & 0.698 & 0.011 & 0.929 & 0.414 \\ 
  -9\% & 0.041 & 2.390 & 0.175 & 0.148 & 0.027 & 0.702 & 0.011 & 0.922 & 0.431 \\ 
  -6\% & 0.039 & 2.373 & 0.167 & 0.141 & 0.025 & 0.695 & 0.011 & 0.931 & 0.432 \\ 
  -3\% & 0.041 & 2.349 & 0.166 & 0.142 & 0.025 & 0.708 & 0.011 & 0.932 & 0.443 \\ 
  0\% & 0.040 & 2.330 & 0.182 & 0.155 & 0.027 & 0.702 & 0.011 & 0.929 & 0.428 \\ 
  +3\% & 0.042 & 2.324 & 0.192 & 0.162 & 0.031 & 0.716 & 0.010 & 0.932 & 0.464 \\ 
  +6\% & 0.045 & 2.308 & 0.186 & 0.158 & 0.028 & 0.738 & 0.012 & 0.917 & 0.460 \\ 
  +9\% & 0.045 & 2.273 & 0.194 & 0.164 & 0.030 & 0.739 & 0.011 & 0.932 & 0.457 \\ 
  +12\% & 0.047 & 2.269 & 0.192 & 0.163 & 0.029 & 0.757 & 0.011 & 0.926 & 0.438 \\ 
  +15\% & 0.047 & 2.266 & 0.189 & 0.161 & 0.028 & 0.762 & 0.012 & 0.927 & 0.446 \\  
   \hline
\end{tabular}
\end{table}



\begin{table}[htbp]
\centering
\caption{Sensitivity Analysis of $a$: Step 3b}\label{tb:sen-a-s3b}
\scriptsize
\begin{tabular}{rccccccccc}
  \hline
 Changes & Excess Risk & E[$J_0(\widehat{\ap})$] & MSE of $\widehat{\ap}$ & MSE of $\widehat{\ap}_{J_0}$ & MSE of $\widehat{\ap}_{J_0^c}$ & Pred. Er. & RMSE of $\widehat{\tau}$ & C. Prob. of $\widehat{\tau}$ & Oracle Prop \\ 
  \hline
  -15\% & 0.040 & 2.599 & 0.183 & 0.152 & 0.031 & 0.696 & 0.011 & 0.932 & 0.435 \\ 
  -12\% & 0.040 & 2.343 & 0.176 & 0.148 & 0.028 & 0.699 & 0.011 & 0.935 & 0.436 \\ 
  -9\% & 0.041 & 2.356 & 0.176 & 0.149 & 0.028 & 0.706 & 0.011 & 0.929 & 0.433 \\ 
  -6\% & 0.041 & 2.371 & 0.175 & 0.148 & 0.027 & 0.713 & 0.011 & 0.924 & 0.421 \\ 
  -3\% & 0.041 & 2.342 & 0.166 & 0.142 & 0.025 & 0.710 & 0.011 & 0.928 & 0.450 \\ 
  0\% & 0.040 & 2.330 & 0.182 & 0.155 & 0.027 & 0.702 & 0.011 & 0.929 & 0.428 \\ 
  +3\% & 0.042 & 2.349 & 0.174 & 0.147 & 0.027 & 0.718 & 0.011 & 0.927 & 0.443 \\ 
  +6\% & 0.042 & 2.350 & 0.173 & 0.148 & 0.026 & 0.720 & 0.011 & 0.929 & 0.445 \\ 
  +9\% & 0.046 & 2.340 & 0.182 & 0.155 & 0.027 & 0.735 & 0.011 & 0.923 & 0.449 \\ 
  +12\% & 0.046 & 2.327 & 0.184 & 0.156 & 0.028 & 0.733 & 0.011 & 0.922 & 0.457 \\ 
  +15\% & 0.044 & 2.339 & 0.185 & 0.157 & 0.028 & 0.724 & 0.011 & 0.935 & 0.453 \\ 
   \hline
\end{tabular}
\end{table}

\begin{figure}[tbph]
\centering
\caption{Sensitivity Analysis of $\gm^*$: Step 1}\label{fig:sen-gm-s1}
\begin{minipage}{.33\textwidth}
\centering
\includegraphics[width=\linewidth]{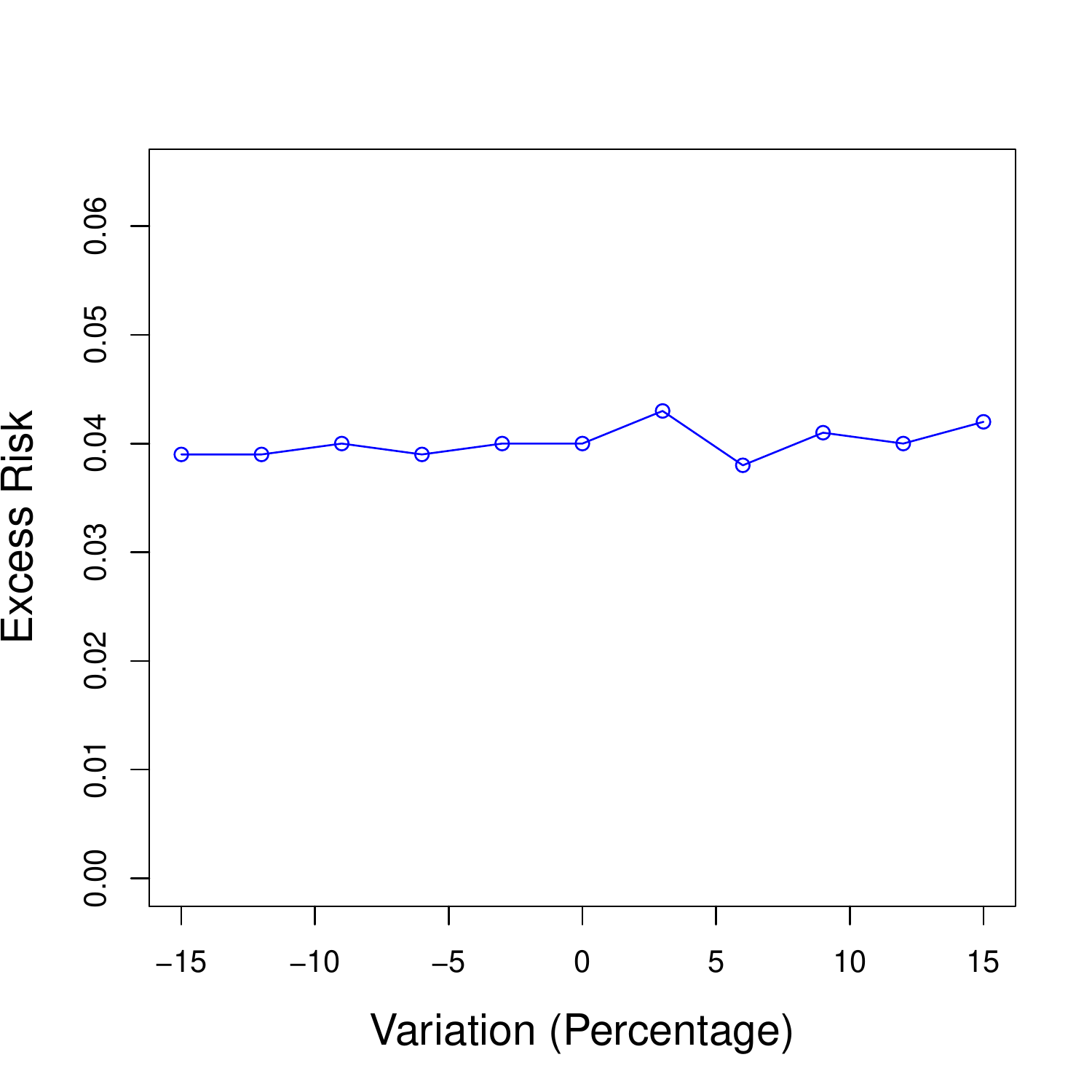}
\end{minipage}\hfill
\begin{minipage}{.33\textwidth}
\centering
\includegraphics[width=\linewidth]{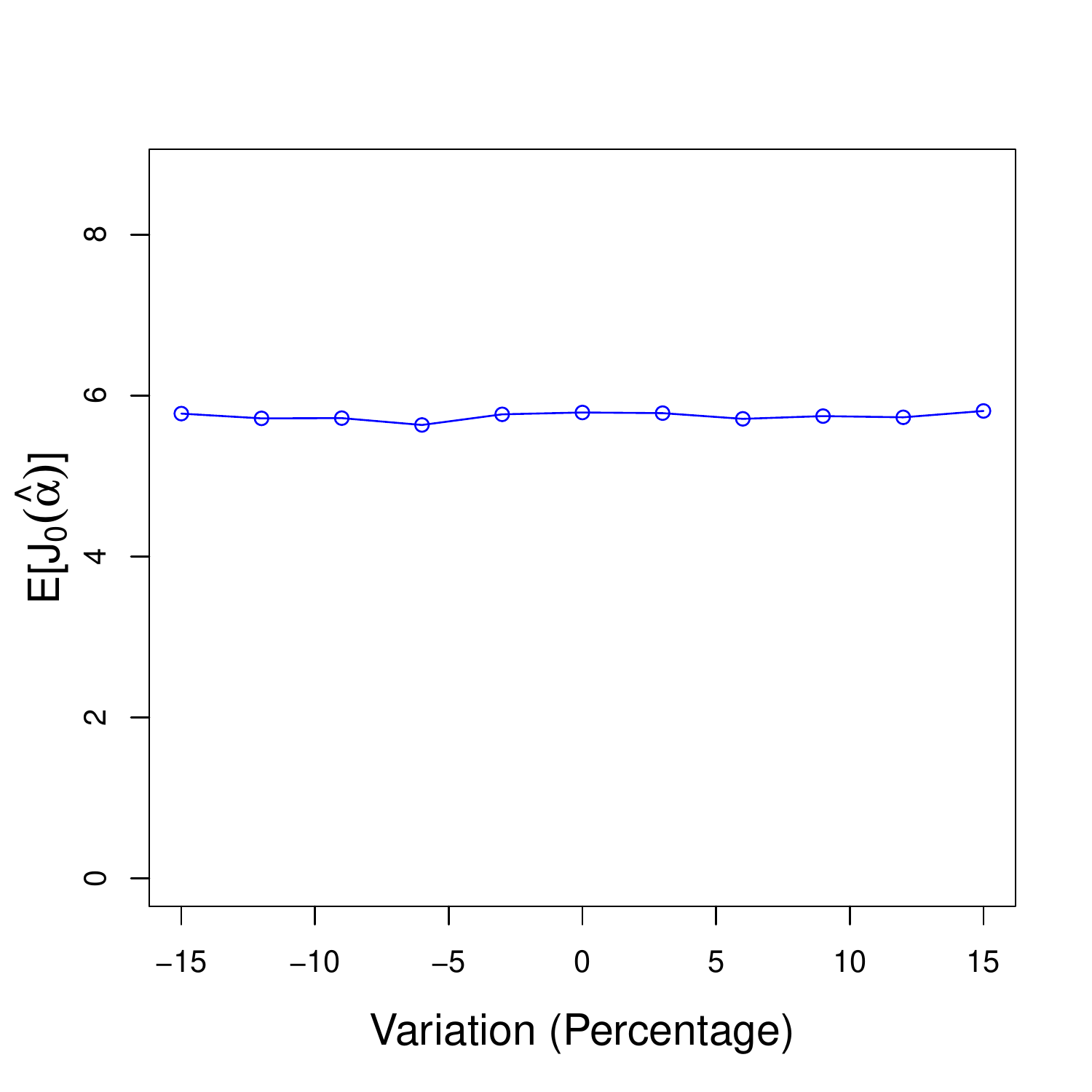}
\end{minipage}\hfill
\begin{minipage}{.33\textwidth}
\centering
\includegraphics[width=\linewidth]{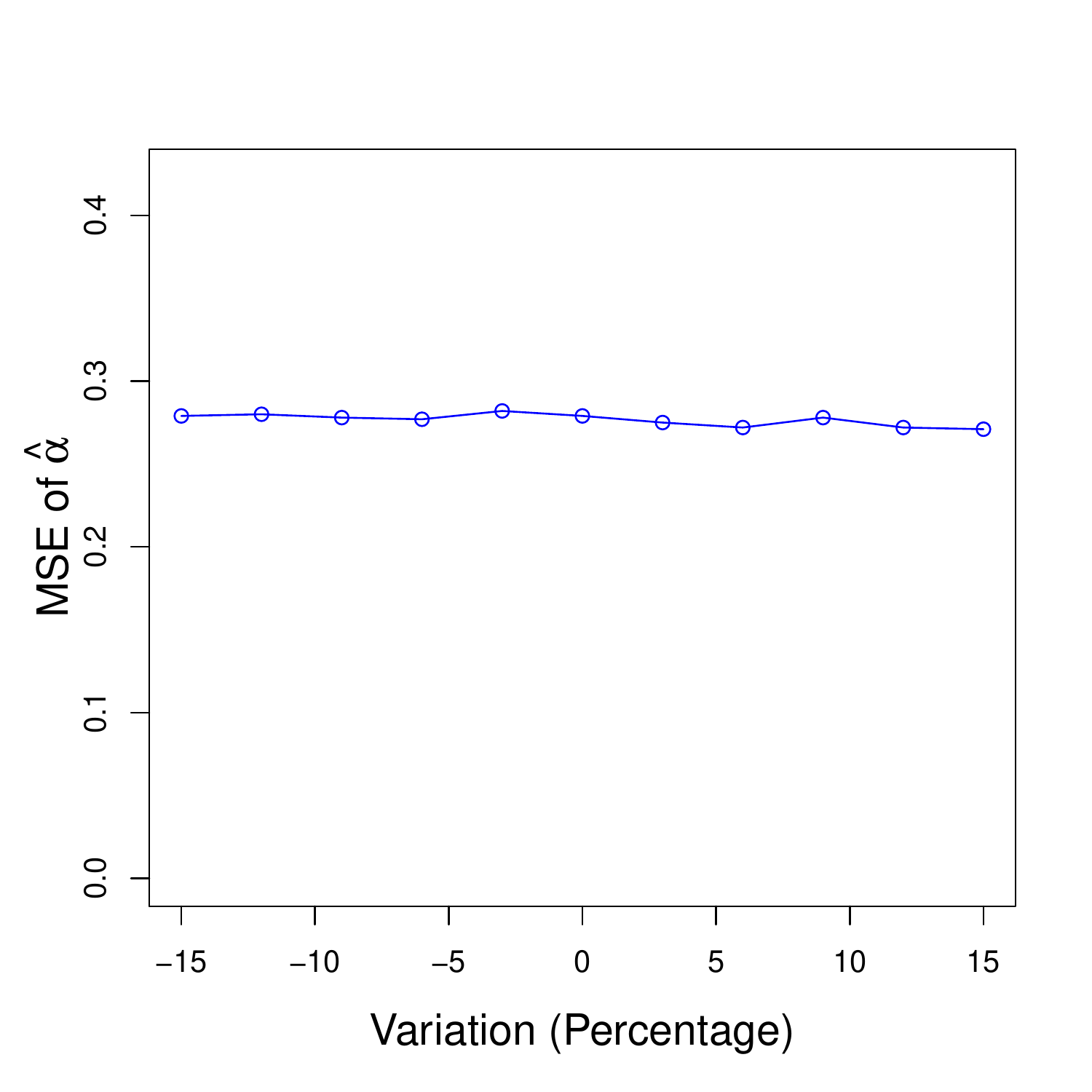}
\end{minipage}
\begin{minipage}{.33\textwidth}
\centering
\includegraphics[width=\linewidth]{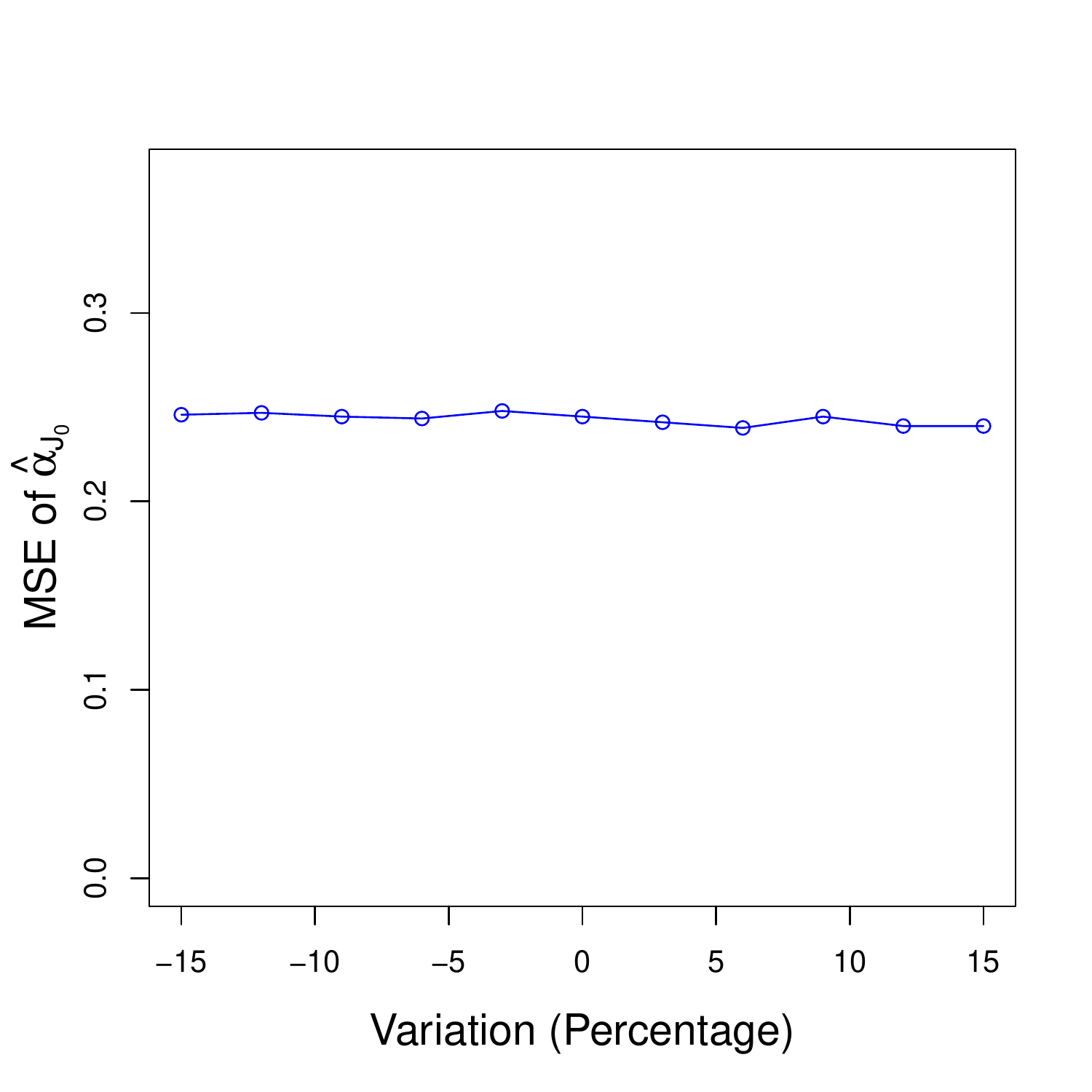}
\end{minipage}\hfill
\begin{minipage}{.33\textwidth}
\centering
\includegraphics[width=\linewidth]{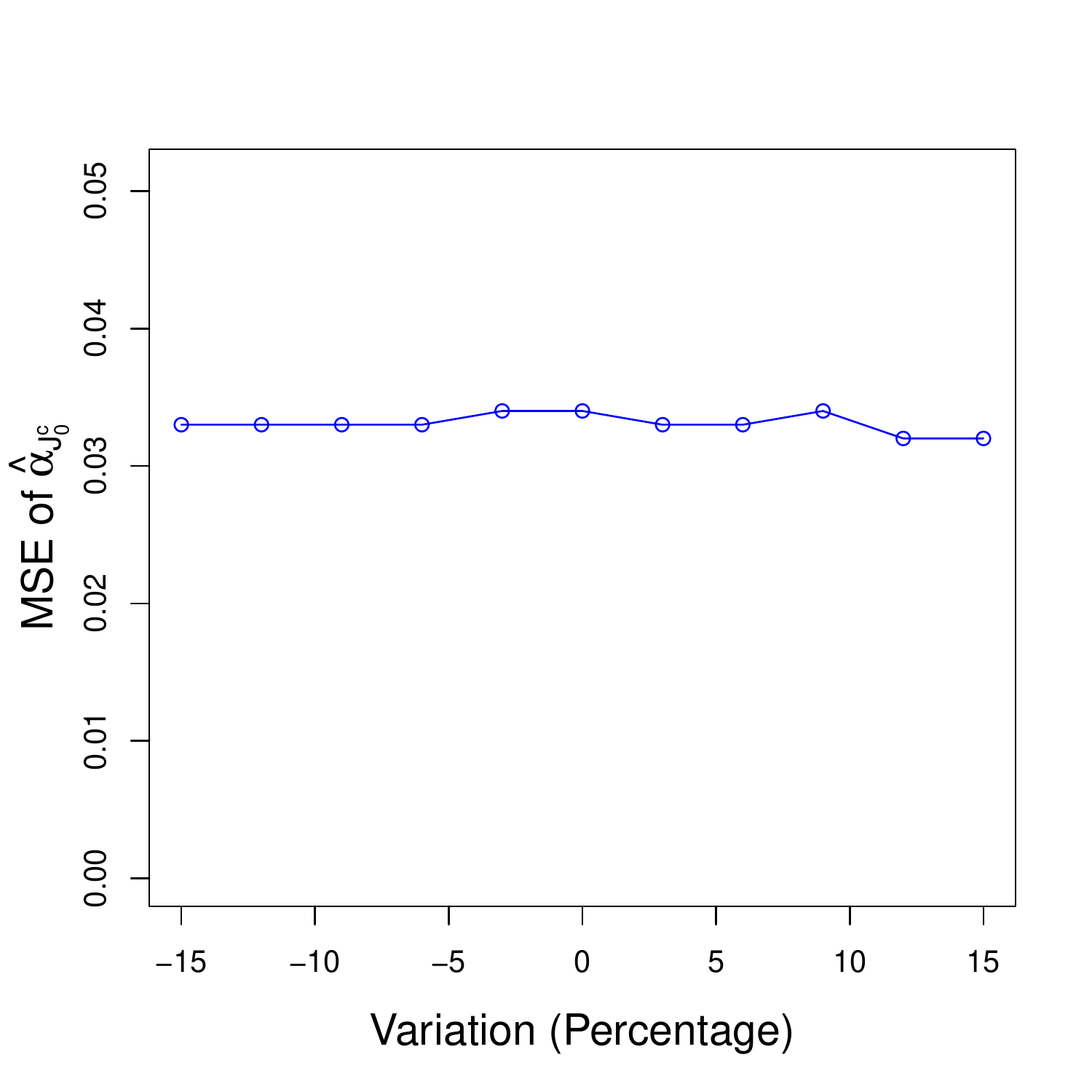}
\end{minipage}\hfill
\begin{minipage}{.33\textwidth}
\centering
\includegraphics[width=\linewidth]{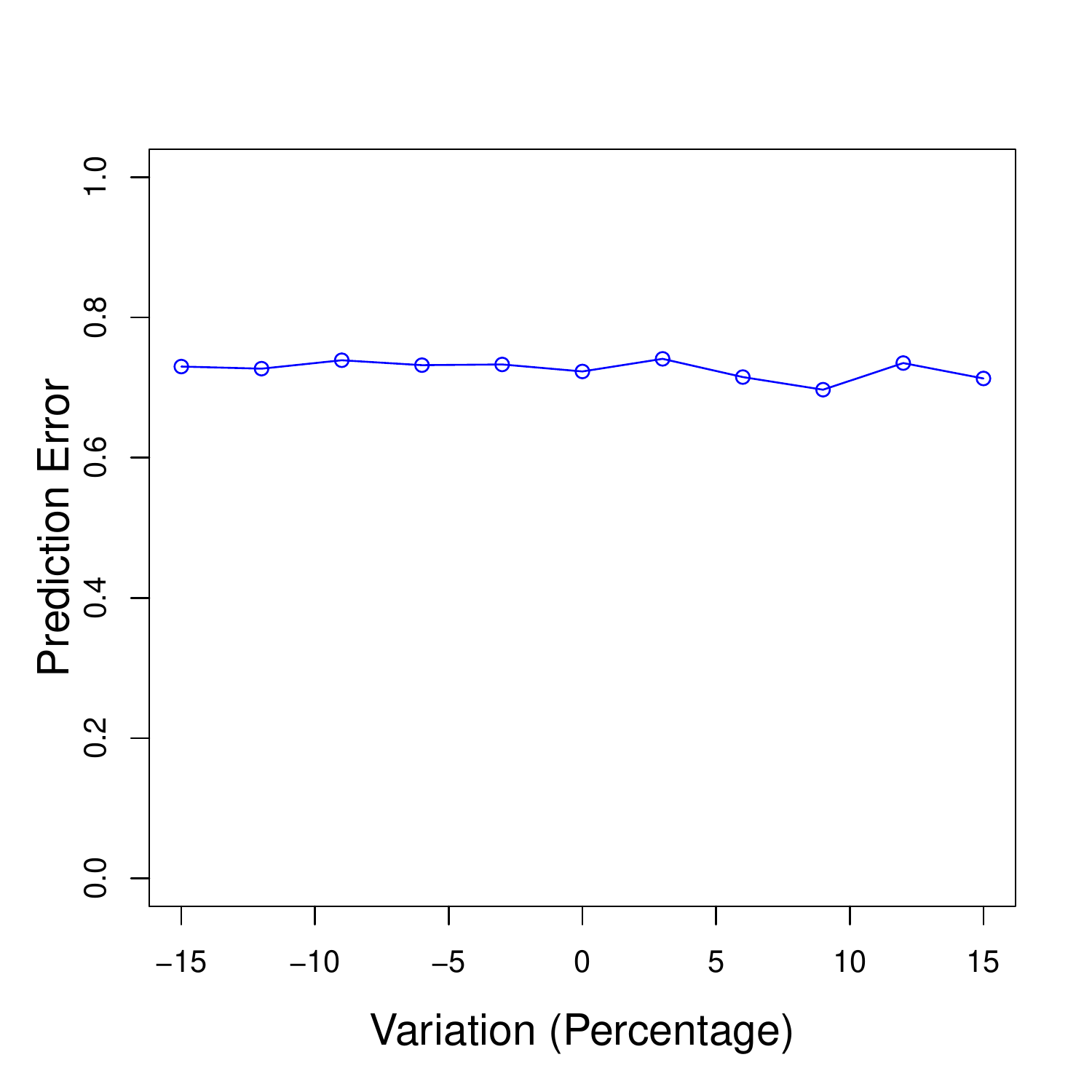}
\end{minipage}
\begin{minipage}{.33\textwidth}
\centering
\includegraphics[width=\linewidth]{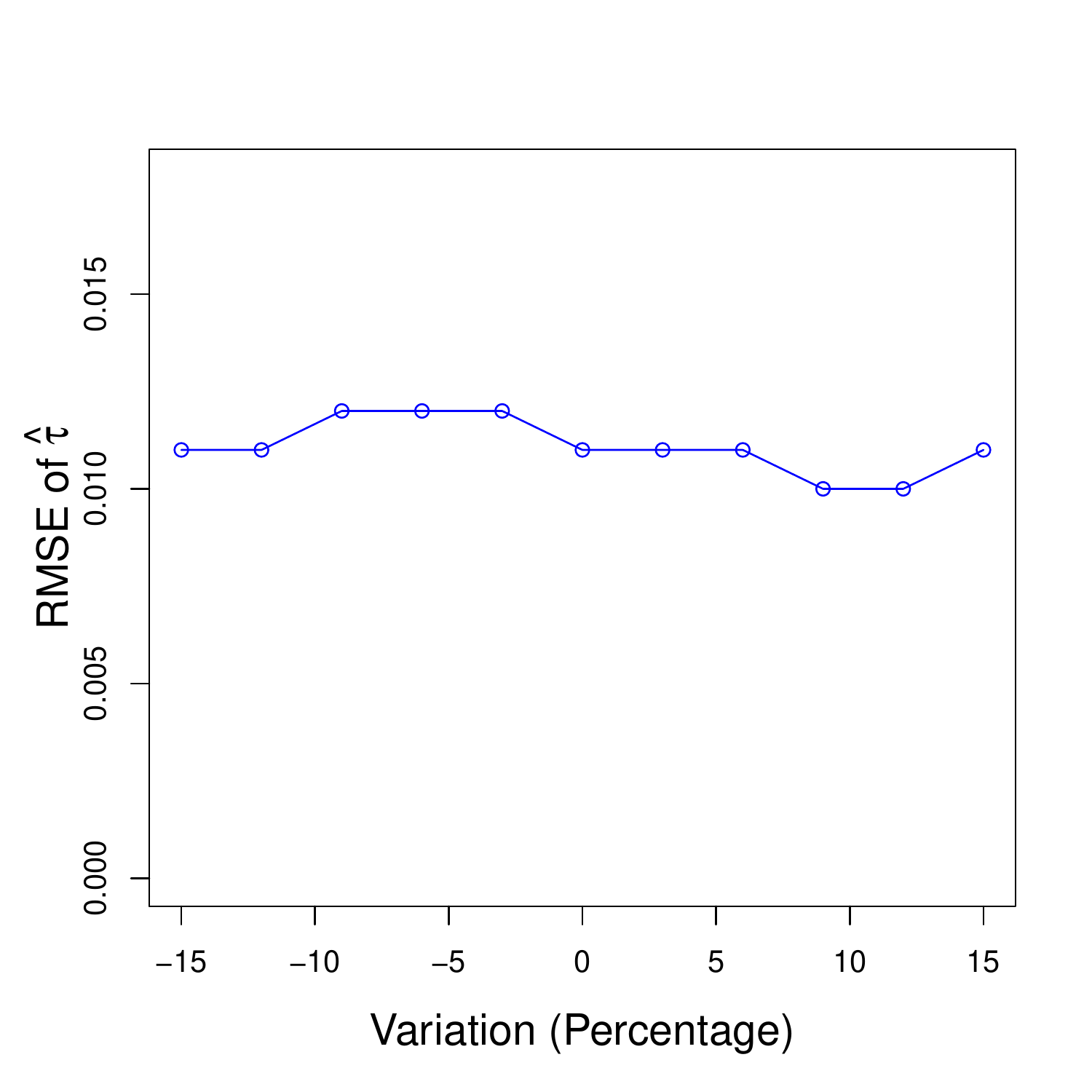}
\end{minipage}\hfill
\begin{minipage}{.33\textwidth}
\centering
\includegraphics[width=\linewidth]{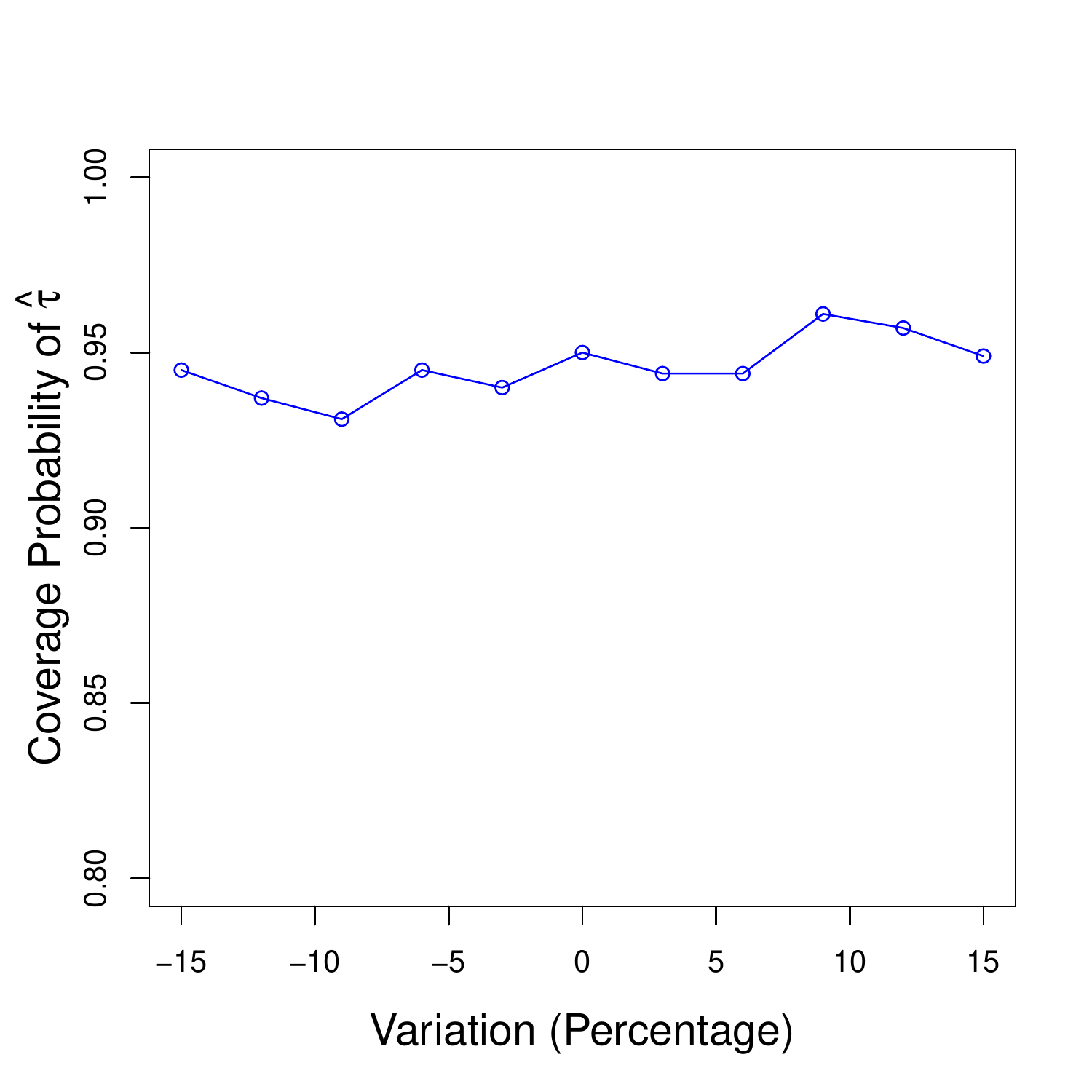}
\end{minipage}\hfill
\begin{minipage}{.33\textwidth}
\centering
\includegraphics[width=\linewidth]{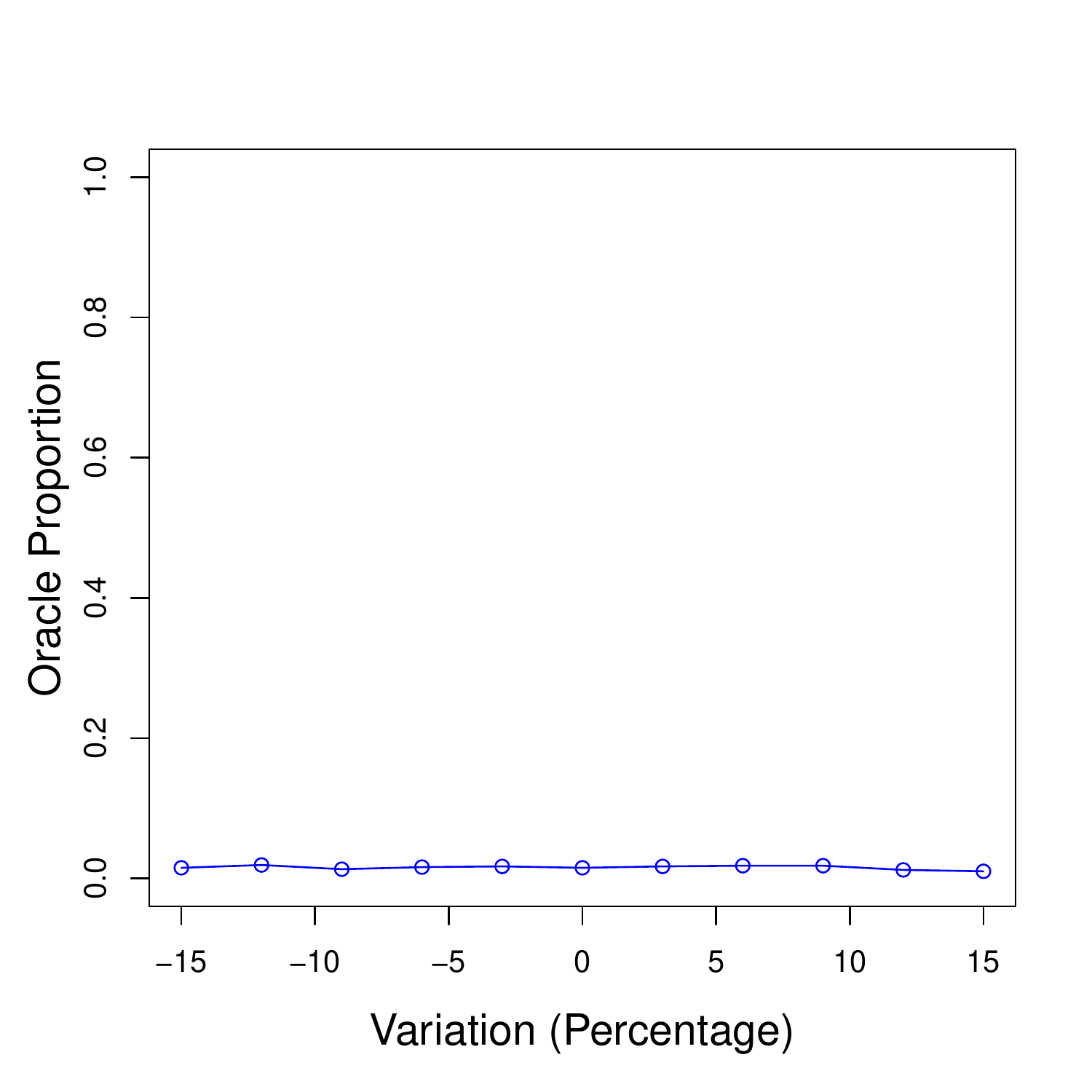}
\end{minipage}
\end{figure}

\newpage

\begin{figure}[tbph]
\centering
\caption{Sensitivity Analysis of $\gm^*$: Step 2}\label{fig:sen-gm-s2}
\begin{minipage}{.49\textwidth}
\centering
\includegraphics[width=\linewidth]{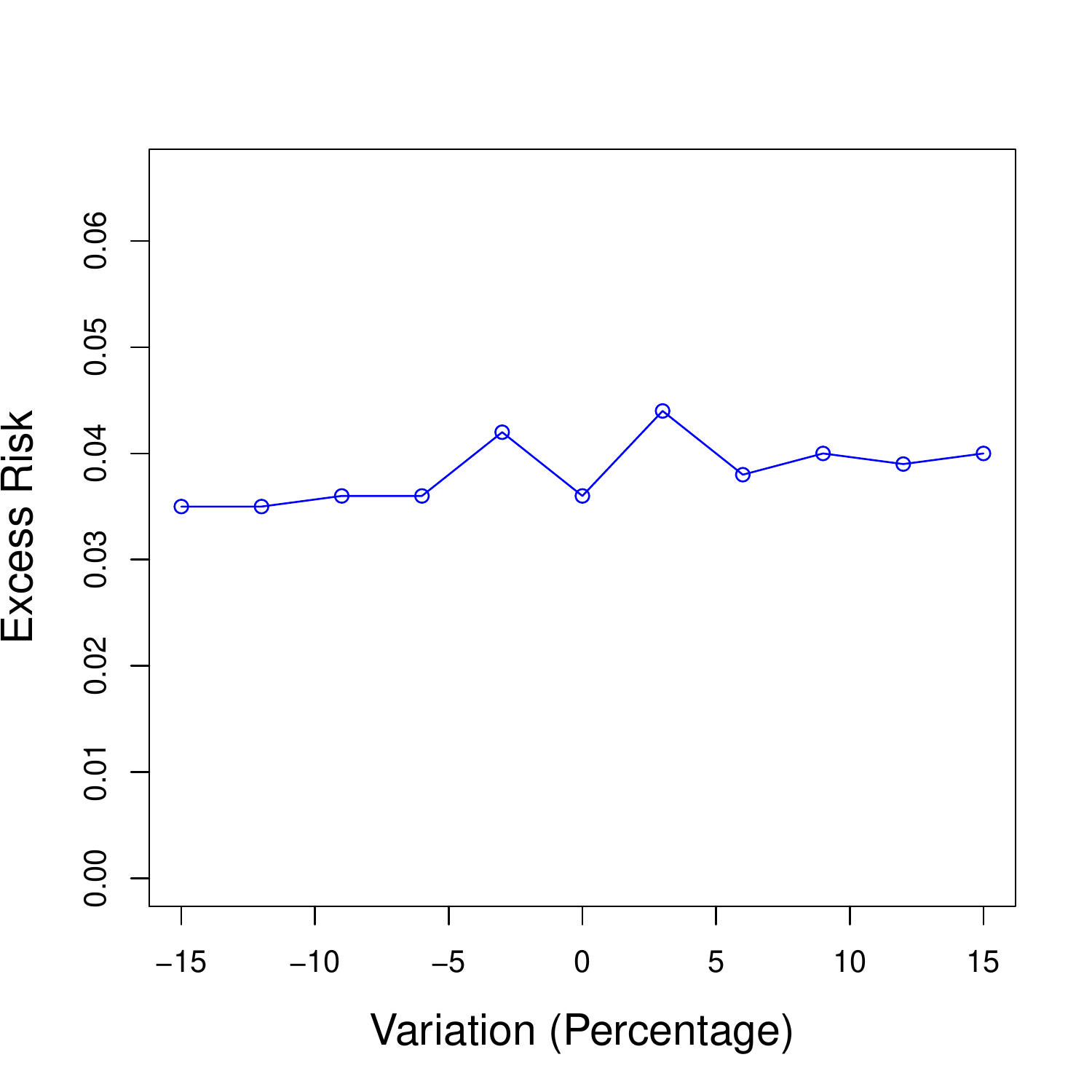}
\end{minipage}\hfill
\begin{minipage}{.49\textwidth}
\centering
\includegraphics[width=\linewidth]{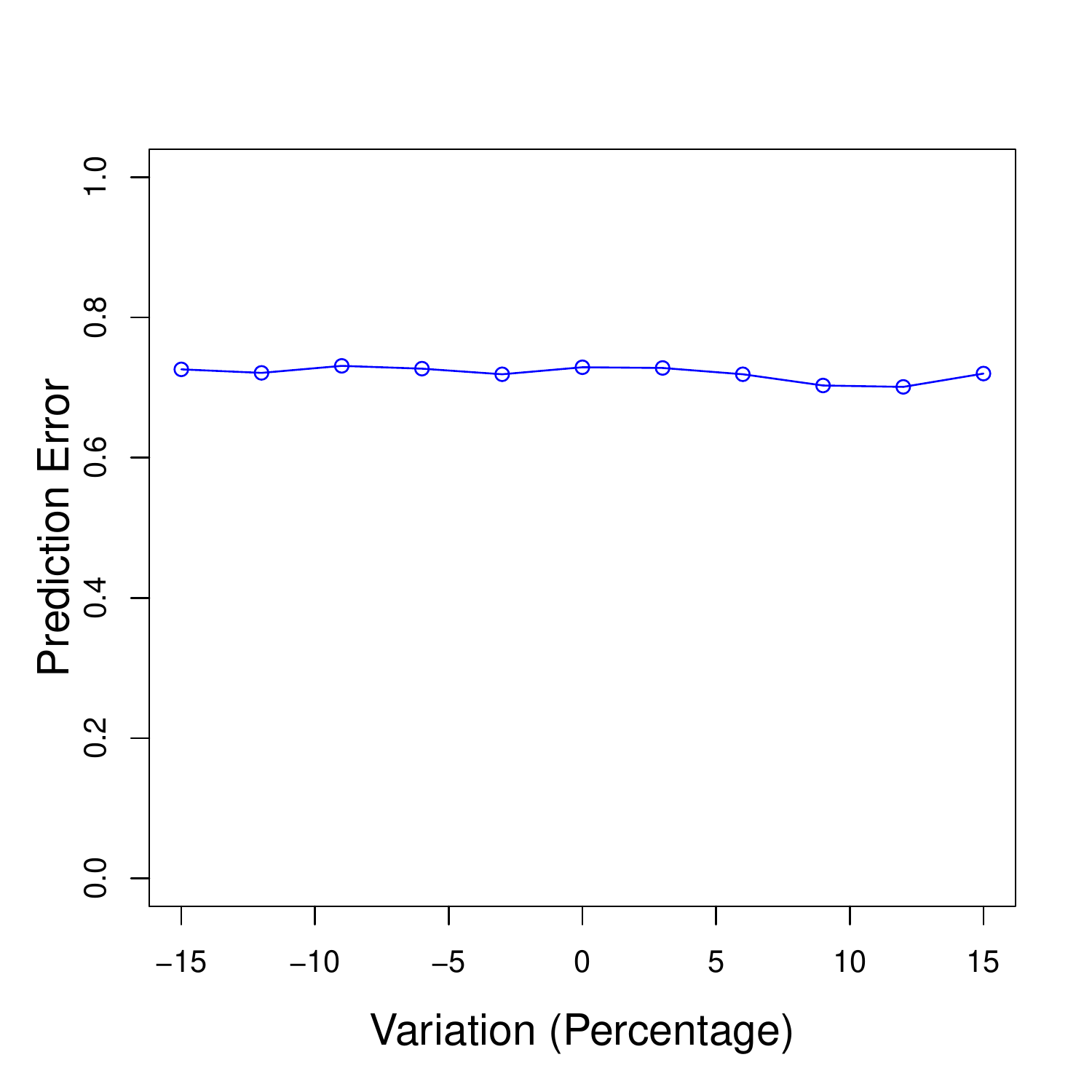}
\end{minipage}\hfill
\begin{minipage}{.49\textwidth}
\centering
\includegraphics[width=\linewidth]{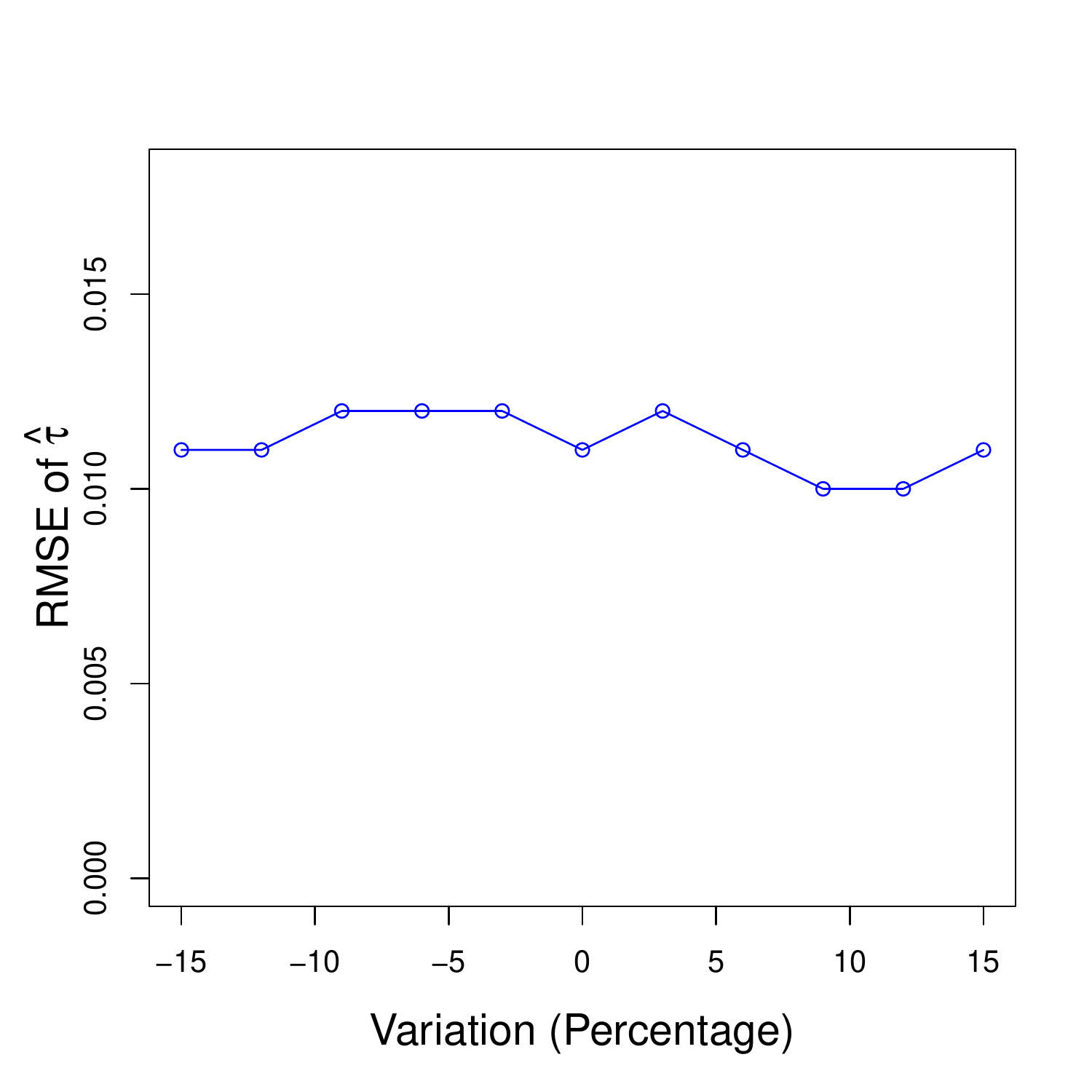}
\end{minipage}
\begin{minipage}{.49\textwidth}
\centering
\includegraphics[width=\linewidth]{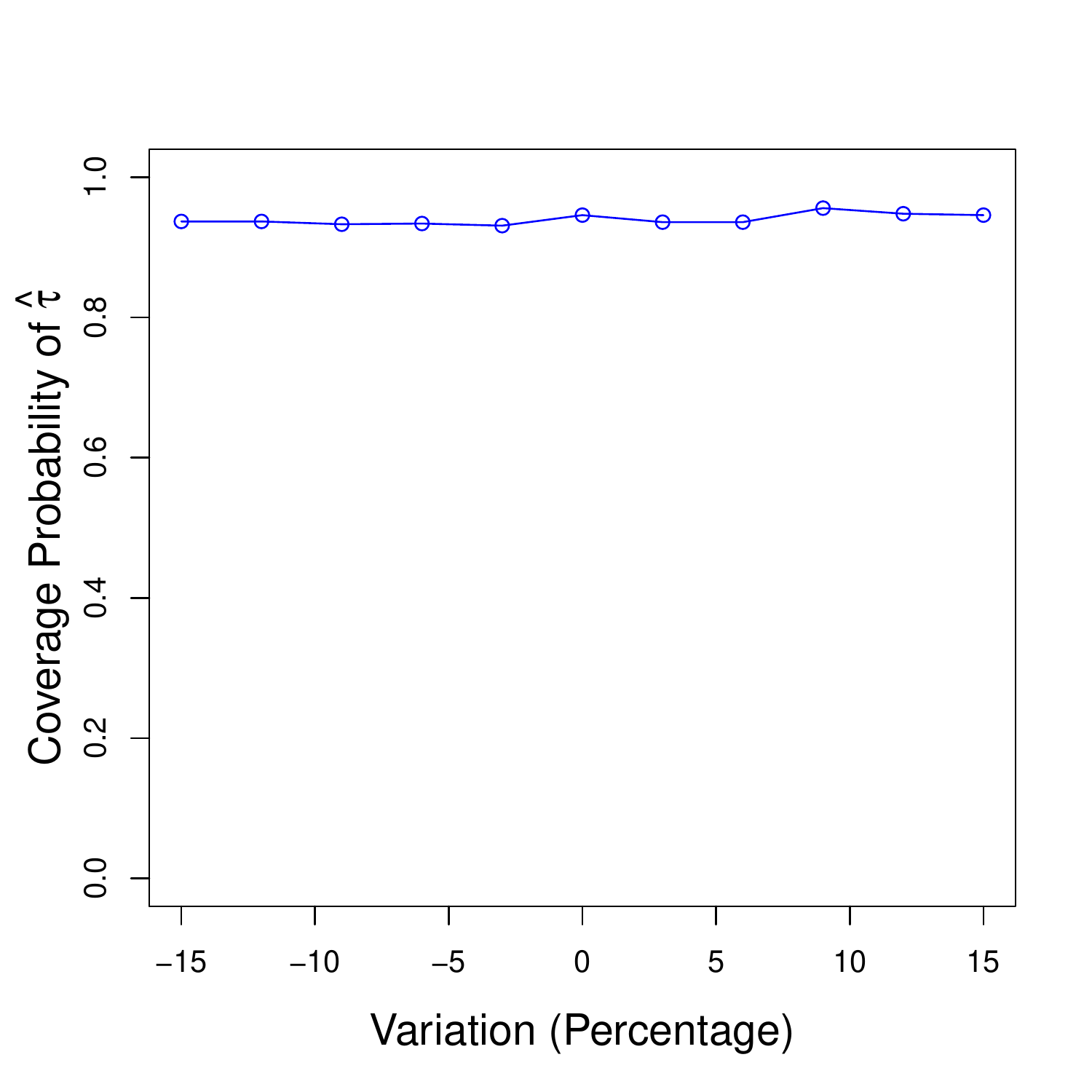}
\end{minipage}\hfill
\end{figure}

\newpage

\begin{figure}[tbph]
\centering
\caption{Sensitivity Analysis of $\gm^*$: Step 3a}\label{fig:sen-gm-s3a}
\begin{minipage}{.33\textwidth}
\centering
\includegraphics[width=\linewidth]{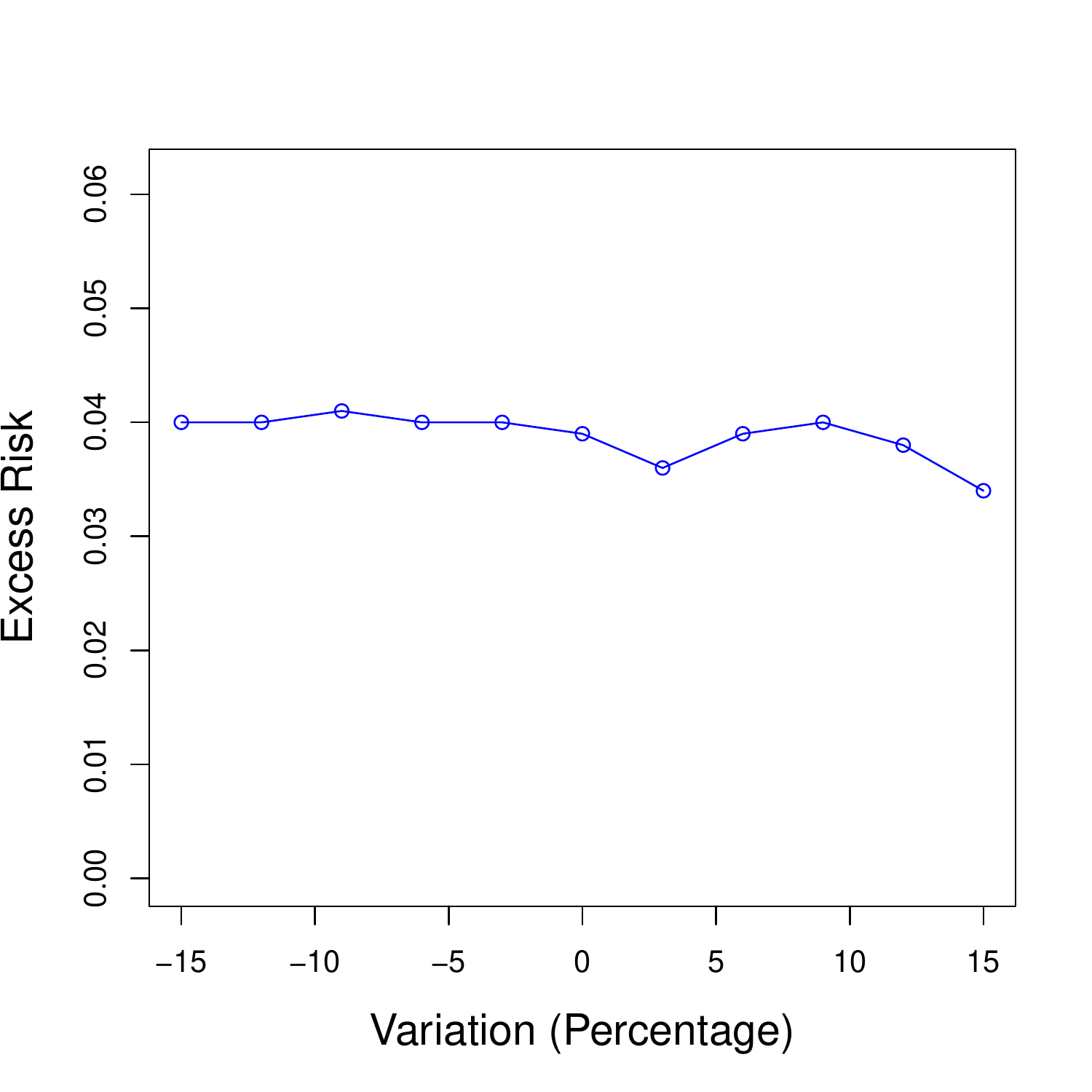}
\end{minipage}\hfill
\begin{minipage}{.33\textwidth}
\centering
\includegraphics[width=\linewidth]{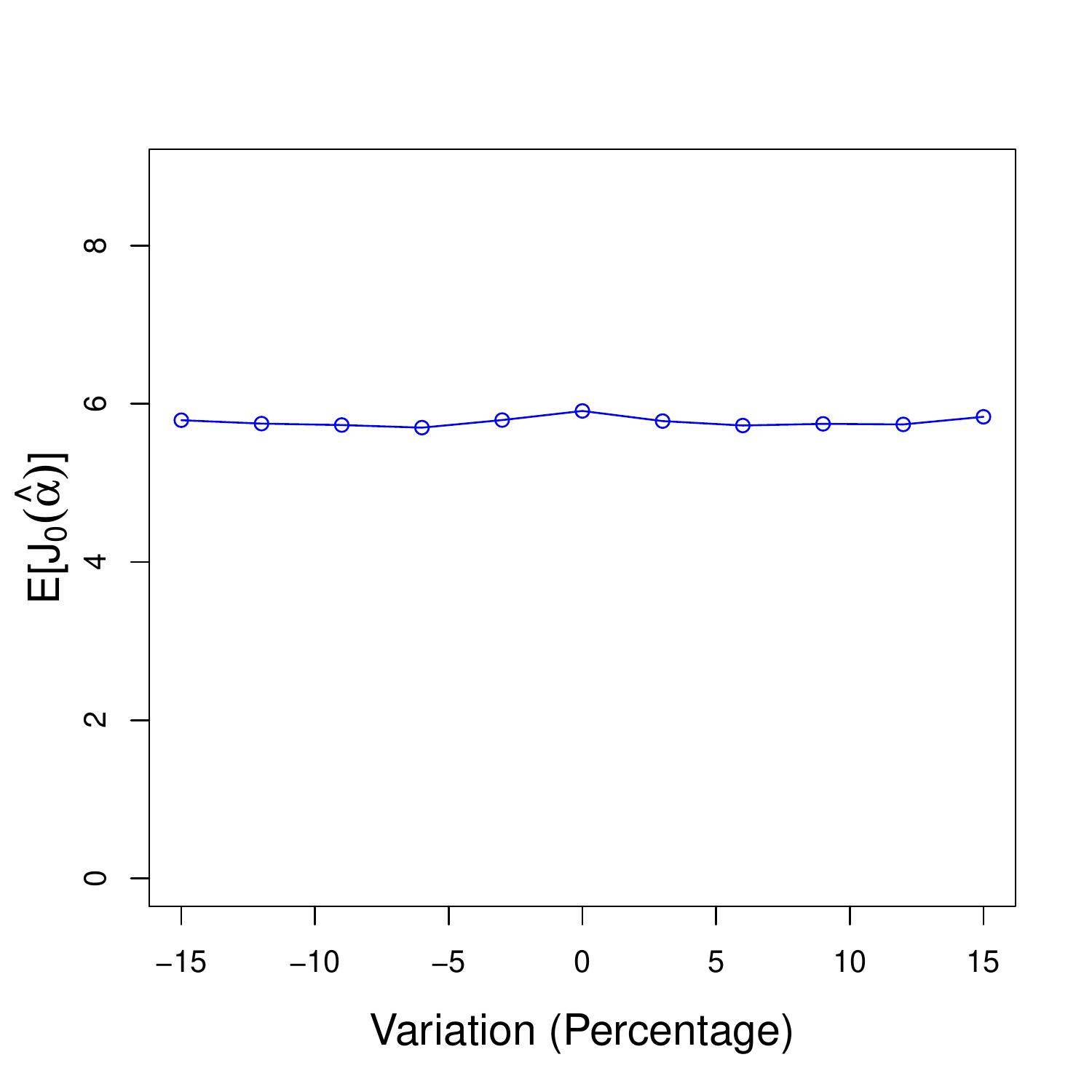}
\end{minipage}\hfill
\begin{minipage}{.33\textwidth}
\centering
\includegraphics[width=\linewidth]{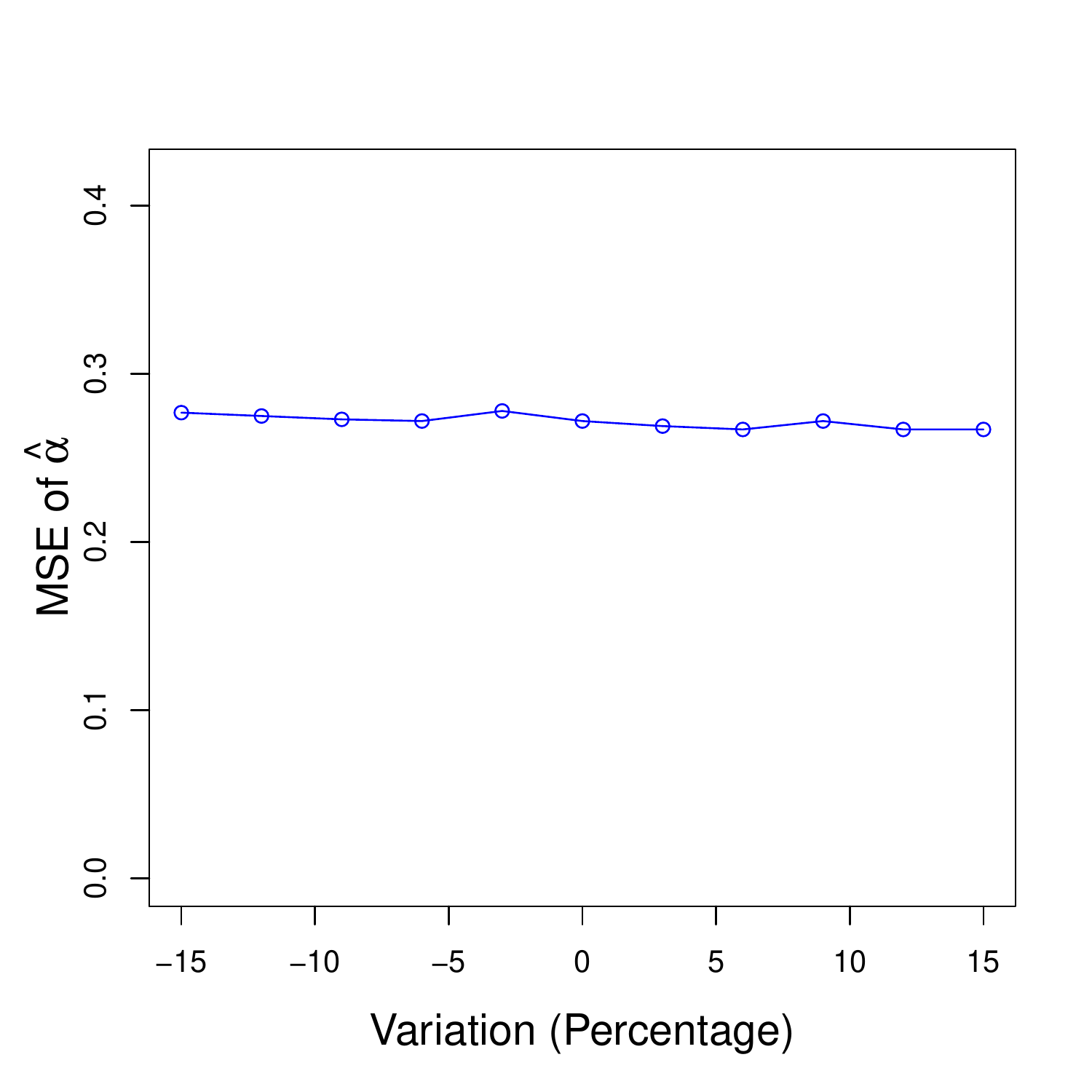}
\end{minipage}
\begin{minipage}{.33\textwidth}
\centering
\includegraphics[width=\linewidth]{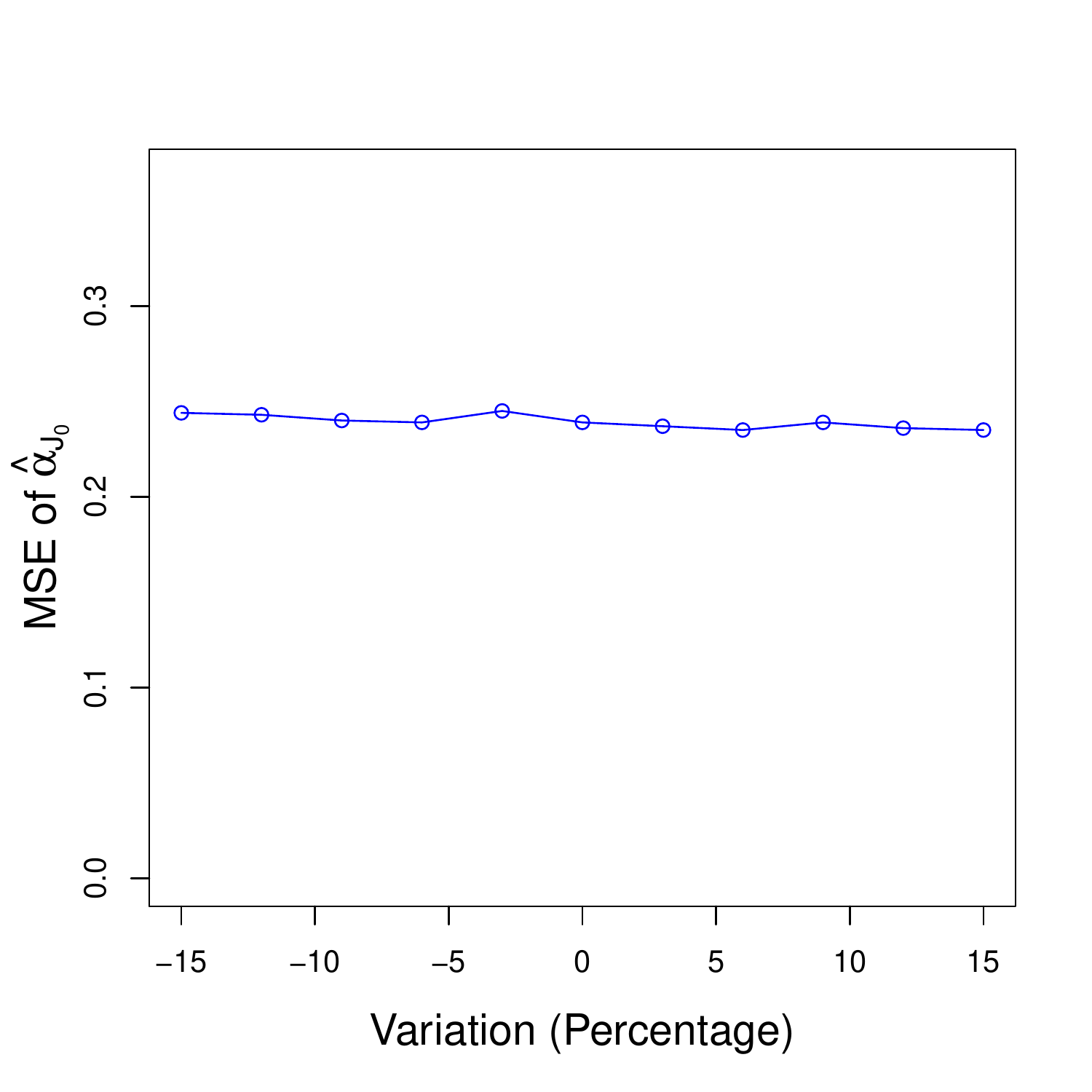}
\end{minipage}\hfill
\begin{minipage}{.33\textwidth}
\centering
\includegraphics[width=\linewidth]{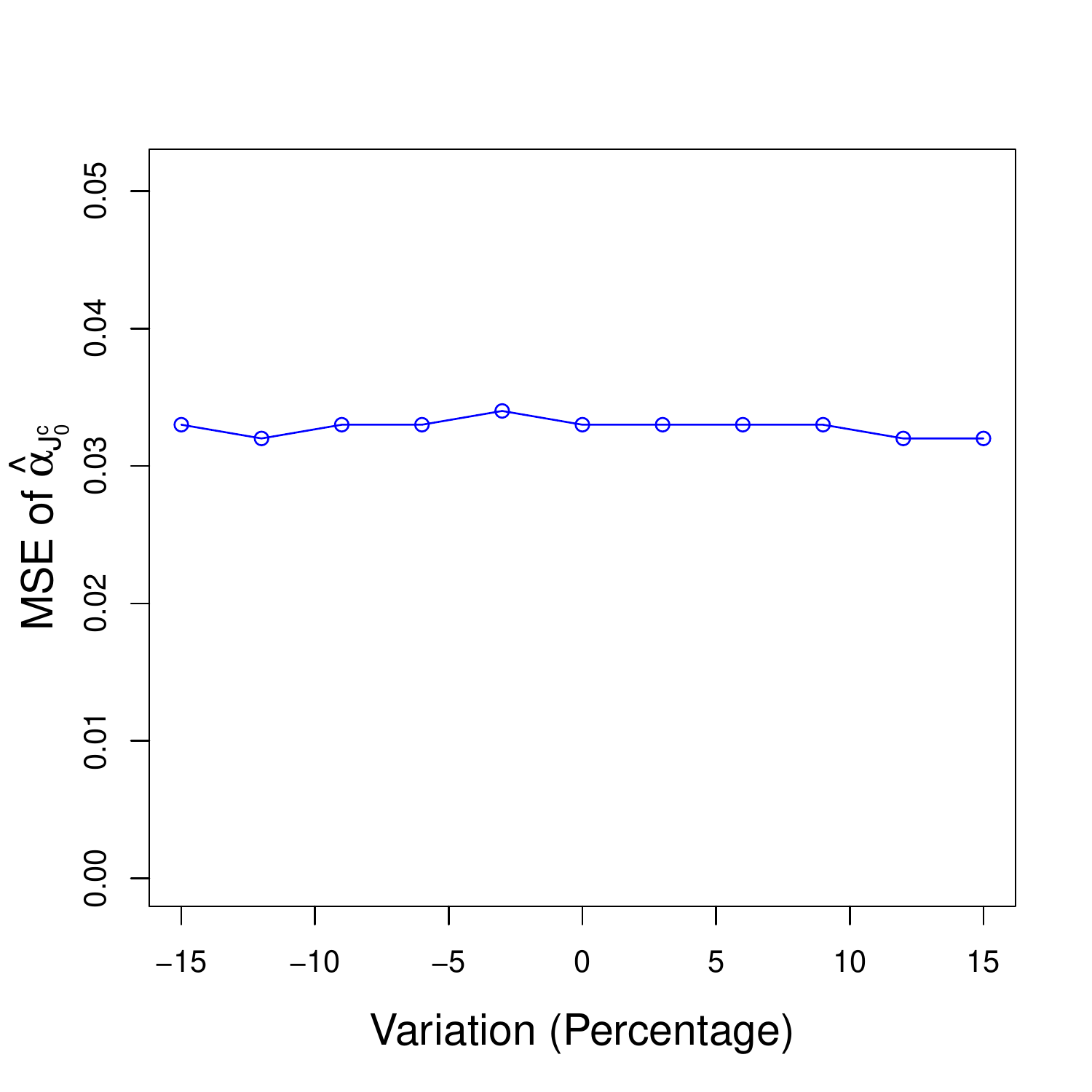}
\end{minipage}\hfill
\begin{minipage}{.33\textwidth}
\centering
\includegraphics[width=\linewidth]{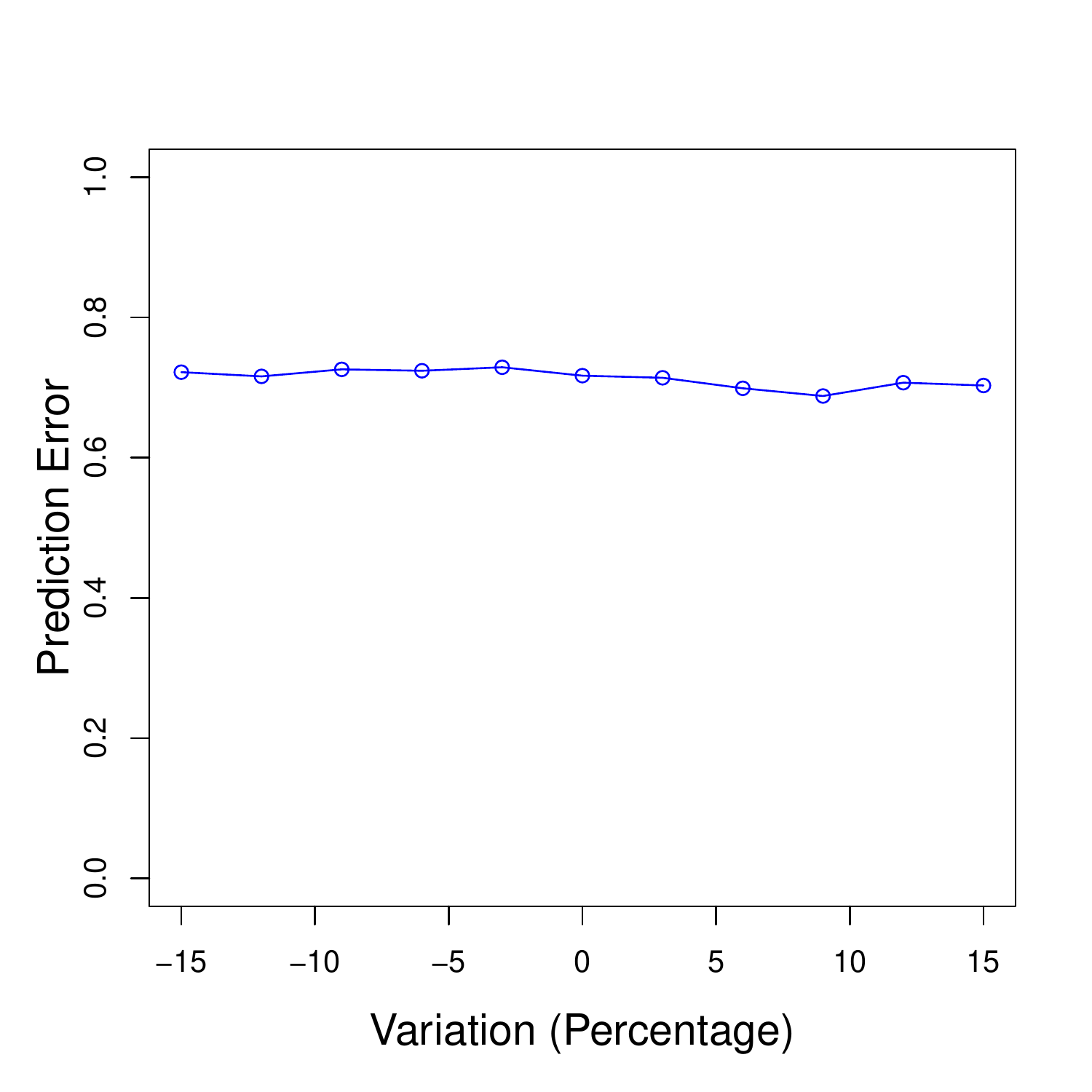}
\end{minipage}
\begin{minipage}{.33\textwidth}
\centering
\includegraphics[width=\linewidth]{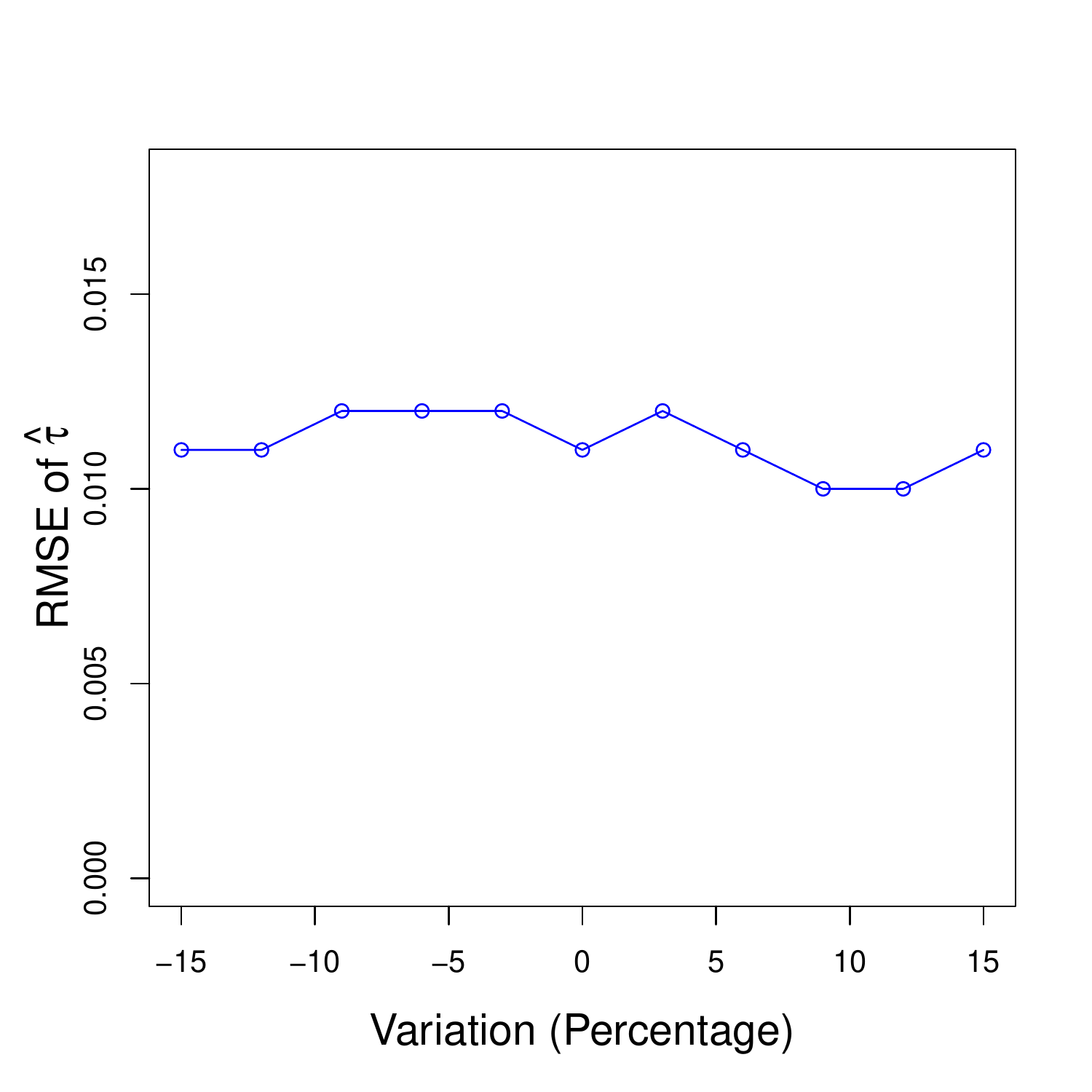}
\end{minipage}\hfill
\begin{minipage}{.33\textwidth}
\centering
\includegraphics[width=\linewidth]{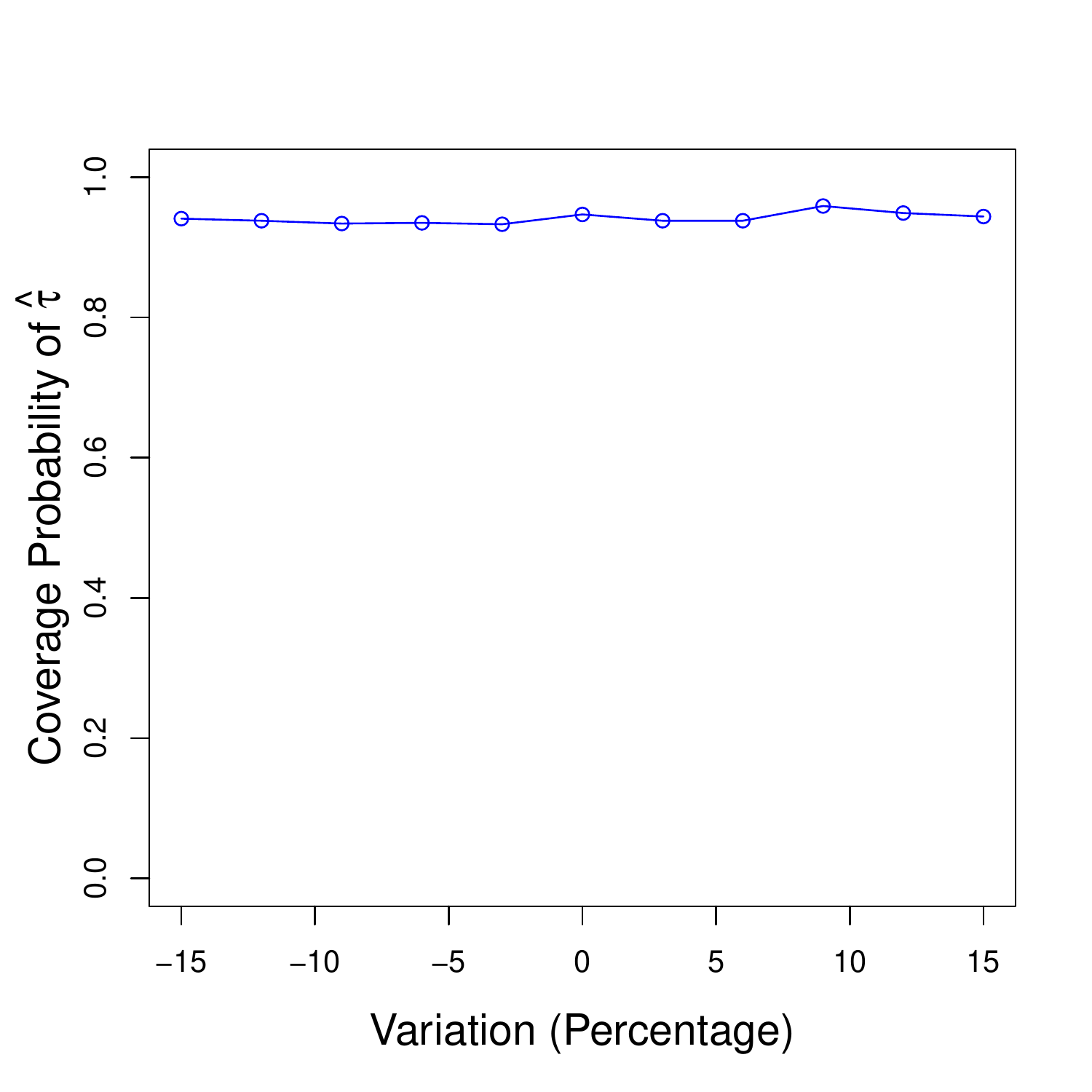}
\end{minipage}\hfill
\begin{minipage}{.33\textwidth}
\centering
\includegraphics[width=\linewidth]{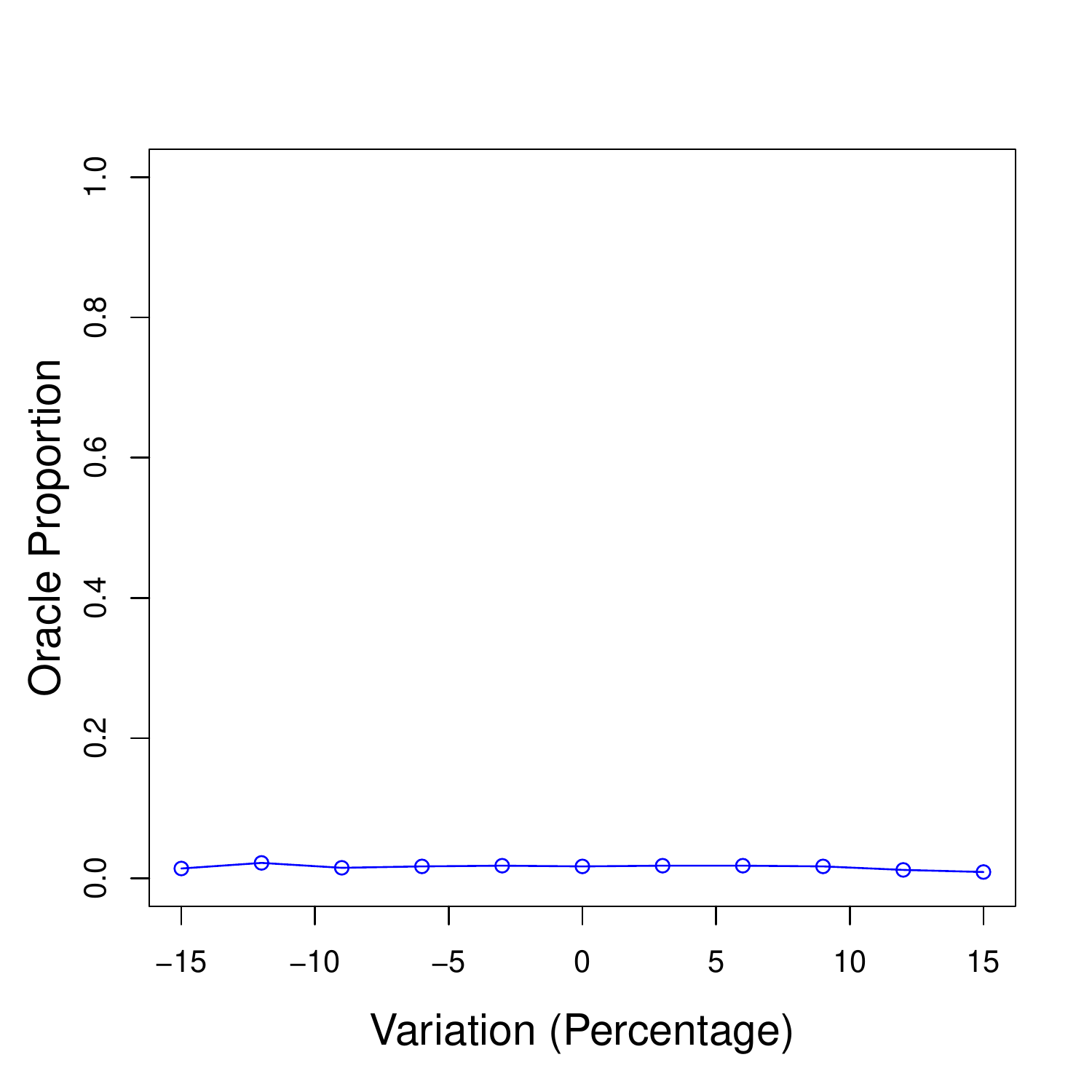}
\end{minipage}
\end{figure}

\newpage

\begin{figure}[tbph]
\centering
\caption{Sensitivity Analysis of $\gm^*$: Step 3b}\label{fig:sen-gm-s3b}
\begin{minipage}{.33\textwidth}
\centering
\includegraphics[width=\linewidth]{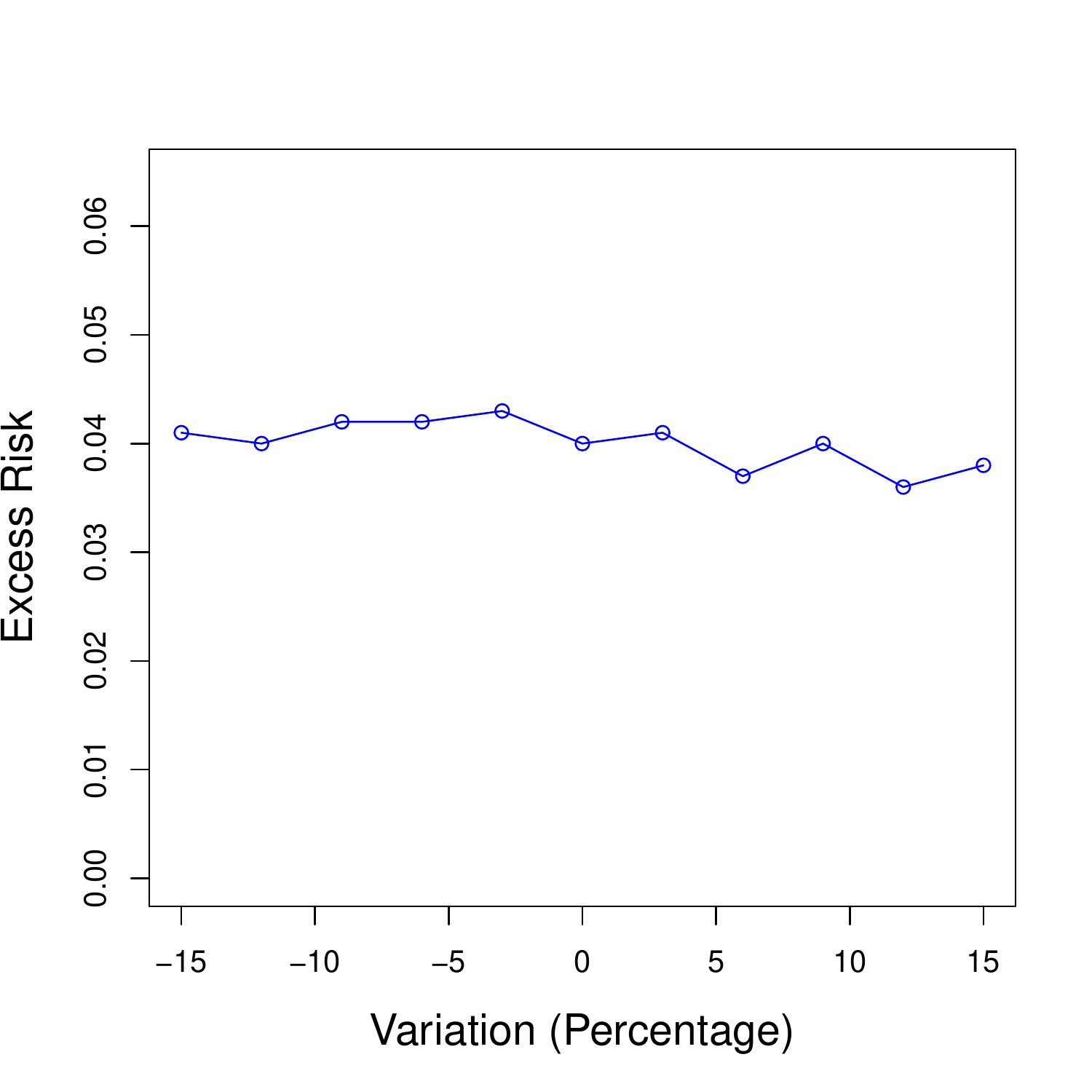}
\end{minipage}\hfill
\begin{minipage}{.33\textwidth}
\centering
\includegraphics[width=\linewidth]{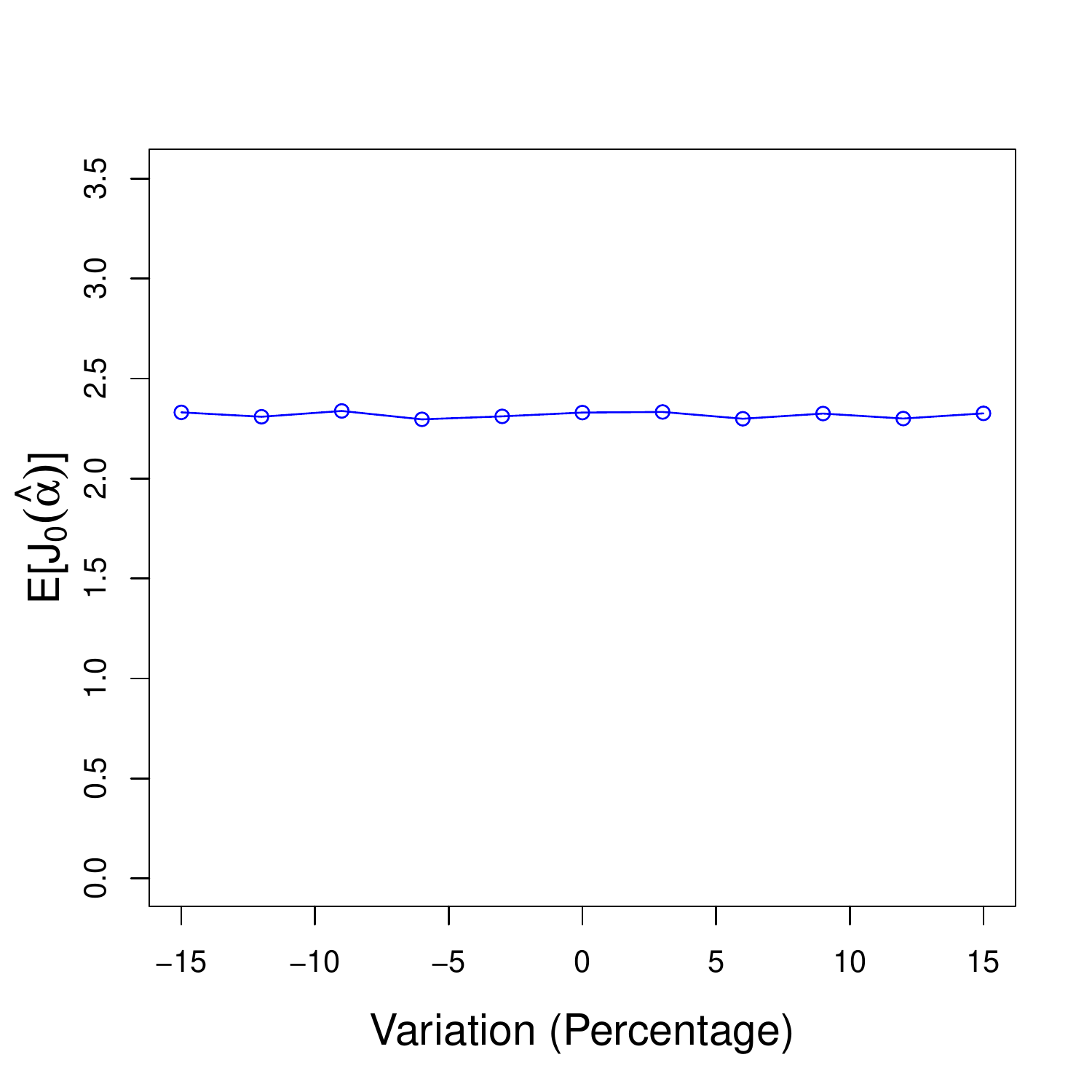}
\end{minipage}\hfill
\begin{minipage}{.33\textwidth}
\centering
\includegraphics[width=\linewidth]{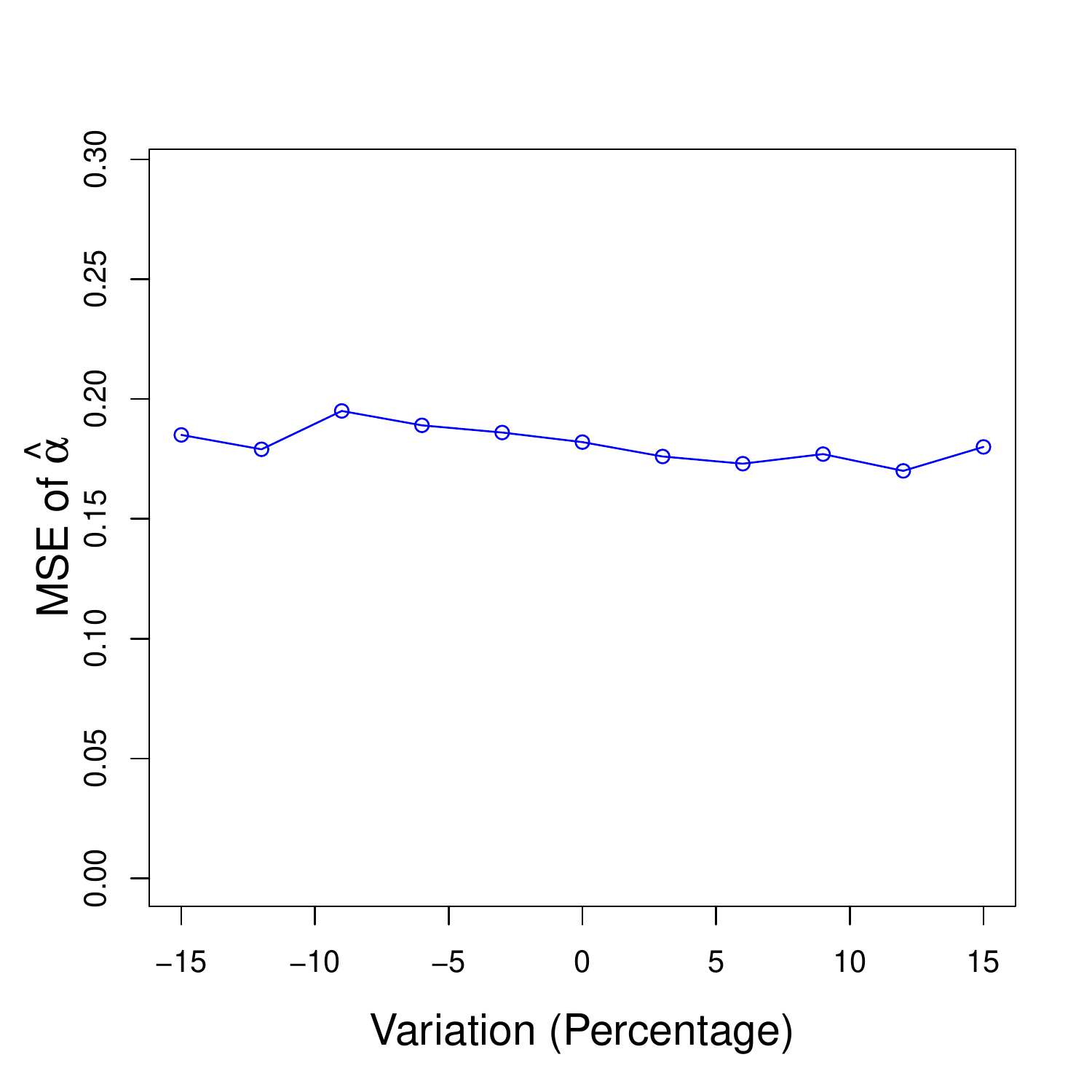}
\end{minipage}
\begin{minipage}{.33\textwidth}
\centering
\includegraphics[width=\linewidth]{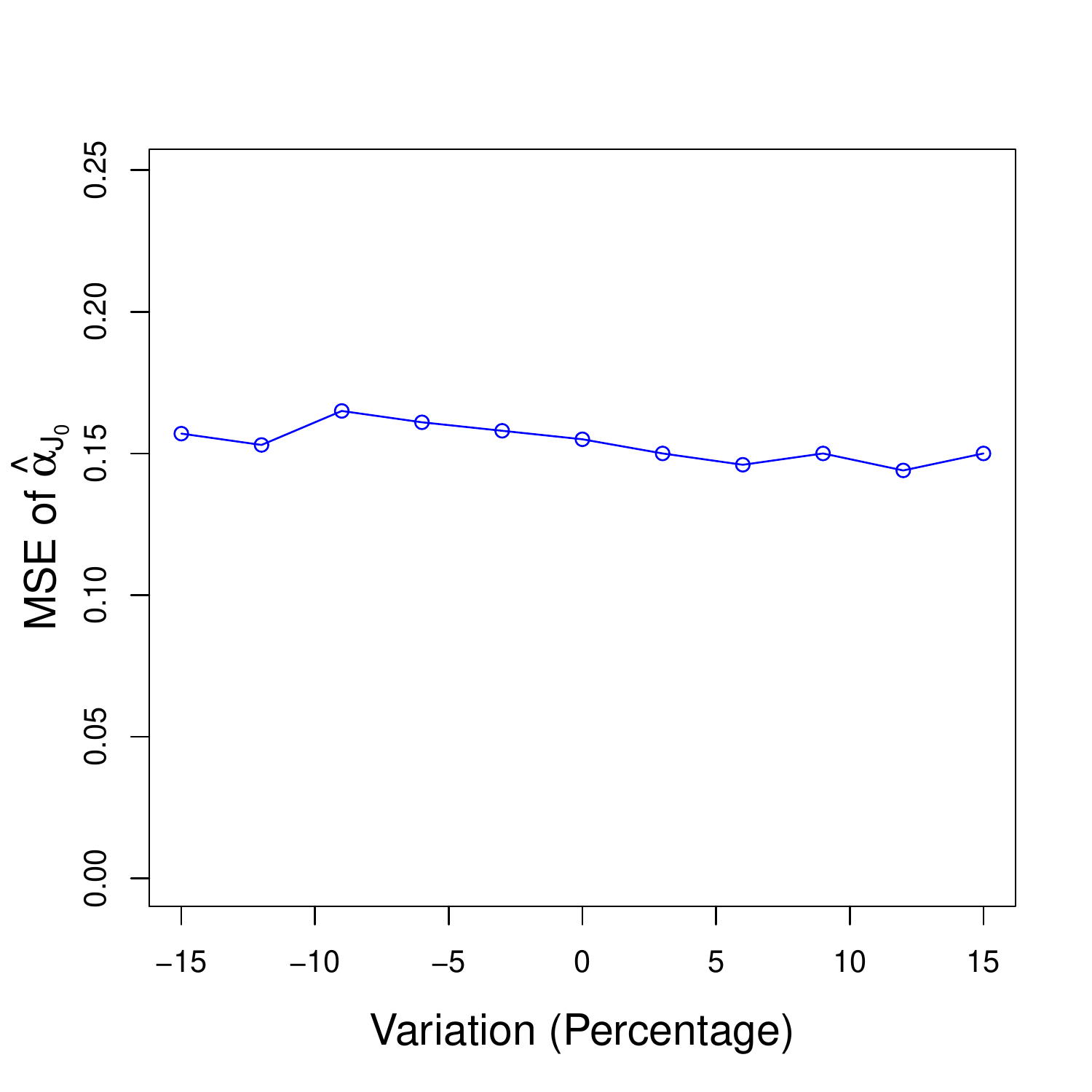}
\end{minipage}\hfill
\begin{minipage}{.33\textwidth}
\centering
\includegraphics[width=\linewidth]{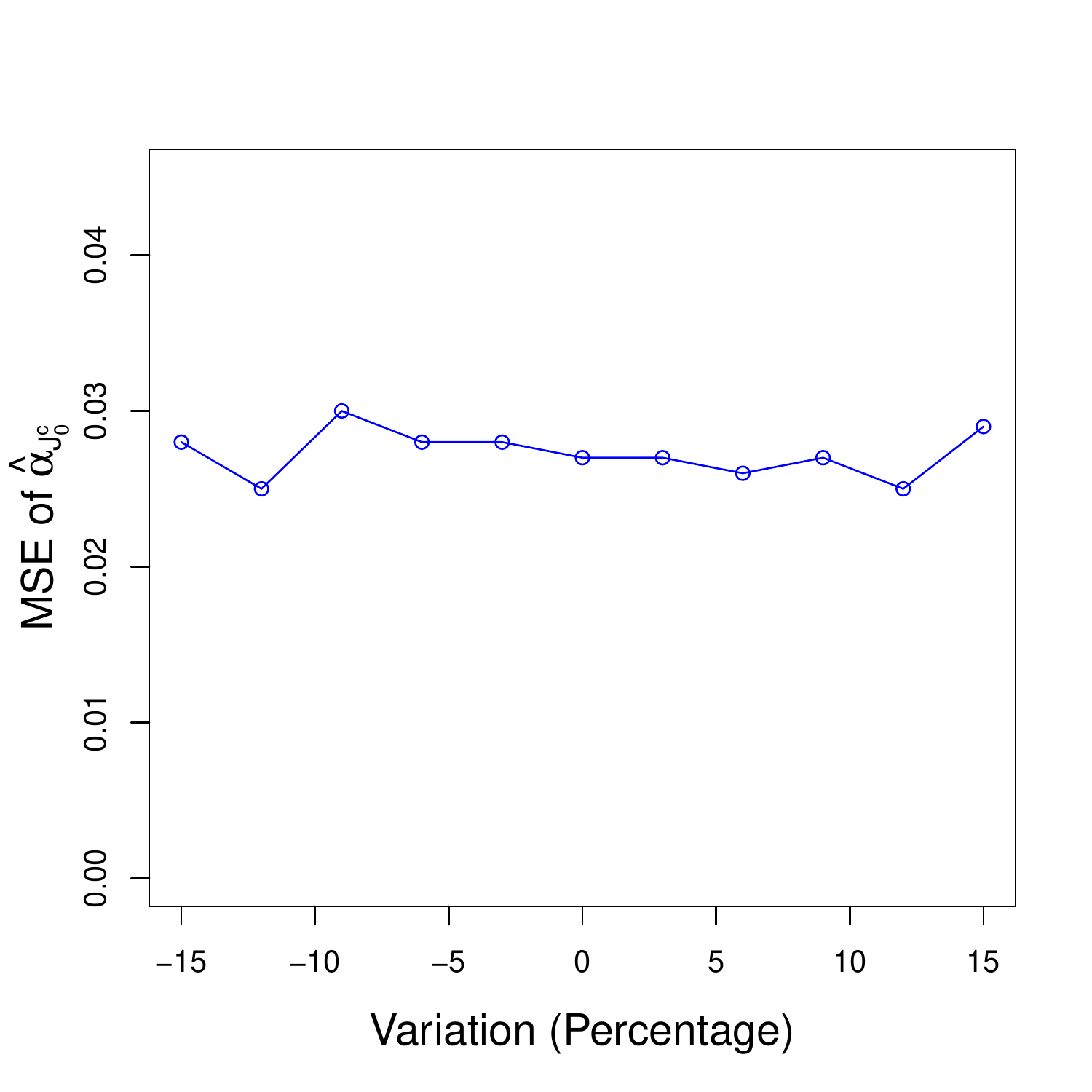}
\end{minipage}\hfill
\begin{minipage}{.33\textwidth}
\centering
\includegraphics[width=\linewidth]{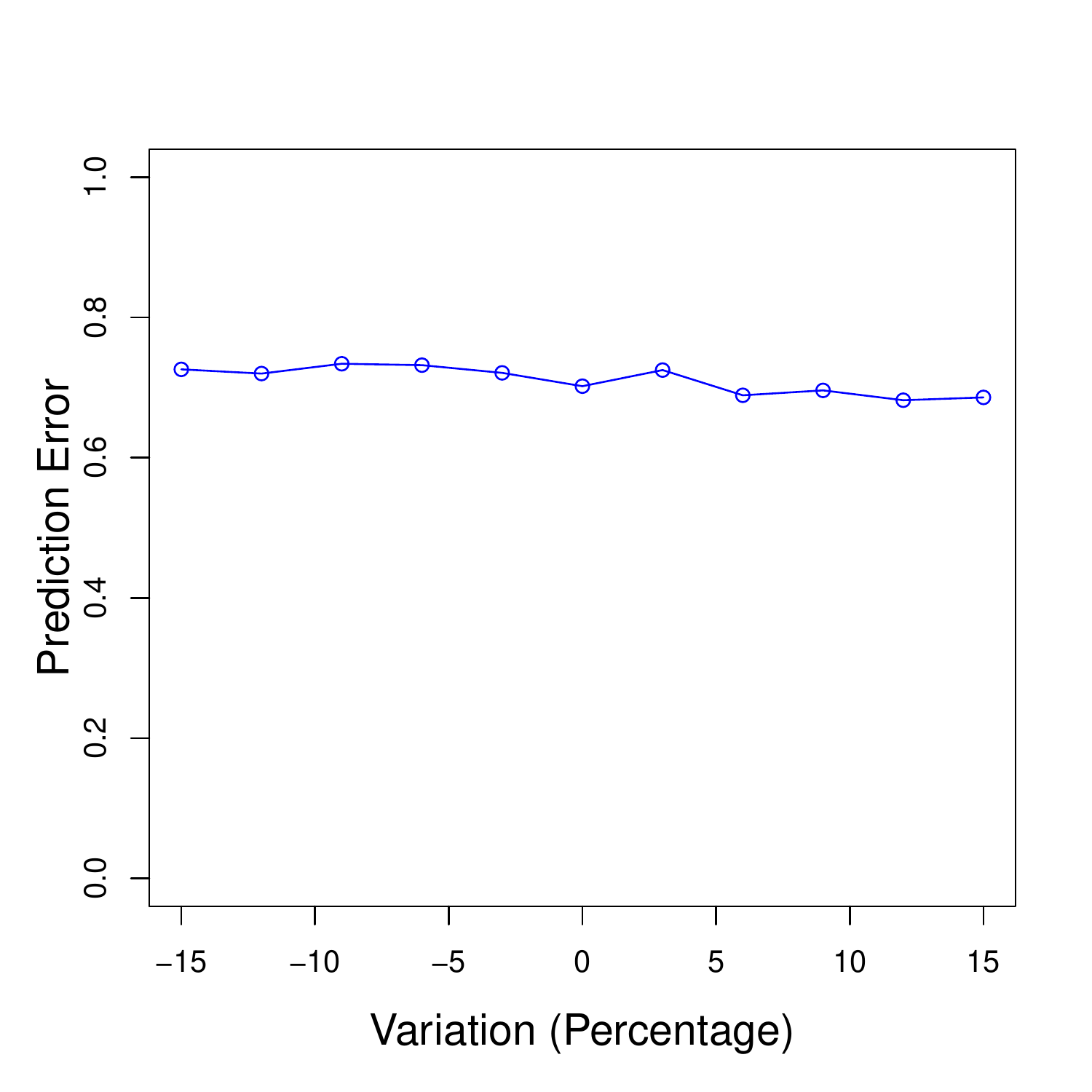}
\end{minipage}
\begin{minipage}{.33\textwidth}
\centering
\includegraphics[width=\linewidth]{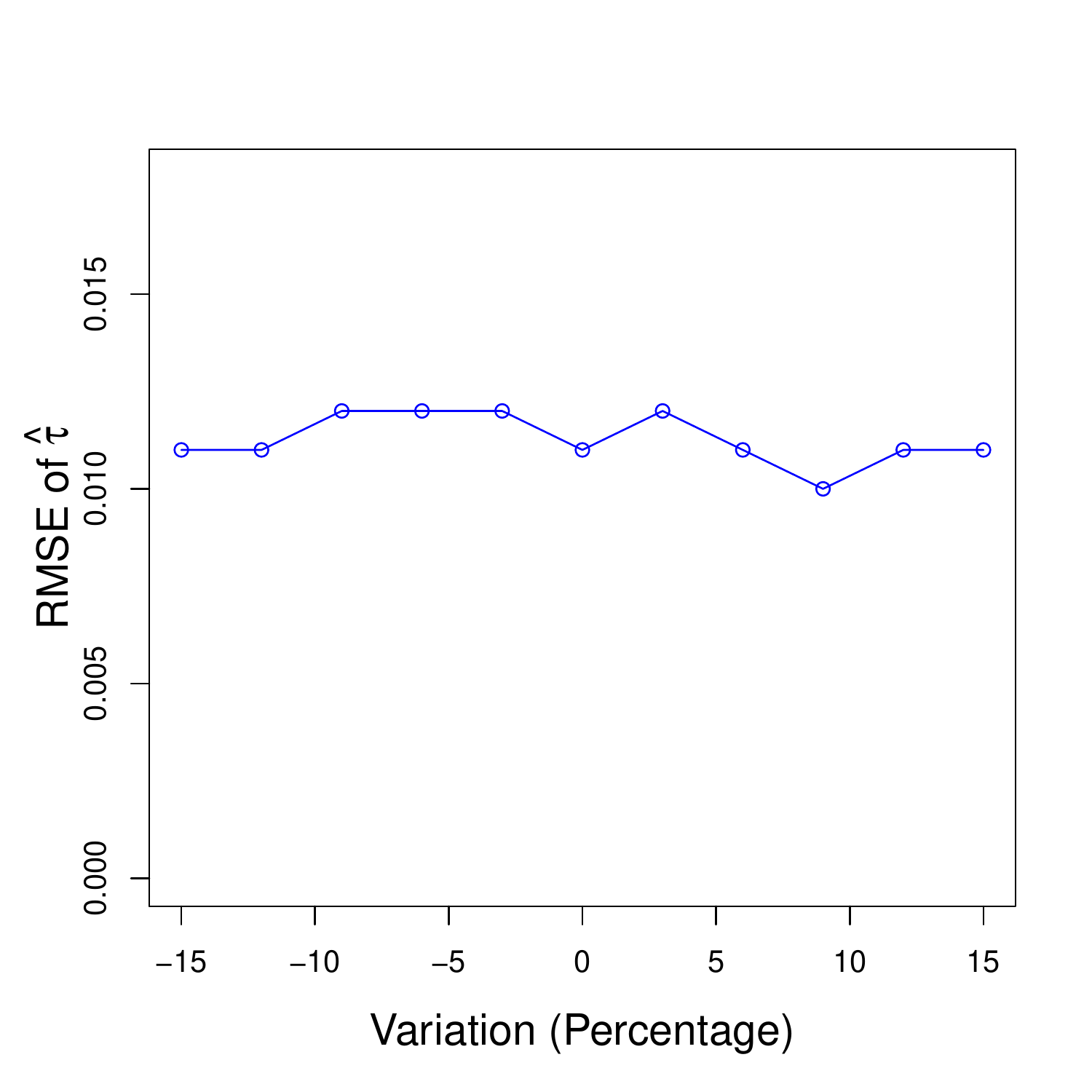}
\end{minipage}\hfill
\begin{minipage}{.33\textwidth}
\centering
\includegraphics[width=\linewidth]{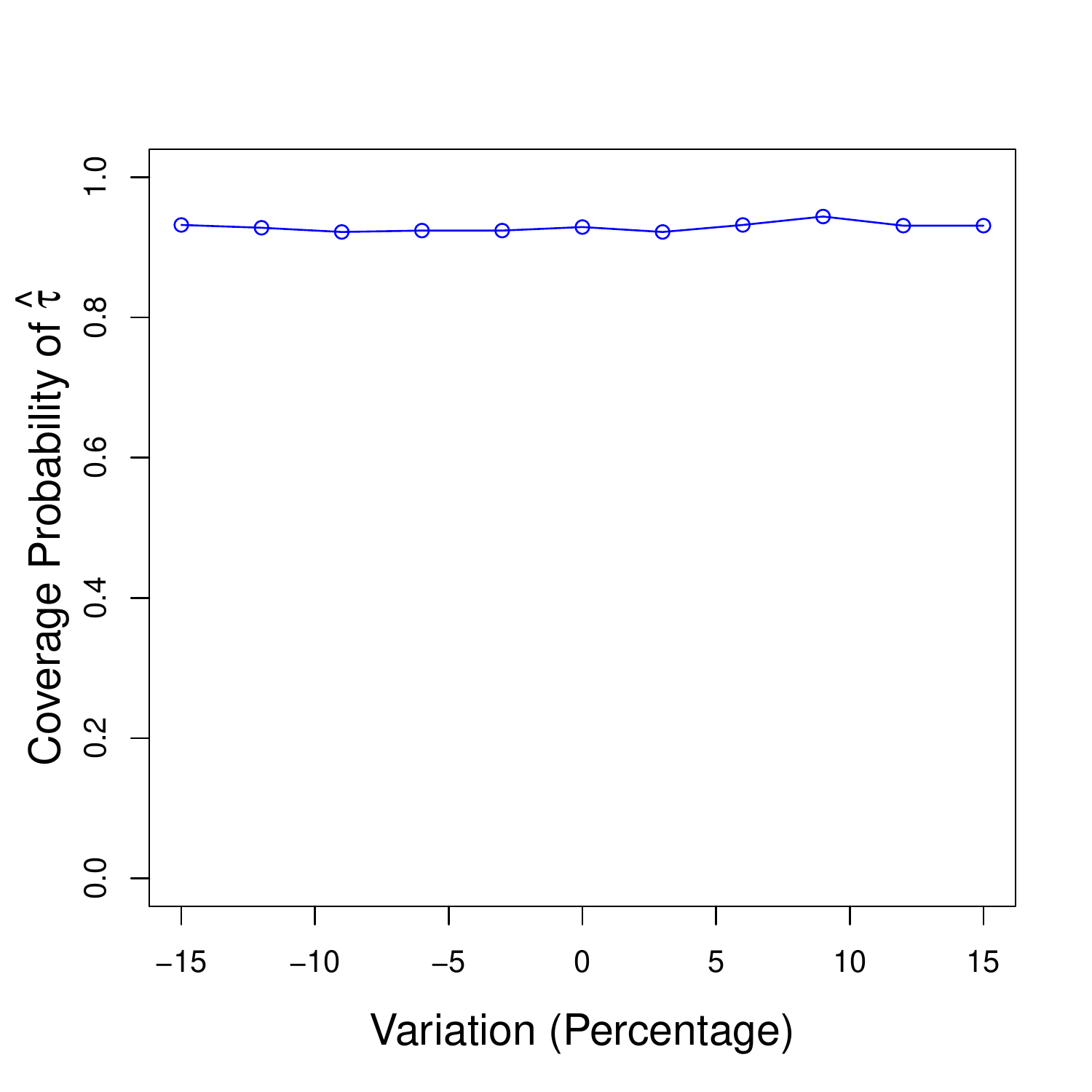}
\end{minipage}\hfill
\begin{minipage}{.33\textwidth}
\centering
\includegraphics[width=\linewidth]{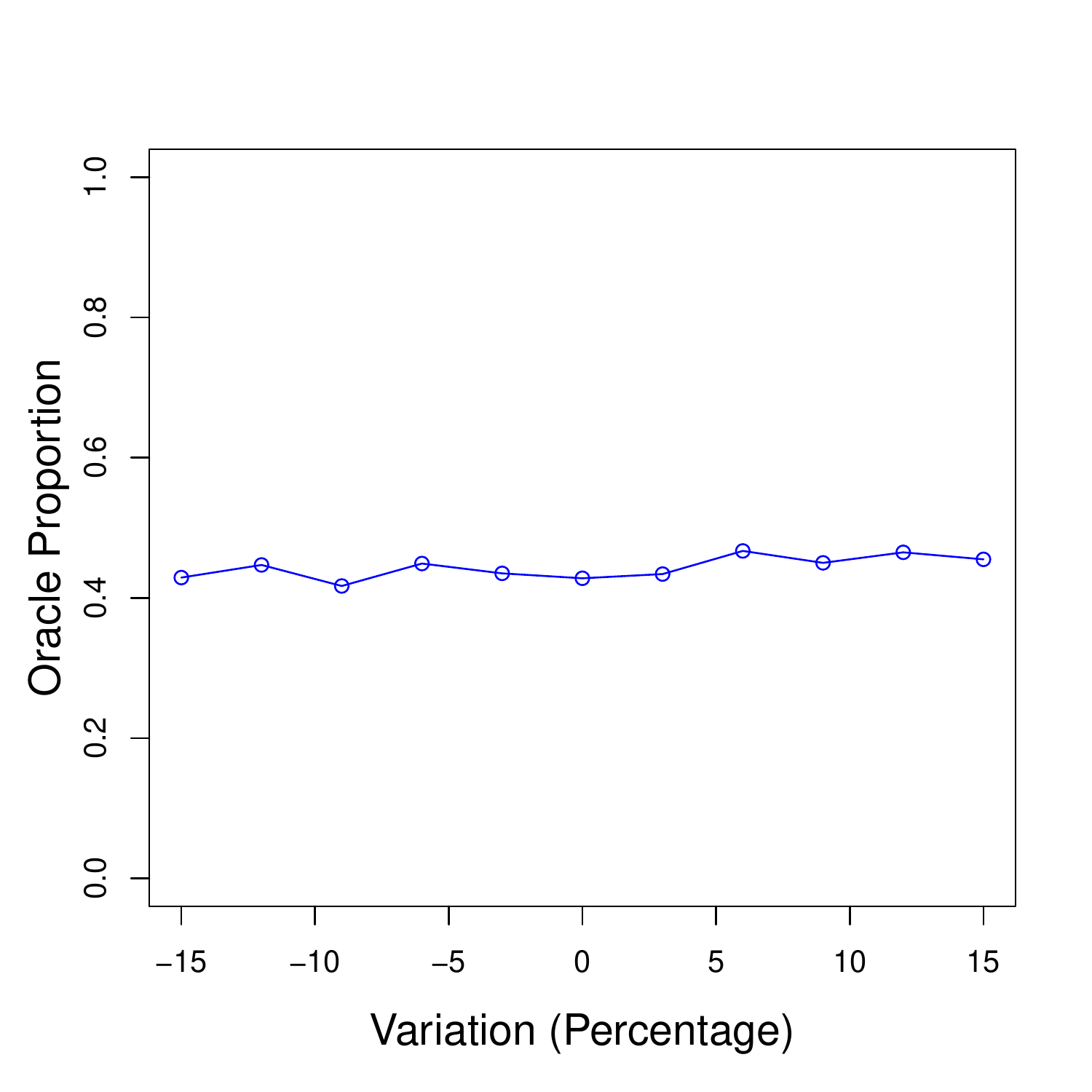}
\end{minipage}
\end{figure}

\newpage

%
%
%

\begin{figure}[tbph]
\centering
\caption{Sensitivity Analysis of $c_1$: Step 1}\label{fig:sen-c1-s1}
\begin{minipage}{.33\textwidth}
\centering
\includegraphics[width=\linewidth]{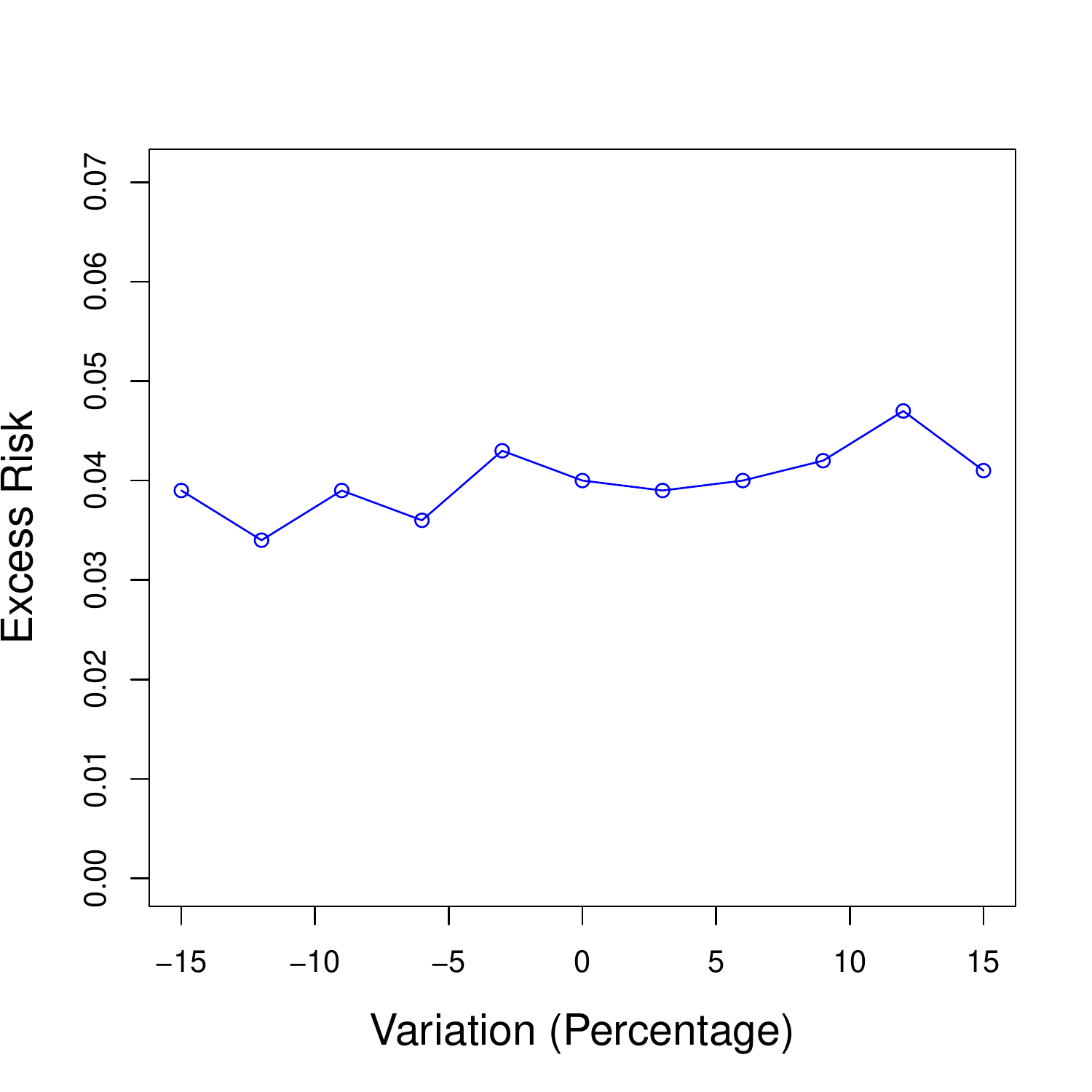}
\end{minipage}\hfill
\begin{minipage}{.33\textwidth}
\centering
\includegraphics[width=\linewidth]{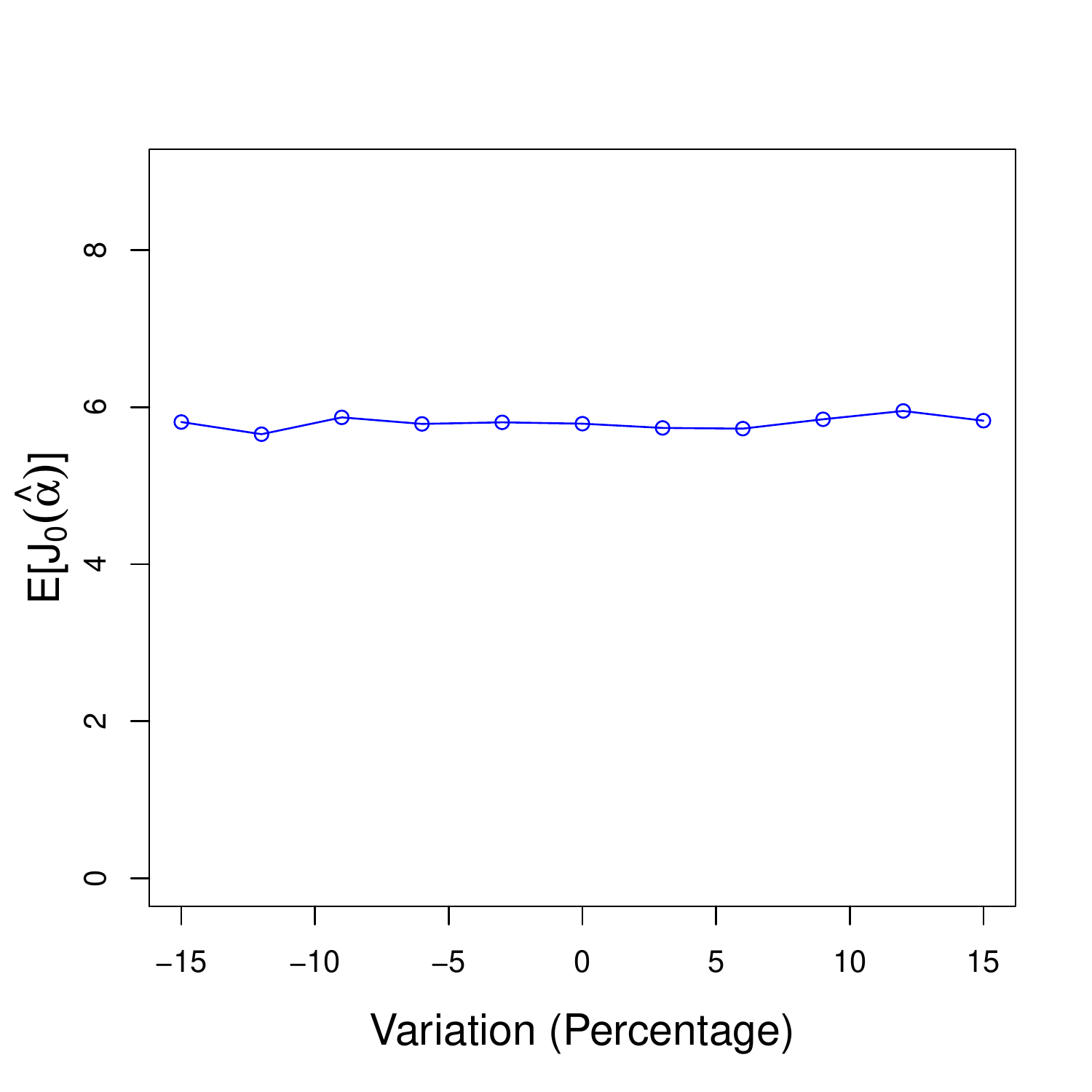}
\end{minipage}\hfill
\begin{minipage}{.33\textwidth}
\centering
\includegraphics[width=\linewidth]{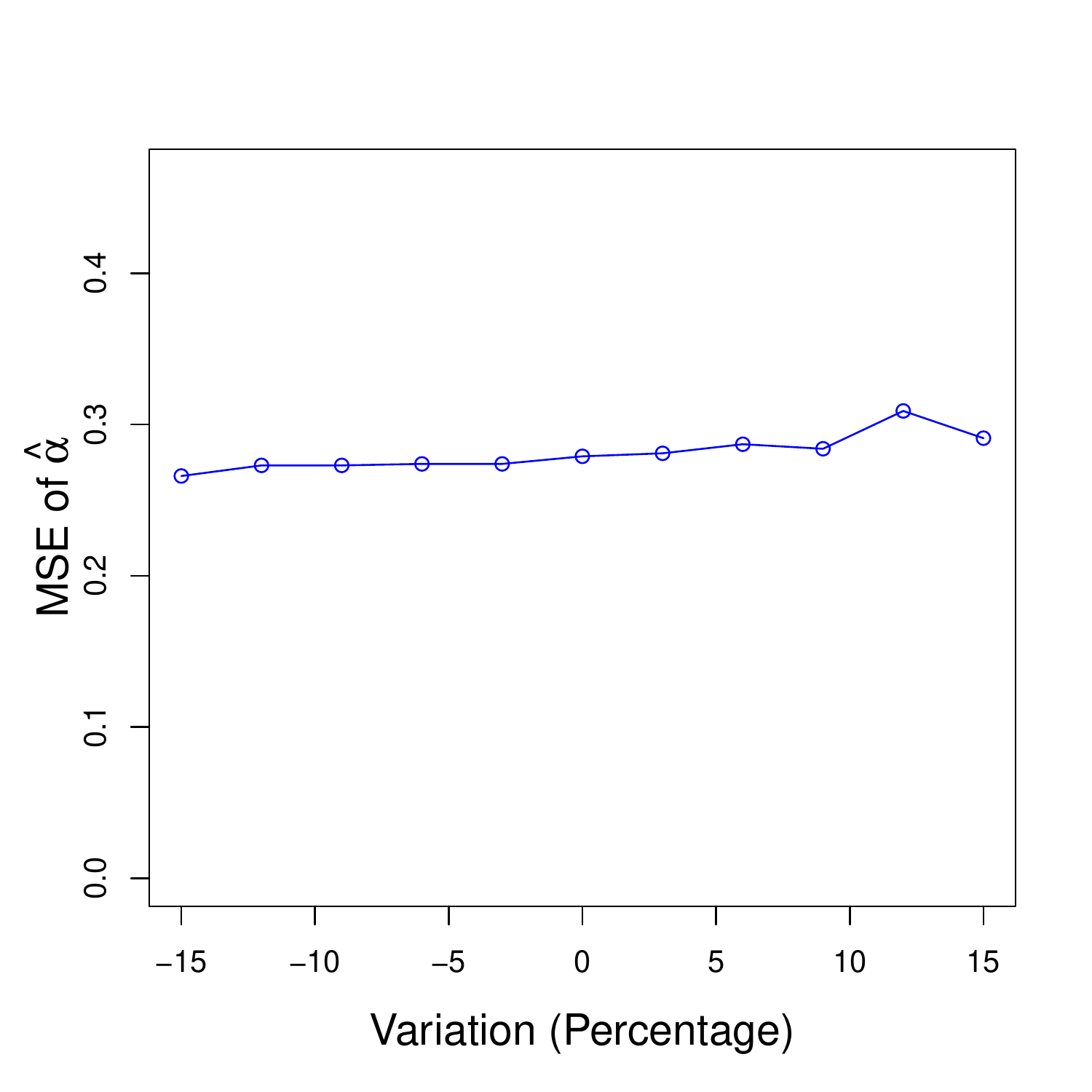}
\end{minipage}
\begin{minipage}{.33\textwidth}
\centering
\includegraphics[width=\linewidth]{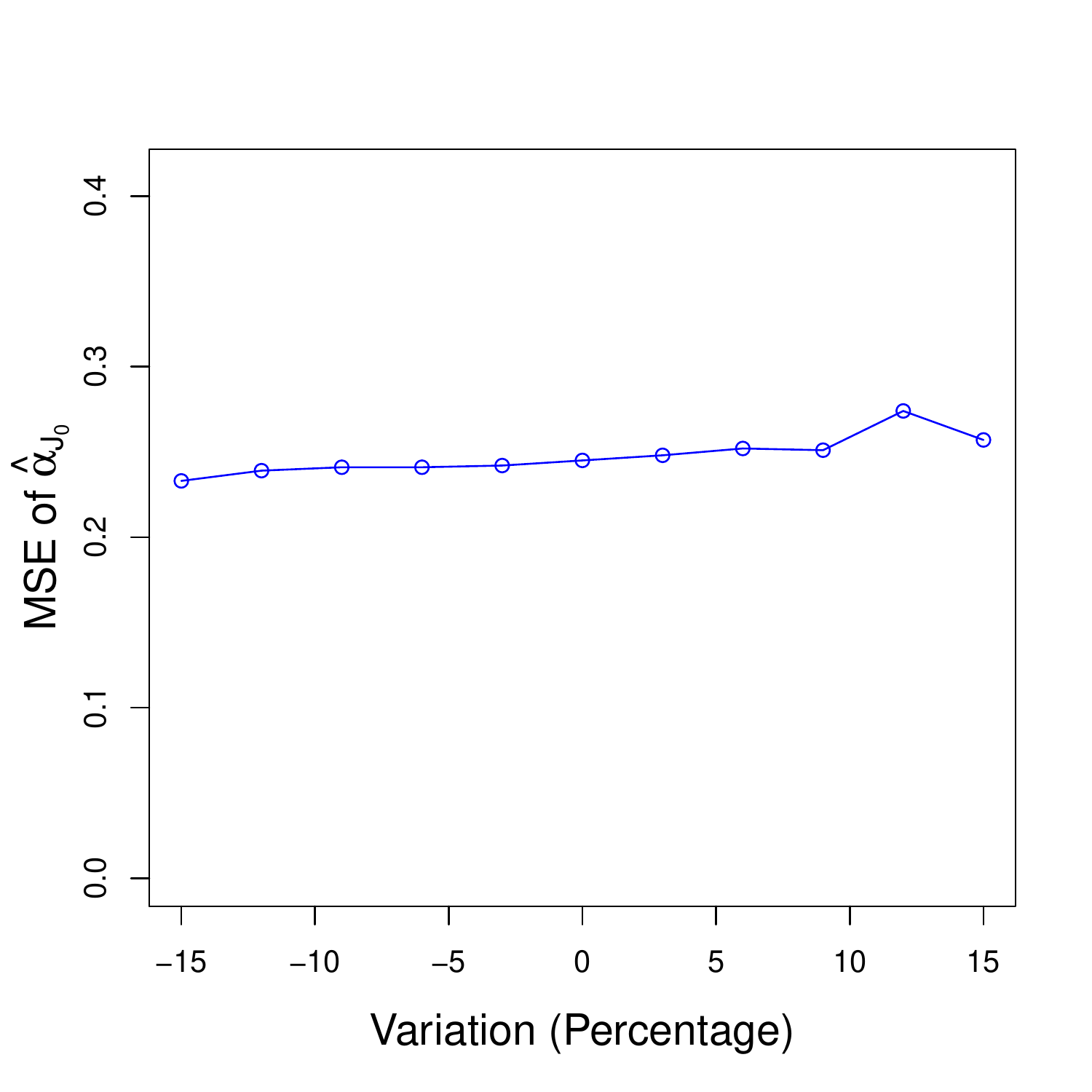}
\end{minipage}\hfill
\begin{minipage}{.33\textwidth}
\centering
\includegraphics[width=\linewidth]{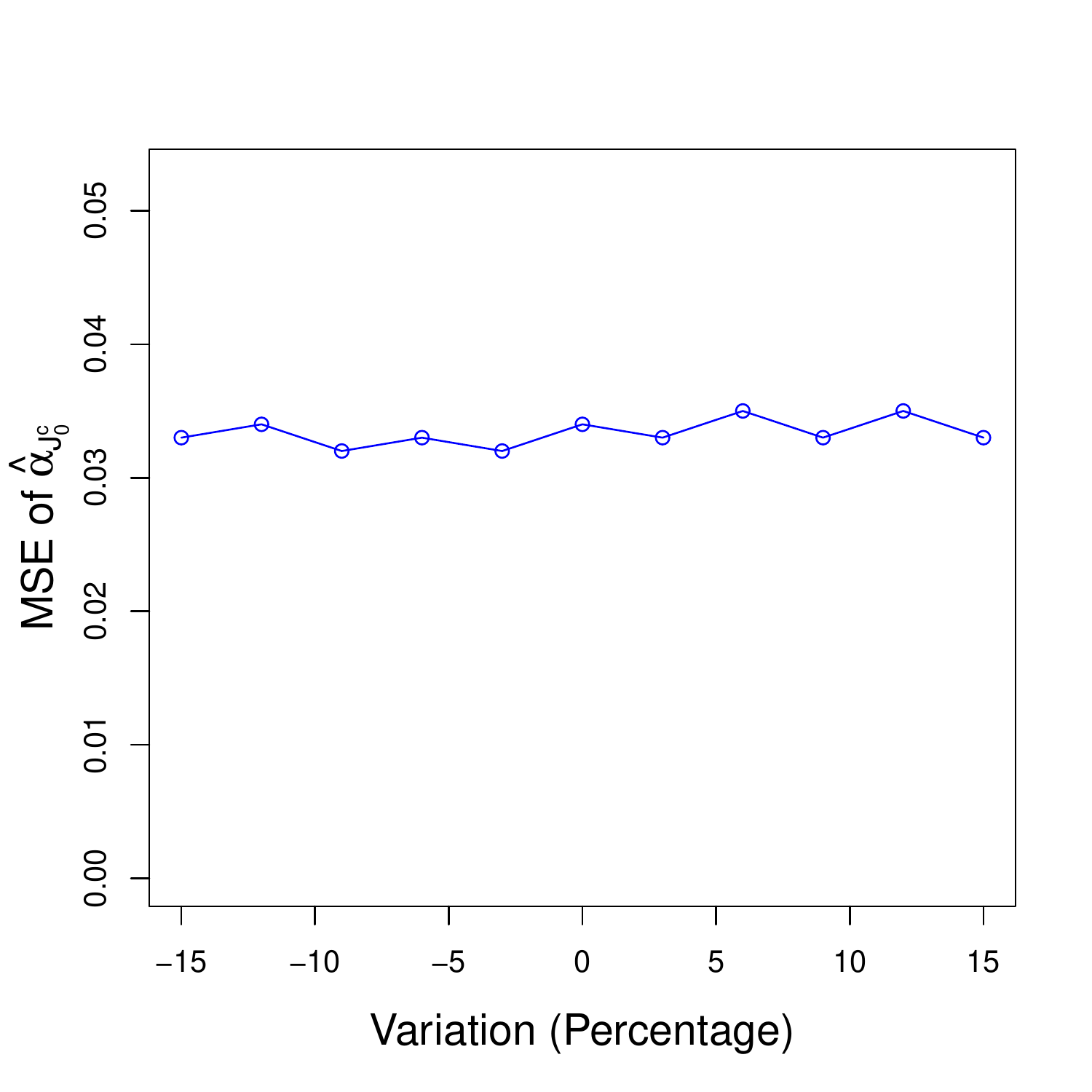}
\end{minipage}\hfill
\begin{minipage}{.33\textwidth}
\centering
\includegraphics[width=\linewidth]{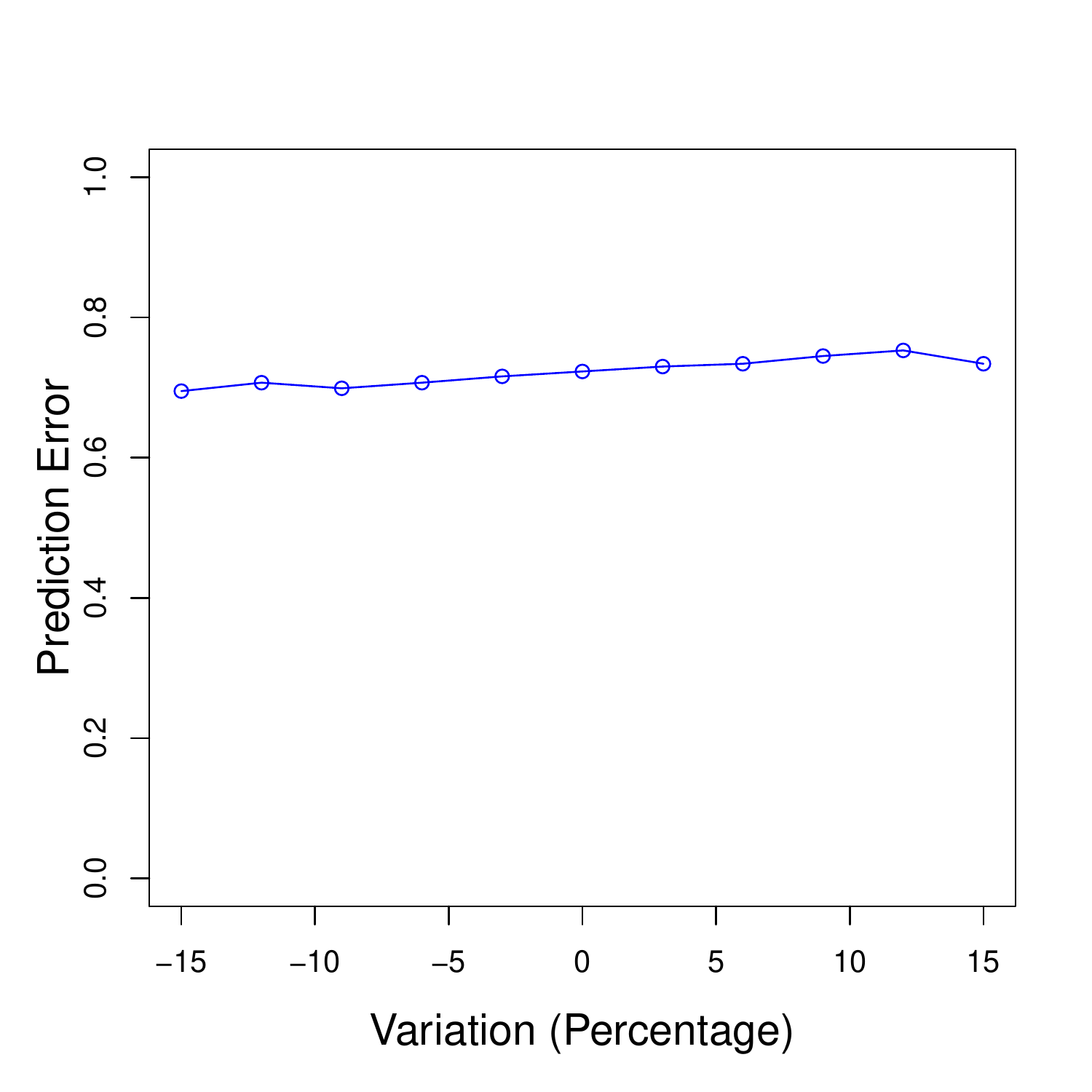}
\end{minipage}
\begin{minipage}{.33\textwidth}
\centering
\includegraphics[width=\linewidth]{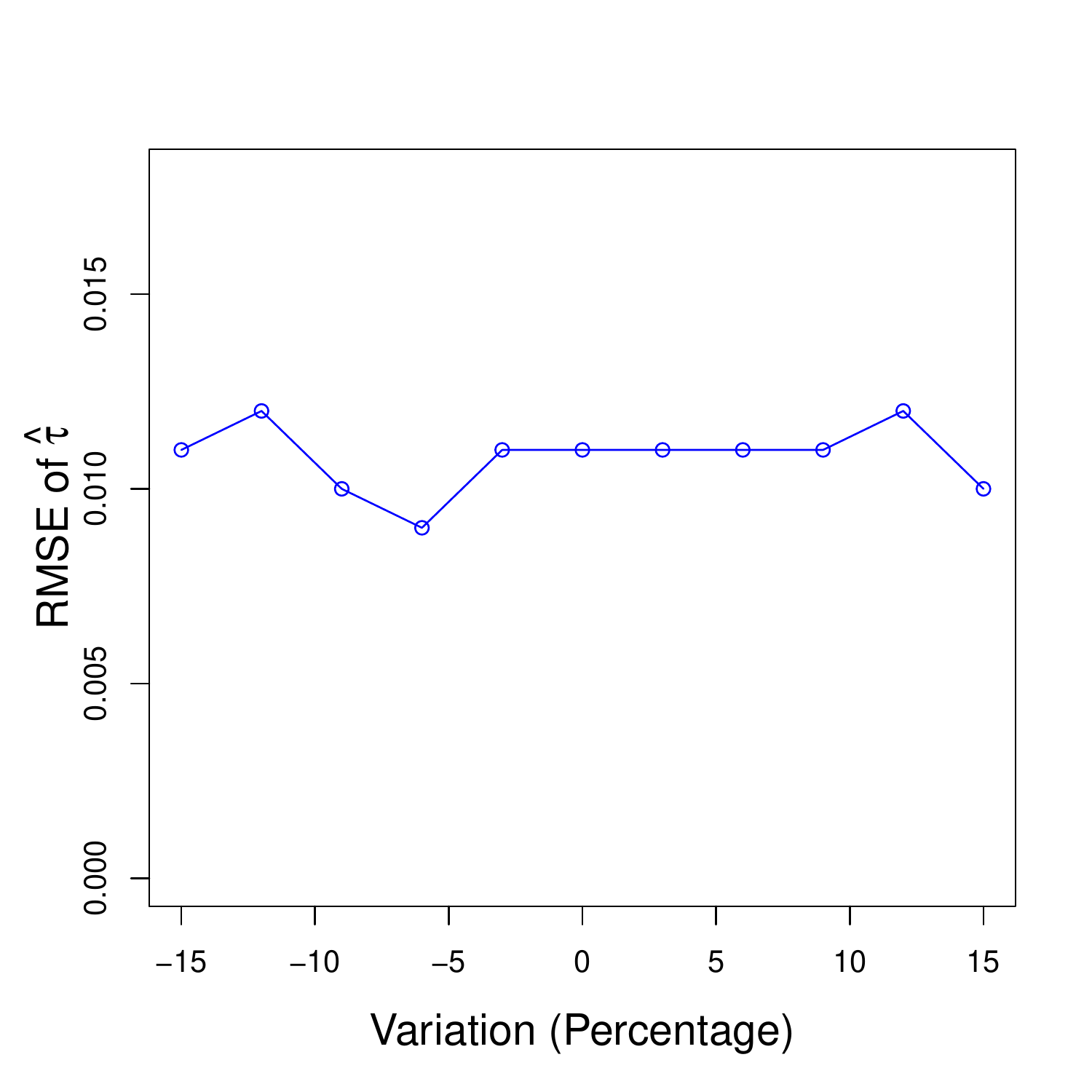}
\end{minipage}\hfill
\begin{minipage}{.33\textwidth}
\centering
\includegraphics[width=\linewidth]{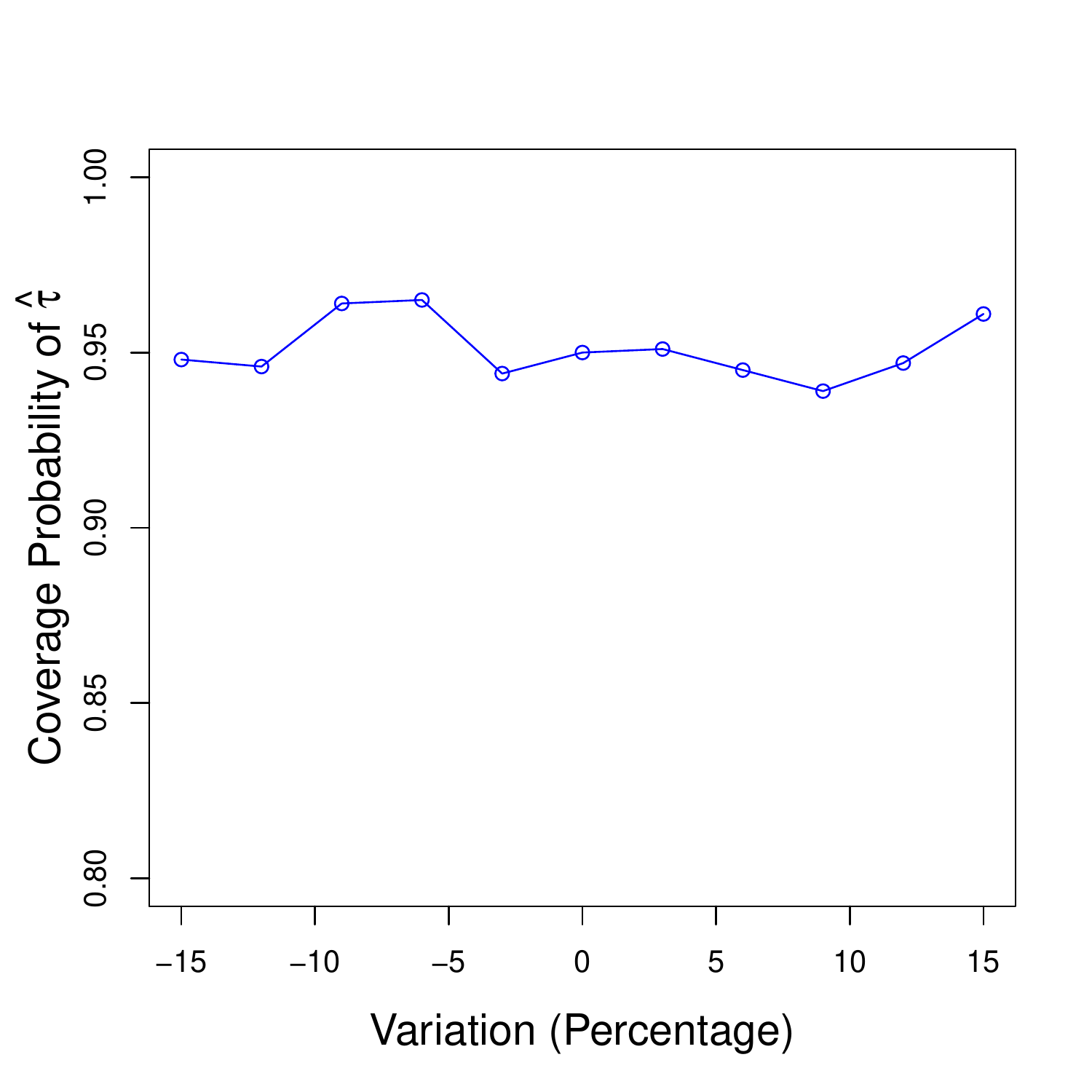}
\end{minipage}\hfill
\begin{minipage}{.33\textwidth}
\centering
\includegraphics[width=\linewidth]{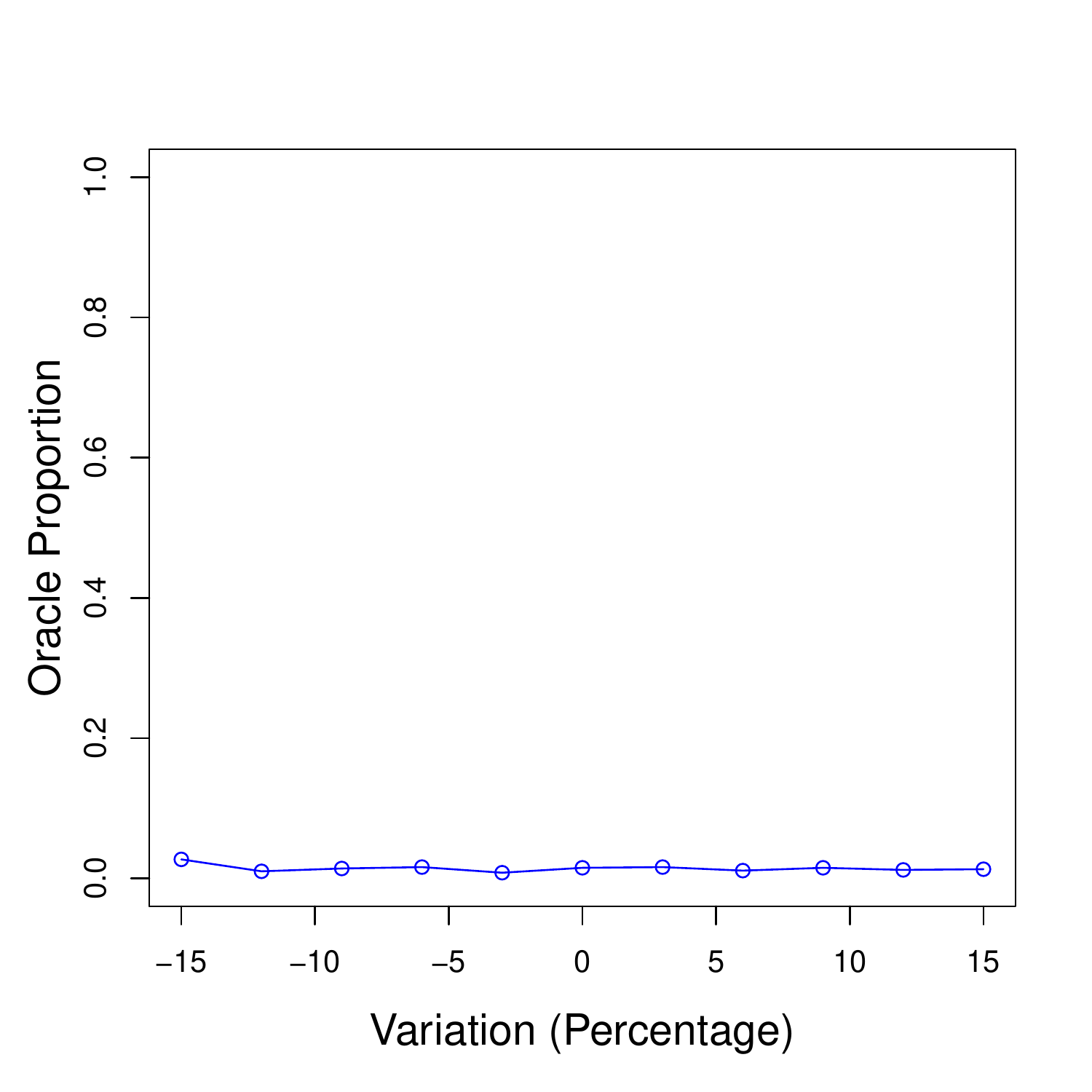}
\end{minipage}
\end{figure}

\newpage

\begin{figure}[tbph]
\centering
\caption{Sensitivity Analysis of $c_1$: Step 2}\label{fig:sen-c1-s2}
\begin{minipage}{.49\textwidth}
\centering
\includegraphics[width=\linewidth]{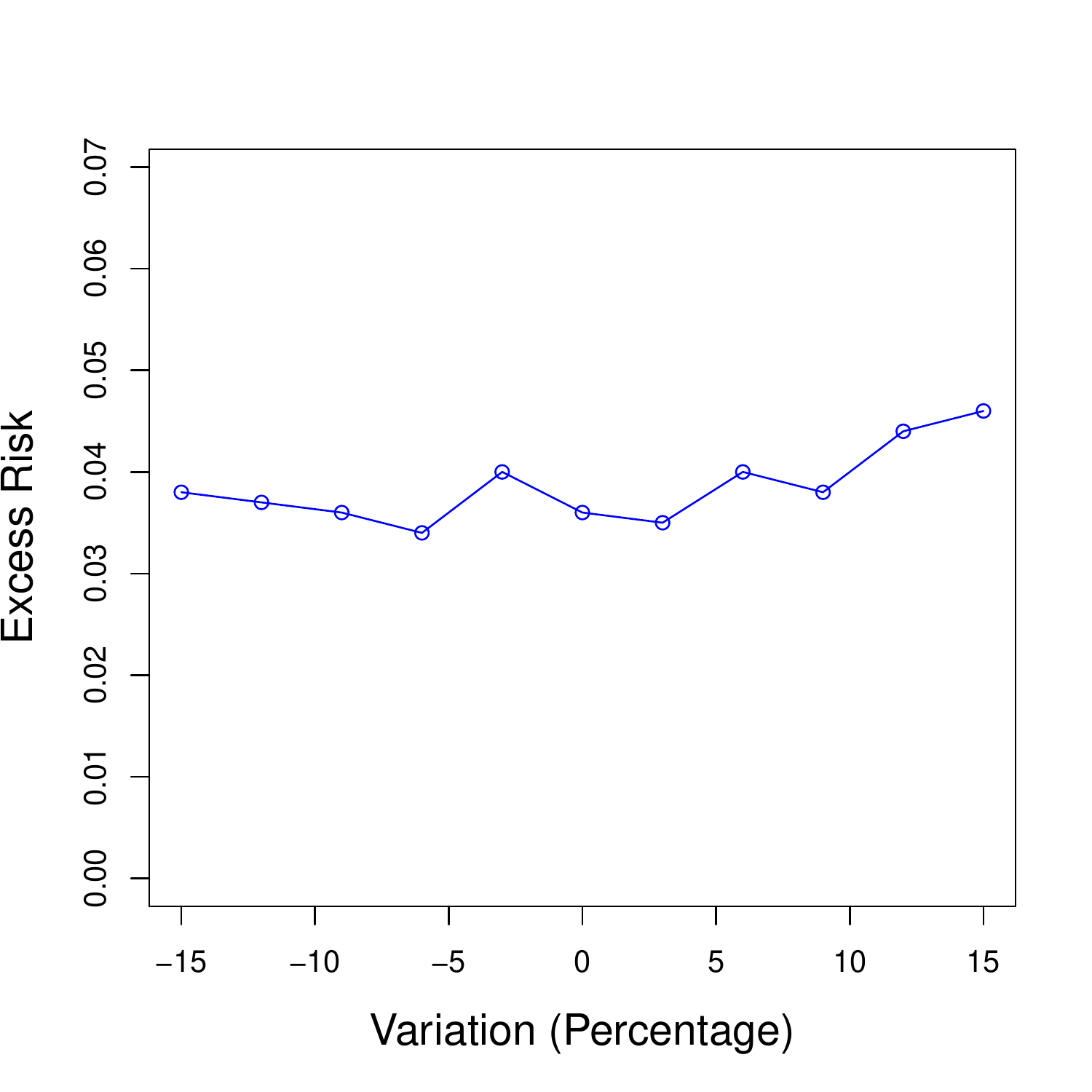}
\end{minipage}\hfill
\begin{minipage}{.49\textwidth}
\centering
\includegraphics[width=\linewidth]{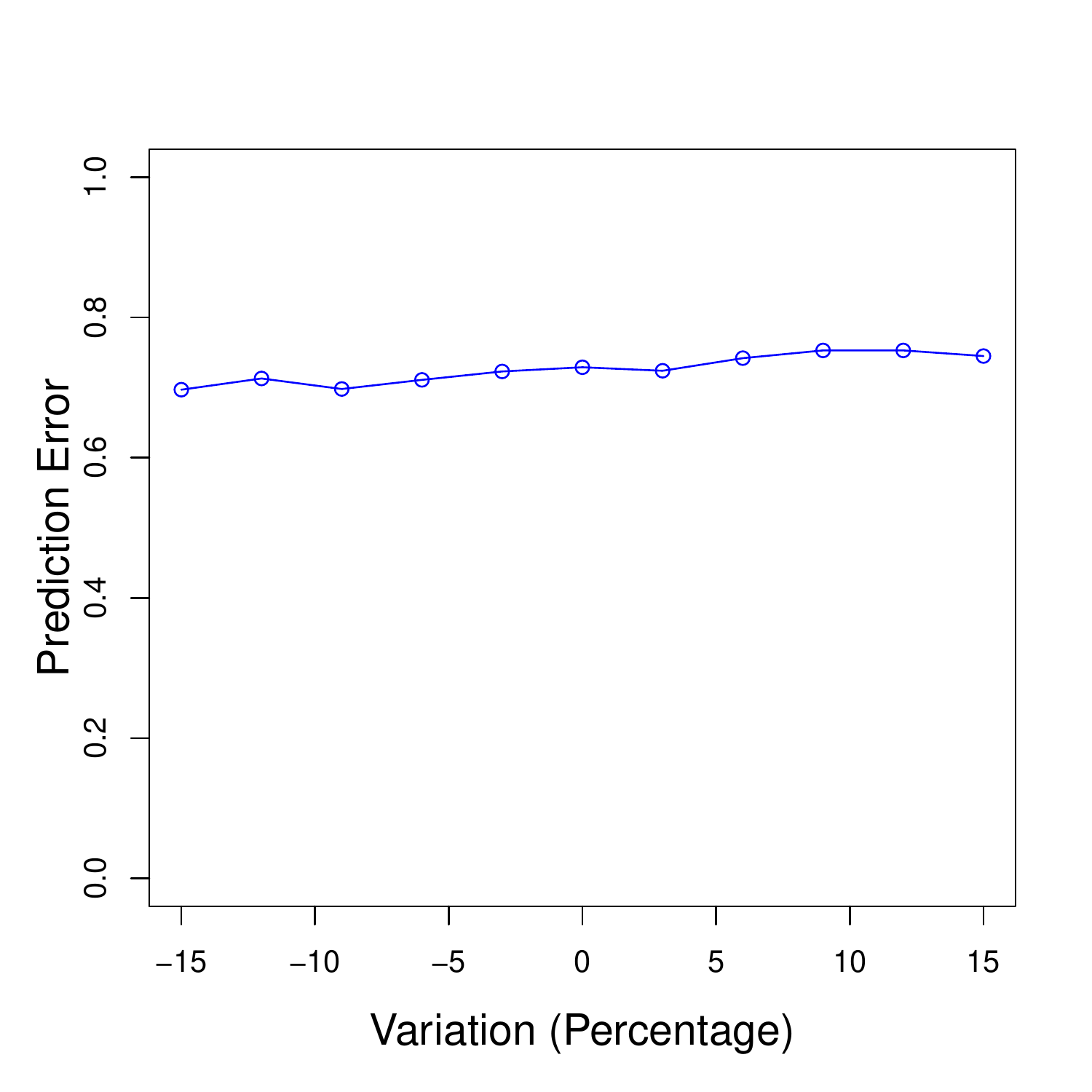}
\end{minipage}\hfill
\begin{minipage}{.49\textwidth}
\centering
\includegraphics[width=\linewidth]{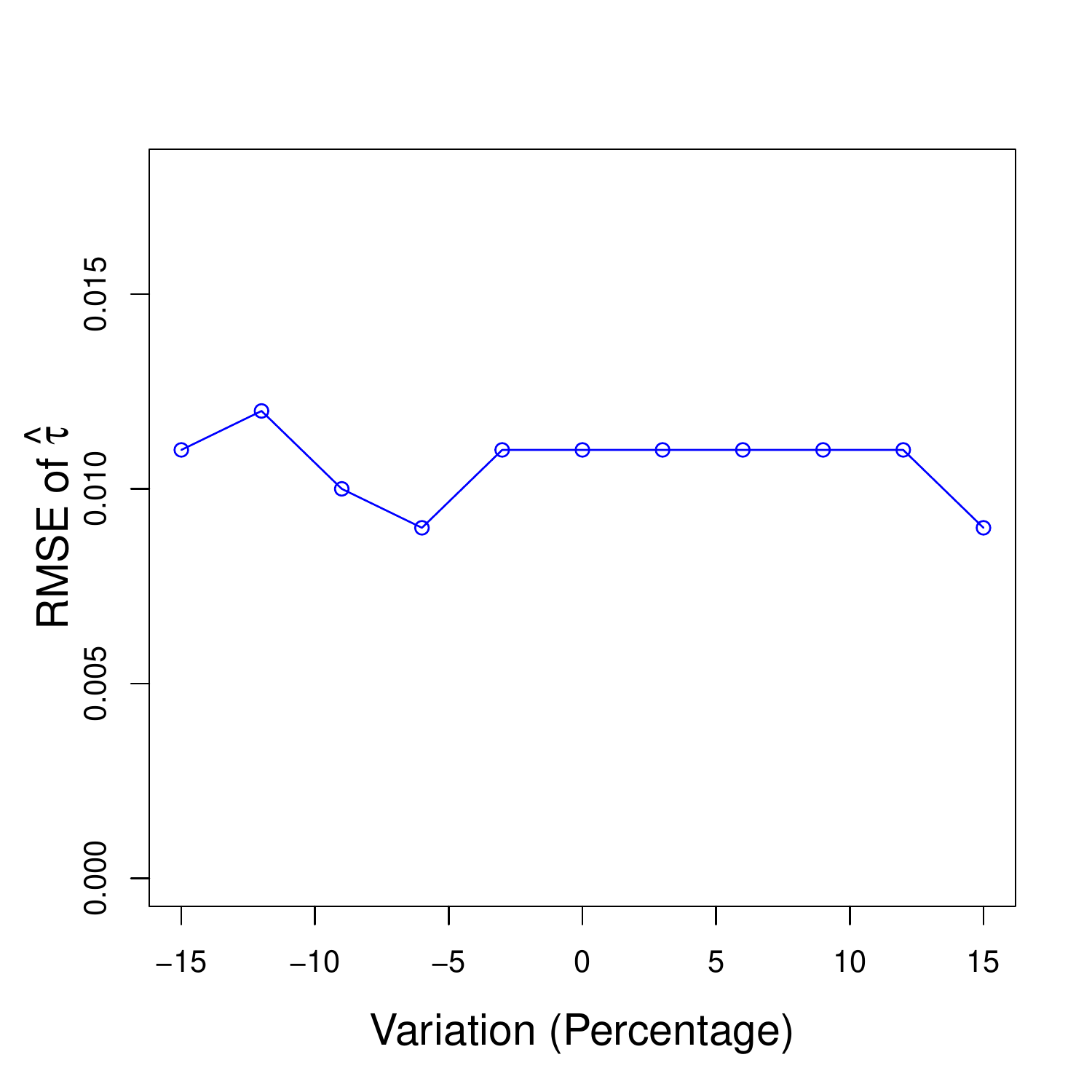}
\end{minipage}
\begin{minipage}{.49\textwidth}
\centering
\includegraphics[width=\linewidth]{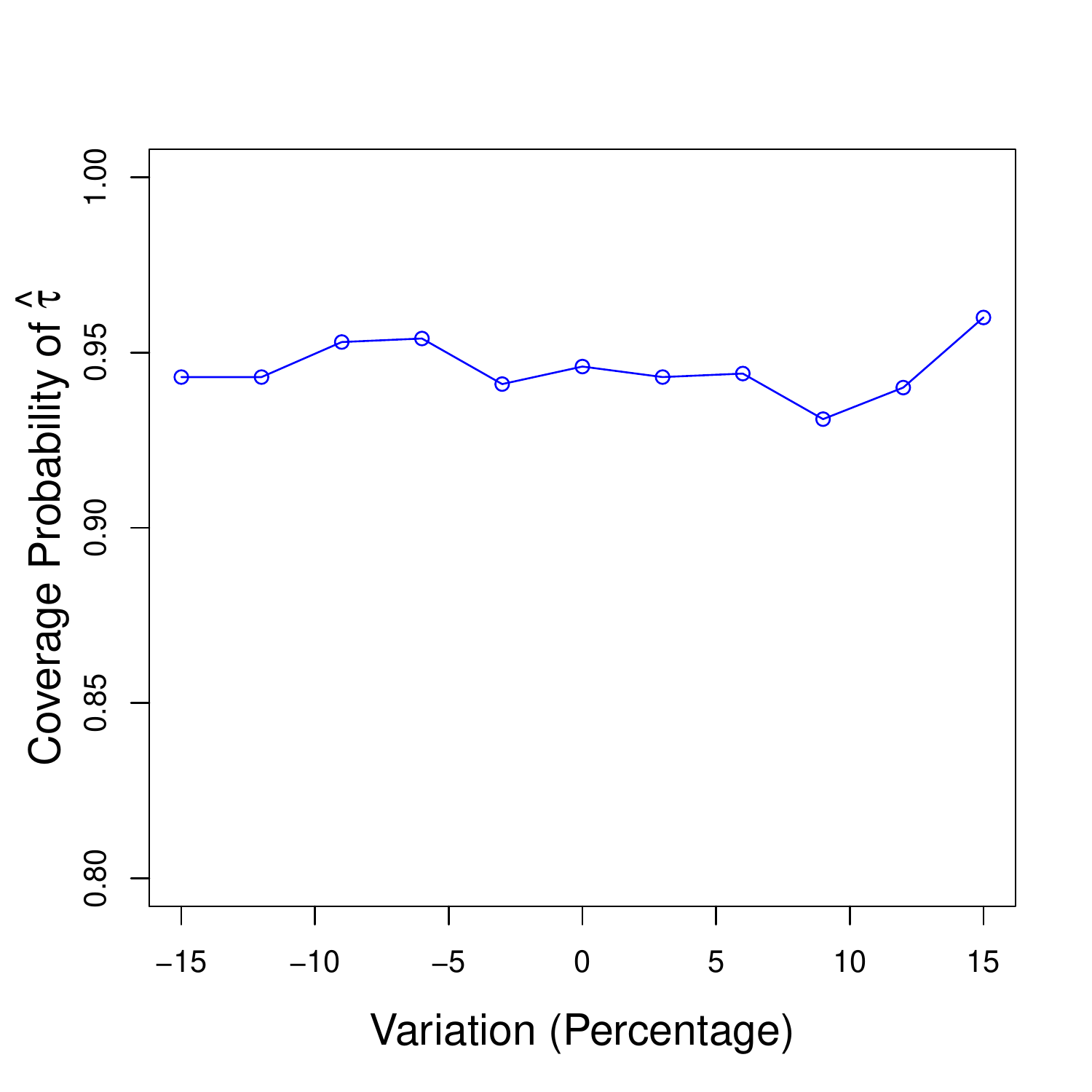}
\end{minipage}\hfill
\end{figure}

\newpage

\begin{figure}[tbph]
\centering
\caption{Sensitivity Analysis of $c_1$: Step 3a}\label{fig:sen-c1-s3a}
\begin{minipage}{.33\textwidth}
\centering
\includegraphics[width=\linewidth]{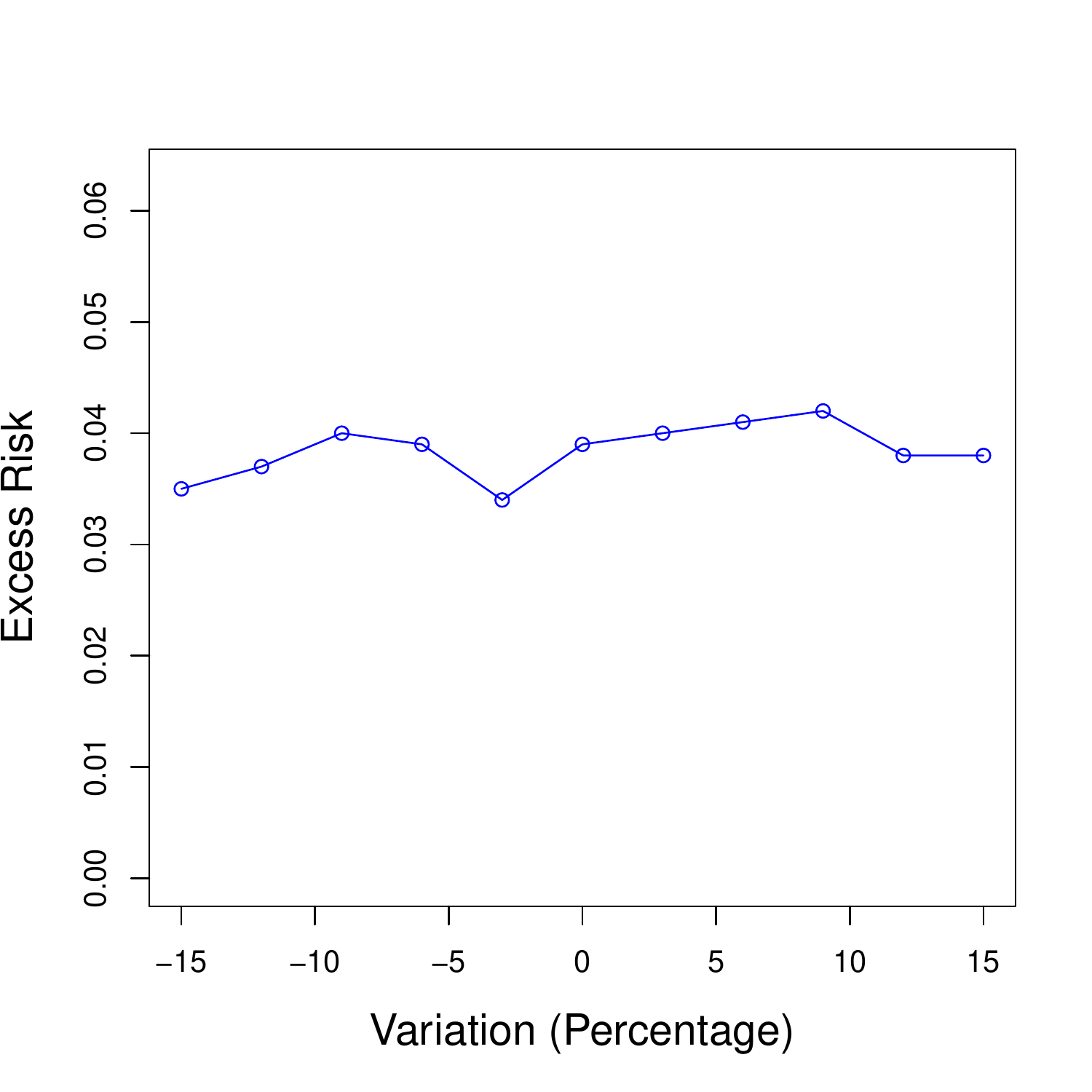}
\end{minipage}\hfill
\begin{minipage}{.33\textwidth}
\centering
\includegraphics[width=\linewidth]{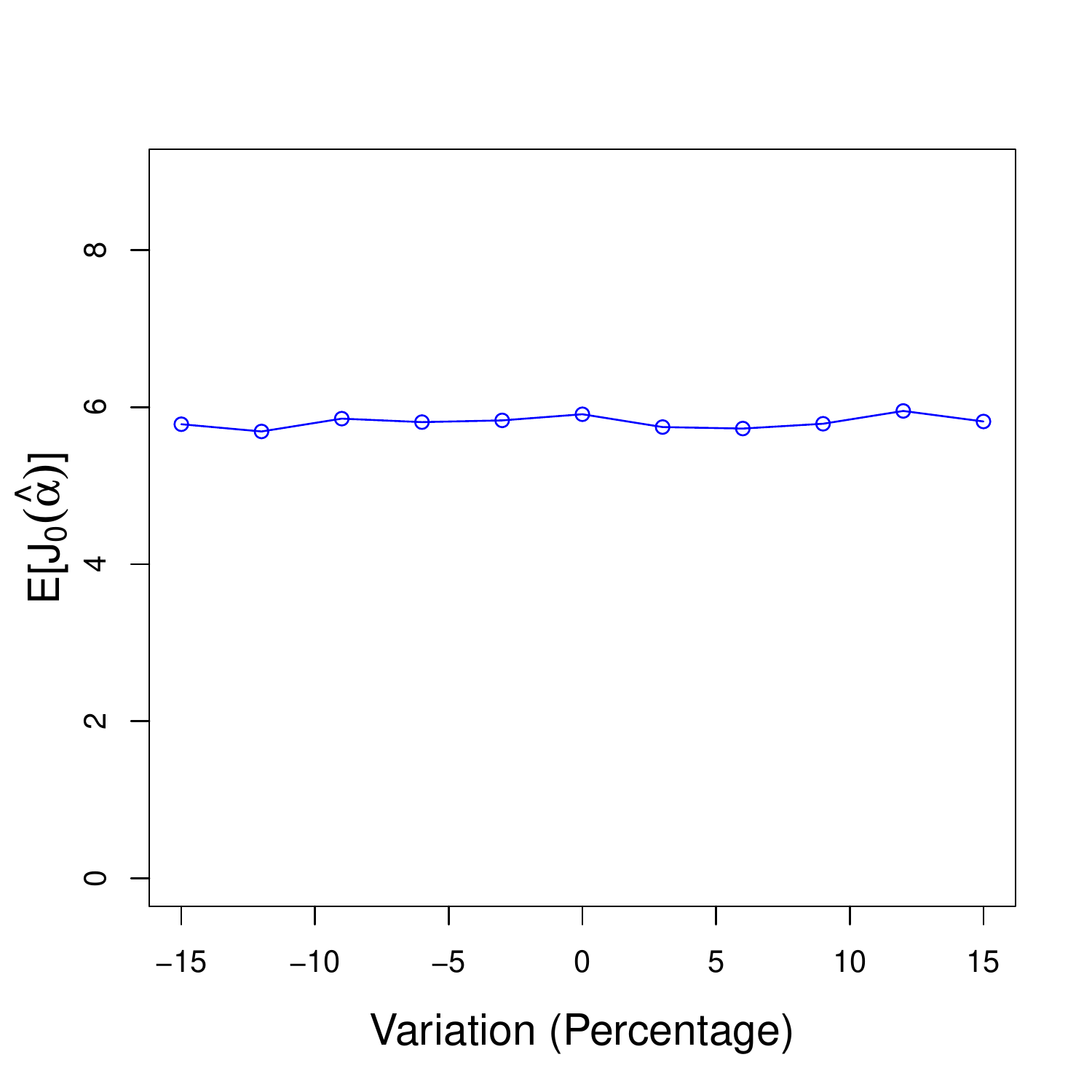}
\end{minipage}\hfill
\begin{minipage}{.33\textwidth}
\centering
\includegraphics[width=\linewidth]{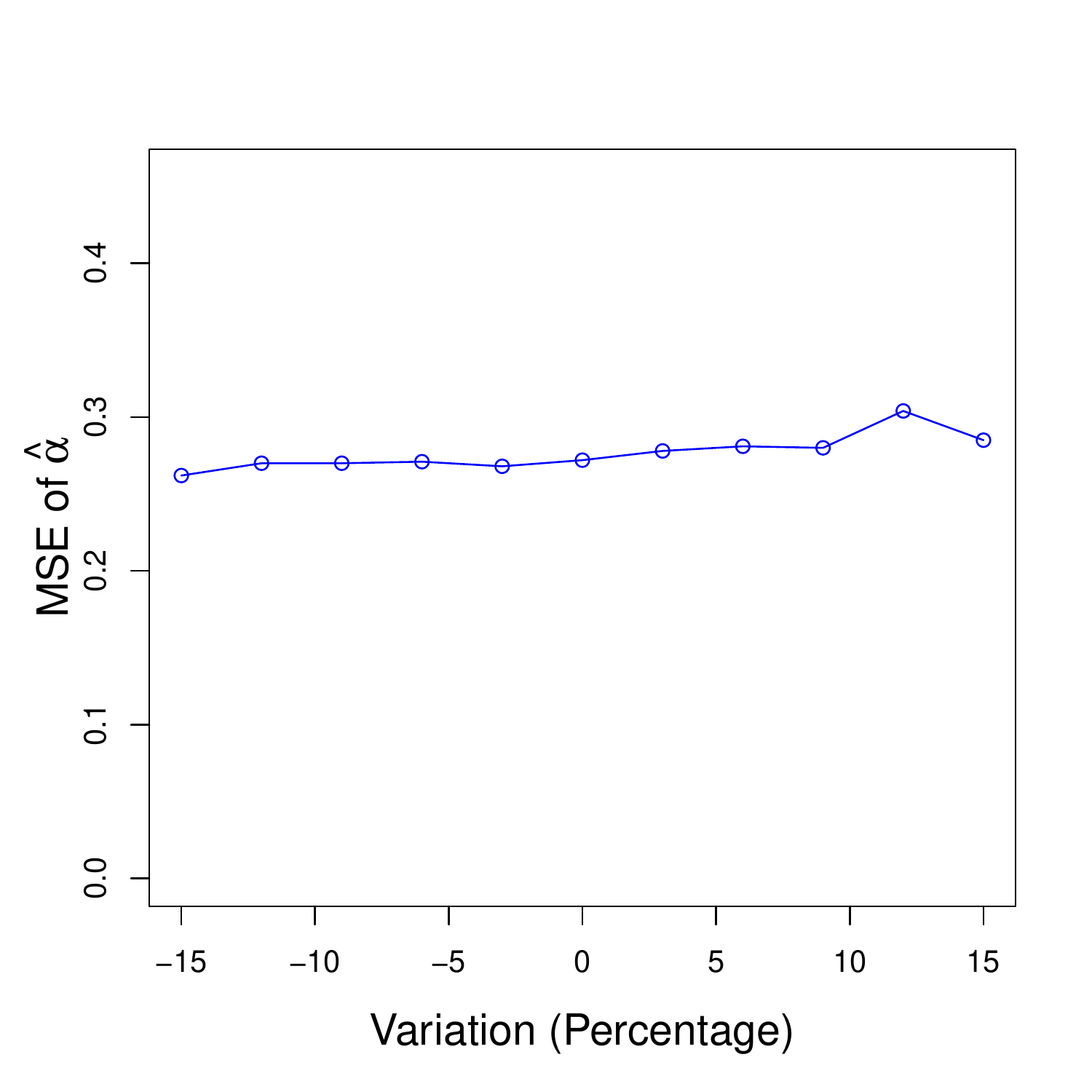}
\end{minipage}
\begin{minipage}{.33\textwidth}
\centering
\includegraphics[width=\linewidth]{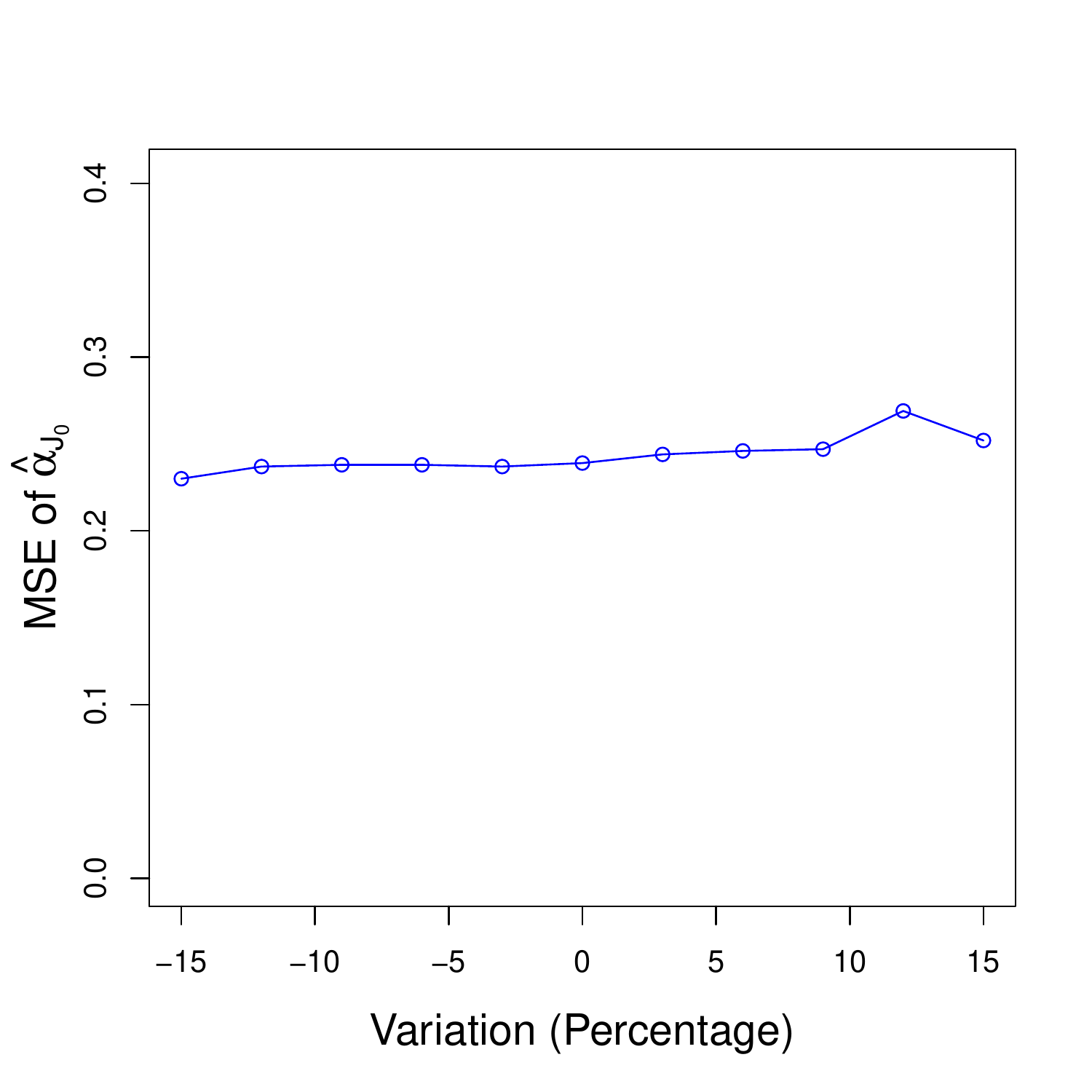}
\end{minipage}\hfill
\begin{minipage}{.33\textwidth}
\centering
\includegraphics[width=\linewidth]{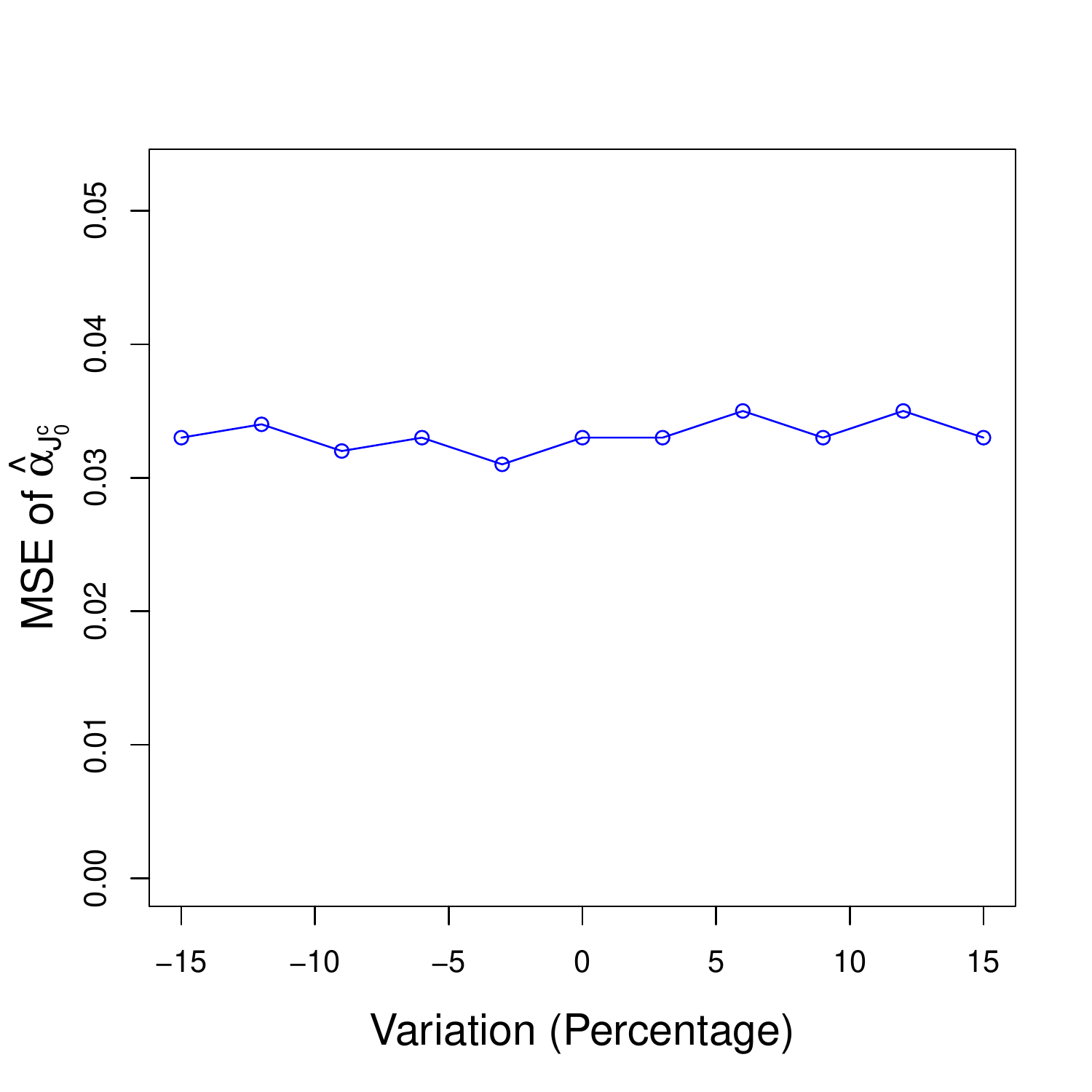}
\end{minipage}\hfill
\begin{minipage}{.33\textwidth}
\centering
\includegraphics[width=\linewidth]{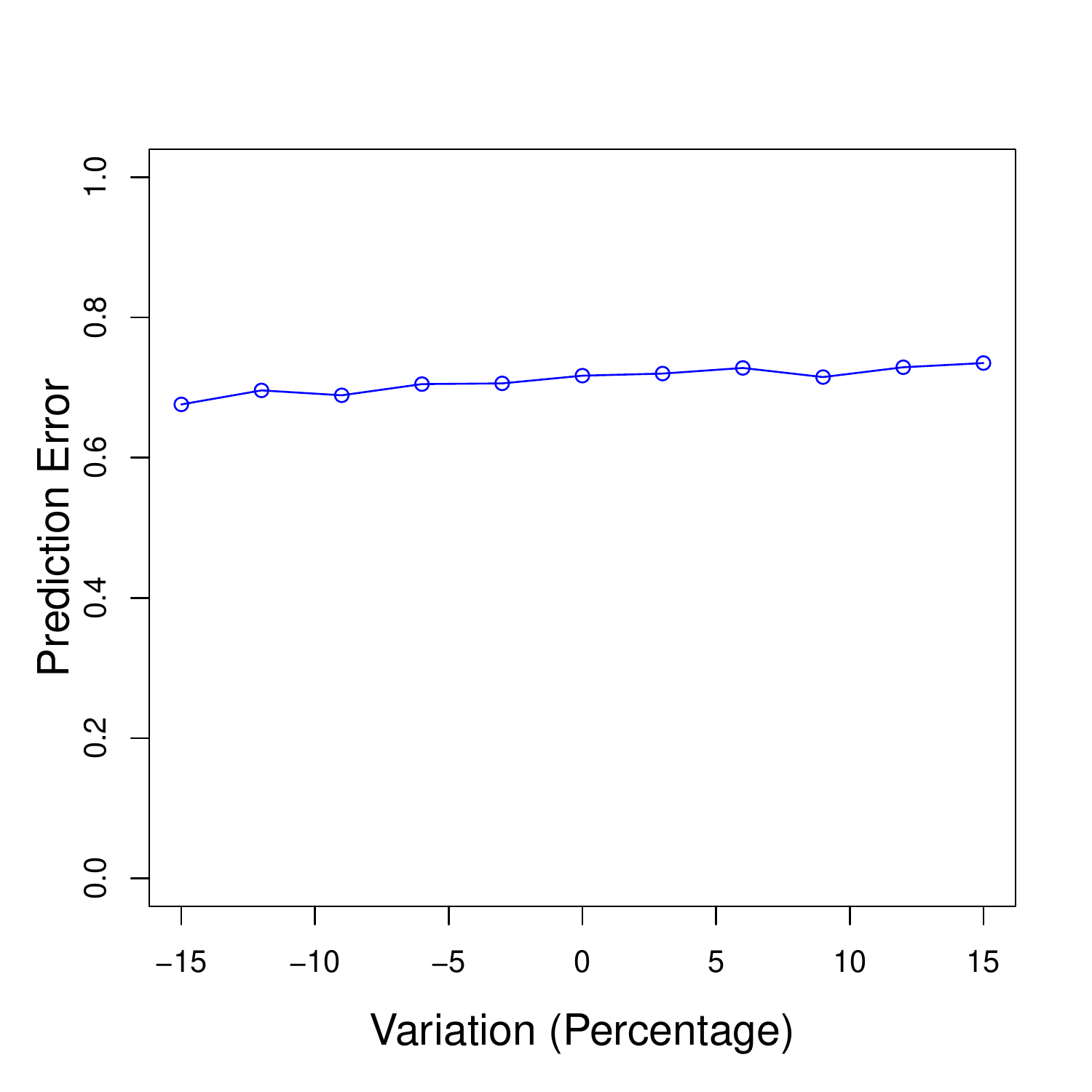}
\end{minipage}
\begin{minipage}{.33\textwidth}
\centering
\includegraphics[width=\linewidth]{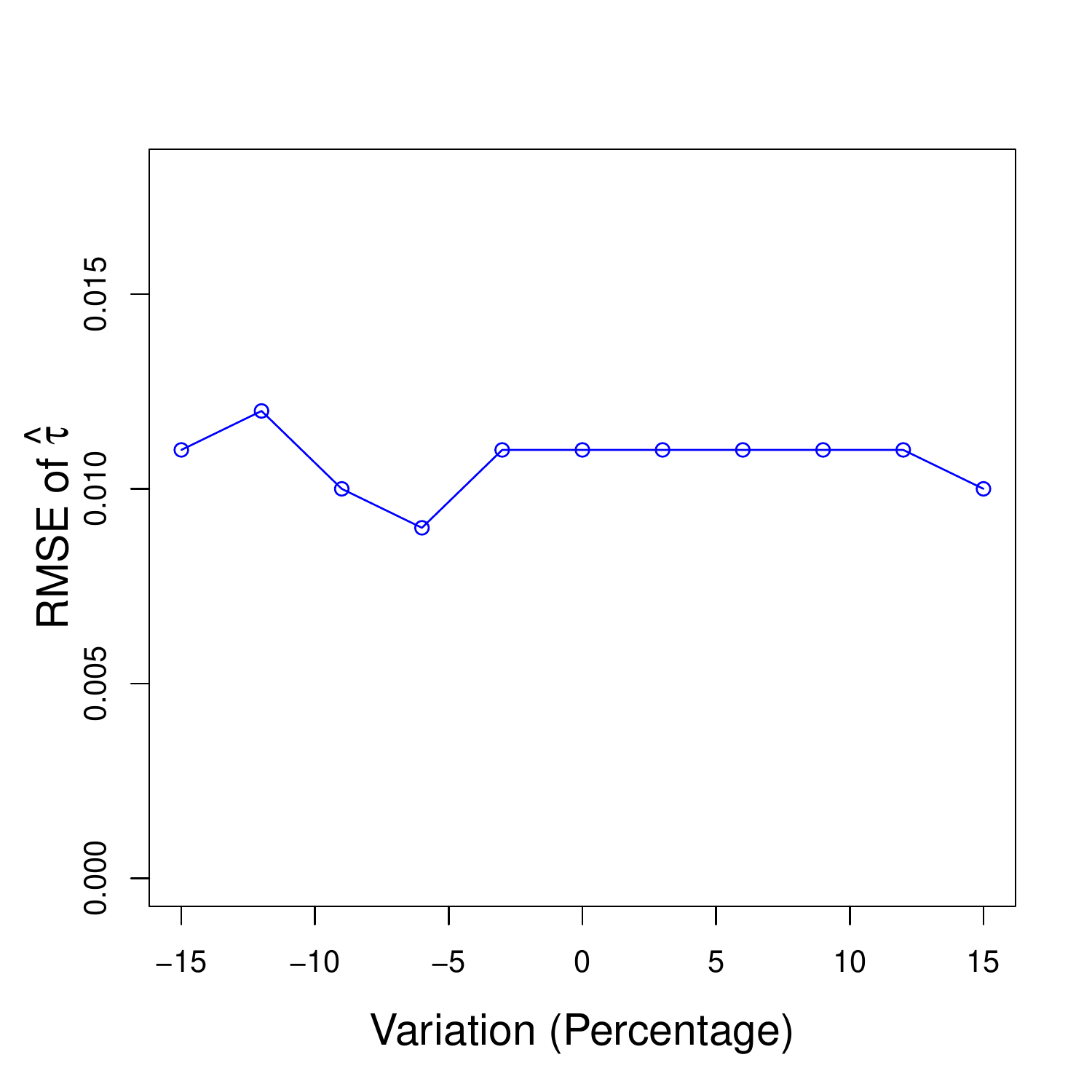}
\end{minipage}\hfill
\begin{minipage}{.33\textwidth}
\centering
\includegraphics[width=\linewidth]{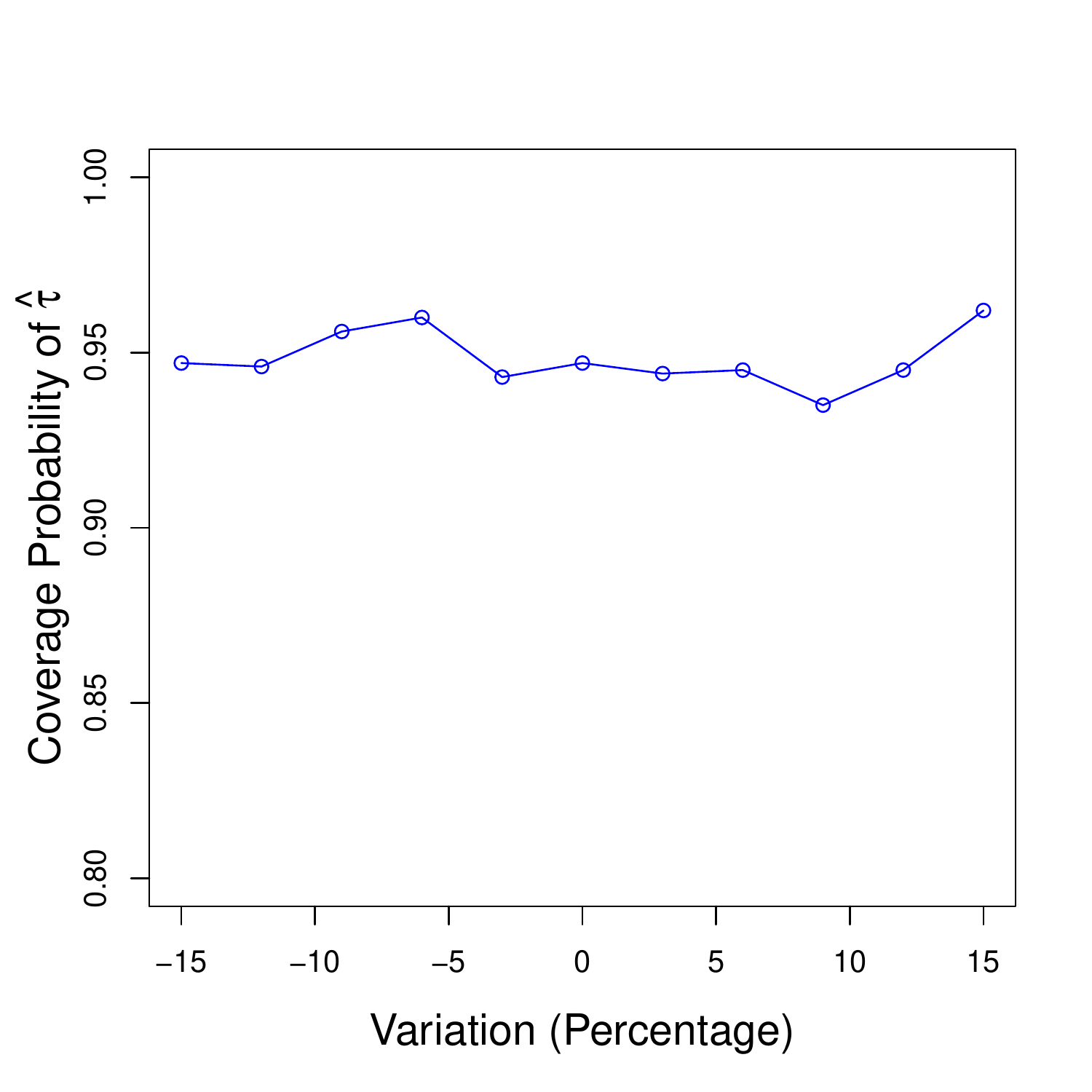}
\end{minipage}\hfill
\begin{minipage}{.33\textwidth}
\centering
\includegraphics[width=\linewidth]{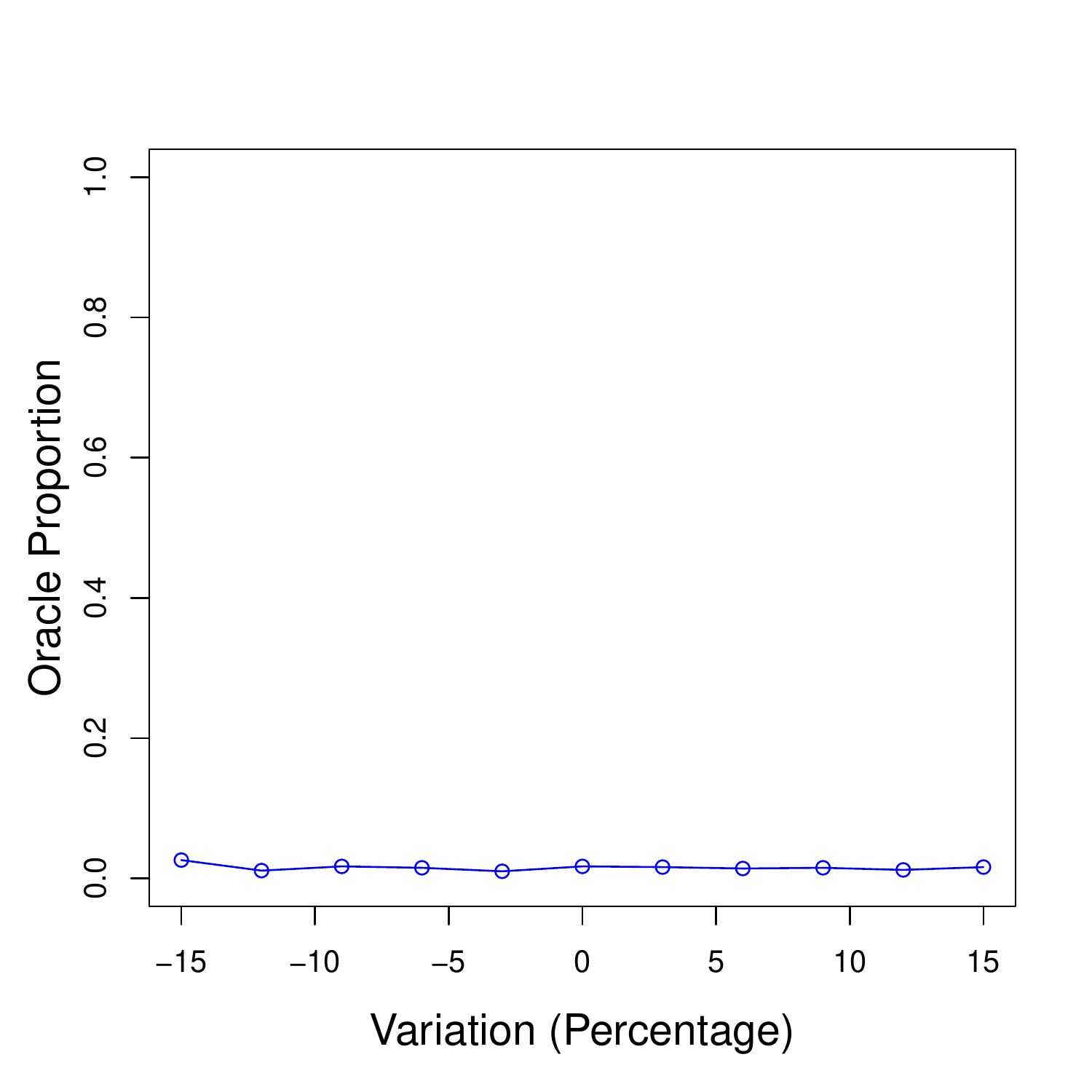}
\end{minipage}
\end{figure}

\newpage

\begin{figure}[tbph]
\centering
\caption{Sensitivity Analysis of $c_1$: Step 3b}\label{fig:sen-c1-s3b}
\begin{minipage}{.33\textwidth}
\centering
\includegraphics[width=\linewidth]{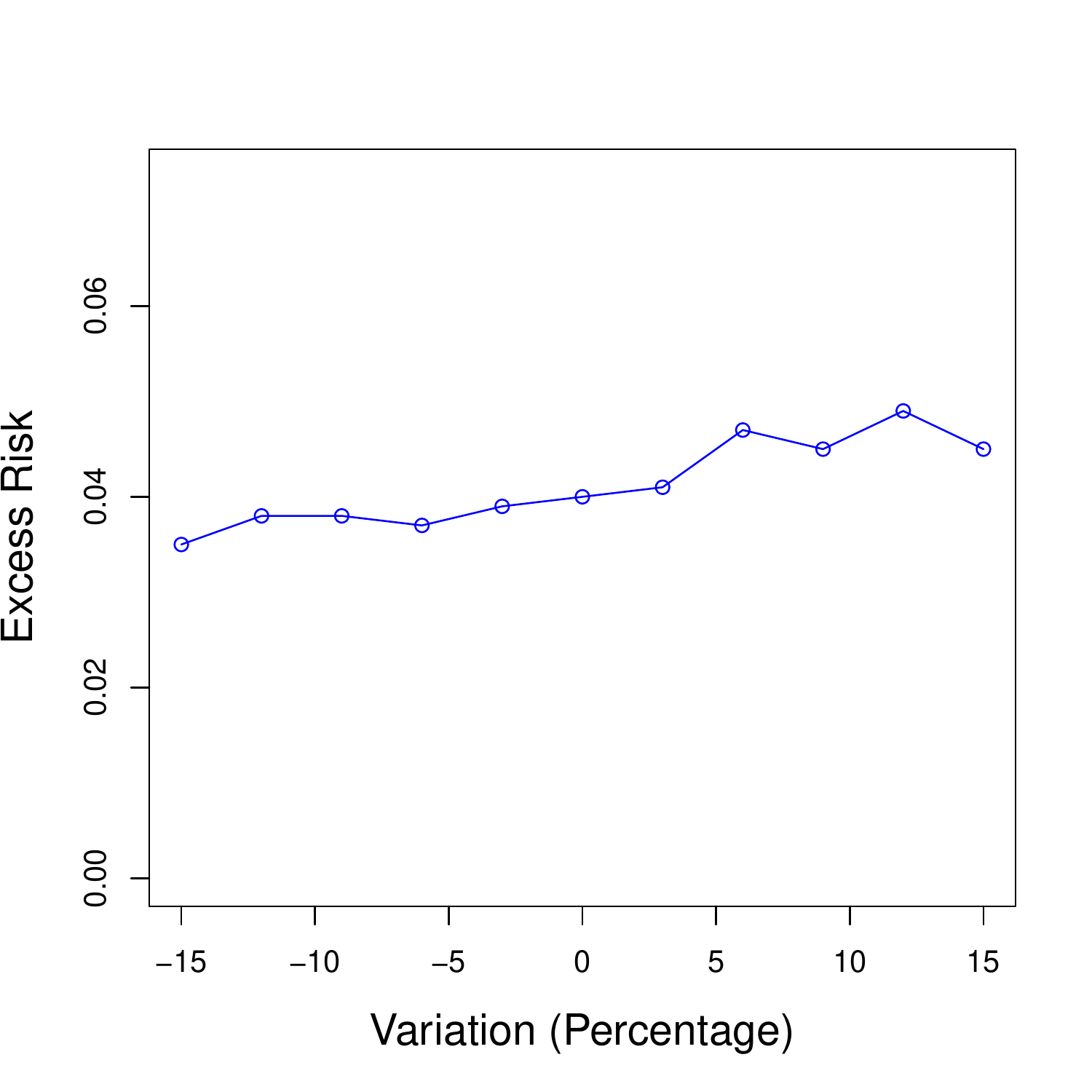}
\end{minipage}\hfill
\begin{minipage}{.33\textwidth}
\centering
\includegraphics[width=\linewidth]{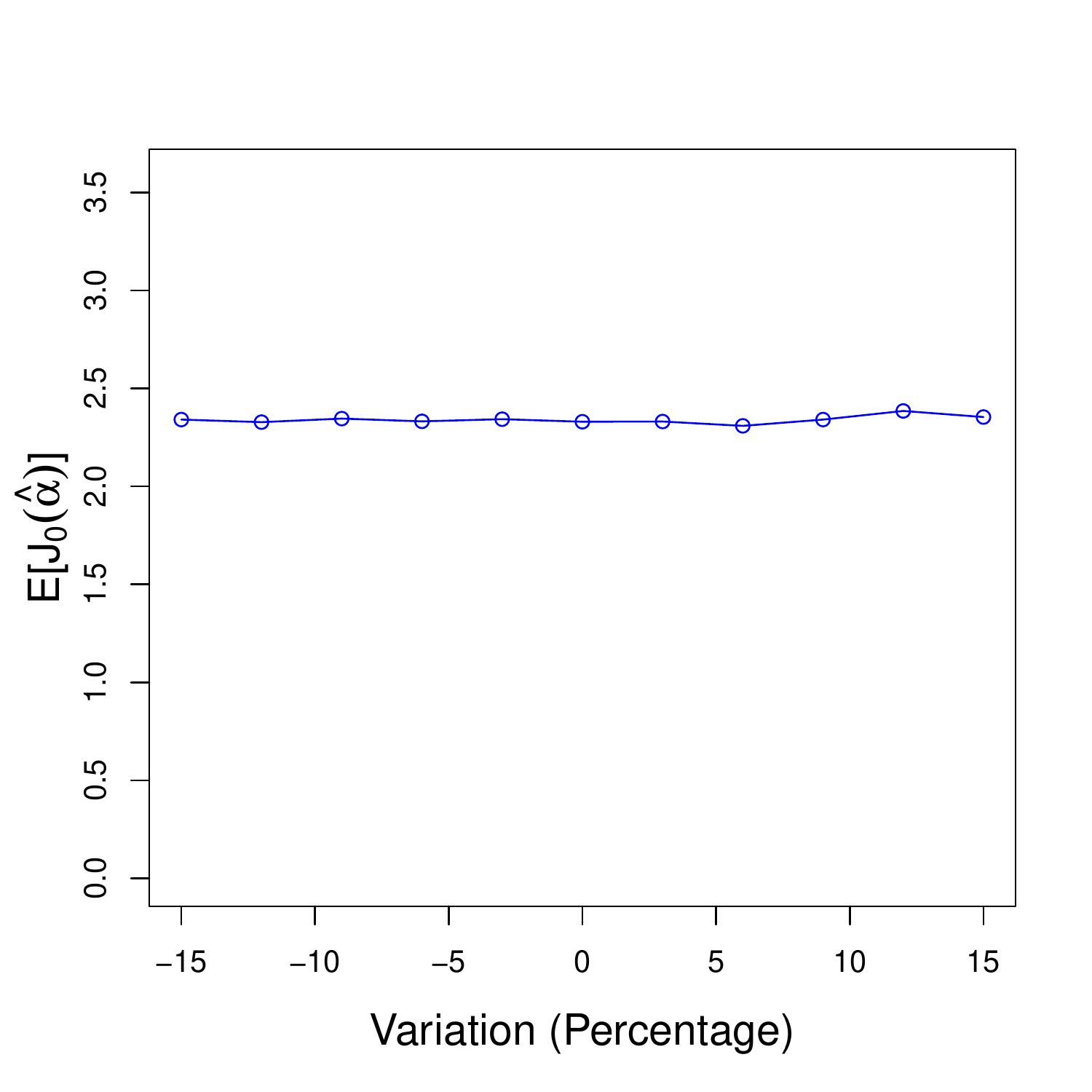}
\end{minipage}\hfill
\begin{minipage}{.33\textwidth}
\centering
\includegraphics[width=\linewidth]{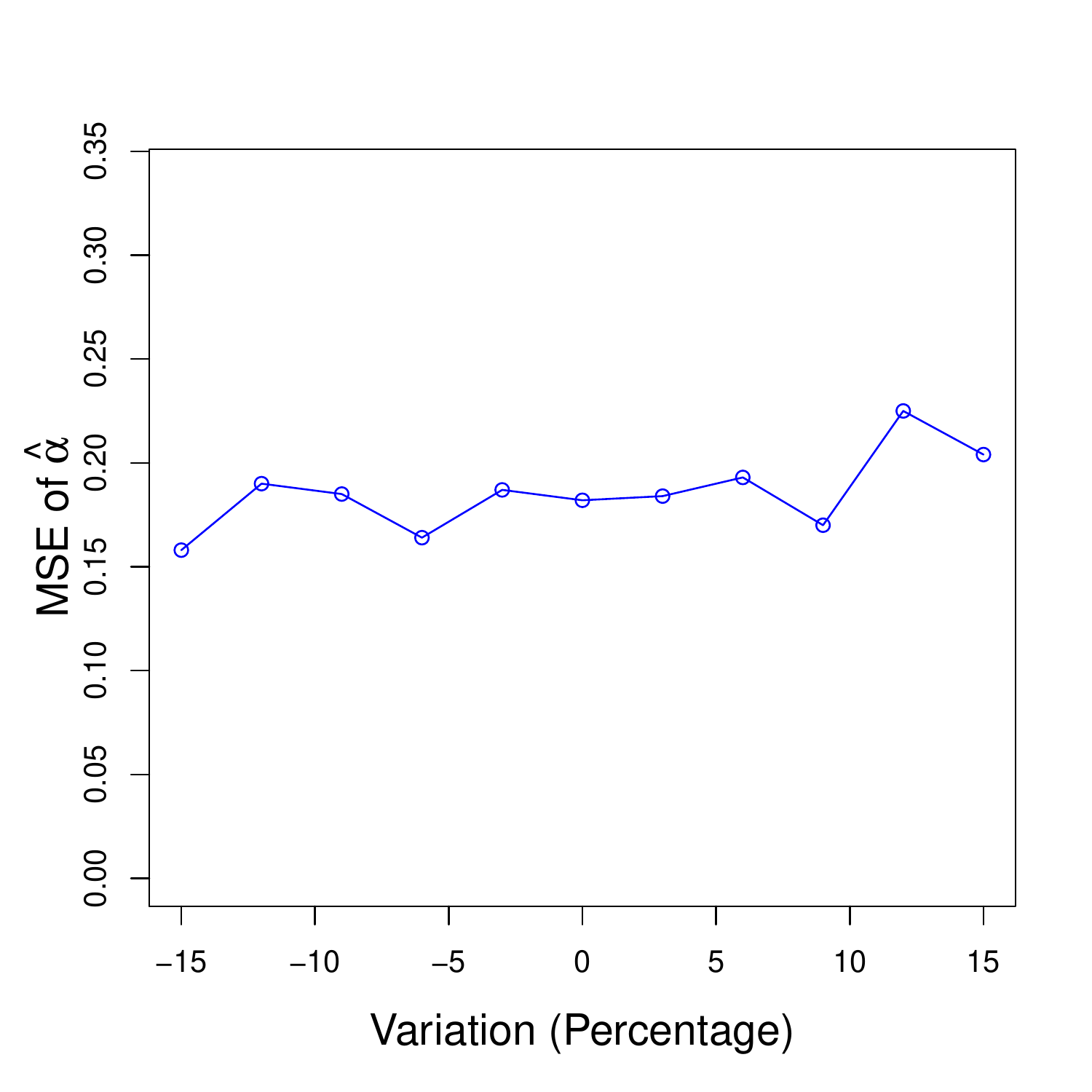}
\end{minipage}
\begin{minipage}{.33\textwidth}
\centering
\includegraphics[width=\linewidth]{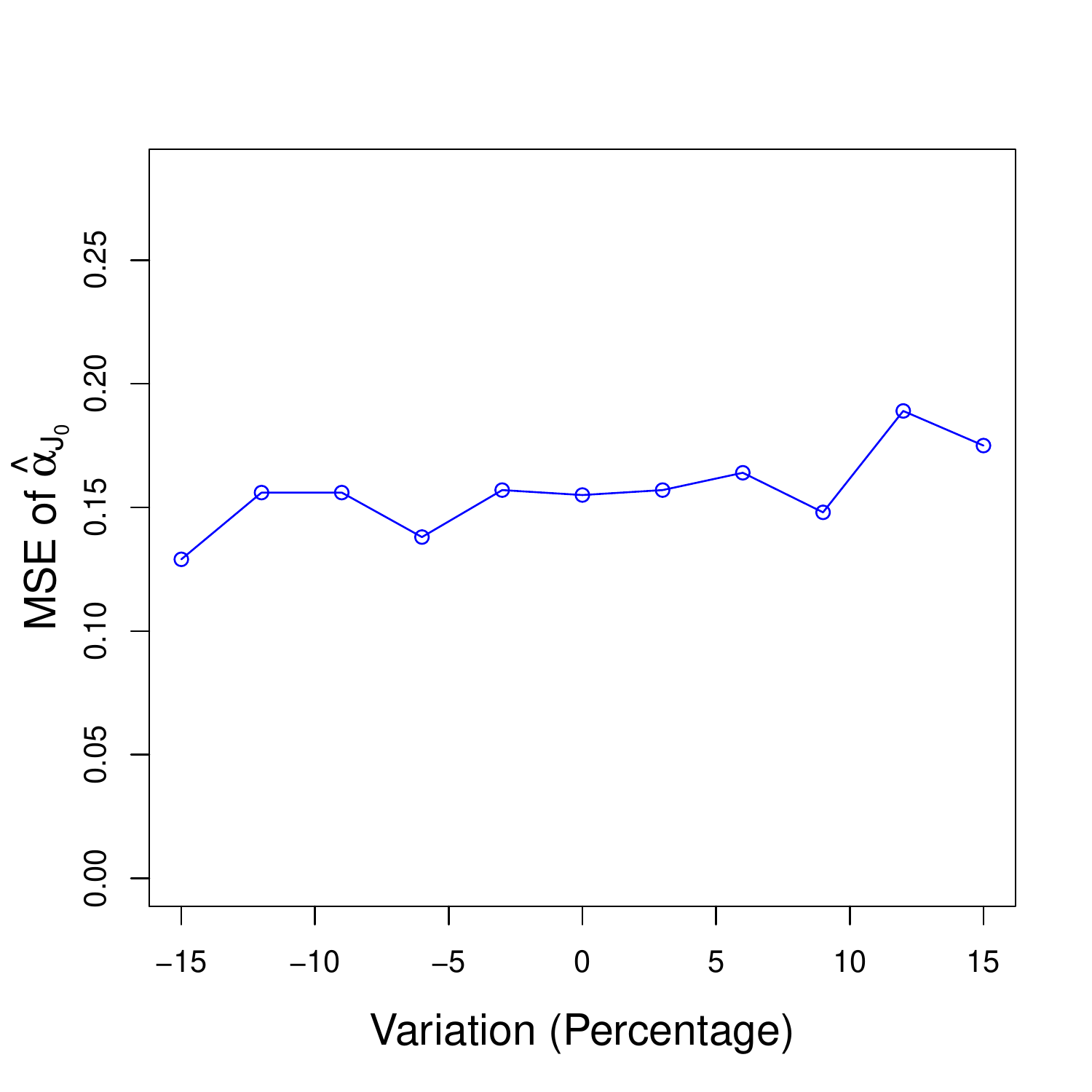}
\end{minipage}\hfill
\begin{minipage}{.33\textwidth}
\centering
\includegraphics[width=\linewidth]{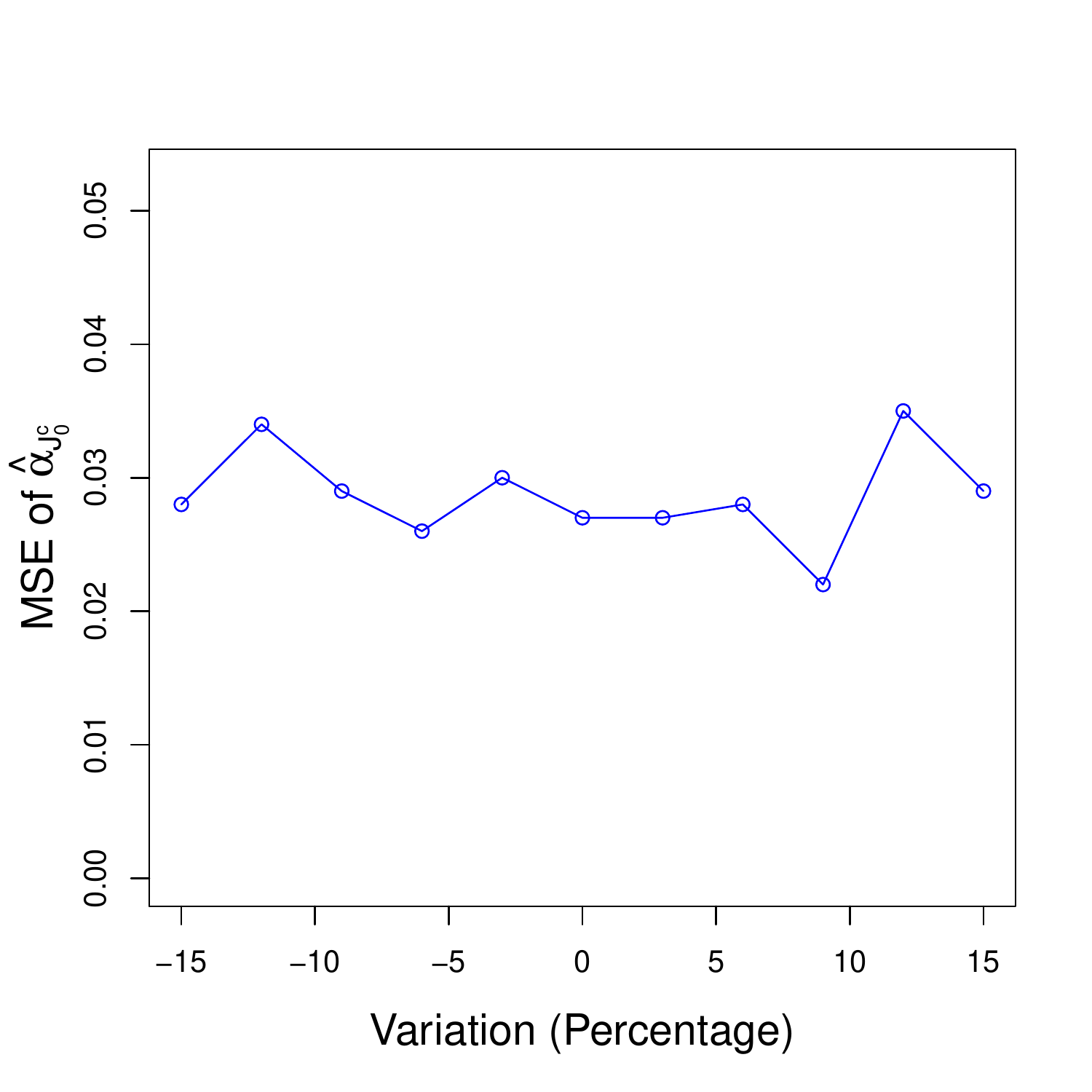}
\end{minipage}\hfill
\begin{minipage}{.33\textwidth}
\centering
\includegraphics[width=\linewidth]{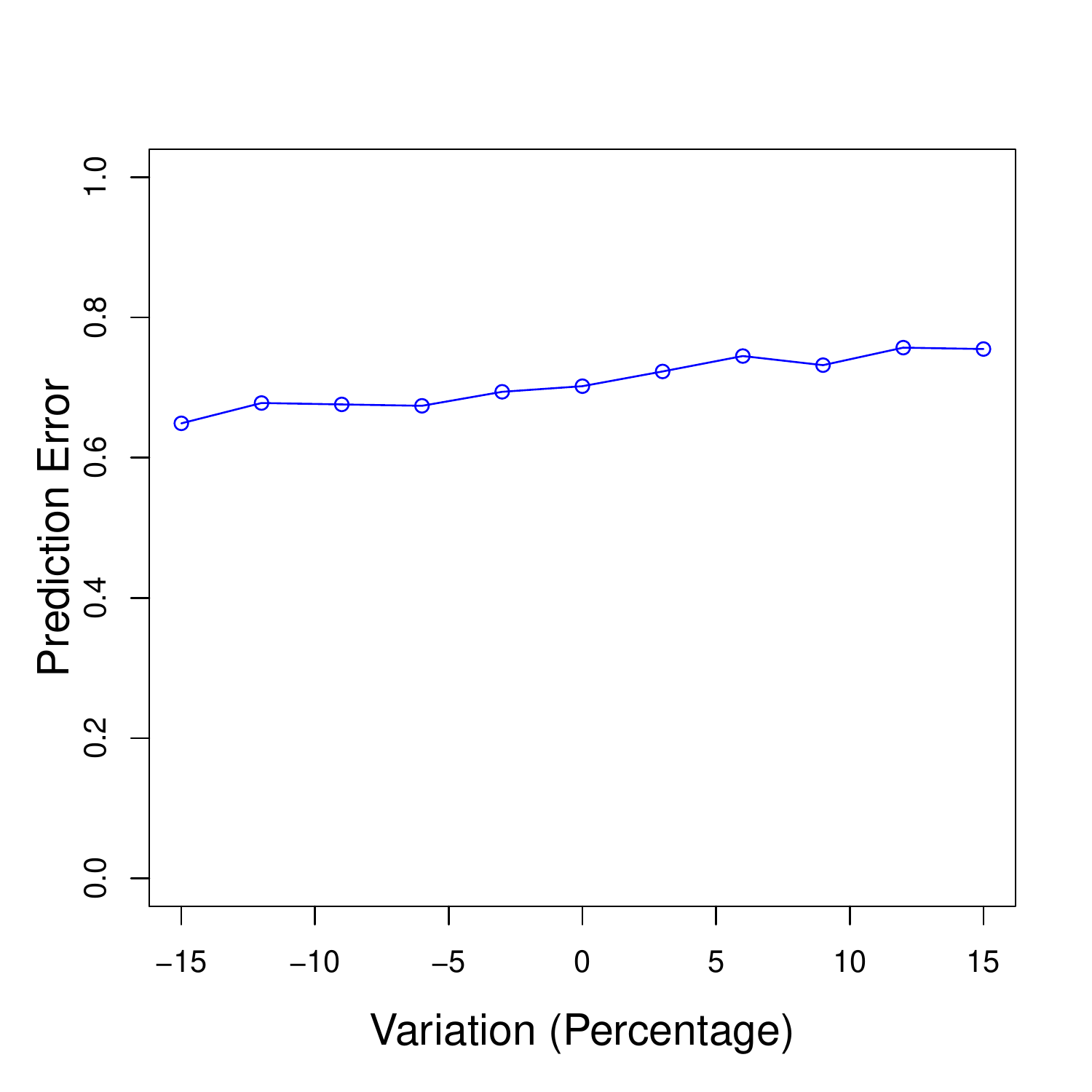}
\end{minipage}
\begin{minipage}{.33\textwidth}
\centering
\includegraphics[width=\linewidth]{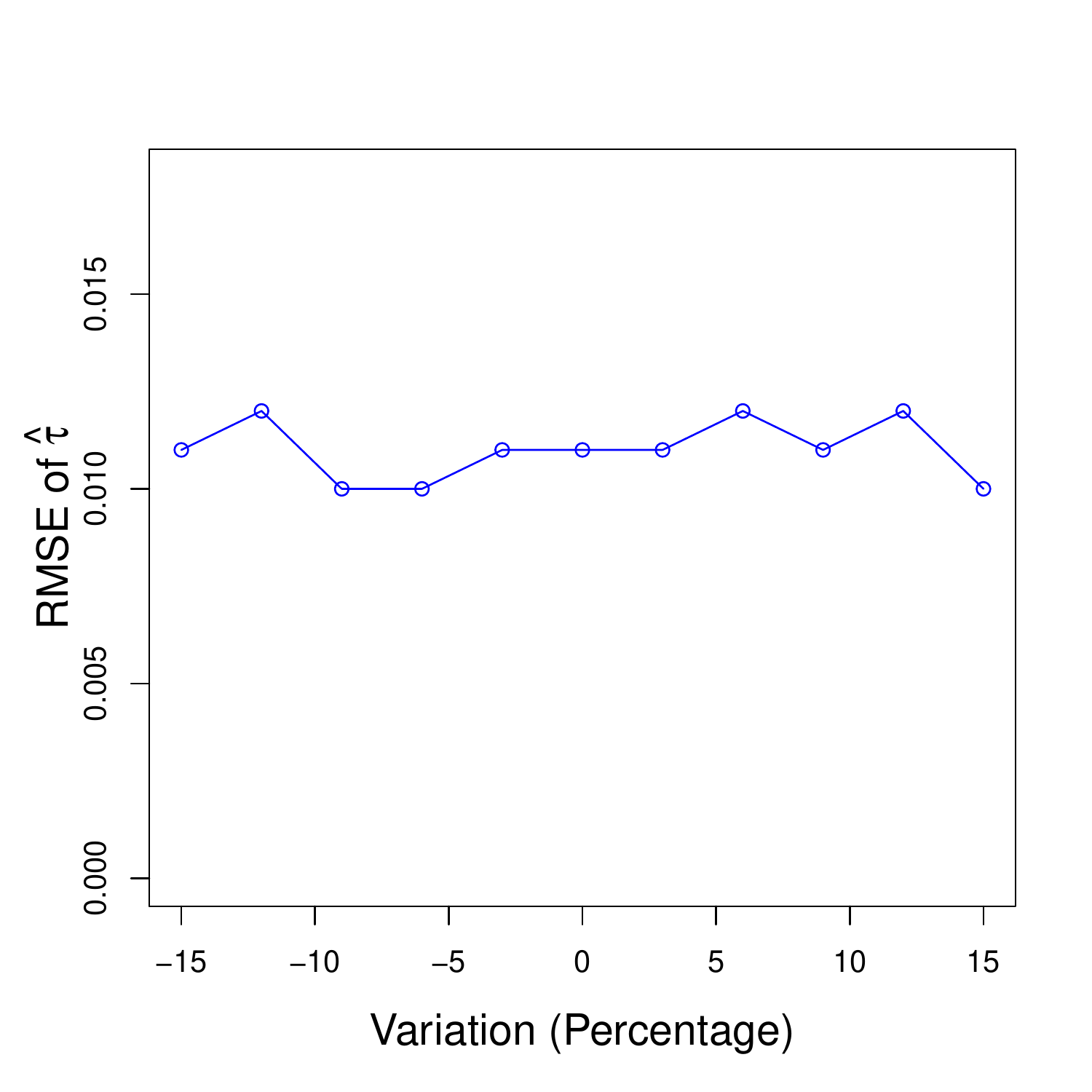}
\end{minipage}\hfill
\begin{minipage}{.33\textwidth}
\centering
\includegraphics[width=\linewidth]{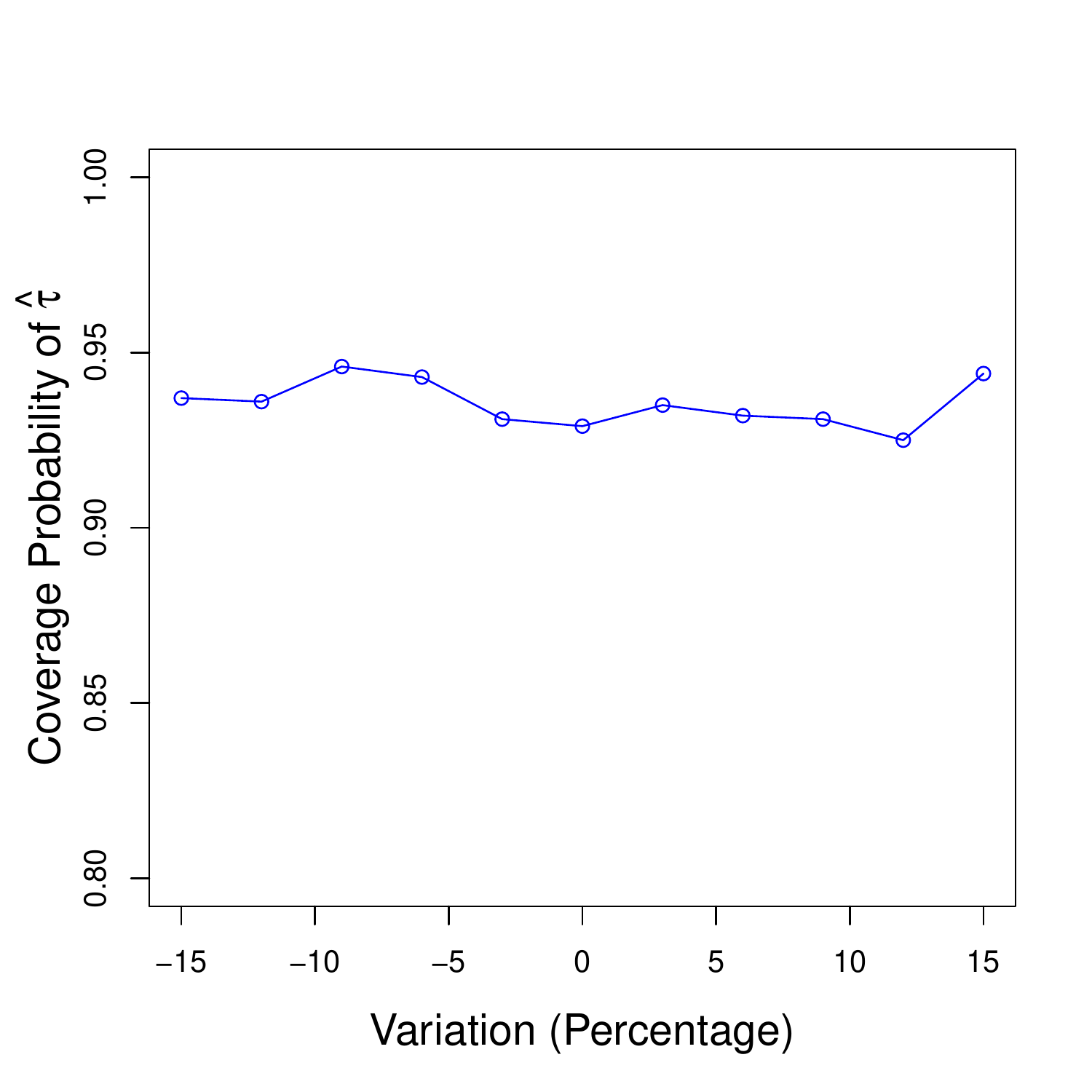}
\end{minipage}\hfill
\begin{minipage}{.33\textwidth}
\centering
\includegraphics[width=\linewidth]{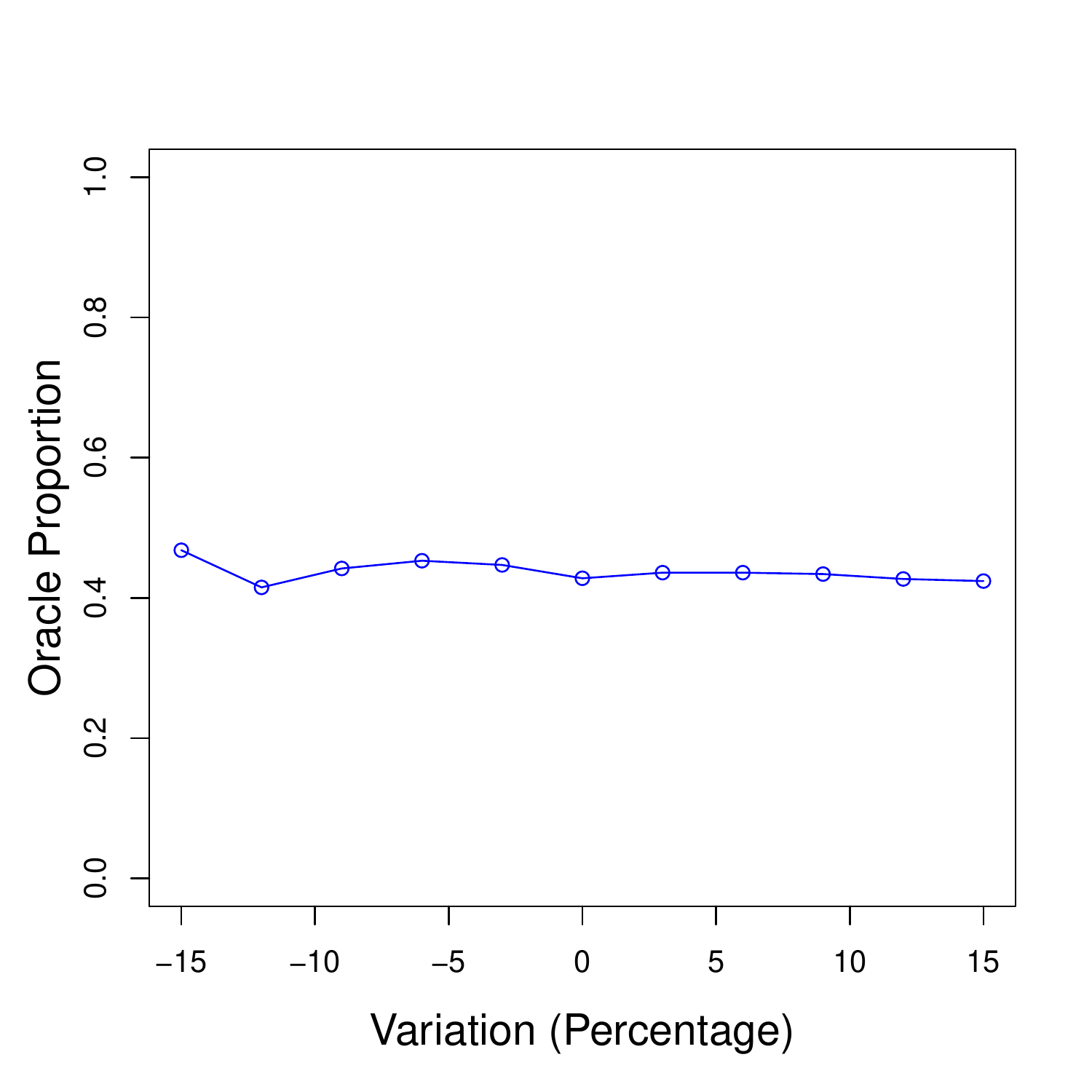}
\end{minipage}
\end{figure}

%
%

\newpage

\begin{figure}[tbph]
\centering
\caption{Sensitivity Analysis of $c_2$: Step 3b}\label{fig:sen-c2-s3b}
\begin{minipage}{.33\textwidth}
\centering
\includegraphics[width=\linewidth]{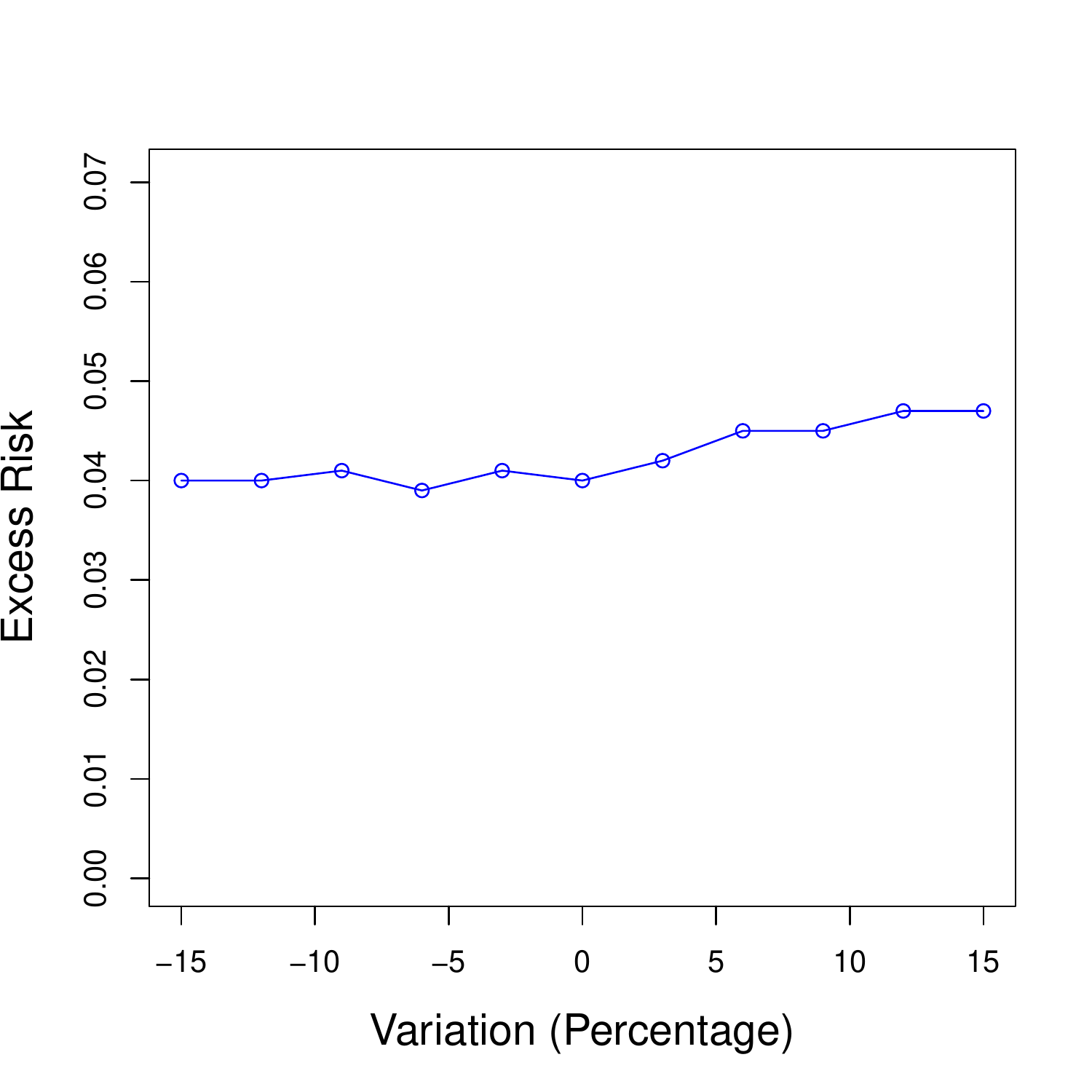}
\end{minipage}\hfill
\begin{minipage}{.33\textwidth}
\centering
\includegraphics[width=\linewidth]{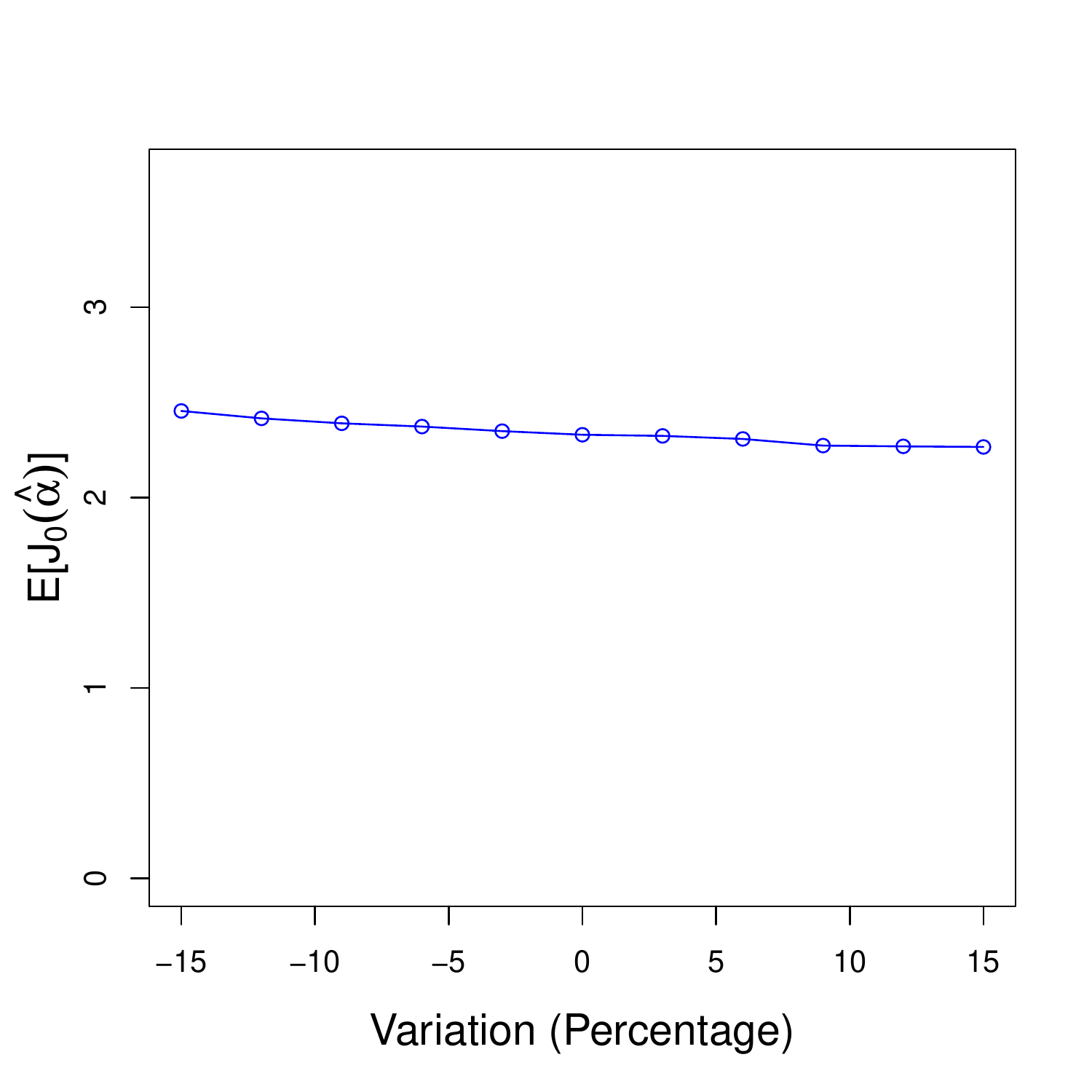}
\end{minipage}\hfill
\begin{minipage}{.33\textwidth}
\centering
\includegraphics[width=\linewidth]{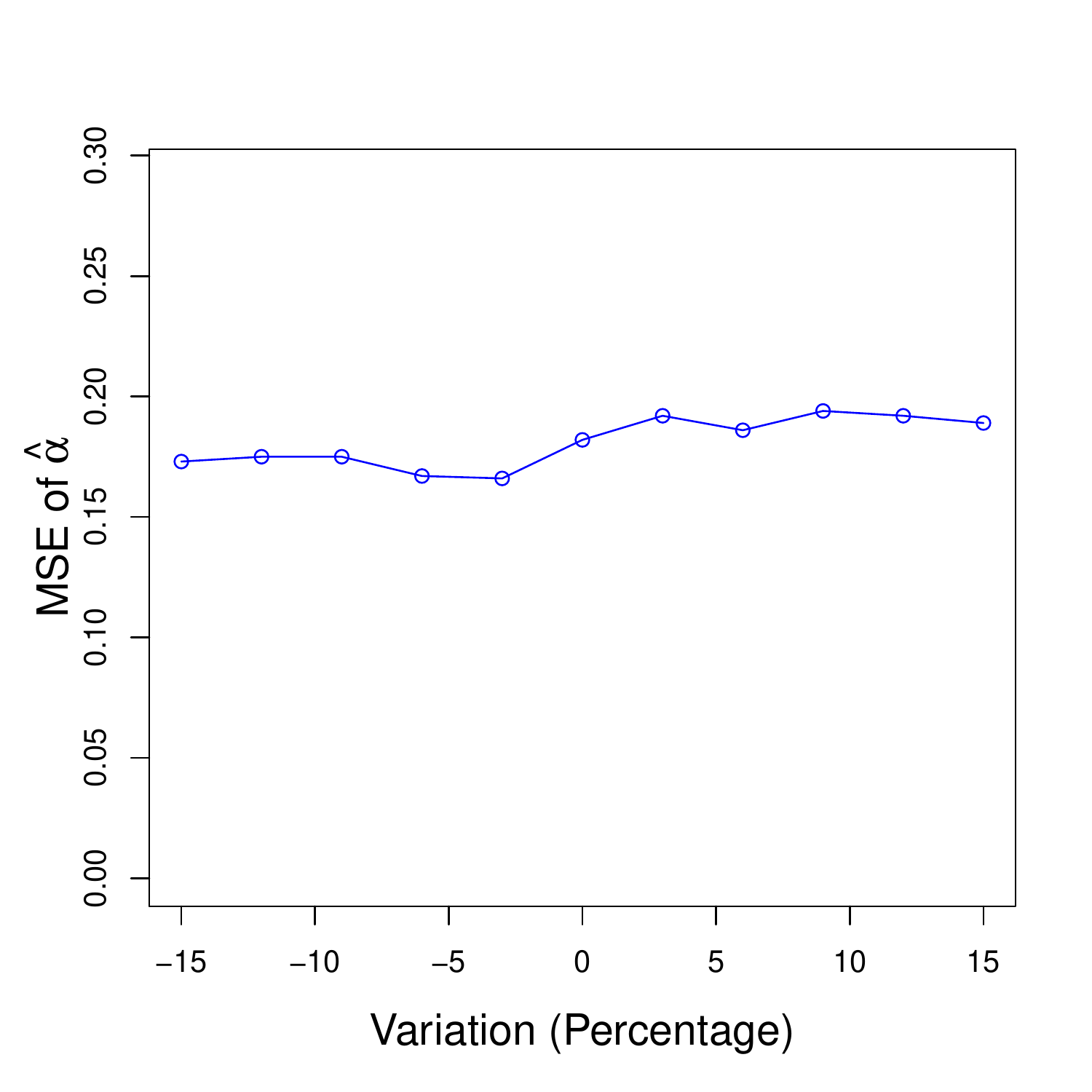}
\end{minipage}
\begin{minipage}{.33\textwidth}
\centering
\includegraphics[width=\linewidth]{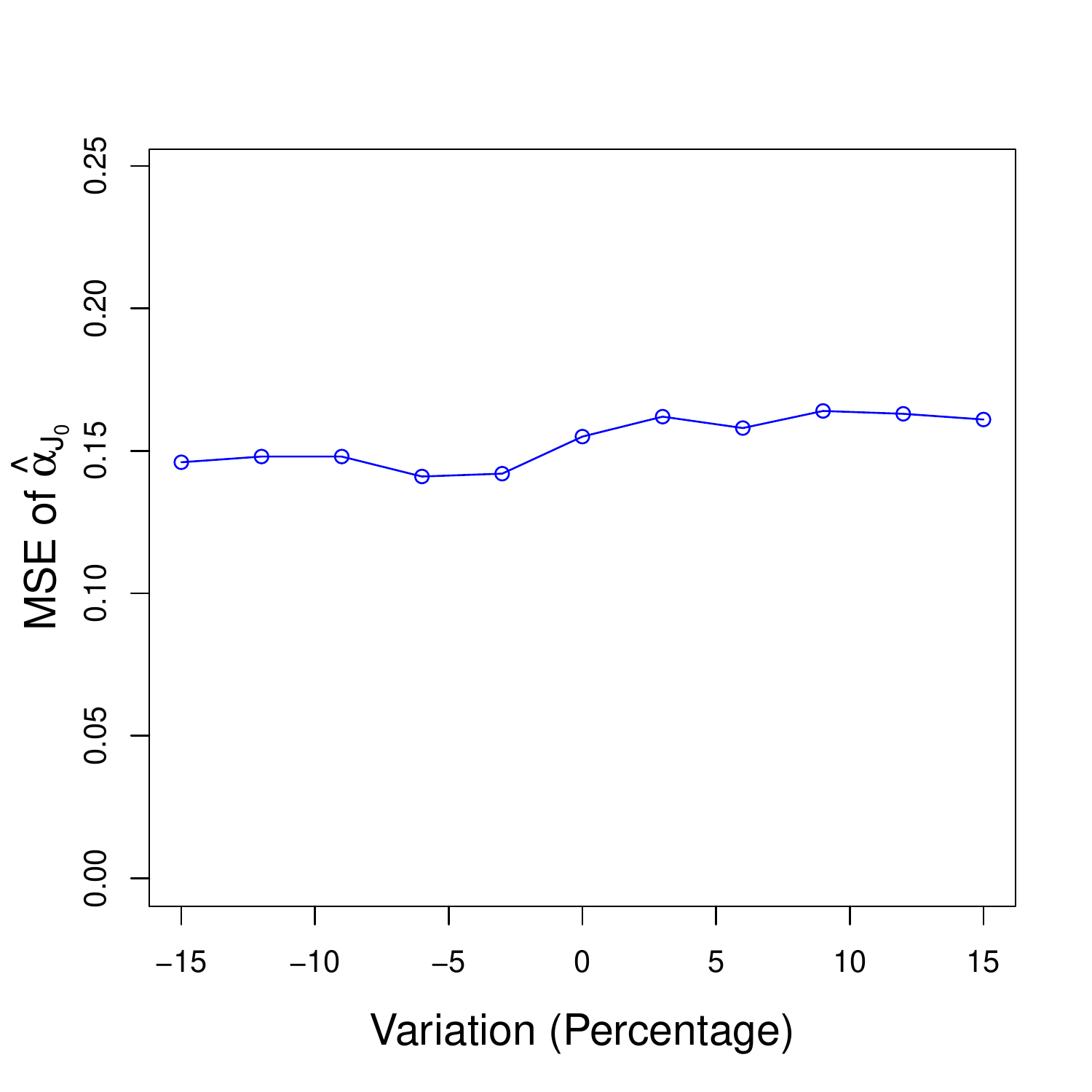}
\end{minipage}\hfill
\begin{minipage}{.33\textwidth}
\centering
\includegraphics[width=\linewidth]{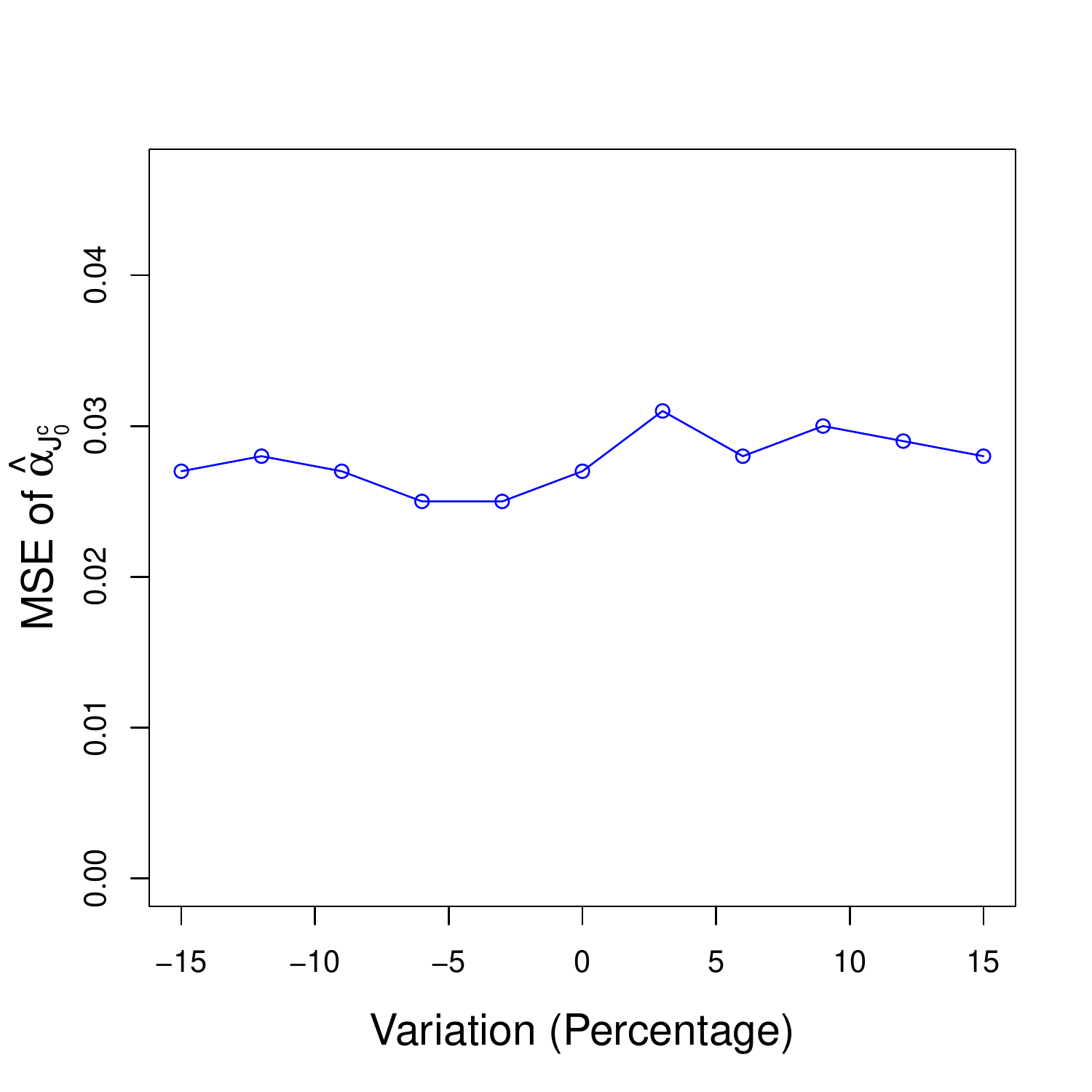}
\end{minipage}\hfill
\begin{minipage}{.33\textwidth}
\centering
\includegraphics[width=\linewidth]{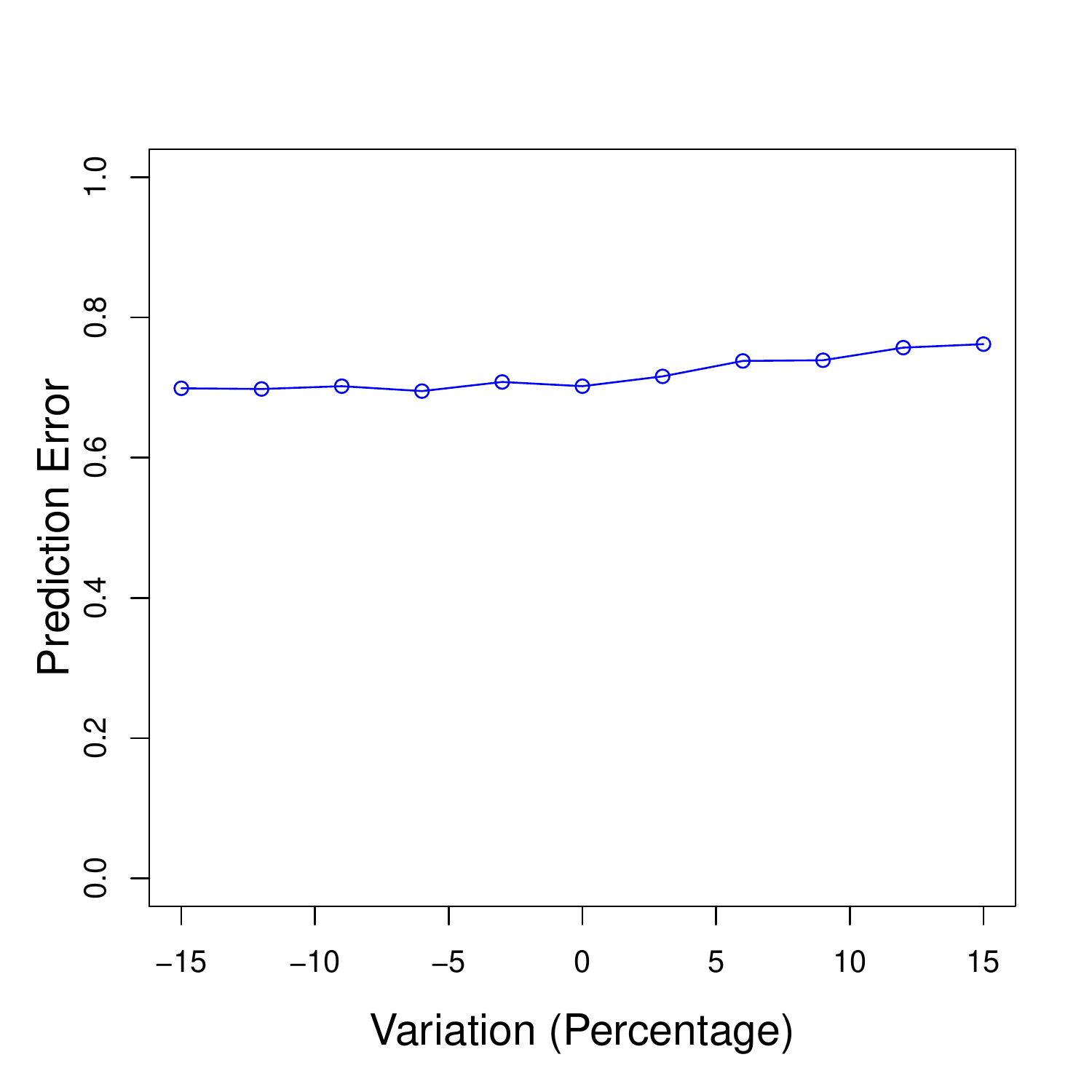}
\end{minipage}
\begin{minipage}{.33\textwidth}
\centering
\includegraphics[width=\linewidth]{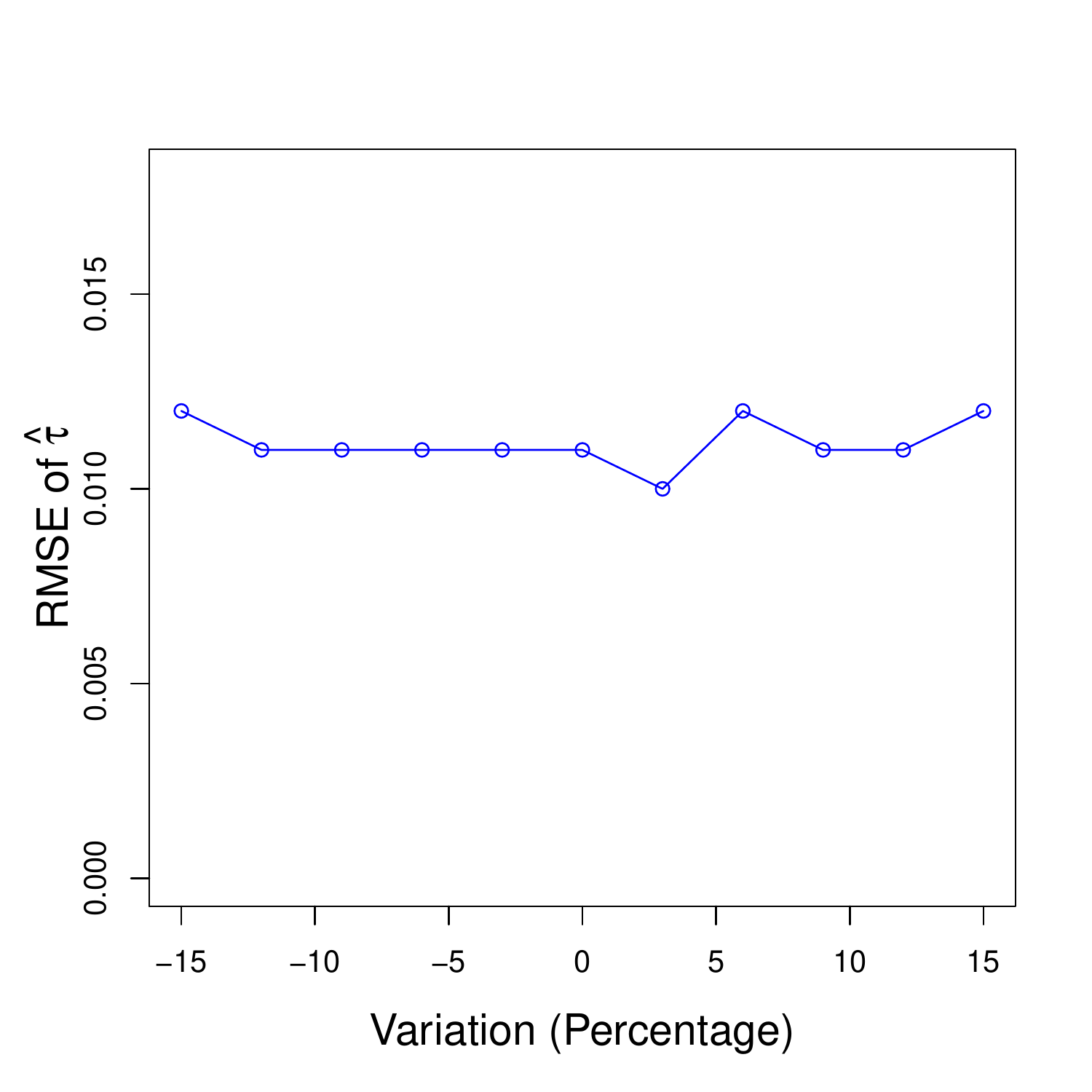}
\end{minipage}\hfill
\begin{minipage}{.33\textwidth}
\centering
\includegraphics[width=\linewidth]{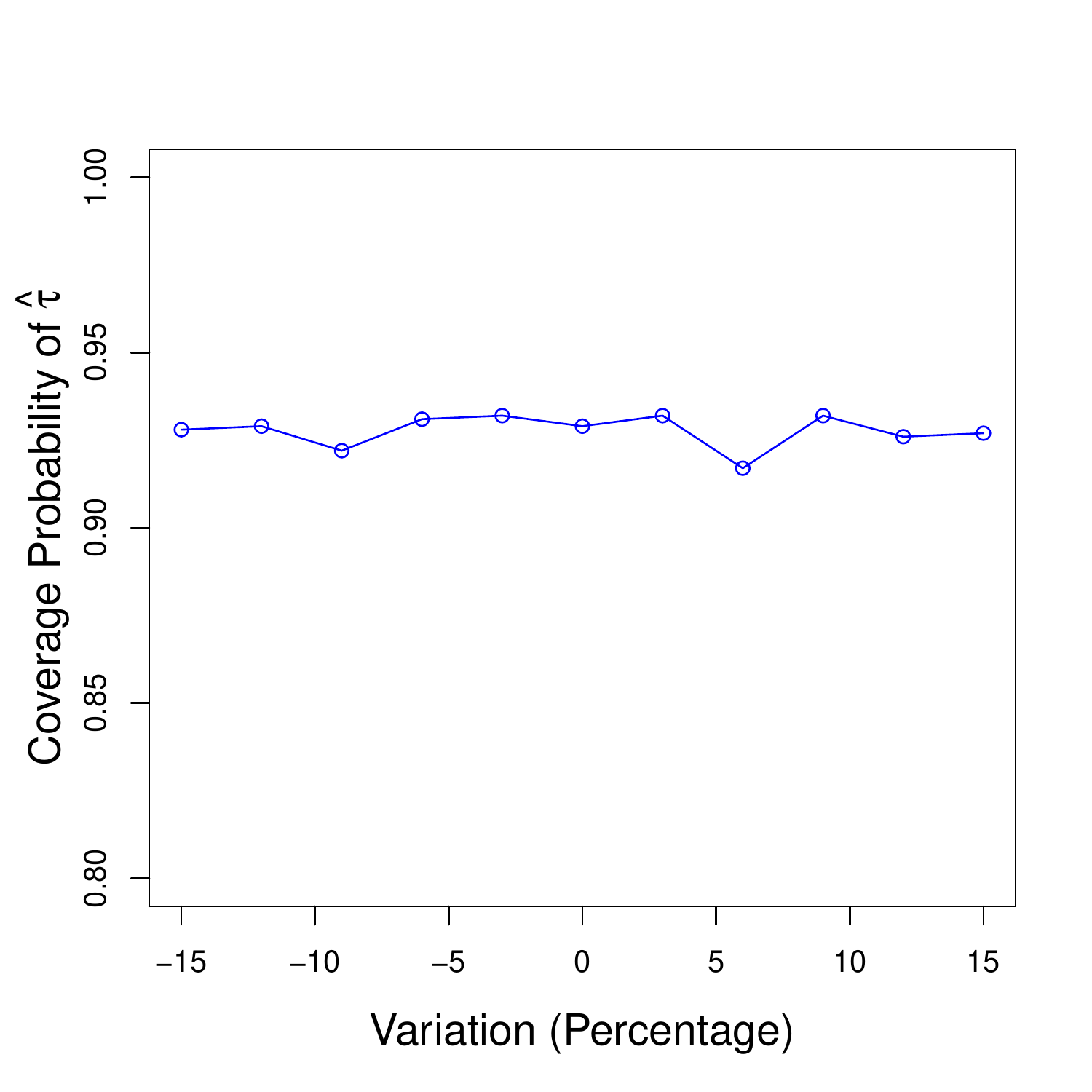}
\end{minipage}\hfill
\begin{minipage}{.33\textwidth}
\centering
\includegraphics[width=\linewidth]{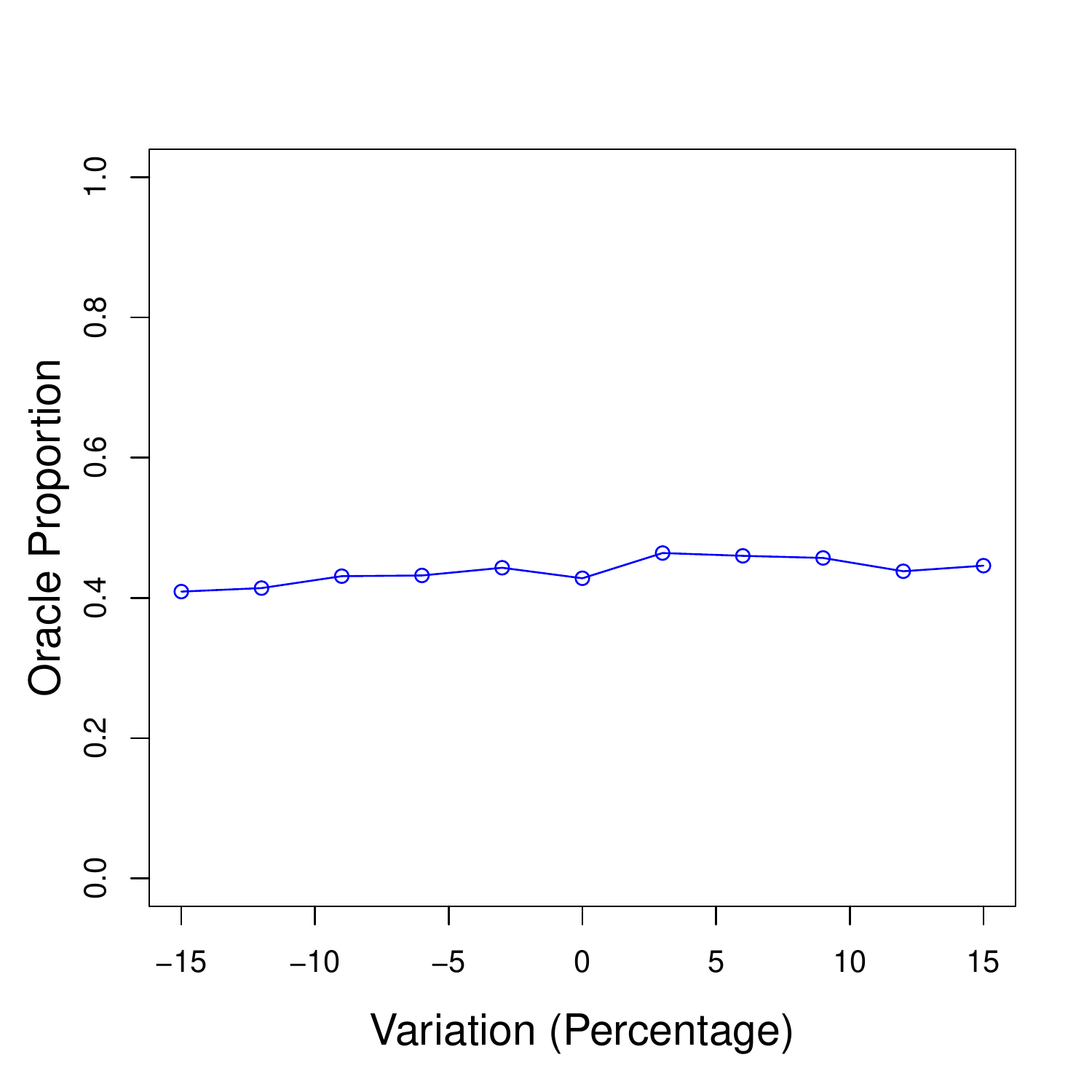}
\end{minipage}
\end{figure}

\newpage

%
%

\begin{figure}[tbph]
\centering
\caption{Sensitivity Analysis of $a$: Step 3b}\label{fig:sen-a-s3b}
\begin{minipage}{.33\textwidth}
\centering
\includegraphics[width=\linewidth]{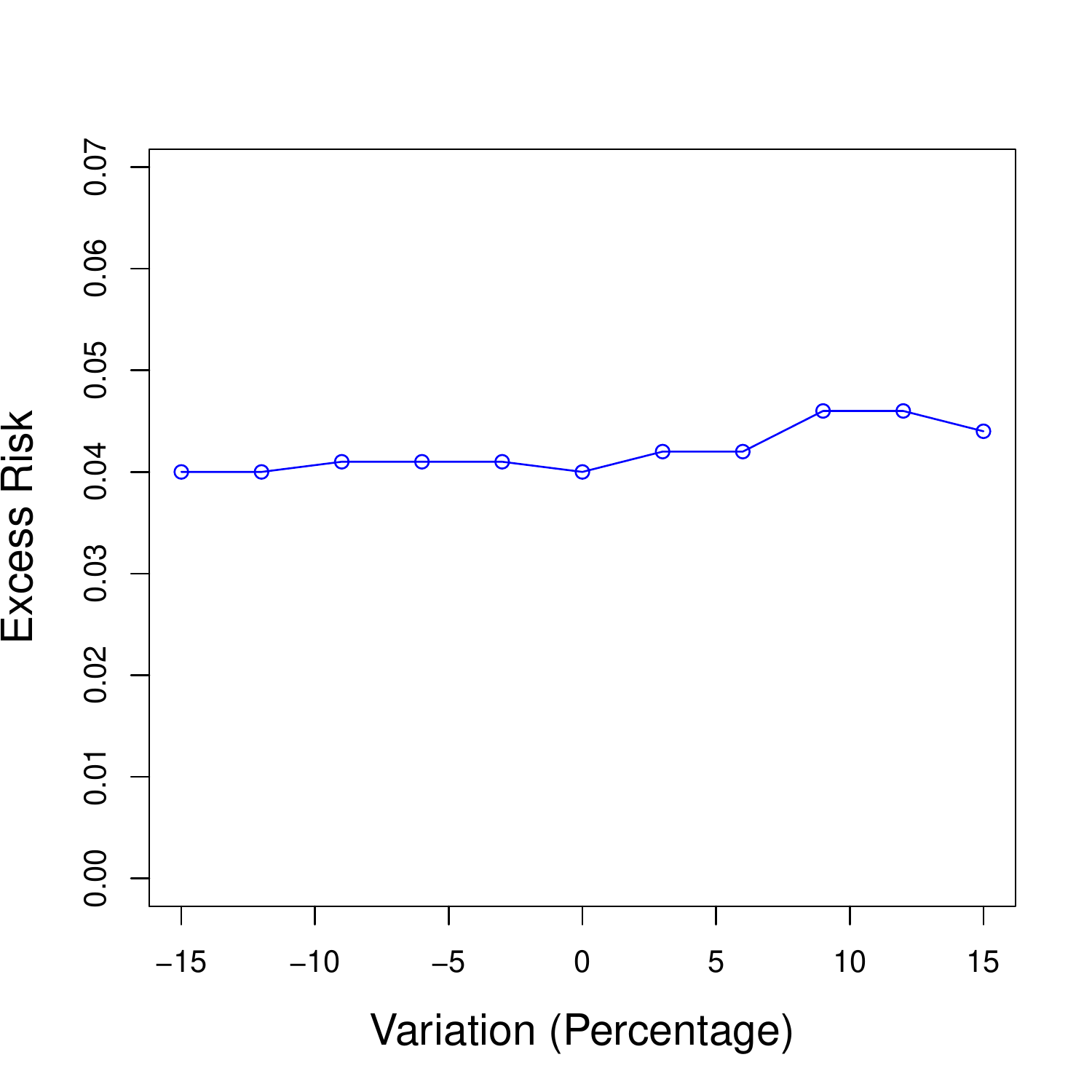}
\end{minipage}\hfill
\begin{minipage}{.33\textwidth}
\centering
\includegraphics[width=\linewidth]{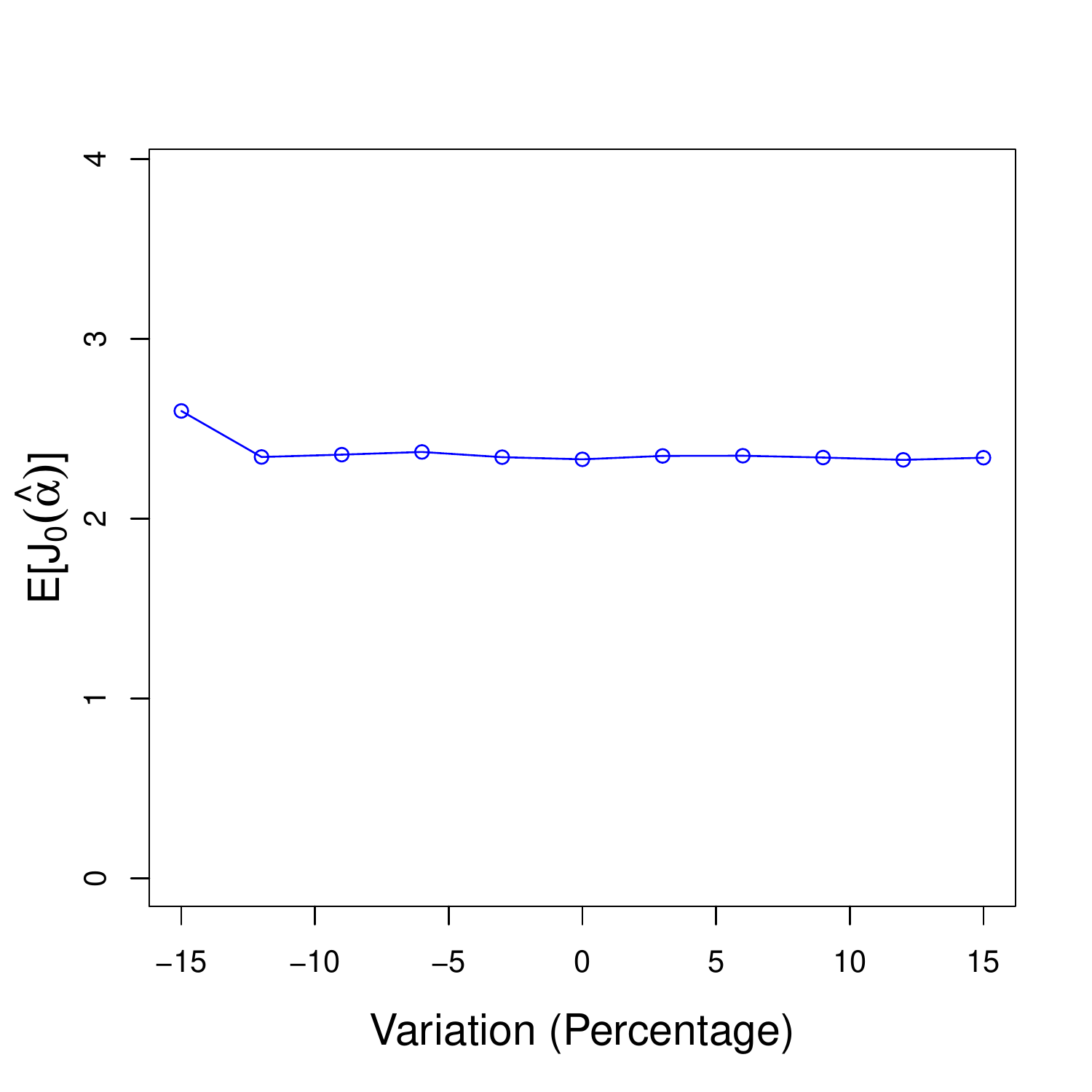}
\end{minipage}\hfill
\begin{minipage}{.33\textwidth}
\centering
\includegraphics[width=\linewidth]{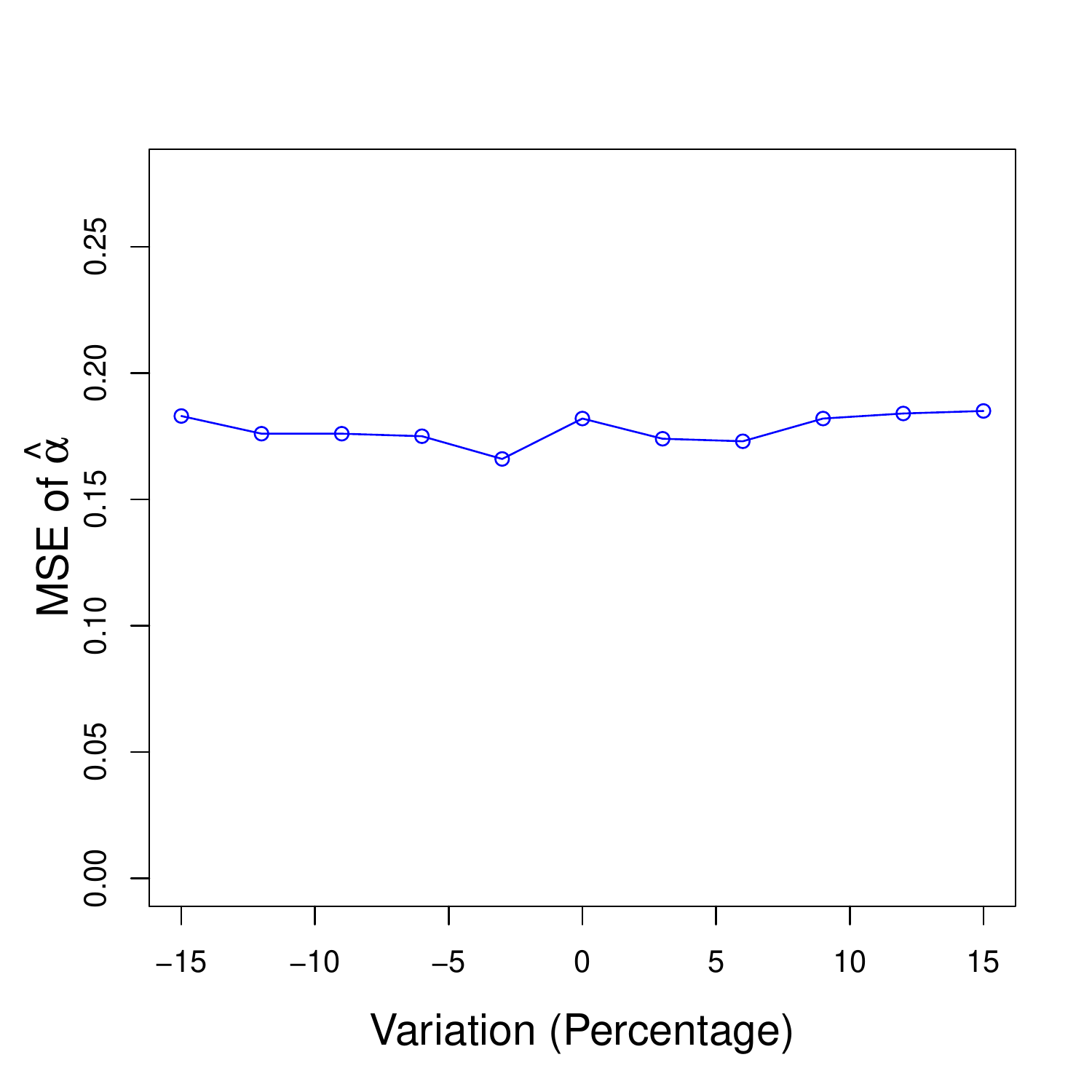}
\end{minipage}
\begin{minipage}{.33\textwidth}
\centering
\includegraphics[width=\linewidth]{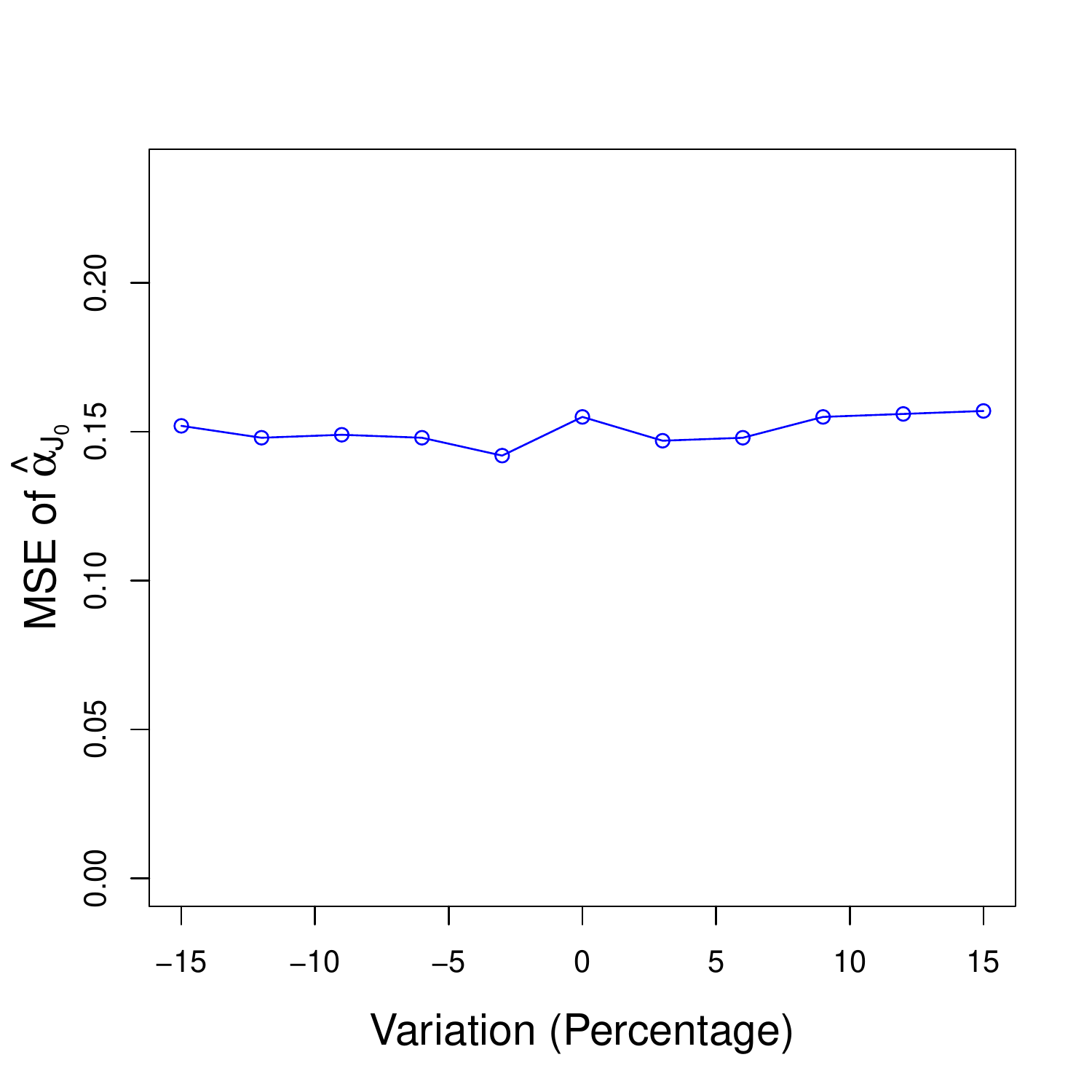}
\end{minipage}\hfill
\begin{minipage}{.33\textwidth}
\centering
\includegraphics[width=\linewidth]{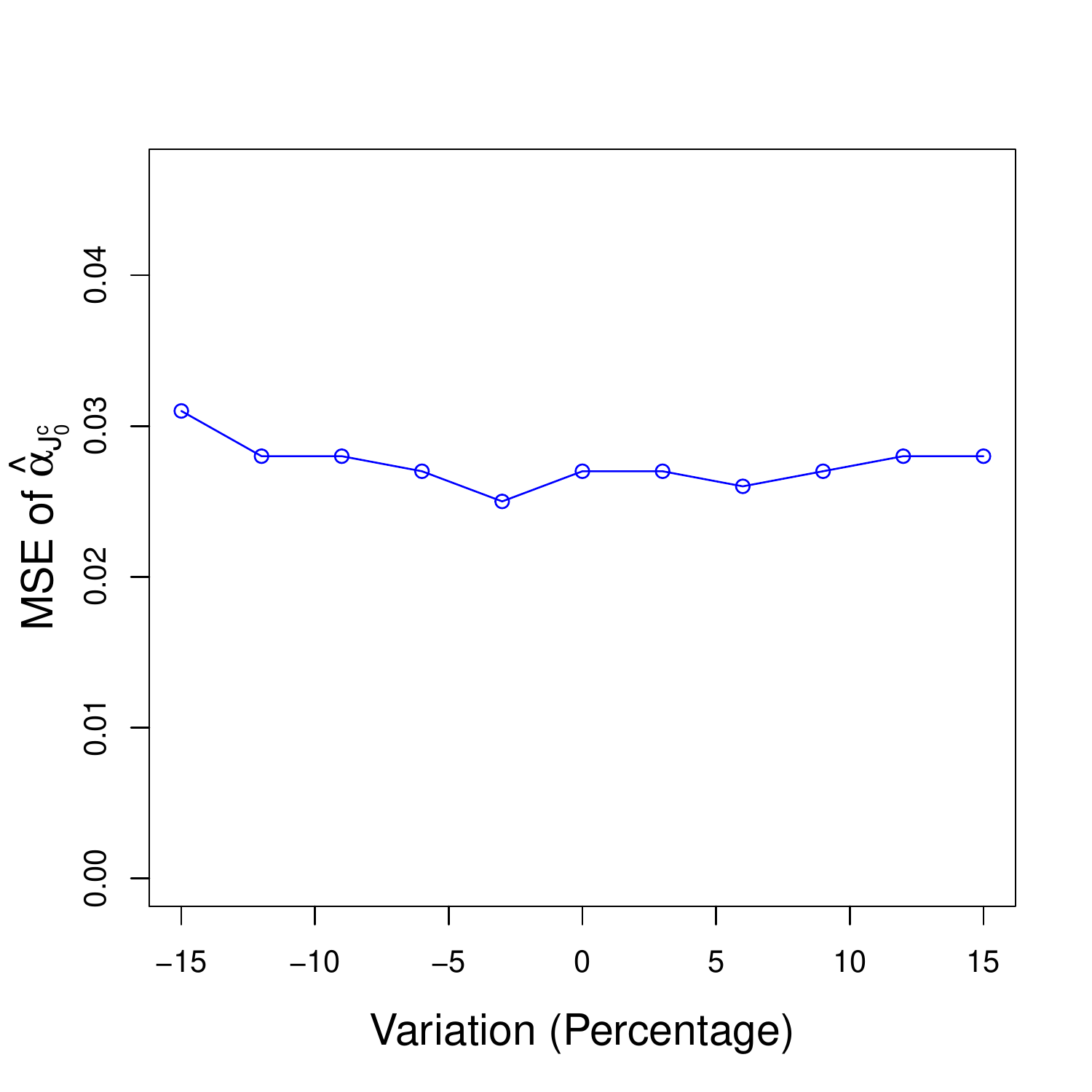}
\end{minipage}\hfill
\begin{minipage}{.33\textwidth}
\centering
\includegraphics[width=\linewidth]{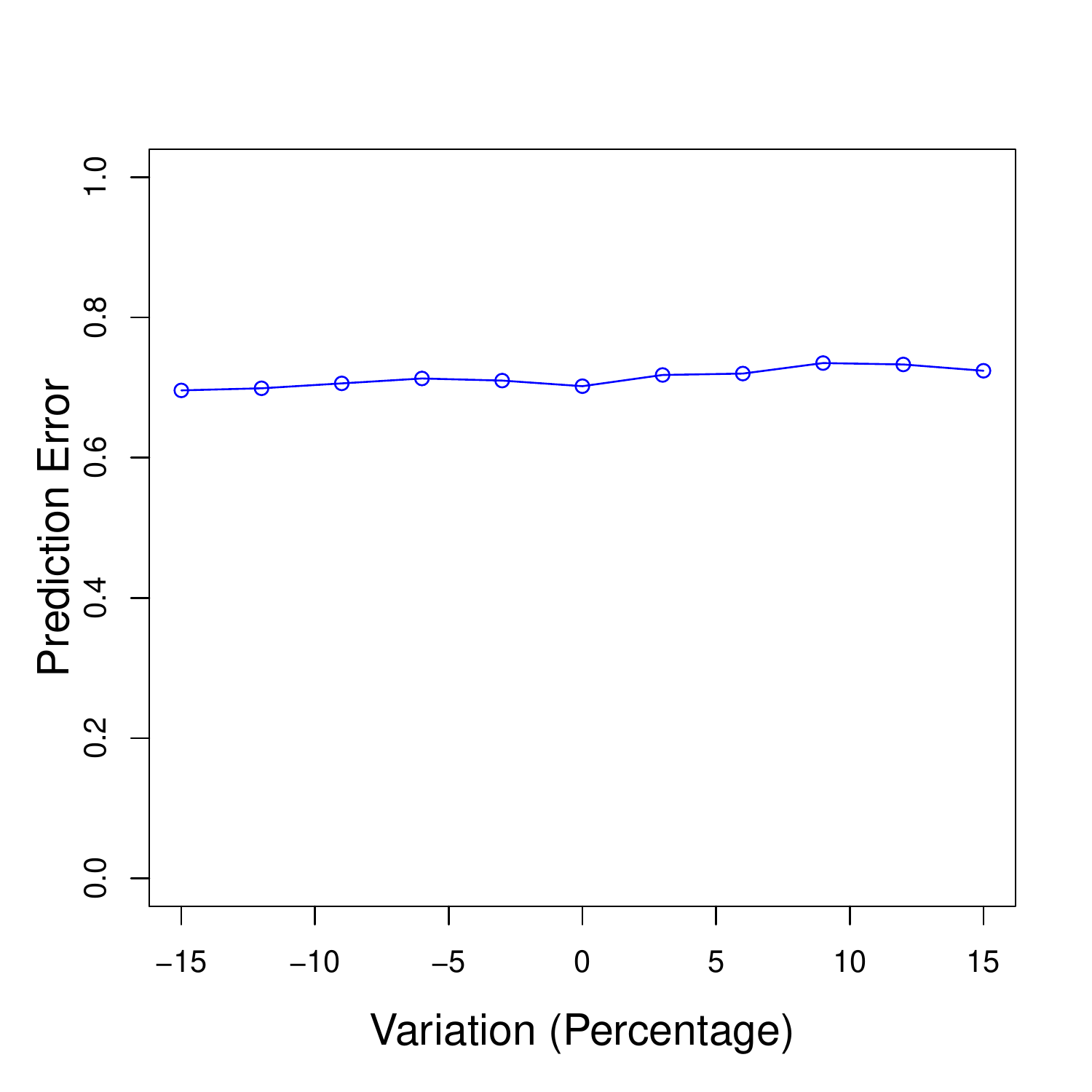}
\end{minipage}
\begin{minipage}{.33\textwidth}
\centering
\includegraphics[width=\linewidth]{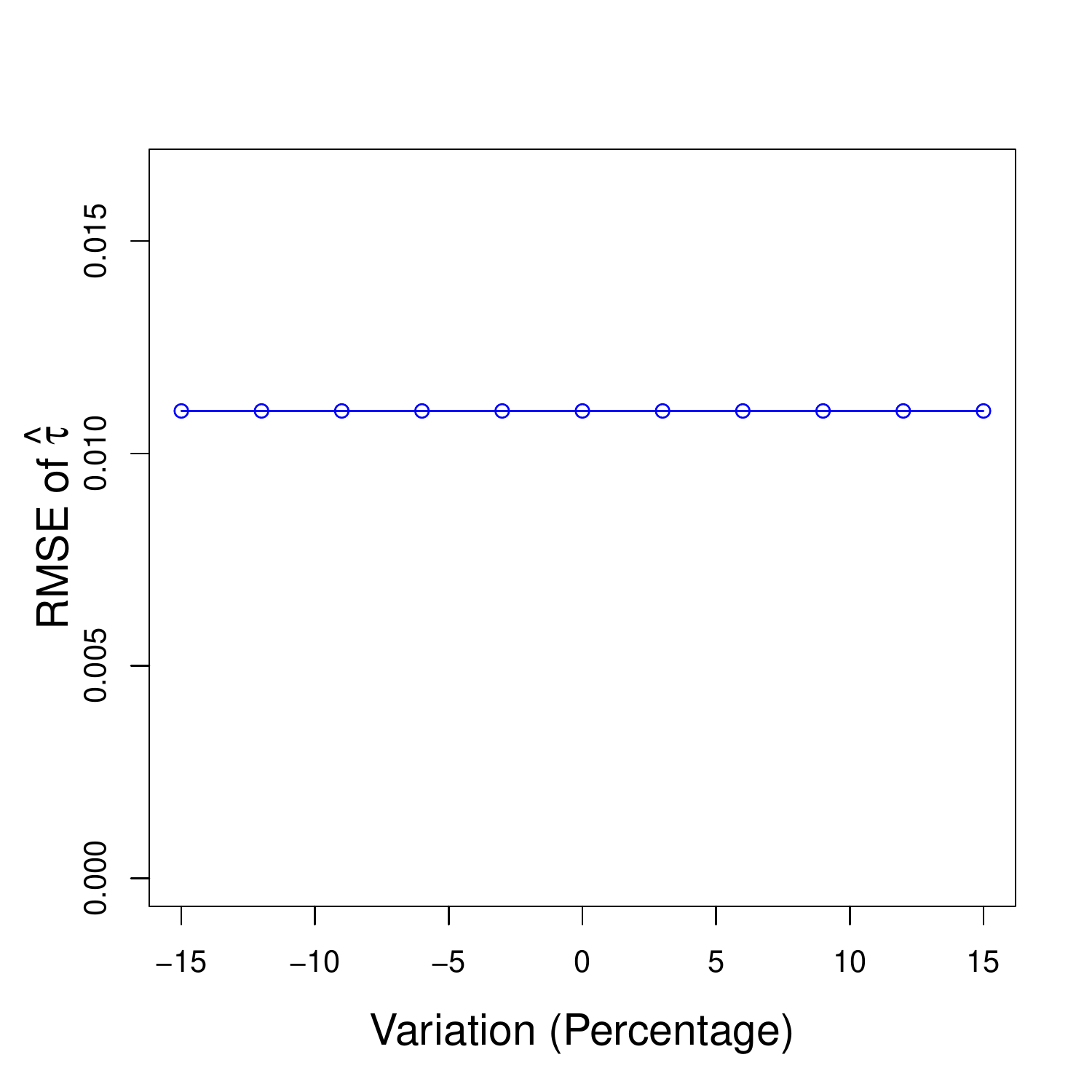}
\end{minipage}\hfill
\begin{minipage}{.33\textwidth}
\centering
\includegraphics[width=\linewidth]{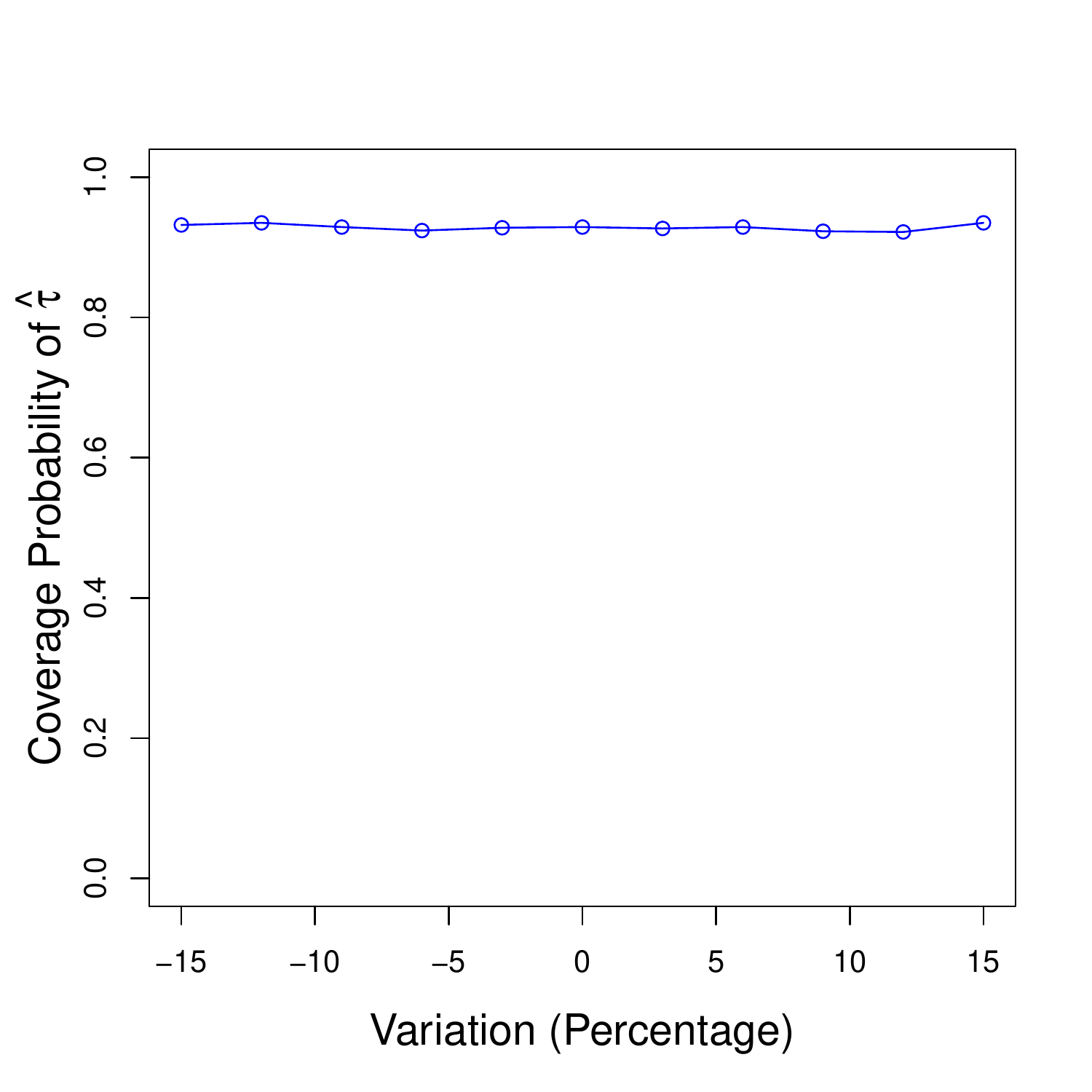}
\end{minipage}\hfill
\begin{minipage}{.33\textwidth}
\centering
\includegraphics[width=\linewidth]{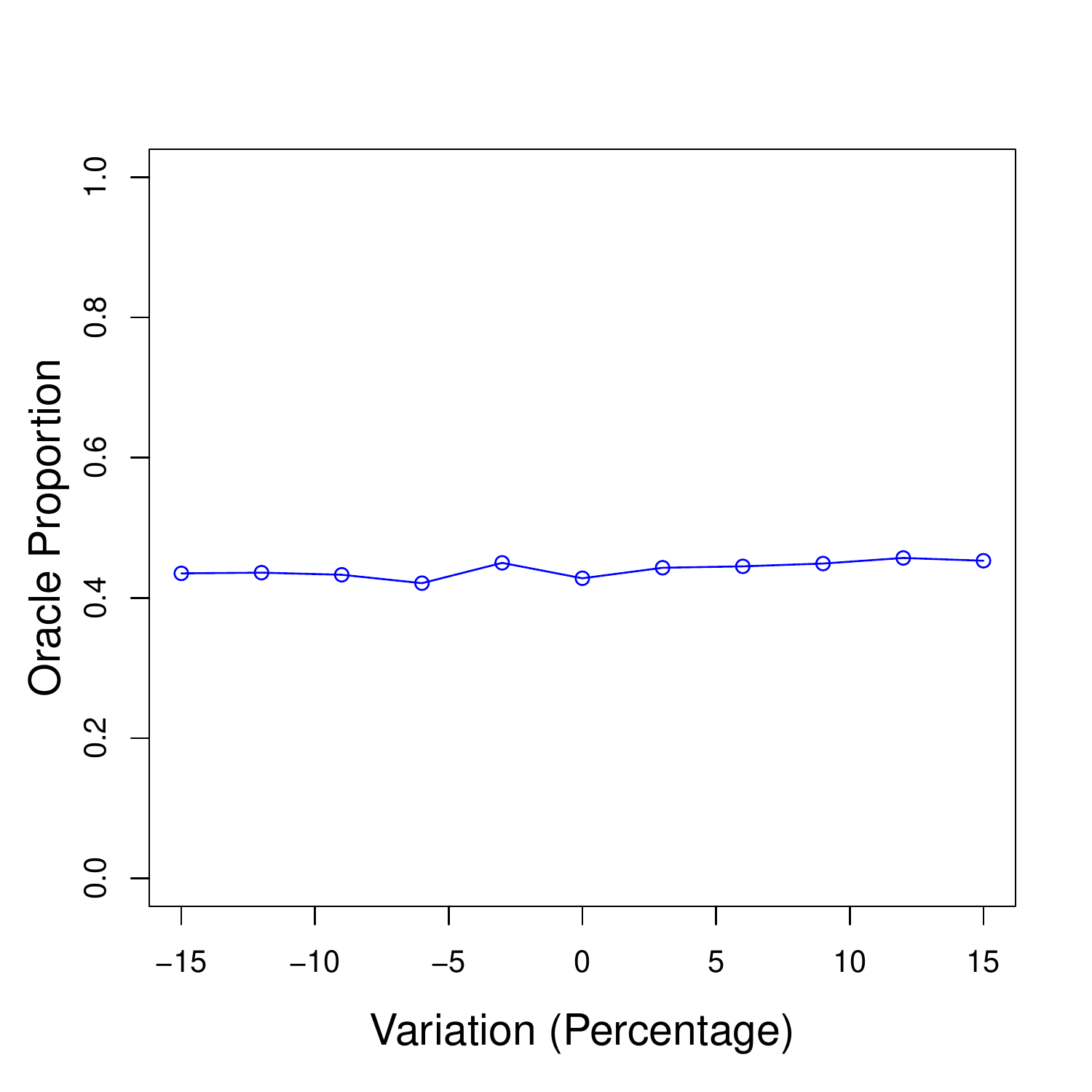}
\end{minipage}
\end{figure}

\newpage

\section{Estimating a Change Point in Racial Segregation: Additional Tables}\label{app:emp-tables-figures}

\begin{table}[htbp]
\footnotesize
\centering
\caption{Selected Covariates $\gm=0.25$}\label{tb:sel-0.25}
\begin{tabular}{lp{10cm}l}
  \\
  \hline
 & Selected Regressors &  \\ 
  \hline
    \underline{6 control variables}\\

  No Interaction         & 3 6 & \\ 
  Two-way Interaction    & 1:3 3:5 4:5 5:6 & \\ 
  Three-way Interaction  & 4:5 5:6 1:2:3 1:3:5 1:4:6 2:3:5 3:4:5 3:5:6 & \\ 
  Four-way Interaction   & 4:5 5:6 2:3:5 1:3:4:5 1:4:5:6 3:4:5:6 & \\ 
  Five-way Interaction   & 4:5 5:6 2:3:5 1:4:5:6 1:2:3:4:5 2:3:4:5:6 & \\ 
  Six-way Interaction    & 4:5 5:6 2:3:5 1:4:5:6 1:2:3:4:5 2:3:4:5:6 & \\ 
\\
    \underline{12 control variables}\\
 No Interaction & 3 6 & \\ 
  Two-way Interaction & 1:5 3:5-sq 5:6 6:5-sq 3-sq:5-sq & \\ 
  Three-way Interaction & 5:6 3:5:5-sq 4:1-sq:4-sq 5:1-sq:6-sq 5:3-sq:5-sq 5:4-sq:6-sq 1-sq:3-sq:5-sq 1-sq:5-sq:6-sq & \\ 
  Four-way Interaction & 5:6 2:5:6 1:3:1-sq:4-sq 1:5:6:6-sq 2:1-sq:5-sq:6-sq 3:5:2-sq:4-sq 3:5:2-sq:5-sq 3:1-sq:4-sq:6-sq 5:1-sq:2-sq:6-sq 5:1-sq:3-sq:5-sq 5:2-sq:3-sq:5-sq 6:3-sq:4-sq:6-sq & \\ 
  Five-way Interaction & 5:6 2:5:6 1:3:4:1-sq:4-sq 1:3:1-sq:4-sq:6-sq 1:5:6:2-sq:6-sq 1:5:3-sq:4-sq:5-sq 1:1-sq:2-sq:5-sq:6-sq 2:5:1-sq:2-sq:6-sq 2:5:2-sq:3-sq:5-sq 3:5:1-sq:3-sq:5-sq 4:6:3-sq:4-sq:6-sq 4:1-sq:2-sq:5-sq:6-sq & \\ 
  Six-way Interaction & 2:5:6 1:3:4:1-sq:4-sq 2:5:1-sq:2-sq:6-sq 2:5:2-sq:3-sq:5-sq 1:2:5:6:2-sq:6-sq 1:2:1-sq:2-sq:5-sq:6-sq 1:3:4:5:3-sq:5-sq 1:3:4:6:1-sq:4-sq 1:3:2-sq:4-sq:5-sq:6-sq 1:4:1-sq:3-sq:4-sq:6-sq 2:4:1-sq:2-sq:5-sq:6-sq 3:4:6:3-sq:4-sq:6-sq 3:5:1-sq:2-sq:3-sq:5-sq 4:5:1-sq:2-sq:3-sq:5-sq  & \\ 
     \hline
     \multicolumn{3}{p{\textwidth}}{\footnotesize{\emph{Note}: Numbers 1 to 6 refer to 6 tract-level control variables: the unemployment rate(1), the log of mean family income(2), the fractions of vacant(3), renter-occupied housing units(4), and single-unit(5), and the fraction of workers who use public transport to travel to work(6). Notation `-sq' stands for the squared variable. The colon (:) denotes interaction between covariates. For example, 1:2 stands for interaction between the unemployment rate and the log of the mean family income. 
  }}
\end{tabular}
\end{table}

\begin{table}[htbp]
\footnotesize
\centering
\caption{Selected Covariates $\gm=0.50$}\label{tb:sel-0.50}
\begin{tabular}{lp{10cm}l}
  \\
  \hline
 & Selected Regressors &  \\ 
  \hline
    \underline{6 control variables}\\

No Interaction         & 1 3 6                                    \\
Two-way Interaction    & 1:3 3:5 4:5 5:6                          \\
Three-way Interaction  & 4:5 1:3:5 2:3:5 2:4:6 2:5:6 3:4:5        \\
Four-way Interaction   & 4:5 5:6 2:3:5 1:3:4:5 1:4:5:6 2:3:4:6    \\
Five-way Interaction   & 4:5 5:6 2:3:5 1:4:5:6 2:3:4:6 1:2:3:4:5  \\
Six-way Interaction    & 4:5 5:6 2:3:5 1:4:5:6 2:3:4:6 1:2:3:4:5  \\
\\
    \underline{12 control variables}\\
No Interaction         &  1 3 6 5-sq \\
Two-way Interaction    &  1 1:5 3:5-sq 4:5 5:6 6:4-sq 3-sq:5-sq \\
Three-way Interaction  & 1 5 1:5 3:5:5-sq 5:6:2-sq 5:1-sq:6-sq 5:4-sq:6-sq 1-sq:3-sq:5-sq 1-sq:5-sq:6-sq 2-sq:3-sq:5-sq 4-sq:5-sq:6-sq \\
Four-way Interaction   & 1 5 5:6:2-sq 1:4:5-sq:6-sq 1:5:6:6-sq 1:3-sq:4-sq:5-sq 2:2-sq:3-sq:5-sq 3:5:2-sq:5-sq 3:1-sq:4-sq:6-sq 4:5:6:6-sq 5:1-sq:2-sq:6-sq 5:1-sq:3-sq:5-sq 1-sq:2-sq:5-sq:6-sq \\
Five-way Interaction   & 1 5 2:2-sq:3-sq:5-sq 5:1-sq:3-sq:5-sq 1:4:2-sq:5-sq:6-sq 1:5:3-sq:4-sq:5-sq 2:3:5:2-sq:5-sq 2:5:1-sq:2-sq:6-sq 2:5:2-sq:3-sq:5-sq 2:1-sq:2-sq:5-sq:6-sq 3:4:1-sq:4-sq:6-sq 3:5:2-sq:5-sq:6-sq 3:6:1-sq:4-sq:6-sq 4:5:6:2-sq:6-sq 4:1-sq:2-sq:5-sq:6-sq \\
Six-way Interaction    & 1 5 2:2-sq:3-sq:5-sq 5:1-sq:3-sq:5-sq 1:5:3-sq:4-sq:5-sq 2:3:5:2-sq:5-sq 2:5:1-sq:2-sq:6-sq 2:5:2-sq:3-sq:5-sq 2:1-sq:2-sq:5-sq:6-sq 1:3:1-sq:3-sq:5-sq:6-sq 1:3:2-sq:4-sq:5-sq:6-sq 2:3:5:2-sq:5-sq:6-sq 2:4:1-sq:2-sq:5-sq:6-sq 3:4:6:1-sq:4-sq:6-sq \\
     \hline
     \multicolumn{3}{p{\textwidth}}{\footnotesize{\emph{Note}: Numbers 1 to 6 refer to 6 tract-level control variables: the unemployment rate(1), the log of mean family income(2), the fractions of vacant(3), renter-occupied housing units(4), and single-unit(5), and the fraction of workers who use public transport to travel to work(6). Notation `-sq' stands for the squared variable. The colon (:) denotes interaction between covariates. For example, 1:2 stands for interaction between the unemployment rate and the log of the mean family income. 
  }}
\end{tabular}
\end{table}

\begin{table}[htbp]
\footnotesize
\centering
\caption{Selected Covariates $\gm=0.75$}\label{tb:sel-0.75}
\begin{tabular}{lp{10cm}l}
  \\
  \hline
 & Selected Covariates &\\
 \hline
    \underline{6 control variables}\\

No Interaction         & 3 5 \\
Two-way Interaction    & 1:3 3:5 4:5 4:6 5:6 \\
Three-way Interaction  & 1:2 4:5 1:3:5 1:5:6 2:3:5 2:4:5 2:5:6 3:4:5 3:4:6 3:5:6 \\
Four-way Interaction   & 1:2 4:5 2:3:5 2:5:6 3:5:6 1:3:4:5 1:4:5:6 2:3:4:6 \\
Five-way Interaction   & 1:2 4:5 2:3:5 2:5:6 3:5:6 1:4:5:6 2:3:4:6 1:2:3:4:5 \\
Six-way Interaction    & 1:2 4:5 2:3:5 2:5:6 3:5:6 1:4:5:6 2:3:4:6 1:2:3:4:5   \\
\\
    \underline{12 control variables}\\
No Interaction         & 1 3 5-sq \\
Two-way Interaction    &  1:2 1:3 1:1-sq 3:5-sq 4:5 5:6 6:6-sq 3-sq:4-sq 3-sq:5-sq 4-sq:6-sq \\
Three-way Interaction  & 1 3:5:5-sq 3:1-sq:5-sq 3:2-sq:5-sq 3:5-sq:6-sq 4:3-sq:4-sq 5:6:2-sq 5:1-sq:6-sq 5:4-sq:6-sq 1-sq:3-sq:5-sq 1-sq:5-sq:6-sq 2-sq:3-sq:5-sq \\
Four-way Interaction   & 1 1-sq:3-sq:5-sq 1:3:4:5 1:3:4-sq:5-sq 1:4:5-sq:6-sq 2:3:2-sq:5-sq 2:5:6:2-sq 2:2-sq:3-sq:5-sq 3:4:2-sq:4-sq 3:5:2-sq:5-sq 3:5:5-sq:6-sq 3:1-sq:4-sq:6-sq 4:1-sq:5-sq:6-sq 5:1-sq:3-sq:5-sq 1-sq:2-sq:5-sq:6-sq \\
Five-way Interaction   & 1 1-sq:3-sq:5-sq 1:3:4:5 1:3:4-sq:5-sq 1:4:5-sq:6-sq 2:3:2-sq:5-sq 2:5:6:2-sq 2:2-sq:3-sq:5-sq 3:4:2-sq:4-sq 3:5:2-sq:5-sq 3:5:5-sq:6-sq 3:1-sq:4-sq:6-sq 4:1-sq:5-sq:6-sq 5:1-sq:3-sq:5-sq 1-sq:2-sq:5-sq:6-sq \\
Six-way Interaction    & 1 5:6 1-sq:3-sq:5-sq 1:3:4:5 2:3:2-sq:5-sq 2:2-sq:3-sq:5-sq 5:1-sq:3-sq:5-sq 2:3:4:2-sq:4-sq 2:3:5:2-sq:5-sq 2:1-sq:2-sq:5-sq:6-sq 2:3:5:2-sq:5-sq:6-sq 2:4:5:1-sq:3-sq:5-sq 2:4:1-sq:2-sq:5-sq:6-sq 3:4:5:1-sq:2-sq:5-sq 4:6:1-sq:3-sq:4-sq:6-sq \\
     \hline
     \multicolumn{3}{p{\textwidth}}{\footnotesize{\emph{Note}: Numbers 1 to 6 refer to 6 tract-level control variables: the unemployment rate(1), the log of mean family income(2), the fractions of vacant(3), renter-occupied housing units(4), and single-unit(5), and the fraction of workers who use public transport to travel to work(6). Notation `-sq' stands for the squared variable. The colon (:) denotes interaction between covariates. For example, 1:2 stands for interaction between the unemployment rate and the log of the mean family income. 
  }}
\end{tabular}
\end{table}


\begin{table}[htbp]
\footnotesize
\centering
\caption{Full Estimation Results  from  Mean Regression (Untrimmed Data)}\label{tb:Chicago-new}
\begin{tabular}{lccccc}
  \\
  \hline
 & \multirow{2}{*}{No.\  of Reg.\ } & No.\  of    & \multirow{2}{*}{$\widehat{\tau}$} &  \multirow{2}{*}{CI  for $\tau_0$} &\multirow{2}{*}{$\widehat{\delta}$} \\ 
  & & Selected Reg.\ &    &  \\ 
  \hline
    \underline{6 control variables}\\

No Interaction          &    26        &        25  &  3.25  &  NA  &  -21.53 \\
Two-way Interaction     &    41        &        34  &  3.25  &  NA  &  -17.39 \\
Three-way Interaction   &    61        &        40  &  3.25  &  NA  &  -16.60 \\
Four-way Interaction    &    76        &        50  &  3.25  &  NA  &  -15.80 \\
Five-way Interaction    &    82        &        49  &  3.25  &  NA  &  -16.12 \\
Six-way Interaction     &    83        &        50  &  3.25  &  NA  &  -16.14 \\

\\
    \underline{12 control variables}\\
No Interaction           &    32       &         29  &  3.25  &  NA  &  -19.94 \\
Two-way Interaction      &    98       &         54  &  3.25  &  NA  &  -15.27 \\
Three-way Interaction    &   318       &         80  &  3.25  &  NA  &  -15.33 \\
Four-way Interaction     &   813       &        103  &  3.25  &  NA  &  -15.16 \\
Five-way Interaction     &  1605       &        129  &  3.25  &  NA  &  -15.27 \\
Six-way Interaction      &  2529       &        142  &  3.25  &  NA  &  -15.55 \\

     \hline
     \multicolumn{6}{p{\textwidth}}{\footnotesize{\emph{Note}: The sample size of untrimmed data is $n=1,813$. The parameter ${\tau}_0$ is estimated by the grid search on the 591 equi-spaced points over $[1,60]$. As in the simulation studies, the tuning parameters are set from Step 1  in median regression.
  }}
\end{tabular}
\end{table}

\begin{table}[htbp]
\footnotesize
\centering
\caption{Full Estimation Results  from  Mean Regression (Trimmed Data)}\label{tb:Chicago-new-trimmed}
\begin{tabular}{lccccc}
  \\
  \hline
 & \multirow{2}{*}{No.\  of Reg.\ } & No.\  of Selected Reg.\   & \multirow{2}{*}{$\widehat{\tau}$} &  \multirow{2}{*}{CI  for $\tau_0$} &\multirow{2}{*}{$\widehat{\delta}$} \\ 
  & & in Step 3b &    &  \\ 
  \hline
    \underline{6 control variables}\\

No Interaction           &    26        &        24  &  3.35 &   NA  &   -6.32 \\
Two-way Interaction      &    41        &        32  &  3.25 &   NA  &   -6.00 \\
Three-way Interaction    &    61        &        37  &  3.25 &   NA  &   -7.01 \\
Four-way Interaction     &    76        &        35  &  3.25 &   NA  &   -6.68 \\
Five-way Interaction     &    82        &        41  &  3.25 &   NA  &   -6.59 \\
Six-way Interaction      &    83        &        41  &  3.25 &   NA  &   -6.53 \\

\\
    \underline{12 control variables}\\
No Interaction           &     32    &             30  &   3.35  &   NA  &    -4.27 \\
Two-way Interaction      &     98    &             48  &   3.35  &   NA  &    -4.28 \\
Three-way Interaction    &    318    &             63  &   3.25  &   NA  &    -5.02 \\
Four-way Interaction     &    813    &             90  &   3.25  &   NA  &    -5.18 \\
Five-way Interaction     &   1605    &             88  &   3.25  &   NA  &    -5.27 \\
Six-way Interaction      &   2529    &            107  &   3.25  &   NA  &    -5.19 \\

     \hline
     \multicolumn{6}{p{\textwidth}}{\footnotesize{\emph{Note}: The trimmed data drop top and bottom $5\%$ observations based on $\{Y_i\}$ and the sample sizes decreases to $n=1,626$.  The parameter ${\tau}_0$ is estimated by the grid search on the 591 equi-spaced points over $[1,60]$. As in the simulation studies, the tuning parameters are set from Step 1  in median regression.
  }}
\end{tabular}
\end{table}

\newpage

\singlespacing

\bibliographystyle{ims}
\bibliography{liaoBib-SL}

\end{document}